\title{Quantum Doeblin Coefficients: Interpretations and Applications}
\author{Ian George}
\affiliation{Centre for Quantum Technologies, National University of Singapore, Singapore 117543, Singapore}
\email{qit.george@gmail.com}
\author{Christoph Hirche}
\affiliation{Institute for Information Processing (tnt/L3S), Leibniz Universit\"at Hannover, Germany}
\email{christoph.hirche@gmail.com}
\author{Theshani Nuradha}
\affiliation{School of Electrical and Computer Engineering, Cornell University, Ithaca, New York 14850, USA}
\affiliation{Department of Mathematics and
Illinois Quantum Information Science and Technology (IQUIST) Center,\\ 
University of Illinois Urbana-Champaign, Urbana, IL 61801, USA}
\email{theshani.gallage@gmail.com}
\author{Mark M.~Wilde}
\affiliation{School of Electrical and Computer Engineering, Cornell University, Ithaca, New York 14850, USA}
\email{wilde@cornell.edu}
\date{}
\begin{document}

\maketitle

\begin{abstract}
In classical information theory, the Doeblin coefficient of a classical channel provides an efficiently computable upper bound on the total-variation contraction coefficient of the channel, leading to what is known as a strong data-processing inequality.
Here, we investigate quantum Doeblin coefficients as a generalization of the classical concept. 
In particular, we define various new quantum Doeblin coefficients, one of which has several desirable properties, including concatenation and multiplicativity, in addition to being efficiently computable. We also develop various interpretations of two of the quantum Doeblin coefficients, including representations as minimal singlet fractions, exclusion values, reverse max-mutual and oveloH informations, reverse robustnesses, and  hypothesis testing reverse mutual and oveloH informations. Our interpretations of quantum Doeblin coefficients as either entanglement-assisted or unassisted exclusion values are particularly appealing, indicating that they are proportional to the best possible error probabilities one could achieve in state-exclusion tasks by making use of the channel. We also outline various applications of quantum Doeblin coefficients, ranging from limitations on quantum machine learning algorithms that use parameterized quantum circuits (noise-induced barren plateaus), on error mitigation protocols, on the sample complexity of noisy quantum hypothesis testing, on the fairness of noisy quantum models, and on mixing, indistinguishability, and decoupling times of time-varying channels. All of these applications make use of the fact that quantum Doeblin coefficients appear in upper bounds on various trace-distance contraction coefficients of a quantum channel. Furthermore,
in all of these applications, our analysis using quantum Doeblin coefficients provides improvements of various kinds over contributions from prior literature, both in terms of generality and being efficiently computable. 
\end{abstract}

\tableofcontents

\section{Introduction}

\subsection{Background}

Given a classical channel $\mathcal{W}$ described by a conditional
probability distribution $\left(  p_{Y|X}(y|x)\right)  _{y\in\mathcal{Y}
,x\in\mathcal{X}}$, its Doeblin coefficient is defined as follows~\cite{Doeblin1937}:
\begin{equation}
\label{eq:doeblin-classical}
\alpha(\mathcal{W})\coloneqq \max_{r\in\mathbb{R}_{\geq0}^{\left\vert \mathcal{Y}
\right\vert }}\left\{  \sum_{y\in\mathcal{Y}}r(y):r(y)\leq p_{Y|X}
(y|x)\ \forall x\in\mathcal{X},y\in\mathcal{Y}\right\}  ,
\end{equation}
where $\mathbb{R}_{\geq0}^{\left\vert \mathcal{Y}\right\vert }$ denotes the
set of all $\left\vert \mathcal{Y}\right\vert $-dimensional vectors with
non-negative entries. As written above, the Doeblin coefficient $\alpha
(\mathcal{W})$ is efficiently computable as a linear program. Its linear
programming dual is given by
\begin{equation}
\alpha(\mathcal{W})  =\min_{q(x|y)}\sum_{y\in\mathcal{Y}} \sum_{x \in \cX} q(x|y)p_{Y|X}
(y|x) , \label{eq:doeblin-dual-classical}
\end{equation}
where the optimization  is over every conditional probability
distribution $\left(  q(x|y)\right)  _{x\in\mathcal{X},y\in\mathcal{Y}}$. Using either of the expressions in~\eqref{eq:doeblin-classical} or~\eqref{eq:doeblin-dual-classical}, one can conclude that
\begin{equation}
    \alpha(\mathcal{W}) = \sum_{y\in\mathcal{Y}}\min_{x\in\mathcal{X}}\left\{  p_{Y|X}(y|x)\right\}
.\label{eq:minimum-likelihood-decoder}
\end{equation}
Ref.~\cite{makur2024doeblin} provides an extensive overview of the Doeblin coefficient of
classical channels, which has found various applications, including change
detection~\cite{Chen2022ChangeDetection}, multi-armed bandits~\cite{Moulos2020}, Markov-chain Monte Carlo methods
\cite{Rosenthal1995}, Markov decision processes~\cite{Alden1992}, and mixing models~\cite{Steinhardt2015}.

As observed in \cite[Theorem~1, Item~8]{makur2024doeblin}, the formulations in
\eqref{eq:doeblin-dual-classical}--\eqref{eq:minimum-likelihood-decoder}
provide the Doeblin coefficient with an operational meaning as being
proportional to the optimal error probability in an information-processing
task known as exclusion~\cite{bandyopadhyay2014conclusive}. The goal of this task is as follows:\ given an input
$x\in\mathcal{X}$ chosen uniformly at random and transmitted through the
channel $\mathcal{W}$ to produce an output $y\in\mathcal{Y}$, upon receiving $y$, pick $x^{\prime
}\in\mathcal{X}$ such that $x^{\prime}\neq x$. The most general strategy the receiver
can use involves a conditional probability distribution $q(x|y)$, and so one
can optimize over all such strategies to conclude that the optimal error
probability is $\frac{1}{\left\vert \mathcal{X}\right\vert }\alpha
(\mathcal{W})$. The equality in~\eqref{eq:minimum-likelihood-decoder}
reflects the fact that an optimal strategy is a minimum likelihood decoding
rule~\cite[Section~II-A]{mishra2023optimal}, in which one guesses a value $x^{\prime}\in\mathcal{X}$ that is least
likely to occur.

Beyond its interpretation given above, the Doeblin coefficient finds possibly
its greatest application in bounding the total-variation contraction
coefficient of a channel, the latter also known as the Dobrushin coefficient~\cite{Dobrushin1956}. To motivate this point, let us recall that the total-variation contraction coefficient of a classical channel $\mathcal{W}$ is
defined as
\begin{align}
\eta_{\operatorname{TV}}(\mathcal{W}) & \coloneqq \sup_{s\neq t}\frac{\left\Vert
\mathcal{W}\circ s-\mathcal{W}\circ t\right\Vert _{1}}{\left\Vert
s-t\right\Vert _{1}}\\
& = \max_{x\neq x'} \frac{1}{2} \left\| p_{Y|X=x} - p_{Y|X=x'}\right\|_1,
\label{eq:simpler-exp-contract-coeff}
\end{align}
where the optimization is over every pair of probability distributions $s$ and
$t$ on the input alphabet $\mathcal{X}$, the probability distribution $\mathcal{W}\circ s$ is defined to be
 $\left(  \sum_{x\in\mathcal{X}}p_{Y|X}
(y|x)s({x})\right)  _{y\in\mathcal{Y}}$, and $\frac{1}{2} \left\Vert s-t\right\Vert
_{1}=\frac{1}{2} \sum_{x\in\mathcal{X}}\left\vert s({x})-t({x})\right\vert $ is the total
variation distance (also called normalized trace distance). The equality in~\eqref{eq:simpler-exp-contract-coeff} was proved in~\cite{Dobrushin1956}. The contraction coefficient
leads to the following strong form of the data-processing inequality for total
variation distance:
\begin{equation}
\eta_{\operatorname{TV}}(\mathcal{W}
) \left\Vert s-t\right\Vert _{1}\geq \left\Vert \mathcal{W}\circ s-\mathcal{W}\circ t\right\Vert _{1}
\end{equation}
and is considered strong when $\eta_{\operatorname{TV}}(\mathcal{W})\in\lbrack0,1)$.

The following inequality relates the Doeblin coefficient to the contraction
coefficient:
\begin{equation}
\eta_{\operatorname{TV}}(\mathcal{W})\leq1-\alpha(\mathcal{W}).
\end{equation}
See, e.g.,~\cite[Theorem~8.17]{wolf2012quantum} and~\cite[Remark~III.2]{Raginsky2016}.
The significance of this inequality is that
one
can use the Doeblin coefficient as a proxy for the contraction coefficient in
various applications. The Doeblin coefficient additionally satisfies various
desirable properties useful in applications, including concatenation~\cite[Corollary~2.2]{Chestnut2010}
\begin{equation}
1-\alpha(\mathcal{W}_{2}\circ\mathcal{W}_{1})\leq\left(  1-\alpha
(\mathcal{W}_{2})\right)  \left(  1-\alpha(\mathcal{W}_{1})\right)  ,
\end{equation}
and multiplicativity~\cite[Theorem~1, Item~5]{makur2024doeblin}:
\begin{equation}
\alpha(\mathcal{W}_{1}\otimes\mathcal{W}_{2})=\alpha(\mathcal{W}_{1}
)\cdot\alpha(\mathcal{W}_{2})
\end{equation}
for two arbitrary classical channels $\mathcal{W}_1$ and $\mathcal{W}_2$. 

\subsection{Summary of Contributions}

In this paper, our main aim is to  explore the notion of Doeblin coefficient in
quantum information theory, as well as its applications therein. As a consequence, we significantly expand on the developments of~\cite[Theorem~8.17 and Eq.~(8.86)]{wolf2012quantum} and
\cite{hirche2024quantum}. In particular, we define several notions of a
quantum Doeblin coefficient, some of which go beyond those already proposed in
\cite[Theorem~8.17 and Eq.~(8.86)]{wolf2012quantum} and
\cite{hirche2024quantum}. Our primary criterion for a quantity to be called a quantum Doeblin
coefficient is that it should reduce to the classical quantity~$\alpha
(\mathcal{W})$ when evaluated for a classical channel $\mathcal{W}$. Beyond
that, we look for connections to the concepts and properties mentioned above: Does the
proposed quantum Doeblin coefficient have an operational meaning?\ Is it
efficiently computable?\ Does it satisfy the concatenation and
multiplicativity properties mentioned above?

This line of questioning has led us to winnow down the various proposed
quantities and focus primarily on the definition of quantum Doeblin
coefficient given in \cref{def:q-doeblin-coeff}, denoted there as $\alpha(\mathcal{N})$. The quantity $\alpha(\mathcal{N})$ reduces to the classical quantity
for a classical channel, it is efficiently computable as a semidefinite
program, and it has an operational meaning generalizing that in
\eqref{eq:doeblin-dual-classical}--\eqref{eq:minimum-likelihood-decoder}.
However, this quantity $\alpha(\mathcal{N})$ is not multiplicative in general. By modifying the definition just
slightly (see \cref{def:b-doeblin-coef}), we define another quantum Doeblin coefficient, denoted by $\alpha_{\wang}(\mathcal{N})$, that is
multiplicative, reduces to the classical case for a classical channel, and is
efficiently computable as a semidefinite program. However, the price to pay
for this slight modification is that it no longer possesses an operational
meaning in general.

A key distinction between the classical and quantum definitions of Doeblin coefficient is the optimization over a Hermitian operator $X$ in \cref{def:q-doeblin-coeff}, \cref{def:alt-doeblin-quantum-induced}, and \cref{def:b-doeblin-coef}, rather than restricting the optimization to positive semidefinite $X$. In the definition of the Doeblin coefficient of a classical channel in~\eqref{eq:doeblin-classical}, we could have instead optimized over $r\in\mathbb{R}^{\left\vert \mathcal{Y}
\right\vert }$; however, it suffices to restrict the optimization in this case to $r\in\mathbb{R}_{\geq 0}^{\left\vert \mathcal{Y}
\right\vert }$. For the quantum case, it is not generally possible to restrict the optimization over Hermitian $X$ to just positive semidefinite $X$. Let us also advocate that the most prudent definition of quantum Doeblin coefficient involves an optimization over Hermitian  $X$. By doing so, not only do the quantities in \cref{def:q-doeblin-coeff} and \cref{def:alt-doeblin-quantum-induced} possess operational meanings in terms of exclusion values of the channel (either with entanglement assistance or without), but one also obtains tighter bounds on the trace distance contraction coefficient (see~\eqref{eq:complete-trace-distance-CC-to-doeblin} and~\eqref{eq:trace-distance-CC-to-induced-doeblin}), which is one of the main applications of a quantum Doeblin coefficient. Interestingly, let us also note that the exclusion task (and thus the exclusion value) is related to the foundations of quantum mechanics, by means of the PBR theorem from~\cite{pusey2012reality}, and thus it brings out distinct features of quantum information when compared to classical information.   

Let us now summarize the main contributions of our paper:

\begin{enumerate}

\item We provide various interpretations of the quantum Doeblin coefficient $\alpha(\mathcal{N})$
from \cref{def:q-doeblin-coeff} (see \cref{sec:doeblin-interpretations}). In particular, we interpret it as a channel's
minimum singlet fraction, its entanglement-assisted exclusion value, its
reverse max-mutual information, its reverse robustness, and its reverse
hypothesis-testing mutual information. See \cref{table:doeblin-expressions} for a summary of these
interpretations, which are explained in far greater detail in \cref{sec:doeblin-interpretations}. We also prove that it possesses various desirable properties,
such as normalization, concatenation, data processing under pre- and
post-processing, submultiplicativity, and concavity, in addition to being efficiently computable.

\item We provide various interpretations of the alternative notion of quantum
Doeblin coefficient $\alpha_I(\mathcal{N})$ from \cref{def:alt-doeblin-quantum-induced} (\cref{sec:interpretations-alt-doeblin}). In particular, we interpret
it as a channel's minimum singlet fraction under entanglement-breaking
decoders, its exclusion value, its reverse max-oveloH information, its reverse
robustness under positive, trace-preserving maps, and its hypothesis-testing
oveloH information. See \cref{table:induced-doeblin-expressions} for a summary of these interpretations, which are explained in far greater detail in \cref{sec:interpretations-alt-doeblin}. We
also prove that it possesses similar desirable properties, as mentioned above.

\item We define an alternative notion of quantum Doeblin coefficient, denoted by $\alpha_{\wang}(\mathcal{N})$ (see
\cref{def:b-doeblin-coef}), and establish that it possesses various desirable properties,
including multiplicativity and concatenation. We summarize its properties and how they compare to other quantum Doeblin coefficients in Table \ref{table:doeblin-properties}.

\item We consider complete contraction coefficients for quantum divergences, with trace distance being an important special case; the latter is defined in~\eqref{eq:trace_distance_cc} using the notation~$\eta_{\Tr}^c(\cN) $.
In particular, we prove that  $\eta_{\Tr}^c(\cN) \leq 1-\alpha(\cN)$ in~\cref{prop:trace_distance_complete_cont_Doeblin_bound}, which finds use in applications where there exist noiseless subsystems in tandem with noisy subsystems.

\item We evaluate the quantum Doeblin coefficient $\alpha(\mathcal{N})$ for several examples of
channels relevant in applications, such as classical--quantum channels,
measurement channels, generalized
dephasing channels, qubit channels, and generalized amplitude damping channels (\cref{sec:doeblin-examples}).
\end{enumerate}

Beyond the contributions listed above, we also provide several applications of
quantum Doeblin coefficients, relevant for quantum computation and information
theory. These are due to the connection between Doeblin coefficients and
contraction coefficients; they include the following:

\begin{enumerate}
\item We establish a connection between noise-induced barren plateaus and
Doeblin coefficients in~\cref{prop:General_noise_BP} of \cref{sec:noise-induced-barren-plateaus}. In particular, a noise-induced barren
plateau~\cite{wang2021noise} occurs when it is difficult to train a parameterized quantum circuit
affected by noise. The terminology means that the optimization landscape for
training becomes extremely flat, making it too difficult to navigate when
conducting an algorithm like gradient descent. In particular, we show that, for a parameterized quantum circuit affected by generic noise, it is quite difficult to train the circuit parameters located in the first few layers of the circuit. To this end, we use quantum Doeblin coefficients to provide efficiently computable bounds on the magnitude of the cost function's gradient, showcasing that the derivatives decay with the depth of the corresponding parameterized gate from the end of the circuit.

Moreover, our results prove that noise-induced barren plateaus occur if the noise channels have non-zero Doeblin coefficients, regardless of the existence of noiseless subsystems, parameter initialization strategy (even with warm starts), or the gradient optimization technique used, further expanding previously known limitations~\cite{wang2021noise,schumann2024emergence,singkanipa2025beyond,mele2024noise}. We should also note that this does not contradict the results of~\cite{mele2024noise}, where non-unital noise can possibly help in escaping barren plateaus, given that their finding holds only ``on average".

\item We establish a connection between limitations on error mitigation and
Doeblin coefficients (\cref{sec:error-mitigation}). In near-term quantum computing, one hope
is that error mitigation protocols can reduce or mitigate the effect of noise
on quantum computations used to estimate the expectation value of certain
observables~\cite{cai2023}. However, expanding on~\cite{takagi2023universal,quek2024exponentially}, we prove that all error mitigation protocols have an overhead
related to the Doeblin coefficient  when the noise channels in the circuit satisfy $\alpha(\cN) >0$ (see~\cref{thm:Error_miti_global}).
Indeed, we prove that the number of noisy data samples required to achieve a fixed error tolerance scales exponentially with the circuit depth. These results are valid for general error mitigation protocols even with the assistance of noiseless subsystems in the circuit, under mild conditions on the observable and the set of states allowed. 

\item We establish limitations on the sample complexity of hypothesis testing
when the states are affected by noise (\cref{sec:noisy-hypothesis-testing}). In particular, our
bounds in~\cref{Prop:SC_noisy} feature the Doeblin coefficient of the noisy channel affecting the
unknown states. This provides a sense of the cost one needs to pay in the task of hypothesis testing when only noisy samples of the states are available.

\item We finally outline connections between the Doeblin coefficient and the fairness of quantum learners (\cref{sec:stability-fairness-q-learners}) and the mixing, indistinguishability, and decoupling times of discrete time homogeneous and inhomogeneous quantum processes (\cref{subS:Mixing_time_of_Quantum_Processes}). Notably, we show how complete contraction coefficients can be used to bound the decoupling time and decoupled indistinguishability time defined in~\cref{def:decoupling-time}. We then prove that the Doeblin coefficients provide efficiently computable upper bounds for these times when the corresponding divergence is trace distance (see~\cref{Cor:decopling_Doeblin} and~\cref{Cor:Distinguishability_time_Doeblin}).
\end{enumerate}

In the conclusion of our paper (\cref{sec:conclusion}), we summarize our main
findings and then outline directions for future research.

\setlength{\tabcolsep}{5pt}
\renewcommand{\arraystretch}{2}

\begin{table}
\begin{tabular}
[c]{c|c}\hline\hline
Interpretation of $\alpha(\mathcal{N})$ & Expression \\\hline\hline
Primal SDP   & $\sup_{X_{B}\in\operatorname{Herm}}\left\{
\operatorname{Tr}[X_{B}]:I_{A}\otimes X_{B}\leq\Gamma_{AB}^{\mathcal{N}
}\right\}  $ \\\hline
Dual SDP  & $\inf_{Y_{AB}\geq 0}\left\{  \operatorname{Tr}
[Y_{AB} \Gamma_{AB}^{\mathcal{N}}]:\operatorname{Tr}_{A}[Y_{AB}]=I_{B}\right\}
$ \\\hline
Minimal singlet fraction & $d^{2}\inf_{\mathcal{D}
\in\operatorname{CPTP}}F(\Phi^{d},\left(  \operatorname{id}\otimes\left(
\mathcal{D}\circ\mathcal{N}\right)  \right)  (\Phi^{d}))$  \\\hline
Entanglement-assisted exclusion value & $\inf_{\cX, \Psi,\left(
\mathcal{E}^{x},\Lambda^{x}\right)
_{x\in\mathcal{X}}}\sum_{x\in\mathcal{X}}\operatorname{Tr}[\Lambda_{RB}
^{x}\mathcal{N}_{A\rightarrow B}(\mathcal{E}_{A^{\prime}\rightarrow A}
^{x}(\Psi_{RA^{\prime}}))]$  \\\hline
Reverse max-mutual information & $\exp\!\left(  -\inf_{\tau\in
\operatorname{aff}(\mathcal{D})}D_{\max}(\mathcal{R}^{\tau}\Vert
\mathcal{N})\right)  $  \\\hline
Reverse robustness & $\sup_{\substack{\lambda \in [0,1],\\ \tau\in\operatorname{aff}(\mathcal{D}),\\ \mathcal{M} \in \operatorname{CPTP}} } \left\{\lambda : \lambda\mathcal{R}^\tau + (1-\lambda) \mathcal{M} = \mathcal{N}\right\}$\\\hline
Reverse hypo.-testing mutual info. & $d^2\exp\!\left(  -  \inf_{\tau
\in\operatorname{aff}(\mathcal{D})}D_{H}^{1-\frac{1}{d^{2}}}(\Phi^{\mathcal{R}^\tau}\Vert\Phi^{\mathcal{N}})  \right)  $  
\\\hline\hline
\end{tabular}
\caption{Various interpretations and expressions for the Doeblin coefficient $\alpha(\mathcal{N})$ of a quantum channel~$\mathcal{N}$. The references to these expressions are as follows: primal SDP~\eqref{eq:alpha_cN_def}, dual SDP~\eqref{eq:alpha-cN-dual}, minimal singlet fraction~\eqref{eq:MASF}, entanglement-assisted exclusion value~\eqref{eq:EA-exclusion-value}, reverse max-mutual information~\eqref{eq:reverse-max-mutual-info}, reverse robustness~\eqref{eq:reverse-robustness}, and reverse hypothesis-testing mutual information~\eqref{eq:reverse-hypothesis-testing-MI}. 
}
\label{table:doeblin-expressions}
\end{table}

\begin{table}
\begin{tabular}
[c]{c|c}\hline\hline
Interpretation of $\alpha_I(\mathcal{N})$  & Expression \\\hline\hline
Primal Conic Program & $\sup_{X_{B}\in\operatorname{Herm}}\left\{
\operatorname{Tr}[X_{B}]:\mathcal{R}^X \leq_P \mathcal{N}
\right\}  $ \\\hline
Dual Conic Program & $\inf_{Y_{AB}\in \operatorname{Sep}(A:B)}\left\{  \operatorname{Tr}
[Y_{AB} \Gamma_{AB}^{\mathcal{N}}]:\operatorname{Tr}_{A}[Y_{AB}]=I_{B}\right\}
$ \\\hline
Minimal singlet fraction & $d^{2}\inf_{\mathcal{D}
\in\operatorname{EB}}F(\Phi^{d},\left(  \operatorname{id}\otimes\left(
\mathcal{D}\circ\mathcal{N}\right)  \right)  (\Phi^{d}))$  \\\hline
Exclusion value & $\inf_{ \cX,  \left(
\psi^{x},\Lambda^{x}\right)
_{x\in\mathcal{X}}}\sum_{x\in\mathcal{X}}\operatorname{Tr}[\Lambda_{B}
^{x}\mathcal{N}_{A\rightarrow B}(\psi_{A}
^{x})]$  \\\hline
Max-oveloH information & $\exp\!\left(  -\inf_{\tau\in
\operatorname{aff}(\mathcal{D})}
\sup_{\rho\in\mathcal{D}}
D_{\max}(\tau\Vert
\mathcal{N}(\rho))\right)  $  \\\hline
Reverse robustness & $\sup_{\substack{\lambda \in [0,1],\\ \tau\in\operatorname{aff}(\mathcal{D}), \mathcal{M} \in \operatorname{PTP}} } \left\{\lambda : \lambda\mathcal{R}^\tau + (1-\lambda) \mathcal{M} = \mathcal{N}\right\}$\\\hline
Hypo.-testing oveloH info. & 
$d^2 \exp\!\left(  -\left[  \sup_{(\psi_x)_{x=1}^{d^2}} \inf_{\tau
\in\operatorname{aff}(\mathcal{D})} D_{H}^{1-\frac{1}{d^{2}}}( \rho_{X} \otimes \tau_B \Vert  \mathcal{N}_{A\to B}(\rho_{XA}))\right]  \right)  $ 
\\\hline\hline
\end{tabular}
\caption{Various interpretations and expressions for the induced Doeblin coefficient $\alpha_I(\mathcal{N})$ of a quantum channel~$\mathcal{N}$. The references to these expressions are as follows: primal conic program~\eqref{eq:alpha-induced-def-pos-cone}, dual conic program~\eqref{eqn:inducedDoeblinDual}, minimal singlet fraction~\eqref{eq:min-singlet-frac-interp-induced-doeblin}, exclusion value~\eqref{eq:induced-doeblin}, max-oveloH information~\eqref{eq:max-oveloh-interp}, reverse robustness~\eqref{eq:reverse-robustness-induced-doeblin}, and hypothesis-testing oveloH information~\eqref{eq:hypo-test-oveloh}. In the above, $\rho_{XA} = \sum_{x=1}^{d^2} \frac{1}{d^2} |x\rangle \!\langle x| \otimes \psi_x$, and each $\psi_x$ is a pure state.}
\label{table:induced-doeblin-expressions}
\end{table}

\begin{table}
\begin{tabular}
[c]{c|c|c|c|}\hline\hline
Property & $\alpha(\cN)$ (Def.~\ref{def:q-doeblin-coeff}) & $\alpha_{I}(\cN)$ (Def.~\ref{def:alt-doeblin-quantum-induced}) & $\alpha_{\wang}(\cN)$ (Def.~\ref{def:b-doeblin-coef}) \\
\hline\hline
Recovers Classical Case & \multicolumn{3}{c|}{$\checkmark$ (\cref{prop:recovers-classical-doeblin})} \\ \hline
Normalized & $\checkmark$ (\cref{prop:q-doeblin-normalization}) & $\checkmark$ (\cref{prop:induced-doeblin-normalized}) & $\checkmark$ (\cref{prop:doeblin-wang-normalized}) \\ \hline
Concatenation & \multicolumn{2}{c|}{$\checkmark$ (\cref{prop:concatenation})} & $\checkmark$ (\cref{prop:concat-of-alpha-wang}) \\ \hline 
$\begin{array}{c}
     \text{Data Processing and}  \\[-5mm]
     \text{Isometric Invariance}
\end{array}$ & \multicolumn{2}{c|}{$\checkmark$ (\cref{prop:DPI-and-LU-invariance})} & $\checkmark$ (\cref{prop:DPI-and-LU-invariance-wang}) \\ \hline
Submultiplicative & \multicolumn{2}{c|}{$\checkmark$ (\cref{prop:q-doeblin-sub-multiplicative})} & $\checkmark$ (\cref{prop:multiplicativity-b-N}) \\ \hline
Multiplicative & \multicolumn{2}{c|}{$\times$ (\cref{example:strict-submultiplicativity-for-qubit-channels})} & $\checkmark$ (\cref{prop:multiplicativity-b-N}) \\ \hline
Concavity & \multicolumn{2}{c|}{$\checkmark$ (\cref{prop:concavity-doeblin})} & $\checkmark$ (\cref{prop:doeblin-wang-concavity})\\ 
\hline \hline
\end{tabular}
\caption{Summary of the properties of the main quantum Doeblin coefficients and where to find their proofs. $\checkmark$ means the property is satisfied and $\times$ means it is not. Note this shows that the main advantage of $\alpha_{\wang}$ is that it is multiplicative at the expense of lacking the operational interpretations that the other two quantities possess.
}
\label{table:doeblin-properties}
\end{table}

\renewcommand{\arraystretch}{1}

\section{Preliminaries}

We begin by establishing some notation used throughout the rest of our paper. For further background on quantum information, see \cite{Wilde-Book,Hayashi17,Watrous-Book,Holevo19book,khatri2024principlesquantumcommunicationtheory}.
Define the maximally entangled state $\Phi
_{d}$ of Schmidt rank $d\in \mathbb{N}$ as
\begin{equation} \label{eq:max_entangled}
\Phi_{d}\coloneqq \frac{1}{d}\sum_{i,j}|i\rangle\!\langle j|\otimes
|i\rangle\!\langle j|.
\end{equation}

The Choi operator of a superoperator $\mathcal{M}_{A\to B}$ is defined as
\begin{equation}
\label{eq:choi_operator}
    \Gamma^{\cM}_{AB} \coloneqq  \sum_{i,j} |i\rangle\!\langle j|_A \otimes \cM_{A'\to B}(|i\rangle\!\langle j|_{A'}),
\end{equation}
where system $A'$ is isomorphic to system $A$.

For two Hermitian operators $N$ and $M$ we use the notation $N\geq M$ to indicate that $N-M$ is a positive semidefinite operator. For two Hermiticity preserving superoperators $\mathcal{N}$ and $\mathcal{M}$, we use the notation $\mathcal{N}\geq \mathcal{M}$ to indicate that $\mathcal{N}-\mathcal{M}$ is a completely positive map, and we use the notation $\mathcal{N}\geq_P \mathcal{M}$ to indicate that $\mathcal{N}-\mathcal{M}$ is a positive map. The latter condition is equivalent to $\mathcal{N}(\rho) \geq \mathcal{M}(\rho)$ for every density operator $\rho$.

\subsection{Quantum Divergences}

Here we define some quantum divergences, which we take to be functions of a quasi-state (Hermitian operator with trace equal to one) and a positive semidefinite operator. In particular, we define the hockey-stick divergence and the hypothesis testing divergence (also called smoothed min-divergence).
These were originally defined for the first argument being a state in~\cite[Eq.~(9)]{sharma2012strongconversesquantumchannel} and\cite{BD10,BD11,WR12}, 
 respectively, and the hypothesis-testing divergence was recently extended to the case of a quasi-state first argument in~\cite[Eq.~(62)]{ji2024barycentric}. 

Let $\tau \in \operatorname{aff}(\mathcal{D})$ be a quasi-state, where $\operatorname{aff}(\mathcal{D})$ denotes the affine hull of the set of density operators
(i.e., Hermitian operators with unit trace), and let $\sigma $ be a positive semidefinite operator.
For $\gamma \geq 0$, the hockey-stick divergence is defined as 
\begin{equation}
    E_\gamma(\tau \Vert \sigma) \coloneqq \Tr\!\left[ (\tau-\gamma \sigma)_+\right] -(1-\gamma)_+,
    \label{eq:hockey-stick-div-def}
\end{equation}
where $(A)_+$ denotes the positive part of a Hermitian operator $A$. To define this, first define  $(x)_+ \coloneqq \max\{0,x\}$ for scalars. Then $(A)_+ \coloneqq \sum_{i} (a_i)_+ |i\rangle\!\langle i|$, with $A = \sum_i a_i |i\rangle\!\langle i|$ a spectral decomposition of $A$.
We also have the following equivalent formulation:
\begin{equation}\label{eq:HS_sup_form}
     E_\gamma(\tau \Vert \sigma) = \sup_{0 \leq M \leq I} \Tr\!\left[M(\tau-\gamma \sigma) \right]-(1-\gamma)_+.
\end{equation}
At $\gamma=1$, the hockey-stick divergence  reduces to the normalized trace distance:
\begin{equation}
    E_1(\tau \Vert \sigma) = T(\tau, \sigma) \coloneqq \frac{1}{2} \left \| \tau - \sigma\right\|_1.
    \label{eq:norm-TD-def}
\end{equation}
For $\varepsilon\in\left[  0,1\right]  $,  the hypothesis testing relative entropy is defined as
\begin{align}
D_{H}^{\varepsilon}(\tau\Vert\sigma)  & \coloneqq -\ln\inf_{\Lambda\geq0}\left\{
\operatorname{Tr}[\Lambda\sigma]:\operatorname{Tr}[\Lambda\tau]\geq
1-\varepsilon,\ \Lambda\leq I\right\} \label{eq:extended_ht_Relative_Entropy} \\
& =-\ln\sup_{\mu,Z\geq0}\left\{  \mu\left(  1-\varepsilon\right)
-\operatorname{Tr}[Z]:\mu\tau\leq\sigma+Z\right\}  .
\label{eq:dual-SDP-hypo-test-rel-ent}
\end{align}
The dual expression in \eqref{eq:dual-SDP-hypo-test-rel-ent} was given in \cite[Eq.~(B2)]{WangWilde2019} for the case of $\tau$ being a state, and the derivation easily extends to the case of $\tau$ being a quasi-state.

We have two different definitions of max-relative entropy of $\tau \in \operatorname{aff}(\mathcal{D})$ and $\sigma$ positive semidefinite, which was originally defined for states in \cite{datta2009}.
These are as follows:
\begin{align}
     D_{\max}(\tau\|\sigma) & \coloneqq \ln \inf_{\lambda \geq 0} \left\{ \lambda: \tau \leq \lambda \sigma\right\} \label{eq:dmax-1-herm-maps}\\
     \widetilde{D}_{\infty}(\tau\|\sigma) & \coloneqq \ln \inf_{\lambda \geq 0} \left\{ \lambda: -\lambda \sigma \leq  \tau \leq \lambda \sigma\right\} \\
     & = \ln \left\| \sigma^{-1/2} \tau \sigma^{-1/2} \right\|_\infty.
\end{align}
These quantities are related as follows:
\begin{equation}
    D_{\max}(\tau\|\sigma) \leq \widetilde{D}_{\infty}(\tau\|\sigma).
\end{equation}
We also define two different definitions of max-relative entropy of Hermiticity preserving maps~$\mathcal{T}$ and $\mathcal{S}$ as
\begin{align}
    D_{\max}(\mathcal{T} \Vert \mathcal{S}) & \coloneqq \ln \inf_{\lambda \geq 0} \left\{ \lambda: \mathcal{T} \leq \lambda \mathcal{S}\right\} \label{eq:max-rel-ent-channels}\\
    \widetilde{D}_{\infty}(\mathcal{T} \Vert \mathcal{S}) & \coloneqq \ln \inf_{\lambda \geq 0} \left\{ \lambda: -\lambda \mathcal{S} \leq \mathcal{T} \leq \lambda \mathcal{S}\right\},
\end{align}
where the superoperator inequality $\mathcal{T} \leq \lambda \mathcal{S}$ holds if and only if $\lambda \mathcal{S} - \mathcal{T}  $ is a completely positive map. These quantities are likewise ordered as
\begin{equation}
    D_{\max}(\mathcal{T} \Vert \mathcal{S}) \leq \widetilde{D}_{\infty}(\mathcal{T} \Vert \mathcal{S}).
\end{equation}

\subsection{Conic Programming and the Separable Cone}

Throughout this work we use conic programs, which include the more commonly used semidefinite programs as a special case. Of particular importance are cone programs over the separable cone. We provide a brief review of these concepts as used in this work. We refer the reader to \cite{Watrous-Conic-Program-Notes,boyd2004convex} for further details on cone programs.

\paragraph{Cone Programs} We begin by defining a cone program generically in a way that aligns with the rest of this work's notation. Let $(V,\langle \cdot, \cdot \rangle_{V})$, $(W,\langle \cdot, \cdot \rangle_{W})$ be real-valued inner product spaces. Let $\cK \subset V$ be a closed, convex cone, and let $\cK^{\ast}$ denote its dual cone. Let $a \in V$ and $b \in W$. Let $\Phi:V \to W$ be a linear map. The primal and dual of the cone program may then be written as\footnote{Note that this is a relabeling of what is the primal and dual in \cite[Section 1.4]{Watrous-Conic-Program-Notes}. Thus, everything stated in this subsection is a relabeling of statements in \cite{Watrous-Conic-Program-Notes}.}
 \begin{center}
    \begin{miniproblem}{0.5}
      \emph{Primal problem}\\[-7mm]
      \begin{equation}      \begin{aligned}\label{eqn:cone-primal}
        \text{maximize:}\quad & \langle b , x \rangle_{W} \\
        \text{subject to:}\quad & \Phi^{\ast}(x)-a \in \cK^{\ast}  \\
        & x \in W \ 
      \end{aligned}
      \end{equation}
    \end{miniproblem}
    \hspace*{2mm}
    \begin{miniproblem}{0.4}
      \emph{Dual problem}\\[-7mm]
      \begin{equation}
      \begin{aligned}\label{eqn:cone-dual}
        \text{minimize:}\quad & \langle a , y \rangle_{V} \\
        \text{subject to:}\quad & \Phi(y) = b \\
        & y \in \cK \ ,
      \end{aligned}
      \end{equation}
    \end{miniproblem}
 \end{center}
 where $\Phi^{\ast}: W \to V$ is the unique linear map satisfying $\langle w, \Phi(v) \rangle_{W} = \langle \Phi^{\ast}(w), v \rangle_V$ for all $v \in V, w \in W$.\footnote{We reserve the notation $\cE^{\dagger}$ for the adjoint map with respect to the Hilbert--Schmidt inner product as is standard in quantum information theory.} Such cone programs always satisfy weak duality; i.e., defining the feasible sets
 \begin{align}
 F & \coloneq \{ x \in W: \Phi^{\ast}(x) - a \in \cK^{\ast}\}, \\
 G & \coloneq \{y \in \cK: \Phi(y) = b)\},    
 \end{align}
and the optimal values
\begin{align}
\alpha & \coloneq \sup_{x \in F} \; \langle b , x \rangle_{W} ,\\
\beta & \coloneq \; \inf_{y \in G} \langle a , y \rangle_{V},
\end{align}
which are taken over the extended real line, we have $\alpha \leq \beta$. A sufficient condition for strong duality ($\beta = \alpha$) is Slater's criterion, which we state here (see also \cite[Theorem 1.3]{Watrous-Conic-Program-Notes}).
 
\begin{proposition}
    Let $\operatorname{Relint}(C)$ denote the relative interior of a convex set $C$. With respect to the notation given,
    \begin{enumerate}
        \item If $F \neq \emptyset$ and $\operatorname{Relint}(\cK) \cap G \neq \emptyset$, then there exists $x \in F$ such that $\langle b , x \rangle_{W} = \beta$; i.e., $\alpha = \beta$.
        
        \item If $G \neq \emptyset$ and there exists $x \in W$ such that $\Phi^{\ast}(x) - a \in \operatorname{Relint}(\cK^{\ast})$, then there exists $y \in G$ such that $\langle a , y \rangle_{V} = \alpha$; i.e., $\alpha = \beta$.
    \end{enumerate}
\end{proposition}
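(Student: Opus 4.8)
The plan is to prove the Proposition by the standard route to conic duality via perturbation (sensitivity) analysis; as the footnote already signals, the statement is a relabeling of \cite[Theorem~1.3]{Watrous-Conic-Program-Notes}, so one clean option is simply to invoke that result, but let me sketch the argument. By the symmetry of the primal--dual pair --- after rewriting the pair of constraints $\Phi(y)=b$ and $y\in\cK$ as a single conic membership and dualizing in the usual way, the ``dual problem'' is itself a cone program in the primal form of~\eqref{eqn:cone-primal} whose dual is the ``primal problem'', and the strict-feasibility hypothesis for that reformulation turns out to be exactly $\operatorname{Relint}(\cK)\cap G\neq\emptyset$ --- it suffices to prove one of the two items, and I would prove item~2. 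So assume $G\neq\emptyset$ and that there is a strictly feasible primal point $\bar x\in W$ with $\Phi^{\ast}(\bar x)-a\in\operatorname{Relint}(\cK^{\ast})$. First I would record that $\alpha$ and $\beta$ are finite: $G\neq\emptyset$ gives $\beta<\infty$, $\bar x\in F$ gives $\alpha>-\infty$, and weak duality $\alpha\leq\beta$ pins both down.

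The core step is to introduce the primal perturbation function
\begin{equation}
v(u)\coloneqq\sup\left\{\langle b,x\rangle_{W}:x\in W,\ \Phi^{\ast}(x)-a-u\in\cK^{\ast}\right\},\qquad u\in V,
\end{equation}
and to establish three facts about it. It is concave, because the constraint set is convex in $(x,u)$ jointly (as $\cK^{\ast}$ is a convex cone) and the objective is linear. It is proper, with $v(0)=\alpha$ finite. And $0\in\operatorname{Relint}(\operatorname{dom} v)$ --- this is where strict feasibility enters: since $\Phi^{\ast}(\bar x)-a$ lies in the \emph{relative} interior of $\cK^{\ast}$, the operator $\Phi^{\ast}(\bar x)-a-u$ stays in $\cK^{\ast}$ for every $u$ in a relative neighborhood of $0$ in $\operatorname{aff}(\cK^{\ast})$, so $v(u)\geq\langle b,\bar x\rangle_{W}>-\infty$ there. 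Then I would invoke the standard fact that a proper concave function has a nonempty superdifferential at every relative-interior point of its domain, obtaining a supergradient $y_{0}\in V$ of $v$ at $u=0$.

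It then remains to unpack the supergradient inequality into dual feasibility together with a matching value. Specializing the perturbation to residuals of the form $u=\Phi^{\ast}(x)-a-\kappa$ with $x\in W$ arbitrary and $\kappa\in\cK^{\ast}$, and using the adjoint identity $\langle w,\Phi(v)\rangle_{W}=\langle\Phi^{\ast}(w),v\rangle_{V}$, the defining inequality of the supergradient becomes affine in $x$ and $\kappa$; since $x$ ranges over a linear space its coefficient must vanish, which (modulo the sign conventions in~\eqref{eqn:cone-primal}--\eqref{eqn:cone-dual}) forces $\Phi(y_{0})=b$; since $\kappa$ ranges over a cone, homogeneity forces a definite sign of $\langle y_{0},\kappa\rangle_{V}$, placing $y_{0}$ in $(\cK^{\ast})^{\ast}=\cK$ (as $\cK$ is a closed convex cone); and what is left of the inequality pins $\langle a,y_{0}\rangle_{V}$ to $\alpha$. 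Hence $y_{0}\in G$ with $\langle a,y_{0}\rangle_{V}=\alpha$, so $\beta\leq\langle a,y_{0}\rangle_{V}=\alpha\leq\beta$: the dual optimum is attained and $\alpha=\beta$, which is item~2. Item~1 then follows by running the same argument on the dual cone program under its strict feasibility $\operatorname{Relint}(\cK)\cap G\neq\emptyset$, or directly by the symmetry above. The step I expect to be the main obstacle is the relative-interior bookkeeping: the cones of interest in this paper --- especially the separable cone and its dual --- are typically not full-dimensional, so every ``interior'', ``neighborhood'', and ``separation'' statement must be read relative to the appropriate affine hull, and it is exactly the hypothesis that the strictly feasible point lies in the \emph{relative} interior (not merely in the cone) that guarantees $\partial v(0)\neq\emptyset$ and hence rules out a duality gap; getting this right, rather than the ensuing algebra with cone duality and adjoints, is the real content.
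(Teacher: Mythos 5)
The paper does not actually prove this proposition: it is stated as a known result (Slater's criterion) with a pointer to \cite[Theorem~1.3]{Watrous-Conic-Program-Notes}, so there is no in-paper proof to compare against. Your sketch is a correct rendition of the standard argument. The cited reference establishes the same statement by a direct separating-hyperplane argument applied to an auxiliary convex set built from the constraint residuals and objective values; your perturbation-function/supergradient route is the Rockafellar-style packaging of that same separation (the existence of a supergradient of the concave value function $v$ at a relative-interior point of its domain is itself proved by separating a point from the hypograph), so the two approaches are essentially equivalent, with yours making the sensitivity interpretation of the dual variable explicit. The unpacking of the supergradient inequality is right: linearity in $x$ over all of $W$ forces $\Phi(y_0)=b$, homogeneity in $\kappa\in\cK^{\ast}$ forces $\langle y_0,\kappa\rangle_V\geq 0$ and hence $y_0\in(\cK^{\ast})^{\ast}=\cK$ by the bipolar theorem (this is where closedness and convexity of $\cK$ enter), the remaining term gives $\langle a,y_0\rangle_V\leq\alpha$, and weak duality closes the gap and yields dual attainment.

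One step is written more briefly than it deserves, and it is exactly the relative-interior bookkeeping you flag. Showing that $\Phi^{\ast}(\bar x)-a-u$ remains in $\cK^{\ast}$ for $u$ in a relative neighborhood of $0$ inside $\operatorname{aff}(\cK^{\ast})$ only proves that $0$ lies in the relative interior of the slice $\Phi^{\ast}(\bar x)-a-\cK^{\ast}$, which is in general a proper subset of $\operatorname{dom} v=\Phi^{\ast}(W)-a-\cK^{\ast}$ with a possibly smaller affine hull; membership in the relative interior of a subset does not imply membership in the relative interior of the superset. The fix is one line: either absorb perturbations along $\Phi^{\ast}(W)$ by shifting $x$, or invoke $\operatorname{Relint}(A+B)=\operatorname{Relint}(A)+\operatorname{Relint}(B)$ for convex $A,B$ to get $\operatorname{Relint}(\operatorname{dom} v)=\Phi^{\ast}(W)-a-\operatorname{Relint}(\cK^{\ast})\ni 0$. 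You should also record (as a consequence of concavity and the finiteness of $v(0)=\alpha$) that $v$ never takes the value $+\infty$, so that it is a proper concave function and the supergradient existence theorem applies. With those two sentences added, the argument is complete.
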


 \begin{remark}
     While we have presented the above more generally, we note throughout this work that we will always use the inner product to be the Hilbert--Schmidt product, $\langle X, Y \rangle = \Tr[X^{\dag}Y]$ for all linear operators $X,Y \in \operatorname{Lin}(A,B)$, which will simplify some of the above notation.
 \end{remark}

\paragraph{Separable Cone}
The cone that we primarily work with concretely is the separable cone. We summarize key properties to which we will appeal.
\begin{definition}
    Given Hilbert spaces $A$ and $B$, the separable cone (equivalently, set of separable operators) are positive semidefinite operators $P_{AB} \geq 0$ such that there exists a finite alphabet~$\cX$ and tuples of positive operators $(Q_{x})_{x \in \cX} \subset \Pos(A)$, $(R_{x})_{x \in \cX} \subset \Pos(B)$ such that
    \begin{align}
        P = \sum_{x \in \cX} Q_{x} \otimes R_{x} \, \ .
    \end{align}
\end{definition}
It is straightforward to verify that the separable cone is indeed a cone, and a proof that it is closed is established in \cite[Proposition 6.8]{Watrous-Book}.
Using the definition of a dual cone and standard properties of separable operators, one may verify the relation
\begin{align}\label{eq:block-positive-relation}
    X \in \Sep^{\ast}(A:B) \Leftrightarrow \bra{\alpha}_{A}\bra{\beta}_{B}X\ket{\alpha}_{A}\ket{\beta}_{B} \geq 0 \quad \forall \ket{\alpha}_{A}\ket{\beta}_{B} \ .
\end{align}
The set of operators in $\Sep^{\ast}(A:B)$ are sometimes referred to as being `block-positive.' 

Lastly, we state the following well-known relation between a linear map being positive and the Choi operator being block-positive (see \cite{Johnston-2012norms} for an elementary proof).
\begin{proposition}\label{prop:positive-to-block-positive}
    A linear map $\Phi_{A \to B}$ is positive if and only if $\Gamma^{\Phi}_{AB} \in \Sep^{\ast}(A:B)$.
\end{proposition}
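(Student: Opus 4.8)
The plan is to reduce both sides of the stated equivalence to the same elementary condition quantified over product vectors, and then to match the two conditions by complex conjugation in the computational basis. First I would recall that $\Phi_{A\to B}$ is positive if and only if $\Phi(|\psi\rangle\!\langle\psi|_A)\geq 0$ for every vector $|\psi\rangle_A$: the ``only if'' direction is immediate, and the ``if'' direction follows because every positive semidefinite $\rho_A$ is a nonnegative combination of rank-one projectors and $\Phi$ is linear. Testing positive semidefiniteness against all vectors $|\beta\rangle_B$, positivity of $\Phi$ is therefore equivalent to
\begin{equation}
\langle\beta|_B\,\Phi(|\psi\rangle\!\langle\psi|_A)\,|\beta\rangle_B \geq 0 \qquad \text{for all } |\psi\rangle_A,\ |\beta\rangle_B .
\end{equation}

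On the other side, by the block-positivity characterization of the dual separable cone in \eqref{eq:block-positive-relation}, the condition $\Gamma^{\Phi}_{AB}\in\Sep^{\ast}(A:B)$ is equivalent to $\langle\alpha|_A\langle\beta|_B\,\Gamma^{\Phi}_{AB}\,|\alpha\rangle_A|\beta\rangle_B\geq 0$ for all $|\alpha\rangle_A,|\beta\rangle_B$. The crux is then a short computation: expanding $|\alpha\rangle_A=\sum_i\alpha_i|i\rangle$ and $|\psi\rangle_A=\sum_i\psi_i|i\rangle$ in the computational basis and using the definition of the Choi operator in \eqref{eq:choi_operator}, one checks that $\langle\alpha|_A\langle\beta|_B\,\Gamma^{\Phi}_{AB}\,|\alpha\rangle_A|\beta\rangle_B = \sum_{i,j}\overline{\alpha_i}\,\alpha_j\,\langle\beta|_B\Phi(|i\rangle\!\langle j|_A)|\beta\rangle_B$, whereas $\langle\beta|_B\Phi(|\psi\rangle\!\langle\psi|_A)|\beta\rangle_B = \sum_{i,j}\psi_i\,\overline{\psi_j}\,\langle\beta|_B\Phi(|i\rangle\!\langle j|_A)|\beta\rangle_B$. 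These two scalars coincide under the substitution $\psi_i = \overline{\alpha_i}$, i.e., taking $|\psi\rangle_A$ to be the entrywise complex conjugate of $|\alpha\rangle_A$. Since complex conjugation in a fixed basis is a bijection on the set of vectors in $A$, the quantifier over $|\psi\rangle_A$ sweeps out exactly the same family of conditions as the quantifier over $|\alpha\rangle_A$, and chaining the equivalences yields the claim.

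There is no substantive obstacle here; the argument is essentially bookkeeping. The one point meriting care is the complex conjugate that appears when relating a rank-one input $\psi_i\overline{\psi_j}$ of $\Phi$ to the $|i\rangle\!\langle j|$ pattern built into $\Gamma^{\Phi}_{AB}$, so I would state it explicitly rather than silently asserting that $\langle\alpha|_A\langle\beta|_B\Gamma^{\Phi}_{AB}|\alpha\rangle_A|\beta\rangle_B$ equals $\langle\beta|_B\Phi(|\alpha\rangle\!\langle\alpha|_A)|\beta\rangle_B$; this harmless conjugation is precisely what makes both quantifiers range over the full set of vectors and keeps the equivalence intact.
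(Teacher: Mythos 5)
Your proof is correct, and the computation relating $\langle\alpha|_A\langle\beta|_B\,\Gamma^{\Phi}_{AB}\,|\alpha\rangle_A|\beta\rangle_B$ to $\langle\beta|_B\Phi(|\overline{\alpha}\rangle\!\langle\overline{\alpha}|_A)|\beta\rangle_B$ — including the entrywise conjugation needed to match the two quantifiers — is exactly the right bookkeeping. The paper itself does not prove this proposition but defers to the cited reference for ``an elementary proof''; your argument is that standard elementary proof (reduce positivity to rank-one inputs via the spectral theorem, test against vectors $|\beta\rangle_B$, and invoke the block-positivity characterization in \eqref{eq:block-positive-relation}), so there is nothing in the paper to contrast it with and no gap to report.
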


Beyond the separable cone itself, we also use the following well-known SDP-representable cones that contain the separable cone. First, let us recall the positive partial transpose (PPT) cone:
\begin{align}\label{eq:PPT-cone}
    \operatorname{PPT}(A:B) \coloneqq \{P_{AB} \geq 0: P_{AB}^{\Gamma} \geq 0 \} \ , 
\end{align}
where $X^{\Gamma} \coloneqq (\id_{A} \otimes T_{B})(X)$ is the partial transpose as $T_{B}$ is the transpose map. Second, let us recall the family of cones that interpolate between the separable and positive cone \cite{doherty2004complete}.
\begin{definition}\label{def:k-sym-ext-cone}
     For $k \in \mbb{N}$, the $k$-symmetrically extendible cone is defined as
    \begin{align*}
        \operatorname{Sym}_{k}(A:B) \coloneq \left\{P_{AB} \geq 0 : \exists P_{AB^{k}_{1}} \geq 0 : \begin{matrix} P_{AB^{k}_{1}} = (I_{A} \otimes W_{\pi})P_{AB^{k}_{1}}(I_{A} \otimes W_{\pi}^{\dagger}) \, \, \forall \pi \in \cS([k]) \\
        \Tr_{B^{k-1}_{1}}[P_{AB^{k}_{1}}] = P_{AB}
        \end{matrix} \right\} \ ,
    \end{align*}
    where $\cS_{k}$ is the set of permutations on $k$ elements, and for every $\pi \in \cS_{k}$, $W_{\pi}$ is the unitary that permutes the $k$ copies of $B$ according to $\pi$; i.e., for every unit vector
    $\{\ket{v_{i}} \in B_{i}\}_{i \in \{1,\ldots ,k\}}$, 
    $$W_{\pi}\ket{v_{1}}_{B_{1}}\ket{v_{2}}_{B_{2}}\cdots \ket{v_{k}}_{B_{k}} = \ket{v_{1}}_{B_{\pi(1)}}\ket{v_{2}}_{B_{\pi(2)}}\cdots\ket{v_{k}}_{B_{\pi(k)}} \ . $$ 
\end{definition}
\noindent This means that $\operatorname{Sym}_{k}(A:B)$ is the set of positive semidefinite operators, labeled by $P_{AB}$, that admit extensions where $P_{AB_{i}} = P_{AB}$ for all $i \in \{1,\ldots ,k\}$. It is well-known that $\Pos(A \otimes B) = \operatorname{Sym}_{1}(A:B) \supsetneq \operatorname{Sym}_{2}(A:B) \supsetneq \cdots$ and $\Sep(A:B) = \lim_{k \to \infty} \operatorname{Sym}_{k}(A:B)$.

\subsection{Quantum State Exclusion}

The task of quantum state exclusion \cite{bandyopadhyay2014conclusive} appears in various forms throughout our paper.
Quantum state exclusion was considered in~\cite{pusey2012reality} to study whether quantum states are $\psi$-ontic or $\psi$-epistemic, and it was studied directly subsequently, e.g.,~\cite{bandyopadhyay2014conclusive,McIrvin2024QSE}. More recently quantum state exclusion has been studied in a number of papers~\cite{Havlivcek2020simple,Leifer2020noncontextuality,uola2020all,russo2023inner,mishra2023optimal,johnston2023tight}.

Let us recall the task for a given ensemble of states.

\begin{definition}
\label{def:state-exclusion-err-prob}
    Consider a finite alphabet $\cX$, a probability distribution over $\cX$, $p$, and a tuple of quantum states $(\rho^{x}_{A})_x$. For a given POVM with $\vert \cX \vert$ outcomes, $(\Lambda_{x})_{x}$, the error probability of state exclusion is $\sum_{x} p(x)\Tr[\Lambda_{x}\rho^{x}]$. We then define the minimal error probability of state exclusion as follows:
    \begin{align}\label{eq:QSE}
        P_{\operatorname{err}}((p_{x},\rho^{x})_x) \coloneqq \min_{(\Lambda_{x})_{x} \in \operatorname{POVM}} \sum_{x} p(x)\Tr[\Lambda_{x}\rho^{x}] \ . 
    \end{align}
\end{definition}

Note that this is quantum state exclusion, as the agent `wins' whenever they do not guess the true value of the state; thus they win whenever they successfully exclude the state they were sent.

\section{Quantum Doeblin Coefficients}

\subsection{Main Definition of Quantum Doeblin Coefficient}

The main definition of a quantum channel's Doeblin coefficient that we use in this paper was essentially introduced in~\cite[Eq.~(8.86)]{wolf2012quantum} and revisited in~\cite[Corollary~III.7]{hirche2024quantum}. 

\begin{definition}[Quantum Doeblin Coefficient]
\label{def:q-doeblin-coeff}
    The quantum Doeblin coefficient of a channel~$\mathcal{N}_{A \to B}$ is
defined as
    \begin{align}
        \alpha(\cN) \coloneqq \sup_{ X_B\in\operatorname{Herm}} \left\{\Tr[ X_B]: I_A\otimes  X_B 
        \leq \Gamma^{\cN}_{AB} \right\}, \label{eq:alpha_cN_def}
    \end{align}
where the Choi operator of $\mathcal{N}_{A\to B}$ is defined in~\eqref{eq:choi_operator}.
\end{definition} 

We note here that \cref{def:q-doeblin-coeff} is equivalent to
\begin{equation}
    \alpha(\mathcal{N}) = \sup_{X \in \operatorname{Herm}}\left\{ \operatorname{Tr}[X] : \mathcal{R}^X \leq \mathcal{N}\right\},
    \label{eq:doeblin-CP-order}
\end{equation}
where $\mathcal{R}^X(\cdot) \coloneqq \operatorname{Tr}[\cdot] X$ is a replacer map.

By observing that~\eqref{eq:alpha_cN_def} is a semidefinite program, we obtain the following dual expression for the Doeblin coefficient, which is helpful in our analysis carried out later. 

\begin{proposition}[Dual SDP Formula for Doeblin Coefficient]
\label{prop:dual-SDP-doeblin-main}
    Let $\cN_{A \to B}$ be a quantum channel. Then  $\alpha(\cN)$ defined in~\eqref{eq:alpha_cN_def} is equal to the following expression:
    \begin{align}
\alpha(\cN) &= \inf_{Y_{AB} \geq 0} \left\{ \Tr[Y_{AB} \Gamma^{\cN}_{AB}] \, : \, \Tr_{A}[Y_{AB}] = I_{B} \right\} \label{eq:alpha-cN-dual},
\end{align}
where $\Gamma^{\cN}_{AB}$ is defined in~\eqref{eq:choi_operator}.
\end{proposition}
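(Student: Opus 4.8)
The plan is to derive the dual SDP by putting the primal SDP in Definition~\ref{def:q-doeblin-coeff} into the standard primal–dual form recalled in the conic-programming subsection, identifying the relevant linear map, its adjoint, and the cone (here the positive semidefinite cone), and then reading off the dual. I would also verify strong duality via Slater's condition so that the infimum in~\eqref{eq:alpha-cN-dual} is actually attained and equals $\alpha(\cN)$.

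Concretely, the primal problem maximizes $\Tr[X_B]$ over Hermitian $X_B$ subject to $\Gamma^{\cN}_{AB} - I_A \otimes X_B \geq 0$. Matching this to the template~\eqref{eqn:cone-primal}, I would take the optimization variable to be $x = X_B$, the objective vector to be $b = I_B$ (so $\langle b, x\rangle = \Tr[X_B]$), the cone $\cK^{\ast}$ to be the PSD cone on $AB$ (which is self-dual, so $\cK$ is also the PSD cone), the constant $a = -\Gamma^{\cN}_{AB}$, and $\Phi^{\ast}$ the adjoint-of-partial-trace map $X_B \mapsto I_A \otimes X_B$. The key computation is that the adjoint of the map $\Phi^{\ast}: \Herm(B) \to \Herm(AB)$, $X_B \mapsto I_A \otimes X_B$, with respect to the Hilbert–Schmidt inner product, is the partial trace $\Phi: \Herm(AB) \to \Herm(B)$, $Y_{AB} \mapsto \Tr_A[Y_{AB}]$; this follows from $\Tr[(I_A \otimes X_B) Y_{AB}] = \Tr[X_B \Tr_A[Y_{AB}]]$. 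The dual constraint $\Phi(y) = b$ then becomes $\Tr_A[Y_{AB}] = I_B$ with $y = Y_{AB} \geq 0$, and the dual objective $\langle a, y\rangle = \Tr[(-\Gamma^{\cN}_{AB})(-Y_{AB})]$ — here I would be careful about the sign convention in the template versus the constraint $\Phi^{\ast}(x) - a \in \cK^{\ast}$, which reads $I_A \otimes X_B + \Gamma^{\cN}_{AB} \in \cK^\ast$; to get the constraint in~\eqref{eq:alpha_cN_def} one instead writes the primal as a minimization of $-\Tr[X_B]$ or equivalently re-signs $a$ and $b$ appropriately so that the dual reads $\inf \Tr[Y_{AB}\Gamma^{\cN}_{AB}]$ subject to $\Tr_A[Y_{AB}] = I_B$, $Y_{AB} \geq 0$, as claimed.

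For strong duality I would check Slater's criterion (the proposition stated after~\eqref{eqn:cone-dual}): on the primal side, taking $X_B = -c I_B$ for sufficiently large $c > 0$ makes $\Gamma^{\cN}_{AB} - I_A \otimes X_B = \Gamma^{\cN}_{AB} + c I_{AB} > 0$ strictly, so the primal constraint is strictly feasible; equivalently on the dual side, $Y_{AB} = \frac{1}{d_A} I_{AB}$ satisfies $\Tr_A[Y_{AB}] = I_B$ and lies in the relative interior of the PSD cone. Hence strong duality holds and the $\inf$ is attained, giving the stated equality.

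I expect the only real subtlety — the ``main obstacle,'' though it is minor — to be bookkeeping of signs and of the max/min orientation when fitting the primal into the fixed template of~\eqref{eqn:cone-primal}–\eqref{eqn:cone-dual}; the underlying linear-algebra fact (adjoint of $X_B \mapsto I_A \otimes X_B$ is $\Tr_A$) is routine. A fully self-contained alternative, which I might present instead to avoid the template-matching, is to prove weak duality directly: for feasible $X_B$ and $Y_{AB}$, $\Tr[Y_{AB}\Gamma^{\cN}_{AB}] - \Tr[X_B] = \Tr[Y_{AB}\Gamma^{\cN}_{AB}] - \Tr[X_B \Tr_A[Y_{AB}]] = \Tr[Y_{AB}(\Gamma^{\cN}_{AB} - I_A \otimes X_B)] \geq 0$ since both factors are PSD; then exhibit a matching primal/dual pair (or invoke Slater) to conclude equality.
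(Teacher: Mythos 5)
Your proposal is correct and follows essentially the same route as the paper: the paper derives the dual by writing out the Lagrangian explicitly (which amounts to the same template-matching, using the identity $\Tr[(I_A\otimes X_B)Y_{AB}]=\Tr[X_B\Tr_A[Y_{AB}]]$) and then verifies strong duality via Slater with the strictly feasible primal point $X=-I_B$ and the dual point $Y=\tfrac{1}{d_A}I_A\otimes I_B$, matching your choices up to the inessential constant $c$.
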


\begin{proof}
In the following proof, we omit system labels for simplicity.
Consider that
\begin{align}
\alpha(\mathcal{N})  &  =\sup_{X\in\operatorname{Herm}}\left\{
\operatorname{Tr}[X]:I\otimes X\leq\Gamma^{\mathcal{N}}\right\} \\
&  =\sup_{X\in\operatorname{Herm}}\left\{  \operatorname{Tr}[X]+\inf_{Y\geq
0}\operatorname{Tr}[Y(\Gamma^{\mathcal{N}}-I\otimes X)]\right\} \\
&  =\sup_{X\in\operatorname{Herm}}\inf_{Y\geq0}\left\{  \operatorname{Tr}
[X]+\operatorname{Tr}[Y(\Gamma^{\mathcal{N}}-I\otimes X)]\right\} \\
&  =\sup_{X\in\operatorname{Herm}}\inf_{Y\geq0}\left\{  \operatorname{Tr}
[Y\Gamma^{\mathcal{N}}]+\operatorname{Tr}[X\left(  I-\operatorname{Tr}
_{A}[Y]\right)  ]\right\} \\
&  \leq\inf_{Y\geq0}\sup_{X\in\operatorname{Herm}}\left\{  \operatorname{Tr}
[Y\Gamma^{\mathcal{N}}]+\operatorname{Tr}[X\left(  I-\operatorname{Tr}
_{A}[Y]\right)  ]\right\} \\
&  =\inf_{Y\geq0}\left\{  \operatorname{Tr}[Y\Gamma^{\mathcal{N}
}]:I=\operatorname{Tr}_{A}[Y]\right\}  .
\end{align}
Strong duality holds by picking strongly feasible solution $X=-I_B$ in the primal and the feasible solution $Y=\frac{1}
{d_A}I_A\otimes I_B$ in the dual. Then it follows that
\begin{equation}
\alpha(\mathcal{N})=\inf_{Y\geq0}\left\{  \operatorname{Tr}[Y\Gamma
^{\mathcal{N}}]:I=\operatorname{Tr}_{A}[Y]\right\}.
\end{equation}
This concludes the proof.
\end{proof}

\subsection{Alternative Notions of Quantum Doeblin Coefficients}

In this section, we introduce some alternative notions of quantum Doeblin coefficient. All of these quantities reduce to the classical definition when the channel is a classical channel, as stated in Proposition \ref{prop:recovers-classical-doeblin} at the end of this section. 

The first alternative definition we consider is from~\cite[Theorem~8.17]{wolf2012quantum}, subsequently used in~\cite[Definition~1]{angrisani2023differentialprivacyamplificationquantum}:

\begin{definition}
\label{def:alt-doeblin-quantum-induced}
    For a quantum channel $\mathcal{N}$, define the following quantity:
    \begin{align}
        \alpha_I(\mathcal{N}) & \coloneqq \sup_{X\in \operatorname{Herm}}\left\{ \operatorname{Tr}[X] :    \mathcal{N} - \mathcal{R}^X \in \operatorname{Pos}\right\} \label{eq:alpha-induced-def-pos-cone} \\
        & = \sup_{X\in \operatorname{Herm}}\left\{ \operatorname{Tr}[X] :    \mathcal{R}^X(\rho) \leq \mathcal{N}(\rho) \quad \forall \rho \in \mathcal{D} \right\},
    \end{align}
    where $\operatorname{Pos}$ denotes the set of positive maps.
\end{definition}
This quantity leads to an improved upper bound on the trace-distance contraction coefficient than that given by $\alpha(\mathcal{N})$, as shown in~\cite[Theorem~8.17]{wolf2012quantum}. However, the main drawback of it is that it is not efficiently computable, because it is computationally difficult to optimize over the set of positive maps.

We now introduce another alternative definition of quantum Doeblin coefficient, denoted by $\alpha_{\wang}(\mathcal{N})$ and inspired by methods employed in \cite{WFD17,wang2018semidefinite}. Although it does not give an improved upper bound on the trace distance contraction coefficient, it has several desirable properties, including 1) it is efficiently computable as a semidefinite program and 2) it is multiplicative under tensor products of quantum channels. As such, the quantity $\alpha_{\wang }(\cN)$ is useful in applications.

\begin{definition} 
\label{def:b-doeblin-coef}
    For a quantum channel $\mathcal{N}$, let us define $\alpha_{\wang }(\cN)$  as
\begin{equation}
    \alpha_\wang(\cN)  \coloneqq  \sup_{X_B \in \operatorname{Herm}} \left\{ \operatorname{Tr}[X_B] : - \Gamma^{\mathcal{N}}_{AB} \leq I_A \otimes X_B \leq \Gamma^{\mathcal{N}}_{AB}\right\}
    \label{eq:primal-b-relaxed-doeblin} .
\end{equation}
\end{definition}

We note that alternative expressions for $ \alpha_{\wang}(\mathcal{N})$ are as follows:
\begin{align}
    \alpha_{\wang}(\mathcal{N}) & = 
    \sup_{X_B \in \operatorname{Herm}} \left\{ \operatorname{Tr}[X_B] : - \mathcal{N} \leq \mathcal{R}^X \leq \mathcal{N} \right\}\\
    & = \left[ \inf_{\lambda \geq 0, \tau \in \operatorname{aff}(\mathcal{D})} \left\{\lambda : -\lambda  \Phi^{\mathcal{N}} \leq \pi_{d}\otimes\tau \leq \lambda \Phi^{\mathcal{N}} \right\}\right]^{-1}\\
    & = \left[ \inf_{\tau \in \operatorname{aff}(\mathcal{D})} \left \| (\Phi^{\mathcal{N}})^{-1/2}(\pi_{d}\otimes\tau) (\Phi^{\mathcal{N}})^{-1/2}\right \|_\infty\right]^{-1} .
\end{align}

The following inequality is an immediate consequence of \cref{def:q-doeblin-coeff} and \cref{def:b-doeblin-coef}, because the optimization in \cref{def:b-doeblin-coef} has more constraints than that in \cref{def:q-doeblin-coeff}:
\begin{equation}
    \alpha(\mathcal{N}) \geq \alpha_{\wang}(\mathcal{N}).
    \label{eq:a-b-doeblin-ineq}
\end{equation}

\begin{proposition}[Dual of $\alpha_{\wang}(\mathcal{N})$]
For a quantum channel $\mathcal{N}_{A\to B}$, the following equality holds:
\begin{equation}
    \alpha_{\wang}(\mathcal{N}) = \inf_{Y^{1}_{AB},Y^{2}_{AB}\geq0}\left\{  \operatorname{Tr}\!\left[  \left(  Y^{1}_{AB}+Y^{2}_{AB}\right)\Gamma
^{\mathcal{N}}_{AB}  \right]  :I_B=\operatorname{Tr}
_{A}[Y^{2}_{AB}-Y^{1}_{AB}]\right\},
\label{eq:dual-of-bN-quantity}
\end{equation}
where $\Gamma
^{\mathcal{N}}_{AB}$ is defined in \eqref{eq:choi_operator}.
\end{proposition}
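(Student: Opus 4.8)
The plan is to derive the dual of $\alpha_{\wang}(\mathcal{N})$ by the same Lagrangian argument used in the proof of \cref{prop:dual-SDP-doeblin-main}, now handling the two-sided constraint $-\Gamma^{\mathcal{N}}_{AB} \leq I_A \otimes X_B \leq \Gamma^{\mathcal{N}}_{AB}$ via two positive semidefinite multipliers. Concretely, I would write the primal as $\sup_{X_B \in \operatorname{Herm}} \operatorname{Tr}[X_B]$ subject to $Z^1 \coloneqq \Gamma^{\mathcal{N}}_{AB} - I_A \otimes X_B \geq 0$ and $Z^2 \coloneqq \Gamma^{\mathcal{N}}_{AB} + I_A \otimes X_B \geq 0$, introduce multipliers $Y^1_{AB}, Y^2_{AB} \geq 0$ for these two constraints, and form the Lagrangian $\operatorname{Tr}[X_B] + \operatorname{Tr}[Y^1_{AB}(\Gamma^{\mathcal{N}}_{AB} - I_A \otimes X_B)] + \operatorname{Tr}[Y^2_{AB}(\Gamma^{\mathcal{N}}_{AB} + I_A \otimes X_B)]$.

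The key algebraic step is to collect the terms linear in $X_B$: using $\operatorname{Tr}[Y^i_{AB}(I_A \otimes X_B)] = \operatorname{Tr}[\operatorname{Tr}_A[Y^i_{AB}] X_B]$, the $X_B$-dependent part of the Lagrangian becomes $\operatorname{Tr}[X_B(I_B - \operatorname{Tr}_A[Y^1_{AB}] + \operatorname{Tr}_A[Y^2_{AB}])]$, while the $X_B$-independent part is $\operatorname{Tr}[(Y^1_{AB} + Y^2_{AB})\Gamma^{\mathcal{N}}_{AB}]$. Taking $\inf_{Y^1,Y^2 \geq 0}$ inside and then $\sup_{X_B \in \operatorname{Herm}}$, I would swap the order (weak duality, $\leq$), and note that the supremum over Hermitian $X_B$ is finite only when the coefficient of $X_B$ vanishes, i.e. $I_B = \operatorname{Tr}_A[Y^2_{AB}] - \operatorname{Tr}_A[Y^1_{AB}] = \operatorname{Tr}_A[Y^2_{AB} - Y^1_{AB}]$, yielding exactly the claimed dual expression as an upper bound. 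The only mildly delicate point here is keeping the signs straight: the multiplier for the upper constraint $\Gamma^{\mathcal{N}} - I \otimes X \geq 0$ contributes $-\operatorname{Tr}_A[Y^1]$ and the one for the lower constraint $\Gamma^{\mathcal{N}} + I \otimes X \geq 0$ contributes $+\operatorname{Tr}_A[Y^2]$, which is why the constraint ends up as $\operatorname{Tr}_A[Y^2_{AB} - Y^1_{AB}] = I_B$ rather than a sum.

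To upgrade weak duality to equality, I would invoke Slater's condition (the conic-programming version stated in the excerpt) by exhibiting strictly feasible points on both sides, mirroring the proof of \cref{prop:dual-SDP-doeblin-main}. On the primal side, $X_B = -c\, I_B$ for a sufficiently large constant $c > 0$ makes both $\Gamma^{\mathcal{N}}_{AB} - I_A \otimes X_B = \Gamma^{\mathcal{N}}_{AB} + c\, I_{AB}$ and $\Gamma^{\mathcal{N}}_{AB} + I_A \otimes X_B = \Gamma^{\mathcal{N}}_{AB} - c\, I_{AB}$\ — wait, the second is not strictly positive, so instead I should argue strict feasibility more carefully: since $\Gamma^{\mathcal{N}}_{AB}$ may be rank-deficient, I would restrict to its support or simply note that $X_B = 0$ together with a small perturbation argument shows the primal value is finite and the constraint set has nonempty relative interior (e.g. any $X_B$ with $\|I_A \otimes X_B\|_\infty$ strictly less than the smallest nonzero eigenvalue behavior of $\Gamma^{\mathcal{N}}_{AB}$ on its support is relative-interior feasible). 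On the dual side, $Y^1_{AB} = 0$ and $Y^2_{AB} = \frac{1}{d_A} I_A \otimes I_B$ is feasible since $\operatorname{Tr}_A[Y^2_{AB} - Y^1_{AB}] = I_B$, and perturbing $Y^1$ slightly into the interior of the positive cone gives a relative-interior dual feasible point. The main obstacle I anticipate is precisely this strong-duality bookkeeping — ensuring the Slater point argument is airtight given that $\Gamma^{\mathcal{N}}_{AB}$ is generally not full rank — whereas the Lagrangian manipulation itself is routine and essentially identical to the single-constraint case already carried out in the excerpt.
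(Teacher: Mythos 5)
Your proposal is correct and follows essentially the same route as the paper: Lagrangian duality with two positive semidefinite multipliers for the two-sided constraint, a minimax swap giving weak duality, and Slater's criterion certified by a strictly feasible dual point (the paper takes $Y^{1}=\varepsilon\frac{I}{d_{A}}\otimes I$, $Y^{2}=(1+\varepsilon)\frac{I}{d_{A}}\otimes I$) together with the feasible primal point $X=0$. Two small corrections: with your assignment of $Y^{1}$ to the upper constraint, setting your own coefficient $I_{B}-\operatorname{Tr}_{A}[Y^{1}]+\operatorname{Tr}_{A}[Y^{2}]$ to zero yields $\operatorname{Tr}_{A}[Y^{1}-Y^{2}]=I_{B}$ rather than the $\operatorname{Tr}_{A}[Y^{2}-Y^{1}]=I_{B}$ you state (harmless, since the objective is symmetric in $Y^{1},Y^{2}$, so relabeling recovers the claimed form), and your worry about primal strict feasibility can be dropped entirely --- indeed no strictly feasible primal point exists when $\Gamma^{\mathcal{N}}$ is rank-deficient, but Slater's criterion (item 1 of the stated proposition) only needs the dual-side relative-interior point plus primal feasibility, which is exactly the certificate you and the paper both end up providing.
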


\begin{proof}
    Consider that
\begin{align}
\alpha_{\wang}(\mathcal{N}) & =\sup_{X\in\operatorname{Herm}}\left\{  \operatorname{Tr}
[X]:-\Gamma^{\mathcal{N}}\leq I\otimes X\leq\Gamma^{\mathcal{N}}\right\}  \\
& =\sup_{X\in\operatorname{Herm}}\left\{  \operatorname{Tr}[X]+\inf_{Y_{1},Y_{2}\geq
0}\operatorname{Tr}\!\left[  \left(  I\otimes X+\Gamma^{\mathcal{N}}\right)
Y_{1}\right]  +\operatorname{Tr}\!\left[  \left(  \Gamma^{\mathcal{N}}-I\otimes
X\right)  Y_{2}\right]  \right\}  \\
& =\sup_{X\in\operatorname{Herm}}\inf_{Y_{1},Y_{2}\geq0}\left\{  \operatorname{Tr}
[X]+\operatorname{Tr}\!\left[  \left(  I\otimes X+\Gamma^{\mathcal{N}}\right)
Y_{1}\right]  +\operatorname{Tr}\!\left[  \left(  \Gamma^{\mathcal{N}}-I\otimes
X\right)  Y_{2}\right]  \right\}  \\
& =\sup_{X\in\operatorname{Herm}}\inf_{Y_{1},Y_{2}\geq0}\left\{  \operatorname{Tr}
\left[  \Gamma^{\mathcal{N}}\left(  Y_{1}+Y_{2}\right)  \right]
+\operatorname{Tr}\!\left[  X\left(  I+\operatorname{Tr}_{A}[Y_{1}
-Y_{2}]\right)  \right]  \right\}  \\
& \leq\inf_{Y_{1},Y_{2}\geq0}\sup_{X\in\operatorname{Herm}}\left\{  \operatorname{Tr}
\left[  \Gamma^{\mathcal{N}}\left(  Y_{1}+Y_{2}\right)  \right]
+\operatorname{Tr}\!\left[  X\left(  I+\operatorname{Tr}_{A}[Y_{1}
-Y_{2}]\right)  \right]  \right\}  \\
& =\inf_{Y_{1},Y_{2}\geq0}\left\{  \operatorname{Tr}\!\left[  \Gamma
^{\mathcal{N}}\left(  Y_{1}+Y_{2}\right)  \right]  :I=\operatorname{Tr}
_{A}[Y_{2}-Y_{1}]\right\}  .
\end{align}
Strong duality holds by picking $X=0$ (primal feasible)$\ $and $Y_{2}=\left(
1+\varepsilon\right)  \frac{I}{\left\vert A\right\vert }\otimes I$ and
$Y_{1}=\varepsilon\frac{I}{\left\vert A\right\vert }\otimes I$, for
$\varepsilon>0$, so that these choices are strictly feasible for the dual.
\end{proof}

We define more alternative Doeblin coefficients by restricting the $X \in \Herm$ condition to $X \geq 0$.
\begin{definition}
    For a quantum channel $\cN$, define the following quantities:
\begin{align}
    \alpha_{+}(\cN) & \coloneqq  \sup_{X \geq 0} \left\{ \Tr[X]: I \otimes X \leq \Gamma^\cN \right\}\label{def:alpha-plus-doeblin}\\
    \alpha_{+,I}(\cN) & \coloneqq  \sup_{X \geq 0} \left\{ \Tr[X]: {\cR}^X \leq_P \mathcal{N} \right\},\label{def:alpha_I-plus-doeblin}
\end{align}
where $\cR^X(\cdot)= \Tr[\cdot] X$.
\end{definition}

Note that
\begin{equation}
    \alpha_+(\mathcal{N}) \leq \alpha_{\wang}(\mathcal{N})
\end{equation}
because $X\geq 0$ and $I\otimes X \leq \Gamma^{\mathcal{N}}$ imply that $-\Gamma^{\mathcal{N}} \leq I\otimes X \leq \Gamma^{\mathcal{N}}$, so that optimizing over Hermitian $X$ satisfying $-\Gamma^{\mathcal{N}} \leq I\otimes X \leq \Gamma^{\mathcal{N}}$ can only give the same or a larger value for~$\operatorname{Tr}[X]$.

\subsubsection{Induced and Cone-Restricted Doeblin Coefficients}

In this section we consider even more variants of quantum Doeblin coefficients. We begin by defining an induced version of the Doeblin coefficient, which arises from considering the classical Doeblin coefficient evaluated over all classical channels induced by a given quantum channel. We then establish a cone program for this quantity, and one of our findings is that the induced Doeblin coefficient is equal to the expression in~\eqref{eq:alpha-induced-def-pos-cone}. By generalizing the cone program, we obtain `cone-restricted' Doeblin coefficients, which recover the quantum Doeblin coefficient $\alpha(\mathcal{N})$ as a special case.

Recall the expression for the classical Doeblin coefficient from~\eqref{eq:minimum-likelihood-decoder}. 
We can then consider the `induced Doeblin coefficient' of a quantum channel by the use of prepared states and POVMs to map the quantum output system to a classical system. This is analogous to what was done in~\cite{chitambar2022communication}, where the communication value of a quantum channel is induced by the communication value of a classical channel.

\begin{definition}\label{def:induced-doeblin-coefficient}
    For a general channel $\cN_{A \to B}$ and $n,n'\in\mathbb{N}$, define the $n\to n'$ best-case induced Doeblin coefficient as
    \begin{align}
    \label{eq:n-to-np-induced-doeblin}
    \alpha_{I}^{n \to n'}(\cN) \coloneqq  \min_{\substack{(\Lambda_{y})_{y \in [n']} \in \operatorname{POVM} \\ (\rho_{x})_{x \in [n]}}} \left\{ \sum_{y \in [n']} \min_{x\in [n]} \tr[\Lambda_{y} \cN(\rho_{x})] \right \}\ .
\end{align}
Moreover, the best-case  
induced Doeblin coefficient is defined as
    \begin{align}
    \label{eq:induced-doeblin}
    \alpha_{I}(\cN) \coloneqq  \inf_{n,n' \in \mathbb{N}} \alpha_{I}^{n \to n'}(\cN) \ .
    \end{align}
\end{definition}

As we show later on (\cref{prop:DPI-and-LU-invariance}), the Doeblin coefficient $\alpha(\mathcal{N})$ satisfies a data-processing inequality. Using this now, we conclude that the induced Doeblin coefficient $\alpha_I(\mathcal{N})$ cannot be smaller than the quantum Doeblin coefficient $\alpha(\mathcal{N})$.

\begin{proposition}
\label{prop:induced-doeblin-larger-than-doeblin}
    For a quantum channel $\cN$, the following inequality holds:
    \begin{equation}
        \alpha_{I}(\cN) \geq \alpha(\cN),
    \end{equation}
where $\alpha_I(\cN)$ and $\alpha (\cN)$ are defined in~\eqref{eq:induced-doeblin} and~\eqref{eq:alpha_cN_def}, respectively.
\end{proposition}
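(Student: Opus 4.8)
The plan is to show that $\alpha_I(\mathcal{N}) \geq \alpha(\mathcal{N})$ by relating the classical Doeblin coefficients of the channels induced from $\mathcal{N}$ back to the quantum Doeblin coefficient $\alpha(\mathcal{N})$, and invoking the data-processing inequality for $\alpha$. Fix arbitrary $n, n' \in \mathbb{N}$, a tuple of states $(\rho_x)_{x \in [n]}$, and a POVM $(\Lambda_y)_{y \in [n']}$. These define a classical channel: the preparation map $\mathcal{P}: x \mapsto \rho_x$ (from the classical register on $[n]$ into $A$) and the measurement channel $\mathcal{M}: B \to [n']$ with $\mathcal{M}(\cdot) = \sum_y \Tr[\Lambda_y \,\cdot\,]\,\ket{y}\!\bra{y}$. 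Then $\mathcal{M} \circ \mathcal{N} \circ \mathcal{P}$ is a classical channel $\mathcal{W}$ with conditional distribution $p_{Y|X}(y|x) = \Tr[\Lambda_y \mathcal{N}(\rho_x)]$, and by~\eqref{eq:minimum-likelihood-decoder} its classical Doeblin coefficient is exactly $\alpha(\mathcal{W}) = \sum_{y \in [n']} \min_{x \in [n]} \Tr[\Lambda_y \mathcal{N}(\rho_x)]$, which is precisely the quantity being minimized in the definition~\eqref{eq:n-to-np-induced-doeblin} of $\alpha_I^{n\to n'}(\mathcal{N})$.

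Next I would argue that $\alpha(\mathcal{W}) \geq \alpha(\mathcal{N})$. Since $\mathcal{W} = \mathcal{M} \circ \mathcal{N} \circ \mathcal{P}$ is obtained from $\mathcal{N}$ by pre-composing with the quantum channel $\mathcal{P}$ and post-composing with the quantum channel $\mathcal{M}$, and since $\alpha$ restricted to classical channels reduces to the classical Doeblin coefficient (our primary criterion for calling the quantity a quantum Doeblin coefficient), the data-processing inequality asserted in \cref{prop:DPI-and-LU-invariance} gives $\alpha(\mathcal{M}\circ\mathcal{N}\circ\mathcal{P}) \geq \alpha(\mathcal{N})$. Concretely, using the $\mathcal{R}^X$-formulation~\eqref{eq:doeblin-CP-order}: if $\mathcal{R}^X \leq \mathcal{N}$ with $\Tr[X] = \alpha(\mathcal{N})$, then composing on both sides preserves the completely-positive ordering, so $\mathcal{M} \circ \mathcal{R}^X \circ \mathcal{P} \leq \mathcal{M}\circ\mathcal{N}\circ\mathcal{P} = \mathcal{W}$; and $\mathcal{M}\circ\mathcal{R}^X\circ\mathcal{P}(\cdot) = \Tr[\cdot]\,\mathcal{M}(X)$ is again a replacer map $\mathcal{R}^{\mathcal{M}(X)}$ with $\Tr[\mathcal{M}(X)] = \Tr[X] = \alpha(\mathcal{N})$, hence this $\mathcal{M}(X)$ is feasible for the optimization defining $\alpha(\mathcal{W})$ and witnesses $\alpha(\mathcal{W}) \geq \alpha(\mathcal{N})$.

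Combining, for every $n, n'$ and every choice of $(\rho_x)_x$ and $(\Lambda_y)_y$ we have $\sum_y \min_x \Tr[\Lambda_y \mathcal{N}(\rho_x)] = \alpha(\mathcal{W}) \geq \alpha(\mathcal{N})$. Taking the minimum over $(\rho_x)_x, (\Lambda_y)_y$ yields $\alpha_I^{n\to n'}(\mathcal{N}) \geq \alpha(\mathcal{N})$, and then the infimum over $n, n'$ yields $\alpha_I(\mathcal{N}) \geq \alpha(\mathcal{N})$, as claimed.

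I do not expect a serious obstacle here; the only point requiring care is the logical dependency, namely that this proof uses \cref{prop:DPI-and-LU-invariance} (data processing for $\alpha$) together with the fact that $\alpha$ reduces to the classical Doeblin coefficient on classical channels — so one should make sure those are established independently of \cref{prop:induced-doeblin-larger-than-doeblin}, which the excerpt indicates is the case. A minor subtlety is that the preparation ``channel'' $\mathcal{P}$ has a classical input system; one handles this by treating $\ket{x}\!\bra{x} \mapsto \rho_x$ as a genuine (classical–quantum) quantum channel, so that data processing applies verbatim.
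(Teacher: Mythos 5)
Your proof is correct and follows essentially the same route as the paper's: the paper likewise expresses $\alpha_I(\cN)$ as an infimum of $\alpha$ over the classical channels obtained by pre-composing $\cN$ with a preparation channel and post-composing with a measurement channel, and then invokes the data-processing inequality of \cref{prop:DPI-and-LU-invariance}. Your explicit unpacking of that step via the replacer map $\cR^{\cM(X)}$ is just a more self-contained rendering of the same argument.
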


\begin{proof}
Consider preparation and measurement channels $\cF_{X \to A}$, and $\cE_{B \to Y}$ respectively. Then by \cref{prop:DPI-and-LU-invariance}, it follows that $\alpha(\cE \circ \cN \circ \cF) \leq \alpha(\cN)$. We finally conclude that
\begin{align}
    \alpha_{I}(\cN) = \inf_{\cF_{X \to A},\cE_{B \to Y}} \alpha(\cF \circ \cN \circ \cE) \geq \alpha(\cN) \ , 
\end{align}
where we used the definition of the induced Doeblin coefficient.
\end{proof}

Next, we aim to show that $\alpha_{I}(\cN)$ can be expressed as a conic program. This follows in steps and closely follows proof methods from~\cite{Chitambar-2021a}.

\begin{proposition}
\label{prop:induced-Doeblin-structure}
    Let $\cN_{A \to B}$ be a quantum channel, and fix $n\in\mathbb{N}$. Then
    \begin{align}\label{eq:alpha-n-to-n}
        \alpha_{I}^{n \to n}(\cN) & = \inf_{\substack{(\Lambda_{x})_{x \in [n]} \\ (\ket{\psi_{x}})_{x \in [n]}}} \sum_{x \in [n]} \Tr[\Lambda_{x} \cN(\dyad{\psi_{x}})] \ , \\
        \alpha_{I}(\cN) &= \alpha_{I}^{d_{B}^{2} \to d_{B}^{2}}(\cN) \ .
    \end{align}
\end{proposition}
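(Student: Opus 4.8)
The plan is to establish the two identities in order: first the reduction to pure states on a single index set, then the dimension reduction. For the first identity, write $\gamma_n$ for the infimum on the right-hand side. The bound $\alpha_I^{n\to n}(\cN)\le\gamma_n$ is immediate: any pure-state ensemble $(\ket{\psi_x})_{x\in[n]}$ together with an $n$-outcome POVM $(\Lambda_x)_{x\in[n]}$ is a valid choice in~\eqref{eq:n-to-np-induced-doeblin}, and upper-bounding the inner minimum by its $x=y$ term gives $\alpha_I^{n\to n}(\cN)\le\sum_y\Tr[\Lambda_y\cN(\dyad{\psi_y})]$; infimizing over such choices yields the claim. For the reverse bound, take any ensemble $(\rho_x)_{x\in[n]}$ and $n$-outcome POVM $(\Lambda_y)_{y\in[n]}$; for each $y$ fix an index $f(y)$ attaining $\min_x\Tr[\Lambda_y\cN(\rho_x)]$, and set $\Lambda'_x\coloneqq\sum_{y:\,f(y)=x}\Lambda_y$, which is again an $n$-outcome POVM obeying $\sum_y\min_x\Tr[\Lambda_y\cN(\rho_x)]=\sum_x\Tr[\Lambda'_x\cN(\rho_x)]$. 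Expanding $\rho_x$ in its eigenbasis, the right-hand side of this equality is at least $\sum_x\Tr[\Lambda'_x\cN(\dyad{\psi_x})]$ for a suitable choice of eigenstates $\ket{\psi_x}$ of $\rho_x$, and $((\Lambda'_x),(\ket{\psi_x}))$ is feasible for the right-hand side of the first identity, so its objective is $\ge\gamma_n$. Since this holds for every ensemble and POVM, $\alpha_I^{n\to n}(\cN)\ge\gamma_n$.

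For the second identity, I would first observe that $\alpha_I^{n\to n'}(\cN)$ is non-increasing in $n$ (duplicating a state leaves every inner minimum unchanged) and in $n'$ (appending the zero POVM element adds a zero term), so that $\inf_{n,n'}\alpha_I^{n\to n'}(\cN)=\inf_m\alpha_I^{m\to m}(\cN)$, and it then suffices to show $\alpha_I^{m\to m}(\cN)\ge\alpha_I^{d_B^2\to d_B^2}(\cN)$ for every $m$. Using the first identity and minimizing over each $\ket{\psi_x}$ separately, so that $\min_{\ket{\psi_x}}\Tr[\Lambda_x\cN(\dyad{\psi_x})]=\lambda_{\min}(\cN^{\dagger}(\Lambda_x))$ with $\cN^{\dagger}$ the Hilbert--Schmidt adjoint of $\cN$, we obtain $\alpha_I^{m\to m}(\cN)=\min_{(\Lambda_x)_x\in\operatorname{POVM}}\sum_{x=1}^m\lambda_{\min}(\cN^{\dagger}(\Lambda_x))$. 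The objective is concave in $(\Lambda_x)_x$ (a linear map followed by the concave functional $\lambda_{\min}$), so its minimum over the compact convex set of $m$-outcome POVMs on $B$ is attained at an extreme point. Invoking the standard fact that an extremal POVM on a $d_B$-dimensional space has at most $d_B^2$ nonzero elements — obtained by a dimension count, since otherwise there would exist a nonzero Hermitian tuple $(D_x)_x$ with each $D_x$ supported inside the range of $\Lambda_x$ and $\sum_x D_x=0$ — we may discard the indices with $\Lambda_x=0$, which contribute $\lambda_{\min}(\cN^{\dagger}(0))=0$, and pad back up to exactly $d_B^2$ outcomes using zero POVM elements and arbitrary pure states. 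This gives a feasible $d_B^2\to d_B^2$ configuration of the same value, hence $\alpha_I^{m\to m}(\cN)\ge\alpha_I^{d_B^2\to d_B^2}(\cN)$; together with the monotonicity above, this yields $\alpha_I(\cN)=\alpha_I^{d_B^2\to d_B^2}(\cN)$.

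The bookkeeping — the POVM coarse-graining, the passage to pure states, and the monotonicity in $n$ and $n'$ — is routine. The step that requires genuine care, and where the value $d_B^2$ actually enters, is the extreme-point argument: one must combine the principle that a concave function on the set of POVMs attains its minimum at an extreme point with the linear-algebraic bound on the number of nonzero elements of an extremal POVM. I expect this to be the main obstacle in turning the sketch into a complete proof.
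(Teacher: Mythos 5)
Your argument is correct and follows essentially the same route as the paper's: coarse-grain the POVM along the argmin map to reach the diagonal form, pass to pure states via the spectral decomposition, and cap the number of outcomes at $d_{B}^{2}$ using the fact that extremal POVMs have at most $d_{B}^{2}$ nonzero elements. Your two streamlinings --- the one-shot coarse-graining $\Lambda'_x=\sum_{y:\,f(y)=x}\Lambda_y$ in place of the paper's iterative merging of POVM elements, and minimizing out the states first to obtain the concave objective $\sum_x\lambda_{\min}(\cN^{\dagger}(\Lambda_x))$ in place of the paper's linearity-in-the-POVM convex-decomposition argument --- are valid and slightly cleaner, but do not change the substance.
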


\begin{proof}
Identifying $\cY = [n']$, $\cX = [n]$, for every choice of strategy, we may consider the function $f: \cY \to \cX$ defined by $f(y) \coloneq \text{argmin}_{x} \Tr[\Lambda_{y}\cN(\rho_{x})]$, so that $\alpha^{n \to n'} = \sum_{y} \Tr[\Lambda_{y}\cN(\rho_{f(y)})]$. First we will show that, without loss of generality, the induced Doeblin coefficient $\alpha_{I}(\cN)$ is achieved using a strategy where $f$ is bijective, at which point a relabeling will allow us to reduce the induced quantum Doeblin coefficient to the expression $\alpha_{I}(\cN) = \sup_{n \in \mbb{N}} \alpha_{I}^{n \to n}(\cN)$.

Consider an induced classical channel $\cW$. If there is $\hat{x} \in \cX$ such that $\hat{x} \not \in \operatorname{Im}(f)$, then one may remove $\rho_{\hat{x}}$ to obtain a new induced channel with the same induced Doeblin coefficient, as $\rho_{\hat{x}}$ is not the minimizer for any $y \in \cY$. Thus, $f$ is surjective. 
    
    Now imagine there exist $y_{1} \neq y_{2}$ such that $f(y_1) = f(y_2) = \hat{x}$; i.e., $f$ is not injective. Define $\ol{W}_{\ol{Y} \vert X}(\ol{y} \vert x)$ where $\ol{\cY} \coloneq \cY \setminus \{y_{1} , y_{2}\} \cup \{\hat{y}\}$,  which is induced by altering the POVM by $\Lambda_{\wt{y}} \coloneq \Lambda_{y}$ for $y \in \cY \setminus \{y_{1} , y_{2}\}$ and $\Lambda_{\hat{y}} \coloneq \Lambda_{y_{1}} + \Lambda_{y_{2}}$. Then,
    \begin{align}
        \alpha(\cW) & = \left(\sum_{y \in \cY \setminus \{y_{1} , y_{2}\}} \min_{x} \Tr[\Lambda_{y}\cN(\rho_{x})]\right) + \Tr[\Lambda_{y_{1}}\cN(\rho_{\hat{x}})] + \Tr[\Lambda_{y_{2}}\cN(\rho_{\hat{x}})]   \\
        & \geq  \sum_{y \in \cY \setminus \{y_{1} , y_{2}\}} \min_{x} \Tr[\Lambda_{y}\cN(\rho_{x})] + \min_{x} \Tr[\Lambda_{\hat{y}}\cN(\rho_{x})] \\ 
        & = \alpha(\ol{\cW}) \ ,
    \end{align}
    where the inequality follows from linearity and the definition of $\Lambda_{\hat{y}}$. Since the induced Doeblin coefficient involves an infimization, we can repeat this procedure iteratively to obtain an injective function. Thus, an optimal $f$ is always bijective. Taking everything together, this implies that
    $\alpha_{I}(\cN) = \inf_{n \in \mbb{N}} \alpha_{I}^{n \to n}(\cN)$, where
    \begin{equation}
    \alpha_{I}^{n \to n}(\cN) = \inf_{\substack{(\Lambda_{x})_{x \in [n]} \\ (\rho_{x})_{x \in [n]}}} \sum_{x \in [n]} \Tr[\Lambda_{x} \cN(\rho_{\sigma(x)})] \ ,
    \end{equation}
    and $\sigma$ is a permutation on $n$ objects, as $f$ is a bijection and thus a permutation. Lastly, we may relabel the states with the inverse permutation $\sigma^{-1}$ to obtain the form
    \begin{align}
         \alpha_{I}^{n \to n}(\cN) = \inf_{\substack{(\Lambda_{x})_{x \in [n]} \\ (\rho_{x})_{x \in [n]}}} \sum_{x \in [n]} \Tr[\Lambda_{x} \cN(\rho_{x})] \ .
    \end{align}
    Now, we show that we can restrict the optimization to be over pure states. To do this, for each $x \in \cX$ let $\rho_{x} = \sum_{y=1}^{r_{x}} p^{x}(y) \dyad{\phi^{x}_{y}}$ where $r_{x}$ is the rank of $\rho_{X}$ and $p^{x}(y)$ is a probability distribution. The existence of this decomposition follows from the spectral theorem. Then we have
    \begin{align}
        \sum_{x} \Tr[\Lambda_{x} \cN(\rho_{x})] = \sum_{x} \Tr\!\left[\Lambda_{x} \sum_{y} p^{x}(y)\cN(\dyad{\phi^{x}_{y}})\right] \geq \sum_{x} \Tr[\Lambda_{x} \cN(\dyad{\phi_{y^{\star}}^{x}}] \ , 
    \end{align}
    where $\ket{\phi^{x}_{y^{\star}}} \coloneq \text{argmin}_{y} \Tr[\Lambda_{x}\dyad{\phi^{x}_{y}}]$, which justifies the inequality. Thus, one can always decrease the value by replacing $\rho_{x}$ with one of the pure states in its spectral decomposition, and so the restriction to pure states is without loss of generality. This establishes~\eqref{eq:alpha-n-to-n}. 
    
    Finally, we will establish that $\alpha(\cN) = \alpha_{I}^{d_{B}^{2} \to d_{B}^{2}}(\cN)$. We do this by a common convexity argument for bounding the cardinality of the necessary POVM outcome size. For clarity, we provide the argument in full. First, recall that a POVM on a space $B$, $(\Lambda_{x})_{x \in [n]}$, is extremal if it cannot be expressed as a convex combination of other POVMs on that space, in the sense that it cannot be expressed as $\Lambda_{x} = \sum_{y} p(y)\Lambda_{x}^{y}$ where $p$ is a non-degenerate probability distribution and $((\Lambda_{x}^{y})_{x})_{y}$ are POVMs. Second, recalling Minkowski's theorem~\cite[Theorem 1.10]{Watrous-Book}, and noting that the set of POVMs on (finite-dimensional) $B$ is compact, all POVMs may be expressed as convex combinations of extremal POVMs; i.e., $(\Lambda_{x})_x$ is such that $\Lambda_{x} = \sum_{y} p(y)\Lambda^{y}_{x}$ where $p$ is a distribution, and for each $y$,  the POVM $(\Lambda^{y}_{x})_{x}$  
    is an extremal POVM on $B$. Finally, recall that an extremal POVM on $B$ has at most $d_{B}^{2}$ non-zero outcomes~\cite[Corollary~2.48]{Watrous-Book}. It follows that for all $n \geq d_{B}^{2}$, the inequality $\alpha_{I}^{n \to n} \geq \alpha_{I}^{d_{B}^{2} \to d_{B}^{2}}$ holds,  following from linearity in the choice of POVM. That is, for all $n \geq d_{B}^{2}$, every set $(\ket{\psi_{x}})_{x \in [n]}$ and (possibly non-extremal) POVM $(\Lambda_{x})_{x \in [n]}$ such that $\Lambda_{x} \neq 0$ for all $x \in [n]$,
    \begin{align}
        \sum_{x} \Tr[\Lambda_{x}\cN(\dyad{\psi_{x}})] & = \sum_{x} \Tr\!\left[\sum_{y}p(y)\Lambda^{y}_{x}\cN(\dyad{\psi_{x}})\right] \\
        & = \sum_{y}p(y) \sum_{x}\Tr[\Lambda_{x}^{y}\cN(\dyad{\psi_{x}})] \\
        & \geq \min_{y} \sum_{x}\Tr[\Lambda_{x}^{y}\cN(\dyad{\psi_{x}})] \\
        & \geq \alpha_{I}^{d_{B}^{2} \to d_{B}^{2}}(\cN) \ , 
    \end{align}
    where we used the definition of $\alpha_{I}^{n \to n}$ in the final inequality and that an extremal POVM $\left\{\Lambda^{y}_{x}: \Lambda^{y}_{x} \neq 0\right\}$ with the $\{\ket{\psi_{x}}: \Lambda_{x}^{y} \neq 0\}$ is always feasible for $\alpha_{I}^{d_{B}^{2} \to d_{B}^{2}}$. As $\alpha_{I}(\cN) = \inf_{n} \alpha_{I}^{n \to n}(\cN)$ and $\alpha_{I}^{n \to n}(\cN)$ is non-increasing in $n$, this completes the proof.
\end{proof}

\begin{proposition}
\label{prop:induced-Doeblin}
    Given a quantum channel $\cN_{A \to B}$, the induced Doeblin coefficient $\alpha_{I}(\cN)$ from~\eqref{eq:induced-doeblin} is equal to both of the following conic optimization programs:
    \begin{center}
    \begin{miniproblem}{0.5}
      \emph{Primal problem}\\[-7mm]
      \begin{equation}
      \begin{aligned}\label{eqn:inducedDoeblinPrimal}
        \text{maximize:}\quad & \tr[X] \\
        \text{subject to:}\quad & \Gamma^\cN- {I}_{A} \otimes X \in \Sep^{\ast}(A:B)  \\
        & X\in \Herm(B) \ .
      \end{aligned}
      \end{equation}
    \end{miniproblem}
    \hspace*{2mm}
    \begin{miniproblem}{0.4}
      \emph{Dual problem}\\[-7mm]
      \begin{equation}
      \begin{aligned}\label{eqn:inducedDoeblinDual}
        \text{minimize:}\quad & \Tr[Y\Gamma^{\cN}] \\
        \text{subject to:}\quad & \tr_{A}[Y] = I_{B} \\
        & Y \in \Sep(A:B)
      \end{aligned}
      \end{equation}
    \end{miniproblem}
 \end{center}
\end{proposition}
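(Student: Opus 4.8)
The plan is to show that the induced Doeblin coefficient $\alpha_I(\cN)$ equals the primal conic program in~\eqref{eqn:inducedDoeblinPrimal}, and then obtain the dual~\eqref{eqn:inducedDoeblinDual} by invoking conic duality with the separable cone. For the first part, I would start from the expression $\alpha_I(\cN) = \alpha_I^{d_B^2 \to d_B^2}(\cN) = \inf \sum_x \Tr[\Lambda_x \cN(\dyad{\psi_x})]$ established in \cref{prop:induced-Doeblin-structure}, and then argue that the optimization over pure-state ensembles $(\psi_x)_x$ together with POVMs $(\Lambda_x)_x$ can be ``opened up'' into the conic form. The key observation is that, given a candidate $X \in \Herm(B)$, the constraint $\Gamma^\cN - I_A \otimes X \in \Sep^*(A:B)$ is, by the block-positivity characterization~\eqref{eq:block-positive-relation} and the fact that $\Gamma^{\cN}$ is the Choi operator, equivalent to $\cN(\dyad{\psi}) \geq \langle \psi | \psi \rangle X = X$ for all unit vectors $\ket{\psi}$ — i.e.\ equivalent to $\cR^X \leq_P \cN$. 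Here I would use that $\bra{\alpha}_A \bra{\beta}_B \Gamma^{\cN}_{AB} \ket{\alpha}_A \ket{\beta}_B = \bra{\beta} \cN(\overline{\dyad{\alpha}}) \ket{\beta}$ (up to conjugation conventions in the Choi operator definition~\eqref{eq:choi_operator}), and that $\bra{\alpha}_A \bra{\beta}_B (I_A \otimes X) \ket{\alpha}_A \ket{\beta}_B = \langle\alpha|\alpha\rangle \bra{\beta}X\ket{\beta}$. This simultaneously shows that~\eqref{eqn:inducedDoeblinPrimal} coincides with the already-known expression~\eqref{eq:alpha-induced-def-pos-cone} for $\alpha_I(\cN)$ from \cref{def:alt-doeblin-quantum-induced}; so really I am proving the chain: induced Doeblin (operational) $=$ positive-map program~\eqref{eq:alpha-induced-def-pos-cone} $=$ primal conic program~\eqref{eqn:inducedDoeblinPrimal}.

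For the equivalence of the operational quantity with the positive-cone program, I would follow the complementary-slackness / minimum-likelihood-decoder logic already used classically in~\eqref{eq:doeblin-dual-classical}--\eqref{eq:minimum-likelihood-decoder}, lifted to the quantum setting via the structure result. Concretely: for the ``$\leq$'' direction, given $X$ feasible for~\eqref{eqn:inducedDoeblinPrimal}, for any pure-state ensemble and POVM we have $\sum_x \Tr[\Lambda_x \cN(\psi_x)] \geq \sum_x \Tr[\Lambda_x X] = \Tr[X]$ since $\sum_x \Lambda_x = I$; taking the infimum over strategies gives $\alpha_I^{d_B^2 \to d_B^2}(\cN) \geq \Tr[X]$, hence $\alpha_I(\cN) \geq$ the primal value. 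For the ``$\geq$'' direction, I would take an optimal (or near-optimal) $X$ and show it is feasible, or conversely take an optimal strategy and extract a feasible $X$; the cleanest route is to note that $X^\star := \inf_{\ket\psi} \cN(\dyad\psi)$ need not be well-defined as an operator infimum, so instead I would appeal directly to conic strong duality: the program~\eqref{eqn:inducedDoeblinPrimal} is a cone program over $\Sep^*(A:B)$ whose dual is a cone program over $\Sep(A:B)$ (since $(\Sep^*)^* = \Sep$, the separable cone being closed by~\cite[Proposition 6.8]{Watrous-Book}), and the dual is precisely~\eqref{eqn:inducedDoeblinDual}. Then I would recognize that the dual~\eqref{eqn:inducedDoeblinDual} is exactly the SDP-like expression one gets by dualizing the operational definition: a separable $Y_{AB} = \sum_x Q_x \otimes R_x$ with $\Tr_A[Y] = I_B$ encodes, after normalization, a pure-state ensemble and POVM pair, and $\Tr[Y \Gamma^{\cN}] = \sum_x \Tr[R_x \cN(\overline{Q_x})]$ reproduces $\sum_x \Tr[\Lambda_x \cN(\psi_x)]$ up to rescaling — matching the formula in \cref{prop:induced-Doeblin-structure}. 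So I would prove $\alpha_I(\cN)$ equals the dual program value by a direct bijection between feasible separable operators and exclusion strategies, and then get the primal equality for free from strong duality.

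To close strong duality I would verify Slater's condition from the Proposition on conic strong duality: on the dual side, $Y = \frac{1}{d_A} I_A \otimes I_B$ is separable, strictly positive, and satisfies $\Tr_A[Y] = I_B$, but I need it in the \emph{relative interior} of the separable cone intersected with the affine constraint — here I would use that $\frac{1}{d_A} I_{AB}$ is in $\operatorname{Relint}(\Sep(A:B))$ (any full-rank product-of-maximally-mixed state is interior to the separable cone) and that the constraint $\Tr_A[Y]=I_B$ is satisfied, so item~1 of the Slater proposition applies and gives $\alpha = \beta$ with the primal optimum attained. Alternatively, on the primal side, $X = -c I_B$ for large $c > 0$ makes $\Gamma^\cN - I_A \otimes X = \Gamma^\cN + c I_{AB}$ strictly positive hence in $\operatorname{Relint}(\Sep^*(A:B))$ (block-positive operators with strictly positive ``block values''), giving item~2. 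I expect the main obstacle to be bookkeeping: matching the normalization factors and the complex-conjugation/transpose conventions between the Choi operator in~\eqref{eq:choi_operator}, the block-positivity relation~\eqref{eq:block-positive-relation}, and the pure-state exclusion formula, and making sure the $d_B^2$ cardinality bound from \cref{prop:induced-Doeblin-structure} lines up with the fact that separable decompositions over $A:B$ may a priori need more than $d_B^2$ terms (they don't hurt, since more terms only give a feasible point for a larger $n$, which is covered by the monotonicity in $n$ already proved). The conceptual content — positive map $\iff$ block-positive Choi, and $\Sep^{**} = \Sep$ — is all available from the preliminaries (\cref{prop:positive-to-block-positive} and the separable-cone subsection), so the argument should go through cleanly once the conventions are pinned down.
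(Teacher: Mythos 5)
Your proposal is correct and follows essentially the same route as the paper: the structure result reducing to $d_B^2$ pure states and POVM elements, the bijection $Y = \sum_x \rho_x^T \otimes \Lambda_x$ between exclusion strategies and separable operators satisfying $\Tr_A[Y] = I_B$ (which identifies $\alpha_I(\cN)$ with the dual program), and conic duality over $\Sep(A:B)$ to recover the primal. The only difference is that you make explicit the Slater verification (via $\tfrac{1}{d_A}I_{AB} \in \operatorname{Relint}(\Sep(A:B))$) and the block-positivity equivalence with the positive-map program, both of which the paper defers to its surrounding discussion of the cone-restricted programs and to \cref{prop:equality-of-induced-doeblins}.
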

\begin{proof}
   The proof is nearly identical to that of~\cite[Proposition 2]{chitambar2022communication}, but we provide it for completeness. By \cref{prop:induced-Doeblin-structure}, letting $n = d_{B}^{2}$, we conclude that
   \begin{align}
    \alpha_{I}(\cN)= &\min_{\substack{(\Lambda_{x})_{x \in [n]} \in \operatorname{POVM} \\ (\rho_{x})_{x \in [n]}}}  \sum_{x \in [n]} \tr[\Lambda_{x} \cN(\rho_{x})]  \\
      &= \min_{\substack{(\Lambda_{x})_{x \in [n]} \in \operatorname{POVM} \\ (\rho_{x})_{x \in [n]}}} \sum_{x \in [n]} \tr[(\rho_{x}^{T} \otimes \Lambda_{x}) \Gamma^{\cN}] \ ,
   \end{align}
   where we used the action of the channel in terms of its Choi operator. Note that the operator $Y \coloneq \sum_{x} \rho^{T}_{x} \otimes \Lambda_{x} \geq 0$ is separable and satisfies $\Tr_{A}[Y] = \sum_{x} \Lambda_{x} = I_{B}$. Similarly, if $Y$ is separable, then there exists a finite alphabet $\cY$ such that $Y = \sum_{y} \dyad{\psi_{y}} \otimes P_{y}$ where $P_{y} \geq 0$ for all $y \in \cY$. Moreover, $\Tr_{A}[Y] = I_{B}$ implies that $\sum_{y} P_{y} = I_{B}$. Thus, this defines a feasible strategy. This derives what we labeled the dual problem. This is already in the form of \eqref{eqn:cone-dual} where we let $\Phi \coloneq \Tr_{A}$, so we obtain the primal problem immediately by recalling the adjoint map of $\Tr_{A}$ consists of tensoring the identity on the $A$ space.
\end{proof}

With this cone program established, we are able to show that the induced Doeblin coefficient of this section is indeed the same as the quantity defined in \cref{def:alt-doeblin-quantum-induced}.
\begin{proposition}
\label{prop:equality-of-induced-doeblins}
    The quantities denoted by $\alpha_{I}(\cN)$ in Definitions~\ref{def:alt-doeblin-quantum-induced} and \ref{def:induced-doeblin-coefficient} are equal.
\end{proposition}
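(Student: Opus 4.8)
The plan is to connect the two definitions through the primal conic program already established in \cref{prop:induced-Doeblin}. By that proposition, the quantity $\alpha_I(\cN)$ of \cref{def:induced-doeblin-coefficient} equals the optimal value of the primal problem in \eqref{eqn:inducedDoeblinPrimal}, namely $\sup_{X \in \operatorname{Herm}(B)}\{\Tr[X] : \Gamma^{\cN} - I_A \otimes X \in \Sep^{\ast}(A:B)\}$. So it suffices to show that this feasible set coincides with the one appearing in \cref{def:alt-doeblin-quantum-induced}, i.e., that for a Hermitian $X_B$ the condition $\cN - \cR^X \in \operatorname{Pos}$ is equivalent to $\Gamma^{\cN} - I_A \otimes X \in \Sep^{\ast}(A:B)$, since the objective $\Tr[X]$ is the same in both cases.

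First I would compute the Choi operator of the replacer map $\cR^X(\cdot) = \Tr[\cdot]\,X$. Using \eqref{eq:choi_operator}, $\Gamma^{\cR^X}_{AB} = \sum_{i,j} |i\rangle\!\langle j|_A \otimes \Tr[|i\rangle\!\langle j|_{A'}]\,X_B = \sum_i |i\rangle\!\langle i|_A \otimes X_B = I_A \otimes X_B$. Since the Choi map is linear, $\Gamma^{\cN - \cR^X}_{AB} = \Gamma^{\cN}_{AB} - I_A \otimes X_B$. Then I would invoke \cref{prop:positive-to-block-positive}, which states that a linear map $\Phi_{A \to B}$ is positive if and only if $\Gamma^{\Phi}_{AB} \in \Sep^{\ast}(A:B)$. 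Applying this to $\Phi = \cN - \cR^X$ gives exactly the desired equivalence: $\cN - \cR^X$ is a positive map if and only if $\Gamma^{\cN}_{AB} - I_A \otimes X_B \in \Sep^{\ast}(A:B)$.

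Combining these observations, the two optimization problems — one defining $\alpha_I(\cN)$ in \cref{def:alt-doeblin-quantum-induced} and the primal in \eqref{eqn:inducedDoeblinPrimal} — have identical feasible sets and identical objective functions, hence identical optimal values. Since the latter equals the $\alpha_I(\cN)$ of \cref{def:induced-doeblin-coefficient} by \cref{prop:induced-Doeblin}, the two quantities are equal, which is the claim.

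There is essentially no serious obstacle here; the argument is a short chaining of \cref{prop:induced-Doeblin} and \cref{prop:positive-to-block-positive}. The only point requiring a moment of care is the elementary computation $\Gamma^{\cR^X}_{AB} = I_A \otimes X_B$, which makes the constraint in \eqref{eqn:inducedDoeblinPrimal} match the map-level positivity constraint in \cref{def:alt-doeblin-quantum-induced}; everything else is a direct substitution.
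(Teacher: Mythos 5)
Your proposal is correct and follows essentially the same route as the paper: both reduce the claim to matching the feasible sets of \cref{def:alt-doeblin-quantum-induced} and the primal in \eqref{eqn:inducedDoeblinPrimal}, via the identity $\Gamma^{\cR^X}_{AB} = I_A \otimes X_B$ together with \cref{prop:positive-to-block-positive}. The only difference is that you spell out the Choi computation for the replacer map, which the paper leaves implicit.
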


\begin{proof}
   As the objective functions of \cref{def:alt-doeblin-quantum-induced} and the primal problem of \cref{prop:induced-Doeblin} are the same, it suffices to show that the constraints of these optimization programs are equivalent. We have the following equivalences:
   \begin{align}
         \cN - \cR^{X} \in \Pos \quad 
       \Leftrightarrow \quad & \; \Gamma^{\cN - \cR^{X}} \in \Sep^{\ast}(A:B) \\
       \Leftrightarrow \quad & \; \Gamma^{\cN} - I_{A} \otimes X \in \Sep^{\ast}(A:B) \ , 
   \end{align}
   where the first equivalence is given in \cref{prop:positive-to-block-positive} and the second equivalence follows from the definition of the Choi operator and a replacer map. As in both cases, $X \in \Herm(A)$, this shows that the constraints are equivalent and completes the proof.
\end{proof}

Finally, following~\cite{george2024cone}, by replacing the separable cone with the positive cone $\cK \subset \Pos\!\left(A \otimes B\right)$, we define the cone-restricted Doeblin coefficient $\alpha_{I,\cK}(\cN)$ as follows:
 \begin{center}
    \begin{miniproblem}{0.5}
      \emph{Primal problem}\\[-7mm]
      \begin{equation}
      \begin{aligned}\label{eqn:inducedDoeblinPrimal_cone}
        \text{maximize:}\quad & \tr[X] \\
        \text{subject to:}\quad & \Gamma^\cN- {I}_{A} \otimes X \in \cK^{\ast}(A:B)  \\
        & X\in \Herm(B) \ .
      \end{aligned}
      \end{equation}
    \end{miniproblem}
    \hspace*{2mm}
    \begin{miniproblem}{0.4}
      \emph{Dual problem}\\[-7mm]
      \begin{equation}
      \begin{aligned}\label{eqn:inducedDoeblinDual_cone}
        \text{minimize:}\quad & \Tr[Y \Gamma^\cN ] \\
        \text{subject to:}\quad & \tr_{A}[Y] = {I}_{B} \\
        & Y \in \cK
      \end{aligned}
      \end{equation}
    \end{miniproblem}
 \end{center}
We note here that we need extra structure to guarantee strong duality in this case. A sufficient condition is for $I_{AB}$ to be in the relative interior of the cone $\cK$ in order to apply Slater's criterion for strong duality. This sufficient condition will be used later to work with $\alpha_{I,\cK}$.

We end this section by stressing that all these quantities studied above recover the classical Doeblin coefficient for classical channels, which justifies identifying them as quantum Doeblin coefficients.
\begin{proposition}\label{prop:recovers-classical-doeblin}
    Let $\cW_{Y \vert X}$ be a classical-to-classical channel. Then for $\Sep(A:B) \subseteq \cK \subset \Pos(A \otimes B)$,
    \begin{align}
        \alpha_{\operatorname{cl}}(\cW) = \alpha_{I,\cK}(\cW) = \alpha(\cW) = \alpha_{\wang}(\cW) \ , 
    \end{align}
    where $\alpha_{\operatorname{cl}}$ denotes the classical Doeblin coefficient.
\end{proposition}
\begin{proof}
    The key point is that $\cW$ is invariant under dephasing, and all the conic constraints we consider are preserved under this action, so we can reduce back to the classical case. There are various forms of such a proof. For concreteness, we provide one such proof for $\alpha_{I} = \alpha_{I,\Sep}$. The proofs for $\alpha$ and $\alpha_{\wang}$ are similar, and then the cones listed in the proposition statement take the claimed value, as they are bounded from below and above by $\alpha_{I}(\cW)$ and $\alpha_{\wang}(\cW)$, respectively.

    Let $X$ be an optimizer for $\alpha_{I}(\cW)$. First we show, without loss of generality, that it is diagonal in the computational basis. By Proposition \ref{prop:positive-to-block-positive}, $\cW - \cR^{X}$ is a positive map. Letting $\Delta_{Y}$ denote the completely dephasing map, we have that $\Delta_{Y} \circ (\cW - \cR^{X}) = \cW - \cR^{\Delta_{Y}(X)}$ is positive. Moreover $\Tr[X] = \Tr[\Delta_{Y}(X)]$. Thus, without loss of generality, an optimizer $X$ of $\alpha_{I}(\cW)$ is diagonal. As $X$ is Hermitian, its diagonal entries are all real. Thus, expressing the $y^{\mathrm{th}}$ diagonal entry as $r(y)$, we are interested in $ \Gamma^{\cW} - I \otimes X = \sum_{x,y} (W(y\vert x)-r(y))\dyad{x} \otimes \dyad{y}$. By \eqref{eq:block-positive-relation}, $\Sep^{\ast}(X:Y) \ni \Gamma^{\cW}-I \otimes X$ if and only if $r(y) \leq W(y \vert x)$ for all $x$ and $y$. Noting that $\Tr[X] = \sum_{y} r(y)$ and $W(y \vert x) \geq 0$, without loss of generality $r(y) \geq 0$ for all $y$. Putting these points together,
    \begin{align}
        \alpha_{I}(\cW) = \max_{r\in\mathbb{R}_{\geq0}^{\left\vert \mathcal{Y}
\right\vert }}\left\{  \sum_{y\in\mathcal{Y}}r(y):r(y)\leq W
(y|x)\ \forall x\in\mathcal{X},y\in\mathcal{Y}\right\} = \alpha_{\operatorname{cl}}(\cW) \ , 
    \end{align}
    where the final equality follows from~\eqref{eq:doeblin-classical}.
\end{proof}

\section{Interpretations of the Quantum Doeblin Coefficient}

\label{sec:doeblin-interpretations}

In the following subsections, we establish various interpretations of and alternate expressions for the quantum Doeblin coefficient $\alpha(\mathcal{N})$, as summarized in \cref{table:doeblin-expressions}. After that, we do the same for the induced Doeblin coefficient $\alpha_I(\mathcal{N})$, and note that these latter interpretations are summarized in \cref{table:induced-doeblin-expressions}.

\subsection{Minimum Achievable Singlet Fraction}

 In this section, we show that the quantum Doeblin coefficient in~\eqref{eq:alpha_cN_def} is proportional to the minimum achievable singlet fraction, which is a counterpart of the maximum achievable singlet fraction considered in~\cite{konig2009operational}. Alternatively, due to the connection with a task known as quantum state exclusion~\cite{bandyopadhyay2014conclusive}, this task could also be called fully quantum exclusion.
 
\begin{theorem} \label{thm:doeblin_antiDistinguish}
The quantum Doeblin coefficient of a channel $\mathcal{N}$ is proportional to the minimum
achievable singlet fraction:
\begin{align}
\alpha(\mathcal{N})  &  =d^{2}\inf_{\mathcal{D}\in\operatorname{CPTP}
}\operatorname{Tr}[\Phi_{d}(\operatorname{id}\otimes(\mathcal{D}
\circ\mathcal{N}))(\Phi_{d})]\\
&  =d^{2}\inf_{\mathcal{D}\in\operatorname{CPTP}}F(\Phi_{d},(\operatorname{id}\otimes(\mathcal{D}
\circ\mathcal{N}))(\Phi_{d})), \label{eq:MASF}
\end{align}
where $F(\omega,\tau)\coloneqq \left\Vert \sqrt{\omega}\sqrt{\tau}\right\Vert
_{1}^{2}$, $\Phi_d$ is defined in~\eqref{eq:max_entangled}, and $d$ is the dimension of the input system of $\mathcal{N}$.
\end{theorem}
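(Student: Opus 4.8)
The plan is to unwind the definitions of both sides and show the semidefinite program in \eqref{eq:alpha_cN_def} coincides with the optimization on the right. First I would establish the second equality in the statement, namely that $\operatorname{Tr}[\Phi_d \, \omega] = F(\Phi_d, \omega)$ whenever $\omega$ is a state on the same bipartite system. This is a standard fact: since $\Phi_d$ is pure, $F(\Phi_d,\omega) = \langle \Phi_d | \omega | \Phi_d\rangle = \operatorname{Tr}[\Phi_d \omega]$, using $F(\psi,\omega) = \langle\psi|\omega|\psi\rangle$ for pure $\psi$. Applying this with $\omega = (\operatorname{id}\otimes(\mathcal{D}\circ\mathcal{N}))(\Phi_d)$ reduces the claim to proving only the first equality.

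For the first equality, the key is to rewrite $\operatorname{Tr}[\Phi_d \,(\operatorname{id}\otimes(\mathcal{D}\circ\mathcal{N}))(\Phi_d)]$ in terms of the Choi operator $\Gamma^{\mathcal{N}}_{AB}$. Writing $\mathcal{M} \coloneqq \mathcal{D}\circ\mathcal{N}$, one has $(\operatorname{id}\otimes\mathcal{M})(\Phi_d) = \frac{1}{d}\Gamma^{\mathcal{M}}$ (with a suitable transpose/relabeling convention), and $\operatorname{Tr}[\Phi_d (\operatorname{id}\otimes\mathcal{M})(\Phi_d)] = \frac{1}{d^2}\operatorname{Tr}[(\Gamma^{\Phi})^{?}\,\Gamma^{\mathcal{M}}]$; more cleanly, I would use the identity $\operatorname{Tr}[\Phi_d (\operatorname{id}\otimes\mathcal{M})(\Phi_d)] = \frac{1}{d^2}\sum_{i,j}\langle i|\mathcal{M}(|i\rangle\!\langle j|)|j\rangle = \frac{1}{d^2}\operatorname{Tr}[\mathcal{M}(I)\cdot(\text{something})]$, and match it to $\frac{1}{d^2}\langle\Phi_d|_{A B}\Gamma^{\mathcal{M}}_{AB}|\Phi_d\rangle$. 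The upshot should be that $d^2 \inf_{\mathcal{D}\in\operatorname{CPTP}}\operatorname{Tr}[\Phi_d(\operatorname{id}\otimes(\mathcal{D}\circ\mathcal{N}))(\Phi_d)]$ equals an optimization of a linear functional of the Choi operator $\Gamma^{\mathcal{D}}$ of the decoder, subject to $\mathcal{D}$ being a channel (i.e. $\Gamma^{\mathcal{D}}\geq 0$, $\operatorname{Tr}_{\text{out}}[\Gamma^{\mathcal{D}}] = I$). This is a semidefinite program in $\Gamma^{\mathcal{D}}$.

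I would then take the SDP dual of this decoder optimization and show it matches the dual SDP for $\alpha(\mathcal{N})$ given in \cref{prop:dual-SDP-doeblin-main}, namely $\inf_{Y_{AB}\geq 0}\{\operatorname{Tr}[Y_{AB}\Gamma^{\mathcal{N}}_{AB}] : \operatorname{Tr}_A[Y_{AB}] = I_B\}$; alternatively I could try to match primals directly. Either way, the Lagrange multiplier for the constraint $\operatorname{Tr}_{\text{out}}[\Gamma^{\mathcal{D}}]=I$ becomes the operator $X_B$ (or $Y_{AB}$), and strong duality on both sides (Slater's condition, which holds as noted in the proof of \cref{prop:dual-SDP-doeblin-main}) closes the loop. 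The main obstacle I anticipate is purely bookkeeping: getting the transpose conventions, the factors of $d$, and the identification of which system plays the role of $A$ versus $B$ in $\Gamma^{\mathcal{D}}\circ\Gamma^{\mathcal{N}}$ consistent, so that the decoder SDP lands exactly on the known dual form for $\alpha(\mathcal{N})$ rather than a reflected or rescaled version of it. Once the Choi-operator algebra is set up carefully, the duality step is routine.
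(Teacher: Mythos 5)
Your proposal is correct and follows essentially the same route as the paper: both reduce the singlet-fraction expression to the dual SDP $\inf_{Y\geq 0}\{\operatorname{Tr}[Y\Gamma^{\mathcal{N}}]:\operatorname{Tr}_A[Y]=I_B\}$ by recognizing feasible $Y$ as Choi operators of completely positive unital maps, and handle the fidelity equality via purity of $\Phi_d$. The one piece of "bookkeeping" you defer that is actually the conceptual hinge is the Hilbert--Schmidt adjoint step, $\operatorname{Tr}[\Gamma^{\mathcal{M}}\Gamma^{\mathcal{N}}]=d^2\operatorname{Tr}[\Phi_d(\operatorname{id}\otimes(\mathcal{M}^{\dagger}\circ\mathcal{N}))(\Phi_d)]$ together with the fact that the adjoint of a CP unital map is CPTP, which lets you match the dual feasible set to the set of decoders directly without the second dualization you propose as your primary plan.
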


\begin{proof}
Recall from~\eqref{eq:alpha-cN-dual} that the Doeblin coefficient can be expressed as 
\begin{equation}
\alpha(\mathcal{N})=\inf_{Y_{AB}\geq0}\left\{  \operatorname{Tr}[Y_{AB}\Gamma
^{\mathcal{N}}_{AB}]:I=\operatorname{Tr}_{A}[Y_{AB}]\right\}.
\end{equation}
The conditions $Y_{AB}\geq0$ and $I=\operatorname{Tr}_{A}[Y_{AB}]$ imply that $Y_{AB}$ is the
Choi operator of a completely positive unital (CPU)\ map from $A$ to $B$. Thus,
\begin{align}
&  \inf_{Y_{AB}\geq0}\left\{  \operatorname{Tr}[Y_{AB}\Gamma^{\mathcal{N}}_{AB}
]:I=\operatorname{Tr}_{A}[Y_{AB}]\right\} \nonumber\\
&  =\inf_{\mathcal{M}\in\operatorname{CPU}}\operatorname{Tr}[\Gamma
^{\mathcal{M}}_{AB}\Gamma^{\mathcal{N}}_{AB}]\label{eq:min-singlet-frac-proof-1}\\
&  =d^{2}\inf_{\mathcal{M}\in\operatorname{CPU}}\operatorname{Tr}
[(\operatorname{id}\otimes\mathcal{M})(\Phi_{d})(\operatorname{id}
\otimes\mathcal{N})(\Phi_{d})]\\
&  =d^{2}\inf_{\mathcal{M}\in\operatorname{CPU}}\operatorname{Tr}[\Phi
_{d}(\operatorname{id}\otimes(\mathcal{M}^{\dag}\circ\mathcal{N}))(\Phi
_{d})]\\
&  =d^{2}\inf_{\mathcal{D}\in\operatorname{CPTP}}\operatorname{Tr}[\Phi
_{d}(\operatorname{id}\otimes(\mathcal{D}\circ\mathcal{N}))(\Phi_{d})] \\
& =d^{2}\inf_{\mathcal{D}\in\operatorname{CPTP}}F(\Phi_{d},(\mathcal{D}
\circ\mathcal{N})(\Phi_{d})),
\label{eq:min-singlet-frac-proof-last}
\end{align}
where in the penultimate equality we used the fact that the Hilbert--Schmidt adjoint of a
unital map is a trace-preserving map.
\end{proof}

\begin{remark}
    By taking the decoding channel $\cD$ in~\eqref{eq:MASF} to be completely depolarizing, a direct calculation verifies that the value is equal to one. Thus one may use the above to see that $\alpha(\cN) \leq 1$ via this interpretation.
\end{remark}

Note that the analysis accommodates channels with a different input and output dimension (i.e., it is not required that $d_A = d_B$).

As indicated previously, the quantity on the right side of~\eqref{eq:MASF} is a counterpart of \textquotedblleft
quantum correlation\textquotedblright\ or \textquotedblleft maximum achievable
singlet fraction\textquotedblright\ considered in~\cite[Eq.~(20)]{konig2009operational}. As such, 
it is thus reasonable to call it the \textquotedblleft minimum achievable singlet
fraction,\textquotedblright as we have done.

\begin{remark}[Fully Quantum Counterpart of Exclusion]
    If the channel $\mathcal{N}$ is a classical--quantum channel of the
form
\begin{equation}
\mathcal{N}(\omega)=\sum_{x}\langle x|\omega|x\rangle\rho_{x},
\end{equation}
then after some direct calculations we obtain that the right-hand side of~\eqref{eq:MASF} becomes
\begin{equation}
\inf_{\mathcal{D}\in\operatorname{CPTP}}F(\Phi_{d},(\operatorname{id}\otimes (\mathcal{D}\circ
\mathcal{N}))(\Phi_{d}))=\frac{1}{d^{2}}\inf_{\left(  \Lambda_{x}\right)
_{x}}\sum_{x}\operatorname{Tr}[\Lambda_{x}\rho_{x}],
\end{equation}
where $\left(  \Lambda_{x}\right)
_{x}$ is a POVM,
so that indeed we have recovered a fully quantum counterpart of the
exclusion task, as considered in~\cite{bandyopadhyay2014conclusive}.
\end{remark}

\subsection{Exclusion Value of a Quantum Channel}

The induced Doeblin coefficient of a quantum channel and its cone-restricted variants have similar mathematical structure to the communication value of a quantum channel~\cite{chitambar2022communication} and its cone-restricted variants~\cite{george2024cone}, respectively. Thus, one may expect that there is an operational interpretation for the quantum Doeblin coefficient analogous to the communication value. Indeed, the communication value of a quantum channel $\cN$ may be expressed in terms of the minimal error probability of state discrimination over the quantum channel $\cN$, which is one reason that it is a measure of the ability to communicate over the channel $\cN$. It also generalizes to entanglement-assisted variations by using the cone-restricted variant~\cite{george2024cone}.

In this section, we relate the quantum Doeblin coefficient to performing quantum state exclusion (QSE) over the channel. We do this by first observing that the induced Doeblin coefficient can be expressed in terms of the error probability of QSE, then extending generalizing this to the entanglement-assisted setting for cone-restricted Doeblin coefficients (\cref{thm:game-interpretation}). In particular, this proves that $\alpha(\cN)$ is the optimal classical Doeblin coefficient taken over all classical channels achievable using $\cN$ and arbitrary entanglement assistance.

Recalling the definition of error probability for QSE (\cref{def:state-exclusion-err-prob}), we observe that the induced Doeblin coefficient may be expressed as an infimization involving quantum state exclusion.
\begin{proposition}\label{prop:induced-Doeblin-QSE-game}
    The induced Doeblin coefficient optimizes a rescaled version of  state exclusion (with uniform prior) over all possible ensembles:
    \begin{align}
        \alpha_{I}(\cN) = \inf_{n \in \mbb{N}} n \cdot \inf_{(\rho^{x})_{x \in [n]}} P_{\operatorname{err}}\!\left((n^{-1},\cN(\rho_{x}))_x\right) \ . 
    \end{align}
\end{proposition}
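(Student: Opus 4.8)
The plan is to unwind both sides of the claimed identity to the same expression by direct substitution of definitions. Starting from \cref{def:induced-doeblin-coefficient}, we have
\[
\alpha_{I}(\cN) = \inf_{n,n' \in \mbb{N}} \; \min_{\substack{(\Lambda_{y})_{y \in [n']} \in \operatorname{POVM} \\ (\rho_{x})_{x \in [n]}}} \; \sum_{y \in [n']} \min_{x \in [n]} \Tr[\Lambda_{y}\cN(\rho_{x})] \, .
\]
The first step is to invoke \cref{prop:induced-Doeblin-structure}, which tells us that the infimum over $n,n'$ is achieved with $n = n'$ (indeed with $n = n' = d_{B}^{2}$), so that the double minimization over $(\Lambda_y)$ and $(\rho_x)$ collapses: there is no gain in having more outputs than inputs, and an optimal decoding function $f$ may be taken to be a bijection. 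After the relabeling argument in that proposition, one has
\[
\alpha_{I}(\cN) = \inf_{n \in \mbb{N}} \; \inf_{\substack{(\Lambda_{x})_{x \in [n]} \in \operatorname{POVM} \\ (\rho_{x})_{x \in [n]}}} \sum_{x \in [n]} \Tr[\Lambda_{x}\cN(\rho_{x})] \, .
\]

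The second step is to recognize the inner expression as exactly $n$ times a state-exclusion error probability with uniform prior. From \cref{def:state-exclusion-err-prob}, applied to the ensemble $(n^{-1}, \cN(\rho_x))_{x \in [n]}$,
\[
P_{\operatorname{err}}\!\left((n^{-1}, \cN(\rho_x))_x\right) = \min_{(\Lambda_x)_x \in \operatorname{POVM}} \sum_{x \in [n]} \frac{1}{n} \Tr[\Lambda_x \cN(\rho_x)] \, ,
\]
so $n \cdot P_{\operatorname{err}}\!\left((n^{-1}, \cN(\rho_x))_x\right) = \min_{(\Lambda_x)_x} \sum_{x} \Tr[\Lambda_x \cN(\rho_x)]$. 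Substituting this back and also pulling the infimum over $(\rho_x)$ outside (which is legitimate since the two infima/minima are over independent index sets and the objective is jointly well-behaved) gives
\[
\alpha_{I}(\cN) = \inf_{n \in \mbb{N}} \; \inf_{(\rho^{x})_{x \in [n]}} \; n \cdot P_{\operatorname{err}}\!\left((n^{-1}, \cN(\rho_x))_x\right) \, ,
\]
which is the stated formula (the outer $\inf$ over $n$ and over $(\rho_x)$ commute freely with the factor of $n$, which depends only on $n$).

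I do not expect any serious obstacle here: the statement is essentially a bookkeeping reformulation, and the only substantive content — that the $n \to n'$ induced coefficient reduces to the $n \to n$ case with a bijective decoder, and further that pure-state ensembles suffice — has already been established in \cref{prop:induced-Doeblin-structure}. The one point requiring a line of care is confirming that, in \cref{def:state-exclusion-err-prob}, the minimization over POVMs with exactly $|\cX|$ outcomes matches the minimization appearing after the reduction in \cref{prop:induced-Doeblin-structure}; this is immediate once $n = n'$, since both are POVMs with $n$ outcomes acting on system $B$, and one may freely allow zero operators among the $\Lambda_x$ without changing the value. Thus the proof is short, amounting to chaining \cref{prop:induced-Doeblin-structure} with the definition of $P_{\operatorname{err}}$.
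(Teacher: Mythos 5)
Your proposal is correct and follows essentially the same route as the paper: invoke \cref{prop:induced-Doeblin-structure} to reduce to the $n \to n$ form with a single index set, then factor out $1/n$ to identify the inner minimization over POVMs as $P_{\operatorname{err}}$ from \cref{def:state-exclusion-err-prob}. The paper's proof is exactly this three-line chain of equalities, so there is nothing to add.
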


\begin{proof}
    By \cref{prop:induced-Doeblin-structure},
    \begin{align}
        \alpha_{I}(\cN) &= \inf_{n \in \mbb{N}} \inf_{\substack{(\Lambda_{x})_{x \in [n]} \\ (\rho_{x})_{x \in [n]}}} \sum_{x \in [n]} \Tr[\Lambda_{x} \cN(\rho_{x})] \\
        &= \inf_{n \in \mbb{N}}  n \inf_{(\rho_{x})_{x \in [n]}} \inf_{(\Lambda_{x})_{x \in [n]}} \frac{1}{n} \sum_{x \in [n]} \Tr[\Lambda_{x} \cN(\rho_{x})] \\
        &= \inf_{n \in \mbb{N}}  n \inf_{(\rho_{x})_{x \in [n]}} P_{\operatorname{err}}\!\left(\left(\frac{1}{n}, \cN(\rho_{x})\right)_x\right) \ ,
     \end{align} 
     concluding the proof.
\end{proof}

\begin{figure}
    \begin{center}
    \begin{tikzpicture}
        \tikzstyle{porte} = [draw=black!50, fill=black!20]
        \draw
            (0,0) node (EA) {$\ket{\varphi}_{A'B'}$}
            ++(-0.2,-1) node (input) {$x \sim \cX$}
            ++(2.2,0.2) node[porte, minimum size =1cm] (Encoder) {$\cE^{x}_{A' \to A}$}
            ++(2,0) node[porte, minimum height = 1cm, minimum width = 1.5cm] (Channel) {$\cN_{A \to B}$}
            ++(2.2,0.65) node[porte,minimum height = 2.3cm, minimum width = 1cm] (Decoder) {$\left(\Lambda_{BB'}^{\hat{x}}\right)_{\hat{x}}$}
            ++(1.5,0) node (output) {$\hat{x}$}
            ;
            \path[draw=black, -] (0.4,-0.25) -- (0.5,-0.6)
            ;
            \path[draw=black, ->] (0.5,-0.6) -- (1.34,-0.6);
            \path[draw=black, ->] (0.4,-1) -- (1.34,-1);
            \path[draw=black, ->] (Encoder) -- (Channel)
            ;
            \path[draw=black, ->] (Channel) -- (5.3,-0.8);
            \path[draw=black,-] (0.4,0.2) -- (0.5,0.5);
            \path[draw=black,->] (0.5,0.5) -- (5.3,0.5);
            \path[draw=black,->] (Decoder) -- (output);
    \end{tikzpicture}
    \end{center}
    \caption{Depiction of the game that characterizes the value of $\alpha(\cN)$. One infimizes over choices of encoders and decoders, so it is minimizing the quantum state exclusion over all codes.}
    \label{fig:game-interpretation}
\end{figure}

Our goal is now to generalize the above proposition by adding entanglement assistance and calling it a `channel state exclusion game.' The game is depicted in \cref{fig:game-interpretation}.
\begin{definition}
\label{def:exclusion-game-cone}
     Given a cone $\cK(B \otimes B') \subseteq \Pos(B \otimes B')$, the channel state exclusion game (see \cref{fig:game-interpretation}) is characterized by
     \begin{align}
        \cG^{\cK}_{\operatorname{QSE}}(\cN) \coloneq \inf_{\cX,(\Lambda^{x})_x, (\cE^{x})_x, \ket{\varphi}} \sum_{x} P(x \vert x) \ , 
     \end{align}
     where the infimum is over every decoder (POVM) $(\Lambda^{x})_{x}$ such that each element is in the cone~$\cK$, every encoding channel $(\cE^{x})_x$, and
     \begin{equation}
     P(x' \vert x) \coloneq \Tr[\Lambda^{x'}_{BB'} (\cN_{A \to B} \circ \cE^{x}_{A' \to A})(\varphi_{A'B'})]    ,
     \end{equation}
     where $\varphi_{A'B'}$ is an arbitrary pure bipartite state.
 \end{definition}
 
 This is a state exclusion game as, up to a rescaling by the input dimension, it optimizes the state exclusion error probability, with uniform prior, when using the channel $\cN$. In other words, it is a generalization of \cref{prop:induced-Doeblin-QSE-game} to the setting where there is entanglement-assistance.

 \begin{theorem}
 \label{thm:game-interpretation}
 Let $\cK(A \otimes B) \subseteq \Pos(A \otimes B)$ be a closed, convex cone such that it is invariant under local CP maps on the $A$ space and $\I \in \operatorname{relint}(\cK)$. Then $\alpha_{I,\cK}(\cN) = \cG_{\operatorname{QSE}}^{\cK}(\cN)$.
\end{theorem}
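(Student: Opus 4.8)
The plan is to establish the equality $\alpha_{I,\cK}(\cN) = \cG_{\operatorname{QSE}}^{\cK}(\cN)$ by showing that the channel state exclusion game, after a suitable reformulation, becomes exactly the dual conic program~\eqref{eqn:inducedDoeblinDual_cone}. This mirrors the strategy used in \cref{prop:induced-Doeblin} and \cref{prop:induced-Doeblin-QSE-game}, but now with entanglement assistance folded in. First I would unpack the game value: fixing a pure bipartite input $\varphi_{A'B'}$, encoders $(\cE^x_{A'\to A})_x$, and a cone-constrained POVM $(\Lambda^x_{BB'})_x$, I would write $P(x\vert x) = \Tr[\Lambda^x_{BB'}\,(\cN_{A\to B}\circ\cE^x_{A'\to A})(\varphi_{A'B'})]$. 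The key simplification is that the composite $\cN\circ\cE^x$ applied to $\varphi$ produces, up to normalization, an output state on $B B'$; more precisely, using the Choi operator of $\cN$ and the fact that $\cE^x$ is CP, one can write $(\cN\circ\cE^x)(\varphi_{A'B'})$ as a channel-output operator. Summing over $x$ and using $\Tr_B$ on the POVM completeness relation, the quantity $\sum_x P(x\vert x)$ should take the form $\Tr[Y_{AB}\,\Gamma^{\cN}_{AB}]$ for an operator $Y_{AB}$ with $\Tr_A[Y_{AB}] = I_B$.

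The crux is identifying precisely which operators $Y_{AB}$ arise this way and showing the feasible set equals $\cK$ (the constraint in~\eqref{eqn:inducedDoeblinDual_cone}). In one direction, given a game strategy, I would build $Y_{AB}$ explicitly: each term contributes something like $(\cE^x)^{\dagger}\otimes\mathrm{id}$ applied to a local reshaping of $\varphi$ against $\Lambda^x$, and the assumption that $\cK$ is invariant under local CP maps on the $A$ space is exactly what guarantees the resulting $Y_{AB}$ lies in $\cK$ — this is where that hypothesis earns its keep. For the converse, given feasible $Y_{AB}\in\cK$ with $\Tr_A[Y_{AB}] = I_B$, I would need to decompose it into a valid strategy: write $Y_{AB}$ in a form exhibiting the encoder/state/POVM structure. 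A natural route is to purify or dilate: choose $B'$ large enough, take $\varphi_{A'B'}$ to be a maximally entangled state on a space isomorphic to $A$ (so that $(\cdot)\otimes\mathrm{id}$ acting on it implements arbitrary channels via the Choi isomorphism), absorb part of $Y_{AB}$ into the encoders and the rest into the POVM elements, checking that the POVM elements remain in $\cK$ and sum to identity on $BB'$. The normalization constant $n$ or $d^2$ from the state-exclusion rescaling should drop out because $\alpha_{I,\cK}$ as defined in~\eqref{eqn:inducedDoeblinPrimal_cone}--\eqref{eqn:inducedDoeblinDual_cone} is already the unnormalized quantity, matching $\cG^{\cK}_{\operatorname{QSE}}$ directly.

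I anticipate the main obstacle to be the converse direction — reconstructing a concrete entanglement-assisted strategy from an arbitrary feasible $Y_{AB}\in\cK$ — and in particular verifying that the POVM elements one extracts genuinely lie in the cone $\cK$ rather than merely being positive. The invariance of $\cK$ under local CP maps on $A$ must be used symmetrically on both sides, and care is needed because the game constrains the POVM to be in $\cK(B\otimes B')$ while the dual program constrains $Y$ to be in $\cK(A\otimes B)$; reconciling these via the encoder's action and the Choi isomorphism is the delicate bookkeeping step. The condition $I\in\operatorname{relint}(\cK)$ plays a secondary role here, mainly ensuring that strong duality holds for $\alpha_{I,\cK}$ (as flagged after~\eqref{eqn:inducedDoeblinDual_cone}) so that the primal and dual values coincide and we may freely work with whichever form is convenient; I would invoke it at the end to close the argument. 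Throughout, I would lean on the template of~\cite{george2024cone} and \cite{chitambar2022communication}, adapting their state-discrimination/communication-value arguments to the state-exclusion setting, where the only structural change is replacing a $\max$ over guesses with a $\min$ (equivalently, the POVM indexing the ``wrong'' outcomes), which does not affect the conic-program manipulations.
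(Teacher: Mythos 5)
Your overall strategy matches the paper's: show that every strategy for the game induces a feasible point of the dual conic program~\eqref{eqn:inducedDoeblinDual_cone}, show conversely that every feasible point of that program can be realized by a strategy, and close with strong duality from $I\in\operatorname{relint}(\cK)$. You also correctly identify where the invariance of $\cK$ under local CP maps is needed. However, two steps that you leave as sketches are exactly where the real work lies, and as written they are gaps.

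First, in the forward direction you implicitly work with an arbitrary pure shared state $\varphi_{A'B'}$, but the transpose/ricochet trick you invoke (rewriting $(\cN\circ\cE^x\otimes\id)(\varphi)$ against $\Gamma^{\cN}$) only goes through cleanly for a maximally entangled $\varphi$. The paper first reduces to that case via Nielsen's theorem: the LOCC transformation $\Phi_d\to\varphi$ has its Kraus operators $M^k$ absorbed into the POVM (using that the adjoint of a CPTP map is unital, so $\{(I\otimes M^k)^\dagger\Lambda^x(I\otimes M^k)\}_{x,k}$ is again a POVM) and its correcting unitaries absorbed into the encoders; here the cone's invariance under local CP maps is needed a second time to keep the modified POVM elements in $\cK$. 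Without this reduction your identity $\sum_x P(x\vert x)=\Tr[Y\,\Gamma^{\cN}]$ does not follow.

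Second, your converse (``absorb part of $Y_{AB}$ into the encoders and the rest into the POVM'') is too vague to verify, and the naive reading --- factoring $Y$ --- does not work. The actual construction is the dense-coding protocol: share $\Phi_{d_A}$, let the encoders be the transposed discrete Weyl unitaries $U_{m,n}^{T}$, and let the POVM be $\{(\cU_{m,n}\otimes\id)(Y)\}_{m,n}$. The constraint $\Tr_A[Y]=I_B$ is precisely what makes this family sum to the identity, via the twirling identity $\tfrac{1}{d}\sum_{m,n}U_{m,n}(\cdot)U_{m,n}^{\dagger}=\Tr[\cdot]\,I$, and cone membership of each POVM element again follows from invariance under local CP (here unitary) maps. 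The score of this strategy is $\Tr[Y\,\Gamma^{\cN}]$ because each encoding unitary is cancelled by exactly one POVM element. If you supply these two constructions, your argument becomes the paper's proof.
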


\begin{proof}
    See \cref{app:proof-game-interpretation-exclusion-value}.
\end{proof}

As the proof is nearly identical to one in \cite{george2024cone} which itself is a rather direct generalization of a proof in \cite{chitambar2022communication}, we relegate it to \cref{app:proof-game-interpretation-exclusion-value}. We now state the most important case of \cref{thm:game-interpretation}, which makes use of the definition of the game (\cref{def:exclusion-game-cone}) and~\eqref{eq:alpha-n-to-n}. 

\begin{corollary}
\label{cor:q-doeblin-as-EA-assisted-doeblin}
    The quantum Doeblin coefficient $\alpha(\cN)$ is equal to the best-case Doeblin coefficient optimized over all classical channels induced by $\cN$ and arbitrary entanglement assistance. That is to say, 
    \begin{align}
        \alpha(\cN) = \inf_{\substack{\cX, \varphi \in \cS(A'B') \\
        (\cE^{x}_{A' \to A})_x, \cD_{BB' \to X}}} \alpha(\cW^{(\cE^{x}),\cD,\varphi}) \ ,
        \label{eq:EA-exclusion-value}
    \end{align}
    where the infimization is over every finite alphabet $\cX$, pure state $\varphi_{A'B'}$, encoder $(\cE^{x}_{A' \to A})_{x \in \cX}$, and decoder $\cD_{BB' \to X}$, such that they induce a classical channel with conditional probabilities given by
    \begin{align}
        \cW^{(\cE^{x}),\cD,\phi}(x' \vert x) \coloneq \Tr[\Lambda^{x'}_{BB'} (\id_{B'} \otimes (\cN_{A\to B} \circ \cE_{A'\to A}^{x}))(\varphi_{A'B'})] \ .
    \end{align}
\end{corollary}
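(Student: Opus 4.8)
The plan is to derive \cref{cor:q-doeblin-as-EA-assisted-doeblin} as a specialization of \cref{thm:game-interpretation} to the particular cone $\cK = \Pos(B \otimes B')$ (the full positive cone), which is precisely the case that makes $\alpha_{I,\cK}(\cN)$ collapse to $\alpha(\cN)$. First I would verify that the positive cone satisfies the hypotheses of \cref{thm:game-interpretation}: it is closed and convex, it is invariant under local CP maps acting on the $A$-subsystem (applying a CP map to one tensor factor preserves positivity), and $I_{AB} \in \operatorname{relint}(\Pos(A \otimes B))$ since the identity is a full-rank positive operator. With these checked, \cref{thm:game-interpretation} gives $\alpha_{I,\Pos}(\cN) = \cG_{\operatorname{QSE}}^{\Pos}(\cN)$.

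Next I would identify the left-hand side: when $\cK = \Pos(A \otimes B)$, the primal conic program in~\eqref{eqn:inducedDoeblinPrimal_cone} has constraint $\Gamma^{\cN} - I_A \otimes X \in \cK^{\ast} = \Pos(A \otimes B)^{\ast} = \Pos(A \otimes B)$, i.e.\ $I_A \otimes X \leq \Gamma^{\cN}_{AB}$, which is exactly the constraint defining $\alpha(\cN)$ in~\eqref{eq:alpha_cN_def}. Hence $\alpha_{I,\Pos}(\cN) = \alpha(\cN)$. This step is essentially bookkeeping, using that the positive cone is self-dual.

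For the right-hand side, I would unfold \cref{def:exclusion-game-cone} in the case $\cK = \Pos(B \otimes B')$: there the constraint "each POVM element lies in $\cK$" is vacuous (POVM elements are always positive semidefinite), so $\cG_{\operatorname{QSE}}^{\Pos}(\cN)$ is just the infimum, over finite alphabets $\cX$, pure bipartite states $\varphi_{A'B'}$, encoders $(\cE^x_{A'\to A})_x$, and POVMs $(\Lambda^x_{BB'})_x$, of $\sum_x \Tr[\Lambda^x_{BB'}(\cN_{A\to B}\circ \cE^x_{A'\to A})(\varphi_{A'B'})]$. Finally I would rewrite this sum in terms of the induced classical channel $\cW^{(\cE^x),\cD,\varphi}$ of the corollary statement: identifying the decoder $\cD_{BB'\to X}$ with the measurement $(\Lambda^x_{BB'})_x$, the diagonal sum $\sum_x \cW^{(\cE^x),\cD,\varphi}(x|x)$ is precisely the classical Doeblin coefficient's minimum-likelihood form from~\eqref{eq:minimum-likelihood-decoder} evaluated on this induced channel — here I should be a little careful and invoke, as in the proof of \cref{prop:induced-Doeblin-structure}, that one can restrict to strategies where the optimal assignment is a bijection/relabeling, so that $\sum_y \min_x \cW(y|x)$ coincides with $\sum_x \cW(x|x)$ after relabeling outputs. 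This yields $\cG_{\operatorname{QSE}}^{\Pos}(\cN) = \inf \alpha(\cW^{(\cE^x),\cD,\varphi})$, which is~\eqref{eq:EA-exclusion-value}.

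The main obstacle is not in this corollary itself — which is a routine specialization — but in making the last identification airtight: matching the unlabeled "error probability" form of the game (a bare sum $\sum_x P(x|x)$) with the "minimum-likelihood decoder" form $\sum_y \min_x \cW(y|x)$ that defines the classical Doeblin coefficient. One must argue the infimum over encoders/decoders automatically drives the assignment to the minimum-likelihood one, exactly the bijectivity argument already carried out in \cref{prop:induced-Doeblin-structure}; invoking that result (rather than redoing it) keeps the proof short. Everything else is self-duality of $\Pos$ and the vacuousness of the cone constraint on POVM elements.
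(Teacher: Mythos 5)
Your proposal is correct and follows essentially the same route as the paper: the corollary is obtained by specializing \cref{thm:game-interpretation} to the self-dual cone $\operatorname{Pos}$ (whose hypotheses you rightly verify and for which the cone constraint on POVM elements is vacuous), identifying $\alpha_{I,\operatorname{Pos}}(\cN)=\alpha(\cN)$ via \eqref{eqn:inducedDoeblinPrimal_cone} and \eqref{eq:alpha_cN_def}, and then matching the game value $\sum_x P(x\vert x)$ with $\alpha(\cW^{(\cE^x),\cD,\varphi})$ through the bijectivity/relabeling reduction of \cref{prop:induced-Doeblin-structure} (the paper cites \eqref{eq:alpha-n-to-n} for exactly this step). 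Your explicit flagging of that last identification as the only delicate point is apt, and invoking the existing reduction rather than redoing it is what the paper does as well.
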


We remark that, in the same way as identified in \cite[Corollary 19]{george2024cone} for the communication value, what \cref{thm:game-interpretation} and \cref{cor:q-doeblin-as-EA-assisted-doeblin} show is that $\alpha(\cN)$ and $\alpha_{I}(\cN)$ are interpolated between by providing arbitrary entanglement assistance, but limiting the entangling power of the decoder. We state this formally using the relevant cones, which generalizes the previous corollary.
\begin{corollary}\label{cor:entangling-power}
    For every $k \in \mbb{N}$, define the entanglement-rank-$k$ cone \cite{Watrous-Book}
    \begin{align*}
        \operatorname{Ent}_{k}(A:B) \coloneq \{P_{AB} \geq 0: \; \exists \, \cI \, : \exists \{X_{i}\} \subset \operatorname{Lin}(A,B) : P = \sum_{i \in \cI} \operatorname{vec}(X_{i})\operatorname{vec}(X_{i})^{\dagger} \} \ ,
    \end{align*}
    where $\operatorname{vec}$ is the `vec mapping' (see, e.g.,~\cite[Eq.~(1.128)]{Watrous-Book}). Then for channel $\cN_{A \to B}$ and every $k \in \{1,\ldots ,d\}$ where $d = \min\{\vert A \vert , \vert B \vert\}$
    \begin{align} 
        \alpha_{I}(\cN) = \alpha_{I,\operatorname{Ent}_{1}}(\cN) \geq \alpha_{I,\operatorname{Ent}_{k}}(\cN) \geq \alpha_{I,\operatorname{Ent}_{d}}(\cN) = \alpha_{I,\Pos}(\cN) = \alpha(\cN) \ .
    \end{align}
    In other words, one interpolates between the quantum Doeblin coefficients ${\alpha_I}(\cN)$ and $\alpha(\cN)$ operationally by considering the channel state exclusion game (\cref{def:exclusion-game-cone}) and interpolating between the decoder POVM being a separable measurement, $(\Lambda^{x})_{x} \subset \cK_{1} = \Sep(A:B)$, and the decoder elements being arbitrary, $(\Lambda^{x})_{x} \subset \cK_{d}(A:B) = \Pos(A \otimes B)$ by restricting the POVM elements to a given entanglement rank cone $k$.
\end{corollary}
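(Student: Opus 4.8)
The plan is to obtain the chain of (in)equalities from three ingredients: the identification of the two extreme entanglement-rank cones, $\operatorname{Ent}_{1}(A:B) = \Sep(A:B)$ and $\operatorname{Ent}_{d}(A:B) = \Pos(A \otimes B)$ with $d = \min\{\vert A\vert,\vert B\vert\}$; monotonicity of the cone-restricted Doeblin coefficient $\alpha_{I,\cK}(\cN)$ in the cone $\cK$; and an application of \cref{thm:game-interpretation} to read off the operational meaning in terms of the channel state exclusion game with rank-$k$ decoders.

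First I would record the monotonicity statement. If $\cK \subseteq \cK'$ are closed convex cones each containing $I_{AB}$ in its relative interior, then the feasible set of the dual program \eqref{eqn:inducedDoeblinDual_cone} for $\cK$ is a subset of that for $\cK'$ (the affine constraint $\tr_{A}[Y] = I_{B}$ is unchanged), so the infimum can only decrease, giving $\alpha_{I,\cK'}(\cN) \leq \alpha_{I,\cK}(\cN)$. Strong duality, which identifies these dual values with the primal values in \eqref{eqn:inducedDoeblinPrimal_cone}, follows from Slater's criterion because $I_{AB}$ lies in the relative interior of $\cK$, as noted after \eqref{eqn:inducedDoeblinDual_cone}. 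Since $\operatorname{Ent}_{1} \subseteq \operatorname{Ent}_{k} \subseteq \operatorname{Ent}_{d}$ and $I_{AB} = \sum_{i,j} |i\rangle\!\langle i| \otimes |j\rangle\!\langle j|$ is an interior point of $\Sep(A:B) = \operatorname{Ent}_{1}(A:B)$ (a standard fact), every $\operatorname{Ent}_{k}(A:B)$ is full-dimensional with $I_{AB}$ in its interior; monotonicity then yields the middle inequalities $\alpha_{I,\operatorname{Ent}_{1}}(\cN) \geq \alpha_{I,\operatorname{Ent}_{k}}(\cN) \geq \alpha_{I,\operatorname{Ent}_{d}}(\cN)$ for every $k \in \{1,\ldots,d\}$.

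Next I would pin down the endpoints. For $k = 1$: a positive operator $\operatorname{vec}(X)\operatorname{vec}(X)^{\dagger}$ has Schmidt rank one precisely when $X$ has rank one, i.e., when it is a tensor-product operator, so $\operatorname{Ent}_{1}(A:B) = \Sep(A:B)$ and \cref{prop:induced-Doeblin} gives $\alpha_{I,\operatorname{Ent}_{1}}(\cN) = \alpha_{I}(\cN)$. For $k = d$: any $P_{AB} \geq 0$ has a spectral decomposition $P_{AB} = \sum_{i} \lambda_{i}|\phi_{i}\rangle\!\langle\phi_{i}|$ with $\lambda_{i} > 0$, and writing $\sqrt{\lambda_{i}}|\phi_{i}\rangle = \operatorname{vec}(X_{i})$ one has $\operatorname{rank}(X_{i}) \leq \min\{\vert A\vert,\vert B\vert\} = d$, whence $P_{AB} \in \operatorname{Ent}_{d}(A:B)$; the reverse inclusion is immediate, so $\operatorname{Ent}_{d}(A:B) = \Pos(A \otimes B)$. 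Because $\Pos(A \otimes B)$ is self-dual, the primal program \eqref{eqn:inducedDoeblinPrimal_cone} with $\cK = \Pos$ reads $\sup\{\Tr[X] : \Gamma^{\cN} - I_{A}\otimes X \geq 0,\ X \in \Herm(B)\}$, which is exactly \eqref{eq:alpha_cN_def}; thus $\alpha_{I,\operatorname{Ent}_{d}}(\cN) = \alpha_{I,\Pos}(\cN) = \alpha(\cN)$, completing the chain.

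Finally, for the operational statement I would verify that each $\operatorname{Ent}_{k}(A:B)$ meets the hypotheses of \cref{thm:game-interpretation}: it is a closed convex cone \cite{Watrous-Book}; it is invariant under local CP maps on the $A$ system, since a CP map acting on one tensor factor sends a pure state of Schmidt rank at most $k$ to a mixture of pure states of Schmidt rank at most $k$ (concretely, $(\cM_{A}\otimes\id_{B})(\operatorname{vec}(X)\operatorname{vec}(X)^{\dagger}) = \sum_{j}\operatorname{vec}(K_{j}X)\operatorname{vec}(K_{j}X)^{\dagger}$ for Kraus operators $(K_{j})_{j}$ of $\cM_{A}$, with $\operatorname{rank}(K_{j}X) \leq \operatorname{rank}(X)$); and $I_{AB} \in \operatorname{relint}(\operatorname{Ent}_{k}(A:B))$ as established above. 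Hence \cref{thm:game-interpretation} gives $\alpha_{I,\operatorname{Ent}_{k}}(\cN) = \cG^{\operatorname{Ent}_{k}}_{\operatorname{QSE}}(\cN)$, i.e., the channel state exclusion game of \cref{def:exclusion-game-cone} with decoder POVM elements restricted to entanglement rank at most $k$; the extreme cases $k = 1$ and $k = d$ recover the separable-measurement decoder ($\Sep(A:B)$) and the unrestricted decoder ($\Pos(A \otimes B)$), respectively. I expect the only mildly delicate points to be confirming that $I_{AB}$ lies in the relative interior of every $\operatorname{Ent}_{k}$ (so Slater applies throughout the chain) and the clean identification $\operatorname{Ent}_{d}(A:B) = \Pos(A \otimes B)$; the rest is bookkeeping layered on \cref{prop:induced-Doeblin} and \cref{thm:game-interpretation}.
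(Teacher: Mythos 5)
Your proposal is correct and follows essentially the same route as the paper's proof: the cone containments $\Sep = \operatorname{Ent}_{1} \subseteq \operatorname{Ent}_{k} \subseteq \operatorname{Ent}_{d} = \Pos$ give the ordering via the dual program, the endpoints are identified through \cref{prop:induced-Doeblin} and \cref{def:q-doeblin-coeff}, and the hypotheses of \cref{thm:game-interpretation} (closure under local CP maps and $I_{AB}$ in the relative interior) yield the operational reading. The only difference is that you spell out arguments (spectral decomposition for $\operatorname{Ent}_{d} = \Pos$, the Kraus-operator computation for CP-invariance) that the paper simply cites from the literature.
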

\begin{proof}
    First, it is known \cite[Section 6.1]{Watrous-Book} that for Hilbert spaces $A$ and $B$  
    $$\Sep(A:B) = \operatorname{Ent}_{1}(A:B) \subseteq \operatorname{Ent}_{k}(A:B) \subseteq \Pos(A \otimes B) = \operatorname{Ent}_{\min\{\vert A \vert, \vert B \vert\}}(A:B) \ , $$
    which implies the ordering between coefficients by \eqref{eqn:inducedDoeblinDual}. Second, it is known that the entanglement rank cones are closed under completely positive maps acting on one system (special case of \cite[Theorem 6.23]{Watrous-Book}). It is also known $I \in \operatorname{relint}(\operatorname{Ent}_{k}(A:B))$ for all $k$ \cite[Corollary 1]{george2024cone}. Thus, the cones satisfy the conditions needed to apply \cref{thm:game-interpretation}. The proof is then completed by considering \cref{prop:induced-Doeblin} and \cref{def:q-doeblin-coeff}.
\end{proof}

While the above claim uses a finite-sized hierarchy, it is not known to be efficiently computable. A similar claim may be made using the cones generated by the intersection of the PPT and symmetrically extendible cones, which are SDP representable. Such a claim was not made explicit for the communication value.
\begin{corollary}\label{cor:entangling-power-in-terms-of-k-sym}
    For every $k \in \mbb{N}$, define the cone $\cK_{k}(A:B) \coloneq \operatorname{PPT}(A:B) \cap \operatorname{Sym}_{k}(A:B)$ where these sets are defined in \eqref{eq:PPT-cone} and \cref{def:k-sym-ext-cone}. Then for every $k \in \mbb{N}$,
    \begin{align}
        \alpha_{I}(\cN) = \alpha_{I,\Sep}(\cN) \geq \alpha_{I,\cK_{k}}(\cN) \geq \alpha_{I,\cK_{1}}(\cN) = \alpha_{I,\Pos}(\cN) = \alpha(\cN) \ .
    \end{align}
    In other words, one operationally interpolates from the quantum Doeblin coefficient $\alpha(\cN)$ to ${\alpha_I}(\cN)$ by considering the channel state exclusion game (\cref{def:exclusion-game-cone}) with the decoder being made up of POVM elements that are PPT and $k$-symmetrically extendible and increasing the value of $k$. 
\end{corollary}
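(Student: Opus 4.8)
The plan is to show the chain of inequalities and equalities in the corollary by combining known cone inclusions with \cref{thm:game-interpretation}, exactly as in the proof of \cref{cor:entangling-power}, but now using the SDP-representable cones $\cK_k(A:B) = \operatorname{PPT}(A:B) \cap \operatorname{Sym}_k(A:B)$. The key structural facts I would assemble are: (i) the ordering of cones under inclusion, which by \eqref{eqn:inducedDoeblinDual}--style reasoning (larger cone of feasible $Y$ means smaller minimization value, hence the inequality flips as claimed) gives the displayed ordering of the coefficients $\alpha_{I,\cK}(\cN)$; (ii) each $\cK_k$ is closed, convex, and invariant under completely positive maps acting on the $A$ system; and (iii) $I_{AB} \in \operatorname{relint}(\cK_k)$. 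Facts (ii) and (iii) are precisely the hypotheses needed to invoke \cref{thm:game-interpretation}, which then supplies the operational channel-state-exclusion-game interpretation for each $k$.

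First I would establish the inclusions
\begin{equation}
    \Sep(A:B) \subseteq \cK_k(A:B) \subseteq \cK_1(A:B) = \operatorname{PPT}(A:B) \cap \operatorname{Sym}_1(A:B) = \operatorname{PPT}(A:B) \subseteq \Pos(A\otimes B).
\end{equation}
The middle inclusion $\cK_k \subseteq \cK_1$ for $k\geq 1$ follows from the well-known nesting $\operatorname{Sym}_1 \supsetneq \operatorname{Sym}_2 \supsetneq \cdots$ recalled just after \cref{def:k-sym-ext-cone}, intersected with the fixed PPT cone; the inclusion $\Sep \subseteq \cK_k$ follows because separable operators are both PPT and $k$-symmetrically extendible for every $k$ (a separable state can be extended to arbitrarily many copies of $B$). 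Then I would note $\cK_1 = \operatorname{PPT}(A:B)$ since $\operatorname{Sym}_1(A:B) = \Pos(A\otimes B)$, and crucially $\alpha_{I,\operatorname{PPT}}(\cN) = \alpha_{I,\Pos}(\cN) = \alpha(\cN)$. This last equality is the one point requiring a small argument beyond cone inclusions: one must check that enlarging the feasible dual set from separable $Y$ with $\operatorname{Tr}_A[Y]=I_B$ to PPT $Y$ with the same constraint does not change the optimal value, equivalently that the primal constraint $\Gamma^\cN - I_A \otimes X \in \operatorname{PPT}^*(A:B)$ can be relaxed to block-positivity (i.e. $\operatorname{Sep}^*$) without loss. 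I expect this to reduce to the observation that the replacer-map structure of $I_A \otimes X$ together with the constraint $\Gamma^\cN - I_A\otimes X \geq 0$ being implied anyway forces the optimal $X$ to already satisfy the weaker constraint; more directly, since $\operatorname{PPT} \supseteq \operatorname{Sep}$ we get $\operatorname{PPT}^* \subseteq \operatorname{Sep}^*$, so the PPT-primal is more constrained, giving $\alpha_{I,\operatorname{PPT}}(\cN) \leq \alpha_{I,\Pos}(\cN)$, and the reverse follows from $\operatorname{PPT} \subseteq \Pos$ on the dual side — hence equality, matching the pattern $\alpha_{I,\Pos}(\cN) = \alpha(\cN)$ already recorded in \cref{cor:entangling-power}.

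Next I would verify hypotheses (ii) and (iii) for $\cK_k$. For CP-invariance on the $A$ system: the PPT cone is CP-invariant under maps on $A$ because partial transpose on $B$ commutes with CP maps on $A$ and completely positive maps preserve positivity; the symmetrically extendible cone $\operatorname{Sym}_k(A:B)$ is CP-invariant on $A$ because applying a CP map to $A$ of an extension $P_{AB_1^k}$ yields a valid extension of the image (permutation symmetry on the $B$ copies and the partial-trace-consistency condition are untouched). For $I_{AB} \in \operatorname{relint}(\cK_k)$: this was used already for $\operatorname{Ent}_k$ and $\operatorname{Sep}$; for $\cK_k$ one checks $I_{AB}$ is PPT (indeed $I_{AB}^\Gamma = I_{AB} > 0$) and $k$-symmetrically extendible ($I_A \otimes I_{B_1^k}$ works), and that it lies in the relative interior — which here is the full interior since both cones are full-dimensional in $\operatorname{Herm}(A\otimes B)$. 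With these verified, \cref{thm:game-interpretation} applies to each $\cK_k$, giving $\alpha_{I,\cK_k}(\cN) = \cG^{\cK_k}_{\operatorname{QSE}}(\cN)$, which is exactly the operational statement in the corollary about the decoder POVM elements being PPT and $k$-symmetrically extendible. Finally, combining the cone ordering from the first step with the monotonicity of $\alpha_{I,\cK}$ in $\cK$ yields the displayed chain.

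The main obstacle I anticipate is the clean justification of $\alpha_{I,\operatorname{PPT}}(\cN) = \alpha(\cN)$ — i.e. that among the SDP-representable relaxations between $\operatorname{Sep}$ and $\Pos$, the PPT one already collapses to the $\Pos$ (block-positive primal) value. This is not an abstract cone fact but depends on the replacer-map structure of the constraint, and getting it right (perhaps by directly exhibiting that an optimal PPT dual witness can be symmetrized/projected to one whose value equals the $\Pos$-optimum, or by invoking that $\Gamma^\cN - I_A\otimes X \in \operatorname{Sep}^*$ is equivalent by \cref{prop:positive-to-block-positive} to $\cN - \cR^X$ positive, which is the defining constraint of $\alpha_I$ in \cref{def:alt-doeblin-quantum-induced}, wait — that is $\alpha_I$, not $\alpha$) will require care. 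Let me instead rely on the already-established fact from \cref{cor:entangling-power} that $\alpha_{I,\operatorname{Ent}_d}(\cN) = \alpha_{I,\Pos}(\cN) = \alpha(\cN)$ and on $\operatorname{Sep} \subseteq \cK_k \subseteq \operatorname{PPT} \subseteq \Pos$: since $\cK_1 = \operatorname{PPT}$ lies between $\operatorname{Sep}$ and $\Pos$ on the dual side, and since the corollary only asserts $\alpha_{I,\cK_1}(\cN) = \alpha(\cN)$, I would prove this by the sandwich $\alpha(\cN) = \alpha_{I,\Pos}(\cN) \leq \alpha_{I,\operatorname{PPT}}(\cN) \leq \alpha_{I,\cK_k}(\cN) \leq \alpha_{I,\Sep}(\cN) = \alpha_I(\cN)$ together with a separate direct argument that $\alpha_{I,\operatorname{PPT}}(\cN) \leq \alpha(\cN)$ using the fact that the optimal primal $X$ for $\alpha(\cN)$ satisfies the stronger PSD constraint $I_A\otimes X \leq \Gamma^\cN$, which certainly implies $\Gamma^\cN - I_A\otimes X \in \operatorname{PPT}^*(A:B)$ since $\operatorname{PPT}^* \supseteq \Pos(A\otimes B)$. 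That pins down $\alpha_{I,\operatorname{PPT}}(\cN) = \alpha(\cN)$ and closes the argument.
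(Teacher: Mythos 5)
Your overall route is the same as the paper's: establish the cone inclusions $\Sep(A:B) \subseteq \cK_{k}(A:B) \subseteq \cK_{1}(A:B) = \operatorname{PPT}(A:B)$, check that each $\cK_{k}$ is closed, convex, invariant under completely positive maps on the $A$ system, and contains $I_{AB}$ in its relative interior (the paper obtains this last point from $I \in \operatorname{relint}(\Sep(A:B)) \subseteq \operatorname{relint}(\cK_{k}(A:B))$ via the Gurvits--Barnum ball, which is cleaner than your assertion that membership plus full-dimensionality of the cone puts $I_{AB}$ in the interior), and then invoke \cref{thm:game-interpretation} together with monotonicity of $\alpha_{I,\cK}$ under cone inclusion. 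Up to that point the proposal is fine.

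The gap is in the last link of the chain, the equality $\alpha_{I,\cK_{1}}(\cN) = \alpha_{I,\Pos}(\cN)$. Since $\cK_{1} = \operatorname{PPT}(A:B) \subseteq \Pos(A\otimes B)$, the dual program \eqref{eqn:inducedDoeblinDual_cone} for $\cK_{1}$ minimizes over a smaller feasible set, so cone monotonicity only yields $\alpha_{I,\cK_{1}}(\cN) \geq \alpha_{I,\Pos}(\cN) = \alpha(\cN)$; the entire content of the equality is the reverse inequality. Your ``separate direct argument'' runs the wrong way: exhibiting the optimal $X$ for $\alpha(\cN)$ (which satisfies $\Gamma^{\cN} - I_{A}\otimes X \geq 0$ and hence a fortiori $\Gamma^{\cN} - I_{A}\otimes X \in \operatorname{PPT}^{*}(A:B)$) as a feasible point of the $\operatorname{PPT}$-cone \emph{primal}, which is a maximization, gives $\alpha_{I,\operatorname{PPT}}(\cN) \geq \Tr[X] = \alpha(\cN)$ --- the same easy direction you already had from the sandwich --- not $\alpha_{I,\operatorname{PPT}}(\cN) \leq \alpha(\cN)$ as you claim. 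So the proposal never establishes $\alpha_{I,\cK_{1}}(\cN) \leq \alpha(\cN)$, and the concluding sentence ``that pins down the equality'' is unjustified. For context, the paper's own proof disposes of this point only by appealing to the nesting $\Pos(A\otimes B) = \operatorname{Sym}_{1}(A:B) \supseteq \cdots \supseteq \Sep(A:B)$, in effect identifying $\cK_{1}$ with $\operatorname{Sym}_{1} = \Pos$ and treating the PPT intersection as non-binding at $k=1$; whatever one makes of that, your attempted patch is a direction error and does not close the hole.
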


\begin{proof}
    One can verify using the definition of the PPT cone and $\operatorname{Sym}_{k}(A:B)$ that if an element is contained in the cone, then it still is after applying a completely positive map to the $A$ space. It follows that $\cK_{k}(A:B)$ also satisfies this property. Moreover, for every $k \in \mbb{N}$, $\Sep(A:B) \subseteq \cK_{k}(A:B)$ and $I \in \operatorname{relint}(\Sep(A:B))$ due to \cite{gurvits2002largest} (see \cite[Corollary 1]{george2024cone} to see this proven explicitly). Thus, for all $k \in \mbb{N}$, $\cK_{k}(A:B)$ satisfies the conditions on the cone to apply \cref{thm:game-interpretation}. The claimed inequalities and equalities then follow from \cref{prop:induced-Doeblin}, \cref{def:q-doeblin-coeff}, and that $\Pos(A \otimes B) = \operatorname{Sym}_{1}(A:B) \supseteq \operatorname{Sym}_{2}(A:B) \supseteq \cdots \supseteq \operatorname{Sym}_{k}(A:B) \supseteq \Sep(A:B)$ for every $k \in \mbb{N}$.
\end{proof}

\subsection{Reverse Max-Mutual Information}

In this section, we observe that the Doeblin coefficient $\alpha(\mathcal{N})$ can be written in terms of the reverse max-mutual information $\inf_{\tau\in\operatorname{aff}(\mathcal{D}
)}D_{\max}(\mathcal{R}^{\tau}\Vert\mathcal{N})$ of the channel $\mathcal{N}$ (alternatively, this quantity can be called the Doeblin information of the channel). This is a counterpart of the fact that the entanglement-assisted communication value can be written in terms of the max-mutual information of the channel (see \cref{rem:comm-val-max-mutual} below). The formulation given in \cref{prop:reverse-max-mutual-info} can also be compared to the concept of resource weight \cite{Lewenstein1998,Regula2022prl,Regula2022tightconstraints}, studied in the context of quantum resource theories, especially if one considers the replacer map $\mathcal{R}^\tau$ to play the role of a free map.

\begin{proposition}
\label{prop:reverse-max-mutual-info}
The following equality holds for a quantum channel $\mathcal{N}$:
\begin{equation}
\alpha(\mathcal{N})=\exp\!\left(  -\inf_{\tau\in\operatorname{aff}(\mathcal{D}
)}D_{\max}(\mathcal{R}^{\tau}\Vert\mathcal{N})\right)  ,
\label{eq:reverse-max-mutual-info}
\end{equation}
where $D_{\max}(\mathcal{R}^{\tau}\Vert\mathcal{N})$ is defined from \eqref{eq:max-rel-ent-channels} and $\mathcal{R}^{\tau}(\cdot) \coloneqq \operatorname{Tr}[\cdot]\tau$ is a replacer superoperator.

\end{proposition}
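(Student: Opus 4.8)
The plan is to begin from the completely-positive-order formulation of the Doeblin coefficient recorded in~\eqref{eq:doeblin-CP-order}, namely
\[
\alpha(\mathcal{N}) = \sup_{X\in\operatorname{Herm}}\left\{\operatorname{Tr}[X] : \mathcal{R}^{X}\leq\mathcal{N}\right\},
\]
and to reparametrize any feasible $X$ as $X = t\tau$ with $t = \operatorname{Tr}[X]$ and $\tau \in \operatorname{aff}(\mathcal{D})$ a quasi-state. The key observation is that every Hermitian $X$ with $\operatorname{Tr}[X] > 0$ factors uniquely this way with $t>0$ and $\tau = X/t$, while no $X$ with $\operatorname{Tr}[X]\leq 0$ can beat the feasible choice $X=0$ (which has objective value $0$, so in particular $\alpha(\mathcal{N})\geq 0$). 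Hence the supremum is unchanged if we restrict to operators of the form $t\tau$ with $t\geq 0$ and $\tau\in\operatorname{aff}(\mathcal{D})$, giving
\[
\alpha(\mathcal{N}) = \sup_{\tau\in\operatorname{aff}(\mathcal{D})}\ \sup\left\{t\geq 0 : \mathcal{R}^{t\tau}\leq\mathcal{N}\right\}.
\]

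Next I would prove, for each fixed $\tau\in\operatorname{aff}(\mathcal{D})$, the scalar identity
\[
\sup\left\{t\geq 0 : \mathcal{R}^{t\tau}\leq\mathcal{N}\right\} = \exp\!\left(-D_{\max}(\mathcal{R}^{\tau}\Vert\mathcal{N})\right),
\]
by comparing with the definition $D_{\max}(\mathcal{R}^{\tau}\Vert\mathcal{N}) = \ln\inf\{\lambda\geq 0 : \mathcal{R}^{\tau}\leq\lambda\mathcal{N}\}$ from~\eqref{eq:max-rel-ent-channels} and substituting $t = 1/\lambda$. Concretely, using $\mathcal{R}^{t\tau} = t\,\mathcal{R}^{\tau}$, for $t>0$ the constraint $\mathcal{R}^{t\tau}\leq\mathcal{N}$ is equivalent to $\mathcal{R}^{\tau}\leq t^{-1}\mathcal{N}$, so $t$ is feasible for the left side exactly when $t^{-1}$ lies in $\{\lambda\geq 0 : \mathcal{R}^{\tau}\leq\lambda\mathcal{N}\}$; passing to reciprocals turns the infimum over $\lambda$ into a supremum over $t$. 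Two degenerate cases need care: (i) if no finite $\lambda$ works, then $D_{\max}(\mathcal{R}^{\tau}\Vert\mathcal{N}) = +\infty$ and the right side is $0$, while on the left only $t=0$ is feasible (indeed $\mathcal{R}^{0}=0\leq\mathcal{N}$, since $\mathcal{N}$ is completely positive), so both sides equal $0$; (ii) the value $\lambda = 0$ (equivalently $t=\infty$) must be ruled out, which follows because $\mathcal{R}^{-\tau}\leq 0$ would force $I_{A}\otimes\tau\leq 0$ at the level of Choi operators, contradicting $\operatorname{Tr}[\tau]=1$. A parallel Choi-picture argument (closedness of the completely positive cone) shows the infimum defining $D_{\max}$ is attained by some $\lambda^{\star}>0$, so that $t = 1/\lambda^{\star}$ is itself feasible; this upgrades the reciprocal correspondence to an honest bijection between the feasible half-lines $[\lambda^{\star},\infty)$ and $[0,1/\lambda^{\star}]$.

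Combining the two displays, and using that $x\mapsto e^{-x}$ is continuous and strictly decreasing, so that $\sup_{\tau} e^{-D_{\max}(\mathcal{R}^{\tau}\Vert\mathcal{N})} = e^{-\inf_{\tau} D_{\max}(\mathcal{R}^{\tau}\Vert\mathcal{N})}$, yields
\[
\alpha(\mathcal{N}) = \sup_{\tau\in\operatorname{aff}(\mathcal{D})}\exp\!\left(-D_{\max}(\mathcal{R}^{\tau}\Vert\mathcal{N})\right) = \exp\!\left(-\inf_{\tau\in\operatorname{aff}(\mathcal{D})}D_{\max}(\mathcal{R}^{\tau}\Vert\mathcal{N})\right),
\]
which is the claimed equality. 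The step I expect to be the main obstacle to state cleanly is the bookkeeping around these degenerate cases: trace-nonpositive $X$ in the Doeblin optimization, the possibility $D_{\max}=+\infty$ when the Choi operator of $\mathcal{N}$ is not of full rank, and confirming that the scalar optimizations are attained rather than merely approached, so that both the reparametrization $X = t\tau$ and the reciprocal substitution $t = 1/\lambda$ are genuinely value-preserving; everything else is a routine manipulation of the underlying conic programs.
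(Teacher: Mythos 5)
Your proposal is correct and follows essentially the same route as the paper's proof: reparametrize feasible $X$ as $\lambda\tau$ with $\lambda\geq0$ and $\tau\in\operatorname{aff}(\mathcal{D})$ (justified by the feasibility of $X=0$), invert $\lambda=1/\mu$ to convert the supremum into the infimum defining $D_{\max}(\mathcal{R}^{\tau}\Vert\mathcal{N})$, and then pull the infimum over $\tau$ through the exponential. Your extra bookkeeping on the degenerate cases is sound aside from a harmless sign slip (feasibility of $\lambda=0$ gives $\mathcal{R}^{\tau}\leq 0$, hence $I_{A}\otimes\tau\leq 0$, which indeed contradicts $\operatorname{Tr}[\tau]=1$), and none of it changes the substance of the argument.
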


\begin{proof}
Consider that
\begin{align}
\alpha(\mathcal{N})  & =\sup_{X\in\operatorname{Herm}}\left\{
\operatorname{Tr}[X]:\mathcal{R}^{X}\leq\mathcal{N}\right\}  \\
& =\sup_{\substack{\lambda\geq0,\\\tau\in\operatorname{aff}(\mathcal{D}
)}}\left\{  \operatorname{Tr}[\lambda\tau]:\mathcal{R}^{\lambda\tau}
\leq\mathcal{N}\right\}  \\
& =\sup_{\substack{\lambda\geq0,\\\tau\in\operatorname{aff}(\mathcal{D}
)}}\left\{  \lambda:\lambda\mathcal{R}^{\tau}\leq\mathcal{N}\right\} \label{eq:key-robustness-step} \\
& =\sup_{\substack{\mu\geq0,\\\tau\in\operatorname{aff}(\mathcal{D})}}\left\{
\frac{1}{\mu}:\frac{1}{\mu}\mathcal{R}^{\tau}\leq\mathcal{N}\right\}  \\
& =\left[  \inf_{\substack{\mu\geq0,\\\tau\in\operatorname{aff}(\mathcal{D}
)}}\left\{  \mu:\mathcal{R}^{\tau}\leq\mu\mathcal{N}\right\}  \right]
^{-1}\\
& =\left[  \inf_{\tau\in\operatorname{aff}(\mathcal{D})}\exp\!\left(  D_{\max
}(\mathcal{R}^{\tau}\Vert\mathcal{N})\right)  \right]  ^{-1}\\
& =\exp\!\left(  -\inf_{\tau\in\operatorname{aff}(\mathcal{D})}D_{\max
}(\mathcal{R}^{\tau}\Vert\mathcal{N})\right)  .
\end{align}
The second equality follows because $X=0$ is a feasible solution, so that it suffices to restrict the optimization to be over $X\in\operatorname{Herm}$ such that $\operatorname{Tr}[X] \geq 0$. Furthermore, all such $X$ can be written as $X = \lambda \tau $, where $\lambda\geq 0$ and $\tau\in\operatorname{aff}(\mathcal{D})$. The penultimate equality follows from the definition of $D_{\max}$ in~\eqref{eq:dmax-1-herm-maps}.
\end{proof}

\begin{remark}
\label{rem:comm-val-max-mutual}
By following a proof similar to the above, we note that the entanglement-assisted communication value $\operatorname{cv}^*(\mathcal{N})$ of a quantum channel $\mathcal{N}$ (see \cite[Theorem~6]{chitambar2022communication} for a mathematical expression) can be written in terms of the max-mutual information:
\begin{equation}
    \ln \operatorname{cv}^*(\mathcal{N}) = \inf_{\sigma\in\mathcal{D}} D_{\max
}(\mathcal{N}\Vert\mathcal{R}^{\sigma}).
\end{equation}
\end{remark}

\subsection{Reverse Robustness}

In this section, we interpret the quantum Doeblin coefficient $\alpha(\mathcal{N})$ as a reverse robustness, related to robustness quantities previously considered in quantum information (see, e.g., \cite{Regula2022prl,Regula2022tightconstraints}). Here, the goal is to express the original channel as a convex combination of a trace-preserving replacer map and an another arbitrary channel, with as much weight as possible on the replacer map. As before, we allow the replacer map to be non-physical and replace the input state with an arbitrary quasi-state. 
This characterization is helpful in establishing $1-\alpha(\mathcal{N})$ as an upper bound on the trace distance contraction coefficient of $\mathcal{N}$ (\cref{prop:trace_distance_complete_cont_Doeblin_bound}), as it neatly separates $\mathcal{N}$ into a component that has a trivial action on an arbitrary input and a component that does not.
This characterization is also helpful in establishing \cref{prop:concatenation} below, where we use the following slightly different representation of the quantum Doeblin coefficient:
\begin{equation}
    1-\alpha(\mathcal{N}) = \inf_{\substack{\lambda \in [0,1],\\ \tau\in\operatorname{aff}(\mathcal{D}),\\ \mathcal{M} \in \operatorname{CPTP}} } \left\{\lambda : (1-\lambda)\mathcal{R}^\tau + \lambda \mathcal{M} = \mathcal{N}\right\},
    \label{eq:reverse-robustness-opposite}
\end{equation}
which is equivalent to \eqref{eq:reverse-robustness} below.

\begin{proposition}
\label{prop:reverse-robustness}
The following equality holds for a quantum channel $\mathcal{N}$:
\begin{equation}
\alpha(\mathcal{N})=\sup_{\substack{\lambda\in\left[  0,1\right]  ,\\\tau
\in\operatorname{aff}(\mathcal{D}),\\\mathcal{M}\in\operatorname{CPTP}}
}\left\{  \lambda:\mathcal{N}=\lambda\mathcal{R}^{\tau}+\left(  1-\lambda
\right)  \mathcal{M}\right\}  .
\label{eq:reverse-robustness}
\end{equation}

\end{proposition}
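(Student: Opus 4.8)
The plan is to derive \eqref{eq:reverse-robustness} directly from the characterization of $\alpha(\mathcal{N})$ via the completely positive order, namely \eqref{eq:doeblin-CP-order}, which states that $\alpha(\mathcal{N}) = \sup_{X\in\operatorname{Herm}}\{\operatorname{Tr}[X] : \mathcal{R}^X \leq \mathcal{N}\}$, where the superoperator inequality $\mathcal{R}^X \leq \mathcal{N}$ means $\mathcal{N} - \mathcal{R}^X$ is completely positive. I would first argue, as in the proof of \cref{prop:reverse-max-mutual-info}, that it suffices to restrict to $X$ of the form $X = \lambda\tau$ with $\lambda \in [0,1]$ and $\tau \in \operatorname{aff}(\mathcal{D})$: the choice $X = 0$ is feasible so $\operatorname{Tr}[X] \geq 0$ can be assumed, giving $\lambda = \operatorname{Tr}[X] \geq 0$ and $\tau = X/\lambda$; and the constraint $\mathcal{R}^{\lambda\tau} \leq \mathcal{N}$ forces $\lambda \leq 1$ by evaluating on any input state and taking the trace (since $\mathcal{N}$ is trace preserving, $\operatorname{Tr}[\mathcal{N}(\rho)] = 1$ while $\operatorname{Tr}[\mathcal{R}^{\lambda\tau}(\rho)] = \lambda$, and complete positivity of the difference implies positivity, hence $\lambda \leq 1$). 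The degenerate case $\lambda = 0$ should be handled separately or absorbed by a limiting/convention argument.

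The core of the argument is then to show the equivalence of the two constraint sets: $\mathcal{R}^{\lambda\tau} \leq \mathcal{N}$ (complete positivity of $\mathcal{N} - \lambda\mathcal{R}^\tau$) holds if and only if there exists a channel $\mathcal{M} \in \operatorname{CPTP}$ with $\mathcal{N} = \lambda\mathcal{R}^\tau + (1-\lambda)\mathcal{M}$. The ``if'' direction is immediate since $(1-\lambda)\mathcal{M}$ is completely positive. For the ``only if'' direction, given that $\mathcal{N} - \lambda\mathcal{R}^\tau$ is completely positive, define $\mathcal{M} \coloneqq \frac{1}{1-\lambda}(\mathcal{N} - \lambda\mathcal{R}^\tau)$ for $\lambda < 1$; this is completely positive by construction, and it is trace preserving because $\mathcal{N}$ is trace preserving and $\mathcal{R}^\tau$ is trace preserving (as $\tau$ has unit trace, being in the affine hull of density operators), so $\operatorname{Tr}[(\mathcal{N} - \lambda\mathcal{R}^\tau)(\rho)] = 1 - \lambda$ for every input. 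Hence $\mathcal{M}$ is a legitimate quantum channel. Rearranging gives $\mathcal{N} = \lambda\mathcal{R}^\tau + (1-\lambda)\mathcal{M}$ as desired, and the boundary case $\lambda = 1$ corresponds to $\mathcal{N} = \mathcal{R}^\tau$ being a replacer map itself.

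Putting these together, the supremum of $\operatorname{Tr}[X] = \lambda$ over feasible $X$ equals the supremum of $\lambda$ over triples $(\lambda, \tau, \mathcal{M})$ with $\lambda \in [0,1]$, $\tau \in \operatorname{aff}(\mathcal{D})$, $\mathcal{M} \in \operatorname{CPTP}$, and $\mathcal{N} = \lambda\mathcal{R}^\tau + (1-\lambda)\mathcal{M}$, which is exactly \eqref{eq:reverse-robustness}. The equivalence with the alternative form \eqref{eq:reverse-robustness-opposite} follows by the substitution $\lambda \mapsto 1 - \lambda$ together with relabeling the two summands. I do not anticipate a serious obstacle here; the only point requiring mild care is the bookkeeping at the boundary $\lambda = 1$ (where $\mathcal{M}$ is undefined by the formula but the identity $\mathcal{N} = \mathcal{R}^\tau$ still makes the constraint consistent) and confirming that the trace-preservation of $\mathcal{R}^\tau$ genuinely holds for quasi-states $\tau \in \operatorname{aff}(\mathcal{D})$, which it does since unit trace is an affine condition preserved on the affine hull.
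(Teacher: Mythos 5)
Your proposal is correct and follows essentially the same route as the paper's proof: reparametrize the feasible Hermitian operators as $X=\lambda\tau$ with $\lambda\geq 0$ and $\tau\in\operatorname{aff}(\mathcal{D})$, observe that $\lambda\mathcal{R}^{\tau}\leq\mathcal{N}$ is equivalent to the existence of a completely positive remainder $\mathcal{M}'=\mathcal{N}-\lambda\mathcal{R}^{\tau}$, and then use trace preservation of $\mathcal{N}$ (together with $\operatorname{Tr}[\tau]=1$) to conclude $\lambda\in[0,1]$ and to renormalize $\mathcal{M}'$ into a CPTP map, treating $\lambda=1$ as the replacer-channel boundary case. The $\lambda=0$ case you flag is harmless, since $X=0$ already achieves value zero with $\mathcal{M}=\mathcal{N}$, so no separate limiting argument is needed.
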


\begin{proof}
Consider that
\begin{align}
\alpha(\mathcal{N})  & =\sup_{\substack{\lambda\geq0,\\\tau\in
\operatorname{aff}(\mathcal{D})}}\left\{  \lambda:\lambda\mathcal{R}^{\tau
}\leq\mathcal{N}\right\}  \\
& =\sup_{\substack{\lambda\geq0,\\\tau\in\operatorname{aff}(\mathcal{D}
),\\\mathcal{M}^{\prime}\in\operatorname{CP}}}\left\{  \lambda:\lambda
\mathcal{R}^{\tau}+\mathcal{M}^{\prime}=\mathcal{N}\right\}  \\
& =\sup_{\substack{\lambda\in\left[  0,1\right]  ,\\\tau\in
\operatorname{aff}(\mathcal{D}),\\\mathcal{M}\in\operatorname{CPTP}}}\left\{
\lambda:\mathcal{N}=\lambda\mathcal{R}^{\tau}+\left(  1-\lambda\right)
\mathcal{M}\right\}  .
\end{align}
The first equality follows from~\eqref{eq:key-robustness-step}. The second equality follows because the
inequality $\lambda\mathcal{R}^{\tau}\leq\mathcal{N}$ is equivalent to the
existence of a completely positive map $\mathcal{M}^{\prime}$ such that
$\mathcal{M}^{\prime}=\mathcal{N}-\lambda\mathcal{R}^{\tau}$. The final
equality follows because the trace-preserving constraint on $\mathcal{N}$
implies that the following equality holds for every state $\rho$:
\begin{align}
1  & =\operatorname{Tr}[\rho]=\operatorname{Tr}[\mathcal{N}(\rho)]\\
& =\operatorname{Tr}[\left(  \lambda\mathcal{R}^{\tau}+\mathcal{M}^{\prime
}\right)  (\rho)]\\
& =\lambda\operatorname{Tr}[\mathcal{R}^{\tau}(\rho)]+\operatorname{Tr}
[\mathcal{M}^{\prime}(\rho)]\\
& =\lambda\operatorname{Tr}[\tau]+\operatorname{Tr}[\mathcal{M}^{\prime}
(\rho)]\\
& =\lambda+\operatorname{Tr}[\mathcal{M}^{\prime}(\rho)]. \label{eq:reverse-robustness-normalization-argument}
\end{align}
Since $\mathcal{M}^{\prime}$ is completely positive, it follows that
$\operatorname{Tr}[\mathcal{M}^{\prime}(\rho)]\geq0$, so that $\lambda
\in\left[  0,1\right]  $. Furthermore, it then follows that $\operatorname{Tr}
[\mathcal{M}^{\prime}(\rho)]=1-\lambda\in\left[  0,1\right]  $ for every input
state $\rho$. Thus, in the case that $\lambda\in\lbrack0,1)$, it follows that
$\mathcal{M}\coloneqq \frac{1}{1-\lambda}\mathcal{M}^{\prime}$ is a completely
positive trace-preserving map, and so the last equality holds, concluding the
proof for $\lambda\in\lbrack0,1)$. In the case that $\lambda=1$, it follows
that $\operatorname{Tr}[\mathcal{M}^{\prime}(\rho)]=0$ for every input state
$\rho$, and thus $\mathcal{M}^{\prime}$ is the zero map in this case (i.e.,
such that $\mathcal{M}^{\prime}(\rho)=0$ for every input state). Thus,
$\mathcal{N}=\mathcal{R}^{\tau}$ in this case (i.e., $\mathcal{N}$ is a
replacer channel), concluding the proof.
\end{proof}

\subsection{Reverse Hypothesis Testing Mutual Information}

In this section, we establish a relationship between the Doeblin coefficient and the extended hypothesis testing relative entropy defined in~\eqref{eq:alpha_cN_def} and~\eqref{eq:extended_ht_Relative_Entropy}, respectively. Indeed, expressing the Doeblin coefficient in this way is helpful for relating it to other distinguishability measures, such as the extended sandwiched R\'enyi relative entropy, as done in \cref{app:relate-to-extended-sandwiched}.

\begin{proposition} \label{thm:alpha_N_equality_Hypo_Test}
    Let $\cN$ be a quantum channel. We have that
    \begin{equation}
-\ln  \alpha(\mathcal{N})   = -2 \ln d + \inf_{\tau
\in\operatorname{aff}(\mathcal{D})}D_{H}^{1-\frac{1}{d^{2}}}\!\left(\pi_{d}
\otimes\tau\Vert\Phi^{\mathcal{N}}\right),
\label{eq:reverse-hypothesis-testing-MI}
\end{equation}
where $\operatorname{aff}(\mathcal{D})$ is the affine hull of the set of density operators
(i.e., Hermitian operators with unit trace), $D_{H}^{1-\frac{1}{d^{2}}}$ is given in~\eqref{eq:extended_ht_Relative_Entropy}, $\pi_d \coloneqq I/d$, and $\Phi^\cN \coloneqq \frac{1}{d} \Gamma_{AB}^\cN$ with $\Gamma_{AB}^\cN$ in~\eqref{eq:choi_operator}.
\end{proposition}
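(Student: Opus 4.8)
The plan is to convert the right-hand side into a semidefinite program by invoking the dual formulation of the hypothesis testing divergence, and then to recognize this program as the primal SDP for $\alpha(\mathcal{N})$ augmented with a penalized slack variable that turns out not to change the optimal value.

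First I would apply the dual SDP~\eqref{eq:dual-SDP-hypo-test-rel-ent} for $D_H^{\varepsilon}$ with $\varepsilon = 1-\frac{1}{d^2}$, first argument $\pi_d\otimes\tau$, and second argument $\Phi^{\mathcal{N}} = \frac{1}{d}\Gamma^{\mathcal{N}}_{AB}$, together with the identity $\exp(-\inf_\tau(\cdot)) = \sup_\tau\exp(-(\cdot))$, to obtain
\begin{align*}
d^2\exp\!\left(-\inf_{\tau\in\operatorname{aff}(\mathcal{D})}D_H^{1-\frac{1}{d^2}}(\pi_d\otimes\tau\Vert\Phi^{\mathcal{N}})\right) = \sup_{\substack{\tau\in\operatorname{aff}(\mathcal{D}),\\ \mu\geq0,\ Z\geq0}}\left\{\mu - d^2\operatorname{Tr}[Z] \;:\; \mu(\pi_d\otimes\tau)\leq\tfrac{1}{d}\Gamma^{\mathcal{N}}_{AB}+Z\right\}.
\end{align*}
Then, substituting $X\coloneqq\mu\tau$ (so that $\operatorname{Tr}[X]=\mu$; since $X=0$ is always feasible and every $X$ with $\operatorname{Tr}[X]<0$ is dominated by it, one may relax the optimization to arbitrary $X\in\operatorname{Herm}(B)$) and $W\coloneqq dZ\geq 0$, and multiplying the constraint through by $d$ using $d\,\pi_d=I_A$, the right-hand side becomes
\begin{align*}
Q \coloneqq \sup_{\substack{X\in\operatorname{Herm}(B),\\ W\geq0}}\left\{\operatorname{Tr}[X] - d\operatorname{Tr}[W] \;:\; I_A\otimes X\leq\Gamma^{\mathcal{N}}_{AB}+W\right\},
\end{align*}
so it remains to prove $Q = \alpha(\mathcal{N})$.

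The inequality $Q\geq\alpha(\mathcal{N})$ is immediate, because any $X$ feasible for the primal SDP~\eqref{eq:alpha_cN_def} gives the feasible point $(X,W=0)$ for $Q$ with the same objective value. For the reverse inequality I would take any $Y_{AB}\geq0$ with $\operatorname{Tr}_A[Y_{AB}]=I_B$ (feasible for the dual SDP~\eqref{eq:alpha-cN-dual}) and any $(X,W)$ feasible for $Q$, and bound $\operatorname{Tr}[X] = \operatorname{Tr}[Y_{AB}(I_A\otimes X)] \leq \operatorname{Tr}[Y_{AB}\Gamma^{\mathcal{N}}_{AB}] + \operatorname{Tr}[Y_{AB}W]$. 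The crux is a lemma stating that every such $Y_{AB}$ satisfies $Y_{AB}\leq d\,I_{AB}$: decomposing a unit vector as $|\phi\rangle_{AB}=\sum_{a=1}^{d}|a\rangle_A\otimes|\phi_a\rangle_B$ and writing $Y_a\coloneqq(\langle a|_A\otimes I_B)Y_{AB}(|a\rangle_A\otimes I_B)$, one has $Y_a\geq0$ and $\sum_a Y_a=\operatorname{Tr}_A[Y_{AB}]=I_B$, hence $Y_a\leq I_B$; the Cauchy--Schwarz inequality for the semi-inner product induced by $Y_{AB}\geq0$ then gives $\langle\phi|Y_{AB}|\phi\rangle\leq\big(\sum_a\sqrt{\langle\phi_a|Y_a|\phi_a\rangle}\big)^2\leq\big(\sum_a\sqrt{\langle\phi_a|\phi_a\rangle}\big)^2\leq d$ by a final application of Cauchy--Schwarz. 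Since $W\geq0$ and $Y_{AB}\leq dI_{AB}$, it follows that $\operatorname{Tr}[Y_{AB}W]\leq d\operatorname{Tr}[W]$, so $\operatorname{Tr}[X]-d\operatorname{Tr}[W]\leq\operatorname{Tr}[Y_{AB}\Gamma^{\mathcal{N}}_{AB}]$; infimizing over feasible $Y_{AB}$ yields $Q\leq\alpha(\mathcal{N})$.

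Combining the two inequalities gives $\alpha(\mathcal{N}) = Q = d^2\exp(-\inf_\tau D_H^{1-1/d^2}(\pi_d\otimes\tau\Vert\Phi^{\mathcal{N}}))$, and taking $-\ln$ of both sides produces the stated identity (with the conventions $-\ln 0=+\infty$, $\exp(-\infty)=0$ covering the case $\alpha(\mathcal{N})=0$). I expect the main obstacle to be the lemma $Y_{AB}\leq d\,I_{AB}$ — equivalently, the assertion that the slack term $W$ cannot improve the value of the program defining $Q$ — whereas the remaining steps are routine manipulations of the $D_H$ dual SDP and rescaling; it also seems worth double-checking the normalization factor $d^2$ by testing against the replacer-channel example $\mathcal{N}=\mathcal{R}^{\sigma}$, where $\alpha(\mathcal{N})=1$.
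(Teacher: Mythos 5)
Your proof is correct, but it takes a genuinely different route from the paper's. The paper works with the \emph{primal} form of $D_H^{1-1/d^2}$ throughout: one direction substitutes a dual-feasible $Y$ (rescaled as $Y/d$) as a measurement operator into the $D_H$ infimum, and the other direction starts from the primal SDP for $\alpha(\cN)$, introduces a Lagrange multiplier $\Lambda \geq 0$, performs a minimax exchange, and then observes that appending the constraint $\Lambda \leq I$ can only increase the resulting infimum. You instead dualize $D_H$ first via \eqref{eq:dual-SDP-hypo-test-rel-ent}, which collapses the whole right-hand side into a single explicit SDP $Q$ with a slack variable $W$; this makes the inequality $Q \geq \alpha(\cN)$ immediate (set $W=0$) and removes the need for any minimax argument. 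The two proofs do share the one nontrivial ingredient, namely that every $Y \geq 0$ with $\operatorname{Tr}_A[Y] = I_B$ satisfies $Y \leq d\, I_{AB}$: the paper derives this by twirling with the Heisenberg--Weyl operators and using $Y/d \leq \sum_i U^i (Y/d) (U^i)^\dagger$, whereas you prove it by decomposing $\ket{\phi}_{AB} = \sum_a \ket{a}\ket{\phi_a}$ and applying Cauchy--Schwarz twice; both arguments are valid, and the bound is tight (take $Y = d\,\Phi_d$). On balance your packaging is arguably cleaner, at the cost of relying on the quasi-state extension of the dual SDP for $D_H$, which the paper asserts in the preliminaries but does not spell out; the paper's route only ever uses the primal form of $D_H$, for which no such extension is needed.
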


\begin{proof}
    See~\cref{Proof:thm:alpha_N_equality_Hypo_Test}.
\end{proof}

\subsection{Interpretations of the Induced Quantum Doeblin Coefficient}

\label{sec:interpretations-alt-doeblin}

In this subsection, we show how to lift the interpretations for $\alpha(\cN)$ to $\alpha_{I}(\cN)$.

\subsubsection{Minimum Achievable Singlet Fraction}

 We begin by showing that $\alpha_{I,\cK}(\cN)$, as defined in~\eqref{eqn:inducedDoeblinDual_cone}, inherits a singlet fraction interpretation almost immediately in the same manner as~\cite[Proposition 12]{george2024cone}. This recovers~\cref{thm:doeblin_antiDistinguish} as a specific case. It also establishes $\alpha_{I}(\cN)$ as having the same singlet-fraction interpretation but with the minimization over \textit{entanglement-breaking channels} \cite{HSR03}.
 
 \begin{proposition}\label{prop:singlet-fraction-of-cone-induced-do}
     Let $\cK(A:B) \subset \Pos(A \otimes B)$ be such that $I_{AB}$ is in the relative interior of $\cK(A:B)$. Then,
     \begin{align}
         \alpha_{I,\cK}(\cN) \coloneqq  d^{2} \min_{\cE \in \operatorname{CPTP}(B,A): \Gamma^{\cE^{\dagger}} \in \cK} \tr[\Phi_d (\operatorname{id}_{A} \otimes (\cE \circ \cN))(\Phi_d)] \ ,
     \end{align}
     where $d$ denotes the dimension of the input system of $\mathcal{N}$.
 In particular, 
 \begin{align}
 \alpha_{I}(\cN) = d^{2} \min_{\cE \in \operatorname{EB}} \Tr[\Phi_{d}(\operatorname{id}_{A} \otimes (\cE \circ \cN))(\Phi_{d})] \ ,
 \label{eq:min-singlet-frac-interp-induced-doeblin}
 \end{align} 
 where $\operatorname{EB}$ denotes the set of entanglement-breaking channels.
  \end{proposition}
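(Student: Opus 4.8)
The plan is to mimic the proof of \cite[Proposition 12]{george2024cone} by working with the dual conic program \eqref{eqn:inducedDoeblinDual_cone} and identifying feasible points $Y$ with Choi operators of suitable channels. First I would start from
\begin{equation}
\alpha_{I,\cK}(\cN) = \inf_{Y \in \cK}\left\{ \Tr[Y\Gamma^{\cN}] : \Tr_A[Y] = I_B \right\},
\end{equation}
which holds as an equality by the strong-duality discussion following \eqref{eqn:inducedDoeblinDual_cone} (recall $I_{AB} \in \operatorname{relint}(\cK)$ guarantees Slater's condition). The key observation is that the two dual constraints $Y \geq 0$ (since $\cK \subseteq \Pos(A\otimes B)$) and $\Tr_A[Y] = I_B$ say exactly that $Y$ is the Choi operator $\Gamma^{\cM}_{AB}$ of a completely positive unital map $\cM_{A\to B}$, and conversely every CPU map gives such a $Y$; the additional constraint $Y \in \cK$ then restricts to those CPU maps whose Choi operator lies in $\cK$.

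Next I would translate from CPU maps to CPTP maps via the Hilbert--Schmidt adjoint, exactly as in the proof of \cref{thm:doeblin_antiDistinguish}. Writing $\cE = \cM^{\dagger}$, the map $\cM$ is CPU if and only if $\cE$ is CPTP (from $B$ to $A$), and the cone membership becomes $\Gamma^{\cE^{\dagger}} = \Gamma^{\cM} = Y \in \cK$. Using $\Tr[\Gamma^{\cM}_{AB}\Gamma^{\cN}_{AB}] = d^2 \Tr[(\operatorname{id}\otimes\cM)(\Phi_d)\,(\operatorname{id}\otimes\cN)(\Phi_d)]$ and then moving the adjoint across the Hilbert--Schmidt inner product, this equals $d^2\Tr[\Phi_d\,(\operatorname{id}\otimes(\cE\circ\cN))(\Phi_d)]$, which gives the claimed formula. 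This chain of identities is the same computation as \eqref{eq:min-singlet-frac-proof-1}--\eqref{eq:min-singlet-frac-proof-last}, just carried out with the extra cone constraint attached at every step, so I would cite that proof rather than repeat it.

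For the second, more concrete statement \eqref{eq:min-singlet-frac-interp-induced-doeblin}, I would specialize to $\cK = \Sep(A:B)$, which by \cref{prop:induced-Doeblin} and \cref{prop:equality-of-induced-doeblins} is the cone computing $\alpha_I(\cN)$, and which satisfies $I_{AB} \in \operatorname{relint}(\Sep(A:B))$ by \cite{gurvits2002largest} (cited already in the proof of \cref{cor:entangling-power-in-terms-of-k-sym}). The only remaining point is the characterization of the feasible decoders: a channel $\cE$ has $\Gamma^{\cE^{\dagger}} \in \Sep(A:B)$ if and only if $\cE^{\dagger}$ — equivalently $\cE$ itself, since entanglement-breaking is preserved under taking adjoints for maps between these spaces — is entanglement-breaking, by the Choi-operator characterization of entanglement-breaking channels from \cite{HSR03}. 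Substituting $\cK = \Sep$ into the general formula and using this characterization yields \eqref{eq:min-singlet-frac-interp-induced-doeblin}.

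The main obstacle is the care needed around adjoints: one must verify that the cone constraint $\Gamma^{\cM} \in \cK$ on the CPU map $\cM$ is correctly rewritten as $\Gamma^{\cE^{\dagger}} \in \cK$ when passing to $\cE = \cM^{\dagger}$, and — for the entanglement-breaking specialization — that $\cE^{\dagger}$ entanglement-breaking is equivalent to $\cE$ entanglement-breaking (true because the transpose/adjoint of a measure-and-prepare map is again measure-and-prepare, swapping the roles of the POVM and the prepared states). Everything else is a routine transcription of the proof of \cref{thm:doeblin_antiDistinguish} with the cone constraint carried along, plus invoking the already-cited fact that $I_{AB}$ lies in the relative interior of the separable cone.
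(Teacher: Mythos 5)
Your proposal is correct and follows essentially the same route as the paper: invoke strong duality via $I_{AB}\in\operatorname{relint}(\cK)$, identify dual-feasible $Y$ with Choi operators of CPU maps carrying the cone constraint, pass to CPTP decoders via the Hilbert--Schmidt adjoint exactly as in the proof of \cref{thm:doeblin_antiDistinguish}, and then specialize $\cK=\Sep(A:B)$ using the fact that $\Gamma^{\cE^{\dagger}}\in\Sep(A:B)$ iff $\cE$ is entanglement-breaking. The only cosmetic difference is that the paper justifies this last equivalence via the relation $\Gamma^{\cE^{\dagger}}_{AB}=T_{AB}(\Gamma^{\cE}_{BA})$ and invariance of separability under transposition and swapping, whereas you argue directly from the measure-and-prepare structure of the adjoint; both are valid.
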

  
 \begin{proof}
     The condition that $I_{AB}$ is in the relative interior of $\cK(A:B)$ guarantees that strong duality holds, so that we may work with the dual problem. Note that $\tr_{A}[Y] = {I}_{B}$ implies that $Y = \Gamma^\cM$ for some completely positive, unital (CPU) map $\cM: A \to B$. Thus, similar to \eqref{eq:min-singlet-frac-proof-1}--\eqref{eq:min-singlet-frac-proof-last}, we have that
     \begin{align}
        \alpha_{I,\cK}(\cN) & = \min_{\cM \in \operatorname{CPU}: \Gamma^\cM \in \cK} \Tr[\Gamma^\cN \Gamma^\cM]  \\
        & = d \min_{\cM \in \operatorname{CPU}: \Gamma^\cM \in \cK} \tr[\Gamma^\cN(\operatorname{id}_{A} \otimes \cM)(\Phi_d)] \\
        & = d \min_{\cM \in \operatorname{CPU}: \Gamma^\cM  \in \cK} \tr[\cM^\dag (\Gamma^\cN) \Phi_d] \\ 
        & = d^2 \min_{\cM \in \operatorname{CPU}: \Gamma^\cM  \in \cK} \tr\!\left[\left(\cM^\dag_{B\to A} \circ \cN_{A \to B}(\Phi_d)\right )(\Phi_d)\right] \ . 
     \end{align}
    To conclude the proof, note that $\cM^\dag$ is a CPTP map as it is the adjoint of a CPU map and we replace optimizing over $\cM^{\dagger}$ with $\cE$ for notational clarity.

    To obtain the statement in \eqref{eq:min-singlet-frac-interp-induced-doeblin}, recall that $\Gamma^{\cE^{\dagger}}_{AB} = T_{AB}(\Gamma_{BA}^{\cE})$ where $T_{AB}$ the transpose map acting on the joint space, and $\Gamma_{BA}^{\cE}$ is $\Gamma^{\cE}_{AB}$, with the spaces swapped~\cite{george2024cone}. It follows that if $\Gamma^{\cE^{\dagger}} \in \Sep(A:B)$, then $\Gamma^{\cE} \in \Sep(A:B)$. As a CPTP map $\cE$ is entanglement breaking if and only if $\Gamma^{\cE} \in \Sep(A:B)$, this completes the proof.
 \end{proof}

\subsubsection{Max-oveloH Information}

In this section, we represent the induced Doeblin coefficient $\alpha_I(\mathcal{N})$ in terms of the max-oveloH information of the channel $\mathcal{N}$. This is a counterpart of the fact that the communication value of a channel can be written in terms of the max-Holevo information \cite[Theorem~1]{chitambar2022communication}.

\begin{proposition}
The following equality holds for a quantum channel $\mathcal{N}$:
\begin{equation}
\alpha_{I}(\mathcal{N})=\exp\!\left(  -\inf_{\tau\in\operatorname{aff}
(\mathcal{D})}\sup_{\rho \in \mathcal{D}} D_{\max}(\tau\Vert\mathcal{N}(\rho))\right)  .
\label{eq:max-oveloh-interp}
\end{equation}

\end{proposition}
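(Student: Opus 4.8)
The plan is to mirror, almost verbatim, the proof of \cref{prop:reverse-max-mutual-info} (reverse max-mutual information), the only structural change being that the completely-positive superoperator order $\mathcal{R}^X \leq \mathcal{N}$ is replaced by the positive-map order, i.e.\ the constraint $X \leq \mathcal{N}(\rho)$ for all $\rho \in \mathcal{D}$. Concretely, I would begin from the second expression in \cref{def:alt-doeblin-quantum-induced}, namely $\alpha_I(\mathcal{N}) = \sup_{X \in \operatorname{Herm}}\{\operatorname{Tr}[X] : \mathcal{R}^X(\rho) \leq \mathcal{N}(\rho)\ \forall \rho \in \mathcal{D}\}$, and use that $\mathcal{R}^X(\rho) = \operatorname{Tr}[\rho]\,X = X$ for every density operator $\rho$, so the constraint reads simply $X \leq \mathcal{N}(\rho)$ for all $\rho \in \mathcal{D}$.

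Next, since $X = 0$ is feasible, it suffices to optimize over Hermitian $X$ with $\operatorname{Tr}[X] \geq 0$, and any such $X$ can be written as $X = \lambda\tau$ with $\lambda \geq 0$ and $\tau \in \operatorname{aff}(\mathcal{D})$. This rewrites the program as $\alpha_I(\mathcal{N}) = \sup_{\lambda \geq 0,\, \tau \in \operatorname{aff}(\mathcal{D})}\{\lambda : \lambda\tau \leq \mathcal{N}(\rho)\ \forall \rho \in \mathcal{D}\}$. Substituting $\mu = 1/\lambda$ and inverting, exactly as in \cref{prop:reverse-max-mutual-info}, gives $\alpha_I(\mathcal{N}) = \left[\inf_{\mu \geq 0,\, \tau}\{\mu : \tau \leq \mu\mathcal{N}(\rho)\ \forall \rho \in \mathcal{D}\}\right]^{-1}$.

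The only genuinely new step — and the (mild) main obstacle — is interchanging the universal quantifier over $\rho$ with the max-relative entropy. For a fixed $\tau$, because $\mathcal{N}(\rho) \geq 0$, the map $\mu \mapsto \mu\mathcal{N}(\rho)$ is monotone in the PSD order, so the set $\{\mu \geq 0 : \tau \leq \mu\mathcal{N}(\rho)\}$ is the half-line $[\exp(D_{\max}(\tau \Vert \mathcal{N}(\rho))), \infty)$ by the definition of $D_{\max}$ in~\eqref{eq:dmax-1-herm-maps}. Intersecting over all $\rho \in \mathcal{D}$ and taking the infimum therefore yields $\sup_{\rho \in \mathcal{D}} \exp(D_{\max}(\tau \Vert \mathcal{N}(\rho))) = \exp(\sup_{\rho \in \mathcal{D}} D_{\max}(\tau \Vert \mathcal{N}(\rho)))$. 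Substituting this back in and taking the infimum over $\tau \in \operatorname{aff}(\mathcal{D})$ gives $\alpha_I(\mathcal{N}) = \left[\inf_{\tau} \exp(\sup_{\rho} D_{\max}(\tau \Vert \mathcal{N}(\rho)))\right]^{-1} = \exp(-\inf_{\tau} \sup_{\rho} D_{\max}(\tau \Vert \mathcal{N}(\rho)))$, which is the claimed identity.

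A final sanity check I would include: edge cases where $\sup_{\rho} D_{\max}(\tau \Vert \mathcal{N}(\rho)) = +\infty$ (e.g.\ $\tau$ not supported on $\mathcal{N}(\rho)$ for some $\rho$) are handled with the convention $\exp(-\infty) = 0$, and such $\tau$ simply do not attain the outer infimum; the feasibility of $X = 0$ ensures $\alpha_I(\mathcal{N}) \geq 0$, consistent with the right-hand side being nonnegative. No interchange-of-sup-and-inf is needed anywhere, so strong duality is not invoked, and the argument is unconditional in the channel dimensions.
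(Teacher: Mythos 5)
Your proof is correct, and it reaches the result by the same overall reduction as the paper (restricting to $X=\lambda\tau$ with $\lambda\ge 0$ and $\tau\in\operatorname{aff}(\mathcal{D})$, inverting to an infimum over $\mu$, and identifying the result with $D_{\max}$), but it handles the one genuinely delicate step differently. The paper first rewrites the universal quantifier over $\rho$ as a supremum inside the constraint (their step \eqref{eq:for-all-rho-to-sup-rho}) and then invokes the extended Sion minimax theorem of \cite{BB2023} to interchange $\inf_{\mu\ge 0}$ with $\sup_{\rho\in\mathcal{D}}$ in \eqref{eq:minimax-switch}, which requires verifying linearity of the objective and convexity/compactness of the relevant feasible sets. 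You instead observe that, for fixed $\tau$ and $\rho$, the set $\{\mu\ge 0:\tau\le\mu\mathcal{N}(\rho)\}$ is a closed half-line with left endpoint $\exp(D_{\max}(\tau\Vert\mathcal{N}(\rho)))$ by \eqref{eq:dmax-1-herm-maps}, so the feasible set under the universal quantifier is the intersection $\bigl[\sup_{\rho}\exp(D_{\max}(\tau\Vert\mathcal{N}(\rho))),\infty\bigr)$, whose infimum is immediate. This is more elementary and unconditional: no minimax theorem is needed, and the edge cases (empty feasible sets, $D_{\max}=+\infty$) are handled transparently by the convention $\inf\emptyset=+\infty$. The trade-off is minor: the paper's minimax route is a template that generalizes to objectives that are not monotone in $\mu$, whereas your argument exploits the specific fact that the constraint set in $\mu$ is an up-set; for this proposition your version is the cleaner of the two.
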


\begin{proof}
Consider that
\begin{align}
\alpha_{I}(\mathcal{N}) &  =\sup_{X\in\operatorname{Herm}}\left\{
\operatorname{Tr}[X]:\mathcal{R}^{X}\leq_{P}\mathcal{N}\right\}  \\
&  =\sup_{\substack{\lambda\geq0,\\\tau\in\operatorname{aff}(\mathcal{D}
)}}\left\{  \operatorname{Tr}[\lambda\tau]:\mathcal{R}^{\lambda\tau}\leq
_{P}\mathcal{N}\right\}  \\
&  =\sup_{\substack{\lambda\geq0,\\\tau\in\operatorname{aff}(\mathcal{D}
)}}\left\{  \lambda:\lambda\mathcal{R}^{\tau}\leq_{P}\mathcal{N}\right\}  \\
&  =\sup_{\substack{\mu\geq0,\\\tau\in\operatorname{aff}(\mathcal{D}
)}}\left\{  \frac{1}{\mu}:\frac{1}{\mu}\mathcal{R}^{\tau}\leq_{P}
\mathcal{N}\right\}  \\
&  =\left[  \inf_{\substack{\mu\geq0,\\\tau\in\operatorname{aff}(\mathcal{D}
)}}\left\{  \mu:\mathcal{R}^{\tau}\leq_{P}\mu\mathcal{N}\right\}  \right]
^{-1}\\
&  =\left[  \inf_{\substack{\mu\geq0,\\\tau\in\operatorname{aff}(\mathcal{D}
)}}\left\{  \mu:\mathcal{R}^{\tau}(\rho)\leq\mu\mathcal{N}(\rho)\ \forall
\rho\in\mathcal{D}\right\}  \right]  ^{-1}\\
&  =\left[  \inf_{\tau\in\operatorname{aff}(\mathcal{D})}\inf_{\mu\geq
0}\left\{  \mu:\tau\leq\mu\mathcal{N}(\rho)\ \forall\rho\in\mathcal{D}
\right\}  \right]  ^{-1}\\
&  =\left[  \inf_{\tau\in\operatorname{aff}(\mathcal{D})}\inf_{\mu\geq0}
\sup_{\rho\in\mathcal{D}}\left\{  \mu:\tau\leq\mu\mathcal{N}(\rho)\right\}
\right]  ^{-1} \label{eq:for-all-rho-to-sup-rho} \\
&  =\left[  \inf_{\tau\in\operatorname{aff}(\mathcal{D})}\sup_{\rho
\in\mathcal{D}}\inf_{\mu\geq0}\left\{  \mu:\tau\leq\mu\mathcal{N}
(\rho)\right\}  \right]  ^{-1} \label{eq:minimax-switch}\\
&  =\left[  \inf_{\tau\in\operatorname{aff}(\mathcal{D})}\sup_{\rho
\in\mathcal{D}}\exp\!\left(  D_{\max}(\tau\Vert\mathcal{N}(\rho))\right)
\right]  ^{-1}\\
&  =\exp\!\left(  -\inf_{\tau\in\operatorname{aff}(\mathcal{D})}\sup_{\rho
\in\mathcal{D}}D_{\max}(\tau\Vert\mathcal{N}(\rho))\right)  .
\end{align}
The equality in \eqref{eq:for-all-rho-to-sup-rho} holds because the constraint should hold for all $\rho \in \mathcal{D}$. The equality in \eqref{eq:minimax-switch} holds by applying the extended Sion minimax theorem from \cite[Theorem~2.11]{BB2023}; indeed, the objective function is linear in $\mu$ and $\rho$, the set of $\mu$ satisfying $\mu \geq 0, \tau \leq \mu \mathcal{N}(\rho)$ is convex, and the set of $\rho $ satisfying $\rho \in \mathcal{D},  \tau \leq \mu \mathcal{N}(\rho)$ is convex and compact. The penultimate equality follows from applying the definition of $D_{\max} $ in \eqref{eq:dmax-1-herm-maps}.
\end{proof}

\subsubsection{Reverse Robustness}

Here we write the induced Doeblin coefficient $\alpha_I(\mathcal{N})$ as a reverse robustness, analogous to what we found in \cref{prop:reverse-robustness}. The main difference between \cref{prop:induced-doeblin-reverse-rob} and \cref{prop:reverse-robustness} is that \cref{prop:induced-doeblin-reverse-rob} involves an optimization over the set of positive, trace-preserving maps. 

\begin{proposition}
\label{prop:induced-doeblin-reverse-rob}
The following equality holds for a quantum channel $\mathcal{N}$:
\begin{equation}
\alpha_I(\mathcal{N})=\sup_{\substack{\lambda\in\left[  0,1\right]  ,\\\tau
\in\operatorname{aff}(\mathcal{D}),\\\mathcal{M}\in\operatorname*{PTP}
}}\left\{  \lambda:\mathcal{N}=\lambda\mathcal{R}^{\tau}+\left(
1-\lambda\right)  \mathcal{M}\right\}  ,
\label{eq:reverse-robustness-induced-doeblin}
\end{equation}
where $\operatorname{PTP}$ denotes the set of positive, trace-preserving maps. 
\end{proposition}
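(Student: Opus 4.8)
The plan is to mirror the proof of \cref{prop:reverse-robustness} step by step, replacing the completely positive cone by the positive cone throughout. First I would start from the characterization in \cref{def:alt-doeblin-quantum-induced}, namely $\alpha_I(\mathcal{N})=\sup_{X\in\operatorname{Herm}}\{\operatorname{Tr}[X]:\mathcal{R}^X\leq_P\mathcal{N}\}$. Since $X=0$ is feasible (as $\mathcal{N}$ is a positive map), it suffices to optimize over $X$ with $\operatorname{Tr}[X]\geq 0$, and every such $X$ can be written $X=\lambda\tau$ with $\lambda\geq 0$ and $\tau\in\operatorname{aff}(\mathcal{D})$. This yields $\alpha_I(\mathcal{N})=\sup_{\lambda\geq 0,\,\tau\in\operatorname{aff}(\mathcal{D})}\{\lambda:\lambda\mathcal{R}^\tau\leq_P\mathcal{N}\}$, exactly the analogue of~\eqref{eq:key-robustness-step}.

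Next I would observe that the superoperator inequality $\lambda\mathcal{R}^\tau\leq_P\mathcal{N}$ is, by definition of $\leq_P$, equivalent to the existence of a positive map $\mathcal{M}'$ with $\mathcal{M}'=\mathcal{N}-\lambda\mathcal{R}^\tau$ (this is the only place where the positive cone replaces the CP cone). Hence $\alpha_I(\mathcal{N})=\sup_{\lambda\geq 0,\,\tau\in\operatorname{aff}(\mathcal{D}),\,\mathcal{M}'\in\operatorname{Pos}}\{\lambda:\lambda\mathcal{R}^\tau+\mathcal{M}'=\mathcal{N}\}$. Then I would run the same trace-preservation argument as in \cref{prop:reverse-robustness}: for any density operator $\rho$, applying $\operatorname{Tr}$ to $\mathcal{N}(\rho)=\lambda\mathcal{R}^\tau(\rho)+\mathcal{M}'(\rho)$ and using $\operatorname{Tr}[\mathcal{N}(\rho)]=1$, $\operatorname{Tr}[\tau]=1$ gives $\operatorname{Tr}[\mathcal{M}'(\rho)]=1-\lambda$ for all $\rho$. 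Since $\mathcal{M}'$ is positive, $\operatorname{Tr}[\mathcal{M}'(\rho)]\geq 0$, forcing $\lambda\in[0,1]$; moreover $\mathcal{M}'$ has constant output trace $1-\lambda$ on all states.

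Finally, for $\lambda\in[0,1)$ I would set $\mathcal{M}\coloneqq\frac{1}{1-\lambda}\mathcal{M}'$, which is then a positive, trace-preserving (PTP) map, giving the decomposition $\mathcal{N}=\lambda\mathcal{R}^\tau+(1-\lambda)\mathcal{M}$ as claimed; and for $\lambda=1$, $\operatorname{Tr}[\mathcal{M}'(\rho)]=0$ together with positivity of $\mathcal{M}'$ forces $\mathcal{M}'(\rho)=0$ for all $\rho$, so $\mathcal{N}=\mathcal{R}^\tau$ is itself a (trace-preserving) replacer map and any PTP map can fill the $(1-\lambda)$ slot trivially. Conversely, any feasible $(\lambda,\tau,\mathcal{M})$ for the right-hand side yields $\lambda\mathcal{R}^\tau=\mathcal{N}-(1-\lambda)\mathcal{M}\leq_P\mathcal{N}$ since $(1-\lambda)\mathcal{M}$ is positive, so the two suprema coincide. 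I do not expect a genuine obstacle here: the only subtlety, as in \cref{prop:reverse-robustness}, is handling the boundary case $\lambda=1$ carefully and confirming that positivity (rather than complete positivity) of $\mathcal{M}'$ is exactly what is preserved under the rescaling $\mathcal{M}'\mapsto\frac{1}{1-\lambda}\mathcal{M}'$, which it is.
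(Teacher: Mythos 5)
Your proposal is correct and follows exactly the route the paper takes: the paper's own proof simply states that the argument is identical to that of \cref{prop:reverse-robustness} with the completely positive ordering replaced by the positive ordering $\leq_P$, which is precisely the step-by-step adaptation you carry out (including the correct handling of the $\lambda=1$ boundary case and the converse inclusion).
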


\begin{proof}
The proof is identical to the proof of \cref{prop:reverse-robustness}, with the main exception being
that we use the ordering induced by positive maps rather than completely
positive maps.
\end{proof}

\subsubsection{Hypothesis-testing oveloH information}

Here we write the induced Doeblin coefficient $\alpha_I(\mathcal{N})$ in terms of the hypothesis testing oveloH information of the channel $\mathcal{N}$, analogously to what we did in \cref{thm:alpha_N_equality_Hypo_Test} for the Doeblin coefficient $\alpha(\mathcal{N})$.

\begin{proposition}
The following equality holds for a quantum channel $\mathcal{N}$:
\begin{equation}
-\ln\alpha_{I}(\mathcal{N})=-2\ln d+\sup_{\left(  \psi^{x}\right)
_{x=1}^{d^{2}}}\inf_{\tau\in\operatorname{aff}(\mathcal{D})}D_{H}^{1-\frac
{1}{d^{2}}}(\rho_{X}\otimes\tau_{B}\Vert\rho_{XB}),
\label{eq:hypo-test-oveloh}
\end{equation}
where $d$ is the dimension of the channel input system and each $\psi_{A}^{x}$ is a pure state for all $x\in\left[  d^{2}\right]  $
and
\begin{equation}
\rho_{XB}\coloneqq \frac{1}{d^{2}}\sum_{x=1}^{d^{2}}|x\rangle\!\langle x|\otimes
\mathcal{N}_{A\rightarrow B}\left(  \psi_{A}^{x}\right)  .
\end{equation}

\end{proposition}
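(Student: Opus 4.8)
The plan is to trace the identity \eqref{eq:hypo-test-oveloh} back to the quantum state exclusion (exclusion value) representation of $\alpha_I(\mathcal{N})$, and then to recognize the optimal exclusion error probability for each fixed ensemble as a hypothesis-testing quantity via SDP duality. By \cref{prop:induced-Doeblin-structure} (together with \cref{prop:induced-Doeblin}),
\[
\alpha_I(\mathcal{N}) \;=\; \inf_{(\psi^x_A)_{x=1}^{d^2}}\ \inf_{(\Lambda_x)_{x=1}^{d^2}\in\operatorname{POVM}}\ \sum_{x}\operatorname{Tr}[\Lambda_x\,\mathcal{N}_{A\to B}(\psi^x_A)],
\]
where each $\psi^x_A$ is pure and the cardinality is the one furnished by that representation. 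Introducing the classical register $X$ and writing $\Lambda_{XB}\coloneqq\sum_x|x\rangle\!\langle x|\otimes\Lambda_x$ together with $\rho_{XB}$ as in the statement, this equals $\inf_{(\psi^x)}d^2\,\inf_{(\Lambda_x)\in\operatorname{POVM}}\operatorname{Tr}[\Lambda_{XB}\rho_{XB}]$. Hence it suffices to prove, for each \emph{fixed} tuple $(\psi^x)_x$, the single-ensemble identity
\[
\inf_{(\Lambda_x)\in\operatorname{POVM}}\operatorname{Tr}[\Lambda_{XB}\rho_{XB}] \;=\; \sup_{\tau\in\operatorname{aff}(\mathcal{D})}\exp\!\left(-D_H^{1-\frac{1}{d^2}}(\rho_X\otimes\tau_B\Vert\rho_{XB})\right);
\]
taking the infimum over $(\psi^x)_x$, using $\inf_a d^2\exp(-g(a))=d^2\exp(-\sup_a g(a))$, and applying $-\ln$ then yields \eqref{eq:hypo-test-oveloh}.

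For the single-ensemble identity, the inequality ``$\ge$'' is immediate from the minimization form \eqref{eq:extended_ht_Relative_Entropy} of $D_H$: any POVM $(\Lambda_x)$ gives a test $\Lambda_{XB}$ with $0\le\Lambda_{XB}\le I$ and $\operatorname{Tr}[\Lambda_{XB}(\rho_X\otimes\tau_B)]=\tfrac{1}{d^2}\operatorname{Tr}[(\sum_x\Lambda_x)\tau_B]=\tfrac{1}{d^2}=1-(1-\tfrac{1}{d^2})$, so $\Lambda_{XB}$ is feasible in \eqref{eq:extended_ht_Relative_Entropy} and $\exp(-D_H^{1-1/d^2}(\rho_X\otimes\tau_B\Vert\rho_{XB}))\le\operatorname{Tr}[\Lambda_{XB}\rho_{XB}]$ for every $\tau\in\operatorname{aff}(\mathcal{D})$; taking $\sup_\tau$ and then $\inf$ over POVMs gives ``$\ge$''. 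No pinching is needed here.

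For the reverse inequality, I pass to the SDP dual of the state-exclusion problem,
\[
\inf_{(\Lambda_x)\in\operatorname{POVM}}\sum_x\operatorname{Tr}[\Lambda_x\mathcal{N}(\psi^x)]\;=\;\max_{\mu\in\operatorname{Herm}(B)}\left\{\operatorname{Tr}[\mu]\,:\,\mu\le\mathcal{N}(\psi^x)\ \forall x\right\},
\]
where strong duality holds by Slater's condition (e.g. $\Lambda_x=I/d^2$ is strictly primal feasible) and the maximum is attained by some $\mu^\star$, which obeys $\operatorname{Tr}[\mu^\star]\ge 0$ since $\mu=0$ is feasible. If $\operatorname{Tr}[\mu^\star]=0$ the desired inequality is trivial (both sides equal $0$). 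Otherwise set $\tau_B\coloneqq\mu^\star/\operatorname{Tr}[\mu^\star]\in\operatorname{aff}(\mathcal{D})$; then $\operatorname{Tr}[\mu^\star]\cdot(\rho_X\otimes\tau_B)=\tfrac{1}{d^2}\sum_x|x\rangle\!\langle x|\otimes\mu^\star\le\tfrac{1}{d^2}\sum_x|x\rangle\!\langle x|\otimes\mathcal{N}(\psi^x)=\rho_{XB}$ because $\mu^\star\le\mathcal{N}(\psi^x)$ for all $x$. Thus the pair consisting of the scalar $\operatorname{Tr}[\mu^\star]$ and $Z=0$ is feasible in the dual formulation \eqref{eq:dual-SDP-hypo-test-rel-ent} of $D_H^{1-1/d^2}(\rho_X\otimes\tau_B\Vert\rho_{XB})$, giving $\exp(-D_H^{1-1/d^2}(\rho_X\otimes\tau_B\Vert\rho_{XB}))\ge\operatorname{Tr}[\mu^\star]/d^2$; multiplying by $d^2$ and recalling $\operatorname{Tr}[\mu^\star]=\inf_{(\Lambda_x)}\sum_x\operatorname{Tr}[\Lambda_x\mathcal{N}(\psi^x)]=d^2\inf_{(\Lambda_x)}\operatorname{Tr}[\Lambda_{XB}\rho_{XB}]$ closes the single-ensemble identity.

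I expect the main obstacle to be care rather than conceptual novelty: one must set up and justify the state-exclusion SDP duality and the attainment of $\mu^\star$, and crucially exploit that $\tau_B$ need only lie in $\operatorname{aff}(\mathcal{D})$ (a trace-one Hermitian, not necessarily positive), which is exactly what makes the construction $\tau_B=\mu^\star/\operatorname{Tr}[\mu^\star]$ legitimate and is the structural reason $\alpha_I$ receives this ``reverse'' interpretation. A secondary point is to pin down the precise cardinality in the exclusion-value representation of $\alpha_I(\mathcal{N})$ (written $d^2$ here): this is what fixes the prefactor $-2\ln d$ and the parameter $1-\tfrac{1}{d^2}$, and one should check against \cref{prop:induced-Doeblin-structure} that the chosen cardinality indeed realizes $\alpha_I(\mathcal{N})$.
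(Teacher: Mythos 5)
Your proof is correct, and its skeleton coincides with the paper's: both start from the exclusion-value representation $\alpha_I(\mathcal{N})=\inf_{(\psi^x,\Lambda^x)_{x=1}^{d^2}}\sum_x\operatorname{Tr}[\Lambda^x_B\mathcal{N}(\psi^x_A)]$ furnished by \cref{prop:induced-Doeblin-structure}, pull out the $d^2$ prefactor, and reduce to a per-ensemble identity between the minimal uniform-prior exclusion error and $\inf_{\tau\in\operatorname{aff}(\mathcal{D})}D_H^{1-1/d^2}(\rho_X\otimes\tau_B\Vert\rho_{XB})$. The difference is how that per-ensemble identity is handled: the paper simply cites an external result (Proposition~7 of the barycentric R\'enyi paper \cite{ji2024barycentric}), whereas you prove it from scratch — the ``$\ge$'' direction by exhibiting $\Lambda_{XB}=\sum_x|x\rangle\!\langle x|\otimes\Lambda_x$ as a feasible test in \eqref{eq:extended_ht_Relative_Entropy} with $\operatorname{Tr}[\Lambda_{XB}(\rho_X\otimes\tau)]=1/d^2$, and the ``$\le$'' direction by passing to the Yuen--Kennedy--Lax-type dual $\max\{\operatorname{Tr}[\mu]:\mu\le\mathcal{N}(\psi^x)\ \forall x\}$ of the exclusion SDP and feeding $\bigl(\operatorname{Tr}[\mu^\star],Z=0\bigr)$ with $\tau=\mu^\star/\operatorname{Tr}[\mu^\star]$ into the dual form \eqref{eq:dual-SDP-hypo-test-rel-ent} of $D_H$. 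Your version is self-contained and makes transparent exactly why the affine hull (rather than $\mathcal{D}$) is forced — $\mu^\star$ need not be positive semidefinite — at the cost of a page of SDP bookkeeping that the citation avoids; the Slater argument, the attainment of $\mu^\star$, and the degenerate case $\operatorname{Tr}[\mu^\star]=0$ are all handled correctly. One minor caveat you rightly flag: \cref{prop:induced-Doeblin-structure} fixes the cardinality at $d_B^2$ (output dimension), while the proposition's statement declares $d$ to be the input dimension; these agree only when $d_A=d_B$, and this wrinkle is present in the paper's own proof as well, so it is not a defect of your argument.
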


\begin{proof}
Consider that
\begin{align}
-\ln\alpha_{I}(\mathcal{N})  & =-\ln\inf_{\left(  \psi^{x},\Lambda^{x}\right)
_{x=1}^{d^{2}}}\sum_{x=1}^{d^{2}}\operatorname{Tr}[\Lambda_{B}^{x}
\mathcal{N}_{A\rightarrow B}(\psi_{A}^{x})]\\
& =-\ln\!\left(  d^{2}\inf_{\left(  \psi^{x}\right)  _{x=1}^{d^{2}}}
\inf_{\left(  \Lambda^{x}\right)  _{x=1}^{d^{2}}}\sum_{x=1}^{d^{2}}\frac
{1}{d^{2}}\operatorname{Tr}[\Lambda_{B}^{x}\mathcal{N}_{A\rightarrow B}
(\psi_{A}^{x})]\right)  \\
& =-2\ln d+\sup_{\left(  \psi^{x}\right)  _{x=1}^{d^{2}}}\left(  -\ln
\inf_{\left(  \Lambda^{x}\right)  _{x=1}^{d^{2}}}\sum_{x=1}^{d^{2}}\frac
{1}{d^{2}}\operatorname{Tr}[\Lambda_{B}^{x}\mathcal{N}_{A\rightarrow B}
(\psi_{A}^{x})]\right)  \\
& =-2\ln d+\sup_{\left(  \psi^{x}\right)  _{x=1}^{d^{2}}}\inf_{\tau
\in\operatorname{aff}(\mathcal{D})}D_{H}^{1-\frac{1}{d^{2}}}(\rho_{X}
\otimes\tau_{B}\Vert\rho_{XB})
\end{align}
The first equality follows from \cref{prop:induced-Doeblin-structure}, and the last equality follows from~\cite[Proposition~7]{ji2024barycentric}.
\end{proof}

\section{Properties of Quantum Doeblin Coefficients}

\subsection{Normalization}

\begin{proposition}[Normalization]
\label{prop:q-doeblin-normalization}
    For a quantum channel $\mathcal{N}$, the following inequalities hold:
    \begin{equation}
        0 \leq \alpha(\mathcal{N}) \leq 1.
    \end{equation}
    Moreover, $\alpha(\mathcal{N}) = 1$ if and only if $\cN$ is a replacer channel, and $\alpha(\mathcal{N}) =0$ if and only if $\cN$ is orthogonal to a completely positive unital map under the Hilbert--Schmidt inner product through the Choi isomorphism. 
\end{proposition}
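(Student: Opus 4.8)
The plan is to handle the bounds $0 \leq \alpha(\mathcal{N}) \leq 1$ first and then the two equality characterizations. For the upper bound, I would invoke the minimal-singlet-fraction interpretation from \cref{thm:doeblin_antiDistinguish} together with the remark immediately following it: choosing the decoder $\mathcal{D}$ in~\eqref{eq:MASF} to be the completely depolarizing channel makes $(\operatorname{id}\otimes(\mathcal{D}\circ\mathcal{N}))(\Phi_d) = \Phi_d \otimes \pi_d$-type product, whose fidelity with $\Phi_d$ is $1/d^2$, so the infimum is at most $1/d^2$ and hence $\alpha(\mathcal{N}) \leq 1$. Alternatively, and perhaps more self-containedly, I would argue directly from the dual SDP~\eqref{eq:alpha-cN-dual}: plug in the feasible point $Y_{AB} = \frac{1}{d_A} I_A \otimes I_B$ (which satisfies $\operatorname{Tr}_A[Y_{AB}] = I_B$), giving $\alpha(\mathcal{N}) \leq \frac{1}{d_A}\operatorname{Tr}[\Gamma^{\mathcal{N}}_{AB}] = \frac{1}{d_A}\cdot d_A = 1$, since $\operatorname{Tr}[\Gamma^{\mathcal{N}}] = d_A$ for a channel. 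For the lower bound $\alpha(\mathcal{N}) \geq 0$, note $X = 0$ is feasible in the primal~\eqref{eq:alpha_cN_def} (since $\Gamma^{\mathcal{N}}_{AB} \geq 0$ for a completely positive map), giving objective value $0$.

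For the characterization $\alpha(\mathcal{N}) = 1$ iff $\mathcal{N}$ is a replacer channel, I would use the reverse-robustness representation~\eqref{eq:reverse-robustness} from \cref{prop:reverse-robustness}: $\alpha(\mathcal{N}) = 1$ means the supremum over $\lambda \in [0,1]$ with $\mathcal{N} = \lambda \mathcal{R}^\tau + (1-\lambda)\mathcal{M}$ equals $1$. One must check the supremum is attained; this follows because the feasible set is compact (the constraint set for $(\lambda, \tau, \mathcal{M})$ with $\mathcal{N} = \lambda\mathcal{R}^\tau + (1-\lambda)\mathcal{M}$ fixed is closed and bounded, using that $\mathcal{R}^\tau$ ranges over a compact set of quasi-channels and CPTP maps form a compact set). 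Then $\lambda = 1$ forces $\mathcal{N} = \mathcal{R}^\tau$, and since $\mathcal{N}$ is trace-preserving, $\tau$ must be a genuine state, so $\mathcal{N}$ is a replacer channel. Conversely, if $\mathcal{N} = \mathcal{R}^\sigma$ for a state $\sigma$, then $\Gamma^{\mathcal{N}}_{AB} = I_A \otimes \sigma_B$, and taking $X_B = \sigma_B$ in~\eqref{eq:alpha_cN_def} is feasible with $\operatorname{Tr}[X_B] = 1$, so $\alpha(\mathcal{N}) = 1$ by the already-established upper bound.

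For the characterization $\alpha(\mathcal{N}) = 0$, I would work from the dual SDP~\eqref{eq:alpha-cN-dual} and the fact established in the proof of \cref{thm:doeblin_antiDistinguish} that $\alpha(\mathcal{N}) = \inf_{\mathcal{M}\in\operatorname{CPU}} \operatorname{Tr}[\Gamma^{\mathcal{M}}_{AB}\Gamma^{\mathcal{N}}_{AB}]$, the infimum over Choi operators of completely positive unital maps. Since both $\Gamma^{\mathcal{M}}$ and $\Gamma^{\mathcal{N}}$ are positive semidefinite, $\operatorname{Tr}[\Gamma^{\mathcal{M}}\Gamma^{\mathcal{N}}] \geq 0$ always, and it equals zero iff $\Gamma^{\mathcal{M}}\Gamma^{\mathcal{N}} = 0$, i.e., their supports are orthogonal. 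Using compactness of the set of CPU Choi operators, the infimum is attained, so $\alpha(\mathcal{N}) = 0$ iff there exists a CPU map $\mathcal{M}$ with $\operatorname{Tr}[\Gamma^{\mathcal{M}}\Gamma^{\mathcal{N}}] = 0$, which is exactly the statement that $\mathcal{N}$ is orthogonal to a completely positive unital map under the Hilbert--Schmidt inner product through the Choi isomorphism. The main obstacle I anticipate is being careful about attainment of the optima (both the $\lambda = 1$ case and the $\alpha = 0$ case) — one must justify that the relevant infima/suprema are achieved so that the "iff" statements are genuine; this is handled by compactness of CPTP/CPU map sets together with closedness of the constraint sets, but it deserves an explicit line rather than being glossed over.
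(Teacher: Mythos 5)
Your proposal is correct in substance and coincides with the paper's proof on three of the four parts: the bounds $0 \leq \alpha(\mathcal{N}) \leq 1$ via the feasible points $X_B = 0$ and $Y_{AB} = \tfrac{1}{d_A}I_A \otimes I_B$, the ``replacer $\Rightarrow \alpha = 1$'' direction via $X_B = \sigma_B$, and the $\alpha(\mathcal{N}) = 0$ characterization via compactness of the set of CPU Choi operators in the dual. Where you genuinely diverge is the direction ``$\alpha(\mathcal{N}) = 1 \Rightarrow$ replacer'': the paper invokes the contraction-coefficient bound $\eta_{\operatorname{Tr}}(\mathcal{N}) \leq 1 - \alpha(\mathcal{N})$ to conclude $\eta_{\operatorname{Tr}}(\mathcal{N}) = 0$, hence $\mathcal{N}(\rho) = \mathcal{N}(\sigma)$ for all states, whereas you use the reverse-robustness representation of \cref{prop:reverse-robustness} plus an attainment argument to force $\mathcal{N} = \mathcal{R}^{\tau}$ at $\lambda = 1$. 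Your route has the advantage of not leaning on a result that the paper only establishes later (\cref{prop:trace_distance_complete_cont_Doeblin_bound}) or imports from the literature, but the attainment step as you justify it contains an imprecision: $\operatorname{aff}(\mathcal{D})$ is the set of \emph{all} Hermitian unit-trace operators and is therefore unbounded, so ``$\mathcal{R}^{\tau}$ ranges over a compact set of quasi-channels'' is false as stated. The gap is easily repaired: the constraint $\lambda \mathcal{R}^{\tau} \leq \mathcal{N}$ gives $\lambda\tau \leq \mathcal{N}(\rho) \leq I$ for every state $\rho$, which together with $\operatorname{Tr}[\lambda\tau] = \lambda$ bounds the eigenvalues of $\tau$ from above and below along any sequence with $\lambda_n \to 1$, yielding a convergent subsequence; alternatively, primal attainment for the SDP in~\eqref{eq:alpha_cN_def} follows from Slater's criterion since $Y = \tfrac{1}{d_A}I_A \otimes I_B$ is strictly feasible for the dual, after which $\operatorname{Tr}[\Gamma^{\mathcal{N}} - I \otimes X] = 0$ with $\Gamma^{\mathcal{N}} - I \otimes X \geq 0$ forces $\Gamma^{\mathcal{N}} = I \otimes X$ directly. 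With either fix your argument is complete.
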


\begin{proof}
    To establish the inequality $0 \leq \alpha(\mathcal{N})$, note that $X_B=0 $ is feasible in the primal SDP in~\eqref{eq:alpha_cN_def}, and the objective function evaluates to zero for this choice. To establish the inequality $\alpha(\mathcal{N})\leq 1$, we note that $Y_{AB} = \frac{1}{d_A}I_A \otimes I_B$ is feasible for the dual SDP in~\eqref{eq:alpha-cN-dual}, and the objective function evaluates to one for this choice.

    We now establish the saturation conditions. We begin by establishing that $\alpha(\cN) = 1$ if and only if $\cN$ is a replacer channel. Let $\cN$ be a replacer channel. Then $\Gamma^{\cN} = I_{A} \otimes \sigma_{B}$ for some $\sigma_{B} \in \Density(B)$. Then $X_{B} = \sigma_{B}$ is feasible for~\eqref{eq:alpha_cN_def}, and thus $\alpha(\cN) = 1$. For the other direction, let $\alpha(\cN) = 1$. Recall that $1 - \alpha(\cN) \geq \eta_{\Tr}(\cN)$~\cite{wolf2012quantum,hirche2024quantum} (see also  \cref{prop:trace_distance_complete_cont_Doeblin_bound} for a strengthening), and so $\eta_{\Tr}(\cN) = 0$. It follows that $\sup_{\rho,\sigma \in \Density} \left\Vert \cN(\rho) - \cN(\sigma) \right\Vert_{1} = 0$. Thus, $\cN(\rho) = \cN(\sigma)$ for all $\rho,\sigma\in \mathcal{D}$. It follows that $\cN$ is a replacer channel.

    The case that $\alpha(\cN) = 0$ is in effect an immediate consequence of the dual problem~\eqref{eq:alpha-cN-dual}. By the Choi isomorphism, the set $\{Y_{AB} \geq 0: \Tr_{A}[Y_{AB}] = I_{B}\}$ consists of Choi operators of the space of completely positive, unital maps from $A$ to $B$. Note that, for fixed spaces $A$ and $B$, this is a compact set, and so we may minimize over it. Thus, by~\eqref{eq:alpha-cN-dual}, $\alpha(\cN) = 0$ if and only if $\min_{\cE \in \text{CPU}} \Tr[\Gamma^{\cN}\Gamma^{\cE}] = 0$, which means it is orthogonal to some CPU map under the Hilbert--Schmidt inner product.
\end{proof}

We note that while the saturation for the lower bound is more mathematical than physical, it is relevant. Namely, an example of such a channel that is noisy but satisfies this condition will be considered in \cref{example:strict-submultiplicativity-for-qubit-channels}. Another way of stating the saturation condition is based on the minimum achievable singlet fraction interpretation from \cref{thm:doeblin_antiDistinguish}: the Doeblin coefficient $\alpha(\mathcal{N}) = 0 $ if and only if there exists a decoding channel $\mathcal{D}$ acting on the channel output of the Choi state $\Phi^{\mathcal{N}}$ to make it orthogonal to the maximally entangled state $\Phi$.

\subsection{Concatenation}

In this section, we consider how the quantum Doeblin coefficients $\alpha(\mathcal{N})$ and $\alpha_I(\mathcal{N})$ behave under sequential composition of channels. In particular, we find that the quantities $1-\alpha\!\left(\cN_2 \circ \cN_1\right)$ and $1-\alpha_I\!\left(\cN_2 \circ \cN_1\right)$ are submultiplicative under such a sequential composition, and we note that this property is strongly related to the behavior of the contraction coefficient of a quantum channel, studied later on in \cref{sec:contraction-coefficient}. 

\begin{proposition}[Concatenation] \label{prop:concatenation}
For channels $\mathcal{N}_{1}$ and $\mathcal{N}_{2}$, the following inequalities
hold:
\begin{align}
1-\alpha(\mathcal{N}_{2}\circ\mathcal{N}_{1}) & \leq\left(  1-\alpha
(\mathcal{N}_{1})\right)  \left(  1-\alpha(\mathcal{N}_{2})\right), \\
1-\alpha_I(\mathcal{N}_{2}\circ\mathcal{N}_{1}) & \leq\left(  1-\alpha_I
(\mathcal{N}_{1})\right)  \left(  1-\alpha_I(\mathcal{N}_{2})\right).
\label{eq:concatenation-induced}
\end{align}
\end{proposition}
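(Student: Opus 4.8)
The plan is to use the reverse-robustness characterization of the Doeblin coefficient from \cref{prop:reverse-robustness}, in the form~\eqref{eq:reverse-robustness-opposite}, which expresses $1-\alpha(\mathcal{N})$ as the minimal weight $\lambda$ such that $\mathcal{N} = (1-\lambda)\mathcal{R}^\tau + \lambda \mathcal{M}$ for some quasi-state $\tau \in \operatorname{aff}(\mathcal{D})$ and some channel $\mathcal{M} \in \operatorname{CPTP}$. This is exactly the characterization the authors flagged as being designed for this proof. First I would take near-optimal decompositions $\mathcal{N}_1 = (1-\lambda_1)\mathcal{R}^{\tau_1} + \lambda_1 \mathcal{M}_1$ and $\mathcal{N}_2 = (1-\lambda_2)\mathcal{R}^{\tau_2} + \lambda_2 \mathcal{M}_2$, where $\lambda_i = 1-\alpha(\mathcal{N}_i)$ (the infimum is attained, or one works with $\lambda_i + \varepsilon$ and lets $\varepsilon \to 0$ at the end).

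The key computation is to compose: $\mathcal{N}_2 \circ \mathcal{N}_1 = \big((1-\lambda_2)\mathcal{R}^{\tau_2} + \lambda_2 \mathcal{M}_2\big) \circ \big((1-\lambda_1)\mathcal{R}^{\tau_1} + \lambda_1 \mathcal{M}_1\big)$. Expanding the product gives four terms. The crucial observation is that $\mathcal{R}^{\tau_2} \circ (\text{anything trace-preserving}) = \mathcal{R}^{\tau_2}$, and more generally $\mathcal{R}^{\tau_2}\circ \mathcal{L} = \mathcal{R}^{\tau_2}$ whenever $\mathcal{L}$ is trace-preserving — and $\mathcal{R}^{\tau_1}$, $\mathcal{M}_1$, and $\mathcal{N}_1$ are all trace-preserving. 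So any term with $\mathcal{R}^{\tau_2}$ on the left collapses to $\mathcal{R}^{\tau_2}$. Grouping the three terms containing a replacer map ($\mathcal{R}^{\tau_2}$ on the outside, and $\lambda_2\mathcal{M}_2 \circ (1-\lambda_1)\mathcal{R}^{\tau_1} = \lambda_2(1-\lambda_1)\mathcal{R}^{\mathcal{M}_2(\tau_1)}$, since $\mathcal{M}_2 \circ \mathcal{R}^{\tau_1} = \mathcal{R}^{\mathcal{M}_2(\tau_1)}$, and note $\mathcal{M}_2(\tau_1) \in \operatorname{aff}(\mathcal{D})$ because $\mathcal{M}_2$ is trace-preserving), we can write
\begin{equation}
\mathcal{N}_2 \circ \mathcal{N}_1 = (1-\lambda_1\lambda_2)\,\mathcal{R}^{\tau} + \lambda_1\lambda_2\,(\mathcal{M}_2 \circ \mathcal{M}_1),
\end{equation}
where $\tau \coloneqq \frac{1}{1-\lambda_1\lambda_2}\big[(1-\lambda_2)\tau_2 + \lambda_2(1-\lambda_1)\mathcal{M}_2(\tau_1)\big]$ is a convex combination of elements of $\operatorname{aff}(\mathcal{D})$ (the coefficients are nonnegative and sum to $1$), hence itself in $\operatorname{aff}(\mathcal{D})$, and $\mathcal{M}_2 \circ \mathcal{M}_1 \in \operatorname{CPTP}$. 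This exhibits a valid decomposition of $\mathcal{N}_2 \circ \mathcal{N}_1$ with replacer-weight $1-\lambda_1\lambda_2$, so by~\eqref{eq:reverse-robustness-opposite} we get $1-\alpha(\mathcal{N}_2\circ\mathcal{N}_1) \leq \lambda_1\lambda_2 = (1-\alpha(\mathcal{N}_1))(1-\alpha(\mathcal{N}_2))$.

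For the induced coefficient~\eqref{eq:concatenation-induced}, the argument is essentially verbatim, using the reverse-robustness form from \cref{prop:induced-doeblin-reverse-rob} instead: one replaces $\operatorname{CPTP}$ by $\operatorname{PTP}$ throughout. The only facts needed are that $\operatorname{PTP}$ is closed under composition, that $\mathcal{R}^\tau \circ \mathcal{L} = \mathcal{R}^\tau$ and $\mathcal{M} \circ \mathcal{R}^\tau = \mathcal{R}^{\mathcal{M}(\tau)}$ for trace-preserving $\mathcal{L},\mathcal{M}$, and that positive trace-preserving maps send $\operatorname{aff}(\mathcal{D})$ into $\operatorname{aff}(\mathcal{D})$ — all immediate. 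I do not anticipate a serious obstacle here; the only mild care needed is the attainment of the infimum in \cref{prop:reverse-robustness} (handled by a standard compactness argument on the compact set of channels, or by an $\varepsilon$-argument) and checking that the coefficient $1-\lambda_1\lambda_2$ is genuinely in $[0,1]$ so that the expression is a bona fide convex combination, which holds since $\lambda_1,\lambda_2 \in [0,1]$.
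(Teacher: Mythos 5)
Your proposal is correct and follows essentially the same route as the paper: both use the reverse-robustness representation in \eqref{eq:reverse-robustness-opposite}, expand the composition of the two decompositions, collapse the replacer terms via $\mathcal{R}^{\tau_2}\circ\mathcal{L}=\mathcal{R}^{\tau_2}$ and $\mathcal{M}_2\circ\mathcal{R}^{\tau_1}=\mathcal{R}^{\mathcal{M}_2(\tau_1)}$, and arrive at exactly the same decomposition $(1-\lambda_1\lambda_2)\mathcal{R}^{\tau'}+\lambda_1\lambda_2\,\mathcal{M}_2\circ\mathcal{M}_1$ with the same $\tau'$, handling the induced case identically via \cref{prop:induced-doeblin-reverse-rob}. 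No gaps.
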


\begin{proof}
The proof is analogous to that given for \cite[Corollary~2.2]{Chestnut2010} in the classical case, and it makes use of the representation in \eqref{eq:reverse-robustness-opposite}.
For $i\in\left\{  1,2\right\}  $, let $\lambda_{i}\in\left[  0,1\right]  $,
$\tau_{i}\in\operatorname{aff}(\mathcal{D})$, and $\mathcal{M}_{i}\in \operatorname{CPTP}$ be such that
$\left(  1-\lambda_{i}\right)  \mathcal{R}_{\tau_{i}}+\lambda_{i}
\mathcal{M}_{i}=\mathcal{N}_{i}$. Then it follows that
\begin{align}
  \mathcal{N}_{2}\circ\mathcal{N}_{1}
&  =\left[  \left(  1-\lambda_{2}\right)  \mathcal{R}_{\tau_{2}}+\lambda
_{2}\mathcal{M}_{2}\right]  \circ\left[  \left(  1-\lambda_{1}\right)
\mathcal{R}_{\tau_{1}}+\lambda_{1}\mathcal{M}_{1}\right] \\
&  =\left(  1-\lambda_{2}\right)  \left(  1-\lambda_{1}\right)  \mathcal{R}
_{\tau_{2}}\circ\mathcal{R}_{\tau_{1}}+\lambda_{2}\left(  1-\lambda
_{1}\right)  \mathcal{M}_{2}\circ\mathcal{R}_{\tau_{1}}\nonumber\\
&  \qquad+\left(  1-\lambda_{2}\right)  \lambda_{1}\mathcal{R}_{\tau_{2}}
\circ\mathcal{M}_{1}+\lambda_{2}\lambda_{1}\mathcal{M}_{2}\circ\mathcal{M}
_{1}\\
&  =\left(  1-\lambda_{2}\right)  \left(  1-\lambda_{1}\right)  \mathcal{R}
_{\tau_{2}}+\lambda_{2}\left(  1-\lambda_{1}\right)  \mathcal{R}
_{\mathcal{M}_{2}(\tau_{1})}\nonumber\\
&  \qquad+\left(  1-\lambda_{2}\right)  \lambda_{1}\mathcal{R}_{\tau_{2}
}+\lambda_{2}\lambda_{1}\mathcal{M}_{2}\circ\mathcal{M}_{1}\\
&  =\left(  1-\lambda_{2}\right)  \mathcal{R}_{\tau_{2}}+\lambda_{2}\left(
1-\lambda_{1}\right)  \mathcal{R}_{\mathcal{M}_{2}(\tau_{1})}+\lambda
_{2}\lambda_{1}\mathcal{M}_{2}\circ\mathcal{M}_{1}\\
&  =\left(  1-\lambda_{1}\lambda_{2}\right)  \mathcal{R}_{\tau^{\prime}
}+\lambda_{2}\lambda_{1}\mathcal{M}_{2}\circ\mathcal{M}_{1},
\end{align}
where
\begin{equation}
\tau^{\prime}\coloneqq \frac{1-\lambda_{2}}{1-\lambda_{1}\lambda_{2}}\tau
_{2}+\frac{\lambda_{2}\left(  1-\lambda_{1}\right)  }{1-\lambda_{1}\lambda
_{2}}\mathcal{M}_{2}(\tau_{1}).
\end{equation}
Thus, the choices $\lambda_{2}\lambda_{1}$, $\tau^{\prime}$, and
$\mathcal{M}_{2}\circ\mathcal{M}_{1}$ are feasible for $1-\alpha
(\mathcal{N}_{2}\circ\mathcal{N}_{1})$, and it follows that
\begin{equation}
1-\alpha(\mathcal{N}_{2}\circ\mathcal{N}_{1})\leq\lambda_{2}\lambda_{1}.
\end{equation}
Since $\lambda_{i}\in\left[  0,1\right]  $, $\tau_{i}\in \operatorname{aff}(\mathcal{D})$,
and $\mathcal{M}_{i}\in \operatorname{CPTP}$ are arbitrary feasible choices for $1-\alpha
(\mathcal{N}_{i})$, the desired inequality follows.

The proof for \eqref{eq:concatenation-induced} proceeds in exactly the same way, but instead using the representation in \cref{prop:induced-doeblin-reverse-rob}.
\end{proof}

\begin{corollary}
    For a channel $\mathcal{N}$ and $n\in \mathbb{N}$, the following inequalities hold:
    \begin{align}
        1-\alpha(\mathcal{N}^{\circ n}) & \leq (1-\alpha(\mathcal{N}))^{n}, \\
        1-\alpha_I(\mathcal{N}^{\circ n}) &  \leq (1-\alpha_I(\mathcal{N}))^{n}.
    \end{align}
\end{corollary}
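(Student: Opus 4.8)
The plan is to obtain both inequalities by a straightforward induction on $n$, using \cref{prop:concatenation} as the engine. For the base case $n=1$, both inequalities read $1-\alpha(\mathcal{N}) \leq 1-\alpha(\mathcal{N})$ (and likewise for $\alpha_I$), which holds with equality, so nothing is needed there.

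For the inductive step, I would assume the claim holds for some $n-1 \geq 1$, i.e., $1-\alpha(\mathcal{N}^{\circ (n-1)}) \leq (1-\alpha(\mathcal{N}))^{n-1}$. Then I would write $\mathcal{N}^{\circ n} = \mathcal{N} \circ \mathcal{N}^{\circ (n-1)}$ and apply \cref{prop:concatenation} with $\mathcal{N}_2 \coloneqq \mathcal{N}$ and $\mathcal{N}_1 \coloneqq \mathcal{N}^{\circ (n-1)}$ to get
\begin{equation}
1-\alpha(\mathcal{N}^{\circ n}) \leq \left(1-\alpha(\mathcal{N}^{\circ (n-1)})\right)\left(1-\alpha(\mathcal{N})\right).
\end{equation}
Since $1-\alpha(\mathcal{N}) \geq 0$ by \cref{prop:q-doeblin-normalization}, I can multiply the inductive hypothesis on both sides by this nonnegative factor and chain the inequalities to conclude $1-\alpha(\mathcal{N}^{\circ n}) \leq (1-\alpha(\mathcal{N}))^{n}$. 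The argument for $\alpha_I$ is verbatim the same, using instead the second inequality of \cref{prop:concatenation} (together with the analogue of normalization ensuring $1-\alpha_I(\mathcal{N}) \geq 0$, which follows from the fact that $\alpha_I(\mathcal{N}) \leq 1$ via, e.g., the singlet-fraction interpretation or simply $\alpha_I(\mathcal{N}) \geq \alpha(\mathcal{N})$ is not what we need here — rather $\alpha_I(\mathcal{N}) \leq 1$, which is immediate from its expression as a minimal error probability of state exclusion times a cardinality factor, or from monotonicity bounds already recorded).

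There is essentially no real obstacle here; the only point warranting a sentence is making explicit that the factor $1-\alpha(\mathcal{N})$ (resp. $1-\alpha_I(\mathcal{N})$) is nonnegative so that multiplying an inequality through by it preserves its direction. I would keep the written proof to two or three lines, simply stating ``This follows by iterating \cref{prop:concatenation}'' and optionally spelling out the one-step induction as above.
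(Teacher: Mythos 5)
Your proof is correct and matches the paper's approach exactly: the paper's proof is simply ``Direct consequence of \cref{prop:concatenation},'' i.e., iterating the concatenation inequality as you do. The explicit induction and the remark on nonnegativity of $1-\alpha(\mathcal{N})$ are fine (if slightly more detail than the paper bothers with), and no gap is present.
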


\begin{proof}
    Direct consequence of \cref{prop:concatenation}.
\end{proof}

\subsection{Data Processing under Pre- and Post-Processing}

Next, we show that both Doeblin coefficients, $\alpha(\mathcal{N})$ and $\alpha_I(\mathcal{N})$, of a channel $\mathcal{N}$ can only stay the same or increase under pre- and post-processing, and they are  invariant if both these pre- and post-processing channels are isometric-reversal and isometric channels, respectively. 

Given an isometric channel $\mathcal{U}$, define the corresponding isometric-reversal channel as \cite[Eq.~(4.264)]{Wilde-Book} 
\begin{equation}
\mathcal{P}^{\mathcal{U}}(\cdot) \coloneqq \mathcal{U}^{\dag}(\cdot) + \operatorname{Tr}[(\operatorname{id}-\mathcal{U}^\dag)(\cdot)] \omega,
\label{eq:isometric-reversal-channel}
\end{equation}
where $\omega$ is an arbitrary state. It then follows that $\mathcal{P}^{\mathcal{U}}$ indeed reverses the action of $\mathcal{U}$; i.e.,
\begin{equation}    \mathcal{P}^{\mathcal{U}}\circ\mathcal{U} = \operatorname{id}.
    \label{eq:isometry-and-reversal}
\end{equation}

\begin{proposition}[Data Processing under Pre- and Post-Processing and Isometric Invariance]
\label{prop:DPI-and-LU-invariance}
Let $\cE_{B \to B'}$, $\cG_{A' \to A}$, and $\cN_{A \to B}$ be quantum channels.
   Then
   \begin{align}
       \alpha(\cE \circ \cN \circ \cG) & \geq \alpha(\cN) ,\\
       \alpha_I(\cE \circ \cN \circ \cG) & \geq \alpha_I(\cN) . \label{eq:data-proc-induced-doeblin}
   \end{align}
  That is, the quantum Doeblin coefficient does not decrease under pre- and post-processing. 
   Moreover, the quantum Doeblin coefficient is invariant under isometric-reversal and isometric pre- and post-processing, respectively; i.e.,
\begin{align}
    \alpha(\mathcal{N}) & = \alpha(\mathcal{U} \circ \mathcal{N} \circ \mathcal{P}^{\mathcal{V}}),
    \label{eq:isometric-inv-doeblin}\\
    \alpha_I(\mathcal{N}) & = \alpha_I(\mathcal{U} \circ \mathcal{N} \circ \mathcal{P}^{\mathcal{V}}),
    \label{eq:isometric-inv-induced-doeblin}
\end{align}
where $\mathcal{U}$ and $\mathcal{V}$ are isometric channels.
\end{proposition}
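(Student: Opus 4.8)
**Proof plan for Proposition~\ref{prop:DPI-and-LU-invariance} (Data Processing under Local Operations and Isometric Invariance).**

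The plan is to exploit the reverse-robustness characterizations established in \cref{prop:reverse-robustness} and \cref{prop:induced-doeblin-reverse-rob}, since convex decompositions of a channel are very easy to track through pre- and post-composition. For the data-processing inequality, suppose $\cN = \lambda \mathcal{R}^{\tau} + (1-\lambda)\mathcal{M}$ is an optimal decomposition witnessing $\alpha(\cN) = \lambda$, with $\tau \in \operatorname{aff}(\mathcal{D})$ and $\mathcal{M} \in \operatorname{CPTP}$. Then
\begin{align}
\cE \circ \cN \circ \cG &= \cE \circ \left( \lambda \mathcal{R}^{\tau} + (1-\lambda)\mathcal{M}\right) \circ \cG \\
&= \lambda \left(\cE \circ \mathcal{R}^{\tau} \circ \cG\right) + (1-\lambda)\left(\cE \circ \mathcal{M} \circ \cG\right) \\
&= \lambda \, \mathcal{R}^{\cE(\tau)} + (1-\lambda)\left(\cE \circ \mathcal{M} \circ \cG\right),
\end{align}
where I used that $\cE \circ \mathcal{R}^{\tau} \circ \cG = \mathcal{R}^{\cE(\tau)}$ because $\cG$ is trace preserving (so $\mathcal{R}^{\tau} \circ \cG = \mathcal{R}^{\tau}$ on the relevant domain) and $\cE$ is linear, and that $\cE(\tau) \in \operatorname{aff}(\mathcal{D})$ since $\cE$ maps unit-trace Hermitian operators to unit-trace Hermitian operators. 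Since $\cE \circ \mathcal{M} \circ \cG$ is a valid CPTP map, this is a feasible decomposition for $\cE \circ \cN \circ \cG$, giving $\alpha(\cE \circ \cN \circ \cG) \geq \lambda = \alpha(\cN)$. The identical argument, replacing CPTP with PTP throughout (positive trace-preserving maps are also closed under composition with channels on either side) and invoking \cref{prop:induced-doeblin-reverse-rob}, establishes \eqref{eq:data-proc-induced-doeblin}.

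For the isometric invariance, the ``$\leq$'' direction in each of \eqref{eq:isometric-inv-doeblin} and \eqref{eq:isometric-inv-induced-doeblin} is immediate from the data-processing inequality just proved, taking $\cE = \mathcal{U}$ and $\cG = \mathcal{P}^{\mathcal{V}}$: $\alpha(\mathcal{U} \circ \mathcal{N} \circ \mathcal{P}^{\mathcal{V}}) \geq \alpha(\mathcal{N})$, and likewise for $\alpha_I$. For the reverse inequality, I would apply data processing again to the channel $\mathcal{U} \circ \mathcal{N} \circ \mathcal{P}^{\mathcal{V}}$ with the ``undoing'' maps: pre-compose with $\mathcal{V}$ and post-compose with the isometric-reversal channel $\mathcal{P}^{\mathcal{U}}$. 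Using \eqref{eq:isometry-and-reversal}, namely $\mathcal{P}^{\mathcal{U}} \circ \mathcal{U} = \operatorname{id}$ and $\mathcal{P}^{\mathcal{V}} \circ \mathcal{V} = \operatorname{id}$, we get
\begin{equation}
\mathcal{P}^{\mathcal{U}} \circ \left(\mathcal{U} \circ \mathcal{N} \circ \mathcal{P}^{\mathcal{V}}\right) \circ \mathcal{V} = \left(\mathcal{P}^{\mathcal{U}} \circ \mathcal{U}\right) \circ \mathcal{N} \circ \left(\mathcal{P}^{\mathcal{V}} \circ \mathcal{V}\right) = \mathcal{N},
\end{equation}
so data processing yields $\alpha(\mathcal{N}) = \alpha(\mathcal{P}^{\mathcal{U}} \circ (\mathcal{U} \circ \mathcal{N} \circ \mathcal{P}^{\mathcal{V}}) \circ \mathcal{V}) \geq \alpha(\mathcal{U} \circ \mathcal{N} \circ \mathcal{P}^{\mathcal{V}})$. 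Combining the two directions gives equality, and the same reasoning applies verbatim to $\alpha_I$.

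The only mild subtlety — and the one point I would check carefully rather than the main obstacle — is the claim $\cE \circ \mathcal{R}^{\tau} \circ \cG = \mathcal{R}^{\cE(\tau)}$ and that $\cE(\tau)$ stays in $\operatorname{aff}(\mathcal{D})$; both follow from $\cG$ being trace preserving together with linearity and trace preservation of $\cE$, since $\mathcal{R}^{\tau}(\cG(\rho)) = \operatorname{Tr}[\cG(\rho)]\,\tau = \operatorname{Tr}[\rho]\,\tau$ for all inputs. There is genuinely no hard step here: the reverse-robustness representations do all the work, and the isometric-reversal identity \eqref{eq:isometry-and-reversal} converts the one-sided data-processing inequality into a two-sided one. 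An alternative route for data processing would be to argue directly from the primal SDP \eqref{eq:doeblin-CP-order}: if $\mathcal{R}^X \leq \cN$ (i.e. $\cN - \mathcal{R}^X$ is CP), then $\cE \circ (\cN - \mathcal{R}^X) \circ \cG$ is CP and equals $\cE \circ \cN \circ \cG - \mathcal{R}^{\cE(X)}$ with $\operatorname{Tr}[\cE(X)] = \operatorname{Tr}[X]$, but the reverse-robustness approach is cleaner and handles both coefficients uniformly, so that is what I would present.
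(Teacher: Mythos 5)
Your proof is correct, and the isometric-invariance part coincides exactly with the paper's argument (apply the data-processing inequality twice, once to $\mathcal{U}\circ\mathcal{N}\circ\mathcal{P}^{\mathcal{V}}$ and once to its "undone" version, using \eqref{eq:isometry-and-reversal}). Where you diverge is in how the data-processing inequality itself is obtained. The paper proves \eqref{eq:data-proc-induced-doeblin} from the exclusion-value form \eqref{eq:induced-doeblin}: it writes $\alpha_I(\cE\circ\cN\circ\cG)$ as an infimum of $\sum_x \Tr[\Lambda_x(\cE\circ\cN\circ\cG)(\rho_x)]$, pulls the post-processing onto the POVM via the adjoint $\cE^\dagger(\Lambda_x)$ and the pre-processing onto the states via $\cG(\rho_x)$, and observes that this yields a (generally suboptimal) strategy for $\alpha_I(\cN)$; it then asserts the analogous argument for $\alpha$. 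You instead route everything through the reverse-robustness representations (\cref{prop:reverse-robustness} and \cref{prop:induced-doeblin-reverse-rob}), pushing the decomposition $\cN=\lambda\mathcal{R}^\tau+(1-\lambda)\mathcal{M}$ through the pre- and post-processing to get a feasible decomposition of $\cE\circ\cN\circ\cG$. Both are valid and non-circular (the robustness characterizations are established before this proposition and do not rely on it); your version has the advantage of treating $\alpha$ and $\alpha_I$ genuinely uniformly (only the class CPTP versus PTP changes) and is in fact the same mechanism the paper later reuses for concatenation in \cref{prop:concatenation}, whereas the paper's exclusion-value route is more directly operational. One cosmetic point: the supremum in the reverse robustness need not be assumed attained — running your argument for an arbitrary feasible $\lambda$ and then taking the supremum gives the same conclusion — but this does not affect correctness.
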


\begin{proof}
    To establish the inequality in \eqref{eq:data-proc-induced-doeblin}, consider that
\begin{align}
    \alpha_I(\mathcal{E}\circ \mathcal{N} \circ \mathcal{F}) &  = \inf_{(\rho_x, \Lambda_x)_x} \sum_x \operatorname{Tr}[\Lambda_x (\mathcal{E}\circ \mathcal{N} \circ \mathcal{F})(\rho_x)]\\
    &  = \inf_{(\rho_x, \Lambda_x)_x} \sum_x \operatorname{Tr}[\mathcal{E}^\dag(\Lambda_x) \mathcal{N} ( \mathcal{F}(\rho_x))]
    \\
    & \geq \inf_{(\Lambda_y, \rho_y)_y} \sum_y \operatorname{Tr}[\Lambda_y \mathcal{N}(\rho_y)]\\
    & = \alpha_I(\mathcal{N}),
\end{align}
where we made use of \eqref{eq:induced-doeblin}.
The equality in \eqref{eq:isometric-inv-induced-doeblin} follows because
\begin{align}
    \alpha_I(\mathcal{U} \circ \mathcal{N} \circ \mathcal{P}^{\mathcal{V}}) & \geq \alpha_I(\mathcal{N}) \\
    \alpha_I(\mathcal{N})& = \alpha_I(\mathcal{P}^{\mathcal{U}} \circ \mathcal{U} \circ \mathcal{N} \circ \mathcal{P}^{\mathcal{V}}\circ \mathcal{V}) \geq  \alpha_I( \mathcal{U} \circ \mathcal{N} \circ \mathcal{P}^{\mathcal{V}}),
\end{align}
as a consequence of~\eqref{eq:data-proc-induced-doeblin} and~\eqref{eq:isometry-and-reversal}.
By appealing to \cref{cor:q-doeblin-as-EA-assisted-doeblin}, we can employ the same proofs to establish the claims for $\alpha(\mathcal{N})$ .
\end{proof}

\subsection{Submultiplicativity and Multiplicativity}

In \cref{prop:concatenation}, we considered channels being applied sequentially. We now consider when they are applied in parallel. This means that we are interested in the multiplicativity of the quantum Doeblin coefficient $\alpha(\mathcal{N})$. First, we show that it is submultiplicative.

\begin{proposition}[Submultiplicativity]
\label{prop:q-doeblin-sub-multiplicative}
Let $\cN$ and $\cM$ be quantum channels. The following submultiplicativity inequalities hold: 
\begin{align}
    \alpha(\cN \otimes \cM) &\leq \alpha(\cN)\alpha(\cM) ,\\
    \alpha_{I}(\cN \otimes \cM) & \leq \alpha_{I}(\cN)\alpha_{I}(\cM) .
\end{align}
\end{proposition}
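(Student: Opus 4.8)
The plan is to use the primal SDP characterization of $\alpha$ from \cref{def:q-doeblin-coeff}, namely $\alpha(\cN) = \sup\{\Tr[X_B] : I_A \otimes X_B \leq \Gamma^\cN_{AB}\}$, together with the equivalent replacer-map formulation $\alpha(\cN) = \sup\{\Tr[X] : \cR^X \leq \cN\}$ from~\eqref{eq:doeblin-CP-order}. Given optimal (or near-optimal) feasible solutions $X$ for $\cN$ and $X'$ for $\cM$, the natural candidate for $\cN \otimes \cM$ is $X \otimes X'$: one checks that $\cR^{X \otimes X'} = \cR^X \otimes \cR^{X'}$ as superoperators, and $\Tr[X \otimes X'] = \Tr[X]\Tr[X']$. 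The crux is then to argue the operator/superoperator inequality $\cR^X \otimes \cR^{X'} \leq \cN \otimes \cM$ (in the completely-positive order), which would give feasibility of $X\otimes X'$ and hence $\alpha(\cN\otimes\cM) \geq \Tr[X]\Tr[X']$ — but this is the \emph{wrong} direction for submultiplicativity. So instead I would work with the \emph{dual} SDP.

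Concretely, I would use the dual from \cref{prop:dual-SDP-doeblin-main}: $\alpha(\cN) = \inf\{\Tr[Y_{AB}\,\Gamma^\cN_{AB}] : Y_{AB} \geq 0,\ \Tr_A[Y_{AB}] = I_B\}$. Pick feasible $Y$ for $\cN$ and $Y'$ for $\cM$. The tensor product $Y \otimes Y'$ is positive semidefinite, and under the appropriate reordering of tensor factors its partial trace over the joint input system is $\Tr_A[Y] \otimes \Tr_{A'}[Y'] = I_B \otimes I_{B'}$, so $Y \otimes Y'$ is feasible for the dual program of $\cN \otimes \cM$ (here I must be careful that $\Gamma^{\cN\otimes\cM}$ equals $\Gamma^\cN \otimes \Gamma^{\cM}$ up to the same reshuffling of systems, which is a standard fact about Choi operators of tensor-product channels). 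Then
\begin{align}
\alpha(\cN\otimes\cM) &\leq \Tr\!\left[(Y\otimes Y')(\Gamma^\cN\otimes\Gamma^{\cM})\right] \\
&= \Tr[Y\,\Gamma^\cN]\cdot\Tr[Y'\,\Gamma^{\cM}].
\end{align}
Taking the infimum over all dual-feasible $Y$ and $Y'$ gives $\alpha(\cN\otimes\cM) \leq \alpha(\cN)\,\alpha(\cM)$, as desired.

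The main obstacle is purely bookkeeping: making sure the system relabelings line up so that (i) $\Gamma^{\cN\otimes\cM}_{AA'BB'}$ is indeed a permuted copy of $\Gamma^\cN_{AB}\otimes\Gamma^{\cM}_{A'B'}$, (ii) the constraint $\Tr_{AA'}[Y\otimes Y'] = I_{BB'}$ genuinely follows from the two marginal constraints, and (iii) the trace inner product factorizes under the permutation (it does, since the permutation is unitary and applied consistently to both operators). None of this is deep, but it must be stated cleanly. An alternative, essentially equivalent route is to invoke the minimum-singlet-fraction interpretation from \cref{thm:doeblin_antiDistinguish}: $\alpha(\cN\otimes\cM) = (d_A d_{A'})^2 \inf_{\cD} \Tr[\Phi_{d_A d_{A'}} (\id \otimes (\cD\circ(\cN\otimes\cM)))(\Phi_{d_A d_{A'}})]$, and restrict the infimum over decoders $\cD$ to product decoders $\cD_1\otimes\cD_2$ together with the factorization $\Phi_{d_A d_{A'}} = \Phi_{d_A}\otimes\Phi_{d_{A'}}$ (after a suitable relabeling), which immediately yields the upper bound $\alpha(\cN)\,\alpha(\cM)$; this version avoids even the dual SDP but requires the same care with identifying $\Phi_{d_Ad_{A'}}$ with $\Phi_{d_A}\otimes\Phi_{d_{A'}}$ under a permutation of systems. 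I would present the dual-SDP argument as the primary proof since it is the most transparent.
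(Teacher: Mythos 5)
Your proposal is correct and follows essentially the same route as the paper: the paper's proof also takes dual-feasible $Y^1, Y^2$ for $\cN$ and $\cM$, observes that $Y^1 \otimes Y^2$ is feasible for the dual of $\alpha(\cN\otimes\cM)$, and uses the factorization $\Tr[(\Gamma^{\cN}\otimes\Gamma^{\cM})(Y^1\otimes Y^2)] = \Tr[Y^1\Gamma^{\cN}]\Tr[Y^2\Gamma^{\cM}]$ to conclude. Your bookkeeping caveats about system reorderings are handled implicitly in the paper and pose no obstacle.
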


\begin{proof}
    Let $Y^{1},Y^{2}$ be optimizers for $\cN,\cM$, respectively, to the dual program given in~\eqref{eq:alpha-cN-dual}. Then $Y^{1} \otimes Y^{2}$ is feasible for $\cN \otimes \cM$, satisfying $\operatorname{Tr}_{A_1A_2}[Y^1 \otimes Y^2]=I_{B_1} \otimes I_{B_2} = I_{B_1B_2}$,  with value
    \begin{equation}
        \Tr[\Gamma^{\cN \otimes \cM} (Y^{1} \otimes Y^{2})] = \Tr[(\Gamma^{\cN} \otimes \Gamma^{\cM})(Y^{1} \otimes Y^{2})] = \alpha(\cN)\alpha(\cM).
    \end{equation}
     As~\eqref{eq:alpha-cN-dual} involves a minimization, we conclude that this is an upper bound. The same proof works for $\alpha_{I}$ using \eqref{eqn:inducedDoeblinDual} and that if $Y^{1} \in \Sep(A_{1}:B_{1})$ and $Y^{2} \in \Sep(A_{2}:B_{2})$, then $Y^{1} \otimes Y^{2} \in \Sep(A_{1}A_{2}:B_{1}B_{2})$.
\end{proof}

\begin{corollary}
    Let $\cN$ be a quantum channel satisfying $\alpha(\cN)=0$. Then, for every channel~$\cM$, we have
    \begin{equation}
        \alpha(\cN \otimes \cM) =0.
    \end{equation}
    Consequently for $n \in \mathbb{N}$, 
    \begin{equation}
         \alpha(\cN^{\otimes n}) =0.
    \end{equation}   
\end{corollary}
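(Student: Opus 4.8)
The plan is to combine the submultiplicativity bound from \cref{prop:q-doeblin-sub-multiplicative} with the normalization bound from \cref{prop:q-doeblin-normalization}. For the first claim, I would start from the inequality $\alpha(\cN \otimes \cM) \leq \alpha(\cN)\alpha(\cM)$, substitute the hypothesis $\alpha(\cN) = 0$ to conclude $\alpha(\cN \otimes \cM) \leq 0$, and then invoke the lower bound $\alpha(\cN \otimes \cM) \geq 0$ (valid since $\cN \otimes \cM$ is itself a quantum channel) to force equality. There is no real obstacle here; the entire content is already packaged in the two cited propositions.

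For the second claim, the cleanest route is to apply the first claim with the choice $\cM = \cN^{\otimes (n-1)}$, so that $\alpha(\cN^{\otimes n}) = \alpha(\cN \otimes \cN^{\otimes (n-1)}) = 0$; one should note the degenerate cases $n=1$ (trivial) and that $\cN^{\otimes 0}$ can be taken to be the trivial identity channel on a one-dimensional system if one wants to include $n=1$ uniformly, though it is simpler just to observe the statement holds for $n = 1$ by hypothesis and then use the tensor-factorization argument for $n \geq 2$. Alternatively, a one-line induction on $n$ using $\alpha(\cN^{\otimes n}) \leq \alpha(\cN)\,\alpha(\cN^{\otimes(n-1)})$ also works.

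Since the argument is short, I would write it out in full rather than sketch it:

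\begin{proof}
    By \cref{prop:q-doeblin-sub-multiplicative} and the hypothesis $\alpha(\cN) = 0$,
    \begin{equation}
        \alpha(\cN \otimes \cM) \leq \alpha(\cN)\,\alpha(\cM) = 0.
    \end{equation}
    Since $\cN \otimes \cM$ is a quantum channel, \cref{prop:q-doeblin-normalization} gives $\alpha(\cN \otimes \cM) \geq 0$, and therefore $\alpha(\cN \otimes \cM) = 0$. The claim $\alpha(\cN^{\otimes n}) = 0$ for $n \in \mathbb{N}$ then follows by applying the first part with $\cM = \cN^{\otimes(n-1)}$ (the case $n = 1$ being the hypothesis itself).
\end{proof}

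The only thing worth double-checking is that nothing more than submultiplicativity and nonnegativity is needed — in particular, one does not need $\alpha$ to be multiplicative, only submultiplicative, so the argument goes through for $\alpha(\cN)$ even though it is not multiplicative in general.
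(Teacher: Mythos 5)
Your proof is correct and uses exactly the same ingredients as the paper's: submultiplicativity of $\alpha$ together with its non-negativity forces $\alpha(\cN\otimes\cM)=0$, and the tensor-power claim follows by iteration. Nothing further is needed.
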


\begin{proof}
    The proof follows by applying~\cref{prop:q-doeblin-sub-multiplicative} together with the assumption $\alpha(\cN)=0$ and the non-negativity of the quantum Doeblin coefficient.
\end{proof}

Before continuing forward, we show that even qubit channels with classical inputs can be strictly submultiplicative. This differs from the communication value, which is multiplicative for qubit channels \cite{Chitambar-2021a}. This also differs from the minimal entropy and Holevo capacity (resp.~maximal $p$-norms) which are additive (resp.~multiplicative) for all entanglement-breaking channels \cite{shor2002additivity,king2002maximal}. Interestingly, the example uses the same sort of construction as given for the PBR theorem \cite{pusey2012reality} and is directly related to the operational task of state exclusion considered there, given the connection of $\alpha(\mathcal{N}) $ to state exclusion for classical-quantum channels (see \cref{rem:doeblin-exclusion-cq}).

\begin{example}
\label{example:strict-submultiplicativity-for-qubit-channels}
    The qubit-to-qubit classical-quantum channel 
    \begin{align}\label{eq:PBR-channel}
        \cN(X) = \bra{0} X \ket{0} \dyad{0} + \bra{1}X \ket{1} \dyad{+}
    \end{align}
    is such that
    \begin{align}\label{eq:strict-submult-for-qubit-channel}
        \alpha_{I}(\cN \otimes \cN) = \alpha(\cN \otimes \cN) < \alpha(\cN)^{2} = \alpha_{I}(\cN)^{2} \ .
    \end{align}
\end{example}

\begin{proof}
    By \cref{cor:cq-chan-induced-doeb-is-doeb}, the equality $\alpha_{I}(\cN) = \alpha(\cN)$ holds  for a classical-quantum channel. Note that applying classical-quantum channels in parallel realizes another classical-quantum channel. This justifies the equalities in~\eqref{eq:strict-submult-for-qubit-channel}. One can verify that the Hermitian operator $X \coloneq \begin{bmatrix} 0.3964 & 0.25 \\ 0.25 & -0.1036 \end{bmatrix}$ satisfies $X \leq \dyad{0}$, $X \leq \dyad{+}$. Thus, $\alpha(\cN) \geq 0.2929$. Using  the dual problem \eqref{eq:alpha-cN-dual} directly, one can verify that
    \begin{align}
        Y_{AB} &= \dyad{00} \otimes \frac{1}{2} \begin{bmatrix} 
        0 & 0 & 0 & 0 \\ 0 & 1 & 1 & 0 \\ 0 & 1 & 1 & 0 \\ 0 & 0 & 0 & 0 
        \end{bmatrix}
        + \dyad{01} \otimes \frac{1}{4} \begin{bmatrix} 
        1 & -1 & 1 & 1 \\ -1 & 1 & -1 & -1 \\ 1 & -1 & 1 & 1 \\ 1 & -1 & 1 & 1 
        \end{bmatrix} \notag \\
        & \hspace{1cm} + \dyad{10} \otimes \frac{1}{4} \begin{bmatrix} 
        1 & 1 & -1 & 1 \\ 1 & 1 & -1 & 1 \\ -1 & -1 & 1 & -1 \\ 1 & 1 & -1 & 1
        \end{bmatrix}
        + \dyad{11} \otimes \frac{1}{2} \begin{bmatrix} 
        1 & 0 & 0 & -1 \\
        0 & 0 & 0 & 0 \\
        0 & 0 & 0 & 0 \\
        -1 & 0 & 0 & 1 
        \end{bmatrix} \ 
    \end{align}
    is such that $\Tr_{A}[Y] = I_{B}$ and $\Tr[\Gamma^{\cN \otimes \cN}Y] = 0$, which implies $\alpha(\cN \otimes \cN) =0$.\footnote{Alternatively, one may verify this using a standard SDP solver.}
\end{proof}

\begin{remark}
    We note the above shows that $\alpha(\cN)$ can be a (trivially) loose bound on $\eta_{\Tr}(\cN)$ as, for $\cN$ as defined in \eqref{eq:PBR-channel}, $\eta_{\Tr}(\cN \otimes \cN) < 1$ as all outputs are overlapping, but $\alpha(\cN \otimes \cN) = 0$.
\end{remark}

\begin{remark}[Multiplicativity of Doeblin] 
 In~\cref{prop:q-doeblin-sub-multiplicative}, we showed that 
 $
     \alpha(\cN \otimes \cM) \leq \alpha(\cN) \alpha(\cM)$.
 However, it is not possible to obtain the reverse inequality in general, i.e., $ \alpha(\cN \otimes \cM) \not\geq \alpha(\cN)\alpha(\cM)$. We obtain counterexamples that violate the reverse inequality by choosing generalized amplitude damping channels (defined using the Kraus operators in~\eqref{eq:kraus_GADC} and~\eqref{eq:kraus_GADC_rest}) as $\cN$ and $\cM$ with the choice of parameters $p=1$ and $\eta=0.1$. 
\end{remark}

We remark that the issue for establishing multiplicativity is that it is not clear that the optimization program in~\eqref{eq:alpha_cN_def} is multiplicative because if for $X_1, X_2 \in \Herm$, $\Gamma^{\cN} \geq I_{A_{1}} \otimes X_{1} $ and $\Gamma^{\cM} \geq I_{A_{2}} \otimes X_{2}$, it does not follow that $\Gamma^{\cN} \otimes \Gamma^{\cM} \geq I_{A_{1}A_{2}} \otimes X_{1} \otimes X_{2}$ as may be verified with real numbers. In other words, the basic concern is that $\alpha(\mathcal{N})$ and $\alpha(\mathcal{M})$ can be truly optimized by Hermitian operators $ X_{1} $ and $ X_{2}$, respectively, such that the negative eigenspaces  cause an issue when tensored together. For instance, for the generalized amplitude damping channel with the parameters $p=1$ and $\eta=0.1$, through numerical evaluation, we find that the optimal  $X=\begin{bmatrix}
    0.6838 & 0 \\ 
    0 & -0.2162
\end{bmatrix}$. Then, with that, we have that $I_{A_1A_2} \otimes X \otimes X \nleq \Gamma^{\cN_{p,\eta}} \otimes \Gamma^{\cN_{p,\eta}} $.

In contrast, if there exist optimizers of $\alpha(\cN)$ and $\alpha(\cM)$, given by $X_{1}$ and $X_{2}$, respectively, that are positive semidefinite, then multiplicativity does hold, as we will now prove. To that end, let us begin with the following simple lemma.

\begin{lemma}\label{prop:tensor-operator-dom}
    Let $A,B,C \geq 0$. If $A \leq B$, then $A \otimes C \leq B \otimes C$.
\end{lemma}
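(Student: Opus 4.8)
The plan is to prove this elementary operator-inequality lemma directly from the definition of positive semidefiniteness, using the fact that tensoring with a fixed positive operator preserves the positive cone. Concretely, I want to show that if $B - A \geq 0$ and $C \geq 0$, then $(B - A) \otimes C = B \otimes C - A \otimes C \geq 0$.

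First I would recall the standard fact that the tensor product of two positive semidefinite operators is positive semidefinite: if $P \geq 0$ and $Q \geq 0$, then $P \otimes Q \geq 0$. This can be seen, for instance, by writing spectral decompositions $P = \sum_i p_i \dyad{e_i}$ and $Q = \sum_j q_j \dyad{f_j}$ with all $p_i, q_j \geq 0$, so that $P \otimes Q = \sum_{i,j} p_i q_j \, \dyad{e_i} \otimes \dyad{f_j} = \sum_{i,j} p_i q_j \, \dyad{e_i \otimes f_j} \geq 0$, being a nonnegative combination of rank-one projectors. Alternatively one writes $P \otimes Q = (\sqrt{P} \otimes \sqrt{Q})(\sqrt{P} \otimes \sqrt{Q})^{\dagger}$, which is manifestly positive semidefinite.

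Then the proof proceeds in one line: since $A \leq B$, the operator $B - A$ is positive semidefinite by definition of the ordering; since $C \geq 0$ as well, the displayed fact gives $(B-A) \otimes C \geq 0$; and by bilinearity of the tensor product, $(B-A) \otimes C = B \otimes C - A \otimes C$. Therefore $A \otimes C \leq B \otimes C$, which is the claim. I would note that the hypothesis $A \geq 0$ is not even needed for this argument, only $A \leq B$ and $C \geq 0$, but it is harmless to include it since the lemma is applied in a context where all three operators are positive semidefinite anyway.

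There is essentially no obstacle here — the only ``content'' is the positivity of $P \otimes Q$ for $P, Q \geq 0$, which is completely standard (see, e.g., \cite{Watrous-Book}), so the proof is a short remark rather than a genuine argument. The main thing to be careful about is simply to invoke bilinearity of $\otimes$ correctly so that $(B-A) \otimes C$ really does equal $B \otimes C - A \otimes C$, which is immediate.
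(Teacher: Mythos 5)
Your proof is correct and rests on the same essential fact as the paper's: tensoring with a fixed positive semidefinite operator preserves positivity, which the paper phrases as conjugation by $I \otimes \sqrt{C}$ being a completely positive map and you phrase via the standard factorization $(\sqrt{P}\otimes\sqrt{Q})(\sqrt{P}\otimes\sqrt{Q})^{\dagger}$. Your observation that the hypothesis $A \geq 0$ is not actually needed is accurate and harmless.
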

\begin{proof}
    Consider the map $\Phi(\cdot) = (I \otimes \sqrt{C}) \cdot (I \otimes \sqrt{C})^{\dag}$. This is CP by the characterization of CP maps, thus $0 \leq \Phi(B-A) = \Phi(B) - \Phi(A) = B \otimes C - A \otimes C$. Re-ordering completes the proof.
\end{proof}
Using this, we obtain the following general result on multiplicativity, which relies on $\alpha_{+}(\cN)$ defined in  \cref{def:alpha-plus-doeblin}.
\begin{lemma}
\label{lem:suff-cond-for-multiplicativity}
    Given a set of channels $\cN_{A_{i} \to B_{i}}$ where $i \in [n]$. If $\alpha_{+}(\cN_{i}) = \alpha(\cN_{i})$ for all $i \in [n]$, then 
    \begin{align}
        \alpha\!\left(\bigotimes_{i \in [n]} \cN_{i}\right) = \prod_{i \in [n]} \alpha(\cN_{i}) \ . 
    \end{align}
\end{lemma}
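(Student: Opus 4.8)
The plan is to prove the two inequalities $\alpha(\bigotimes_i \cN_i) \leq \prod_i \alpha(\cN_i)$ and $\alpha(\bigotimes_i \cN_i) \geq \prod_i \alpha(\cN_i)$ separately, and it suffices to treat $n=2$ since the general case follows by induction. The first (submultiplicativity) direction is already established in \cref{prop:q-doeblin-sub-multiplicative} and requires no hypothesis on $\alpha_{+}$, so the real work is the reverse inequality, which is where the assumption $\alpha_{+}(\cN_i) = \alpha(\cN_i)$ enters.

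\textbf{Reverse inequality.} First I would invoke the hypothesis to obtain, for each $i \in \{1,2\}$, an optimizer $X_i \geq 0$ of the primal SDP \eqref{eq:alpha_cN_def} for $\cN_i$; that is, $X_i \in \Pos(B_i)$ with $I_{A_i} \otimes X_i \leq \Gamma^{\cN_i}_{A_iB_i}$ and $\Tr[X_i] = \alpha(\cN_i)$. (Here I should note that the supremum in $\alpha_{+}(\cN)$ is actually attained: the feasible set $\{X \geq 0 : I_A \otimes X \leq \Gamma^{\cN}\}$ is compact since $0 \leq X \leq \Tr_A[\Gamma^{\cN}] = I_B$ forces boundedness, and it is closed, so a maximizer exists.) The claim is then that $X_1 \otimes X_2$ is feasible for the primal SDP of $\cN_1 \otimes \cN_2$, i.e., that
\begin{equation}
    I_{A_1A_2} \otimes (X_1 \otimes X_2) \leq \Gamma^{\cN_1}_{A_1B_1} \otimes \Gamma^{\cN_2}_{A_2B_2},
\end{equation}
where I use $\Gamma^{\cN_1 \otimes \cN_2} = \Gamma^{\cN_1} \otimes \Gamma^{\cN_2}$ (up to reordering of tensor factors, which I would handle by a permutation-unitary conjugation and suppress). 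To see this, I apply \cref{prop:tensor-operator-dom} twice: from $I_{A_1} \otimes X_1 \leq \Gamma^{\cN_1}$ and $X_2 \geq 0$, tensoring on the right by $X_2$ gives $(I_{A_1} \otimes X_1) \otimes X_2 \leq \Gamma^{\cN_1} \otimes X_2$; and from $I_{A_2} \otimes X_2 \leq \Gamma^{\cN_2}$ and $\Gamma^{\cN_1} \geq 0$, tensoring on the left by $\Gamma^{\cN_1}$ gives $\Gamma^{\cN_1} \otimes (I_{A_2} \otimes X_2) \leq \Gamma^{\cN_1} \otimes \Gamma^{\cN_2}$ (after a harmless reordering so that \cref{prop:tensor-operator-dom} applies with the varying factor on the right, or equivalently by noting the symmetric version of the lemma). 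Chaining these two operator inequalities and reorganizing the tensor factors yields the displayed bound. Since the feasible value of this point is $\Tr[X_1 \otimes X_2] = \Tr[X_1]\Tr[X_2] = \alpha(\cN_1)\alpha(\cN_2)$, and $\alpha(\cN_1 \otimes \cN_2)$ is a supremum over feasible points, we get $\alpha(\cN_1 \otimes \cN_2) \geq \alpha(\cN_1)\alpha(\cN_2)$.

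\textbf{Conclusion and the obstacle.} Combining with \cref{prop:q-doeblin-sub-multiplicative} gives equality for $n=2$, and an easy induction — using that a tensor product of channels is a channel, and that the hypothesis $\alpha_{+}(\cN_i) = \alpha(\cN_i)$ is assumed for each factor — extends it to arbitrary $n$. The only genuinely delicate point, and the one I would be most careful about, is the repeated application of \cref{prop:tensor-operator-dom} with the correct placement of the "extra" tensor factor: the lemma as stated tensors $C$ on the right, so I need either to observe it holds symmetrically (tensoring on the left is the same argument with $\Phi(\cdot) = (\sqrt{C} \otimes I)\cdot(\sqrt{C}\otimes I)^\dagger$) or to insert the appropriate swap unitaries; crucially this is exactly where positivity of $X_1$ and $X_2$ is indispensable, matching the discussion preceding the lemma about negative eigenspaces spoiling the tensor monotonicity when $X$ is merely Hermitian.
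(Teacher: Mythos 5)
Your proposal is correct and follows essentially the same route as the paper's proof: extract positive semidefinite optimizers from the hypothesis $\alpha_{+}(\cN_i)=\alpha(\cN_i)$, apply \cref{prop:tensor-operator-dom} twice to chain $I_{A_1}\otimes I_{A_2}\otimes X_1\otimes X_2 \leq \Gamma^{\cN_1}\otimes\Gamma^{\cN_2}$, and combine feasibility of $X_1\otimes X_2$ with \cref{prop:q-doeblin-sub-multiplicative}, iterating for $n>2$. Your added remarks on attainment of the supremum and on the left/right placement in the tensor-domination lemma are careful refinements of details the paper treats implicitly, but they do not change the argument.
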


\begin{proof}
    We first prove the case when $n = 2$  and then explain how the proof generalizes. Recall that \cref{def:alpha-plus-doeblin} is the same as \cref{def:q-doeblin-coeff} except the optimization is restricted to a positive semidefinite $X$. Thus, if $\alpha_{+}(\cN_{1}) = \alpha(\cN_{1})$ and $\alpha_{+}(\cN_{2}) = \alpha(\cN_{2})$, then $\alpha(\cN_{1}), \alpha(\cN_{2})$ can be optimized by optimizers of $\alpha_{+}(\cN_{1}),\alpha_{+}(\cN_{2})$, respectively, given by $X^{1}_{B_{1}},X^{2}_{B_{2}} \geq 0$. Then,
    \begin{align}
        (I_{A_{1}} \otimes X^{1}_{B_{1}}) \otimes (I_{A_{2}} \otimes X^{2}_{B_{2}}) & \leq \Gamma^{\cN_{1}} \otimes  (I_{A_{2}} \otimes X^{2}_{B_{2}}) \\
         & \leq \Gamma^{\cN_{1}} \otimes \Gamma^{\cN_{2}} \ , 
    \end{align}
    where the first inequality follows from tensoring $(I_{A_{2}} \otimes X^{2}_{B_{2}})$ to $(I_{A_{1}} \otimes X^{1}_{B_{1}}) \leq \Gamma^{\cN_{1}}$ and the second follows by tensoring $\Gamma^{\cN_{1}}$ to $(I_{A_{2}} \otimes X^{2}_{B_{2}}) \leq \Gamma^{\cN_{2}}$. Both of these inequalities hold via  \cref{prop:tensor-operator-dom}. Combining these relations, we obtain 
    \begin{align}
        I_{A_{1}} \otimes I_{A_{2}} \otimes X^{1}_{B_{1}} \otimes X^{2}_{B_{2}} \leq \Gamma^{\cN_1} \otimes \Gamma^{\cN_2} \ . 
    \end{align}
    Thus, $X_{B_{1}B_{2}} \coloneq X^{1} \otimes X^{2}$ is feasible for $\alpha(\cN_{1} \otimes \cN_{2})$ via \eqref{eq:alpha_cN_def} and achieves $\Tr[X^{1} \otimes X^{2}] = \Tr[X^{1}]\Tr[X^{2}] = \alpha(\cN_1)\alpha(\cN_2)$. As \eqref{eq:alpha_cN_def} is a supremization, combining this with \cref{prop:q-doeblin-sub-multiplicative} allows us to conclude $\alpha(\cN_{1} \otimes \cN_{2}) = \alpha(\cN_{1})\alpha(\cN_{2})$. This completes the proof for the $n = 2$ case. For $n > 2$, note that we can use the argument above iteratively to construct feasible $X_{B^{n}_{1}} \coloneq \otimes_{i \in [n]} X_{B_{i}}^{i}$ that achieves $\prod_{i \in [n]} \alpha(\cN_{i})$. By  \cref{prop:q-doeblin-sub-multiplicative}, this completes the proof.
\end{proof}

Note that as both $\alpha(\cN)$ and $\alpha_{+}(\cN)$ are semidefinite programs, the conditions of  \cref{lem:suff-cond-for-multiplicativity} can be checked numerically. However, it is reasonable to look for sufficient structure of the channel to know without numerically checking. We provide two cases. First, as will be shown in  \cref{cor:measurement-channel-Doeblin-collapse}, for every measurement channel $\cM$, it is the case that $\alpha_{+}(\cM) = \alpha(\cM)$. Thus we obtain the following:
\begin{theorem}
    For measurement channels $\cN_{A \to X}$ and $\cM_{B \to Y}$, 
    \begin{equation}
    \alpha(\cN \otimes \cM) = \alpha(\cN)\alpha(\cM).
    \end{equation}
\end{theorem}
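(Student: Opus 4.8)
The plan is to reduce the claim to the sufficient condition for multiplicativity in \cref{lem:suff-cond-for-multiplicativity}, which states that if $\alpha_{+}(\cN_{i}) = \alpha(\cN_{i})$ holds for each factor, then $\alpha$ is multiplicative over the tensor product. So it suffices to establish that every measurement channel $\cM$ satisfies $\alpha_{+}(\cM) = \alpha(\cM)$; that is, the optimization in \cref{def:q-doeblin-coeff} loses nothing by restricting $X$ to be positive semidefinite. This is exactly the content of the later \cref{cor:measurement-channel-Doeblin-collapse}, which is invoked in the paragraph preceding the statement, but for completeness I would prove it directly as follows.

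First I would write a measurement channel $\cM_{B \to Y}$ with POVM $(\Lambda_{y})_{y}$ in Choi form and observe that $\Gamma^{\cM}_{BY} = \sum_{y} \Lambda_{y}^{T} \otimes |y\rangle\!\langle y|_{Y}$ is block-diagonal in the classical register $Y$. Given any Hermitian $X_{Y}$ feasible for $\alpha(\cM)$, i.e.\ $I_{B} \otimes X_{Y} \leq \Gamma^{\cM}_{BY}$, apply the completely positive, trace-preserving dephasing (pinching) map $\cP(\cdot) = \sum_{y} |y\rangle\!\langle y| (\cdot) |y\rangle\!\langle y|$ on system $Y$ to both sides. Since $\Gamma^{\cM}_{BY}$ is invariant under $\id_{B} \otimes \cP$, this yields $I_{B} \otimes \cP(X_{Y}) \leq \Gamma^{\cM}_{BY}$ with $\operatorname{Tr}[\cP(X_{Y})] = \operatorname{Tr}[X_{Y}]$, so without loss of generality $X_{Y} = \sum_{y} x_{y}\,|y\rangle\!\langle y|$ is diagonal. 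The operator inequality then decouples block-by-block into $x_{y} I_{B} \leq \Lambda_{y}^{T}$ for each $y$, and maximizing $\sum_{y} x_{y}$ forces $x_{y} = \lambda_{\min}(\Lambda_{y}^{T}) = \lambda_{\min}(\Lambda_{y}) \geq 0$, the last inequality because $\Lambda_{y} \geq 0$. Hence the optimal $X_{Y}$ is positive semidefinite, giving $\alpha_{+}(\cM) = \alpha(\cM)$.

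Applying this observation to both $\cN_{A \to X}$ and $\cM_{B \to Y}$ and then invoking \cref{lem:suff-cond-for-multiplicativity} with $n = 2$ immediately yields $\alpha(\cN \otimes \cM) = \alpha(\cN)\alpha(\cM)$. I do not anticipate a genuine obstacle: the only non-routine ingredient is the pinching reduction to diagonal $X$ and the elementary fact that the per-outcome minimum eigenvalue of a POVM element is non-negative; after that it is a direct appeal to \cref{lem:suff-cond-for-multiplicativity}. The one point requiring care is bookkeeping with the Choi-operator convention (the transpose on $\Lambda_{y}$), so that the minimum-eigenvalue computation and the resulting positive-semidefiniteness conclusion are stated consistently.
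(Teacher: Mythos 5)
Your proposal is correct and follows essentially the same route as the paper: the paper also derives this theorem by combining \cref{lem:suff-cond-for-multiplicativity} with the fact that $\alpha_{+}(\cM)=\alpha(\cM)$ for measurement channels, and the latter is established in \cref{prop:doeblin-of-measurement-channel} and \cref{cor:measurement-channel-Doeblin-collapse} via the same pinching reduction to diagonal $X$ and the observation that $\sum_{y}\lambda_{\min}(\Lambda_{y})$ is attained by a positive semidefinite optimizer. The only difference is that you inline the proof of the collapse $\alpha_{+}(\cM)=\alpha(\cM)$ rather than citing it, which is fine.
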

We remark this case is important as it recovers that the Doeblin coefficient is multiplicative for arbitrary classical channels \cite[Theorem 1]{makur2024doeblin}. Also note that  \cref{example:strict-submultiplicativity-for-qubit-channels} shows that one cannot make a similar claim for preparation channels.

The second case where we obtain multiplicativity involves channels that are covariant with respect to some irreducible representation on their outputs. We refer to such channels as mictodiactic covariant channels, due to the fact that, for such channels, the output is a maximally mixed state if the input is maximally mixed.\footnote{The term ``mictodiactic'' comes from Greek for ``mixtures passing through.''}
\begin{definition}
    Let $\cN_{A \to B}$ be a quantum channel, and let $G$ be a group. Let $g\to U_{g}$ and $g\to V_{g}$ be unitary representations on $A$ and $B$ respectively, where $g\in G$. We say $\cN$ is $(G,U_{g},V_{g})$-covariant if $V_{g}\cN(X)V_{g}^{\dagger} = \cN(U_{g}XU_{g}^{\dagger})$ for all $g \in G$ and $X \in \operatorname{Lin}(A)$.
\end{definition}

\begin{lemma}\label{lem:q-doeblin-for-irreduc-covar-chan}
    Let $\cN_{A \to B}$ be a $(G,U_{g},V_{g})$-covariant channel, where $ g \to V_{g}$ is an irreducible representation of a compact group $G$. Then
    \begin{align}\label{eq:q-Doeblin-for-covariant-channel}
        \alpha(\cN) = \max\{c:  I_{A} \otimes c \pi_{B} \leq \Gamma^{\cN} \} = \alpha_{+}(\cN), 
    \end{align}
    where $\pi_B$ is the maximally mixed state of system $B$.
\end{lemma}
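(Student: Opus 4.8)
The plan is a twirling argument on the primal SDP \eqref{eq:alpha_cN_def} followed by Schur's lemma. The first ingredient is the covariance of the Choi operator: if $\cN$ is $(G,U_g,V_g)$-covariant, then a short calculation from the definition \eqref{eq:choi_operator}, using the ``transpose trick'' $(M\otimes I)|\gamma\rangle=(I\otimes M^{T})|\gamma\rangle$ for the unnormalized maximally entangled vector $|\gamma\rangle=\sum_i|i\rangle|i\rangle$ together with the relation $V_g\cN(X)V_g^{\dagger}=\cN(U_gXU_g^{\dagger})$, shows that $(\overline{U_g}\otimes V_g)\,\Gamma^{\cN}\,(\overline{U_g}\otimes V_g)^{\dagger}=\Gamma^{\cN}$ for all $g\in G$, where $\overline{U_g}$ denotes the entrywise complex conjugate of $U_g$.

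Now take an optimizer $X_B\in\operatorname{Herm}(B)$ for $\alpha(\cN)$ in \eqref{eq:alpha_cN_def}, so that $I_A\otimes X_B\leq\Gamma^{\cN}$ and $\operatorname{Tr}[X_B]=\alpha(\cN)$. Conjugating this operator inequality by the unitary $\overline{U_g}\otimes V_g$, and using both the covariance of $\Gamma^{\cN}$ and $\overline{U_g}\,I_A\,\overline{U_g}^{\dagger}=I_A$, yields $I_A\otimes V_gX_BV_g^{\dagger}\leq\Gamma^{\cN}$ for every $g\in G$. Averaging over the Haar probability measure of the compact group $G$, set $\overline{X}_B\coloneqq\int_G V_gX_BV_g^{\dagger}\,dg$. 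This operator is Hermitian (an average of Hermitian operators), it still satisfies $I_A\otimes\overline{X}_B\leq\Gamma^{\cN}$ (the right-hand side is $g$-independent, so the constraint survives the average), and $\operatorname{Tr}[\overline{X}_B]=\operatorname{Tr}[X_B]=\alpha(\cN)$ by unitary invariance of the trace. Moreover, left-invariance of the Haar measure gives $V_h\overline{X}_BV_h^{\dagger}=\overline{X}_B$ for all $h\in G$; i.e., $\overline{X}_B$ intertwines the irreducible representation $g\mapsto V_g$, so by Schur's lemma $\overline{X}_B=\lambda I_B$ for some $\lambda\in\mathbb{R}$. Writing $\overline{X}_B=c\,\pi_B$ with $c\coloneqq\lambda d_B$, we have $c=\operatorname{Tr}[\overline{X}_B]=\alpha(\cN)$, so $I_A\otimes\alpha(\cN)\pi_B\leq\Gamma^{\cN}$ and hence $\max\{c:I_A\otimes c\pi_B\leq\Gamma^{\cN}\}\geq\alpha(\cN)$. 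The reverse inequality is immediate, since for any feasible $c$ the choice $X_B=c\pi_B\in\operatorname{Herm}(B)$ is feasible for \eqref{eq:alpha_cN_def} with objective value $c$. This establishes the first equality in \eqref{eq:q-Doeblin-for-covariant-channel}.

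For the equality $\alpha(\cN)=\alpha_{+}(\cN)$: by \cref{prop:q-doeblin-normalization} we have $\alpha(\cN)\geq0$, so the optimal $c=\alpha(\cN)$ makes $c\pi_B$ positive semidefinite; it is therefore feasible for $\alpha_{+}(\cN)$ in \eqref{def:alpha-plus-doeblin} with objective value $c$, giving $\alpha_{+}(\cN)\geq\alpha(\cN)$, while $\alpha_{+}(\cN)\leq\alpha(\cN)$ holds trivially because the feasible set in \eqref{def:alpha-plus-doeblin} is contained in that of \eqref{eq:alpha_cN_def}. I expect the only genuinely delicate point to be getting the conjugates and transposes right in the covariance identity for $\Gamma^{\cN}$; the Haar average (finite dimension, compact group) and the invocation of Schur's lemma are routine, though one should explicitly note that $\overline{X}_B$ stays Hermitian and that the closed SDP constraint is preserved under the average.
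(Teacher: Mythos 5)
Your proposal is correct and follows essentially the same route as the paper: both twirl the primal feasible point over the Haar measure using the covariance of the Choi operator under $\overline{U_g}\otimes V_g$, and both reduce the averaged operator to a multiple of $\pi_B$ (you via Schur's lemma, the paper via the equivalent statement that an irreducible representation of a compact group forms a one-design), then note feasibility for $\alpha_+$ since the optimal trace is nonnegative. Your write-up is slightly more explicit about verifying the covariance identity for $\Gamma^{\cN}$ and the first equality in the lemma, but there is no substantive difference in approach.
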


\begin{proof}
    Let $Y$ be an optimizer for~\eqref{eq:alpha_cN_def}. Then, using that the twirling map
    \begin{equation}  
    \cT(\cdot) \coloneq \int_{G} (\overline{U}_{g} \otimes V_{g}) \cdot  (\overline{U}_{g} \otimes V_{g})^{\dag} \ d\mu(g)
    \end{equation}
    is CP, we find that
    \begin{align}
        0 & \leq \cT(\Gamma^{\cN} - I_{A} \otimes Y) \\
        & = \Gamma^{\cN} -\int_{G} (\overline{U}_{g} \otimes V_{g}) I_{A} \otimes Y (\overline{U}_{g} \otimes V_{g})^{\dag}  \ d\mu(g) \\
        & = \Gamma^{\cN} - I_{A} \otimes \int_{G} V_{g}YV_{g}^{\dag} \ d\mu(g) \\
        & = \Gamma^{\cN} - I_{A} \otimes \Tr[Y]\pi_B \ , 
    \end{align}
    where the first equality follows from  linearity of the twirling map and covariance of the channel, the second from linearity and a standard property of unitaries, and the third follows from the fact that an irreducible representation of a compact group $G$ forms a one-design. It follows from the last equation that $\Tr[Y]\pi_B$  is feasible. Since $Y$ is optimal, this is also optimal as $\Tr[\Tr[Y]\pi_B] = \Tr[Y]$. Finally, note that $\Tr[Y] \geq 0$ without loss of generality given \eqref{eq:alpha_cN_def}. Thus, $\Tr[Y]\pi_{B}$ is feasible for $\alpha_{+}(\cN)$. As $\alpha_{+}(\cN) \leq \alpha(\cN)$, this shows that $\alpha_{+}(\cN) = \alpha(\cN)$. Combining these points completes the proof.
\end{proof}

\begin{theorem}
    \sloppy For all $i \in [n]$, let $\cN_{i} \in \Channel(A_{i},B_{i})$ be $(G_{i},U_{g_{i}},V_{g_{i}})$-covariant where $g_i \to V_{g_{i}}$ is an irreducible representation of a compact group $G_{i}$. Then 
    \begin{equation}
        \alpha\!\left(\bigotimes_{i \in [n]} \cN_{i}\right) = \prod_{i \in [n]} \alpha(\cN_{i}).
    \end{equation}
     That is, multiplicativity holds in this setting.
\end{theorem}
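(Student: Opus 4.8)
The plan is to chain the two preceding lemmas together, since the theorem is essentially their composition. First, for each $i \in [n]$, I would invoke \cref{lem:q-doeblin-for-irreduc-covar-chan} applied to the individual channel $\cN_i$: since $\cN_i$ is $(G_i,U_{g_i},V_{g_i})$-covariant with $g_i \to V_{g_i}$ an irreducible representation of the compact group $G_i$, the lemma yields $\alpha_{+}(\cN_i) = \alpha(\cN_i)$. Note that no joint covariance of the product channel is needed here — it is enough that each tensor factor is covariant under its own group, so that each factor separately satisfies the hypothesis of \cref{lem:suff-cond-for-multiplicativity}.

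Second, with the equalities $\alpha_{+}(\cN_i) = \alpha(\cN_i)$ in hand for all $i \in [n]$, I would apply \cref{lem:suff-cond-for-multiplicativity} directly to the family $\{\cN_i\}_{i \in [n]}$, which gives
\begin{equation}
    \alpha\!\left(\bigotimes_{i \in [n]} \cN_i\right) = \prod_{i \in [n]} \alpha(\cN_i),
\end{equation}
completing the proof. In effect, the covariance-plus-irreducibility structure is exactly what forces the optimizer of $\alpha(\cN_i)$ to collapse (via twirling) onto the positive semidefinite operator $\Tr[Y]\pi_{B_i}$, and once the optimizers are positive semidefinite, the tensor-product domination argument of \cref{prop:tensor-operator-dom} (used inside \cref{lem:suff-cond-for-multiplicativity}) produces a feasible product optimizer for $\alpha(\bigotimes_i \cN_i)$ achieving $\prod_i \alpha(\cN_i)$; submultiplicativity from \cref{prop:q-doeblin-sub-multiplicative} then closes the gap.

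There is no real obstacle here — the work has already been done in the two lemmas, so this theorem is a corollary and its proof is a two-line bookkeeping argument. The only point worth a sentence of care is confirming that the hypothesis of \cref{lem:suff-cond-for-multiplicativity} ("$\alpha_{+}(\cN_i) = \alpha(\cN_i)$ for all $i$") is met channel-by-channel, which is precisely the conclusion of \cref{lem:q-doeblin-for-irreduc-covar-chan} for each $i$; one should also note that $\Tr[Y_i] \geq 0$ without loss of generality so that the collapsed optimizer $\Tr[Y_i]\pi_{B_i}$ is genuinely positive semidefinite and hence feasible for $\alpha_{+}(\cN_i)$, as already observed in the proof of \cref{lem:q-doeblin-for-irreduc-covar-chan}.
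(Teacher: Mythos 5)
Your proposal is correct and follows exactly the paper's proof, which likewise combines \cref{lem:q-doeblin-for-irreduc-covar-chan} (to get $\alpha_{+}(\cN_i)=\alpha(\cN_i)$ for each $i$) with \cref{lem:suff-cond-for-multiplicativity}. The additional commentary you give about the twirling collapse and the nonnegativity of $\Tr[Y_i]$ is accurate and consistent with the details already established in those lemmas.
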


\begin{proof}
    This follows from  \cref{lem:q-doeblin-for-irreduc-covar-chan} and \cref{lem:suff-cond-for-multiplicativity}.
\end{proof}

\subsection{Concavity}

Next, we show that the quantum Doeblin coefficients are concave under a convex combination of channels. This may find use in applications where a set of noise channels are applied randomly.

\begin{proposition}[Concavity]
\label{prop:concavity-doeblin}
Let $\lambda_i \geq 0$ for all $i$ and $\sum_i {\lambda_i}=1$, and let $(\cN_i)_i$ be a tuple of quantum channels.
    The quantum Doeblin coefficients $\alpha(\cN) $ and $\alpha_I(\cN)$ are concave; i.e., for $\cN\coloneqq \sum_i\lambda_i\cN_i$, we have
    \begin{align}
        \alpha(\cN) \geq \sum_i \lambda_i \alpha(\cN_i),\\
        \alpha_I(\cN) \geq \sum_i \lambda_i \alpha_I(\cN_i).
    \end{align}
\end{proposition}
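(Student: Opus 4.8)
The plan is to prove both inequalities at once by exploiting the representations of the two Doeblin coefficients as optimizations over a \emph{cone} with a \emph{linear} objective, combined with the linearity of the assignment $X \mapsto \mathcal{R}^X$. Concretely, I would use \eqref{eq:doeblin-CP-order} for $\alpha(\cN)$ and \eqref{eq:alpha-induced-def-pos-cone} for $\alpha_I(\cN)$.

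First, for $\alpha(\cN)$: for each index $i$, let $X_i \in \operatorname{Herm}$ be feasible for $\alpha(\cN_i)$ in the sense of \eqref{eq:doeblin-CP-order}, i.e., $\cN_i - \mathcal{R}^{X_i}$ is completely positive. Set $X \coloneqq \sum_i \lambda_i X_i$. Since $\mathcal{R}^{X}(\cdot) = \operatorname{Tr}[\cdot]\,X$ is linear in $X$, we have $\mathcal{R}^{X} = \sum_i \lambda_i \mathcal{R}^{X_i}$, and hence
\begin{equation}
\cN - \mathcal{R}^{X} = \sum_i \lambda_i \left( \cN_i - \mathcal{R}^{X_i}\right),
\end{equation}
which is a nonnegative combination of completely positive maps and therefore completely positive. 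Thus $X$ is feasible for $\alpha(\cN)$ with objective value $\operatorname{Tr}[X] = \sum_i \lambda_i \operatorname{Tr}[X_i]$. Optimizing over the $X_i$ (or, if the suprema are not attained, taking $X_i$ to be $\varepsilon$-optimal and letting $\varepsilon \to 0$ at the end) gives $\alpha(\cN) \geq \sum_i \lambda_i \alpha(\cN_i)$.

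Second, for $\alpha_I(\cN)$: the argument is verbatim the same, except that one works with the positive-map ordering, using the characterization $\alpha_I(\cN) = \sup\{\operatorname{Tr}[X] : \cN - \mathcal{R}^X \in \operatorname{Pos}\}$ from \eqref{eq:alpha-induced-def-pos-cone}. If $\cN_i - \mathcal{R}^{X_i}$ is a positive map for every $i$, then $\sum_i \lambda_i \left(\cN_i - \mathcal{R}^{X_i}\right)$ is a nonnegative combination of positive maps, hence positive, so $X = \sum_i \lambda_i X_i$ is feasible for $\alpha_I(\cN)$ and the same conclusion follows.

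There is no real obstacle here: the only facts used are that $X \mapsto \mathcal{R}^X$ is linear and that the sets of completely positive (respectively, positive) maps are convex cones, so that nonnegative combinations of feasible points remain feasible. Indeed the same scheme immediately yields concavity of $\alpha_{\wang}(\cN)$ and of the cone-restricted coefficients $\alpha_{I,\cK}(\cN)$ as well, since each is an optimization over a cone with linear objective; the subtlety, if any, is purely the cosmetic one of whether the optima are attained, which is handled by the standard $\varepsilon$-approximation argument.
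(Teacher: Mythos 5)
Your proposal is correct and follows essentially the same route as the paper: the condition $\mathcal{N}_i - \mathcal{R}^{X_i} \in \operatorname{CP}$ is exactly the Choi-operator condition $I \otimes X_i \leq \Gamma^{\mathcal{N}_i}$ used in the paper, and both arguments take the convex combination $X = \sum_i \lambda_i X_i$ as a feasible point for $\alpha(\mathcal{N})$ (and analogously for $\alpha_I$ with positive maps). Your explicit handling of possibly non-attained suprema via $\varepsilon$-optimal choices is a minor tidiness improvement over the paper's phrasing, but the substance is identical.
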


\begin{proof}
First, we see that $\Gamma^{\cN}=\sum_i \lambda_i \Gamma^{\cN_i}$. Let $X_{\cN_i}$ be chosen such that $I \otimes X_{\cN_i} \leq \Gamma^{\cN_i} $ for all $i$. With that, we have 
\begin{equation}
    \sum_{i} \lambda_i I \otimes X_{\cN_i}  \leq \sum_i \lambda_i \Gamma^{\cN_i} =\Gamma^{\cN}.
\end{equation}
This leads to 
    \begin{align}
        \alpha(\cN) \geq \tr\!\left[\sum_i \lambda_i X_{\cN_i}\right] =  \sum_i \lambda_i \alpha(\cN_i).
    \end{align}
Finally, we conclude the proof by optimizing $X_{\cN_i}$ on the right-hand side, together with~\eqref{eq:alpha_cN_def}. 

The claim for $\alpha_I(\mathcal{N})$ follows from similar reasoning.
\end{proof}

\subsection{Properties of Alternative Notions of Quantum Doeblin Coefficient}

\begin{proposition}\label{prop:induced-doeblin-normalized}
    For a quantum channel $\cN$, the following bounds hold:
    \begin{equation}
    0 \leq \alpha_{I}(\cN) \leq 1,   
    \end{equation}
    where the upper bound holds with equality if and only if $\cN$ is a replacer channel.
\end{proposition}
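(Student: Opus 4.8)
The plan is to mirror the normalization argument for $\alpha(\mathcal{N})$ given in \cref{prop:q-doeblin-normalization}, using the ordering $\alpha(\mathcal{N}) \leq \alpha_I(\mathcal{N})$ together with the characterizations of $\alpha_I$ already established. First, the lower bound $0 \leq \alpha_I(\mathcal{N})$ is immediate: either observe that $X = 0$ is feasible for the primal conic program in \eqref{eqn:inducedDoeblinPrimal} (since $\mathcal{N}$ is a positive map, hence $\mathcal{N} - \mathcal{R}^0 = \mathcal{N} \in \operatorname{Pos}$, i.e.\ $\Gamma^{\mathcal{N}} \in \Sep^{\ast}(A:B)$ by \cref{prop:positive-to-block-positive}), or simply invoke \cref{prop:induced-doeblin-larger-than-doeblin} and $\alpha(\mathcal{N}) \geq 0$ from \cref{prop:q-doeblin-normalization}.

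For the upper bound $\alpha_I(\mathcal{N}) \leq 1$, I would use the dual conic program \eqref{eqn:inducedDoeblinDual}: the operator $Y_{AB} = \frac{1}{d_A} I_A \otimes I_B$ is separable (it is a product operator, hence in $\Sep(A:B)$), satisfies $\operatorname{Tr}_A[Y_{AB}] = I_B$, and gives objective value $\frac{1}{d_A}\operatorname{Tr}[I_A \otimes I_B \cdot \Gamma^{\mathcal{N}}] = \frac{1}{d_A} \cdot d_A = 1$ using $\operatorname{Tr}[\Gamma^{\mathcal{N}}] = d_A$ for a channel. Since the dual is an infimum and strong duality was noted (the condition $I_{AB}$ in the relative interior, or just weak duality, suffices for an upper bound), we get $\alpha_I(\mathcal{N}) \leq 1$. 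Alternatively, one can use the minimum-singlet-fraction interpretation \eqref{eq:min-singlet-frac-interp-induced-doeblin} and take the decoder to be completely depolarizing, exactly as in the \cref{thm:doeblin_antiDistinguish} remark.

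For the saturation condition, one direction is easy: if $\mathcal{N}$ is a replacer channel, then $\mathcal{N} = \mathcal{R}^{\sigma}$ for a state $\sigma$, so $X = \sigma$ is feasible for the primal program \eqref{eq:alpha-induced-def-pos-cone} (indeed $\mathcal{N} - \mathcal{R}^{\sigma} = 0 \in \operatorname{Pos}$) with $\operatorname{Tr}[\sigma] = 1$, giving $\alpha_I(\mathcal{N}) = 1$. For the converse, suppose $\alpha_I(\mathcal{N}) = 1$. The cleanest route: from $\alpha(\mathcal{N}) \leq \alpha_I(\mathcal{N}) \leq 1$ and the fact that $1 - \alpha_I(\mathcal{N})$ upper bounds the trace-distance contraction coefficient $\eta_{\operatorname{Tr}}(\mathcal{N})$ (this is \cite[Theorem~8.17]{wolf2012quantum}, cited in the discussion after \cref{def:alt-doeblin-quantum-induced}), we conclude $\eta_{\operatorname{Tr}}(\mathcal{N}) = 0$, hence $\mathcal{N}(\rho) = \mathcal{N}(\sigma)$ for all states $\rho, \sigma$, so $\mathcal{N}$ is a replacer channel. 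The main obstacle, if one avoids citing the contraction-coefficient bound, is proving the converse directly from the reverse-robustness form \eqref{eq:reverse-robustness-induced-doeblin}: if $\alpha_I(\mathcal{N}) = 1$ is attained then $\mathcal{N} = \mathcal{R}^{\tau}$ for some $\tau \in \operatorname{aff}(\mathcal{D})$, and trace preservation forces $\tau$ to be a genuine state — this is the same argument as the $\lambda = 1$ case in the proof of \cref{prop:reverse-robustness}; one should also handle the possibility that the supremum is not attained, but compactness of the relevant feasible sets (as used in \cref{prop:induced-Doeblin-structure}) takes care of that.
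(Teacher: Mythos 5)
Your proposal is correct and follows essentially the same route as the paper: the lower bound via $\alpha_I(\cN) \geq \alpha(\cN) \geq 0$, the easy direction of saturation by exhibiting $X=\sigma$ as primal feasible, and the converse via $\eta_{\Tr}(\cN) \leq 1-\alpha_I(\cN)$ forcing $\eta_{\Tr}(\cN)=0$ and hence $\cN$ a replacer channel. The only cosmetic difference is the upper bound, where you plug the separable feasible point $\tfrac{1}{d_A}I_A\otimes I_B$ into the dual conic program \eqref{eqn:inducedDoeblinDual} rather than invoking, as the paper does, the operational definition \eqref{eq:induced-doeblin} together with the fact that the classical Doeblin coefficient is at most one; both are valid and equally short.
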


\begin{proof}
    That $\alpha_{I}(\cN) \geq 0$ follows from  \cref{prop:induced-doeblin-larger-than-doeblin} and \cref{prop:q-doeblin-normalization}. That $\alpha_{I}(\cN) \leq 1$ follows from  \cref{def:induced-doeblin-coefficient} and that the classical Doeblin coefficient is bounded from above by~1.
    
    We now focus on establishing the saturation condition. If $\cN$ is a replacer channel, it is straightforward to conclude that $\alpha_{I}(\cN) = 1$. The other direction follows from the inequality $\eta_{\operatorname{Tr}}(\mathcal{N}) \leq 1- \alpha_I(\mathcal{N})$ (see \eqref{eq:trace-distance-CC-to-induced-doeblin}), and an argument similar to that given in the proof of \cref{prop:q-doeblin-normalization}. Indeed, this inequality and $\alpha_{I}(\cN) = 1$ imply  that $\eta_{\Tr}(\cN) = 0$. It follows that $\sup_{\rho,\sigma \in \Density} \left\Vert \cN(\rho) - \cN(\sigma) \right\Vert_{1} = 0$. Thus, $\cN(\rho) = \cN(\sigma)$ for all $\rho,\sigma\in \mathcal{D}$. It follows that $\cN$ is a replacer channel.
\end{proof}

\begin{proposition}[Normalization]\label{prop:doeblin-wang-normalized}
    For a quantum channel $\mathcal{N}$, the following inequalities hold:
    \begin{equation}
        0 \leq \alpha_{\wang}(\mathcal{N}) \leq 1.
    \end{equation}
    Moreover, it is equal to one if and only if $\cN$ is a replacer channel.
\end{proposition}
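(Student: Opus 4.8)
The plan is to establish the three claims—$\alpha_{\wang}(\mathcal{N}) \geq 0$, $\alpha_{\wang}(\mathcal{N}) \leq 1$, and the saturation condition—by leveraging the already-established relation $\alpha(\mathcal{N}) \geq \alpha_{\wang}(\mathcal{N})$ from \eqref{eq:a-b-doeblin-ineq} together with the analogous normalization result for $\alpha(\mathcal{N})$ in \cref{prop:q-doeblin-normalization}, and by exhibiting explicit feasible points for the primal SDP in \eqref{eq:primal-b-relaxed-doeblin}. For the lower bound, note that $X_B = 0$ is feasible in \eqref{eq:primal-b-relaxed-doeblin}, since the constraint $-\Gamma^{\mathcal{N}}_{AB} \leq 0 \leq \Gamma^{\mathcal{N}}_{AB}$ holds because $\Gamma^{\mathcal{N}}_{AB}$ is positive semidefinite; the objective value at this point is $\operatorname{Tr}[0] = 0$, so $\alpha_{\wang}(\mathcal{N}) \geq 0$. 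For the upper bound, the cleanest route is to invoke \eqref{eq:a-b-doeblin-ineq}, which gives $\alpha_{\wang}(\mathcal{N}) \leq \alpha(\mathcal{N}) \leq 1$ by \cref{prop:q-doeblin-normalization}; alternatively one could use the dual \eqref{eq:dual-of-bN-quantity} with $Y^1_{AB} = 0$ and $Y^2_{AB} = \frac{1}{d_A} I_A \otimes I_B$, which is feasible and has objective value $1$.

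For the saturation condition, one direction is immediate: if $\mathcal{N}$ is a replacer channel, say $\mathcal{N}(\cdot) = \operatorname{Tr}[\cdot]\,\sigma_B$, then $\Gamma^{\mathcal{N}}_{AB} = I_A \otimes \sigma_B$, and choosing $X_B = \sigma_B$ makes both constraints $-I_A \otimes \sigma_B \leq I_A \otimes \sigma_B \leq I_A \otimes \sigma_B$ hold (using $\sigma_B \geq 0$), with objective value $\operatorname{Tr}[\sigma_B] = 1$; combined with $\alpha_{\wang}(\mathcal{N}) \leq 1$ this forces $\alpha_{\wang}(\mathcal{N}) = 1$. For the converse, suppose $\alpha_{\wang}(\mathcal{N}) = 1$. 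Then by \eqref{eq:a-b-doeblin-ineq} we get $\alpha(\mathcal{N}) \geq \alpha_{\wang}(\mathcal{N}) = 1$, and since $\alpha(\mathcal{N}) \leq 1$ we conclude $\alpha(\mathcal{N}) = 1$; by the saturation condition in \cref{prop:q-doeblin-normalization}, this implies $\mathcal{N}$ is a replacer channel.

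The proof is essentially routine given the tools already developed; the only mild subtlety is making sure the feasibility checks use positive semidefiniteness of $\Gamma^{\mathcal{N}}_{AB}$ (for a channel, $\Gamma^{\mathcal{N}}_{AB} \geq 0$) in the right places. There is no real obstacle here—the main point is simply to recognize that the converse saturation direction should be routed through $\alpha(\mathcal{N})$ rather than re-proving a contraction-coefficient argument from scratch, since $\alpha_{\wang}$ does not itself appear in a contraction-coefficient bound; the sandwich $\alpha_{\wang}(\mathcal{N}) \leq \alpha(\mathcal{N}) \leq 1$ does all the work.
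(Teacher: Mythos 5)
Your proof is correct and follows essentially the same route as the paper: primal feasibility of $X_B=0$ for the lower bound, the dual feasible point (or equivalently the sandwich $\alpha_{\wang}(\mathcal{N})\leq\alpha(\mathcal{N})\leq 1$) for the upper bound, and the chain $1=\alpha_{\wang}(\mathcal{N})\leq\alpha(\mathcal{N})\leq 1$ combined with the saturation condition of \cref{prop:q-doeblin-normalization} for the converse direction. Your explicit verification that $X_B=\sigma_B$ is feasible for a replacer channel is a welcome filling-in of a step the paper merely declares to be clear.
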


\begin{proof}
    To establish the inequality $0 \leq \alpha_{\wang}(\mathcal{N})$, we note that $X_B=0 $ is feasible in the primal SDP in~\eqref{eq:primal-b-relaxed-doeblin}, and the objective function evaluates to zero for this choice. To establish the inequality $\alpha_{\wang}(\mathcal{N})\leq 1$, we note that $Y^2_{AB} = \frac{1}{d_A}I_A \otimes I_B$ and $Y^1_{AB}=0$ are feasible for the dual SDP in~\eqref{eq:dual-of-bN-quantity}, and the objective function evaluates to one for these choices.
    
    For the equality $\alpha_{\wang}(\mathcal{N}) = 1$, note that it is clear that if $\cN$ is a replacer channel, then $\alpha_{\wang}(\cN) = 1$ is achievable. Furthermore, if $\alpha_{\wang}(\cN) = 1$, then $1 = \alpha_{\wang}(\cN) \leq \alpha(\cN) \leq 1$ and, by  \cref{prop:q-doeblin-normalization}, the channel $\cN$ must be a replacer channel.
\end{proof}

\begin{proposition}[Multiplicativity of $\alpha_{\wang}(\mathcal{N})$]
\label{prop:multiplicativity-b-N}
For channels $\mathcal{N}$ and $\mathcal{M}$, the following multiplicativity equality holds:
\begin{equation}
    \alpha_{\wang}(\mathcal{N} \otimes \mathcal{M}) = \alpha_{\wang}(\mathcal{N} )\cdot  \alpha_{\wang}( \mathcal{M}).
\end{equation}
\end{proposition}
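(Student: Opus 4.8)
The plan is to prove the two inequalities $\alpha_{\wang}(\mathcal{N} \otimes \mathcal{M}) \geq \alpha_{\wang}(\mathcal{N})\alpha_{\wang}(\mathcal{M})$ and $\alpha_{\wang}(\mathcal{N} \otimes \mathcal{M}) \leq \alpha_{\wang}(\mathcal{N})\alpha_{\wang}(\mathcal{M})$ separately: the first from the $(\lambda,\tau)$-characterization of $\alpha_{\wang}$ recorded just after \cref{def:b-doeblin-coef}, and the second from the dual SDP in~\eqref{eq:dual-of-bN-quantity}. Throughout, I use that after permuting systems into the order $A_1 A_2 B_1 B_2$ one has $\Gamma^{\mathcal{N} \otimes \mathcal{M}} = \Gamma^{\mathcal{N}} \otimes \Gamma^{\mathcal{M}}$ and $\Phi^{\mathcal{N} \otimes \mathcal{M}} = \Phi^{\mathcal{N}} \otimes \Phi^{\mathcal{M}}$; since this permutation is implemented by a fixed unitary, and conjugation by a unitary preserves operator inequalities and traces, I suppress it notationally. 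Note also that if $\alpha_{\wang}(\mathcal{N})$ or $\alpha_{\wang}(\mathcal{M})$ vanishes, then submultiplicativity already forces $\alpha_{\wang}(\mathcal{N} \otimes \mathcal{M}) = 0$ and the claim is immediate, so in the first half below I may assume both are positive.

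For $\alpha_{\wang}(\mathcal{N} \otimes \mathcal{M}) \geq \alpha_{\wang}(\mathcal{N})\alpha_{\wang}(\mathcal{M})$, I would use
\[
\alpha_{\wang}(\mathcal{N})^{-1} = \inf_{\lambda \geq 0,\ \tau \in \operatorname{aff}(\mathcal{D})} \left\{ \lambda \,:\, -\lambda \Phi^{\mathcal{N}} \leq \pi_{d} \otimes \tau \leq \lambda \Phi^{\mathcal{N}} \right\}.
\]
Given feasible pairs $(\lambda_1,\tau_1)$ for $\mathcal{N}$ and $(\lambda_2,\tau_2)$ for $\mathcal{M}$, set $P_i \coloneqq \lambda_i \Phi - \pi_{d_i} \otimes \tau_i \geq 0$ and $Q_i \coloneqq \lambda_i \Phi + \pi_{d_i} \otimes \tau_i \geq 0$ (here $\Phi = \Phi^{\mathcal{N}}$, $d_i$ the input dimension of $\mathcal{N}$ for $i=1$, and $\Phi = \Phi^{\mathcal{M}}$ for $i=2$), so that $\lambda_i \Phi = \frac{1}{2}(Q_i + P_i)$ and $\pi_{d_i} \otimes \tau_i = \frac{1}{2}(Q_i - P_i)$. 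Using $\pi_{d_1 d_2} = \pi_{d_1} \otimes \pi_{d_2}$ and expanding the tensor products, one checks
\[
\lambda_1 \lambda_2 \Phi^{\mathcal{N} \otimes \mathcal{M}} - \pi_{d_1 d_2} \otimes \tau_1 \otimes \tau_2 = \frac{1}{2}\!\left( Q_1 \otimes P_2 + P_1 \otimes Q_2 \right) \geq 0,
\]
\[
\lambda_1 \lambda_2 \Phi^{\mathcal{N} \otimes \mathcal{M}} + \pi_{d_1 d_2} \otimes \tau_1 \otimes \tau_2 = \frac{1}{2}\!\left( Q_1 \otimes Q_2 + P_1 \otimes P_2 \right) \geq 0.
\]
Since $\tau_1 \otimes \tau_2 \in \operatorname{aff}(\mathcal{D})$, the pair $(\lambda_1\lambda_2, \tau_1 \otimes \tau_2)$ is feasible for $\alpha_{\wang}(\mathcal{N} \otimes \mathcal{M})^{-1}$, and infimizing over the two feasible pairs gives $\alpha_{\wang}(\mathcal{N} \otimes \mathcal{M})^{-1} \leq \alpha_{\wang}(\mathcal{N})^{-1}\alpha_{\wang}(\mathcal{M})^{-1}$.

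For $\alpha_{\wang}(\mathcal{N} \otimes \mathcal{M}) \leq \alpha_{\wang}(\mathcal{N})\alpha_{\wang}(\mathcal{M})$, I would pass to the dual SDP~\eqref{eq:dual-of-bN-quantity}. Let $(Y^1,Y^2)$ be dual feasible for $\mathcal{N}$ and $(Z^1,Z^2)$ dual feasible for $\mathcal{M}$, so $Y^1,Y^2,Z^1,Z^2 \geq 0$, $\operatorname{Tr}_{A_1}[Y^2 - Y^1] = I_{B_1}$, and $\operatorname{Tr}_{A_2}[Z^2 - Z^1] = I_{B_2}$. The key construction is
\[
W^2 \coloneqq Y^2 \otimes Z^2 + Y^1 \otimes Z^1, \qquad W^1 \coloneqq Y^2 \otimes Z^1 + Y^1 \otimes Z^2,
\]
both of which are positive semidefinite. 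One has $W^2 - W^1 = (Y^2 - Y^1) \otimes (Z^2 - Z^1)$, so $\operatorname{Tr}_{A_1 A_2}[W^2 - W^1] = I_{B_1} \otimes I_{B_2} = I_{B_1 B_2}$; and $W^1 + W^2 = (Y^1 + Y^2) \otimes (Z^1 + Z^2)$, whence $\operatorname{Tr}[(W^1 + W^2)\Gamma^{\mathcal{N} \otimes \mathcal{M}}] = \operatorname{Tr}[(Y^1 + Y^2)\Gamma^{\mathcal{N}}] \cdot \operatorname{Tr}[(Z^1 + Z^2)\Gamma^{\mathcal{M}}]$. Thus $(W^1,W^2)$ is dual feasible for $\mathcal{N} \otimes \mathcal{M}$, and infimizing over $(Y^1,Y^2)$ and $(Z^1,Z^2)$ yields $\alpha_{\wang}(\mathcal{N} \otimes \mathcal{M}) \leq \alpha_{\wang}(\mathcal{N})\alpha_{\wang}(\mathcal{M})$; combining the two directions gives the equality.

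I do not expect a genuine obstacle: both halves reduce to elementary manipulations of tensor products of positive semidefinite operators, and the dual-side construction $(W^1,W^2)$ is the natural analogue of the one used for related SDP quantities in~\cite{WFD17,wang2018semidefinite}. The only points requiring a little care are confirming that the system-permutation unitary identifying $\Gamma^{\mathcal{N} \otimes \mathcal{M}}$ with $\Gamma^{\mathcal{N}} \otimes \Gamma^{\mathcal{M}}$ (and $\Phi^{\mathcal{N} \otimes \mathcal{M}}$ with $\Phi^{\mathcal{N}} \otimes \Phi^{\mathcal{M}}$) is consistent across all the identifications used, so that it may be dropped, and verifying the algebraic expansions of $\lambda_1\lambda_2 \Phi^{\mathcal{N} \otimes \mathcal{M}} \pm \pi_{d_1 d_2}\otimes\tau_1\otimes\tau_2$ in terms of $P_i$ and $Q_i$; both are routine.
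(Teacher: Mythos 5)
Your proof is correct, and its overall architecture (achievability via the primal/reformulated primal, converse via the dual SDP in \eqref{eq:dual-of-bN-quantity}) matches the paper's. The converse half is essentially identical to the paper's argument: the paper forms exactly the same cross-combination of dual feasible points, $Z^{1}\otimes W^{1}+Z^{2}\otimes W^{2}$ and $Z^{1}\otimes W^{2}+Z^{2}\otimes W^{1}$, so that the difference factorizes as a tensor product with the correct partial trace and the sum factorizes in the objective. (In fact, with the paper's labeling the difference comes out as $-(Z^{2}-Z^{1})\otimes(W^{2}-W^{1})$, a sign slip that your assignment of which combination is $W^{1}$ and which is $W^{2}$ implicitly corrects.) The achievability half is where you genuinely diverge in presentation: the paper works directly in the primal with $X_{1}\otimes X_{2}$ and outsources the feasibility claim --- that $-\Gamma^{\mathcal{N}}\leq I\otimes X_{1}\leq\Gamma^{\mathcal{N}}$ and $-\Gamma^{\mathcal{M}}\leq I\otimes X_{2}\leq\Gamma^{\mathcal{M}}$ imply the corresponding two-sided bound for the tensor product --- to a cited textbook lemma, whereas you pass to the $(\lambda,\tau)$ inverse characterization and prove that same fact from scratch by writing each two-sided constraint as positivity of $P_{i}$ and $Q_{i}$ and expanding $\tfrac14(Q_{1}+P_{1})\otimes(Q_{2}+P_{2})\mp\tfrac14(Q_{1}-P_{1})\otimes(Q_{2}-P_{2})$ into manifestly positive combinations. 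Your route is more self-contained (and your algebra checks out); the paper's is shorter at the cost of an external reference. Your separate handling of the degenerate case where one coefficient vanishes is careful and harmless, since the converse inequality is established independently of it.
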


\begin{proof}
To show this equality, we employ proof techniques similar to those presented in~\cite[Proposition~6]{WFD17}. 

The inequality $\alpha_{\wang}(\mathcal{N} \otimes \mathcal{M}) \geq \alpha_{\wang}(\mathcal{N} )\cdot  \alpha_{\wang}( \mathcal{M})$ follows by picking $X_1 $ to be primal feasible for $\alpha_{\wang}(\mathcal{N} )$ and $X_2$ to be primal feasible for $\alpha_{\wang}(\mathcal{M} )$ (as given in~\eqref{eq:primal-b-relaxed-doeblin}), and noting that $X_1 \otimes X_2 $ is feasible for $\alpha_{\wang}(\mathcal{N} \otimes \mathcal{M})$, by~\cite[Lemma~12.35]{khatri2024principlesquantumcommunicationtheory} and the fact that the Choi operator of $\mathcal{N} \otimes \mathcal{M}$ is $\Gamma^{\mathcal{N}} \otimes \Gamma^{\mathcal{M}}$.

For the opposite inequality, suppose that $Z_{A_{1}B_{1}}^{1}$ and
$Z_{A_{1}B_{1}}^{2}$ are dual feasible for $\alpha_{\wang}(\mathcal{N})$ and that
$W_{A_{2}B_{2}}^{1}$ and $W_{A_{2}B_{2}}^{2}$ are dual feasible for
$\alpha_{\wang}(\mathcal{M})$, as defined in~\eqref{eq:dual-of-bN-quantity}. Thus, the following constraints hold:
\begin{align}
Z_{A_{1}B_{1}}^{1},Z_{A_{1}B_{1}}^{2},W_{A_{2}B_{2}}^{1},W_{A_{2}B_{2}}^{2}  &
\geq0,\\
\operatorname{Tr}_{A_{1}}\!\left[  Z_{A_{1}B_{1}}^{2}-Z_{A_{1}B_{1}}^{1}\right]
& =I_{B_{1}},\\
\operatorname{Tr}_{A_{2}}\!\left[  W_{A_{2}B_{2}}^{2}-W_{A_{2}B_{2}}^{1}\right]
& =I_{B_{2}}.
\end{align}
Then pick
\begin{align}
Y_{A_{1}B_{1}A_{2}B_{2}}^{1}  & \coloneqq Z_{A_{1}B_{1}}^{1}\otimes W_{A_{2}B_{2}}
^{1}+Z_{A_{1}B_{1}}^{2}\otimes W_{A_{2}B_{2}}^{2},\\
Y_{A_{1}B_{1}A_{2}B_{2}}^{2}  & \coloneqq Z_{A_{1}B_{1}}^{1}\otimes W_{A_{2}B_{2}}
^{2}+Z_{A_{1}B_{1}}^{2}\otimes W_{A_{2}B_{2}}^{1}.
\end{align}
It follows that $Y_{A_{1}B_{1}A_{2}B_{2}}^{1}$ and $Y_{A_{1}B_{1}A_{2}B_{2}
}^{2}$ are dual feasible for $\alpha_{\wang}(\mathcal{N}\otimes\mathcal{M})$ because
\begin{align}
Y_{A_{1}B_{1}A_{2}B_{2}}^{1},Y_{A_{1}B_{1}A_{2}B_{2}}^{2}  & \geq0,\\
\operatorname{Tr}_{A_{1}A_{2}}\!\left[  Y_{A_{1}B_{1}A_{2}B_{2}}^{2}
-Y_{A_{1}B_{1}A_{2}B_{2}}^{1}\right]    & =\operatorname{Tr}_{A_{1}A_{2}
}\!\left[  \left(  Z_{A_{1}B_{1}}^{2}-Z_{A_{1}B_{1}}^{1}\right)  \otimes\left(
W_{A_{2}B_{2}}^{2}-W_{A_{2}B_{2}}^{1}\right)  \right]  \\
& =I_{B_{1}}\otimes I_{B_{2}}\\
& =I_{B_{1}B_{2}}.
\end{align}
Furthermore, the objective function value $\operatorname{Tr}[\Gamma
^{\mathcal{N}\otimes\mathcal{M}}(Y_{A_{1}B_{1}A_{2}B_{2}}^{1}+Y_{A_{1}
B_{1}A_{2}B_{2}}^{2})]$ is given by
\begin{align}
& \operatorname{Tr}\!\left[  \Gamma^{\mathcal{N}\otimes\mathcal{M}}\left(
Y_{A_{1}B_{1}A_{2}B_{2}}^{1}+Y_{A_{1}B_{1}A_{2}B_{2}}^{2}\right)  \right]   \notag  \\
&
=\operatorname{Tr}\!\left[  \left(  \Gamma^{\mathcal{N}}\otimes\Gamma
^{\mathcal{M}}\right)  \left(  \left(  Z_{A_{1}B_{1}}^{1}+Z_{A_{1}B_{1}}
^{2}\right)  \otimes\left(  W_{A_{2}B_{2}}^{1}+W_{A_{2}B_{2}}^{2}\right)
\right)  \right]  \\
& =\operatorname{Tr}\!\left[  \Gamma^{\mathcal{N}}\left(  Z_{A_{1}B_{1}}
^{1}+Z_{A_{1}B_{1}}^{2}\right)  \otimes\Gamma^{\mathcal{M}}\left(
W_{A_{2}B_{2}}^{1}+W_{A_{2}B_{2}}^{2}\right)  \right]  \\
& =\operatorname{Tr}\!\left[  \Gamma^{\mathcal{N}}\left(  Z_{A_{1}B_{1}}
^{1}+Z_{A_{1}B_{1}}^{2}\right)  \right]  \operatorname{Tr}\!\left[
\Gamma^{\mathcal{M}}\left(  W_{A_{2}B_{2}}^{1}+W_{A_{2}B_{2}}^{2}\right)
\right]  .
\end{align}
As such, we conclude that
\begin{align}
& \operatorname{Tr}\!\left[  \Gamma^{\mathcal{N}}\left(  Z_{A_{1}B_{1}}
^{1}+Z_{A_{1}B_{1}}^{2}\right)  \right]  \operatorname{Tr}\!\left[
\Gamma^{\mathcal{M}}\left(  W_{A_{2}B_{2}}^{1}+W_{A_{2}B_{2}}^{2}\right)
\right] \notag  \\
& =\operatorname{Tr}\!\left[  \Gamma^{\mathcal{N}\otimes\mathcal{M}}\left(
Y_{A_{1}B_{1}A_{2}B_{2}}^{1}+Y_{A_{1}B_{1}A_{2}B_{2}}^{2}\right)  \right]  \\
& \geq \alpha_{\wang}(\mathcal{N}\otimes\mathcal{M})  .
\end{align}
Since the inequality holds for every $Z_{A_{1}B_{1}}^{1}$ and $Z_{A_{1}B_{1}
}^{2}$ dual feasible for $\alpha_{\wang}(\mathcal{N})$ and $W_{A_{2}B_{2}}^{1}$ and
$W_{A_{2}B_{2}}^{2}$ dual feasible for $\alpha_{\wang}(\mathcal{M})$, we conclude the
desired inequality:
\begin{equation}
\alpha_{\wang}(\mathcal{N}) \cdot \alpha_{\wang}(\mathcal{M})
\geq \alpha_{\wang}(\mathcal{N}\otimes\mathcal{M})  .
\end{equation}
This concludes the proof.
\end{proof}

\begin{corollary}\label{Cor:mult-alpha-from-alpha-wang}
   For channels $\mathcal{N}$ and $\mathcal{M}$, if $\alpha_\wang(\cN)=\alpha(\cN)$ and $\alpha_\wang(\cM)=\alpha(\cM)$, then the following multiplicativity equality holds 
   \begin{align}
       \alpha(\mathcal{N} \otimes \mathcal{M}) = \alpha(\mathcal{N} )\cdot  \alpha( \mathcal{M}).
   \end{align}
\end{corollary}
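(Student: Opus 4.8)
The plan is to sandwich $\alpha(\cN\otimes\cM)$ between $\alpha(\cN)\alpha(\cM)$ from above and below using results already established in the excerpt. The upper bound $\alpha(\cN\otimes\cM)\leq\alpha(\cN)\alpha(\cM)$ is exactly the submultiplicativity statement (\cref{prop:q-doeblin-sub-multiplicative}), which holds for arbitrary channels with no extra hypotheses, so that half is immediate.

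For the matching lower bound, I would chain together three facts. First, the elementary inequality $\alpha_{\wang}(\cN\otimes\cM)\leq\alpha(\cN\otimes\cM)$ from~\eqref{eq:a-b-doeblin-ineq}, which follows simply because the feasible set in \cref{def:b-doeblin-coef} has strictly more constraints than that in \cref{def:q-doeblin-coeff}. Second, the multiplicativity of $\alpha_{\wang}$ proved in \cref{prop:multiplicativity-b-N}, giving $\alpha_{\wang}(\cN\otimes\cM)=\alpha_{\wang}(\cN)\cdot\alpha_{\wang}(\cM)$. Third, the hypotheses of the corollary, $\alpha_{\wang}(\cN)=\alpha(\cN)$ and $\alpha_{\wang}(\cM)=\alpha(\cM)$. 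Combining these yields
\begin{equation}
\alpha(\cN)\cdot\alpha(\cM)=\alpha_{\wang}(\cN)\cdot\alpha_{\wang}(\cM)=\alpha_{\wang}(\cN\otimes\cM)\leq\alpha(\cN\otimes\cM)\leq\alpha(\cN)\cdot\alpha(\cM),
\end{equation}
so every inequality is an equality and in particular $\alpha(\cN\otimes\cM)=\alpha(\cN)\cdot\alpha(\cM)$.

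There is essentially no obstacle here: the corollary is a pure bookkeeping consequence of \cref{prop:q-doeblin-sub-multiplicative}, \cref{prop:multiplicativity-b-N}, and the defining inequality~\eqref{eq:a-b-doeblin-ineq}. The only thing to be careful about is that all of these ingredients are stated for general quantum channels (which they are), so no additional regularity or structural assumption on $\cN$ or $\cM$ is needed beyond the stated hypothesis $\alpha_{\wang}=\alpha$ for each factor.
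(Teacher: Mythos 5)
Your proof is correct and is essentially identical to the paper's own argument: both sandwich $\alpha(\cN\otimes\cM)$ using submultiplicativity of $\alpha$ (\cref{prop:q-doeblin-sub-multiplicative}) from above and the chain $\alpha(\cN\otimes\cM)\geq\alpha_{\wang}(\cN\otimes\cM)=\alpha_{\wang}(\cN)\alpha_{\wang}(\cM)=\alpha(\cN)\alpha(\cM)$ from below, via \eqref{eq:a-b-doeblin-ineq} and \cref{prop:multiplicativity-b-N}. No gaps.
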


\begin{proof}
    Note the following chain of arguments, 
    \begin{align}
        \alpha(\mathcal{N} )\cdot  \alpha( \mathcal{M}) \geq \alpha(\mathcal{N} \otimes \mathcal{M}) \geq \alpha_\wang(\mathcal{N} \otimes \mathcal{M}) = \alpha_\wang(\mathcal{N} )\cdot  \alpha_\wang( \mathcal{M}) = \alpha(\mathcal{N} )\cdot  \alpha( \mathcal{M}). 
    \end{align}
    The first inequality follows from \cref{prop:q-doeblin-sub-multiplicative}, and the first equality follows from \cref{prop:multiplicativity-b-N}.
\end{proof}

\begin{proposition}[Data Processing under Pre- and Post-Processing and Isometric Invariance]
\label{prop:DPI-and-LU-invariance-wang}
Let $\cE_{B \to B'}$, $\cG_{A' \to A}$, and $\cN_{A \to B}$ be quantum channels.
   Then
   \begin{align}
       \alpha_{\wang}(\cE \circ \cN \circ \cG) & \geq \alpha_{\wang}(\cN)  . \label{eq:data-proc-alpha-wang}
   \end{align}
   Moreover, the quantity $\alpha_{\wang}(\mathcal{N})$ is invariant under isometric-reversal and isometric pre- and post-processing, respectively; i.e.,
\begin{align}
    \alpha(\mathcal{N}) & = \alpha(\mathcal{U} \circ \mathcal{N} \circ \mathcal{P}^{\mathcal{V}}),
    \label{eq:isometric-inv-alpha-wang},
\end{align}
where $\mathcal{U}$ and $\mathcal{V}$ are isometric channels and $\mathcal{P}^{\mathcal{V}}$ is defined in \eqref{eq:isometric-reversal-channel}.
\end{proposition}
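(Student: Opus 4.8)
The plan is to follow the two-step template already used for $\alpha(\cN)$ in \cref{prop:DPI-and-LU-invariance}: first establish the data-processing inequality \eqref{eq:data-proc-alpha-wang}, then deduce the isometric invariance \eqref{eq:isometric-inv-alpha-wang} from it together with the reversal identity \eqref{eq:isometry-and-reversal}. For the data-processing step the convenient vehicle is the replacer-map reformulation $\alpha_{\wang}(\cN) = \sup_{X_B \in \Herm}\{\Tr[X_B] : -\cN \leq \cR^{X_B} \leq \cN\}$ recorded just after \cref{def:b-doeblin-coef}, where the superoperator inequalities are understood in the completely positive order; that is, $X_B$ is feasible precisely when both $\cN - \cR^{X_B}$ and $\cN + \cR^{X_B}$ are completely positive.

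To prove \eqref{eq:data-proc-alpha-wang}, I would fix an arbitrary $X_B$ feasible for $\alpha_{\wang}(\cN)$ and propose the candidate $X_{B'} \coloneqq \cE(X_B)$ for $\alpha_{\wang}(\cE \circ \cN \circ \cG)$. Three short observations close the argument. First, $X_{B'}$ is Hermitian because $\cE$ is Hermiticity preserving, and $\Tr[X_{B'}] = \Tr[\cE(X_B)] = \Tr[X_B]$ because $\cE$ is trace preserving, so the objective value is unchanged. Second, the identity $\cE \circ \cR^{X_B} \circ \cG = \cR^{\cE(X_B)}$ holds since $\cG$ and $\cE$ are trace preserving: for every input $\rho$, $(\cE \circ \cR^{X_B} \circ \cG)(\rho) = \Tr[\cG(\rho)]\,\cE(X_B) = \Tr[\rho]\,\cE(X_B)$. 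Third, conjugating the completely positive maps $\cN \mp \cR^{X_B}$ on the left by $\cE$ and on the right by $\cG$ preserves complete positivity, so $\cE \circ (\cN \mp \cR^{X_B}) \circ \cG = (\cE \circ \cN \circ \cG) \mp \cR^{\cE(X_B)}$ are both completely positive, that is, $-(\cE \circ \cN \circ \cG) \leq \cR^{X_{B'}} \leq \cE \circ \cN \circ \cG$. Hence $X_{B'}$ is feasible with the same objective value, and taking the supremum over feasible $X_B$ yields $\alpha_{\wang}(\cE \circ \cN \circ \cG) \geq \alpha_{\wang}(\cN)$.

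For the invariance claim (reading its left-hand side as $\alpha_{\wang}(\cN)$), I would argue exactly as in \cref{prop:DPI-and-LU-invariance}. One inequality is immediate from the data-processing bound, since $\mathcal{P}^{\mathcal{V}}$, $\mathcal{U}$, and $\mathcal{P}^{\mathcal{U}}$ are quantum channels: $\alpha_{\wang}(\mathcal{U} \circ \cN \circ \mathcal{P}^{\mathcal{V}}) \geq \alpha_{\wang}(\cN)$. For the reverse inequality, I would use $\mathcal{P}^{\mathcal{U}} \circ \mathcal{U} = \operatorname{id}$ and $\mathcal{P}^{\mathcal{V}} \circ \mathcal{V} = \operatorname{id}$ from \eqref{eq:isometry-and-reversal} to write $\cN = \mathcal{P}^{\mathcal{U}} \circ (\mathcal{U} \circ \cN \circ \mathcal{P}^{\mathcal{V}}) \circ \mathcal{V}$ and then apply data processing once more, with $\mathcal{V}$ as the pre-processing channel and $\mathcal{P}^{\mathcal{U}}$ as the post-processing channel, obtaining $\alpha_{\wang}(\cN) \geq \alpha_{\wang}(\mathcal{U} \circ \cN \circ \mathcal{P}^{\mathcal{V}})$; combining the two gives equality. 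I do not expect a real obstacle here: the subtlest point is that it is precisely the two-sided constraint set defining $\alpha_{\wang}$ — as opposed to the one-sided set defining $\alpha$ — that lets both CP constraints survive the conjugation in the third observation, but this is immediate from closure of the completely positive cone under composition, and checking that the composite maps are composable (e.g., $\mathcal{P}^{\mathcal{V}}$ maps $A' \to A$ when $\mathcal{V}$ maps $A \to A'$) is bookkeeping.
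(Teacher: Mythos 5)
Your proposal is correct and follows essentially the same route as the paper's proof: both take a feasible Hermitian $X$, use that pre- and post-composition by channels preserves the two-sided complete-positivity constraints, observe $\cE\circ\cR^{X}\circ\cG=\cR^{\cE(X)}$ with $\Tr[\cE(X)]=\Tr[X]$, and then derive isometric invariance from two applications of data processing together with \eqref{eq:isometry-and-reversal}. No gaps.
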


\begin{proof}
Let $X\in \operatorname{Herm}$ be feasible for $\alpha_{\wang}(\mathcal{N})$. Then $-\mathcal{N} \leq \mathcal{R}^X \leq \mathcal{N}$ holds, and this implies that $-\mathcal{E}\circ \mathcal{N} \circ \mathcal{G} \leq \mathcal{E}\circ\mathcal{R}^X\circ \mathcal{G} \leq \mathcal{E}\circ\mathcal{N}\circ \mathcal{G}$ holds. Observe that 
\begin{align}
\mathcal{E}\circ\mathcal{R}^X\circ \mathcal{G} & = \mathcal{E}\circ\mathcal{R}^X \\
& = \mathcal{R}^{\mathcal{E}(X)}\\
\operatorname{Tr}[X] & = \operatorname{Tr}[\mathcal{E}(X)],
\end{align}
where the last equality follows because $\mathcal{E}$ is trace preserving. Then $\mathcal{E}(X)$ is feasible for $\alpha_{\wang}(\mathcal{E}\circ \mathcal{N} \circ \mathcal{G})$, and we conclude that
\begin{equation}
    \operatorname{Tr}[X] = \operatorname{Tr}[\mathcal{E}(X)] \leq \alpha_{\wang}(\mathcal{E}\circ \mathcal{N} \circ \mathcal{G}).
\end{equation}
Since the inequality holds for every $X$ feasible for $\alpha(\mathcal{N})$, we conclude that $\alpha_{\wang}( \mathcal{N} ) \leq \alpha_{\wang}\!\left(\mathcal{E}\circ \mathcal{N} \circ \mathcal{G}\right)$. The statement about isometric invariance follows the same proof given for \eqref{eq:isometric-inv-induced-doeblin}. 
\end{proof}

\begin{proposition}[Concavity]\label{prop:doeblin-wang-concavity} Let $\lambda_i \geq 0$ for all $i$ and $\sum_i {\lambda_i}=1$, and let $(\cN_i)_i$ be a tuple of quantum channels.
    Then $\alpha_{\wang}$ is concave; i.e., for $\cN\coloneqq \sum_i\lambda_i\cN_i$, we have
    \begin{align}
        \alpha_{\wang}(\cN) \geq \sum_i \lambda_i \alpha_{\wang}(\cN_i).
    \end{align}
\end{proposition}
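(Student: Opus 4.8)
The plan is to mirror exactly the proof of concavity for $\alpha(\mathcal{N})$ given in \cref{prop:concavity-doeblin}, adapting it to the two-sided constraint of \cref{def:b-doeblin-coef}. First I would record that the Choi operator is linear in the channel, so $\Gamma^{\mathcal{N}} = \sum_i \lambda_i \Gamma^{\mathcal{N}_i}$ for $\mathcal{N} = \sum_i \lambda_i \mathcal{N}_i$. Then, for each $i$, pick $X_{\mathcal{N}_i} \in \operatorname{Herm}$ feasible for $\alpha_{\wang}(\mathcal{N}_i)$, i.e.\ satisfying $-\Gamma^{\mathcal{N}_i} \leq I_A \otimes X_{\mathcal{N}_i} \leq \Gamma^{\mathcal{N}_i}$.

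The key step is to take the convex combination $X \coloneqq \sum_i \lambda_i X_{\mathcal{N}_i}$ and verify it is feasible for $\alpha_{\wang}(\mathcal{N})$. Multiplying the $i$-th constraint by $\lambda_i \geq 0$ preserves the operator inequalities, giving $-\lambda_i \Gamma^{\mathcal{N}_i} \leq \lambda_i (I_A \otimes X_{\mathcal{N}_i}) \leq \lambda_i \Gamma^{\mathcal{N}_i}$, and summing over $i$ yields
\begin{equation}
    -\Gamma^{\mathcal{N}} = -\sum_i \lambda_i \Gamma^{\mathcal{N}_i} \leq I_A \otimes \left(\sum_i \lambda_i X_{\mathcal{N}_i}\right) \leq \sum_i \lambda_i \Gamma^{\mathcal{N}_i} = \Gamma^{\mathcal{N}} .
\end{equation}
Hence $X = \sum_i \lambda_i X_{\mathcal{N}_i}$ is feasible, and since $\operatorname{Tr}[X] = \sum_i \lambda_i \operatorname{Tr}[X_{\mathcal{N}_i}]$ and the primal in~\eqref{eq:primal-b-relaxed-doeblin} is a supremization, $\alpha_{\wang}(\mathcal{N}) \geq \sum_i \lambda_i \operatorname{Tr}[X_{\mathcal{N}_i}]$. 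Optimizing each $X_{\mathcal{N}_i}$ over the feasible set gives $\alpha_{\wang}(\mathcal{N}) \geq \sum_i \lambda_i \alpha_{\wang}(\mathcal{N}_i)$, completing the argument.

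There is no real obstacle here: the proof is a routine adaptation, and the only point meriting a moment's care is that the two-sided constraint is closed under taking $\lambda_i$-weighted sums, which follows because positive-semidefiniteness is preserved under multiplication by nonnegative scalars and under addition. One may also invoke the feasibility of $X_{\mathcal{N}_i} = 0$ to guarantee the feasible sets are nonempty, so the supremum is over a nonempty set and the chain of inequalities is not vacuous.
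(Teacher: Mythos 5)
Your proof is correct and follows exactly the route the paper takes: the paper's own proof simply says it is "similar to the proof of the concavity of $\alpha(\mathcal{N})$ but incorporates the extra constraint $-\Gamma^{\mathcal{N}} \leq I\otimes X$," which is precisely the two-sided convex-combination argument you spell out. No gaps.
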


\begin{proof}
    The proof is similar to the proof of \cref{prop:concavity-doeblin} but incorporates the extra constraint $-\Gamma^{\mathcal{N}} \leq X$.
\end{proof}

We lastly establish that $\alpha_{\wang}$ satisfies the same concatenation property that $\alpha$ and $\alpha_{I}$ satisfy. This follows the same proof method as for the other quantities, as given in \cref{prop:concatenation}. For this reason, the complete proof is provided in \cref{app:proof-of-alpha-wang-concat}, and the approach is summarized here. First, we establish the following alternative representation of $\alpha_{\wang}$:

\begin{lemma}
\label{lem:alpha-wang-CP-ordering-form}
For a quantum channel $\cN$,
\begin{align}
    \alpha_{\wang}(\cN) =\max_{\substack{\lambda \in \left[0,1\right], \\ \tau \in \aff(\cD), \\ \cM,\cG \in \operatorname{CPTP}}} \left\{ \lambda :\lambda\cR_{\tau} + (1-\lambda)\cM = \cN \, , \, \cN = (1+\lambda)\cG - \lambda \cR_{\tau} \right\} \ . 
\end{align}
\end{lemma}

\begin{proof}
    The proof is provided in \cref{app:proof-of-alpha-wang-concat}.
\end{proof}
\noindent This result and its proof method are analogous to \eqref{eq:key-robustness-step}, which was used to establish the robustness interpretation of $\alpha(\cN)$.

Second, we use optimizers of the above representation for $\alpha_{\wang}(\cN_{1})$ and $\alpha_{\wang}(\cN_{2})$ to construct a feasible solution to the above representation such that the concatenation property follows as a consequence.

\begin{proposition}
\label{prop:concat-of-alpha-wang}
    For quantum channels $\cN_{1}$ and $\cN_{2}$ such that $\cN_{2} \circ \cN_{1}$ is well defined,
    \begin{align}
        1-\alpha_{\wang}(\cN_{2}\circ \cN_{1}) \leq (1-\alpha_{\wang}(\cN_{2}))(1-\alpha_{\wang}(\cN_{1})) \ . 
    \end{align}
\end{proposition}
\begin{proof}
    The proof is provided in \cref{app:proof-of-alpha-wang-concat}.
\end{proof}

\section{Contraction and Expansion Coefficients}

\subsection{Contraction Coefficients and Quantum Doeblin Coefficients}

\label{sec:contraction-coefficient}

Let $\DD(\rho\|\sigma)$ be a generalized divergence, i.e., a function of a pair of quantum states onto the reals that obeys the data-processing inequality for all states $\rho$ and $\sigma$ and every channel~$\mathcal{N}$ \cite{Polyanskiy2010,sharma2012strongconversesquantumchannel}:
\begin{equation}
    \DD(\rho\|\sigma) \geq \DD(\cN(\rho)\|\cN(\sigma)).
\end{equation}
We call a generalized divergence faithful if $\DD(\rho\|\rho)\geq0$ holds for every quantum state $\rho$. Contraction coefficients give strong data-processing inequalities for these divergences and hence have numerous applications. There is a large body of literature exploring contraction coefficients for different divergences (see, e.g., \cite{Hiai15,Hirche2022contraction}). Here, we define a generalized contraction coefficient that allows for a reference system. 

\begin{definition}[Generalized Contraction Coefficient]
    For a quantum channel $\cN$, a generalized divergence $\DD$, and $r\in \mathbb{N}$, we define
    \begin{align}
        \eta_\DD(\cN,r) \coloneqq \sup_{\substack{\rho_{RA}\neq\sigma_{RA},\\ \rho_R=\sigma_R, \\ |R|=r}} \frac{\DD((\id\otimes\cN)(\rho_{RA})\|(\id\otimes\cN)(\sigma_{RA}))}{\DD(\rho_{RA}\|\sigma_{RA})}, 
    \end{align}
    leading to the contraction coefficient and the complete contraction coefficient, 
    \begin{align}
        \eta_\DD(\cN) &\coloneqq  \eta_\DD(\cN,1) = \sup_{\substack{\rho_{A}\neq \sigma_{A}}} \frac{\DD(\cN(\rho_{A})\|\cN(\sigma_{A}))}{\DD(\rho_{A}\|\sigma_{A})}, \label{eq:def-gen-contra-coef}\\
        \eta^c_\DD(\cN) & \coloneqq \sup_{r\geq1} \eta_\DD(\cN,r), \label{eq:Comp_CC}
    \end{align}
    respectively. 
\end{definition}
We also define a generalized input-dependent contraction coefficient that allows for a reference system.
\begin{definition}[Input-Dependent Generalized Contraction Coefficient] \label{def:input-dep-contraction-coeffs}
    For a quantum channel $\cN_{A \to B}$, a reference state $\omega_{A}$, a generalized divergence $\DD$, and $r\in \mathbb{N}$, we define
    \begin{align}
        \eta_\DD(\cN,\omega,r) \coloneqq \sup_{\substack{\rho_{RA} \neq \sigma_{RA},\\ \sigma_{A} = \omega_{A}, \\ \rho_R = \sigma_R, \\ |R| = r}} \frac{\DD((\id\otimes\cN)(\rho_{RA})\|(\id\otimes\cN)(\sigma_{RA}))}{\DD(\rho_{RA}\|\sigma_{RA})}, 
    \end{align}
    leading to the input-dependent contraction coefficient and the input-dependent complete contraction coefficient, 
    \begin{align}
        \eta_\DD(\cN,\omega) &\coloneqq  \eta_\DD(\cN,1) = \sup_{\substack{\rho_{A}\neq \omega_{A}}} \frac{\DD(\cN(\rho_{A})\|\cN(\omega_{A}))}{\DD(\rho_{A}\|\omega_{A})}, \label{eq:input_d_CC} \\
        \eta^c_\DD(\cN,\omega) & \coloneqq \sup_{r\geq1} \eta_\DD(\cN,\omega,r), \label{eq:input_d_Com_CC}
    \end{align}
    respectively. 
\end{definition}

The idea of allowing for additional reference systems was previously investigated in~\cite[Section 4]{gao2022complete}.
In the following, we provide a number of properties of these coefficients. 
\begin{proposition}[Properties of Contraction Coefficients] \label{prop:complete-contraction-coeff-properties}
    For quantum channels $\cN$ and $ \cM$, the following hold: 
    \begin{enumerate}
        \item For faithful $\DD$, the contraction coefficients are bounded as follows: 
        \begin{align}
    0\leq\eta_\DD(\cN) &\leq \eta^c_\DD(\cN)\leq 1 
    \label{eq:contraction-coeff-to-complete}, \\
    0\leq\eta_\DD(\cN,\omega) &\leq \eta^c_\DD(\cN,\omega)\leq 1 .
\end{align}
\item Under concatenation, the following inequality holds for all $r\in\mathbb{N}$: 
\begin{align}
\eta_\DD(\cN\circ\cM,r)  
    \leq \eta_\DD(\cN,r)\, \eta_\DD(\cM,r).
\end{align}
\item For isometric channels $\cU$  and $\cV$, the following equality holds for all $r\in\mathbb{N}$:
\begin{align}
    \eta_\DD(\cU\circ\cN\circ\cP^{\cV},r) = \eta_\DD(\cN,r),
\end{align}
where $\cP^{\cV}$ is defined in \eqref{eq:isometric-reversal-channel}. 
    \end{enumerate}
\end{proposition}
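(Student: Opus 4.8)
The plan is to prove the three items in the order listed, obtaining item~3 essentially for free once items~1 and~2 are in hand.

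For item~1, the one substantive point is that the numerator and denominator of every displayed ratio are nonnegative, so that the suprema---taken, by the usual convention, only over pairs with $\DD(\rho_{RA}\|\sigma_{RA})\in(0,\infty)$---are nonnegative. I would establish $\DD(\rho\|\sigma)\geq 0$ for all states $\rho,\sigma$ from the data-processing inequality applied to the replacer channel $\mathcal{R}^{\tau}(\cdot)=\Tr[\cdot]\,\tau$ with $\tau$ any fixed state: $\DD(\rho\|\sigma)\geq\DD(\mathcal{R}^{\tau}(\rho)\|\mathcal{R}^{\tau}(\sigma))=\DD(\tau\|\tau)\geq 0$, the last step being faithfulness. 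The upper bound $\eta_\DD(\cN,r)\leq 1$ for every $r\in\mathbb{N}$ is immediate from data processing for the channel $\id_R\otimes\cN$, whence $\eta^c_\DD(\cN)=\sup_{r\geq1}\eta_\DD(\cN,r)\leq 1$; the remaining inequality $\eta_\DD(\cN)=\eta_\DD(\cN,1)\leq\eta^c_\DD(\cN)$ is definitional. The input-dependent statements follow verbatim after intersecting the feasible set with the constraint $\sigma_A=\omega_A$.

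For item~2, fix $r\in\mathbb{N}$ and an admissible pair $\rho_{RA}\neq\sigma_{RA}$ with $\rho_R=\sigma_R$, $|R|=r$, and set $\widetilde\rho_{RB}\coloneqq(\id_R\otimes\cM)(\rho_{RA})$ and $\widetilde\sigma_{RB}\coloneqq(\id_R\otimes\cM)(\sigma_{RA})$. Since $\cM$ is trace preserving, $\widetilde\rho_R=\rho_R=\sigma_R=\widetilde\sigma_R$ and $|R|=r$ is unchanged, so $(\widetilde\rho_{RB},\widetilde\sigma_{RB})$ is admissible for $\eta_\DD(\cN,r)$. When $\DD(\widetilde\rho_{RB}\|\widetilde\sigma_{RB})\in(0,\infty)$, the contraction ratio for $\cN\circ\cM$ at $(\rho_{RA},\sigma_{RA})$ equals the product of the contraction ratio for $\cN$ at $(\widetilde\rho_{RB},\widetilde\sigma_{RB})$ and the contraction ratio for $\cM$ at $(\rho_{RA},\sigma_{RA})$ (a plain $a/c=(a/b)(b/c)$), each factor being at most $\eta_\DD(\cN,r)$ and $\eta_\DD(\cM,r)$ respectively; taking the supremum over the pair yields the bound. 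The residual case $\DD(\widetilde\rho_{RB}\|\widetilde\sigma_{RB})=0$ is handled separately: data processing forces $\DD((\id\otimes(\cN\circ\cM))(\rho_{RA})\,\|\,(\id\otimes(\cN\circ\cM))(\sigma_{RA}))\leq\DD(\widetilde\rho_{RB}\|\widetilde\sigma_{RB})=0$, so the ratio is $0\leq\eta_\DD(\cN,r)\eta_\DD(\cM,r)$ and the bound still holds.

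For item~3, I would not manipulate the definition directly but instead chain items~1 and~2 using the reversal identities $\cP^{\cV}\circ\cV=\id$ and $\cP^{\cU}\circ\cU=\id$ from~\eqref{eq:isometry-and-reversal}. Applying the concatenation bound of item~2 twice gives
\[
\eta_\DD(\cU\circ\cN\circ\cP^{\cV},r)\leq \eta_\DD(\cU,r)\,\eta_\DD(\cN,r)\,\eta_\DD(\cP^{\cV},r)\leq \eta_\DD(\cN,r),
\]
where the second inequality uses $\eta_\DD(\cU,r)\leq1$ and $\eta_\DD(\cP^{\cV},r)\leq1$ from item~1. Conversely, writing $\cN=\cP^{\cU}\circ(\cU\circ\cN\circ\cP^{\cV})\circ\cV$ and applying the concatenation bound twice again gives
\[
\eta_\DD(\cN,r)\leq \eta_\DD(\cP^{\cU},r)\,\eta_\DD(\cU\circ\cN\circ\cP^{\cV},r)\,\eta_\DD(\cV,r)\leq \eta_\DD(\cU\circ\cN\circ\cP^{\cV},r),
\]
again by item~1. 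The two inequalities together give the claimed equality. I do not expect a genuine obstacle here; the delicate points are bookkeeping ones: fixing the convention that the suprema range only over pairs with divergence in $(0,\infty)$, checking in item~2 that the intermediate states stay admissible (equal reduced state on $R$, same $|R|$), and the small degenerate-case check above. The replacer-channel argument for $\DD\geq0$ is the only place where the (weak) faithfulness hypothesis is actually invoked.
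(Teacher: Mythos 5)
Your proposal is correct and follows essentially the same route as the paper: faithfulness plus data processing for the bounds in item~1, the telescoping ratio $a/c=(a/b)(b/c)$ for concatenation in item~2, and the two-sided sandwich via $\cP^{\cU}\circ\cU=\id$ for isometric invariance in item~3. The only differences are that you spell out details the paper leaves implicit (deriving $\DD(\rho\|\sigma)\geq0$ from weak faithfulness via a replacer channel, and the degenerate case where the intermediate divergence vanishes), which are welcome but do not change the argument.
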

\begin{proof}
    1. In~\eqref{eq:contraction-coeff-to-complete}, the first inequality follows from faithfulness of the generalized divergence $\mathbb{D}$, the second by definition, and the third by the data-processing inequality. The same reasoning holds for the chain of inequalities in the input-dependent case.
    
    2. Here, observe that for all $\rho_{RA}$ and $\sigma_{RA}$ with $\rho_R=\sigma_R$ and $|R|=r$, we have
    \begin{align}    &\frac{\DD((\id\otimes(\cN\circ\cM))(\rho_{RA})\|(\id\otimes(\cN\circ\cM))(\sigma_{RA}))}{\DD(\rho_{RA}\|\sigma_{RA})} \notag \\
        &= \frac{\DD((\id\otimes(\cN\circ\cM))(\rho_{RA})\|(\id\otimes(\cN\circ\cM))(\sigma_{RA}))}{\DD((\id\otimes\cM)(\rho_{RA})\|(\id\otimes\cM)(\sigma_{RA}))} \frac{\DD((\id\otimes\cM)(\rho_{RA})\|(\id\otimes\cM)(\sigma_{RA}))}{\DD(\rho_{RA}\|\sigma_{RA})} \\
        &\leq \eta_\DD(\cN,r)\, \eta_\DD(\cM,r). 
    \end{align}
    
    3. As a consequence of 1.~and 2., the following inequality holds for every pre- and post-processing channel $\mathcal{E}$ and $\mathcal{G}$:
    \begin{equation}
        \eta_{\DD}(\mathcal{G} \circ \mathcal{N} \circ \mathcal{E},r) \leq \eta_{\DD}(\mathcal{N},r).
    \end{equation}
    The claim then follows from reasoning similar to that given for \eqref{eq:isometric-inv-doeblin}--\eqref{eq:isometric-inv-induced-doeblin}. 
\end{proof}
Based on the above, the complete trace-distance contraction coefficient of a quantum channel is defined as 
\begin{equation}
    \eta^c_{\Tr}(\mathcal{N}) \coloneqq \sup_{r\geq1} \sup_{\substack{\rho_{RA},\sigma_{RA}\\ \rho_R=\sigma_R \\ |R|=r}} \frac{T\!\left((\id\otimes\cN)(\rho_{RA}),(\id\otimes\cN)(\sigma_{RA})\right)}{T(\rho_{RA},\sigma_{RA})},  \label{eq:trace_distance_cc}
\end{equation}
where $T(\rho,\sigma) $ is defined in \eqref{eq:norm-TD-def}.
The non-complete version was previously discussed in~\cite{LR1999,Hiai15}. One main observation from~\cite[Eq.~(8.86)]{wolf2012quantum} and~\cite[Eq.~(III.57)]{hirche2024quantum} is that $1-\alpha(\cN)$ is an upper bound  on $\eta_{\Tr}(\mathcal{N})$.
Here we prove the stronger statement that it is also an upper bound on the complete contraction coefficient.

\begin{proposition} \label{prop:trace_distance_complete_cont_Doeblin_bound} For $\cN$ a quantum channel, the following inequality holds:
    \begin{align}
        \eta^c_{\Tr}(\mathcal{N}) \leq 1-\alpha(\cN). 
        \label{eq:complete-trace-distance-CC-to-doeblin}
    \end{align}
\end{proposition}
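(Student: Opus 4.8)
The plan is to leverage the reverse robustness representation of the Doeblin coefficient from \cref{prop:reverse-robustness}, in the form stated in \eqref{eq:reverse-robustness-opposite}. The key observation is that the replacer component of $\mathcal{N}$ acts on any bipartite state $\rho_{RA}$ in a way that depends only on the marginal $\rho_{R}$, so that when comparing two states with the same reduced state on the reference system, the replacer component cancels in the difference. This is precisely what makes the reference system harmless, so that the complete contraction coefficient obeys the same bound $1-\alpha(\mathcal{N})$ as the ordinary one from~\cite[Eq.~(8.86)]{wolf2012quantum} and~\cite[Eq.~(III.57)]{hirche2024quantum}.

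Concretely, I would first fix an arbitrary feasible tuple $(\lambda,\tau,\mathcal{M})$ in \eqref{eq:reverse-robustness-opposite}, so that $\mathcal{N} = (1-\lambda)\mathcal{R}^{\tau} + \lambda\mathcal{M}$ with $\lambda\in[0,1]$, $\tau\in\operatorname{aff}(\mathcal{D})$, and $\mathcal{M}\in\operatorname{CPTP}$. Then for any states $\rho_{RA},\sigma_{RA}$ with $\rho_{R} = \sigma_{R}$, I would expand $(\id\otimes\mathcal{N})(\rho_{RA}) = (1-\lambda)\,\rho_{R}\otimes\tau + \lambda\,(\id\otimes\mathcal{M})(\rho_{RA})$ and similarly for $\sigma_{RA}$, and then subtract. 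Since $\rho_{R}=\sigma_{R}$, the $\rho_{R}\otimes\tau$ terms are identical and cancel, leaving $(\id\otimes\mathcal{N})(\rho_{RA}-\sigma_{RA}) = \lambda\,(\id\otimes\mathcal{M})(\rho_{RA}-\sigma_{RA})$.

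Next I would take the trace norm on both sides and invoke the data-processing inequality for the trace norm under the channel $\id\otimes\mathcal{M}$, obtaining $\left\Vert(\id\otimes\mathcal{N})(\rho_{RA}-\sigma_{RA})\right\Vert_{1} \leq \lambda\left\Vert\rho_{RA}-\sigma_{RA}\right\Vert_{1}$, i.e., $T\!\left((\id\otimes\mathcal{N})(\rho_{RA}),(\id\otimes\mathcal{N})(\sigma_{RA})\right) \leq \lambda\,T(\rho_{RA},\sigma_{RA})$. Finally, I would infimize over all feasible $\lambda$, using \eqref{eq:reverse-robustness-opposite} to identify the infimum as $1-\alpha(\mathcal{N})$, and then take the supremum over all $\rho_{RA}\neq\sigma_{RA}$ with $\rho_{R}=\sigma_{R}$ and over all reference dimensions $r$ to conclude $\eta^{c}_{\Tr}(\mathcal{N})\leq 1-\alpha(\mathcal{N})$.

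There is no serious obstacle here: the argument is a direct application of the reverse robustness decomposition, and the cancellation of the replacer term does all the work. The only minor point to treat carefully is the degenerate case $\lambda = 0$ (equivalently $\alpha(\mathcal{N})=1$), where $\mathcal{M}$ is irrelevant; in that case $\mathcal{N}=\mathcal{R}^{\tau}$, both sides of the claimed inequality vanish, and it holds trivially. Working with the infimum form in \eqref{eq:reverse-robustness-opposite} rather than the supremum form in \eqref{eq:reverse-robustness} also sidesteps any need to argue that the optimum in the reverse robustness is attained.
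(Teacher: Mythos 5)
Your proposal is correct and follows essentially the same route as the paper's proof: both rely on the reverse robustness decomposition of \cref{prop:reverse-robustness}, cancel the replacer term using the equal marginals $\rho_R = \sigma_R$, and apply the data-processing inequality for the trace norm under $\id\otimes\mathcal{M}$. The only cosmetic difference is that you work with arbitrary feasible tuples in the infimum form \eqref{eq:reverse-robustness-opposite} and then infimize, whereas the paper directly takes optimizers of \eqref{eq:reverse-robustness}; this is a harmless relabeling.
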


\begin{proof}
Via the reverse robustness representation of the Doeblin coefficient (\cref{prop:reverse-robustness}), let $\lambda \in [0,1],\tau\in\operatorname{aff}(\mathcal{D}), \mathcal{M} \in \operatorname{CPTP}$ be  optimizers for $\alpha(\cN)$, such that $ \lambda\mathcal{R}^\tau + (1-\lambda) \mathcal{M} = \mathcal{N}$. Then, 
\begin{align}
    & \left\| (\id\otimes\cN)(\rho_{RA}) - (\id\otimes\cN)(\sigma_{RA}) \right\|_1 \notag \\
        &=  \left\| (\id\otimes(\lambda\mathcal{R}^\tau + (1-\lambda) \mathcal{M}))(\rho_{RA}) - (\id\otimes(\lambda\mathcal{R}^\tau + (1-\lambda) \mathcal{M}))(\sigma_{RA}) \right\|_1 \\
    & = \left\| \lambda   \rho_R \otimes \tau - \lambda   \sigma_R \otimes \tau + (1-\lambda) [(\id\otimes\mathcal{M})(\rho_{RA}) -  (\id\otimes\mathcal{M})(\sigma_{RA})] \right\|_1\\
   &= (1-\lambda) \left\|  (\id\otimes\mathcal{M})(\rho_{RA}-\sigma_{RA}) \right\|_1 \\
   &\leq (1-\lambda) \left\|  \rho_{RA} - \sigma_{RA} \right\|_1. 
\end{align}
Here, the third equality makes use of the identical marginals on $R$ (i.e., $\rho_R = \sigma_R$), and the inequality follows from the data-processing inequality for the trace distance. Noting that $\lambda=\alpha(\cN)$ concludes the proof. 
\end{proof}

Let us employ the shorthand $\eta^c_\gamma \equiv \eta^c_{E_\gamma}$, when choosing the generalized divergence to be the hockey-stick divergence (see \eqref{eq:hockey-stick-div-def}). Next, we show that  the following upper bound holds.

\begin{proposition}[Complete Contraction of Hockey-Stick Divergence]\label{prop:CC_HS}
    For $\cN$  a quantum channel and $\gamma \geq 1$, the following inequality holds:
    \begin{equation}
        \eta^c_\gamma(\mathcal{N}) \leq 1- \alpha_{+}(\cN),
    \end{equation}
where $\alpha_{+}(\cN)$ is defined in~\eqref{def:alpha-plus-doeblin}.
\end{proposition}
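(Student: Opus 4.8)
## Proof proposal

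The plan is to mirror the argument used in \cref{prop:trace_distance_complete_cont_Doeblin_bound}, but now leveraging the full family of hockey-stick divergences $E_\gamma$ with $\gamma \geq 1$ rather than just the trace distance ($\gamma = 1$), and to exploit the sharper decomposition available when the optimizer can be taken positive semidefinite. Concretely, I would first invoke a reverse-robustness-type decomposition of $\cN$ coming from $\alpha_+(\cN)$: if $X_B \geq 0$ achieves (or nearly achieves) $\alpha_+(\cN)$, then writing $\lambda \coloneqq \Tr[X_B] = \alpha_+(\cN)$ and $\tau \coloneqq X_B/\lambda \in \mathcal{D}$ (a genuine \emph{state}, not merely a quasi-state, since $X_B \geq 0$), the condition $I_A \otimes X_B \leq \Gamma^{\cN}_{AB}$ translates into $\lambda \mathcal{R}^{\tau} \leq \cN$ in the completely-positive order, hence $\cN = \lambda \mathcal{R}^{\tau} + (1-\lambda)\mathcal{M}$ for some channel $\mathcal{M} \in \operatorname{CPTP}$. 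The key structural gain over \cref{prop:reverse-robustness} is that here $\tau$ is a bona fide density operator.

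Next I would estimate $E_\gamma$ of the two channel outputs. Fix $\rho_{RA}, \sigma_{RA}$ with $\rho_R = \sigma_R$ and $|R| = r$ arbitrary, and let $\Delta_{RA} \coloneqq \rho_{RA} - \sigma_{RA}$, so $\Tr_A[\Delta_{RA}] = 0$. Applying the decomposition,
\begin{align}
(\id\otimes\cN)(\Delta_{RA}) &= \lambda (\id\otimes\mathcal{R}^{\tau})(\Delta_{RA}) + (1-\lambda)(\id\otimes\mathcal{M})(\Delta_{RA}) \notag \\
&= \lambda\, \Tr_A[\Delta_{RA}] \otimes \tau + (1-\lambda)(\id\otimes\mathcal{M})(\Delta_{RA}) \notag \\
&= (1-\lambda)(\id\otimes\mathcal{M})(\Delta_{RA}),
\end{align}
where the last step again uses $\Tr_A[\Delta_{RA}] = 0$. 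Now I would use the variational form \eqref{eq:HS_sup_form}: for any $0 \leq M \leq I$ and any states $\omega, \xi$,
\begin{equation}
\Tr[M((\id\otimes\cN)(\rho_{RA}) - \gamma (\id\otimes\cN)(\sigma_{RA}))] - (1-\gamma)_+,
\end{equation}
and by the above this equals $(1-\lambda)\Tr[M\,(\id\otimes\mathcal{M})(\rho_{RA} - \sigma_{RA})] + (\text{terms from } \gamma)$. Here is where $\gamma \geq 1$ is used: since $(1-\gamma)_+ = 0$, one can write $E_\gamma((\id\otimes\cN)(\rho_{RA})\|(\id\otimes\cN)(\sigma_{RA})) = \sup_M \Tr[M((\id\otimes\cN)(\rho_{RA}) - (\id\otimes\cN)(\sigma_{RA}))] + \sup_M \Tr[M(1-\gamma)(\id\otimes\cN)(\sigma_{RA})]$ — more carefully, $E_\gamma(\omega\|\xi) = \sup_M \Tr[M(\omega - \xi)] + (\gamma-1)\sup_M \Tr[M\xi]$ is not quite right, so instead I would directly bound $E_\gamma((\id\otimes\cN)(\rho)\|(\id\otimes\cN)(\sigma)) = \sup_M \Tr[M((\id\otimes\cN)\rho - \gamma (\id\otimes\cN)\sigma)]$; substituting the decomposition of $(\id\otimes\cN)\rho$ and $(\id\otimes\cN)\sigma$ and using $\Tr_A[\rho_{RA}] = \Tr_A[\sigma_{RA}] = \rho_R$ gives $\lambda \rho_R \otimes \tau - \gamma \lambda \rho_R \otimes \tau = -(\gamma-1)\lambda \rho_R \otimes \tau \leq 0$ in the relevant trace against $M \geq 0$, so this term only helps, and the remaining part is $(1-\lambda)\Tr[M((\id\otimes\mathcal{M})\rho - \gamma(\id\otimes\mathcal{M})\sigma)] \leq (1-\lambda) E_\gamma((\id\otimes\mathcal{M})\rho\|(\id\otimes\mathcal{M})\sigma) \leq (1-\lambda) E_\gamma(\rho_{RA}\|\sigma_{RA})$ by data processing (joining $\id \otimes \mathcal{M}$). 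Dividing through and taking suprema over $r$ and over state pairs yields $\eta^c_\gamma(\cN) \leq 1-\lambda = 1-\alpha_+(\cN)$.

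The main obstacle I anticipate is handling the $-(1-\gamma)_+$ offset and the inhomogeneous factor $\gamma$ cleanly inside the variational formula: one must verify that the replacer-map contribution $-(\gamma-1)\lambda\,\rho_R\otimes\tau$ (which is the \emph{nonpositive} leftover after the two $\mathcal{R}^\tau$ terms fail to cancel when $\gamma \neq 1$) can only decrease the optimized quantity, so that it may be safely dropped for an upper bound — this is precisely the step that forces the hypothesis $\gamma \geq 1$, since for $\gamma < 1$ that leftover term can have either sign and the argument breaks. A secondary technical point is the passage from a near-optimal $X_B \geq 0$ to an exact optimizer; since $\alpha_+(\cN)$ is an SDP with a compact feasible region (after noting $0 \leq X_B$ and $\Tr[X_B] \leq \alpha(\cN) \leq 1$), the supremum is attained, so one may take $X_B$ optimal directly, or alternatively run the whole argument with an $\varepsilon$-optimal $X_B$ and let $\varepsilon \to 0$ at the end.
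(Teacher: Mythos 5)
Your proposal is correct and follows essentially the same route as the paper's proof: decompose $\cN = \lambda\mathcal{R}^{\tau} + (1-\lambda)\mathcal{M}$ with $\tau$ a genuine state (the paper obtains this via the reverse-robustness representation of $\alpha_+(\cN)$ cited from prior work, whereas you derive it directly from the primal SDP, but it is the same decomposition), then use the variational form \eqref{eq:HS_sup_form}, drop the nonpositive leftover $-(\gamma-1)\lambda\,\rho_R\otimes\tau$ using $\gamma\geq 1$ and $\rho_R=\sigma_R$, and finish with data processing for $E_\gamma$. The key step you flag as the ``main obstacle'' is handled exactly as you describe in the paper's proof, so there is no gap.
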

\begin{proof}
Recall from~\cite[Proposition~III.3]{hirche2024quantum} that 
\begin{equation}
\alpha_{+}(\mathcal{N})=\sup_{\substack{\lambda\in\left[  0,1\right]  ,\\\sigma
\in \cD,\\\mathcal{M}\in\operatorname{CPTP}}
}\left\{  \lambda:\mathcal{N}=\lambda\mathcal{R}^{\sigma}+\left(  1-\lambda
\right)  \mathcal{M}\right\}  .
\label{eq:reverse-robustness-alpha_+}
\end{equation}
Let $\lambda \in [0,1]$, $\sigma \in \cD$, and  $\mathcal{M} \in \operatorname{CPTP}$ be  optimizers for $\alpha_+(\cN)$, such that $ \lambda\mathcal{R}^\sigma + (1-\lambda) \mathcal{M} = \mathcal{N}$.

Furthermore, let $\rho_{RA}$ and $\omega_{RA}$ be quantum states such that $\rho_R=\omega_R$. Then consider that
\begin{align}
   & E_\gamma\!\left((\id_R \otimes \cN  )(\rho_{RA}) \Vert (\id_R \otimes \cN  )(\omega_{RA})\right) \notag \\ 
   & =E_\gamma\!\left((\id\otimes(\lambda\mathcal{R}^\sigma + (1-\lambda) \mathcal{M}))(\rho_{RA}) \Vert (\id\otimes(\lambda\mathcal{R}^\sigma + (1-\lambda) \mathcal{M}))(\omega_{RA}) \right)  \\ 
   &= \sup_{0 \leq M_{BR} \leq I_{BR}} \Tr\!\left[ M_{BR}\left[(\id\otimes(\lambda\mathcal{R}^\sigma + (1-\lambda) \mathcal{M}))(\rho_{RA}) - \gamma(\id\otimes(\lambda\mathcal{R}^\sigma + (1-\lambda) \mathcal{M}))(\omega_{RA}) \right]  \right] \\
   &= \sup_{0 \leq M_{BR} \leq I_{BR}} \Tr\!\left[ M_{BR}\left(\lambda   \rho_R \otimes \sigma -\gamma \lambda   \omega_R \otimes \sigma + (1-\lambda) [(\id\otimes\mathcal{M})(\rho_{RA}) - \gamma  (\id\otimes\mathcal{M})(\omega_{RA})]  \right)  \right] \\
   & \leq  \sup_{0 \leq M_{BR} \leq I_{BR}} \Tr\!\left[ M_{BR}\left( (1-\lambda) [(\id\otimes\mathcal{M})(\rho_{RA}) - \gamma  (\id\otimes\mathcal{M})(\omega_{RA})]  \right)  \right] \\ 
   &=(1-\lambda)  E_\gamma\!\left((\id_R \otimes  \cM )(\rho_{RA}) \Vert (\id_R \otimes  \cM )(\omega_{RA})\right) \\
   &\leq (1-\lambda)  E_\gamma(\rho_{RA} \Vert \omega_{RA}).
\end{align}
The second equality follows by using~\eqref{eq:HS_sup_form} for the hockey-stick divergence; the first inequality follows by upper bounding the negative term by zero, because $1-\gamma \leq 0$, $\rho_R=\omega_R$, and $\sigma \otimes \rho_R \geq 0$; and the last inequality by the data-processing inequality for the hockey-stick divergence.

Finally, by supremizing over quantum states  $\rho_{RA}$ and $\omega_{RA}$ satisfying $\rho_R=\omega_R$, we conclude the proof.
\end{proof}

Let $f: (0,\infty) \to \RR$ be a convex and twice differentiable function 
satisfying $f(1)=0$. Then, for all quantum states $\rho$ and $\sigma$, recall the quantum $f$-divergence  defined in \cite[Definition~2.3]{hirche2024quantumDivergences}: 
\begin{equation}\label{eq:f_divergence}
    D_f(\rho \Vert \sigma) \coloneqq \int_{1}^{\infty} f''(\gamma) E_\gamma(\rho \Vert \sigma) + \gamma^{-3} f''(\gamma^{-1}) E_\gamma(\sigma \Vert \rho) \ \mathrm{d} \gamma.
\end{equation}
\begin{corollary}[Complete Contraction of $f$-Divergences] \label{Cor:CC_f_div}
    Let $D_f$ be an $f$-divergence of the form in~\eqref{eq:f_divergence}. Then, we have that 
    \begin{equation}
        \eta^c_{D_f}(\cN) \leq  1- \alpha_{+}(\cN),
    \end{equation}
where $\alpha_{+}(\cN)$ is defined in~\eqref{def:alpha-plus-doeblin}.
\end{corollary}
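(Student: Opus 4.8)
The plan is to piggyback on \cref{prop:CC_HS} and then integrate over $\gamma$. First I would observe that the proof of \cref{prop:CC_HS} in fact establishes the pointwise estimate
\begin{equation}
    E_\gamma\!\left((\id\otimes\cN)(\rho_{RA}) \,\Vert\, (\id\otimes\cN)(\omega_{RA})\right) \leq \left(1-\alpha_{+}(\cN)\right) E_\gamma(\rho_{RA} \Vert \omega_{RA})
\end{equation}
for every $\gamma \geq 1$ and all states $\rho_{RA}, \omega_{RA}$ with $\rho_R = \omega_R$ on a reference system of arbitrary dimension; since the marginal condition $\rho_R = \omega_R$ is symmetric in the two arguments, the same bound holds with the two states swapped. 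This is the only input from the hockey-stick case that I need, and it already comes equipped with an arbitrary reference system $R$, which is exactly what the complete contraction coefficient requires.

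Next I would fix quantum states $\rho_{RA} \neq \sigma_{RA}$ with $\rho_R = \sigma_R$, and write $D_f\!\left((\id\otimes\cN)(\rho_{RA}) \,\Vert\, (\id\otimes\cN)(\sigma_{RA})\right)$ using the integral representation~\eqref{eq:f_divergence}. The integrand is a sum of two hockey-stick terms, $E_\gamma$ of the channel outputs in the two orderings, weighted by $f''(\gamma)$ and $\gamma^{-3}f''(\gamma^{-1})$. Since $f$ is convex, both weights are nonnegative on $(1,\infty)$, so I may apply the displayed pointwise estimate (to each ordering, using $\rho_R = \sigma_R$) termwise inside the integral; monotonicity of the integral then lets me pull the common factor $1-\alpha_{+}(\cN)$ outside, and what remains is exactly $\left(1-\alpha_{+}(\cN)\right) D_f(\rho_{RA}\Vert\sigma_{RA})$ by~\eqref{eq:f_divergence} again.

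Finally I would divide by $D_f(\rho_{RA}\Vert\sigma_{RA})$ and take the supremum over all $\rho_{RA}\neq\sigma_{RA}$ with $\rho_R=\sigma_R$ and over all reference-system dimensions, which yields $\eta^c_{D_f}(\cN) \leq 1-\alpha_{+}(\cN)$. I do not anticipate a serious obstacle: the entire argument is ``apply the hockey-stick bound under the integral sign.'' The one point meriting a word of care is the interchange of the inequality with the integral, but this is immediate from monotonicity of the Lebesgue integral, since every object involved—both $E_\gamma$ terms and both weight functions—is nonnegative on $(1,\infty)$, so no dominated-convergence or additional integrability argument is needed beyond finiteness of $D_f$, which is already implicit in the definition~\eqref{eq:f_divergence}.
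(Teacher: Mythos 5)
Your proposal is correct and follows essentially the same route as the paper: apply the hockey-stick contraction bound of \cref{prop:CC_HS} termwise under the integral in~\eqref{eq:f_divergence} (to both orderings, which is licensed since the marginal condition $\rho_R=\sigma_R$ is symmetric), use nonnegativity of $f''$ to pull out the factor $1-\alpha_{+}(\cN)$, and then supremize. No gaps.
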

\begin{proof}
    Let $\rho_{RA}$ and $\sigma_{RA}$ be quantum states such that $\rho_R=\sigma_R$. 
    Then, consider that
    \begin{align}
       & D_f\!\left( (\id_R \otimes \cN  ) (\rho_{RA}) \Vert (\id_R \otimes \cN  )(\sigma_{RA})\right) \notag \\
       & =\int_{1}^{\infty} f''(\gamma) E_\gamma\!\left((\id_R \otimes \cN  )(\rho_{RA}) \Vert (\id_R \otimes \cN  )(\sigma_{RA})\right) \notag \\ 
       & \quad \quad + \gamma^{-3} f''(\gamma^{-1}) E_\gamma\!\left((\id_R \otimes \cN  )(\sigma_{RA}) \Vert ( \id_R \otimes \cN  )(\rho_{RA})\right) \mathrm{d} \gamma \\
       & \leq \left(1-\alpha_+(\cN)\right) \int_{1}^\infty f''(\gamma) E_\gamma(\rho_{RA} \Vert \sigma_{RA}) + \gamma^{-3} f''(\gamma^{-1}) E_\gamma(\sigma_{RA} \Vert \rho_{RA}) \ \mathrm{d} \gamma \\
       &= \left(1-\alpha_+(\cN)\right) D_f(\rho_{RA} \Vert \sigma_{RA}),
    \end{align}
    where the last inequality follows from~\cref{prop:CC_HS}.
    With this and supremizing over quantum states $\rho_{RA}$ and $\sigma_{RA}$  such that $\rho_R=\sigma_R$, we conclude the proof.
\end{proof}

If we do not allow for reference systems, one can in fact deduce from~\cite[Theorem~8.17]{wolf2012quantum} that also
\begin{align}
    \eta_{\Tr}(\mathcal{N}) \leq 1-\alpha_I(\cN)
    \label{eq:trace-distance-CC-to-induced-doeblin}
\end{align}
holds. 
Bounds on the trace-distance contraction coefficient are particularly interesting because they give upper bounds on the contraction coefficient for a wide class of quantum $f$-divergences~\cite[Lemma~4.1]{hirche2024quantumDivergences}. As $\alpha_{I}(\cN)$ is not known to be computationally efficient to evaluate in general, as it requires optimizing over the separable cone, we note the following hierarchy of SDP computable upper bounds that can only improve upon relaxing to considering $\alpha(\cN)$.
\begin{proposition}
    Consider quantum channel $\cN_{A \to B}$. For every $k \in \mbb{N}$, define the cone $\cK_{k}(A:B) \coloneq \operatorname{PPT}(A:B) \cap \operatorname{Sym}_{k}(A:B)$ where the cones on the right are defined in \eqref{eq:PPT-cone} and \cref{def:k-sym-ext-cone}. Then for every $k \in \mbb{N}$, $\alpha_{I,\cK_{k}}(\cN)$, as defined via \eqref{eqn:inducedDoeblinDual_cone}, is SDP representable and satisfies the inequality
    \begin{align}
        \eta_{\Tr}(\cN) \leq 1 - \alpha_{I,\cK_{k}}(\cN) \ ,
    \end{align}
    which can only ever be a tighter bound than $\eta_{\Tr}(\cN) \leq 1 - \alpha(\cN)$.
\end{proposition}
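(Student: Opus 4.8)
The plan is to establish the three ingredients separately: SDP-representability, the contraction bound, and the comparison with $1-\alpha(\cN)$. For SDP-representability of $\alpha_{I,\cK_{k}}(\cN)$, I would recall that the PPT cone $\operatorname{PPT}(A:B)$ is SDP representable by definition (it involves only positive semidefiniteness of $P_{AB}$ and of its partial transpose $P_{AB}^{\Gamma}$), and that for fixed $k$ the cone $\operatorname{Sym}_{k}(A:B)$ is SDP representable via the existence of a symmetric extension $P_{AB_{1}^{k}}\geq 0$ subject to linear (permutation-invariance and partial-trace) constraints (\cref{def:k-sym-ext-cone}). The intersection of two SDP-representable cones is SDP representable, so $\cK_{k}(A:B)$ is SDP representable, and hence the conic program~\eqref{eqn:inducedDoeblinDual_cone} defining $\alpha_{I,\cK_{k}}(\cN)$ becomes an SDP. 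One should also note that strong duality holds here, since $I_{AB}\in\operatorname{relint}(\Sep(A:B))\subseteq\operatorname{relint}(\cK_{k}(A:B))$ by the argument cited earlier (following \cite{gurvits2002largest}), so Slater's criterion applies as discussed after~\eqref{eqn:inducedDoeblinDual_cone}.

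Next, for the contraction bound $\eta_{\Tr}(\cN)\leq 1-\alpha_{I,\cK_{k}}(\cN)$, the key observation is the chain of inclusions $\Sep(A:B)\subseteq\cK_{k}(A:B)\subseteq\Pos(A\otimes B)$ established in the proof of \cref{cor:entangling-power-in-terms-of-k-sym}. Since the dual program~\eqref{eqn:inducedDoeblinDual_cone} is a minimization over $Y\in\cK$ and enlarging the feasible cone can only decrease the optimal value, we get $\alpha(\cN)=\alpha_{I,\Pos}(\cN)\leq\alpha_{I,\cK_{k}}(\cN)\leq\alpha_{I,\Sep}(\cN)=\alpha_{I}(\cN)$, exactly as stated in \cref{cor:entangling-power-in-terms-of-k-sym}. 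In particular $\alpha_{I,\cK_{k}}(\cN)\leq\alpha_{I}(\cN)$, so $1-\alpha_{I,\cK_{k}}(\cN)\geq 1-\alpha_{I}(\cN)\geq\eta_{\Tr}(\cN)$, where the last inequality is~\eqref{eq:trace-distance-CC-to-induced-doeblin} (deduced from \cite[Theorem~8.17]{wolf2012quantum}). This immediately yields the claimed bound.

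Finally, for the comparison statement that this bound ``can only ever be tighter'' than $\eta_{\Tr}(\cN)\leq 1-\alpha(\cN)$, I would simply invoke $\alpha(\cN)=\alpha_{I,\Pos}(\cN)\leq\alpha_{I,\cK_{k}}(\cN)$ from the same inclusion chain, which gives $1-\alpha_{I,\cK_{k}}(\cN)\leq 1-\alpha(\cN)$. So the $\cK_{k}$-based bound is sandwiched as $\eta_{\Tr}(\cN)\leq 1-\alpha_{I,\cK_{k}}(\cN)\leq 1-\alpha(\cN)$, proving it never does worse than the $\alpha(\cN)$ bound. The proof is thus essentially a packaging of \cref{cor:entangling-power-in-terms-of-k-sym}, \eqref{eq:trace-distance-CC-to-induced-doeblin}, and the standard SDP-representability facts about PPT and symmetric-extension cones; I do not anticipate a genuine obstacle, the only mild care needed being to confirm that $\cK_{k}$ retains the ``$I_{AB}$ in the relative interior'' property so that the conic program is well-posed with strong duality—but this was already verified in the proof of \cref{cor:entangling-power-in-terms-of-k-sym}.
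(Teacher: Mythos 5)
Your proposal is correct and follows essentially the same route as the paper's proof: both reduce the claim to the ordering $\alpha(\cN)=\alpha_{I,\Pos}(\cN)\leq\alpha_{I,\cK_{k}}(\cN)\leq\alpha_{I}(\cN)$ from \cref{cor:entangling-power-in-terms-of-k-sym}, combine it with \eqref{eq:trace-distance-CC-to-induced-doeblin}, and appeal to the well-known SDP representability of the PPT and $k$-symmetric-extension constraints. Your added remarks on strong duality via $I_{AB}\in\operatorname{relint}(\cK_{k})$ are a harmless (and already verified) elaboration rather than a different argument.
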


\begin{proof}
    By \cref{cor:entangling-power-in-terms-of-k-sym}, $\alpha_{I}(\cN) \geq \alpha_{I,\cK_{k}}(\cN) \geq \alpha(\cN)$. Combining this with \eqref{eq:trace-distance-CC-to-induced-doeblin} establishes the inequality and the claim that this bound can only be tighter. That $\alpha_{I,\cK_{k}}(\cN)$ is SDP representable follows from the constraints that define $\operatorname{PPT}(A:B)$ and $\operatorname{Sym}_{k}(A:B)$ being SDP representable as is well known. So in this case, \eqref{eqn:inducedDoeblinDual} is SDP representable.
\end{proof}
\noindent We remark that $\alpha_{I,\cK_{k}}(\cN)$ has the appealing property of having an operational interpretation given in \cref{cor:entangling-power-in-terms-of-k-sym}.

\bigskip 
We define the contraction coefficient for the hockey-stick divergence $E_\gamma(\rho\|\sigma)$ for $\gamma \geq 1$ from \eqref{eq:def-gen-contra-coef}, resulting in the following~\cite[Eq.~(II.4)]{hirche2022quantum}:
\begin{align}
    \eta_\gamma(\cN) &\coloneqq  \sup_{\substack{\rho,\sigma \in \mathcal{D}: \\ E_\gamma(\rho,\sigma) \neq 0} } \frac{E_\gamma(\cN(\rho)\|\cN(\sigma))}{E_\gamma(\rho\|\sigma)}  \label{eq:contraction_coe_E_gamma}\\
    &= \sup_{\psi\perp\phi} E_\gamma(\cN(\psi)\|\cN(\phi)), \label{eq:contraction_E_gamma_reduction}
\end{align}
where the second equality is a consequence of~\cite[Theorem~II.2]{hirche2022quantum}  and the notation $\psi\perp\phi$ indicates that $\psi$ and $\phi$ are orthogonal pure states. (See also~\cite[Proposition~11]{nuradha2024contraction} for an alternative proof of~\eqref{eq:contraction_E_gamma_reduction}.) Note that 
\begin{equation}
    \eta_1(\cN) = \eta_{\Tr}(\cN)
\end{equation}
because $E_1(\rho \Vert \sigma)= T(\rho,\sigma)$.

Here, we provide an alternative proof for a refined connection between $\eta_\gamma(\cN)$ and $\alpha(\cN)$, originally considered in~\cite[Theorem~8.17]{wolf2012quantum} and~\cite[Lemma~III.6]{hirche2024quantum}. First we give an alternative expression for the hockey-stick divergence. 

\begin{lemma}
\label{lem:hockey_stick_var}
Let $\rho$ be a Hermitian operator with unit trace, and let $\sigma $ be a positive semidefinite operator. For $\gamma \geq 0$,
    we have
    \begin{align}
        E_\gamma(\rho\|\sigma) +(1-\gamma)_{+} = 1 - \sup_{Y\in\Herm}\{ \Tr[Y]:Y\leq\rho, Y\leq\gamma\sigma \}\label{Eq:HS-var}.
    \end{align}
\end{lemma}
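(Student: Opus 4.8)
The plan is to prove the identity by relating the supremum on the right-hand side to the variational (supremum-over-$M$) form of the hockey-stick divergence given in~\eqref{eq:HS_sup_form}. Start from the definition in~\eqref{eq:hockey-stick-div-def}, namely $E_\gamma(\rho\|\sigma) = \Tr[(\rho-\gamma\sigma)_+] - (1-\gamma)_+$, so that the claimed equation is equivalent to showing
\begin{equation}
    \Tr[(\rho-\gamma\sigma)_+] = 1 - \sup_{Y\in\Herm}\left\{\Tr[Y]:Y\leq\rho,\ Y\leq\gamma\sigma\right\}.
\end{equation}
The key observation is that the two constraints $Y\leq\rho$ and $Y\leq\gamma\sigma$ are equivalent to $Y\leq\min$ in the sense that the largest such $Y$ (in trace) is obtained by ``cutting off'' $\rho$ at $\gamma\sigma$. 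Concretely, I would write $Y = \rho - Z$ with $Z\geq 0$ (this encodes $Y\leq\rho$), so $\Tr[Y] = \Tr[\rho] - \Tr[Z] = 1 - \Tr[Z]$, and the remaining constraint $Y\leq\gamma\sigma$ becomes $\rho - \gamma\sigma \leq Z$. Hence
\begin{equation}
    \sup_{Y\in\Herm}\left\{\Tr[Y]:Y\leq\rho,\ Y\leq\gamma\sigma\right\} = 1 - \inf_{Z\geq 0}\left\{\Tr[Z]: Z\geq \rho-\gamma\sigma\right\},
\end{equation}
and it remains to show $\inf_{Z\geq 0}\{\Tr[Z]: Z\geq\rho-\gamma\sigma\} = \Tr[(\rho-\gamma\sigma)_+]$.

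For that last step I would argue both inequalities. For ``$\leq$'': the choice $Z = (\rho-\gamma\sigma)_+$ is feasible, since it is positive semidefinite and $(\rho-\gamma\sigma)_+ \geq \rho-\gamma\sigma$ always holds for Hermitian operators (the positive part dominates the operator itself), giving $\inf \leq \Tr[(\rho-\gamma\sigma)_+]$. For ``$\geq$'': let $A \coloneqq \rho-\gamma\sigma$ and let $\Pi_+$ be the spectral projection onto the strictly positive eigenspace of $A$, so $\Tr[A_+] = \Tr[\Pi_+ A]$. For any feasible $Z$, we have $Z \geq 0$ and $Z\geq A$, hence $\Tr[Z] \geq \Tr[\Pi_+ Z] \geq \Tr[\Pi_+ A] = \Tr[A_+]$, where the first inequality uses $Z\geq 0$ and $0\leq\Pi_+\leq I$, and the second uses $Z\geq A$. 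This establishes the reverse inequality and hence equality. Assembling the three displayed equalities yields the claim.

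I do not anticipate a genuine obstacle here — this is essentially the standard SDP duality/variational characterization of the trace of the positive part, done by hand. The one point requiring a little care is the bookkeeping with the $(1-\gamma)_+$ term and the fact that $\rho$ is only assumed to be a unit-trace Hermitian operator (a quasi-state) rather than a genuine state; but since the argument only uses $\Tr[\rho]=1$ and never positivity of $\rho$, this causes no difficulty. I would also remark in passing that~\eqref{Eq:HS-var} can alternatively be read off directly by substituting $M = I - \Pi$ (with $\Pi$ the relevant subprojection) into~\eqref{eq:HS_sup_form}, but the $Y = \rho-Z$ substitution above gives the cleanest self-contained derivation.
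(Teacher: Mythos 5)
Your proof is correct, and it takes a genuinely different route from the paper's. The paper derives~\eqref{Eq:HS-var} by quoting the known variational expression for the minimum error probability of binary state discrimination, $p_e(p,\rho,\sigma)=\sup_{Y\in\Herm}\{\Tr[Y]:Y\leq p\rho,\ Y\leq(1-p)\sigma\}$ (from Yuen--Kennedy--Lax), combining it with the identity relating $p_e$ to the hockey-stick divergence, and then specializing $p=\tfrac{1}{1+\gamma}$ and rescaling. You instead prove the identity from scratch: the change of variables $Y=\rho-Z$ with $Z\geq0$ converts the supremum into $1-\inf_{Z\geq0}\{\Tr[Z]:Z\geq\rho-\gamma\sigma\}$, and your two-sided argument (feasibility of $Z=(\rho-\gamma\sigma)_+$ for one direction, the projector bound $\Tr[Z]\geq\Tr[\Pi_+Z]\geq\Tr[\Pi_+(\rho-\gamma\sigma)]=\Tr[(\rho-\gamma\sigma)_+]$ for the other) correctly identifies that infimum with $\Tr[(\rho-\gamma\sigma)_+]$, which equals $E_\gamma(\rho\|\sigma)+(1-\gamma)_+$ by definition. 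Each step uses only $\Tr[\rho]=1$ and Hermiticity, so the quasi-state generality is handled cleanly. What your approach buys is a short, self-contained derivation that in effect reproves the YKL variational formula in the binary case; what the paper's approach buys is brevity and an explicit link to the state-discrimination literature, which it exploits again in the surrounding remarks. Either proof would serve.
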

\begin{proof}
A similar expression was previously given for the minimum probability of error in a state discrimination problem~(see~\cite[Eq.~(III.17)]{YKL1975} and~\cite[Section II]{bandyopadhyay2014conclusive}), 
\begin{align}
    p_e(p,\rho,\sigma) = \sup_{Y\in\Herm}\{ \tr[Y]:Y\leq p\rho, Y\leq(1-p)\sigma \}. 
\end{align}
Observing that the minimum error probability can be written as a function of the hockey-stick divergence as follows~\cite[Eq.~(2.51)]{hirche2024quantumDivergences} 
\begin{align}
    p_e(p,\rho,\sigma) &\coloneqq  \inf_{M : 0\leq M \leq I} \left\{p \tr (\Id-M)\rho + (1-p)\tr M\sigma\right\} \\
    &= p - p E_{\frac{1-p}p}(\rho\|\sigma), 
\end{align}
and fixing $p=\frac1{1+\gamma}$, we have
\begin{align}
    E_\gamma(\rho\|\sigma) + (1-\gamma)_{+} &= 1-(\gamma+1)p_e(p,\rho,\sigma) \\ 
   &= 1-(\gamma+1)\sup_{Y\in\Herm}\{ \tr[Y]:Y\leq \frac1{1+\gamma}\rho, Y\leq\frac\gamma{1+\gamma}\sigma \}\\ 
    &= 1-\sup_{Y\in\Herm}\{ \tr[Y]:Y\leq \rho, Y\leq\gamma\sigma \},
\end{align}
concluding the proof.
\end{proof}

\cref{cor:contr-hs-div} below provides an alternative formulation of the hockey-stick contraction coefficient $\eta_\gamma(\cN)$. 
\begin{corollary}[Contraction Coefficient of Hockey-Stick Divergence]
\label{cor:contr-hs-div}
Let $\cN$ be a quantum channel.
    The following equality holds for $\gamma \geq 1$:
    \begin{align}
    \eta_\gamma(\cN) 
    &= 1 - \inf_{\psi\perp\phi}\sup_{Y\in\Herm}\{ \tr[Y]:Y\leq\cN(\psi), Y\leq\gamma\cN(\phi) \} ,
    \label{eq:contraction-pure-stick-dual}
\end{align}
where $\eta_\gamma(\cN)$ is defined in~\eqref{eq:contraction_coe_E_gamma}.
\end{corollary}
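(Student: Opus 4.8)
The plan is to combine the reduction of $\eta_\gamma(\cN)$ to orthogonal pure states in~\eqref{eq:contraction_E_gamma_reduction} with the variational formula for the hockey-stick divergence established in \cref{lem:hockey_stick_var}. First I would invoke~\eqref{eq:contraction_E_gamma_reduction}, which gives $\eta_\gamma(\cN) = \sup_{\psi\perp\phi} E_\gamma(\cN(\psi)\Vert\cN(\phi))$ for $\gamma\geq1$; this is the step where the hypothesis $\gamma\geq1$ is first used, since it is what justifies restricting the optimization to orthogonal pure states.

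Next I would apply \cref{lem:hockey_stick_var} with $\rho = \cN(\psi)$ and $\sigma = \cN(\phi)$. The hypotheses of that lemma are met because a quantum channel maps a density operator to a density operator, so $\cN(\psi)$ has unit trace and $\cN(\phi) \geq 0$. Since $\gamma\geq1$, the boundary term $(1-\gamma)_+$ equals zero, and the lemma yields
\begin{equation}
E_\gamma(\cN(\psi)\Vert\cN(\phi)) = 1 - \sup_{Y\in\Herm}\{\Tr[Y]: Y\leq\cN(\psi),\ Y\leq\gamma\cN(\phi)\}.
\end{equation}
Substituting this into the supremum over $\psi\perp\phi$ and using the elementary identity $\sup_x(1-g(x)) = 1 - \inf_x g(x)$, with $g$ taken to be the inner supremum over $Y$, delivers~\eqref{eq:contraction-pure-stick-dual}.

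There is essentially no obstacle here: the content of the corollary lies entirely in~\eqref{eq:contraction_E_gamma_reduction} and \cref{lem:hockey_stick_var}, and the statement is just their composition together with the trivial rewriting of a supremum of $1 - g$ as $1$ minus an infimum of $g$. The only points requiring a moment's care are verifying the hypotheses of \cref{lem:hockey_stick_var} for the pair $(\cN(\psi),\cN(\phi))$ and tracking where $\gamma\geq1$ enters — namely both in the pure-state reduction and in the vanishing of $(1-\gamma)_+$.
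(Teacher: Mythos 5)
Your proof is correct and follows exactly the route the paper takes: its proof of this corollary is a one-line appeal to adapting \cref{lem:hockey_stick_var} together with~\eqref{eq:contraction_E_gamma_reduction}, which is precisely the composition you spell out. Your added care in checking the hypotheses of the lemma and noting that $(1-\gamma)_+$ vanishes for $\gamma\geq 1$ only makes explicit what the paper leaves implicit.
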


\begin{proof}
The proof of~\eqref{eq:contraction-pure-stick-dual} follows by adapting~\cref{lem:hockey_stick_var} together with~\eqref{eq:contraction_E_gamma_reduction}.   
\end{proof}

\begin{remark}[Efficient Computability of $\alpha(\cN^{\otimes n})$] \label{rem:effi_com_alpha}
   Due to
    \begin{equation}
        \eta_{\operatorname{Tr}}(\mathcal{N}^{\otimes n}) \leq 1-\alpha(\cN^{\otimes n}),
    \end{equation}
the computation of $\alpha(\cN^{\otimes n})$ is useful in providing bounds on $\eta_{\operatorname{Tr}}(\mathcal{N}^{\otimes n}) $. Recall from the dual SDP in~\eqref{eq:alpha-cN-dual} that 
    \begin{align}
\alpha(\cN^{\otimes n}) &= \inf_{Y_{A^nB^n} \geq 0} \left\{ \Tr\left[(\Gamma^{\cN}_{AB})^{\otimes n}Y_{A^nB^n}\right] \, : \, \Tr_{A^n}[Y_{A^nB^n}] = I_{B^n} \right\}.
\end{align}
By observing the SDP as it is, it seems that the time complexity to evaluate this SDP would be $O(\exp(n))$. However, one can utilize permutation symmetry to reduce its complexity to $O\!\left( \operatorname{poly}(n)\right)$, by following the approach demonstrated in~\cite{FST22}. In particular, we can establish the claimed efficiency with the use of properties in~\cite[Proposition~3]{singh2024extendible}.
\end{remark}

\begin{corollary}
    Let $\cN$ be a quantum channel.
The following inequality holds for all $n\in \mathbb{N}$:
\begin{equation}
\label{eq:tighter_contraction_n_tensor_d_max_alt}
\eta_{\operatorname{Tr}}(\mathcal{N}^{\otimes n}) \leq 1-\alpha_{\wang}(\mathcal{N})^n \leq n(1-\alpha_{\wang}(\mathcal{N})).
\end{equation}
\end{corollary}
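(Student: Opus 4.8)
The plan is to assemble three ingredients that are already available in the excerpt: (i) the bound $\eta_{\operatorname{Tr}}(\mathcal{M}) \leq \eta^c_{\operatorname{Tr}}(\mathcal{M}) \leq 1 - \alpha(\mathcal{M})$ for any channel $\mathcal{M}$, which follows from \cref{prop:trace_distance_complete_cont_Doeblin_bound} together with $\eta_{\operatorname{Tr}}\leq \eta^c_{\operatorname{Tr}}$; (ii) the inequality $\alpha(\mathcal{M}) \geq \alpha_{\wang}(\mathcal{M})$ from~\eqref{eq:a-b-doeblin-ineq}; and (iii) the multiplicativity $\alpha_{\wang}(\mathcal{N}_1 \otimes \mathcal{N}_2) = \alpha_{\wang}(\mathcal{N}_1)\,\alpha_{\wang}(\mathcal{N}_2)$ from \cref{prop:multiplicativity-b-N}. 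The only additional input is an elementary scalar estimate.

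For the first inequality, I would apply ingredient (i) to the channel $\mathcal{M} = \mathcal{N}^{\otimes n}$ to get $\eta_{\operatorname{Tr}}(\mathcal{N}^{\otimes n}) \leq 1 - \alpha(\mathcal{N}^{\otimes n})$, then lower bound $\alpha(\mathcal{N}^{\otimes n}) \geq \alpha_{\wang}(\mathcal{N}^{\otimes n})$ by (ii), and finally iterate (iii) to obtain $\alpha_{\wang}(\mathcal{N}^{\otimes n}) = \alpha_{\wang}(\mathcal{N})^n$. Chaining these yields $\eta_{\operatorname{Tr}}(\mathcal{N}^{\otimes n}) \leq 1 - \alpha_{\wang}(\mathcal{N})^n$.

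For the second inequality, I would invoke the normalization $\alpha_{\wang}(\mathcal{N}) \in [0,1]$ and the factorization $1 - x^n = (1-x)\sum_{k=0}^{n-1} x^k$, valid for all real $x$; for $x \in [0,1]$ each $x^k \leq 1$, so $1 - x^n \leq n(1-x)$. Taking $x = \alpha_{\wang}(\mathcal{N})$ completes the proof.

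I do not anticipate any genuine obstacle here: the substantive work has already been carried out in establishing the multiplicativity of $\alpha_{\wang}$ (\cref{prop:multiplicativity-b-N}) and the Doeblin upper bound on the trace-distance contraction coefficient (\cref{prop:trace_distance_complete_cont_Doeblin_bound}). This corollary is essentially a bookkeeping assembly of those results plus a standard telescoping estimate; the only point requiring a moment's care is to invoke the tensor-power form of the $\eta_{\operatorname{Tr}} \leq 1-\alpha$ bound rather than trying to directly bound $\eta_{\operatorname{Tr}}(\mathcal{N}^{\otimes n})$ by powers of $\eta_{\operatorname{Tr}}(\mathcal{N})$, which would be both weaker and not obviously valid.
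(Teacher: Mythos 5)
Your proof is correct and follows essentially the same route as the paper: reduce to $1-\alpha(\mathcal{N}^{\otimes n})$, pass to $\alpha_{\wang}$ via \eqref{eq:a-b-doeblin-ineq}, apply multiplicativity (\cref{prop:multiplicativity-b-N}), and finish with $1-x^n\leq n(1-x)$ on $[0,1]$. The only cosmetic difference is that the paper obtains $\eta_{\operatorname{Tr}}(\mathcal{M})\leq 1-\alpha(\mathcal{M})$ by chaining \eqref{eq:trace-distance-CC-to-induced-doeblin} with \cref{prop:induced-doeblin-larger-than-doeblin}, whereas you go through the complete contraction coefficient bound of \cref{prop:trace_distance_complete_cont_Doeblin_bound}; both are valid and equally immediate.
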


\begin{proof}
The first inequality follows from~\eqref{eq:trace-distance-CC-to-induced-doeblin}, \cref{prop:induced-doeblin-larger-than-doeblin},~\eqref{eq:a-b-doeblin-ineq}, and multiplicativity of $\alpha_{\wang}(\mathcal{N})$ (\cref{prop:multiplicativity-b-N}). The second inequality follows because $1-x^n \leq n(1-x)$ for all $n\in \mathbb{N}$ and $x\in[0,1]$.
\end{proof}

\subsection{Expansion Coefficients and Reverse Doeblin Coefficients}

So far, we considered contraction coefficients and Doeblin coefficients. Next, we focus on expansion coefficients and reverse Doeblin coefficients. Expansion coefficients were mentioned in~\cite{ramakrishnan2020computing} and then further explored in~\cite{laracuente2023information,hirche2024quantum,belzig2024reversetypedataprocessinginequality}. We give a general definition here that allows for arbitrary divergences and additional reference systems. 
\begin{definition}
    For a quantum channel $\cN$, a generalized divergence $\DD$, and $r\in \mathbb{N}$, we define
    \begin{align}
        \check\eta_\DD(\cN,r) \coloneqq \inf_{\substack{\rho_{RA}\neq\sigma_{RA},\\ \rho_R=\sigma_R, \\ |R|=r}} \frac{\DD((\id\otimes\cN)(\rho_{RA})\|(\id\otimes\cN)(\sigma_{RA}))}{\DD(\rho_{RA}\|\sigma_{RA})}, 
    \end{align}
    leading to the expansion coefficient and the complete expansion coefficient, 
    \begin{align}
        \check\eta_\DD(\cN) &\coloneqq  \check\eta_\DD(\cN,1) = \inf_{\substack{\rho_{A}\neq \sigma_{A}}} \frac{\DD(\cN(\rho_{A})\|\cN(\sigma_{A}))}{\DD(\rho_{A}\|\sigma_{A})} \\
        \check\eta^c_\DD(\cN) & \coloneqq \inf_{r\geq1} \check\eta_\DD(\cN,r), 
    \end{align}
    respectively. 
\end{definition}
It should be remarked that, in this generality, expansion coefficients are often trivial, i.e., equal to zero~\cite{laracuente2023information,hirche2024quantum,belzig2024reversetypedataprocessinginequality}. A notable exception is the case of the trace distance. 
It was shown in~\cite[Lemma~IV.1]{hirche2024quantum} that the expansion coefficient of the trace distance is lower bounded as follows by the reverse Doeblin coefficient $\check{\alpha}(\cN)$:
\begin{equation} \label{eq:expansion_trace}
    \check{\eta}_{\Tr}(\cN)  \geq 1- \check{\alpha}(\cN),
\end{equation}
where 
\begin{equation} \label{eq:reverse_Doeblin}
   \check{\alpha}(\cN) \coloneqq \inf_{X \in \Herm} \left\{p: \cN  \geq_{\operatorname{deg}}\cQ_{p,X} \right\}
\end{equation}
with 
\begin{equation}
    \cQ_{p,X}(\rho) \coloneqq (1-p) \rho +p \operatorname{Tr}[\rho] X
\end{equation}
for $X \in \Herm$ and for all $\rho$. The notation $\cN  \geq_{\operatorname{deg}} \cM $ is equivalent to the existence of a channel $\cA$ such that $\cA \circ \cN = \cM$.
Also, note that~\eqref{eq:reverse_Doeblin} is SDP computable by~\cite[Corollary~IV.2]{hirche2024quantum}.
Here, we state a stronger result that includes the complete expansion coefficient for the trace distance. 
\begin{proposition} \label{prop:trace_distance_complete_exp_Doeblin_bound} Let $\cN$ be a quantum channel. Then, we have
    \begin{align}
        \check\eta_{\Tr}(\mathcal{N}) \geq \check\eta^c_{\Tr}(\mathcal{N}) \geq 1-\check\alpha(\cN). 
        \label{eq:complete-trace-distance-EC-to-doeblin}
    \end{align}
\end{proposition}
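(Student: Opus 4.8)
The plan is to treat the two inequalities separately. The first, $\check\eta_{\Tr}(\mathcal{N}) \geq \check\eta^c_{\Tr}(\mathcal{N})$, is immediate: $\check\eta_{\Tr}(\mathcal{N}) = \check\eta_{\Tr}(\mathcal{N},1)$ is one of the quantities appearing in the infimum over $r \geq 1$ that defines $\check\eta^c_{\Tr}(\mathcal{N})$, so the latter is no larger.

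For the second inequality, I would adapt the argument of \cref{prop:trace_distance_complete_cont_Doeblin_bound}, replacing the reverse robustness representation with the degradability representation in~\eqref{eq:reverse_Doeblin}. Fix any $p \in [0,1]$ and $X \in \Herm$ that are feasible in~\eqref{eq:reverse_Doeblin}, meaning there is a channel $\cA$ with $\cA \circ \cN = \cQ_{p,X}$. Let $r \in \mathbb{N}$ and let $\rho_{RA}, \sigma_{RA}$ be states with $|R| = r$ and $\rho_R = \sigma_R$. The key point is that the replacer-type term in $\cQ_{p,X}$ drops out under the matching-marginal hypothesis:
\begin{align}
(\id_R \otimes \cQ_{p,X})(\rho_{RA}) - (\id_R \otimes \cQ_{p,X})(\sigma_{RA})
&= (1-p)(\rho_{RA} - \sigma_{RA}) + p\,(\rho_R - \sigma_R) \otimes X \\
&= (1-p)(\rho_{RA} - \sigma_{RA}).
\end{align}
Then, applying the data-processing inequality for the trace norm under the channel $\id_R \otimes \cA$,
\begin{align}
\left\| (\id_R \otimes \cN)(\rho_{RA}) - (\id_R \otimes \cN)(\sigma_{RA}) \right\|_1
&\geq \left\| (\id_R \otimes \cQ_{p,X})(\rho_{RA}) - (\id_R \otimes \cQ_{p,X})(\sigma_{RA}) \right\|_1 \\
&= (1-p)\left\| \rho_{RA} - \sigma_{RA} \right\|_1.
\end{align}

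Dividing by $\|\rho_{RA} - \sigma_{RA}\|_1 > 0$ shows that every ratio defining $\check\eta_{\Tr}(\mathcal{N},r)$ is at least $1-p$, uniformly in $r$, hence $\check\eta^c_{\Tr}(\mathcal{N}) \geq 1-p$. Since this holds for every feasible pair $(p,X)$, taking the supremum over admissible $p$ — equivalently, $1$ minus the infimum in~\eqref{eq:reverse_Doeblin} — yields $\check\eta^c_{\Tr}(\mathcal{N}) \geq 1 - \check\alpha(\cN)$, completing the proof. I do not anticipate a genuine obstacle here: the only things to get right are that $\cN \geq_{\operatorname{deg}} \cQ_{p,X}$ is precisely what permits passing from $\cN$ to $\cQ_{p,X}$ by data processing, and that the constraint $\rho_R = \sigma_R$ is precisely what annihilates the $X$-dependent contribution — the same two ingredients underlying \cref{prop:trace_distance_complete_cont_Doeblin_bound}.
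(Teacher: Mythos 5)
Your proposal is correct and follows essentially the same route as the paper's proof: data processing under the degrading channel $\cA$ to pass from $\cN$ to $\cQ_{p,X}$, followed by the observation that the $X$-dependent term cancels under the hypothesis $\rho_R = \sigma_R$, yielding the factor $(1-p)$. The only (harmless) difference is that you work with arbitrary feasible pairs $(p,X)$ and take a supremum at the end, whereas the paper assumes optimizers are attained; your version spells out the cancellation computation that the paper leaves implicit.
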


\begin{proof}
    The first inequality holds by definition. Now, let $p$, $\cA$, and $X$ be optimizers for $\check\alpha(\cN)$. Consider the following: 
    \begin{align}
        &\left\| (\id\otimes\cN)(\rho_{RA}) - (\id\otimes\cN)(\sigma_{RA}) \right\|_1 \notag \\
        &\geq \left\| (\id\otimes\cA\circ\cN)(\rho_{RA}) - (\id\otimes\cA\circ\cN)(\sigma_{RA}) \right\|_1 \\
        &= \left\| (\id\otimes\cQ_{p,X})(\rho_{RA}) - (\id\otimes\cQ_{p,X})(\sigma_{RA}) \right\|_1 \\
        &= (1-p) \left\| \rho_{RA} - \sigma_{RA} \right\|_1. 
    \end{align}
    The second inequality follows directly from the fact that we have shown that
    \begin{equation}
    \frac{\left\| (\id\otimes\cN)(\rho_{RA}) - (\id\otimes\cN)(\sigma_{RA}) \right\|_1}{\left\| \rho_{RA} - \sigma_{RA} \right\|_1} \geq 1-p
    \end{equation}
    holds for all states $\rho_{RA}$ and $\sigma_{RA}$ satisfying $\rho_R = \sigma_R$.
\end{proof}

Although not the main focus of this work, expansion coefficients and reverse Doeblin coefficient will play a role again later when we discuss applications. 

\section{Quantum Doeblin Coefficients for Example Channels} 

\label{sec:doeblin-examples}

In this section, we consider quantum Doeblin coefficients for a variety of quantum channels, in some cases finding exact expressions. 

\subsection{Classical--Quantum Channels}

First, we consider classical--quantum channels of the form
\begin{align}
    \cN(\cdot) = \sum_i \langle i| (\cdot) |i\rangle  \rho_i, \label{Eq:m-and-p}
\end{align}
where $(\rho_i)_i$ is a tuple of quantum states and $\{|i\rangle\}_i$ is an orthonormal basis. For channels of this form, we obtain the following characterization of the Doeblin coefficient $\alpha(\mathcal{N})$, which demonstrates that, for such channels, it is equal to a quantity previously considered in~\cite[Eq.~(185)]{mishra2023optimal}, in the context of error exponents for quantum state exclusion.

\begin{proposition}
    \label{lem:cq_channel_doeblin}
    For a classical--quantum channel $\cN$ of the form in~\eqref{Eq:m-and-p}, the following equality holds:
    \begin{align}
        \alpha(\cN) &= \sup_{ X \in\Herm} \left\{\tr X :X  \leq \rho_i\;\forall i \right\}.
    \end{align}
\end{proposition}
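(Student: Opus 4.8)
The plan is to compute the Choi operator of the classical--quantum channel in~\eqref{Eq:m-and-p}, plug it into the primal SDP definition~\eqref{eq:alpha_cN_def}, and observe that the block-diagonal structure decouples the constraint $I_A \otimes X_B \leq \Gamma^{\cN}_{AB}$ into the single operator inequality $X_B \leq \rho_i$ for each $i$. First I would note that for $\cN(\cdot) = \sum_i \langle i|(\cdot)|i\rangle \rho_i$, the Choi operator is
\begin{equation}
    \Gamma^{\cN}_{AB} = \sum_{i,j} |i\rangle\!\langle j|_A \otimes \cN(|i\rangle\!\langle j|_{A'}) = \sum_{i,j} |i\rangle\!\langle j|_A \otimes \delta_{ij}\rho_i = \sum_i |i\rangle\!\langle i|_A \otimes \rho_i,
\end{equation}
using that $\langle i|(|k\rangle\!\langle \ell|)|i\rangle = \delta_{ik}\delta_{i\ell}$, so $\cN(|k\rangle\!\langle \ell|) = \delta_{k\ell}\rho_k$. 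Thus $\Gamma^{\cN}_{AB}$ is block-diagonal in the $A$ basis $\{|i\rangle\}_i$.

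Next I would analyze the constraint. Writing $I_A \otimes X_B = \sum_i |i\rangle\!\langle i|_A \otimes X_B$, the inequality $I_A\otimes X_B \leq \Gamma^{\cN}_{AB}$ becomes $\sum_i |i\rangle\!\langle i|_A \otimes (X_B - \rho_i) \leq 0$. Since the operators $|i\rangle\!\langle i|_A \otimes (X_B-\rho_i)$ act on mutually orthogonal subspaces $\operatorname{span}\{|i\rangle_A\}\otimes B$, their sum is negative semidefinite if and only if each summand is; i.e., if and only if $X_B \leq \rho_i$ for every $i$. The objective $\operatorname{Tr}[X_B]$ is unchanged, so
\begin{equation}
    \alpha(\cN) = \sup_{X_B \in \operatorname{Herm}}\left\{\operatorname{Tr}[X_B] : I_A\otimes X_B \leq \Gamma^{\cN}_{AB}\right\} = \sup_{X \in \operatorname{Herm}}\left\{\operatorname{Tr}[X] : X \leq \rho_i \ \forall i\right\},
\end{equation}
which is the claimed identity.

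There is no real obstacle here — the only point requiring a line of care is the claim that a sum of operators supported on orthogonal subspaces is PSD (or NSD) iff each term is, which follows by sandwiching with the orthogonal projectors $|i\rangle\!\langle i|_A \otimes I_B$ (a completely positive operation) to extract each block. I would state that step explicitly but not belabor it. Optionally, one could also remark that this matches the dual/primal pair: the dual SDP~\eqref{eq:alpha-cN-dual} with $Y_{AB}$ block-diagonal recovers $\sum_i \operatorname{Tr}[\Lambda_i \rho_i]$ over POVMs $(\Lambda_i)_i$, connecting to the exclusion interpretation and the quantity of~\cite[Eq.~(185)]{mishra2023optimal}, but this is not needed for the proof of the stated proposition.
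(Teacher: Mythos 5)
Your proposal is correct and follows essentially the same route as the paper: compute the Choi operator $\Gamma^{\cN}=\sum_i |i\rangle\!\langle i|\otimes\rho_i$ and observe that the block-diagonal structure reduces the constraint $I\otimes X\leq\Gamma^{\cN}$ to $X\leq\rho_i$ for all $i$. The extra care you take in justifying the block-wise decoupling (sandwiching with the projectors $|i\rangle\!\langle i|\otimes I$) is a detail the paper leaves implicit but is entirely consistent with its argument.
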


\begin{proof}
For such channels, the Choi operator is given by
\begin{align}
    \Gamma^{\cN} = \sum_{i,j}|i\rangle\!\langle j| \otimes \mathcal{N}(|i\rangle\!\langle
j|) = \sum_{i}|i\rangle\!\langle i| \otimes \rho_i.
\end{align}
Hence, 
\begin{align}
        \alpha(\cN) &= \sup_{X\in\Herm} \left\{\tr X: I \otimes X 
        \leq \Gamma^{\cN} \right\} \\
        &= \sup_{X\in\Herm} \left\{\tr X: I \otimes X \leq \sum_{i} |i\rangle\!\langle i| \otimes \rho_i \right\} \\
        &= \sup_{X\in\Herm} \left\{\tr X:X  \leq \rho_i\;\forall i \right\},
    \end{align}
    concluding the proof.
\end{proof}

\begin{remark}[Connection to Trace Distance and its Multivariate Generalizations]
Comparing~\cref{lem:cq_channel_doeblin} with~\eqref{Eq:HS-var} for $\gamma=1$ and applying \cref{lem:hockey_stick_var}, we notice that for a channel~$\cN$ as in~\eqref{Eq:m-and-p} and associated with a pair $(\rho,\sigma)$ of states, we have
\begin{align}
    E_1(\rho\|\sigma) = \frac12 \left\|\rho-\sigma\right\|_1 = 1 - \alpha(\cN). \label{eq:E_1_2_ensemble}
\end{align}
For classical--quantum channels with more than two states (i.e., such that the dimension of the input system is larger than two), we can view the quantum Doeblin coefficient as a multivariate trace distance generalization. This generalizes classical results from~\cite{makur2024doeblin}.     
\end{remark}

\begin{remark}[State Discrimination and Exclusion]
\label{rem:doeblin-exclusion-cq}
    We can relate~\cref{lem:cq_channel_doeblin} and~\eqref{eq:E_1_2_ensemble} to the probability of error in state discrimination and state exclusion. Consider an equal prior $p=\frac12$ setting. Then the probability of error evaluates to 
\begin{align}
    p_e(p,\rho,\sigma) = \frac12 \alpha(\cN) = \frac12(1-E_1(\rho\|\sigma)). 
\end{align}

Analogous considerations hold for non-equal priors and state exclusion between multiple states; see~\cite[Eq.~(181)]{mishra2023optimal}. 
\end{remark}

\begin{corollary}
\label{cor:cq-chan-induced-doeb-is-doeb}
    For a classical-quantum channel $\cN$, the following equality holds:
    \begin{equation}
    \alpha_{I}(\cN) = \alpha(\cN).
    \end{equation}
\end{corollary}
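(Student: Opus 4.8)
The plan is to prove the inequality $\alpha_I(\cN) \leq \alpha(\cN)$, since the reverse inequality $\alpha_I(\cN) \geq \alpha(\cN)$ is already established in \cref{prop:induced-doeblin-larger-than-doeblin} for arbitrary channels. So it suffices to exhibit, for a classical--quantum channel $\cN$, a feasible strategy for the induced Doeblin coefficient (in the sense of \cref{def:induced-doeblin-coefficient} or \eqref{eq:alpha-n-to-n}) whose value equals $\alpha(\cN)$.

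First, I would invoke \cref{lem:cq_channel_doeblin}, which gives $\alpha(\cN) = \sup_{X \in \Herm}\{\Tr X : X \leq \rho_i \ \forall i\}$ for $\cN(\cdot) = \sum_i \langle i|(\cdot)|i\rangle \rho_i$. The key observation is that the constraint $X \leq \rho_i$ for all $i$ is exactly the constraint appearing in the fully-quantum exclusion / minimum-error-probability characterization. Concretely, by \cref{lem:hockey_stick_var} (or the state-exclusion expression in \cref{def:state-exclusion-err-prob} for more than two states), the quantity $\sup_{X}\{\Tr X : X \leq \rho_i\}$ equals $\min_{(\Lambda_i)_i \in \operatorname{POVM}} \sum_i \Tr[\Lambda_i \rho_i]$, the optimal error probability of state exclusion over the ensemble $(\rho_i)_i$ with uniform-weighted summation. (This is the standard SDP duality between the "$Y \leq$ each state" primal and the POVM dual; one can cite the YKL-type result referenced in the proof of \cref{lem:hockey_stick_var}, suitably generalized to more than two operators.)

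Next, I would connect this to the induced Doeblin coefficient. Using $n = d_A$ input states chosen as the computational basis states $\psi_x = |x\rangle\!\langle x|$, we have $\cN(|x\rangle\!\langle x|) = \rho_x$, so
\begin{equation}
\alpha_I^{d_A \to d_A}(\cN) \leq \inf_{(\Lambda_x)_x \in \operatorname{POVM}} \sum_{x} \Tr[\Lambda_x \cN(|x\rangle\!\langle x|)] = \min_{(\Lambda_x)_x} \sum_x \Tr[\Lambda_x \rho_x] = \alpha(\cN),
\end{equation}
where the last equality is the exclusion characterization of \cref{lem:cq_channel_doeblin} just described. Since $\alpha_I(\cN) = \inf_{n,n'} \alpha_I^{n \to n'}(\cN) \leq \alpha_I^{d_A \to d_A}(\cN)$, this yields $\alpha_I(\cN) \leq \alpha(\cN)$. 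Combining with \cref{prop:induced-doeblin-larger-than-doeblin} gives equality.

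The main obstacle, such as it is, is making the SDP-duality step fully rigorous for the multivariate case: one must verify that $\sup_{X \in \Herm}\{\Tr X : X \leq \rho_i \ \forall i\}$ genuinely equals $\min_{(\Lambda_i)_i}\sum_i \Tr[\Lambda_i\rho_i]$ with strong duality holding. This follows from standard conic-duality arguments (the primal is strictly feasible, e.g. $X = -cI$ for large $c>0$, and the dual feasible set of POVMs is compact and nonempty), but it should be stated carefully — alternatively one can avoid invoking duality directly by noting that \cref{lem:cq_channel_doeblin}'s expression is manifestly the primal SDP whose dual is precisely the exclusion error probability, a fact already implicit in the surrounding discussion (e.g. \cref{rem:doeblin-exclusion-cq} and \cref{lem:hockey_stick_var}). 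Everything else is bookkeeping with the definitions.
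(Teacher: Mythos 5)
Your proof is correct, but it takes a genuinely different route from the paper's. The paper works on the primal side: it takes an optimizer $X$ of the conic program for $\alpha_I(\cN)$ from \cref{prop:induced-Doeblin}, uses the block-positivity condition \eqref{eq:block-positive-relation} with the choice $\ket{\alpha}=\ket{i'}$ to deduce $X\leq\rho_{i'}$ for every $i'$, and concludes via \cref{lem:cq_channel_doeblin} that $X$ is feasible for $\alpha(\cN)$; no duality is needed. You instead work on the dual/operational side, exhibiting the computational-basis preparation $\psi_x=|x\rangle\!\langle x|$ as a feasible strategy for $\alpha_I^{d_A\to d_A}(\cN)$ and matching its value to the exclusion characterization of $\alpha(\cN)$. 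Your approach has the advantage of identifying an explicit optimal encoding, at the cost of the intermediate identity $\sup_{X\in\Herm}\{\Tr X: X\leq\rho_i\ \forall i\}=\min_{(\Lambda_i)\in\operatorname{POVM}}\sum_i\Tr[\Lambda_i\rho_i]$. The "obstacle" you flag there is easily dispatched without a fresh Slater argument: specialize the already-proven dual SDP \eqref{eq:alpha-cN-dual} to $\Gamma^{\cN}=\sum_i|i\rangle\!\langle i|\otimes\rho_i$ and set $\Lambda_i\coloneqq(\langle i|_A\otimes I_B)Y_{AB}(|i\rangle_A\otimes I_B)$, so that $\Tr_A[Y]=I_B$ becomes $\sum_i\Lambda_i=I_B$ and the objective becomes $\sum_i\Tr[\Lambda_i\rho_i]$; conversely $Y=\sum_i|i\rangle\!\langle i|\otimes\Lambda_i$ is feasible for any POVM. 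With that observation your argument is complete and rigorous.
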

\begin{proof}
As \cref{prop:induced-doeblin-larger-than-doeblin} shows that $\alpha_{I}(\cN) \geq \alpha(\cN)$, it suffices to prove the other inequality. Let $X$ be an optimizer for $\alpha_{I}(\cN)$. We will show that it is feasible for $\alpha(\cN)$. By \cref{prop:induced-Doeblin} and \eqref{eq:block-positive-relation}, we have that
\begin{align}
    & \bra{\alpha}\bra{\beta}(\Gamma^{\cN} - I \otimes X) \ket{\alpha}\ket{\beta} \geq 0 \,  \qquad \forall \ket{\alpha}\ket{\beta} \notag \\
    \Leftrightarrow  \quad & \sum_{i} \vert \langle \alpha \vert i \rangle \vert^{2} \bra{\beta}\rho_{i}\ket{\beta} - \bra{\beta}X\ket{\beta} \geq 0 \, \qquad \forall \ket{\alpha}\ket{\beta} \ ,
\end{align}
where the equivalence uses the Choi representation of the classical-quantum channel. By considering the case $\ket{\alpha}=\ket{i'}$, we find that $\bra{\beta}\rho_{i'}\ket{\beta} \geq \bra{\beta}X\ket{\beta}$ for all $\ket{\beta}$. This is the condition for positive semidefiniteness, i.e., $\rho_{i'} \geq X$. As this holds for each $i'$, we have that $\rho_{i} \geq X$ for all $i$. By \cref{lem:cq_channel_doeblin}, this proves that $X$ is feasible for $\alpha(\cN)$. As $\alpha(\cN)$ involves a maximization, this completes the proof.
\end{proof}

\subsection{Measurement Channels}

A measurement channel is defined as follows for a measurement comprised of POVM elements~$(\Lambda_y)_{y\in \mathcal{Y}}$:
\begin{equation}
    \mathcal{M}(\rho) \coloneqq \sum_{y\in \mathcal{Y}} \operatorname{Tr}[\Lambda_y \rho] |y\rangle \! \langle y|,
    \label{eq:def-meas-channel}
\end{equation}
where $\{|y\rangle\}_{y\in\mathcal{Y}}$ is an orthonormal basis.

\begin{proposition}\label{prop:doeblin-of-measurement-channel}
    The following formula holds for the Doeblin coefficient of a measurement channel~$\mathcal{M}$, as defined in~\eqref{eq:def-meas-channel}:
    \begin{equation}
        \alpha (\mathcal{M}) =  \sum_{y\in \mathcal{Y}} \lambda_{\min}(\Lambda_y).
    \end{equation}
\end{proposition}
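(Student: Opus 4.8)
The plan is to use the dual SDP formula for the Doeblin coefficient (\cref{prop:dual-SDP-doeblin-main}) applied to the Choi operator of the measurement channel, exploiting the block-diagonal structure induced by the orthonormal output basis $\{|y\rangle\}_{y\in\mathcal{Y}}$. Alternatively, and perhaps more transparently, I would work directly from the primal definition~\eqref{eq:alpha_cN_def}.

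First I would compute the Choi operator. Since $\mathcal{M}(|i\rangle\!\langle j|) = \sum_{y} \langle j|\Lambda_y|i\rangle \, |y\rangle\!\langle y|$ is diagonal in the output basis for \emph{each} $i,j$, we get
\begin{equation}
    \Gamma^{\mathcal{M}}_{AB} = \sum_{i,j}|i\rangle\!\langle j|_A \otimes \sum_y \langle j|\Lambda_y|i\rangle \, |y\rangle\!\langle y|_B = \sum_{y\in\mathcal{Y}} \Lambda_y^T \otimes |y\rangle\!\langle y|_B ,
\end{equation}
using $\sum_{i,j}\langle j|\Lambda_y|i\rangle\,|i\rangle\!\langle j| = \Lambda_y^T$. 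Next, I would argue that in the primal problem $\sup_{X_B\in\operatorname{Herm}}\{\operatorname{Tr}[X_B]: I_A\otimes X_B \leq \Gamma^{\mathcal{M}}_{AB}\}$, we may restrict $X_B$ to be diagonal in the basis $\{|y\rangle\}$ without loss of generality: conjugating the constraint by the (completely positive, unital) pinching map $I_A\otimes \Delta_B$, where $\Delta_B(\cdot) = \sum_y |y\rangle\!\langle y|(\cdot)|y\rangle\!\langle y|$, preserves $\Gamma^{\mathcal{M}}_{AB}$ (it is already block-diagonal) and sends $I_A\otimes X_B$ to $I_A\otimes \Delta_B(X_B)$, while $\operatorname{Tr}[\Delta_B(X_B)] = \operatorname{Tr}[X_B]$. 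So write $X_B = \sum_y x_y |y\rangle\!\langle y|$ with $x_y\in\mathbb{R}$. The constraint $I_A\otimes X_B \leq \Gamma^{\mathcal{M}}_{AB}$ then decouples block-by-block into $x_y I_A \leq \Lambda_y^T$ for each $y$, i.e., $x_y \leq \lambda_{\min}(\Lambda_y^T) = \lambda_{\min}(\Lambda_y)$. Maximizing $\sum_y x_y$ subject to these independent constraints gives $\alpha(\mathcal{M}) = \sum_{y\in\mathcal{Y}}\lambda_{\min}(\Lambda_y)$, as claimed.

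The main obstacle — really the only nontrivial point — is justifying the reduction to diagonal $X_B$; everything else is routine linear algebra. One clean way to handle it rigorously: for arbitrary feasible Hermitian $X_B$, note $I_A\otimes X_B \leq \Gamma^{\mathcal{M}}_{AB}$ implies, after applying $\Delta_B$ to the $B$ system (a positive map), $I_A \otimes \Delta_B(X_B) \leq (\id_A\otimes\Delta_B)(\Gamma^{\mathcal{M}}_{AB}) = \Gamma^{\mathcal{M}}_{AB}$, since $\Delta_B$ fixes each $|y\rangle\!\langle y|$; and $\operatorname{Tr}[X_B] = \operatorname{Tr}[\Delta_B(X_B)]$, so replacing $X_B$ by its pinched version does not decrease the objective and keeps feasibility. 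Hence the supremum is attained (in value) over diagonal $X_B$. If one prefers to avoid this step entirely, one can instead verify directly that $\sum_y \lambda_{\min}(\Lambda_y)$ is both achievable (take $X_B = \sum_y \lambda_{\min}(\Lambda_y)|y\rangle\!\langle y|$ and check feasibility using the block structure of $\Gamma^{\mathcal{M}}_{AB}$) and optimal (test the dual constraint, or bound any feasible $\operatorname{Tr}[X_B]$ from above by pairing against the block-diagonal dual-feasible point $Y_{AB} = \sum_y \pi_A \otimes |y\rangle\!\langle y|$ when each $\Lambda_y\neq 0$, handling zero POVM elements separately). I would present the pinching argument as the cleanest route, since it also immediately reproves the relevant reduction used elsewhere for $\alpha_+$.
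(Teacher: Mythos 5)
Your proof is correct and follows essentially the same route as the paper: compute the Choi operator $\Gamma^{\mathcal{M}} = \sum_{y} \Lambda_y^T \otimes |y\rangle\!\langle y|$, use the pinching map $\Delta_B$ to reduce the primal SDP to diagonal $X_B$, and decouple the constraint block-by-block into $x_y I \leq \Lambda_y^T$. The paper's proof is the same pinching argument, so no comparison is needed.
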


\begin{proof}
One can check that the Choi operator of a measurement channel is as follows:
\begin{equation}
\Gamma^{\mathcal{M}} = \sum_{y\in \mathcal{Y}} \Lambda_y^T \otimes |y\rangle \! \langle y|.
\end{equation}
Thus, the Doeblin coefficient for such channels is as follows:
\begin{equation}
    \alpha (\mathcal{M}) = \sup_{X \in \operatorname{Herm}} \left\{\operatorname{Tr}[X] : I \otimes X \leq \sum_{y\in \mathcal{Y}} \Lambda_y^T \otimes |y\rangle \! \langle y| \right\}.
    \label{eq:doeblin-meas-ch}
\end{equation}
Defining $\Delta(\cdot) \coloneqq \sum_{y\in \mathcal{Y}} |y\rangle \! \langle y| (\cdot) |y\rangle \! \langle y| $ and given that $\operatorname{Tr}[X] = \operatorname{Tr}[\Delta(X)]$ and
\begin{equation}
    I \otimes X \leq \sum_{y\in \mathcal{Y}} \Lambda_y^T \otimes |y\rangle \! \langle y| \quad \implies \quad I \otimes \Delta(X) \leq \sum_{y\in \mathcal{Y}} \Lambda_y^T \otimes \Delta(|y\rangle \! \langle y|) = \sum_{y\in \mathcal{Y}} \Lambda_y^T \otimes |y\rangle \! \langle y|,
\end{equation}
it suffices to optimize over diagonal $X$ in~\eqref{eq:doeblin-meas-ch}. Then the SDP reduces to
\begin{align}
    \alpha (\mathcal{M}) & = \sup_{(\mu_y \in \mathbb{R})_{y\in \mathcal{Y}}} \left\{\sum_y \mu_y :  I \otimes \sum_{y\in \mathcal{Y}} \mu_y |y\rangle\!\langle y| \leq  \sum_{y\in \mathcal{Y}} \Lambda_y^T \otimes  |y\rangle\!\langle y| \right\} \\
    & = \sup_{(\mu_y \in \mathbb{R})_{y\in \mathcal{Y}}} \left\{\sum_y \mu_y : \mu_y I  \leq  \Lambda_y^T \quad \forall {y\in \mathcal{Y}} \right\} \\
    & = \sum_{y\in \mathcal{Y}} \lambda_{\min}(\Lambda_y^T) \\
    & = \sum_{y\in \mathcal{Y}} \lambda_{\min}(\Lambda_y),
\end{align}
in analogy with the classical Doeblin coefficient, as expressed in~\eqref{eq:minimum-likelihood-decoder}.
\end{proof}

The following shows that all of the considered quantum Doeblin coefficients are equal for measurement channels.
\begin{corollary}\label{cor:measurement-channel-Doeblin-collapse}
    For a measurement channel $\cM$,
\begin{equation}
\alpha_{+}(\cM) = \alpha_{\wang}(\cM) = \alpha(\cM) = \alpha_{I}(\cM).
\end{equation}
\end{corollary}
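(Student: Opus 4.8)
The strategy is to prove the four claimed equalities by combining three facts already at our disposal: (i) the general ordering chain $\alpha_{+}(\cM) \leq \alpha_{\wang}(\cM) \leq \alpha(\cM) \leq \alpha_{I}(\cM)$, which holds for every channel (the first two by the constraint-comparison arguments given after Definitions~\ref{def:b-doeblin-coef} and~\eqref{def:alpha_I-plus-doeblin}, and the last by \cref{prop:induced-doeblin-larger-than-doeblin}); (ii) the explicit value $\alpha(\cM) = \sum_{y} \lambda_{\min}(\Lambda_{y})$ from \cref{prop:doeblin-of-measurement-channel}; and (iii) the fact that a measurement channel is also a classical--quantum channel when read in the other direction is not quite what we want, so instead I will directly exhibit a feasible point for $\alpha_{+}(\cM)$ achieving the value $\sum_{y}\lambda_{\min}(\Lambda_y)$. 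Given the chain in (i), it suffices to show $\alpha_{I}(\cM) \leq \alpha_{+}(\cM)$, which collapses everything.

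First I would revisit the proof of \cref{prop:doeblin-of-measurement-channel}: it shows that an optimal $X$ for $\alpha(\cM)$ can be taken diagonal, of the form $X = \sum_{y} \mu_{y} |y\rangle\!\langle y|$ with $\mu_{y} = \lambda_{\min}(\Lambda_{y}) \geq 0$ (nonnegativity holds because each $\Lambda_{y}$ is a POVM element, hence positive semidefinite, so $\lambda_{\min}(\Lambda_y)\geq 0$). Therefore this optimal $X$ is in fact positive semidefinite, which means it is feasible for $\alpha_{+}(\cM)$ as defined in~\eqref{def:alpha-plus-doeblin}. Hence $\alpha_{+}(\cM) \geq \sum_{y}\lambda_{\min}(\Lambda_y) = \alpha(\cM)$, and combined with $\alpha_{+}(\cM)\leq \alpha(\cM)$ from the general ordering, we get $\alpha_{+}(\cM) = \alpha(\cM)$, and by the sandwich $\alpha_{+}(\cM) \leq \alpha_{\wang}(\cM)\leq\alpha(\cM)$ also $\alpha_{\wang}(\cM) = \alpha(\cM)$.

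For the remaining equality $\alpha_{I}(\cM) = \alpha(\cM)$, the cleanest route is to observe that a measurement channel $\cM$ can be written as $\cM = \cC \circ \cM'$ where... actually the simplest argument avoids composition entirely: use the induced-Doeblin lower bound $\alpha_{I}(\cM) \geq \alpha(\cM)$ from \cref{prop:induced-doeblin-larger-than-doeblin}, and for the reverse inequality invoke the exclusion-value interpretation~\eqref{eq:induced-doeblin} together with the fact that for a measurement channel every output state $\cM(\rho_x)$ is diagonal in the fixed basis $\{|y\rangle\}$, so the state-exclusion problem over the outputs reduces to a classical one; then the classical Doeblin coefficient of the induced channel equals $\sum_y \lambda_{\min}(\Lambda_y)$ by a direct optimization paralleling~\eqref{eq:minimum-likelihood-decoder}. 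Alternatively — and this is what I would actually write — note that \cref{prop:doeblin-of-measurement-channel} already exhibits $\alpha(\cM)$ in the classical form $\sum_y \lambda_{\min}(\Lambda_y)$, and since $\alpha_I(\cM)\le 1$ combined with the structure is not enough on its own, I will instead argue $\alpha_I(\cM)\le \alpha(\cM)$ via the dual conic program~\eqref{eqn:inducedDoeblinDual}: the optimizer $Y$ for the ordinary dual SDP~\eqref{eq:alpha-cN-dual} of a measurement channel can be chosen separable (indeed the CPU map $\cM'$ with $\Gamma^{\cM'}=Y$ associated to the diagonal-$X$ argument turns out to be entanglement-breaking), so it is feasible for~\eqref{eqn:inducedDoeblinDual}, giving $\alpha_I(\cM)\le \alpha(\cM)$.

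**Main obstacle.** The only genuinely delicate step is verifying $\alpha_I(\cM) \le \alpha(\cM)$: the inequality $\alpha_{I} \geq \alpha$ is free, but the reverse direction requires showing that the extra separability (equivalently, positive-map) constraint in the definition of $\alpha_I$ does not actually bind for measurement channels. I expect the cleanest justification is that the dual-optimal $Y_{AB}$ for the measurement channel — which from the structure of $\Gamma^{\cM}=\sum_y \Lambda_y^T\otimes|y\rangle\!\langle y|$ is block-diagonal in the $B$-system basis $\{|y\rangle\}$ — is automatically separable (a block-diagonal positive operator of the form $\sum_y P_y^{(A)}\otimes |y\rangle\!\langle y|$ is separable), hence feasible for the dual conic program~\eqref{eqn:inducedDoeblinDual} that computes $\alpha_I(\cM)$. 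Once that is in hand, all four quantities are pinned to $\sum_y\lambda_{\min}(\Lambda_y)$ and the corollary follows.
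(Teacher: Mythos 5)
Your proposal is correct in substance and matches the paper for three of the four equalities: both you and the paper obtain $\alpha_{+}(\cM)=\alpha(\cM)$ (and hence $\alpha_{\wang}(\cM)=\alpha(\cM)$ by sandwiching) from the observation that the optimizer $X=\sum_y \lambda_{\min}(\Lambda_y)|y\rangle\!\langle y|$ of \cref{prop:doeblin-of-measurement-channel} is positive semidefinite. Where you diverge is the key step $\alpha_I(\cM)\le\alpha(\cM)$. The paper works with the \emph{primal} conic program \eqref{eqn:inducedDoeblinPrimal}: it takes an optimizer $X$ of $\alpha_I(\cM)$, writes out the block-positivity condition $\sum_y \langle\alpha|\Lambda_y^T|\alpha\rangle\,|\langle\beta|y\rangle|^2 \ge \langle\beta|X|\beta\rangle$ for all product vectors, and chooses $|\beta\rangle=|y'\rangle$ with $|\alpha\rangle$ the minimal eigenvector of $\Lambda_{y'}^T$ to conclude $\langle y'|X|y'\rangle\le\lambda_{\min}(\Lambda_{y'})$, whence $\Tr[X]\le\sum_{y'}\lambda_{\min}(\Lambda_{y'})=\alpha(\cM)$. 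You instead argue on the \emph{dual} side \eqref{eqn:inducedDoeblinDual}, claiming the dual-optimal $Y$ for $\alpha(\cM)$ can be chosen of the form $\sum_y P_y\otimes|y\rangle\!\langle y|$ and is therefore separable, hence feasible for the dual conic program of $\alpha_I$. This works, but the block-diagonality is not automatic and you assert it rather than prove it: you need the pinching argument that replacing $Y$ by $\sum_y (I_A\otimes|y\rangle\!\langle y|)Y(I_A\otimes|y\rangle\!\langle y|)$ preserves positivity and the constraint $\Tr_A[Y]=I_B$, and leaves the objective $\Tr[Y\Gamma^{\cM}]$ unchanged because $\Gamma^{\cM}$ is already block-diagonal in the $\{|y\rangle\}$ basis. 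Once that one-line pinching step is supplied, your dual route is a clean and arguably more structural alternative to the paper's pointwise product-vector argument; the paper's version has the mild advantage of not invoking strong duality for the conic program. You should also trim the exploratory digressions (the abandoned composition idea and the sketched classical-exclusion reduction) and commit to a single argument in the final write-up.
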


\begin{proof}
    In general, $\alpha_{+}(\cM) \leq \alpha_{\wang}(\cM) \leq \alpha(\cM) \leq \alpha_{I}(\cM)$. The equality $\alpha_{+}(\cM) = \alpha(\cM)$ follows from \cref{prop:doeblin-of-measurement-channel}, showing that an optimizer $X$ of $\alpha(\cM)$ is positive semidefinite without loss of generality. Therefore all that remains is to show that $\alpha(\cM) \geq \alpha_{I}(\cM)$. To that end, let $X$ be an optimizer of $\alpha_{I}(\cM)$. By~\eqref{eqn:inducedDoeblinPrimal} and using $\Gamma^{\cM} = \sum_{y} \Lambda_{y}^{T} \otimes \dyad{y}$, consider that
    \begin{align}
        \bra{\alpha}\bra{\beta}(\Gamma^{\cM} - I \otimes X) \ket{\alpha}\ket{\beta}\geq 0 \, \forall \ket{\alpha}\ket{\beta} \Leftrightarrow \sum_{y} \bra{\alpha}\Lambda_{y}^{T}\ket{\alpha} \vert \langle \beta \vert y \rangle \vert^{2} - \bra{\beta}X\ket{\beta} \geq 0 \, \forall \ket{\alpha}\ket{\beta} \ .
    \end{align}
    As the above holds for an arbitrary tensor-product pure state $\ket{\alpha}\ket{\beta}$, by making the choice $\ket{\beta} = \ket{y'}$ and $\ket{\alpha}$ the minimum eigenvector of $\Lambda^{T}_{y'}$, we obtain $\lambda_{\min}(\Lambda^{T}_{y'}) \geq \bra{y'}X\ket{y'}$ for all $y' \in \cY$. Thus, using that the trace may be written as the sum over the diagonal in any basis, we have
    \begin{align}
        \alpha_{I}(\cM) = \Tr[X] = \sum_{y'} \bra{y'}X\ket{y'} \leq \sum_{y'}\lambda_{\min}(\Lambda^{T}_{y'}) = \alpha(\cM) \ , 
    \end{align}
    where the last equality follows from \cref{prop:doeblin-of-measurement-channel}. 
\end{proof}

\subsection{Entanglement-Breaking Channels}

An entanglement-breaking channel has the following form \cite{HSR03}:
\begin{equation}
    \mathcal{E}(\rho) \coloneqq \sum_x \operatorname{Tr}[\Lambda_x \rho] \sigma_x,
\end{equation}
where $(\Lambda_x)_x$ is a POVM and $(\sigma_x)_x$ is a tuple of states.

\begin{proposition}
    Let $\cN$ be an entanglement-breaking channel. Then for every channel $\cM$, we have that 
    \begin{equation}
        \alpha_{I,+}(\cN \otimes \cM) \leq \alpha_{I,+}(\cN) \alpha_{I,+}(\cM).
    \end{equation}
\end{proposition}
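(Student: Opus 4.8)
The plan is to reduce the submultiplicativity of $\alpha_{I,+}$ over the tensor product to a statement about feasible solutions of the relevant conic programs, exploiting the entanglement-breaking structure of $\cN$ exactly as in the proof of \cref{lem:suff-cond-for-multiplicativity} and in the analogous communication-value arguments of~\cite{chitambar2022communication,george2024cone}. First, I would record the conic-program (primal) representation of $\alpha_{I,+}$: it is the variant of~\eqref{eqn:inducedDoeblinPrimal_cone} in which the optimization over $X$ is restricted to $X\geq 0$ rather than $X\in\Herm$, and the cone is the separable cone (so that the block-positivity constraint $\Gamma^{\cN}-I_A\otimes X\in\Sep^\ast(A:B)$ is imposed). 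Equivalently, via \cref{prop:positive-to-block-positive}, the constraint reads $\cN-\cR^X\in\Pos$ with $X\geq 0$, i.e.\ $\cR^X\leq_P \cN$. Then I would take optimizers $X^{\cN}_{B_1}\geq 0$ for $\alpha_{I,+}(\cN)$ and $X^{\cM}_{B_2}\geq 0$ for $\alpha_{I,+}(\cM)$, and propose $X^{\cN}\otimes X^{\cM}\geq 0$ as a feasible point for $\alpha_{I,+}(\cN\otimes\cM)$, so that $\alpha_{I,+}(\cN\otimes\cM)\geq \Tr[X^{\cN}]\Tr[X^{\cM}]=\alpha_{I,+}(\cN)\alpha_{I,+}(\cM)$ would be the \emph{wrong} direction; instead, since the program is a supremum and I want an upper bound, I must work on the \emph{dual} side: take dual optimizers (separable operators $Y^{\cN}_{A_1B_1}$, $Y^{\cM}_{A_2B_2}$ with $\Tr_{A_i}Y=I_{B_i}$ for the dual in~\eqref{eqn:inducedDoeblinDual}, intersected with whatever extra positivity the $+$ restriction dualizes to), form $Y^{\cN}\otimes Y^{\cM}$, check it is separable on the cut $A_1A_2:B_1B_2$ (the tensor product of separable operators is separable), satisfies the partial-trace constraint $\Tr_{A_1A_2}[Y^{\cN}\otimes Y^{\cM}]=I_{B_1B_2}$, and evaluates the objective at $\Tr[(\Gamma^{\cN}\otimes\Gamma^{\cM})(Y^{\cN}\otimes Y^{\cM})]=\alpha_{I,+}(\cN)\alpha_{I,+}(\cM)$; since the dual is an infimum, this gives the claimed upper bound.

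The subtlety is that $\alpha_{I,+}$ carries two simultaneous restrictions relative to $\alpha_I$: the separable-cone constraint (which governs the ``$I$'' subscript) and the $X\geq 0$ constraint (the ``$+$'' subscript), and I would need to be careful about how the latter interacts with taking tensor products and with strong duality. The cleanest route, and the one I would actually pursue, is to avoid dualizing the $+$ restriction altogether and instead argue \emph{directly} on the primal representation combined with the entanglement-breaking hypothesis. Concretely: by \cref{prop:singlet-fraction-of-cone-induced-do} (applied to the cone-restricted and $+$-restricted setting, or by an elementary rederivation) $\alpha_{I,+}(\cN)$ equals the minimal singlet fraction over entanglement-breaking decoders with the extra constraint coming from the $+$ restriction; but more to the point, the entanglement-breaking assumption on $\cN$ lets me write $\cN=\cN'\circ\cM_{\mathrm{meas}}$ as a measure-and-prepare channel. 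Then I would use that for a classical (measurement-type) factor the $+$ and non-$+$ coefficients already collapse (\cref{cor:measurement-channel-Doeblin-collapse}), and that $\alpha_{I,+}$ is multiplicative-or-submultiplicative across a classical--quantum decomposition, mirroring the reasoning that makes the communication value of entanglement-breaking channels well-behaved.

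Assembling this: I would (i) fix dual-feasible $Y^{\cN}, Y^{\cM}$ in the appropriate (separable, $+$-compatible) cone achieving the respective optimal values — invoking strong duality, which holds since $I_{AB}$ lies in the relative interior of the separable cone by~\cite{gurvits2002largest} (cf.\ \cite[Corollary~1]{george2024cone}), and the analogous interior condition survives the $+$ restriction; (ii) verify $Y^{\cN}\otimes Y^{\cM}$ is feasible for the dual of $\alpha_{I,+}(\cN\otimes\cM)$ — separability of the tensor product is immediate from the definition of the separable cone, the partial-trace constraint factorizes, and the extra $+$-dual constraint is preserved under tensor products because positivity is; (iii) compute the objective value, using $\Gamma^{\cN\otimes\cM}=\Gamma^{\cN}\otimes\Gamma^{\cM}$ and multiplicativity of the trace over tensor factors, obtaining exactly $\alpha_{I,+}(\cN)\,\alpha_{I,+}(\cM)$; (iv) conclude $\alpha_{I,+}(\cN\otimes\cM)\leq\alpha_{I,+}(\cN)\alpha_{I,+}(\cM)$ since the dual is an infimum. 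The main obstacle I anticipate is pinning down precisely what cone/constraint the dual of the $X\geq 0$ restriction produces and confirming it is closed under tensor products together with separability — if that dualization is awkward, the fallback is to route entirely through the entanglement-breaking (measure-and-prepare) decomposition of $\cN$ and the collapse result \cref{cor:measurement-channel-Doeblin-collapse}, reducing to a case where all four coefficients coincide and submultiplicativity is already available.
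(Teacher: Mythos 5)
Your proposal diverges from the paper's proof in a way that exposes an inconsistency in the paper itself: the proof the authors give establishes the \emph{reverse} inequality, $\alpha_{I,+}(\cN)\,\alpha_{I,+}(\cM)\leq\alpha_{I,+}(\cN\otimes\cM)$, i.e.\ supermultiplicativity, and it does so via precisely the primal tensor-product argument that you set aside as ``the wrong direction.'' Concretely, the authors take primal optimizers $X_{\cN},X_{\cM}\geq 0$ and show that $X_{\cN}\otimes X_{\cM}$ is feasible for the primal of $\alpha_{I,+}(\cN\otimes\cM)$; since that primal is a supremum, this yields a lower bound on the joint coefficient. The entire content of the proof is the verification that $X_{\cN}\otimes X_{\cM}\leq(\cN\otimes\cM)(\rho_{AA'})$ for \emph{every} (possibly entangled) input $\rho_{AA'}$: writing $\cN$ in measure-and-prepare form $\cN(\rho_A)=\sum_x\Tr[\Lambda_x\rho_A]\sigma_x$, one gets $(\cN\otimes\cM)(\rho_{AA'})=\sum_x\Tr[\Lambda_x\rho_A]\,\sigma_x\otimes\cM(\omega_x)$ for conditional states $\omega_x$, then lower-bounds $\cM(\omega_x)\geq X_{\cM}$ and $\cN(\rho_A)\geq X_{\cN}$ and chains the operator inequalities through the tensor product. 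Both hypotheses of the proposition are load-bearing here: the entanglement-breaking structure of $\cN$ is what lets the joint output decompose into a positive combination of product terms on entangled inputs, and the positivity restriction ($+$) on $X_{\cN}$ and $X_{\cM}$ is what allows the tensor-product operator inequalities to be chained (cf.\ \cref{prop:tensor-operator-dom}). Given that the stated inequality renders both hypotheses and the entire proof irrelevant, the direction in the proposition statement is almost certainly a typo, and the intended result is the supermultiplicativity your first instinct would have delivered.

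Your dual-side sketch is also not a complete proof of the printed inequality. You never derive the dual of the $+$-restricted conic program (the $X\geq 0$ constraint relaxes the dual equality $\Tr_A[Y]=I_B$ to $\Tr_A[Y]\geq I_B$), you defer the question of whether strong duality and dual attainment survive the restriction, and---tellingly---nothing in that route uses the entanglement-breaking hypothesis, so if it worked it would prove submultiplicativity for arbitrary $\cN$. Your fallback via a measure-and-prepare decomposition and \cref{cor:measurement-channel-Doeblin-collapse} gestures in the right direction but is not carried out; the collapse result is for measurement channels, not general entanglement-breaking ones, and the actual argument needed is the explicit operator-inequality chain above rather than a reduction to a case where the coefficients coincide. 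To repair the proposal, return to the primal feasibility of $X_{\cN}\otimes X_{\cM}$ and supply the decomposition of $(\cN\otimes\cM)(\rho_{AA'})$ on entangled inputs; that is the missing idea.
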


\begin{proof}
     Let $\cN: \cL(A) \to \cL(B)$ be an entanglement-breaking channel. Then, 
\begin{equation}
    \cN(\rho_A)= \sum_x \Tr[ \Lambda_x \rho_A] \sigma_x.
\end{equation}
Also, $\cM: \cL(A') \to \cL(B')$.
Considering $\alpha_{I,+}(\cN \otimes \cM)$ in~\cref{def:alpha_I-plus-doeblin}, a feasible solution to the primal $X_{\cN \otimes \cM}$ needs to satisfy the following  for every state $\rho_{AA'}$
\begin{equation}
    X_{\cN \otimes \cM} \leq \sum_x \Tr[ \Lambda_x \rho_A] \sigma_x \otimes \cM(\omega_x),
\end{equation}
where 
\begin{equation}
    \omega_x \coloneqq \frac{\Tr_A \!\left[(\Lambda_x \otimes I_{A'}) \rho_{AA'} \right]}{\Tr[\Lambda_x \rho_A]}
\end{equation}
Choose an optimizer $X_\cN \geq 0$ of the primal of $\alpha_{I,+}(\cN)$ and $X_\cM \geq 0$ for $\alpha_{I,+}(\cM)$. Thus, they satisfy, for all states $\rho_A$ and $\omega_{A'}$,
\begin{equation}
    X_\cN \leq \cN(\rho_A), \qquad X_{\cM} \leq \cM(\omega_{A'}).
\end{equation}
Then, consider that
\begin{align}
    \sum_x \Tr[ \Lambda_x \rho_{A}] \sigma_x\otimes \cM(\omega_x) & \geq  \sum_x \Tr[ \Lambda_x \rho_{A}] \sigma_x\otimes X_{\cM} \\
    & \geq X_\cN \otimes X_\cM.
\end{align}
With this, we have that $X_\cN \otimes X_\cM \leq (\cN \otimes \cM) (\rho_{AA'})$ for all $\rho_{AA'}$, leading to $\Tr[X_{\cN} \otimes X_{\cM}] \leq \alpha_{I,+}(\cN \otimes \cM)$, so as to arrive at 
\begin{equation}
    \alpha_{I,+}(\cN)  \alpha_{I,+}(\cM)  \leq  \alpha_{I,+}(\cN \otimes \cM),
\end{equation}
by recalling that $X_\cN$ and $X_\cM$ are optimizers for the primal of $\alpha_{I,+}(\cN)$ and $\alpha_{I,+}(\cM)$.
\end{proof}

\subsection{Generalized Dephasing Channels}

Define the generalized qudit dephasing channel $  \cD$ as follows:
\begin{equation}
    \cD(\rho)\coloneqq   \sum_{i=0}^{d-1} \langle i| \rho| j \rangle  |i \rangle\! \langle j| \langle\psi_j | \psi_i\rangle, 
\end{equation}
where $(|\psi_i\rangle)_i$ is a tuple of pure quantum states. An isometric extension of this channel is specified by $|i\rangle \to |i\rangle |\psi_i\rangle$.
Then the Choi operator of the generalized dephasing channel is as follows: 
\begin{equation}
\label{eq:Choi_dephasing}
    \Gamma^{  \cD} =  \sum_{i , j=1}^d \langle\psi_j | \psi_i\rangle |i\rangle\!\langle j| \otimes |i\rangle\!\langle j|   .
\end{equation}

\begin{proposition}[Doeblin Coefficients for the Generalized Dephasing Channel]
For  a generalized dephasing channel $\cD$, the following equalities hold:
\begin{align}
    \alpha(\cD) &=0, \label{eq:dephasing_alpha} \\
    \alpha_{I}(\cD) &= 0.
\end{align}
\end{proposition}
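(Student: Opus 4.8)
Since we already have $\alpha_I(\cD) \geq \alpha(\cD) \geq 0$ by \cref{prop:induced-doeblin-larger-than-doeblin} and \cref{prop:q-doeblin-normalization}, it suffices to prove the two upper bounds $\alpha(\cD) \leq 0$ and $\alpha_I(\cD) \leq 0$. The plan is to exhibit explicit dual-feasible points for each quantity that achieve objective value $0$. For $\alpha(\cD)$, I would use the dual SDP from \cref{prop:dual-SDP-doeblin-main}, namely
\begin{equation}
    \alpha(\cD) = \inf_{Y_{AB} \geq 0}\left\{\Tr[Y_{AB}\Gamma^{\cD}_{AB}] : \Tr_A[Y_{AB}] = I_B\right\},
\end{equation}
and pick $Y_{AB}$ to be the Choi operator of the completely dephasing (pinching) channel $\Delta(\cdot) = \sum_i |i\rangle\!\langle i|(\cdot)|i\rangle\!\langle i|$ in the standard basis, i.e.\ $Y_{AB} = \sum_i |i\rangle\!\langle i|_A \otimes |i\rangle\!\langle i|_B$. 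This is positive semidefinite, is the Choi operator of a CPU map so $\Tr_A[Y_{AB}] = I_B$, and using the expression~\eqref{eq:Choi_dephasing} for $\Gamma^{\cD}$ one computes $\Tr[Y_{AB}\Gamma^{\cD}_{AB}] = \sum_{i,j}\langle\psi_j|\psi_i\rangle \langle i|\,|i\rangle\!\langle j|\,|j\rangle \cdot \langle i|\,|i\rangle\!\langle j|\,|j\rangle$, which collapses to a sum over $i = j$ only of terms $\langle\psi_i|\psi_i\rangle\cdot 0 = 0$ whenever $i \neq j$ — more carefully, the only surviving terms require $|i\rangle\!\langle j|$ acting on the off-diagonal of $Y$, and since $Y$ is diagonal in both systems the overlap vanishes unless $i=j$, giving value $0$. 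Combined with $\alpha(\cD) \geq 0$, this yields $\alpha(\cD) = 0$.

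For $\alpha_I(\cD)$, I would work with the dual conic program~\eqref{eqn:inducedDoeblinDual}, which is identical to the SDP dual for $\alpha(\cD)$ except that $Y_{AB}$ is now required to lie in the separable cone $\Sep(A:B)$. The key observation is that the same choice $Y_{AB} = \sum_i |i\rangle\!\langle i|_A \otimes |i\rangle\!\langle i|_B$ is manifestly separable (it is a sum of tensor products of positive operators), still satisfies $\Tr_A[Y_{AB}] = I_B$, and still gives objective value $0$ by the computation above. Hence $\alpha_I(\cD) \leq 0$, and with the lower bound $\alpha_I(\cD) \geq 0$ we get $\alpha_I(\cD) = 0$. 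In fact the second statement follows immediately from the first once one notes $\alpha_I(\cD) \geq \alpha(\cD) = 0$ gives the lower bound while $\alpha_I(\cD) \leq \alpha_I(\cN)$ needs an upper bound — so the cleanest route is: prove $\alpha_I(\cD) \leq 0$ directly via the separable dual point above, and then $0 \leq \alpha(\cD) \leq \alpha_I(\cD) \leq 0$ forces both to vanish.

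The main obstacle is simply bookkeeping the Choi-operator contraction correctly: one must be careful that $\Tr[Y_{AB}\Gamma^{\cD}_{AB}]$ with $\Gamma^{\cD}_{AB} = \sum_{i,j}\langle\psi_j|\psi_i\rangle\,|i\rangle\!\langle j|_A \otimes |i\rangle\!\langle j|_B$ and $Y_{AB}$ diagonal in the product basis picks out only the $i = j$ diagonal, where $\langle\psi_i|\psi_i\rangle = 1$, and then check that these diagonal contributions also cancel or vanish — actually $\Tr[(|i\rangle\!\langle i|_A\otimes|i\rangle\!\langle i|_B)(|i\rangle\!\langle i|_A\otimes|i\rangle\!\langle i|_B)] = 1 \neq 0$, so one should instead verify the off-diagonal structure more carefully, or alternatively observe that $\Gamma^{\cD}$ has nonzero diagonal entries but the correct dual point is not the pinching Choi operator. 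An alternative and perhaps safer approach is to invoke the saturation condition in \cref{prop:q-doeblin-normalization}: $\alpha(\cD) = 0$ iff $\Gamma^{\cD}$ is orthogonal (under Hilbert--Schmidt) to some CPU map's Choi operator; one then exhibits such a CPU map. I would present whichever of these makes the cancellation transparent, likely by choosing the dual variable to be the Choi operator of a depolarizing-type CPU map that annihilates the support structure of $\Gamma^{\cD}$, and I expect verifying that precise orthogonality to be the one genuinely computational step.
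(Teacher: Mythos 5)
There is a genuine gap: the explicit dual certificate you construct does not work, and you never replace it with one that does. With $\Gamma^{\cD}_{AB}=\sum_{i,j}\langle\psi_j|\psi_i\rangle\,|i\rangle\!\langle j|_A\otimes|i\rangle\!\langle j|_B$ and your choice $Y_{AB}=\sum_i|i\rangle\!\langle i|_A\otimes|i\rangle\!\langle i|_B$, the objective is $\Tr[Y_{AB}\Gamma^{\cD}_{AB}]=\sum_i\langle\psi_i|\psi_i\rangle=d$, not $0$ — the pinching Choi operator is supported exactly on $\operatorname{span}\{|ii\rangle\}$, which is where $\Gamma^{\cD}$ lives, so it is the worst possible choice rather than an orthogonal one. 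You notice this in your final paragraph but then only gesture at a fix ("a depolarizing-type CPU map"); the obvious candidate $Y=\tfrac{1}{d}I_{AB}$ gives objective value $1$, so that suggestion also fails. The route is salvageable: since any feasible $Y\geq 0$ with $\Tr[Y\Gamma^{\cD}]=0$ must be supported on the orthogonal complement of $\operatorname{span}\{|ii\rangle\}$, the correct certificate (for $d\geq 2$) is $Y_{AB}=\tfrac{1}{d-1}\sum_{i\neq j}|ij\rangle\!\langle ij|$, which satisfies $\Tr_A[Y]=I_B$, is diagonal in the product basis and hence separable (so it certifies $\alpha_I(\cD)\leq 0$ as well via~\eqref{eqn:inducedDoeblinDual}), and has $\langle ij|\Gamma^{\cD}|ij\rangle=0$ for $i\neq j$. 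But this is precisely the "one genuinely computational step" you defer, so as written the proof is incomplete.

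For comparison, the paper avoids dual certificates entirely: it observes that $\cD(\dyad{i})=\dyad{i}$ for basis states, so $\Vert\cD(\dyad{0})-\cD(\dyad{1})\Vert_1=\Vert\dyad{0}-\dyad{1}\Vert_1$ and hence $\eta_{\Tr}(\cD)=1$; combined with $\eta_{\Tr}(\cD)\leq 1-\alpha(\cD)$ from \cref{prop:trace_distance_complete_cont_Doeblin_bound} and $\eta_{\Tr}(\cD)\leq 1-\alpha_I(\cD)$ from~\eqref{eq:trace-distance-CC-to-induced-doeblin}, both coefficients must vanish. That argument is a two-line consequence of results already established and sidesteps the support analysis your approach requires; if you complete your dual-certificate route it does buy something extra, namely an explicit optimal dual solution, but you have to actually exhibit it.
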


\begin{proof}
The equality $\eta_{\Tr}(\cD) = 1$ holds because
\begin{equation}
    \Vert \cD(\dyad{0}) - \cD(\dyad{1}) \Vert_{1} = \Vert \dyad{0} - \dyad{1} \Vert_{1}. 
\end{equation}
As $\eta_{\Tr}(\cD) \leq 1 - \alpha(\cD)$ (see \eqref{eq:contraction-coeff-to-complete} and \cref{prop:trace_distance_complete_cont_Doeblin_bound}), we  conclude that $\alpha(\cD) = 0$. Furthermore, by~\eqref{eq:trace-distance-CC-to-induced-doeblin}, we conclude that $\alpha_I(\cD)=0$.
\end{proof}

\subsection{Qubit Channels}

In this section, our main purpose is to analyze Doeblin coefficients and contraction and expansion coefficients for the trace distance for qubit channels.
Before doing so, let us first analyse conditions such that the Doeblin coefficient is non-trivial (i.e., $\alpha(\cN) >0$) for general channels. Then we study the conditions such that $\alpha(\cN) >0$ for qubit channels.

We begin by noting whenever the Choi matrix is full rank, the quantum Doeblin coefficient~$\alpha(\mathcal{N})$ is strictly greater than zero; i.e., this guarantees that the channel is contractive.
\begin{proposition} Let $\cN\colon \cL(\cH_A) \to \cL(\cH_B)$ be a quantum channel. 
    If $\Gamma^{\cN}>0$, then $\alpha(\cN) \geq d_{B}\lambda_{\min}(\Gamma^{\cN}) > 0$. 
\end{proposition}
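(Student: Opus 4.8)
The plan is to exhibit a single feasible point in the primal SDP~\eqref{eq:alpha_cN_def} defining $\alpha(\cN)$ and simply read off the claimed bound from its objective value. Since $\Gamma^{\cN}_{AB} > 0$ by hypothesis, its minimum eigenvalue $\lambda_{\min}(\Gamma^{\cN})$ is strictly positive, and the spectral theorem gives the operator inequality $\lambda_{\min}(\Gamma^{\cN})\, I_{AB} \leq \Gamma^{\cN}_{AB}$.

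Now take $X_B \coloneqq \lambda_{\min}(\Gamma^{\cN})\, I_B$, which is Hermitian. Then $I_A \otimes X_B = \lambda_{\min}(\Gamma^{\cN})\, I_{AB} \leq \Gamma^{\cN}_{AB}$, so $X_B$ is a feasible choice in~\eqref{eq:alpha_cN_def}. Its objective value is $\operatorname{Tr}[X_B] = \lambda_{\min}(\Gamma^{\cN})\, \operatorname{Tr}[I_B] \geq \lambda_{\min}(\Gamma^{\cN})$, using $\operatorname{Tr}[I_B] \geq 1$. Since the defining program is a supremum over feasible points, we conclude $\alpha(\cN) \geq \lambda_{\min}(\Gamma^{\cN}) > 0$, which is exactly the claim.

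There is essentially no obstacle here: the only nontrivial step is the spectral inequality $\lambda_{\min}(A)\, I \leq A$ for $A \geq 0$, which is immediate from diagonalizing $A$. I would also remark that the same argument actually yields the slightly stronger bound $\alpha(\cN) \geq d_B \cdot \lambda_{\min}(\Gamma^{\cN})$, where $d_B$ is the output dimension, although the weaker stated form already suffices for the intended conclusion that $\cN$ is contractive (i.e., $\alpha(\cN) > 0$) whenever its Choi operator is full rank.
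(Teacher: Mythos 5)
Your proof is correct and follows essentially the same route as the paper: both exhibit the feasible point $X_B = \lambda_{\min}(\Gamma^{\cN})\, I_B$ in the primal SDP and invoke the spectral inequality $\lambda_{\min}(\Gamma^{\cN})\, I_{AB} \leq \Gamma^{\cN}_{AB}$. Your closing remark that the objective value is actually $d_B \cdot \lambda_{\min}(\Gamma^{\cN})$ is accurate and in fact slightly more careful than the paper's own bookkeeping, which records the value as $\lambda_{\min}(\Gamma^{\cN})$.
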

\begin{proof}
    If $\Gamma^{\cN} > 0$, then $t \coloneqq  \lambda_{\min}(\Gamma^{\cN}) > 0$. Then letting $X = t I_B$, we have for all $\ket{\omega} \in A \otimes B$, $
      0 <  \bra{\omega}I_A \otimes X\ket{\omega} = t = \lambda_{\min}(\Gamma^{\cN}) \leq \bra{\omega} \Gamma^{\cN}\ket{\omega}$. Thus, $X$ is feasible for the optimization in~\eqref{eq:alpha_cN_def} and has $\Tr[X] = d_{B}\lambda_{\min}(\Gamma^{\cN})$. As~\eqref{eq:alpha_cN_def} involves a supremization, this completes the proof.
\end{proof}
 One may take being full rank as a qualitative sign that the channel disturbs a quantum state strongly. A natural question is if, with more structure, more can be said about when the quantum Doeblin coefficient is non-trivial. Indeed, we are able to establish that for qubit channels, the quantum Doeblin coefficient is `faithful' in the sense that it is non-zero if and only if the trace-distance contraction coefficient is not equal to one.
 
 \begin{theorem}
 \label{thm:qubit-faithfulness}
For a qubit channel $\cN$, the following are equivalent:
\begin{itemize}
    \item $\alpha(\cN)>0$, 
    \item $\eta_{\Tr}(\cN)<1$. 
\end{itemize}
\end{theorem}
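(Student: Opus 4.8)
The plan is to prove the two directions separately. The forward direction, $\alpha(\cN)>0 \implies \eta_{\Tr}(\cN)<1$, is immediate from \cref{prop:trace_distance_complete_cont_Doeblin_bound}: since $\eta_{\Tr}(\cN)\leq\eta^c_{\Tr}(\cN)\leq 1-\alpha(\cN)$, any strictly positive $\alpha(\cN)$ forces $\eta_{\Tr}(\cN)<1$. This holds for all channels, not just qubit channels, so no special structure is needed here. The real content is the contrapositive of the reverse direction: if $\alpha(\cN)=0$, then $\eta_{\Tr}(\cN)=1$, and this is where I expect to use the qubit hypothesis essentially.

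For the reverse direction, I would first record what $\alpha(\cN)=0$ means via the dual SDP \eqref{eq:alpha-cN-dual}: $\min_{\cE\in\mathrm{CPU}}\Tr[\Gamma^{\cN}\Gamma^{\cE}]=0$, equivalently (by \cref{thm:doeblin_antiDistinguish}) there is a CPTP decoder $\cD$ with $\Tr[\Phi_d\,(\mathrm{id}\otimes(\cD\circ\cN))(\Phi_d)]=0$, i.e.\ the output of the Choi state can be rotated to be orthogonal to the maximally entangled state. Equivalently, via the primal, there is \emph{no} nonzero $X\geq$ some negative floor with $I_A\otimes X\leq\Gamma^{\cN}$ and $\Tr[X]>0$; more usefully, the supremum over Hermitian $X$ with $I\otimes X\leq\Gamma^\cN$ has value $0$. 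The strategy is to turn this algebraic/spectral fact about the Choi operator of a qubit channel into the existence of two orthogonal pure input states $\psi\perp\phi$ whose outputs $\cN(\psi),\cN(\phi)$ remain perfectly distinguishable, i.e.\ have orthogonal supports, which by \eqref{eq:contraction_E_gamma_reduction} with $\gamma=1$ gives $\eta_{\Tr}(\cN)=\eta_1(\cN)=\sup_{\psi\perp\phi}T(\cN(\psi),\cN(\phi))=1$.

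The heart of the argument is a case analysis on the output dimension of the qubit channel. If $\cN$ is a qubit-to-qubit channel, the Choi operator $\Gamma^{\cN}$ is a $4\times 4$ positive operator with $\Tr_A[\Gamma^\cN]=I_B$. I would argue that $\alpha(\cN)=0$ forces $\Gamma^{\cN}$ to be supported in a proper subspace in a way that exhibits a Bloch-sphere antipodal pair $\psi,\phi=\psi^\perp$ with $\cN(\psi)\perp\cN(\phi)$: concretely, $\alpha(\cN)=0$ should force $\min_{\rho}\lambda_{\min}(\cN(\rho))=0$ along with a compatible structure, and for qubits the only way the contraction coefficient can fail to be $1$ is if some pure output $\cN(\psi)$ is a pure state whose antipode maps to the orthogonal pure state (this is essentially the classification of unital/extreme qubit maps and the fact that $\eta_{\Tr}(\cN)=1$ iff $\cN$ maps some antipodal pair of pure states to an antipodal pair — cf.\ the dephasing example \eqref{eq:dephasing_alpha}). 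For output dimension $1$ the channel is a replacer and $\alpha(\cN)=1\neq 0$, so that case is vacuous; for input dimension $2$ and output dimension $>2$, I would reduce to the image subspace or invoke that $\eta_{\Tr}$ and $\alpha$ are invariant under appending an isometry (\cref{prop:DPI-and-LU-invariance}). The main obstacle is precisely this step: showing that for a qubit \emph{input} channel, $\alpha(\cN)=0$ is not merely ``the Choi operator is rank-deficient in some direction'' but actually forces a pair of orthogonal pure inputs with orthogonal pure outputs. I expect one needs the qubit-specific fact that the set of output states $\{\cN(\rho):\rho\in\cD\}$ is an ellipsoid inside the Bloch ball, and that $\alpha(\cN)=0$ (no common lower bound $X$ with positive trace across all outputs, i.e.\ the ellipsoid touches the boundary sphere in an ``antipodally matched'' way) is equivalent to this ellipsoid having two antipodal contact points with the Bloch sphere; $\eta_{\Tr}(\cN)<1$ is equivalent to the ellipsoid being strictly inside, or touching the sphere but without such an antipodal pair of contact points. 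Making this dictionary precise — likely by explicitly parameterizing the Choi operator and reducing the SDP \eqref{eq:alpha_cN_def} to a $2\times 2$ or scalar condition — is the one genuinely technical computation; everything else is assembling the already-established inequalities \eqref{eq:contraction-coeff-to-complete}, \cref{prop:trace_distance_complete_cont_Doeblin_bound}, and \eqref{eq:contraction_E_gamma_reduction}.
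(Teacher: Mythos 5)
Your forward direction ($\alpha(\cN)>0\Rightarrow\eta_{\Tr}(\cN)<1$) is correct and is the same trivial step the paper uses. The gap is in the reverse direction, and it sits exactly where you flagged ``the one genuinely technical computation'': the dictionary you propose to make precise is a dictionary for the wrong quantity. The conditions you invoke --- ``no common lower bound $X$ with positive trace across all outputs,'' the claim that $\alpha(\cN)=0$ forces $\min_\rho\lambda_{\min}(\cN(\rho))=0$, and the picture of the output ellipsoid touching the Bloch sphere at antipodal contact points --- are all statements about the ordering $\cR^X\leq_P\cN$, i.e., about the induced coefficient $\alpha_I(\cN)$ (block-positivity of $\Gamma^{\cN}-I\otimes X$), not about $\alpha(\cN)$, which requires $\Gamma^{\cN}-I\otimes X\geq 0$ on the full two-qubit space. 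Since $\alpha(\cN)\leq\alpha_I(\cN)$ always, and the paper in fact shows $\alpha_I(\cN)=1-\eta_{\Tr}(\cN)$ exactly for qubit channels, the entire nontrivial content of the theorem is that passing from block-positivity to positive semidefiniteness does not change the vanishing locus for qubit channels; your plan assumes this crux. Proving ``ellipsoid strictly inside, or no antipodal contact, implies positivity'' via output states alone only certifies $\alpha_I(\cN)>0$. Likewise, the natural proof of your intermediate claim ($\min_\rho\lambda_{\min}(\cN(\rho))=c>0$ makes $X=cI$ feasible) again only certifies $\alpha_I>0$, so using that claim as a stepping stone is circular.

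The paper closes this gap by explicit parameterization rather than by the output-ellipsoid geometry: after unitary pre- and post-processing (justified by \cref{prop:DPI-and-LU-invariance}), it writes the dual SDP \eqref{eq:alpha-cN-dual} as an optimization over Choi operators of completely positive \emph{unital} qubit maps expanded in the Pauli basis, observes that only the diagonal correlation entries survive the trace against $\Gamma^{\cN'}$, and thereby reduces $\alpha(\cN')$ to $1+\min_v\langle\vec{t}',v\rangle$ over the four extreme points $v$ of the tetrahedron of CP unital qubit maps (\cref{lem:qubit-stokes-nec-and-suff}). Then $\alpha(\cN)=0$ iff $\vec{t}'$ lies on a proper face of that tetrahedron, which is matched against $\eta_{\Tr}(\cN)=\Vert T\Vert_\infty=1$ via \cref{Lem:cc-tr-qubit}. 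To rescue your route you would need precisely this kind of explicit computation of the SDP \eqref{eq:alpha_cN_def} on the two-qubit Choi operator; the antipodal-contact picture of the image ellipsoid does not supply it.
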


To establish \cref{thm:qubit-faithfulness}, we will rely on a variety of parameterizations of qubit channels, which are reviewed in \cref{sec:qubit-channel-parameterizations} for clarity. Of particular import will be the Stokes parameterization, which tells us that any qubit channel can be expressed as an affine
transformation on the Bloch sphere according to $w_0 I +w\cdot\vec{\sigma} \rightarrow w_0 I+(t+Tw)\cdot\vec{\sigma}$.
The proof of \cref{thm:qubit-faithfulness} breaks into two steps. First, we find necessary and sufficient conditions for the quantum Doeblin coefficient to be strictly non-zero in terms of the Stokes parameterization of the channel (\cref{lem:qubit-stokes-nec-and-suff}). Second, we establish necessary and sufficient conditions for the trace distance contraction coefficient to be strictly less than one in terms of the Stokes parameterization, by slightly strengthening a result of~\cite{Hiai15} (\cref{Lem:cc-tr-qubit}). Lastly, we establish these two claims in fact select for the same conditions on the Stokes parameters, which will complete the proof of \cref{thm:qubit-faithfulness}. A key idea in both lemmas is to realize that the `non-unital part' of a qubit channel is irrelevant for determining the  value.

\cref{lem:qubit-stokes-nec-and-suff} below establishes necessary and sufficient conditions for $\alpha(\cN) > 0$. While in effect purely mathematical, we can give a physical explanation for what the proof shows. The proof is made up of two reductions. The first reduction shows that only the `unital part' of the qubit channel $\cN$ is relevant for computing $\alpha(\cN)$. What we mean by this is that, under some pre- and post-processing, the linear shift of the Paulis that a qubit channel can realize does not affect the value of the contraction coefficient. The second reduction in effect shows that the `unital part' of the channel is contractive if and only if it is a non-trivial mixture of all four Pauli channels, i.e., $\Phi(\cdot) = p_{0}\operatorname{id}(\cdot) + p_{1}X \cdot X^{\dagger} + p_{2}Y \cdot Y^{\dagger} + p_{3}Z \cdot Z^{\dagger}$ where $p_{0},p_{1},p_{2},p_{3} > 0$. We stress that this is not the same as saying that a unital channel is contractive if and only if it is a Pauli channel with all four Paulis, because the first step in the reduction involves pre- and post-processing to convert the channel to have a specific form in the Stokes representation.

\begin{lemma}
\label{lem:qubit-stokes-nec-and-suff}
    Let $\cN$ be a qubit channel given by the transformation $w_0 I +w\cdot\vec{\sigma} \rightarrow w_0 I+(t+Tw)\cdot\vec{\sigma}$. Let $\cN'$ be the unitary pre- and post-processed form of $\cN$ such that
    \begin{align}\label{eq:processed-qubit-channel}
        \Gamma^{\cN'} = \frac{1}{2}\left[I_A \otimes I_B + I_{A} \otimes \vecp{r} \cdot \vec{\sigma} + \sum_{k} t_{k}' \sigma_{k} \otimes \sigma_{k}\right] \ .
    \end{align}
    Then, $\alpha(\cN) > 0$ if and only if $\vec{t}'$ is a full support convex combination of the set 
    \begin{equation}
     \{(1,1,1), (1,-1,-1),(-1,1,-1),(-1,-1,1)\} \ .
    \end{equation}
\end{lemma}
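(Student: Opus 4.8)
The plan is to work entirely with the Stokes/Choi representation in~\eqref{eq:processed-qubit-channel} and reduce $\alpha(\cN) > 0$ to a statement purely about the diagonal vector $\vec{t}'$. First I would justify that the pre- and post-processing bringing $\cN$ to the form $\cN'$ does not change whether the Doeblin coefficient is zero: by \cref{prop:DPI-and-LU-invariance}, $\alpha$ is invariant under isometric pre- and post-processing and non-decreasing under arbitrary pre- and post-processing, so if one of the processing channels is a unitary and the others are chosen as in the qubit-channel normal-form arguments of \cref{sec:qubit-channel-parameterizations}, then $\alpha(\cN) > 0 \Leftrightarrow \alpha(\cN') > 0$. (Here I would lean on the detailed parameterization lemmas that the excerpt defers to that appendix; the key physical point, already flagged in the paragraph preceding the lemma, is that the non-unital shift $\vec r$ in~\eqref{eq:processed-qubit-channel} is irrelevant to the value.) This is the "first reduction."

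Next, given $\Gamma^{\cN'} = \tfrac12\left[I_A \otimes I_B + I_A \otimes \vecp{r}\cdot\vec\sigma + \sum_k t_k' \sigma_k \otimes \sigma_k\right]$, I would compute $\alpha(\cN')$ directly from~\eqref{eq:alpha_cN_def}. Since $\alpha(\cN') \geq 0$ always and we only care about strict positivity, by \cref{prop:reverse-robustness} it suffices to ask when there exists a quasi-state $\tau$ and $\lambda > 0$ with $\lambda\mathcal{R}^\tau \leq \cN'$, i.e. $\lambda\, I_A \otimes \tau_B \leq \Gamma^{\cN'}$. Writing $\tau = \tfrac12(I + \vec u\cdot\vec\sigma)$, the operator $\Gamma^{\cN'} - \lambda\, I_A\otimes\tau$ is, up to the factor $\tfrac12$, $\, (1-\lambda) I_A\otimes I_B + I_A\otimes(\vecp r - \lambda\vec u)\cdot\vec\sigma + \sum_k t_k'\,\sigma_k\otimes\sigma_k$. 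Choosing $\vec u$ proportional to $\vecp r$ (so $\lambda \vec u = \vecp r$ for suitable $\lambda$) kills the single-system term — this is precisely the step that makes $\vecp r$ irrelevant — leaving the question of when $(1-\lambda) I\otimes I + \sum_k t_k'\,\sigma_k\otimes\sigma_k$ is positive semidefinite for some $\lambda$ strictly between $0$ and $1$. The operator $\sum_k t_k'\,\sigma_k\otimes\sigma_k$ is diagonal in the Bell basis with eigenvalues given by the four signed sums $\pm t_1' \pm t_2' \pm t_3'$ restricted to the sign patterns $(+,+,+),(+,-,-),(-,+,-),(-,-,+)$; so positivity for some $\lambda\in(0,1)$ holds iff all four of these Bell-basis eigenvalues are strictly greater than $-1$, equivalently (after the standard identification) iff $\vec t'$ lies in the open convex hull of $\{(1,1,1),(1,-1,-1),(-1,1,-1),(-1,-1,1)\}$. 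I would spell out this Bell-basis eigenvalue computation carefully and note that "full support convex combination" is exactly the statement that all four coefficients in the expansion are strictly positive, which matches "all four eigenvalues $> -1$."

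For the reverse direction I would argue the contrapositive: if $\vec t'$ is not a full-support convex combination of those four points, then at least one Bell-basis eigenvalue of $\Gamma^{\cN'}$ equals its extreme value, forcing $\Gamma^{\cN'}$ to have a kernel vector $\ket\omega$ with $\bra\omega (I_A\otimes X)\ket\omega \leq 0$ for every feasible $X$; combined with the trace constraint this forces $\Tr[X] \leq 0$ and hence $\alpha(\cN') = 0$. Concretely I would use the dual SDP~\eqref{eq:alpha-cN-dual} or the primal directly: feasibility $I_A\otimes X \leq \Gamma^{\cN'}$ evaluated on the relevant maximally entangled vector pins down the offending diagonal component of $X$ to be $\leq 0$, and then one checks the complementary diagonal components cannot compensate. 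The main obstacle I anticipate is the first reduction: being fully rigorous that the unitary conjugations and the "non-unital part" can be stripped off without affecting positivity of $\alpha$ requires care with the Stokes normal form (singular-value decomposition of $T$, absorbing signs, handling the degenerate cases where $T$ is rank-deficient), and one must make sure the post-processing used is genuinely a channel so that \cref{prop:DPI-and-LU-invariance} applies. The Bell-basis eigenvalue bookkeeping in the second step is routine by comparison.
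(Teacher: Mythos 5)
Your proposal is correct in substance but takes a genuinely different route from the paper. The paper works with the dual SDP \eqref{eq:alpha-cN-dual}: it parametrizes the dual variable $Y$ as the Choi operator of a unital CP map, notes that the non-unital part $\vecp{r}$ and the local term $\vec{s}\cdot\vec{\sigma}\otimes I$ pair to zero in $\Tr[\Gamma^{\cN'}Y]$, and reduces everything to minimizing $\sum_k t_k' b_{k,k}$ over the King--Ruskai tetrahedron of admissible dual correlation diagonals, evaluated at its extreme points and then re-expressed in barycentric coordinates of $\vec{t}'$. You instead work with the primal via the reverse-robustness form (\cref{prop:reverse-robustness}): choosing the replacer quasi-state's Bloch vector to cancel $\vecp{r}$ reduces feasibility to positivity of a Bell-diagonal operator, and your converse exhibits a maximally entangled kernel vector $\ket{\omega}$ of $\Gamma^{\cN'}$ and uses $\bra{\omega}(I\otimes X)\ket{\omega}=\tfrac12\Tr[X]$ --- a single vector that rules out every feasible $X$ at once, with no need to characterize the dual feasible set (and making your extra remark about ``complementary diagonal components'' unnecessary). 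Both arguments bottom out in the same computation: the relevant quantities equal $4p_i-1$ in the barycentric coordinates $p_i$ of $\vec{t}'$, so positivity of $\alpha$ is equivalent to full support. What the paper's dual route buys is the closed-form value $\alpha(\cN')=1+\min_i \vec{t}'\cdot v_i$, reused in the subsequent proposition on qubit channels; your route is more self-contained, needing only the Bell spectrum of $\sum_k t_k'\sigma_k\otimes\sigma_k$ and positivity of the Choi operator. One bookkeeping caution: the Bell-basis sign patterns of $(\sigma_k\otimes\sigma_k)_k$ are $(+,-,+),(-,+,+),(+,+,-),(-,-,-)$, i.e.\ the negatives of the vertex set you list, so to land exactly on the lemma's stated tetrahedron you must track the transpose hidden in the Choi isomorphism --- the same convention the paper's own proof elides --- but since the flip acts simultaneously on the channel's and the test vectors' coordinates it does not affect the full-support characterization. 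Finally, the ``first reduction'' you flag as the main obstacle is not one: \cref{prop:simplified-choi-under-processing} supplies genuine unitary channels and \cref{prop:DPI-and-LU-invariance} gives exact invariance of $\alpha$ under them, which is precisely how the paper opens its proof.
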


\begin{proof}
    By \cref{prop:DPI-and-LU-invariance}, without loss of generality, we can focus on any version of $\cN$ that has undergone unitary pre- and post-processing. We thus choose the unitary pre- and post-processing from \cref{prop:simplified-choi-under-processing} so that, without loss of generality, we just need to consider $\alpha(\cN')$ where 
    \begin{align}\label{eq:simplified-correlation-matrix}
    \Gamma^{\cN'} = \frac{1}{2}\left[I_{A} \otimes I_{B} + I_{A} \otimes \vecp{r} \cdot \vec{\sigma} + \sum_{k} t_{k}' \sigma_{k} \otimes \sigma_{k} \right] \ .
    \end{align}
    Now, note that by~\eqref{eq:alpha-cN-dual} any feasible $Y$ is the Choi operator of a unital CP map by the Choi isomorphism. Thus, by~\eqref{eq:Choi-Unital-parameterize}, we may express $Y$ in the form
    \begin{align}
        Y = \frac{1}{2}\left[I_{A} \otimes I_{B} + \vec{s} \cdot \vec{\sigma} \otimes I_{B} + \sum_{i,j} b_{i,j} \sigma_{i} \otimes \sigma_{j} \right] \  
    \end{align}
    and the positivity constraints are $\vert B_{11} \pm B_{22} \vert \leq \vert 1 + B_{33} \vert$ by \cref{prop:unital-positivity-constraints}. Now, 
    \begin{align}
        \Tr[\Gamma^{\cN'}Y] & = \frac{1}{4}\Tr\Bigg[\left(I_{A} \otimes I_{B} + I_{A} \otimes \vecp{r} \cdot \vec{\sigma} + \sum_{k} t_{k}' \sigma_{k} \otimes \sigma_{k}\right) \notag \\
        & \hspace{2cm} \cdot \left(I_{A} \otimes I_{B} + \vec{s} \cdot \vec{\sigma} \otimes I_{B} + \sum_{i,j} b_{i,j} \sigma_{i} \otimes \sigma_{j}\right) \Bigg] \\
        & = \frac{1}{4}\left(4 + 4\sum_{k} t_{k}'b_{k,k}\right) \ ,
    \end{align}
    where the second equality uses the fact that Pauli operators are traceless and the commutation relations between Pauli operators to simplify. Thus, by~\eqref{eq:alpha-cN-dual} and the positivity constraints being $\vert B_{11} \pm B_{22} \vert \leq \vert 1 + B_{33} \vert$, we find that
    \begin{align}
        \alpha(\cN') = 1 + \inf\left\{ \sum_{k} t_{k}'b_{k,k} : \vert b_{1,1} \pm b_{2,2} \vert \leq \vert 1 \pm b_{3,3} \vert \right \} \ . \label{eq:alpha_qubit_simplified_b}
    \end{align}
    Note this shows that only the unital part of $\cN'$ and $Y$ (if we consider it also as a Choi operator) matter. In other words, we have reduced the problem to showing $\alpha$ is faithful on qubit channels to showing it is faithful on the equivalence class of unital qubit channels.\footnote{The equivalence class is with respect to unitary pre- and post-processing.} For clarity, we rewrite the optimization for (presumably CP) unital $\cM$:
    \begin{align}
        \alpha(\cM) = 1 + \inf \left\{ \sum_{k} t_{k}b_{k} : \vert b_{1} \pm b_{2} \vert \leq \vert 1 \pm b_{3} \vert \right\} \ . \label{eq:alpha-reduction-step-1}
    \end{align}
    
    We are now just interested in what values of $\vec{t}$ satisfying the CP constraints in~\eqref{eq:positivity-conditions} are such that the infimum equals unity. Using $\eta_{\Tr}(\cN) \leq 1 - \alpha(\cN) = 1 - (1 +\inf\{ \sum_{k} t_{k}b_{k} : \vert b_{1} \pm b_{2} \vert \leq \vert 1 \pm b_{3} \vert \})$, we are interested in when the infimum is equal to negative one.
    
    As established in~\cite{King-2000a}, 
    the conditions on the vector $\vec{b}$, which are positivity constraints, means that it lies in the tetrahedron defined by extreme points $\{(1,1,1),$ $(1,-1,-1),$ $(-1,1,-1),$ $(-1,-1,1)\}$. By convexity, it will always be optimized on one of these extreme points. So we are in principle interested in
    \begin{align}\label{eq:satisfying-set}
        \min\left\{ \sum_{k} t_{k}, t_{1} - (t_{2}+t_{3}), t_{2} - (t_{1}+t_{3}), t_{3} - (t_{1}+t_{2}) \right\} = -1  
    \end{align}
    over the $\vec{t}$ satisfying
    $\vert t_{1} \pm t_{2} \vert \leq \vert 1 \pm t_{3} \vert$. Again, this means 
    \begin{align}
        \vec{t} \in \operatorname{conv}(\{(1,1,1),(1,-1,-1),(-1,1,-1),(-1,-1,1)\}) \ .
    \end{align}
    We may then consider $\vec{t}$ as a convex combination of these points, which using $0 \leq p_{0},p_{1},p_{2}$ such that $p_{0}+p_{1}+p_{2} \leq 1$, we can write 
    \begin{align}\label{eq:vec-t-as-convex-comb}
        \vec{t} \coloneq \begin{bmatrix} 2(p_0 + p_1) - 1 \\
        2(p_0 + p_2) - 1 \\ 
        1 - 2(p_1 + p_2) 
        \end{bmatrix} \ ,
    \end{align}
    where the parameterization is implicit. We are then interested when each function in~\eqref{eq:satisfying-set} takes the value negative one as a function of $p$. Direct calculations lead to
    \begin{align}
        \sum_{k} \vec{t}_{k} = -1 \Leftrightarrow p_{0} = 0, &\quad
        t_{1} - (t_{2}+t_{3}) \Leftrightarrow p_{1} = 0, \\
        t_{2} - (t_{1}+t_{3}) \Leftrightarrow p_{2} = 0, &\quad  t_{3} - (t_{1}+t_{2}) \Leftrightarrow p_{0}+p_{1}+p_{2} = 0 \ .
    \end{align}
    In other words, the set of $\vec{t}$ \textit{not} satisfying~\eqref{eq:satisfying-set} are those that are a mixture of \textit{all} extreme points.
\end{proof}

The first part of the previous lemma reduced the consideration of $\alpha(\cN)$ for qubit channels to the `unital part.' The following lemma shows that something similar holds for the trace distance contraction coefficient.

\begin{lemma}\label{Lem:cc-tr-qubit}\label{lem:qubit_trace_contr}
    Let $\cN$ be a qubit channel given by the transformation $w_0 I +w\cdot\vec{\sigma} \rightarrow w_0 I+\left(t+Tw\right)\cdot\vec{\sigma}$. Then we have 
    \begin{align}
        \eta_{\Tr}(\cN)=\|T\|_\infty.
    \end{align}
\end{lemma}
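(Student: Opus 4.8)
## Proof plan for $\eta_{\Tr}(\cN) = \|T\|_\infty$ for qubit channels

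The plan is to compute the trace-distance contraction coefficient directly using the simplification $\eta_{\Tr}(\cN) = \sup_{\psi \perp \phi} T(\cN(\psi),\cN(\phi))$, which follows from \eqref{eq:contraction_E_gamma_reduction} specialized to $\gamma = 1$ (recall $E_1 = T$ and $\eta_1 = \eta_{\Tr}$). Thus it suffices to maximize $\tfrac12\|\cN(\psi) - \cN(\phi)\|_1$ over orthogonal pairs of pure qubit states. In the Bloch representation, a pure qubit state $\psi$ corresponds to a unit vector $\hat{n}$ via $\psi = \tfrac12(I + \hat{n}\cdot\vec\sigma)$, and orthogonality $\psi \perp \phi$ corresponds exactly to the Bloch vectors being antipodal, i.e., $\phi \leftrightarrow -\hat{n}$. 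Under the Stokes parameterization, $\cN$ sends $\tfrac12(I + \vec w \cdot \vec\sigma) \mapsto \tfrac12(I + (t + T\vec w)\cdot \vec\sigma)$, so $\cN(\psi) \leftrightarrow t + T\hat{n}$ and $\cN(\phi) \leftrightarrow t - T\hat{n}$. Since the trace distance between two qubit states with Bloch vectors $\vec a$ and $\vec b$ is $\tfrac12\|\vec a - \vec b\|_2$, we get
\begin{equation}
\eta_{\Tr}(\cN) = \sup_{\|\hat n\|_2 = 1} \frac{1}{2}\big\| (t + T\hat n) - (t - T\hat n) \big\|_2 = \sup_{\|\hat n\|_2 = 1} \| T\hat n\|_2 = \|T\|_\infty,
\end{equation}
where the last equality is the definition of the operator (spectral) norm of the matrix $T$. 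Notice the translation vector $t$ cancels, which is the ``non-unital part is irrelevant'' phenomenon flagged in the surrounding text.

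There are a few routine facts to justify carefully along the way. First, I need the identification of orthogonal pure qubit states with antipodal Bloch vectors; this is standard (two pure states $\tfrac12(I+\hat n\cdot\vec\sigma)$ and $\tfrac12(I+\hat m\cdot\vec\sigma)$ have overlap $\tfrac12(1+\hat n\cdot\hat m)$, which vanishes iff $\hat m = -\hat n$). Second, I need that for qubits the trace distance equals half the Euclidean distance of Bloch vectors: writing $\rho - \sigma = \tfrac12(\vec a - \vec b)\cdot\vec\sigma$, this is a traceless Hermitian $2\times2$ matrix with eigenvalues $\pm\tfrac12\|\vec a - \vec b\|_2$, so $\|\rho-\sigma\|_1 = \|\vec a - \vec b\|_2$ and $T(\rho,\sigma) = \tfrac12\|\vec a-\vec b\|_2$. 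Third, I should confirm that the supremum over orthogonal pure states is genuinely attained and equals the operator norm: the function $\hat n \mapsto \|T\hat n\|_2$ is continuous on the compact unit sphere, so the sup is a max and equals $\|T\|_\infty$ by definition.

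I would present the argument as a short chain: invoke \eqref{eq:contraction_E_gamma_reduction} to reduce to orthogonal pure states; apply the Bloch/Stokes dictionary to rewrite $T(\cN(\psi),\cN(\phi))$; observe the cancellation of $t$; and conclude with the definition of $\|T\|_\infty$. The main (very mild) obstacle is simply being precise about the qubit-specific dictionary — that orthogonal pure states are antipodal Bloch vectors and that trace distance is half the Bloch-vector Euclidean distance — since everything else is immediate once those are in place. One should also note, as the paper does in \cref{sec:qubit-channel-parameterizations}, that the Stokes representation $w_0 I + w\cdot\vec\sigma \mapsto w_0 I + (t+Tw)\cdot\vec\sigma$ is exactly the affine action being used here, so no additional parameterization lemma is needed beyond what has already been set up.
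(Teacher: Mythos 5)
Your proposal is correct and takes essentially the same route as the paper: both reduce to the Bloch/Stokes picture, observe that the shift $t$ cancels, and identify $\sup_{\|w\|_2\le 1}\|Tw\|_2$ with the operator norm $\|T\|_\infty$. The only cosmetic difference is that you justify dropping the denominator by explicitly invoking the reduction to orthogonal pure states in~\eqref{eq:contraction_E_gamma_reduction} (antipodal Bloch vectors), whereas the paper leans on the unital-case result of Hiai--Ruskai and then performs the same direct calculation; your version is, if anything, slightly more self-contained on that point.
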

\begin{proof}
    The same statement was proven in~\cite{Hiai15} for unital maps. It solely remains to show that for every channel $\cN$, the non-unital part $t$ does not contribute to the contraction coefficient, which is clear by direct calculation:
    \begin{align}
        \eta_{\Tr}(\cN)&= \frac{1}{2} \sup_{w,w'} \left\| w_0 I+(t+Tw)\cdot\vec{\sigma} - (w_0 I +(t+Tw')\cdot\vec{\sigma}) \right\|_1 \\
        &= \frac{1}{2} \sup_{w,w'} \left\| Tw \cdot \vec{\sigma} -Tw'\cdot\vec{\sigma} \right\|_1 \ \\
        &= \frac{1}{2} \sup_{w,w’} \left \Vert T(w - w’) \right\Vert \\
        &= \sup_{w} \left\Vert Tw \right\Vert \\
        &= \left\Vert T \right \Vert_{\infty} \ ,
    \end{align}
where $\left\Vert w \right\Vert, \left\Vert w' \right\Vert \leq 1$, as we optimize over quantum states and the final equality uses the optimal choice $w'= -w$ with $\Vert w' \Vert = 1$ and $\Vert X \Vert_{1} = \Tr \vert X \vert$ to simplify. 
\end{proof}

Finally, we use the above lemma to show that the conditions for $\eta_{\Tr}(\cN) < 1$ are the same as $\alpha(\cN) > 0$. 
\begin{proof}[Proof of \cref{thm:qubit-faithfulness}] 
By the unitary invariance of the trace-distance contraction coefficient, we may convert $\cN$ to $\cN'$, where the latter has a Choi operator of the form in~\eqref{eq:processed-qubit-channel}. That is to say, $\cN'$ is given by the transformation $w_0 I +w\cdot\vec{\sigma} \rightarrow w_0 I+(\vecp{r}+T')\cdot\vec{\sigma}$ where $T' = \operatorname{diag}(t'_{1},t'_{2},t_{3}')$. It follows that $\eta_{\Tr}(\cN) = \eta_{\Tr}(\cN') = \Vert T' \Vert_{\infty} =  \max\{t_{1}',t_{2}',t_{3}'\} \eqqcolon \vecp{t}$, where the second equality used \cref{Lem:cc-tr-qubit} and the third used the definition of the operator norm.
    
    Re-parameterizing $\vecp{t}$ in terms of the extreme points of the CP qubit maps as was done in~\eqref{eq:vec-t-as-convex-comb}, we have that the contraction coefficient is the maximal entry of
        \begin{align}
        \vecp{t} \coloneq \begin{bmatrix} 2(p_0 + p_1) - 1 \\
        2(p_0 + p_2) - 1 \\ 
        1 - 2(p_1 + p_2) 
        \end{bmatrix}  ,
    \end{align}
    where $0 \leq p_0, p_1, p_2$ and $p_0 + p_1 + p_2 \leq 1$. It is now easy to see that any of the entries being one, is equivalent to at least one coefficient being zero, e.g.
    \begin{align}
        2(p_0 + p_1) - 1 =1 \Leftrightarrow p_0+p_1& =1 \\
        \Rightarrow p_2& =0. 
    \end{align}
    Hence the support of the corresponding convex combination can not be full. These are the same conditions that \cref{lem:qubit-stokes-nec-and-suff} specifies for $\alpha(\cN) > 0$. Thus the conditions are equivalent. 
\end{proof}

\begin{proposition}[Doeblin Coefficients for Qubit Channels]
    Let $\cN$ be a qubit channel given by the transformation $w_0 I +w\cdot\vec{\sigma} \rightarrow w_0 I+(t+Tw)\cdot\vec{\sigma}$. Then, we have that 
\begin{equation}
    \alpha(\cN)= 1- \lambda_1(T)-\lambda_2(T) +\lambda_3(T),
\end{equation}    
where $\lambda_1(T)$, $\lambda_2(T)$, $\lambda_3(T)$ are the ordered eigenvalues of matrix $T$ (i.e., such that $\lambda_1(T) \geq \lambda_2(T) \geq \lambda_3(T)$). 

Furthermore, 
\begin{equation}
    \alpha_I(\cN)=1-\lambda_1(T) =1- \|T\|_\infty= 1-\eta_{\Tr}(\cN).
\end{equation}
\end{proposition}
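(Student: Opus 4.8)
The plan is to establish the two formulas separately, using throughout that $\alpha(\cN)$ and $\alpha_I(\cN)$ are invariant under unitary pre- and post-processing (\cref{prop:DPI-and-LU-invariance}). For $\alpha(\cN)$ I would reuse the reduction already performed inside the proof of \cref{lem:qubit-stokes-nec-and-suff}. By \cref{prop:DPI-and-LU-invariance} we may pass to the processed channel whose Choi operator has the form in~\eqref{eq:processed-qubit-channel}, and by composing with additional Pauli unitaries on the input and output systems --- which only permute the diagonal entries of $T$ and flip an even number of their signs, and hence preserve the diagonal shape --- we may further assume $T = \operatorname{diag}(\lambda_1,\lambda_2,\lambda_3)$ with $\lambda_1 \geq \lambda_2 \geq \lambda_3$ and $\lambda_1,\lambda_2 \geq 0$; these ordered entries are the eigenvalues $\lambda_i(T)$ appearing in the statement. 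Equation~\eqref{eq:alpha-reduction-step-1} then says that $\alpha(\cN)$ equals $1$ plus the infimum of the linear functional $\sum_k \lambda_k b_k$ over the tetrahedron with vertices $(1,1,1)$, $(1,-1,-1)$, $(-1,1,-1)$, $(-1,-1,1)$, so the infimum is a minimum attained at one of these four vertices. A short comparison identifies the optimal vertex as $(-1,-1,1)$: its value $-\lambda_1-\lambda_2+\lambda_3$ lies below that of $(1,1,1)$ by $2(\lambda_1+\lambda_2)\geq 0$, below that of $(1,-1,-1)$ by $2(\lambda_1-\lambda_3)\geq 0$, and below that of $(-1,1,-1)$ by $2(\lambda_2-\lambda_3)\geq 0$. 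Hence $\alpha(\cN) = 1 - \lambda_1 - \lambda_2 + \lambda_3$.

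For $\alpha_I(\cN)$ I would instead argue directly and geometrically, without a normal form. By \cref{def:alt-doeblin-quantum-induced}, $\alpha_I(\cN) = \sup_{X\in\Herm}\{\Tr[X] : X \leq \cN(\rho)\ \forall\, \rho\in\mathcal{D}\}$ (using $\cR^X(\rho)=\Tr[\rho]X=X$ for a state $\rho$), and since $\cN$ is linear this is equivalent to imposing $X \leq \cN(\psi)$ for all pure states $\psi$. Write a qubit Hermitian operator as $X = \frac12(x_0 I + \vec x\cdot\vec\sigma)$, so $x_0 = \Tr[X]$, and let $\vec c_\psi$ denote the output Bloch vector, so $\cN(\psi) = \frac12(I + \vec c_\psi\cdot\vec\sigma)$. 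The positivity criterion for a qubit operator then turns $X \leq \cN(\psi)$ into $|\vec c_\psi - \vec x| \leq 1 - x_0$. The set $\{\vec c_\psi : \psi\ \text{pure}\}$ is the image of the Bloch sphere under the affine Stokes map of $\cN$, i.e.\ a (possibly degenerate) ellipsoid whose semi-axes are the singular values of $T$; its smallest enclosing ball is centered at the ellipsoid's center and has radius equal to the largest semi-axis, namely $\|T\|_\infty$, while for any other center $\vec x$ the two antipodal points on the major axis force $\max_\psi|\vec c_\psi - \vec x| \geq \|T\|_\infty$. Consequently the optimal $X$ achieves $x_0 = 1 - \|T\|_\infty$, giving $\alpha_I(\cN) = 1 - \|T\|_\infty$ (and $x_0 \geq 0$ since $\|T\|_\infty = \eta_{\Tr}(\cN) \leq 1$). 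The equality $\|T\|_\infty = \eta_{\Tr}(\cN)$ is exactly \cref{lem:qubit_trace_contr}, and $\|T\|_\infty = \lambda_1(T)$ because the normal form above has $\lambda_1 \geq \lambda_2 \geq |\lambda_3|$; together these give the stated chain of equalities. As a consistency check, the bound $\alpha_I(\cN) \leq 1 - \eta_{\Tr}(\cN)$ also follows from~\eqref{eq:trace-distance-CC-to-induced-doeblin}.

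The step I expect to be the main obstacle is the sign bookkeeping in the first part: making precise that, after the unitary reductions, the parameters governing $\alpha(\cN)$ are genuinely the ordered eigenvalues $\lambda_i(T)$ and can be arranged with $\lambda_1,\lambda_2 \geq 0$, which is what pins the minimizing tetrahedron vertex to $(-1,-1,1)$ and also yields $\lambda_1 = \|T\|_\infty$. Once that normal form is in hand, the vertex comparison for $\alpha(\cN)$ and the smallest-enclosing-ball computation for $\alpha_I(\cN)$ are elementary.
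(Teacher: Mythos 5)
Your proposal is correct and, for the most part, more careful than the paper's own argument. For $\alpha(\cN)$ you take the same route as the paper---reduce to the diagonal normal form and minimize the linear functional over the tetrahedron in \eqref{eq:alpha-reduction-step-1}---but you actually carry out the four-vertex comparison, whereas the paper's proof of this proposition simply asserts that the minimizer is the vertex with $-1$ on the two largest entries and $+1$ on the smallest. That assertion is exactly where the content lies: the value at $(-1,-1,1)$ beats the all-$(+1)$ vertex only when $\lambda_1+\lambda_2\geq 0$, a comparison the paper never makes. Your normal-form claim (that after the $SO(3)\times SO(3)$ reduction of \cref{prop:simplified-choi-under-processing} plus Pauli conjugations, which permute the diagonal and flip an even number of signs, one can arrange $\lambda_1\geq\lambda_2\geq 0$) is true and is precisely what closes this gap, so you were right to single out the sign bookkeeping as the crux. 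For $\alpha_I(\cN)$ your route is genuinely different: the paper works in the dual separable-cone program of \cref{prop:induced-Doeblin}, parameterizes $Y=\sum_k\alpha_k^T\otimes\beta_k$ in Bloch form, bounds the objective below by $1-\sigma_{\max}(T)$, and exhibits an achieving choice; you stay in the primal, convert $X\leq\cN(\psi)$ into a smallest-enclosing-ball problem for the output ellipsoid, and read off the radius $\|T\|_\infty$. Both are valid; yours is shorter and makes the geometry transparent, while the paper's dual argument is the one that ports to the conic formulation.

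One caveat, which applies to the paper's statement at least as much as to your write-up: the quantities entering both formulas are the ordered diagonal entries of the normal form (equivalently the signed singular values $\sigma_1,\sigma_2,\operatorname{sgn}(\det T)\,\sigma_3$), \emph{not} the eigenvalues of the original $T$, which for a nonsymmetric $T$ need not even be real and for a symmetric $T$ can differ from the normal-form entries. Concretely, the Pauli channel with weights $(p_0,p_1,p_2,p_3)=(0.025,0.225,0.225,0.525)$ has $T=\operatorname{diag}(-\tfrac12,-\tfrac12,\tfrac1{10})$, hence ordered eigenvalues $(\tfrac1{10},-\tfrac12,-\tfrac12)$ and $1-\lambda_1-\lambda_2+\lambda_3=\tfrac9{10}$, whereas the true value is $\alpha(\cN)=4\min_j p_j=\tfrac1{10}$, which is what the normal-form entries $(\tfrac12,\tfrac12,\tfrac1{10})$ give; likewise $\alpha_I(\cN)=1-\|T\|_\infty=\tfrac12\neq 1-\lambda_1(T)$. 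Your proof computes the correct quantity because it works in the normal form throughout (and in the second part uses $\|T\|_\infty=\sigma_{\max}(T)$ directly); just do not identify those entries with ``the eigenvalues of $T$'' except when $T$ is symmetric positive semidefinite, as for the generalized amplitude damping channel in \cref{lem:Doeblin_GAD}.
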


\begin{proof}
    By~\eqref{eq:alpha_qubit_simplified_b} and the assumption on the channel $\cN$ such that the following transformation happens $w_0 I +w\cdot\vec{\sigma} \rightarrow w_0 I+(t+Tw)\cdot\vec{\sigma}$, we have that 
    \begin{align}
        \alpha(\cN) = 1 + \inf\left\{ \sum_{k} t_{k}b_{k,k} : \vert b_{1,1} \pm b_{2,2} \vert \leq \vert 1 \pm b_{3,3} \vert \right \}  , 
    \end{align}
where $T=\operatorname{diag}(b_{1,1}, b_{2,2}, b_{3,3})$ and $t=(t_1,t_2,t_3)^T$.
  As established in~\cite{King-2000a}, 
    the conditions on the vector $\vec{b}$, which are positivity constraints, means that it lies in the tetrahedron defined by the extreme points $\{(1,1,1),$ $(1,-1,-1),$ $(-1,1,-1),$ $(-1,-1,1)\}$. Then, the infimum is achieved when negative weights ($-1$) are assigned for the two largest eigenvalues and positive weight ($+1$) is assigned for the smallest eigenvalue. This leads to the desired expression of 
    \begin{equation}
        \alpha(\cN)= 1- \lambda_1(T)-\lambda_2(T) +\lambda_3(T).
    \end{equation}

To prove the next statement, we employ ideas presented around~\cite[Eq.~(41)]{chitambar2022communication}. Recall the dual program for $\alpha_I(\cN)$ in~\cref{prop:induced-Doeblin}.
Note that a separable operator on two qubits can be generally written as $Y= \sum_k \alpha_k^T \otimes \beta_k$. Then, to demand the constraints of the dual, we need $\Tr[\alpha_k]=1$ and $\sum_k \beta_k=I$. With that we also choose $\alpha_k \coloneqq \frac{1}{2} (I + a_k \cdot \vec{\sigma})$ and $\beta_k \coloneqq \gamma_k (I + b_k\cdot \vec{\sigma})$ with $\sum_k \gamma_k b_k =0$, $\sum_k \gamma_k =1$, and $\|b_k\|_2, \|a_k\|_2 \leq 1$. With these consider, 
\begin{align}
    \Tr[ \Gamma^\cN Y] &= \Tr\!\left[ \Gamma^\cN \left( \sum_k \alpha_k^T \otimes \beta_k \right) \right ] \\ & = \sum_k \Tr\!\left[ \beta_k \cN(\alpha_k) \right] \\
    & =\sum_k \frac{\gamma_k}{2} \Tr\!\left[ (I + b_k\cdot \vec{\sigma}) \left(I+ (Ta_k +t) \cdot \vec{\sigma} \right) \right] \\
    &=1+ \sum_k \gamma_k b_k \cdot  (Ta_k +t) \\
    &= 1+ \sum_k \gamma_k b_k^T Ta_k,
\end{align}
where the last equality follows due to the condition $\sum_k \gamma_k b_k=0$. To find $\alpha_I(\cN)$ we need to infimize $1+ \sum_k \gamma_k b_k^T Ta_k$, which leads to infimizing $ \sum_k \gamma_k b_k^T Ta_k$. Then, the function of our interest is lower bounded by $\sum_k \gamma_k b_k^T Ta_k \geq 1- \lambda_1(T)$ since $\sum_k \gamma_k =1$ and $b_k^T T a_k \geq -\sigma_{\max}(T)=-\lambda_1(T)$, where $\sigma_{\max}(\cdot)$ corresponds to the largest singular value.  The inequality there is achieved by the following choice: choose $\gamma_k=1/2$ for $k=\{1,2\}$ and zero otherwise; $b_1= -b_2=-a_1=a_2=-u_1$,  where $u_1$ is the eigenvector corresponding to the maximum eigenvalue of $T$ (i.e., corresponding to $\lambda_1(T)$). Also, note that the above choice satisfies the constraints $\sum_k \gamma_k b_k=0$, $\sum_k \gamma_k=1$, and $\|b_k\|_2, \|a_k\|_2 \leq 1$.
Then, we arrive at
\begin{equation}
    \alpha_I(\cN)= 1-\lambda_1(T)= 1- \|T\|_\infty,
\end{equation}
concluding the proof by also recalling~\cref{lem:qubit_trace_contr}.
\end{proof}

\begin{proposition} \label{prop:n-noise-qubit}
    Let $\cN$ be a qubit channel given by the transformation $w_0 I +w\cdot\vec{\sigma} \rightarrow w_0 I+(t+Tw)\cdot\vec{\sigma}$, and let $n \in \mathbb{N}$. 
    Then, we have that, for all $n$-qubit states $\rho$ and $\sigma$,
    \begin{equation}
        T\!\left( \cN^{\otimes n}(\rho), \cN^{\otimes n}(\sigma)\right) \leq 4 n \|T\|_\infty \, T(\rho,\sigma).
    \end{equation}
As such, we have that 
\begin{equation}
    \eta_{\Tr}\!\left( \cN^{\otimes n}\right) \leq 4  n \|T\|_\infty.
\end{equation}
\end{proposition}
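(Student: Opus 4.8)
The plan is to exploit the Stokes (affine Bloch) representation of $\cN$ to split it as $\cN = \cR^{\omega} + \Delta$, where $\cR^{\omega}$ is an honest replacer channel and $\Delta$ is a (non-completely-positive) ``unital remainder'' whose action is governed entirely by the matrix $T$. Concretely, set $\omega \coloneqq \cN(I/2)$; since $\cN$ is CPTP, $\omega$ is a state, so $\cR^{\omega}(\cdot) \coloneqq \Tr[\cdot]\,\omega$ is a replacer channel, and we let $\Delta \coloneqq \cN - \cR^{\omega}$. A direct computation with the linear extension of the Stokes map $w_0 I + w\cdot\vec\sigma \mapsto w_0 I + (t+Tw)\cdot\vec\sigma$ shows that $\Delta(I) = 0$ and $\Delta(\sigma_j) = \sum_{l=1}^3 T_{lj}\,\sigma_l$ for $j\in\{1,2,3\}$ (the $\Tr[\sigma_j]=0$ for $j\ge 1$ is what makes $\cR^{\omega}$ invisible on the Pauli part); i.e.\ $\Delta$ kills the identity/trace component and acts as $T$ on the Pauli-vector component. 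Note that no unitary pre-/post-processing reduction is needed here — the argument runs directly for general $(t,T)$. Writing $\Delta = \sum_{j=1}^{3}\cK_j$ with $\cK_j(\cdot) = \tfrac12\Tr[(\cdot)\,\sigma_j]\,(T\vec e_j)\cdot\vec\sigma$, and using $\|v\cdot\vec\sigma\|_1 = 2\|v\|_2$, $\|\Tr_A[M]\|_1 \le \|M\|_1$, and $\|M(I\otimes\sigma_j)\|_1\le\|M\|_1$, one gets $\|\cK_j\|_\diamond \le \|T\vec e_j\|_2 \le \|T\|_\infty$ for the completely bounded ($1\to 1$, i.e.\ diamond) norm, hence $\|\Delta\|_\diamond \le 3\|T\|_\infty \le 4\|T\|_\infty$.

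Next I would telescope one tensor factor at a time. For $k=0,1,\dots,n$ put $C_k \coloneqq \bigl(\cN^{\otimes k}\otimes(\cR^{\omega})^{\otimes(n-k)}\bigr)(\rho-\sigma)$. Because $\cR^{\omega}$ is a replacer, $C_0 = \omega^{\otimes n} - \omega^{\otimes n} = 0$, while $C_n = \cN^{\otimes n}(\rho-\sigma)$, so using $\cN - \cR^{\omega} = \Delta$ on the $k$-th factor gives
\begin{equation}
\cN^{\otimes n}(\rho-\sigma) = \sum_{k=1}^{n} \bigl(\cN^{\otimes(k-1)}\otimes\Delta\otimes(\cR^{\omega})^{\otimes(n-k)}\bigr)(\rho-\sigma).
\end{equation}
In the $k$-th summand the replacers on the last $n-k$ factors turn the input into $(\rho_{[k]}-\sigma_{[k]})\otimes\omega^{\otimes(n-k)}$ with $\rho_{[k]} \coloneqq \Tr_{\{k+1,\dots,n\}}[\rho]$, so its trace norm equals $\bigl\|(\cN^{\otimes(k-1)}\otimes\Delta)(\rho_{[k]}-\sigma_{[k]})\bigr\|_1$. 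Factoring $\cN^{\otimes(k-1)}\otimes\Delta = (\cN^{\otimes(k-1)}\otimes\id)\circ(\id^{\otimes(k-1)}\otimes\Delta)$, the first map is CPTP (hence trace-norm nonincreasing) and the second is bounded by $\|\Delta\|_\diamond$; combined with $\|\rho_{[k]}-\sigma_{[k]}\|_1 \le \|\rho-\sigma\|_1$, each summand is at most $\|\Delta\|_\diamond\|\rho-\sigma\|_1$. Summing over $k$ yields $\|\cN^{\otimes n}(\rho-\sigma)\|_1 \le n\,\|\Delta\|_\diamond\,\|\rho-\sigma\|_1 \le 4n\|T\|_\infty\,\|\rho-\sigma\|_1$, i.e.\ $T(\cN^{\otimes n}(\rho),\cN^{\otimes n}(\sigma)) \le 4n\|T\|_\infty\,T(\rho,\sigma)$; taking the supremum over $\rho\neq\sigma$ and recalling \eqref{eq:def-gen-contra-coef} gives $\eta_{\Tr}(\cN^{\otimes n}) \le 4n\|T\|_\infty$.

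The one genuinely non-routine ingredient is the appearance of the diamond (completely bounded) norm of the \emph{non-CP} map $\Delta$: the entire argument hinges on the fact that although $\Delta$ is not a channel, $\|\id_E\otimes\Delta\|_{1\to 1}$ is still controlled by $\|T\|_\infty$ uniformly in the reference size $|E|$, which is exactly what makes the $n$ per-factor error terms add \emph{linearly} rather than blow up. This is the quantum counterpart of the observation, already used in \cref{lem:qubit-stokes-nec-and-suff} and \cref{Lem:cc-tr-qubit}, that only the ``unital part'' $T$ of a qubit channel affects trace-distance (anti)distinguishability. A secondary, purely bookkeeping, concern is keeping the Stokes normalization conventions straight when verifying $\Delta(I)=0$ and $\Delta(\sigma_j)=\sum_l T_{lj}\sigma_l$ from the linear extension of the affine map.
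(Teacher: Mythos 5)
Your proof is correct, but it takes a genuinely different route from the paper's. The paper proves this via the quantum Wasserstein distance of order 1: it uses a fixed point $\tau$ of $\cN$, bounds $\left\| \cN^{\otimes n}\right\|_{W_1}$ by a stabilized $1\to1$ norm of $\cN-\cE$ (with $\cE$ the replacer onto $\tau$) using a result of De Palma et al., removes the reference system at the cost of a dimension factor $\sqrt{d_Ad_B}=2$ via \cref{lem:diamond_to_one_norm}, applies the single-copy contraction $\|\cN(\omega)-\tau\|_1\le \|T\|_\infty\|\omega-\tau\|_1\le 2\|T\|_\infty$ from \cref{Lem:cc-tr-qubit} to reach $\|\cN^{\otimes n}\|_{W_1}\le 4\|T\|_\infty$ as in \eqref{eq:wassers_n_qubit_cont}, and finally picks up the factor of $n$ from the norm comparison $\tfrac12\|X\|_1\le\|X\|_{W_1}\le\tfrac{n}{2}\|X\|_1$. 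Your argument replaces all of this with an explicit decomposition $\cN=\cR^{\omega}+\Delta$, a telescoping sum over tensor factors (which is where your factor of $n$ comes from), and a direct estimate $\|\Delta\|_\diamond\le\sum_{j=1}^3\|Te_j\|_2\le 3\|T\|_\infty$ exploiting the rank-one structure of each Pauli component $\cK_j$; every step I checked (the identities $\Delta(I)=0$ and $\Delta(\sigma_j)=\sum_l T_{lj}\sigma_l$, the diamond-norm bound via $\|v\cdot\vec\sigma\|_1=2\|v\|_2$ and H\"older, the telescoping identity, and the contraction of each summand) is sound. What each approach buys: yours is more elementary and self-contained, needs no fixed-point or Wasserstein machinery, and in fact yields the sharper constant $3n\|T\|_\infty$ before you relax to $4n\|T\|_\infty$ to match the statement; the paper's route, while less direct, additionally establishes the intermediate bound $\|\cN^{\otimes n}\|_{W_1}\le 4\|T\|_\infty$, which the paper reuses in its comparison with prior work on Wasserstein contraction.
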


\begin{proof}
First, recall that all quantum channels have a quantum state that is a fixed point \cite[Theorem~6.11]{wolf2012quantum}. With that, let $\tau$ be a state that is a fixed point of the qubit channel $\cN$ (i.e.; $\cN(\tau)= \tau$). With that, consider that, for an arbitrary qubit state~$\omega$,
\begin{align}
    \left\| \cN(\omega)- \tau \right\|_1 &=  \left\| \cN(\omega)- \cN(\tau) \right \|_1  \\
    & \leq \left\|T\right\|_\infty \left\|\omega-\tau\right\|_1 \\
    & \leq 2 \left\|T \right\|_\infty, \label{eq:trace_norm_bound_tau}
\end{align}
where we use~\cref{Lem:cc-tr-qubit} for the first inequality.

By \cite[Proposition~11]{Palma_wasserstein21}, we have that 
\begin{equation}
    \left\| \cN^{\otimes n}\right\|_{W_1} \leq  \sup_{\rho_{RA}} \left\| (\id_R \otimes  (\cN-\cE))(\rho_{RA}) \right\|_1
\end{equation}
where $\cE$ is a replacer channel that acts as $\cE(X)= \Tr[X] \tau$, and for a channel $\cP$,
\begin{equation} \label{eq:waase_cont}
     \left\| \cP \right\|_{W_1} \coloneqq  \sup_{\rho \neq \sigma}  \frac{\left\|\cP(\rho-\sigma) \right\|_{W_1}}{\left\|\rho-\sigma\right\|_{W_1}},
\end{equation}
where $\left\|\cdot\right\|_{W_1}$ is the quantum Wasserstein distance of order 1 \cite[Definition~7]{Palma_wasserstein21}.
Then using~\cref{lem:diamond_to_one_norm} with $d_A=d_B=2$ therein, we arrive at
\begin{equation}
    \left\| \cN^{\otimes n}\right\|_{W_1} \leq 2  \sup_{\omega} \left\|   (\cN-\cE)(\omega) \right\|_1.
    \label{eq:w1-bound-to-trace-norm}
\end{equation}
See also \cite[Proposition~33]{mele2024noise}.

Combining \eqref{eq:w1-bound-to-trace-norm} with the previous bound on the trace norm with the fixed state $\tau$ in~\eqref{eq:trace_norm_bound_tau}, we see that 
\begin{equation}\label{eq:wassers_n_qubit_cont}
    \left\| \cN^{\otimes n}\right\|_{W_1} \leq 4  \left\|T\right\|_\infty.
\end{equation}

Also recall that, from~\cite[Proposition~2]{Palma_wasserstein21}, for a traceless Hermitian operator $X$, 
\begin{equation}
    \frac{1}{2} \left\|X\right\|_1 \leq \left\|X\right\|_{W_1} \leq \frac{n}{2} \left\|X\right\|_1.
\end{equation}
Utilizing this together with
\begin{equation}
    \left\| \cN^{\otimes n}(\rho-\sigma) \right\|_{W_1} \leq \left\| \cN^{\otimes n}\right\|_{W_1} \left \|\rho-\sigma \right\|_{W_1} \leq 4 \left\| T\right\|_{\infty} \left \|\rho-\sigma \right\|_{W_1},
\end{equation}
we arrive at 
\begin{equation}
  \frac{1}{2}  \left\|  \cN^{\otimes n}(\rho-\sigma) \right \|_1 \leq 4 \left\| T\right\|_\infty \cdot \frac{n}{2} \left\|\rho-\sigma\right\|_1,
\end{equation}
concluding the proof, together with $T(\rho,\sigma)= \left\|\rho-\sigma\right\|_1/2$ and the definition of $\eta_{\Tr}(\cdot)$.
\end{proof}

\begin{lemma}
\label{lem:diamond_to_one_norm} Let $\cN$ and $\cM$ be two quantum channels that map $\cL(\cH_A) \to \cL(\cH_B)$, with $d_A$ and $d_B$ the dimensions of the input and the output systems, respectively. Then, we have that 
\begin{equation}
   \sup_{\rho_{RA}} \left\| (\id_R \otimes  (\cN-\cM))(\rho_{RA}) \right\|_1 \leq \sqrt{d_A d_B}  \sup_{\rho} \left\|   (\cN-\cM)(\rho) \right\|_1,
\end{equation}   
where system $R$ is isomorphic to system $A$ (i.e., $d_R=d_A$).
\end{lemma}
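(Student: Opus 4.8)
The plan is to bound the stabilized (diamond-type) norm of the difference channel $\Delta \coloneqq \cN - \cM$ by its unstabilized version, paying a dimension factor $\sqrt{d_A d_B}$. The standard approach: first reduce the supremum over $\rho_{RA}$ on the left to a pure-state input on a reference system $R \cong A$, then use the purification/Schmidt structure of that input together with a norm inequality comparing the trace norm to the (larger) Hilbert--Schmidt norm on the small output space. Concretely, by convexity of $X \mapsto \|X\|_1$ and the fact that $\id_R \otimes \Delta$ is linear, the supremum on the left is attained on pure $\rho_{RA} = |\psi\rangle\!\langle\psi|_{RA}$. Write $|\psi\rangle_{RA} = (I_R \otimes \sqrt{d_A}\, X_A) |\Phi_{d_A}\rangle$ for some operator $X_A$ with $\|X_A\|_2 = 1/\sqrt{d_A}$ coming from the vec-correspondence (equivalently, $|\psi\rangle_{RA}$ has reduced state $\rho_R = X_A^{\dagger} X_A \cdot d_A$, up to normalization bookkeeping).

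The key step is the inequality $\| (\id_R \otimes \Delta)(|\psi\rangle\!\langle\psi|) \|_1 \le \sqrt{d_R d_B}\, \| (\id_R \otimes \Delta)(|\psi\rangle\!\langle\psi|) \|_2$, which holds because the operator lives on a space of dimension $d_R d_B = d_A d_B$ and $\|Y\|_1 \le \sqrt{\operatorname{rank}(Y)}\,\|Y\|_2 \le \sqrt{\dim}\,\|Y\|_2$. Then I would bound the Hilbert--Schmidt norm $\| (\id_R \otimes \Delta)(|\psi\rangle\!\langle\psi|) \|_2$ by expanding in the Schmidt basis of $|\psi\rangle$ and using that the Hilbert--Schmidt norm of $(\id \otimes \Delta)$ applied to a rank-one input is controlled by the action of $\Delta$ on the corresponding (normalized) pure input states on $A$, with the Schmidt coefficients providing a probability weighting. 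This last point is where one must be slightly careful: one wants to pull out $\sup_{\rho} \|\Delta(\rho)\|_1$ where $\rho$ ranges over single-system states. The cleanest route is to bound $\|(\id_R \otimes \Delta)(|\psi\rangle\!\langle\psi|)\|_2 \le \max_{\varphi} \|\Delta(|\varphi\rangle\!\langle\varphi|)\|_2 \le \max_{\varphi}\|\Delta(|\varphi\rangle\!\langle\varphi|)\|_1 \le \sup_\rho \|\Delta(\rho)\|_1$, where the first inequality follows from writing $|\psi\rangle = \sum_i \sqrt{p_i}\,|i\rangle_R |\varphi_i\rangle_A$ and checking that the off-diagonal cross terms, organized via the operator-valued triangle/Cauchy--Schwarz structure, do not increase the $2$-norm beyond the worst diagonal block — or, more simply, by noting $\|(\id \otimes \Delta)(|\psi\rangle\!\langle\psi|)\|_2^2 = \sum_{i,j} p_i p_j \operatorname{Tr}[\Delta(|\varphi_i\rangle\!\langle\varphi_j|)^\dagger \Delta(|\varphi_i\rangle\!\langle\varphi_j|)]$ and bounding each term by Cauchy--Schwarz by $\|\Delta(|\varphi_i\rangle\!\langle\varphi_i|)\|_2\|\Delta(|\varphi_j\rangle\!\langle\varphi_j|)\|_2$ after using that $\Delta$ is Hermiticity-preserving and a completely-bounded-type estimate on $|\varphi_i\rangle\!\langle\varphi_j|$; this reassembles into $\big(\sum_i p_i \|\Delta(|\varphi_i\rangle\!\langle\varphi_i|)\|_2\big)^2 \le \max_i \|\Delta(|\varphi_i\rangle\!\langle\varphi_i|)\|_2^2$.

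Combining the two inequalities gives $\|(\id_R \otimes \Delta)(\rho_{RA})\|_1 \le \sqrt{d_A d_B}\,\sup_\rho\|\Delta(\rho)\|_1$ for every $\rho_{RA}$, which is the claim after taking the supremum on the left. The main obstacle I anticipate is making the Schmidt-expansion step fully rigorous without hand-waving the cross terms: one needs the estimate $\operatorname{Tr}[\Delta(|\varphi_i\rangle\!\langle\varphi_j|)^\dagger\Delta(|\varphi_i\rangle\!\langle\varphi_j|)] \le \|\Delta(|\varphi_i\rangle\!\langle\varphi_i|)\|_2\,\|\Delta(|\varphi_j\rangle\!\langle\varphi_j|)\|_2$, which can be obtained by polarization together with complete positivity of $\cN$ and $\cM$ individually (so that $\Delta$, while not positive, is a difference of CP maps, and each CP piece satisfies an operator Cauchy--Schwarz inequality $\cP(|\varphi_i\rangle\!\langle\varphi_j|)^\dagger\cP(|\varphi_i\rangle\!\langle\varphi_j|) \le \|\cP(|\varphi_j\rangle\!\langle\varphi_j|)\|\,\cP(|\varphi_i\rangle\!\langle\varphi_i|)$). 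An alternative, possibly cleaner, route avoids this entirely: use the known fact that for any Hermiticity-preserving map the stabilized trace norm over pure inputs equals $\max$ over the Schmidt-rank-one case only up to the dimension factor coming from $\| \cdot \|_1 \le \sqrt{d}\,\|\cdot\|_2$ applied on $R$ alone, i.e. bound $\|(\id_R\otimes\Delta)(|\psi\rangle\!\langle\psi|)\|_1 \le \sqrt{d_R}\,\|(\id_R\otimes\Delta)(|\psi\rangle\!\langle\psi|)\|_{1,R\text{-blockwise}}$, then contract $R$; I would present whichever of these is shortest given the paper's existing lemmas on norm comparisons.
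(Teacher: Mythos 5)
Your overall architecture matches the paper's: pass from the trace norm to the Hilbert--Schmidt norm at the cost of $\sqrt{d_A d_B}$, remove the reference system at the level of the $2$-norm, and then return to the trace norm for free. The paper executes the middle step by citing the stability (invariance under tensoring with the identity) of the induced superoperator norm $\Vert\cdot\Vert_{1\to2}$, i.e.\ Theorem~4 of Watrous's notes with $q=1\le 2\le p=2$. Your proposal instead tries to prove that middle step by hand via the Schmidt expansion, and this is where there is a genuine gap: the cross-term estimate $\Vert\Delta(|\varphi_i\rangle\!\langle\varphi_j|)\Vert_2^2 \le \Vert\Delta(|\varphi_i\rangle\!\langle\varphi_i|)\Vert_2\,\Vert\Delta(|\varphi_j\rangle\!\langle\varphi_j|)\Vert_2$ is false for a difference of channels. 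Take $\cN(X)=\sigma_Z X\sigma_Z^\dagger$ and $\cM=\id$ on a qubit, so that $\Delta=\cN-\cM$ satisfies $\Delta(|0\rangle\!\langle0|)=\Delta(|1\rangle\!\langle1|)=0$ while $\Delta(|0\rangle\!\langle1|)=-2|0\rangle\!\langle1|$; the claimed inequality then reads $4\le 0$. Consequently the bound your derivation produces, $\Vert(\id\otimes\Delta)(|\psi\rangle\!\langle\psi|)\Vert_2\le\max_i\Vert\Delta(|\varphi_i\rangle\!\langle\varphi_i|)\Vert_2$ with the maximum over the Schmidt vectors of $\psi$, fails for $\psi=\Phi_{2}$ in this example: the left side equals $\sqrt{2}$ and the right side equals $0$. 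The rescue you suggest---operator Cauchy--Schwarz applied to $\cN$ and $\cM$ separately---does not close the gap, because it controls the off-diagonal blocks of each CP piece by the diagonal blocks of that \emph{same} piece, not by the diagonal blocks of the difference $\Delta$, which can vanish even when the individual pieces do not.

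The fact you actually need is precisely the one the paper imports: for $q\le 2\le p$ the induced norm $\sup_{X}\Vert\Phi(X)\Vert_p/\Vert X\Vert_q$ is unchanged upon replacing $\Phi$ by $\id\otimes\Phi$. This is a genuinely nontrivial theorem whose proof does not proceed by a termwise Schmidt-block estimate, and it cannot be replaced by the pointwise inequality you propose. Your second ``cleaner route'' via an undefined ``$R$-blockwise'' norm is not an argument as stated. As written, the proposal does not establish the lemma; either invoke the $1\to2$ stability result directly (as the paper does) or supply a correct self-contained proof of it.
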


\begin{proof}
    Consider that
    \begin{align}
       \sup_{\rho_{RA}} \left\| (\id_R \otimes  (\cN-\cM))(\rho_{RA}) \right\|_1 &=   \sup_{\rho_{RA}} \frac{\left\| (\id_R \otimes  (\cN-\cM))(\rho_{RA}) \right\|_1}{\| \rho_{RA}\|_1} \\ 
       & \leq \sqrt{d_A d_B} \sup_{\rho_{RA}} \frac{\left\| (\id_R \otimes  (\cN-\cM))(\rho_{RA}) \right\|_2}{\| \rho_{RA}\|_1} \\
       & = \sqrt{d_A d_B}  \sup_{\rho} \frac{\left\| (\cN-\cM)(\rho) \right\|_2}{\| \rho\|_1} \\ 
       & \leq  \sqrt{d_A d_B}  \sup_{\rho} \frac{\left\| (\cN-\cM)(\rho) \right\|_1}{\| \rho\|_1} \\
       &= \sqrt{d_A d_B}   \sup_{\rho} \left\|   (\cN-\cM)(\rho) \right\|_1,
    \end{align}
    where the first inequality follows because $\left\|A\right\|_1 \leq \sqrt{d} \left\|A\right\|_2$ with $d=d_A \cdot d_B$ the dimension of~$A$; the second equality follows from the stability of the superoperator norms induced by the two norm, by using~\cite[Theorem~4]{watrous2004notes} with the choice $p=2 \geq 2$ and $q=1 \leq 2$ therein; and the last inequality again by the change of two norm to one norm using $\left\|A\right\|_2 \leq \left\|A\right\|_1$.
\end{proof}

\begin{remark}[Comparison to Existing Work]
For a qubit channel $\cN$ specified by the transformation $w_0 I +w\cdot\vec{\sigma} \rightarrow w_0 I+(t+Tw)\cdot\vec{\sigma}$, by applying the techniques used in the proof of~\cite[Proposition~3]{mele2024noise}, 
one can obtain that 
\begin{equation}
    T\!\left( \cN^{\otimes n}(\rho), \cN^{\otimes n}(\sigma) \right) \leq 8 n  \max_{p \in \{X,Y,Z\}} \left | \frac{D_p}{D_p -1} \right| T(\rho,\sigma),
\end{equation}
where $T=\operatorname{diag}(D_X, D_Y,D_Z)$.
Comparing it with~\cref{prop:n-noise-qubit}, we see that \cref{prop:n-noise-qubit} is tighter since $\|T\|_\infty \leq  \max_{p \in \{X,Y,Z\}} \left | \frac{D_p}{D_p -1} \right| $. With that~\cref{prop:n-noise-qubit} provides a non-trivial worst-case bound for the trace-distance contraction for a larger set of noise channels than previously known. Moreover, we improve the contraction coefficient of Wasserstein distance of order 1 of $\cN^{\otimes n}$ defined in~\eqref{eq:waase_cont} given in~\cite[Lemma~32]{mele2024noise} by establishing~\eqref{eq:wassers_n_qubit_cont}.
\end{remark}

Next, we provide an analytical expression for the expansion coefficient in~\eqref{eq:expansion_trace} for qubit channels.
\begin{lemma}[Expansion Coefficient for Qubit Channels]\label{prop:expansion_coeff_qubit}

   Let $\cN$ be a qubit channel specified by the transformation $w_0 I +w\cdot\vec{\sigma} \rightarrow w_0 I+(t+Tw)\cdot\vec{\sigma}$. Then, we have that 
   \begin{equation}
       \check{\eta}_{\Tr}(\cN)= \sigma_{\min}(T),
   \end{equation}
where $\check{\eta}_{\Tr}$ is defined in~\eqref{eq:expansion_trace} and $\sigma_{\min}(A)$ is the minimum singular value of the operator $A$.
\end{lemma}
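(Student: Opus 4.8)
The plan is to mirror the structure of the proof of \cref{lem:qubit_trace_contr}, which showed $\eta_{\Tr}(\cN) = \|T\|_\infty$, but now for the infimum over pairs of distinct states with identical outputs on the reference. Since $\check\eta_{\Tr}(\cN)$ (without reference system) infimizes the ratio $\|\cN(\rho)-\cN(\sigma)\|_1 / \|\rho-\sigma\|_1$ over $\rho\neq\sigma$, the first step is to argue, just as in \cref{lem:qubit_trace_contr}, that the non-unital part $t$ of the channel is irrelevant: for qubit states $\rho = \frac12(I + w\cdot\vec\sigma)$ and $\sigma = \frac12(I + w'\cdot\vec\sigma)$, we have $\cN(\rho) - \cN(\sigma) = \frac12 (T(w-w'))\cdot\vec\sigma$, so $\|\cN(\rho)-\cN(\sigma)\|_1 = \|T(w-w')\|_2$ and $\|\rho - \sigma\|_1 = \|w - w'\|_2$. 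Hence
\begin{align}
    \check\eta_{\Tr}(\cN) = \inf_{w \neq w'} \frac{\|T(w-w')\|_2}{\|w-w'\|_2},
\end{align}
where the infimum is over Bloch vectors $w, w'$ with $\|w\|_2, \|w'\|_2 \le 1$ and $w \neq w'$.

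The second step is to evaluate this infimum. Writing $v \coloneqq w - w'$, the set of achievable directions $v/\|v\|_2$ as $w,w'$ range over the unit ball (with $v\neq 0$) is the entire unit sphere in $\mathbb{R}^3$: for any unit vector $u$ and any sufficiently small $\varepsilon > 0$, one can take $w = \varepsilon u$, $w' = -\varepsilon u$, or more simply $w = \frac12 u$, $w' = -\frac12 u$. Thus the ratio $\|T v\|_2 / \|v\|_2$ ranges over $\{\|Tu\|_2 : \|u\|_2 = 1\}$, whose infimum is exactly $\sigma_{\min}(T)$, the smallest singular value of $T$, with the optimal $u$ being a right-singular vector of $T$ for that singular value. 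This yields $\check\eta_{\Tr}(\cN) = \sigma_{\min}(T)$. One should note that the restriction to the affine-hull/identical-marginal condition in the definition~\eqref{eq:expansion_trace} is automatically satisfied here since we are not using a reference system (so $\check\eta_{\Tr}(\cN) = \check\eta_{\Tr}(\cN,1)$ by definition), and the minimizing pair $w = \tfrac12 u$, $w' = -\tfrac12 u$ indeed gives distinct valid states.

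I do not anticipate a serious obstacle: both steps are short and structurally parallel to \cref{lem:qubit_trace_contr}. The one point requiring a little care is ensuring the infimum over $v\neq 0$ genuinely attains (or at least approaches) every unit direction while keeping $w,w'$ in the Bloch ball — but this is immediate from the symmetric choice above. A secondary subtlety worth a sentence is that if $T$ is singular ($\sigma_{\min}(T) = 0$), the infimum is still correctly $0$ and is attained by a null vector of $T$; this is consistent with $\check\eta_{\Tr}$ being allowed to vanish, as noted in the surrounding discussion. I would present the argument in roughly a half-page, emphasizing the reduction formula and the singular-value optimization, and citing \cref{Lem:cc-tr-qubit} for the Stokes-parameterization computation of $\cN(\rho) - \cN(\sigma)$.
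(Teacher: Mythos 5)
Your proposal is correct and follows essentially the same route as the paper's proof: drop the non-unital shift $t$ in the difference $\cN(\rho)-\cN(\sigma)$, reduce the ratio of trace distances to $\|T(w-w')\|_2/\|w-w'\|_2$, and identify the infimum over nonzero difference vectors with $\sigma_{\min}(T)$. Your extra remark that every unit direction is realizable as $(w-w')/\|w-w'\|_2$ within the Bloch ball is a point the paper passes over silently, but it does not change the argument.
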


\begin{proof}
    By recalling the definition of $\check{\eta}_{\Tr}$ in~\eqref{eq:expansion_trace}, we have 
    \begin{align}
        \check{\eta}_{\Tr}(\cN) = \inf_{\rho_1,\rho_2: T(\rho_1,\rho_2) \neq 0} \frac{T\!\left( \cN(\rho_1), \cN(\rho_2) \right)}{T(\rho_1,\rho_2)}.
    \end{align}
Also denote the qubit states as follows for $i \in \{1,2\}$:
\begin{align}
    \rho_i & = \frac{1}{2} (I + w_i\cdot  \vec{\sigma}), \\ 
  \cN(\rho_i) & =  \frac{1}{2} \left(I + (t+ T w_i)\cdot  \vec{\sigma} \right). 
\end{align}
Then, we arrive at the following:
\begin{align}
   \check{\eta}_{\Tr}(\cN) & = \inf_{w_1, w_2: \left\| w_1- w_2\right\|_2 \neq 0}  \frac{\left\| T( w_1 -w_2) \right\|_2}{\left\| w_1- w_2\right\|_2} \\
   &=\inf_{w': \left\| w'\right\|_2 \neq 0}  \frac{\left\| Tw' \right\|_2}{\left\| w'\right\|_2} \\ 
   &= \sigma_{\min}(T),
\end{align}
where the first equality follows by substituting qubit representations to the trace distance and using the fact that $T(\rho_1,\rho_2)= \left\| w_1- w_2\right\|_2$. 
\end{proof}

\subsection{Generalized Amplitude-Damping Channels}

Consider the generalized amplitude damping (GAD) channel $\cN_{p,\eta}$, defined for $p \in [0,1]$ and $\eta \in [0,1]$ in terms of the following Kraus operators:
\begin{align} \label{eq:kraus_GADC}
    &K_1 \coloneqq \sqrt{p} \begin{bmatrix}
        1 &  0 \\
        0 & \sqrt{\eta}
    \end{bmatrix},  \quad K_2\coloneqq \sqrt{p} \begin{bmatrix}
        0 & \sqrt{1-\eta} \\
        0 &  \quad 0
    \end{bmatrix}, \\ & K_3\coloneqq \sqrt{1-p} \begin{bmatrix}
        \sqrt{\eta} &  0  \\
        0 &   1
    \end{bmatrix}, \quad K_4\coloneqq \sqrt{1-p} \begin{bmatrix}
        0 &  0 \\
        \sqrt{1-\eta} &   0
    \end{bmatrix}, \label{eq:kraus_GADC_rest}
\end{align}
such that $\cN_{p,\eta}(\cdot) \coloneqq \sum_{i=1}^4 K_i (\cdot) K_i^\dag$. See~\cite[Sections~III-C and III-D]{KSW2020} for various interpretations and discussions of the generalized amplitude damping channel.

\begin{lemma}
\label{lem:Doeblin_GAD}
     For $p \in [0,1]$ and $\eta \in [0,1]$, the Doeblin coefficient of the generalized amplitude damping channel is given by
    \begin{align}
        \alpha(\cN_{p,\eta}) = (1-\sqrt{\eta})^2. 
        \label{eq:doeblin-coeff-GADC}
    \end{align}
\end{lemma}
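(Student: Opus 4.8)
The plan is to compute the Doeblin coefficient of the GAD channel via the qubit Stokes parameterization, using the formula $\alpha(\cN) = 1 - \lambda_1(T) - \lambda_2(T) + \lambda_3(T)$ established in \cref{prop:n-noise-qubit}'s neighborhood (i.e., the qubit Doeblin formula). First I would write the GAD channel $\cN_{p,\eta}$ in the form $w_0 I + w\cdot\vec\sigma \rightarrow w_0 I + (t + Tw)\cdot\vec\sigma$. A direct computation with the Kraus operators in~\eqref{eq:kraus_GADC}--\eqref{eq:kraus_GADC_rest} gives $T = \operatorname{diag}(\sqrt{\eta}, \sqrt{\eta}, \eta)$ and $t = (0,0,(2p-1)(1-\eta))^T$; only $T$ matters for the Doeblin coefficient, so I would record the $T$-matrix and move on. This step is a routine but necessary calculation: one evaluates $\cN_{p,\eta}$ on the Pauli basis $\{I,X,Y,Z\}$ and reads off the affine map.

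The key step is then to identify the ordered eigenvalues of $T = \operatorname{diag}(\sqrt\eta,\sqrt\eta,\eta)$. Since $\eta \in [0,1]$, we have $\eta \leq \sqrt\eta$, so the ordering is $\lambda_1(T) = \lambda_2(T) = \sqrt\eta$ and $\lambda_3(T) = \eta$. Plugging into the qubit Doeblin formula yields
\begin{equation}
\alpha(\cN_{p,\eta}) = 1 - \sqrt\eta - \sqrt\eta + \eta = 1 - 2\sqrt\eta + \eta = (1 - \sqrt\eta)^2,
\end{equation}
which is the claimed expression, and notably independent of $p$.

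I would present this as the main line of argument, but there is a subtlety worth flagging: the qubit Doeblin formula in \cref{prop:n-noise-qubit}'s vicinity is stated for $T$ \emph{diagonalizable with real eigenvalues} (it writes $\alpha(\cN) = 1 - \lambda_1(T) - \lambda_2(T) + \lambda_3(T)$ in terms of eigenvalues). For GAD, $T$ is already diagonal with nonnegative entries, so this applies directly and the distinction between eigenvalues and singular values is moot; I would just note this. The only genuine obstacle is getting the Stokes representation of the GAD channel correct — in particular confirming $T_{33} = \eta$ rather than $\sqrt\eta$, which is what makes the final answer $(1-\sqrt\eta)^2$ rather than something symmetric. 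Alternatively, as a cross-check, one could verify the answer directly from \cref{lem:cq_channel_doeblin}-style reasoning or from the primal SDP~\eqref{eq:alpha_cN_def}: exhibit a feasible Hermitian $X$ with $\operatorname{Tr}[X] = (1-\sqrt\eta)^2$ satisfying $I_A \otimes X \leq \Gamma^{\cN_{p,\eta}}$, and a dual-feasible $Y$ achieving the same value, but routing through the already-proven qubit formula is the cleanest path.
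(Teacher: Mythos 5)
Your proposal is correct, but it takes a genuinely different route from the paper. The paper proves \cref{lem:Doeblin_GAD} by brute force on the SDP: it exhibits the explicit primal-feasible operator $X=\operatorname{diag}\bigl(p+(1-p)\eta-\sqrt{\eta},\;p\eta+(1-p)-\sqrt{\eta}\bigr)$ and verifies $I\otimes X\leq\Gamma^{\cN_{p,\eta}}$ by hand, then matches it with the dual-feasible $Y$ in~\eqref{eq:Doeblin_Y_GAD} achieving the same value $(1-\sqrt{\eta})^2$ --- precisely the ``cross-check'' you relegate to a final remark. You instead route through the general qubit formula $\alpha(\cN)=1-\lambda_1(T)-\lambda_2(T)+\lambda_3(T)$, which the paper establishes in the immediately preceding subsection (so there is no circularity), together with the Stokes data $T=\operatorname{diag}(\sqrt{\eta},\sqrt{\eta},\eta)$, which the paper itself confirms in the proof of \cref{lem:GADC_contr_coef}. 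Your route is shorter given the qubit machinery already in place, and it makes transparent \emph{why} the answer is independent of $p$: only $T$, never the non-unital shift $t$, enters the formula. The paper's route is more self-contained and does not lean on the (somewhat delicate) reduction via unitary pre- and post-processing and the King tetrahedron; it also hands you the explicit optimizers, which are reused later (e.g., the same $Y$ reappears in the proof of \cref{lem:b_n_GADC}). Your flagged subtlety --- that the qubit formula is really about the signed singular values of $T$ obtained after the pre/post-processing of \cref{prop:simplified-choi-under-processing} --- is handled correctly, since for the GAD channel $T$ is already diagonal with nonnegative entries, so eigenvalues, singular values, and diagonal entries coincide and the ordering $\lambda_1=\lambda_2=\sqrt{\eta}\geq\eta=\lambda_3$ is as you state. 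The one step you should not wave away is the computation of $T$ itself: it is routine but is the entire content of the lemma once the qubit formula is granted, so it deserves to be written out (evaluate the channel on $X$, $Y$, $Z$ and read off the contraction factors $\sqrt{\eta},\sqrt{\eta},\eta$).
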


\begin{proof}
    By definition of the Doeblin coefficient, i.e., using the expression in~\eqref{eq:alpha_cN_def}, it is sufficient to find a feasible Hermitian operator $X$ in order to give a lower bound on the coefficient. We choose
    \begin{align}
        X = \begin{bmatrix}
            p + (1-p)\eta-\sqrt{\eta} & 0 \\
            0 & p\eta+(1-p)-\sqrt{\eta}
        \end{bmatrix}.
    \end{align}
    For this choice of $X$ to be feasible, the following matrix inequality should hold:
    \begin{align}
        0 \leq \Gamma^{\cN}_{AB} - I_A\otimes  X_B = \begin{bmatrix}
            \sqrt{\eta} & 0 & 0 & \sqrt{\eta} \\
            0 & \sqrt{\eta}-\eta & 0 & 0 \\
            0 &  0 & \sqrt{\eta}-\eta & 0 \\
            \sqrt{\eta} & 0 & 0 & \sqrt{\eta}
        \end{bmatrix}. 
    \end{align}
    It can easily be seen to hold, because $0\leq\eta\leq1$. The Choi matrix for the generalized amplitude damping channel can be found in~\cite[Eq.~(105)]{KSW2020} and~\cite[Appendix C]{hirche2024quantum}. The following inequality thus holds: 
    \begin{align}
        \alpha(\cN_{p,\eta}) \geq \tr [X] = (1-\sqrt{\eta})^2.
        \label{eq:doeblin-coeff-GADC-LB}
    \end{align}
    
   To arrive at the opposite inequality, let us employ the dual program in~\eqref{eq:alpha-cN-dual} with the specific choice  of 
\begin{equation}\label{eq:Doeblin_Y_GAD}
   Y_{AB}= \begin{bmatrix}
        1 & 0 & 0 & -1 \\
         0 & 0 & 0 & 0 \\
          0 & 0 & 0 & 0 \\
           -1 & 0 & 0 & 1 
    \end{bmatrix}.
\end{equation}
Here, we have that $\Tr_A[Y_{AB}] = I$ and $Y_{AB} \geq 0$ (eigenvalues being $0$ and $2$), making it a feasible solution for the dual program. For that choice,
\begin{equation}
    \Tr\!\left[\Gamma^{\cN_{p,\eta}}_{AB} Y_{AB}\right] = p+ (1-p) \eta - 2\sqrt{\eta} + p\eta +(1-p) = 1- 2\sqrt{\eta} + \eta =\left (1-\sqrt{\eta} \right)^2
\end{equation}
so that 
\begin{equation}
   \left (1-\sqrt{\eta} \right)^2  \geq \alpha(\cN_{p,\eta}).
        \label{eq:doeblin-coeff-GADC-UB}
\end{equation}
Combining~\eqref{eq:doeblin-coeff-GADC-LB} and~\eqref{eq:doeblin-coeff-GADC-UB}, we conclude~\eqref{eq:doeblin-coeff-GADC}.
\end{proof}

\cref{lem:Doeblin_GAD} implies the following inequality:
\begin{align}
    \eta_{\tr}(\cN_{p,\eta}) \leq 1-(1-\sqrt{\eta})^2. 
\end{align}
The right-hand side is strictly smaller than one for all $\eta<1$, and it is independent of $p$.

\begin{lemma} \label{lem:b_n_GADC}
         For $p \in [0,1]$ and $\eta \in [0,1]$, the quantity $\alpha_{\wang}(\cN_{p,\eta})$ for the generalized amplitude damping channel $\cN_{p,\eta}$ is given by
    \begin{align}
        \alpha_{\wang}(\cN_{p,\eta}) = \begin{cases}
            \frac{2p(1-2p)(1-\eta)^2}{\eta+2p(1-\eta)} & p\leq \frac{\sqrt{\eta}}{2(1+\sqrt{\eta})} \\
            (1-\sqrt{\eta})^2 & \frac{\sqrt{\eta}}{2(1+\sqrt{\eta})} \leq p \leq 1 -\frac{\sqrt{\eta}}{2(1+\sqrt{\eta})} \\
            \frac{2(1-p)(2p-1)(1-\eta)^2}{\eta+2(1-p)(1-\eta)} & p\geq 1-\frac{\sqrt{\eta}}{2(1+\sqrt{\eta})}
        \end{cases}. 
        \label{eq:b-coeff-GADC}
    \end{align}
\end{lemma}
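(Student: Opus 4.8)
## Proof Plan for Lemma (Computing $\alpha_{\wang}(\cN_{p,\eta})$)

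The plan is to evaluate the SDP in \eqref{eq:primal-b-relaxed-doeblin} directly, using the explicit form of the Choi operator of the GAD channel. First I would recall from \cite[Eq.~(105)]{KSW2020} (or Appendix~C of \cite{hirche2024quantum}) that
\begin{equation}
\Gamma^{\cN_{p,\eta}}_{AB} = \begin{bmatrix}
p + (1-p)\eta & 0 & 0 & \sqrt{\eta}\\
0 & (1-p)(1-\eta) & 0 & 0\\
0 & 0 & p(1-\eta) & 0\\
\sqrt{\eta} & 0 & 0 & p\eta + (1-p)
\end{bmatrix},
\end{equation}
which has block structure: a $2\times 2$ block on the $\{|00\rangle,|11\rangle\}$ subspace and two diagonal entries on $|01\rangle,|10\rangle$. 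Since the constraint $-\Gamma^{\cN}_{AB} \le I_A \otimes X_B \le \Gamma^{\cN}_{AB}$ only couples $X_B$ to the diagonal of $\Gamma^{\cN}_{AB}$ in the off-diagonal positions but to the full $2\times2$ block along the ``diagonal'' positions, I would first argue (using a twirling/pinching argument with the diagonal channel $\Delta(\cdot) = \sum_i |i\rangle\!\langle i|(\cdot)|i\rangle\!\langle i|$ on $B$, analogous to the measurement-channel argument in \cref{prop:doeblin-of-measurement-channel}) that it suffices to take $X_B = \mathrm{diag}(x_0, x_1)$ diagonal. Then the two operator inequalities decouple: on the $|01\rangle$ component we need $|x_0| \le (1-p)(1-\eta)$; on the $|10\rangle$ component we need $|x_1| \le p(1-\eta)$; and on the $\{|00\rangle,|11\rangle\}$ block, writing $a = p+(1-p)\eta$, $d = p\eta+(1-p)$, $c=\sqrt{\eta}$, we need both $\begin{bmatrix} a - x_0 & c \\ c & d - x_1\end{bmatrix} \ge 0$ and $\begin{bmatrix} a + x_0 & c \\ c & d + x_1\end{bmatrix} \ge 0$.

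Next I would convert the $2\times2$ positive-semidefiniteness conditions into scalar inequalities via the determinant/trace criterion: $a - x_0 \ge 0$, $d - x_1 \ge 0$, $(a-x_0)(d-x_1) \ge \eta$, together with $a + x_0 \ge 0$, $d + x_1 \ge 0$, $(a+x_0)(d+x_1) \ge \eta$. The optimization is then
\begin{equation}
\alpha_{\wang}(\cN_{p,\eta}) = \sup\left\{ x_0 + x_1 \ :\ |x_0| \le (1-p)(1-\eta),\ |x_1|\le p(1-\eta),\ (a\mp x_0)(d\mp x_1) \ge \eta \right\}.
\end{equation}
This is a small concave maximization over a two-dimensional region. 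I would analyze it via Lagrange/KKT or by a direct geometric case split: the binding constraints are either the box constraints $|x_0| = (1-p)(1-\eta)$, $|x_1| = p(1-\eta)$ (which, when $p$ is in the middle range, reproduce exactly the feasible $X$ from \cref{lem:Doeblin_GAD} and give $\alpha_{\wang} = (1-\sqrt{\eta})^2 = \alpha(\cN_{p,\eta})$), or one of the hyperbola constraints $(a+x_0)(d+x_1) = \eta$ or $(a-x_0)(d-x_1)=\eta$ is active (which happens for small $p$ or $p$ near $1$ respectively, by symmetry $p \leftrightarrow 1-p$ exchanging $x_0 \leftrightarrow x_1$). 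In the small-$p$ regime I would substitute $x_1 = p(1-\eta)$ (its box bound, which I expect is tight there), maximize $x_0$ subject to $(a+x_0)(d + p(1-\eta)) = \eta$, i.e. $x_0 = \eta/(d + p(1-\eta)) - a$, simplify using $d + p(1-\eta) = 1 - p + p\eta + p - p\eta = \dots$ — more precisely $a = p + (1-p)\eta$, $d + p(1-\eta) = p\eta + 1 - p + p - p\eta = 1$... I would recompute carefully, but the algebra should collapse to $\frac{2p(1-2p)(1-\eta)^2}{\eta + 2p(1-\eta)}$ after combining $x_0 + x_1$; the threshold $p = \frac{\sqrt\eta}{2(1+\sqrt\eta)}$ should emerge as the value of $p$ where the hyperbola solution for $x_0$ meets its box bound $(1-p)(1-\eta)$.

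To make the upper bound rigorous I would exhibit, for each regime, a dual-feasible pair $(Y^1_{AB}, Y^2_{AB})$ in \eqref{eq:dual-of-bN-quantity} achieving the claimed value — in the middle regime one can reuse (a symmetrized version of) the $Y$ from \eqref{eq:Doeblin_Y_GAD} with $Y^1 = 0$, and in the outer regimes one picks $Y^1, Y^2$ supported appropriately on the $\{|00\rangle,|11\rangle\}$ block plus the relevant off-diagonal component so that complementary slackness with the primal optimizer holds. The main obstacle I anticipate is purely the bookkeeping of the case analysis: verifying that the three pieces in \eqref{eq:b-coeff-GADC} glue continuously at $p = \frac{\sqrt\eta}{2(1+\sqrt\eta)}$ and its reflection, and confirming in each regime which of the six scalar constraints are simultaneously active at the optimum — this requires checking the signs of the KKT multipliers (equivalently, checking primal and dual feasibility of the guessed solutions), rather than any deep idea. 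The reduction to diagonal $X$ and the $2\times2$ determinant criterion are the conceptual steps; everything after is a finite computation.
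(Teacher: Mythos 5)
Your overall strategy---reduce the primal SDP in \eqref{eq:primal-b-relaxed-doeblin} to a two-variable optimization over diagonal $X=\operatorname{diag}(x_0,x_1)$ and solve it exactly by a case analysis on which constraints bind---is sound and is genuinely different from the paper's proof, which simply exhibits explicit primal-feasible $X$ and dual-feasible pairs $(Y^1,Y^2)$ for \eqref{eq:dual-of-bN-quantity} in each of the three regimes and verifies that the values match. Your route, if carried out correctly, would be more systematic (it derives the optimizers and the thresholds rather than guessing them), and it would make the dual certificates unnecessary since the reduced problem is solved exactly. However, as written the proposal has two concrete gaps.

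First, the reduction to diagonal $X$ does not follow from pinching with $\id_A\otimes\Delta_B$ as you claim. Unlike the measurement-channel case of \cref{prop:doeblin-of-measurement-channel}, the Choi operator $\Gamma^{\cN_{p,\eta}}$ is \emph{not} invariant under $\id_A\otimes\Delta_B$: the dephasing on $B$ alone annihilates the off-diagonal $\sqrt{\eta}\,|00\rangle\!\langle 11|$ entries, so from $I\otimes X\leq\Gamma^{\cN}$ you only get $I\otimes\Delta(X)\leq(\id\otimes\Delta)(\Gamma^{\cN})\neq\Gamma^{\cN}$, which is not the feasibility you need. The correct argument uses the $Z$-covariance of the GAD channel: $\Gamma^{\cN_{p,\eta}}$ is invariant under conjugation by $Z_A\otimes Z_B$, and averaging over $\{I\otimes I, Z\otimes Z\}$ sends $I\otimes X$ to $I\otimes\Delta(X)$ while preserving both operator inequalities and $\operatorname{Tr}[X]$. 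Second, your scalar constraints are mislabeled and the ensuing case analysis goes wrong. The $|01\rangle$ diagonal entry of $I\otimes X$ is $x_1$ (not $x_0$) and the $|10\rangle$ entry is $x_0$, so the box constraints are $|x_1|\leq(1-p)(1-\eta)$ and $|x_0|\leq p(1-\eta)$, the reverse of what you wrote. In the small-$p$ regime the binding constraints are $x_0=-p(1-\eta)$ (the box constraint on $x_0$, which is exactly the threshold condition $p\geq\sqrt{\eta}/(2(1+\sqrt{\eta}))$ for the unconstrained optimizer $x_0=a-\sqrt{\eta}$ of \cref{lem:Doeblin_GAD} to remain feasible) together with the determinant condition $(a-x_0)(d-x_1)=\eta$, which gives $x_1=d-\eta/(\eta+2p(1-\eta))$ and hence the stated value; the box constraint on $x_1$ is slack there. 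Following your substitutions literally---setting $x_1=p(1-\eta)$ and solving $(a+x_0)(d+x_1)=\eta$, using $d+p(1-\eta)=1$---yields $x_0=-p(1-\eta)$ and therefore $x_0+x_1=0$, not $\tfrac{2p(1-2p)(1-\eta)^2}{\eta+2p(1-\eta)}$. Both issues are repairable, but the proof as sketched does not yet establish the lemma.
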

\begin{proof}
For the middle case, it is sufficient to show that the solution to $\alpha(\cN_{p,\eta})$ also fulfills the stronger conditions here.  
For the first case, we start by showing that the given solution is a lower bound. Consider the matrix
    \begin{align}
        X = \begin{bmatrix}
            -p(1-\eta) & 0 \\
            0 & p\eta +1 -p - \frac{\eta}{\eta+2p(1-\eta)}
        \end{bmatrix}.
    \end{align}
    For this to be a feasible solution we need to show that,
    \begin{align}
        0 &\leq \Gamma^{\cN}_{AB} - I_A\otimes  X_B, \\
        0 &\leq \Gamma^{\cN}_{AB} + I_A\otimes  X_B,
        \end{align} 
        which can again be done by checking positivity by explicit calculation. The bound is then exactly given by $\tr X$. For the third case, the lower bound follows similarly by choosing 
        \begin{align}
        X = \begin{bmatrix}
            (1-p)\eta +p - \frac{\eta}{\eta+2(1-p)(1-\eta)} & 0 \\
            0 & -(1-p)(1-\eta)
        \end{bmatrix}.
    \end{align}

    Now we move on to prove the converse. For the middle case, the upper bound is established by choosing the same $Y$ in~\eqref{eq:Doeblin_Y_GAD} as a specific choice for $Y_2=Y$ and $Y_1=0$ in the dual SDP of $\alpha_{\wang}(\cN)$ presented in~\eqref{eq:dual-of-bN-quantity}.

    For the third case, consider \footnote{Note that the following choices were made with the support of numerical simulations and the respective optimizers.}
    \begin{equation}
      Y_1=  \begin{bmatrix}
            0 & 0 & 0 & 0\\
             0 & x_1 & 0 & 0\\
              0 & 0 & 0 & 0\\
               0 & 0 & 0 & 0
        \end{bmatrix}, \quad Y_2=  \begin{bmatrix}
            1 & 0 & 0 & x_2\\
             0 & 0 & 0 & 0\\
              0 & 0 & 0 & 0\\
               x_2 & 0 & 0 & 1+x_1
        \end{bmatrix},
           \end{equation}
with 
\begin{align}
    x_1 &\coloneqq \frac{(1-\eta) \left( \eta (1-2p)^2 -4(1-p)^2\right)}{\left((2p-1)\eta +2(1-p)\right)^2}, \\
    x_2 &\coloneqq \frac{-\sqrt{\eta}}{(2p-1)\eta +2(1-p)}.
\end{align}
The chosen $Y_1$ and $Y_2$ are feasible choices for the dual SDP since $Y_1, Y_2 \geq 0$ and $\Tr_A[Y_2-Y_1]=I_B$. This follows by the fact that $x_1 \geq 0$ for $1 \geq p \geq 1- (\sqrt{\eta}- \eta)/(2(1-\eta))$ and $1+x_1 \geq 0$ with $(1+x_1)-x_2^2\geq 0$. Furthermore, we have 
\begin{equation}
    \Tr[\Gamma^\cN_{AB} (Y_1 +Y_2)]= \frac{2(1-p)(2p-1) (1-\eta)^2} {(2p-1)\eta +2(1-p)} \geq b(\cN_{p,\eta}).
\end{equation}

 For the first case to obtain the desired upper bound, choose 
 \begin{equation}
      Y_1=  \begin{bmatrix}
            0 & 0 & 0 & 0\\
             0 & 0 & 0 & 0\\
              0 & 0 & x_1 & 0\\
               0 & 0 & 0 & 0
        \end{bmatrix}, \quad Y_2=  \begin{bmatrix}
            1+x_1& 0 & 0 & x_2\\
             0 & 0 & 0 & 0\\
              0 & 0 & 0 & 0\\
               x_2 & 0 & 0 & 1
        \end{bmatrix}
           \end{equation}
        and the proof follows by replacing $p$ with $(1-p)$.
\end{proof}

\begin{remark}[Multiplicativity]
    Due to multiplicativity of $\alpha_\wang(\mathcal{N})$ (\cref{prop:multiplicativity-b-N}), we also conclude from~\cref{lem:b_n_GADC} and~\cref{Cor:mult-alpha-from-alpha-wang} that 
$\alpha\!\left(\cN_{p,\eta}^{\otimes n} \right) =\left(\alpha(\cN_{p,\eta})\right)^n$  whenever 
\begin{equation}
    \frac{\sqrt{\eta}}{2(1+\sqrt{\eta})} \leq p \leq 1 -\frac{\sqrt{\eta}}{2(1+\sqrt{\eta})} 
\end{equation}
    since $\alpha(\cN_{p,\eta})=\alpha_\wang(\cN_{p,\eta})$ in that regime.
\end{remark}

For comparison, we also give exact expressions for the contraction and expansion coefficients. 
\begin{lemma}[Contraction Coefficient for the GAD Channel]\label{lem:GADC_contr_coef}
    Let $p \in [0,1]$ and $\eta \in [0,1]$. Then, the contraction coefficient and expansion coefficient of the GAD channel is characterized as follows:
    \begin{align}
        \eta_{\Tr}(\cN_{p,\eta}) &= \sqrt{\eta} ,\\ 
        \check{\eta}_{\Tr}(\cN_{p,\eta}) &= {\eta}.
    \end{align}
\end{lemma}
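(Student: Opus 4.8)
The plan is to compute both coefficients by exploiting the qubit structure established earlier in the paper. For a qubit channel $\cN$ given by the transformation $w_0 I +w\cdot\vec{\sigma} \rightarrow w_0 I+(t+Tw)\cdot\vec{\sigma}$, both $\eta_{\Tr}(\cN)$ and $\check{\eta}_{\Tr}(\cN)$ depend only on the linear part $T$: by \cref{lem:qubit_trace_contr} we have $\eta_{\Tr}(\cN) = \|T\|_\infty = \sigma_{\max}(T)$, and by \cref{prop:expansion_coeff_qubit} we have $\check{\eta}_{\Tr}(\cN) = \sigma_{\min}(T)$. So the entire task reduces to identifying the matrix $T$ for the GAD channel $\cN_{p,\eta}$ and reading off its extreme singular values.

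First I would compute the action of $\cN_{p,\eta}$ on the Pauli basis using the Kraus operators in \eqref{eq:kraus_GADC}--\eqref{eq:kraus_GADC_rest}. A direct computation (which can also be extracted from the Choi matrix referenced in \cite[Eq.~(105)]{KSW2020} or \cite[Appendix C]{hirche2024quantum}) gives that $\cN_{p,\eta}$ sends $\sigma_X \mapsto \sqrt{\eta}\,\sigma_X$, $\sigma_Y \mapsto \sqrt{\eta}\,\sigma_Y$, and $\sigma_Z \mapsto \eta\,\sigma_Z + (2p-1)(1-\eta) I$, so that in the Stokes parameterization $T = \operatorname{diag}(\sqrt{\eta}, \sqrt{\eta}, \eta)$ and $t = (0,0,(2p-1)(1-\eta))^{T}$. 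Since $T$ is diagonal with nonnegative entries, its singular values are exactly $\{\sqrt{\eta}, \sqrt{\eta}, \eta\}$.

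Then, since $\eta \in [0,1]$ implies $\eta \leq \sqrt{\eta}$, we get $\sigma_{\max}(T) = \sqrt{\eta}$ and $\sigma_{\min}(T) = \eta$. Plugging these into \cref{lem:qubit_trace_contr} and \cref{prop:expansion_coeff_qubit} respectively yields $\eta_{\Tr}(\cN_{p,\eta}) = \sqrt{\eta}$ and $\check{\eta}_{\Tr}(\cN_{p,\eta}) = \eta$, as claimed. I would also note the sanity check against \cref{lem:Doeblin_GAD}: $\eta_{\Tr}(\cN_{p,\eta}) = \sqrt{\eta} \leq 1 - (1-\sqrt{\eta})^2 = 2\sqrt{\eta} - \eta = 1 - \alpha(\cN_{p,\eta})$, consistent with \cref{prop:trace_distance_complete_cont_Doeblin_bound}; similarly $\check{\eta}_{\Tr}(\cN_{p,\eta}) = \eta \geq 1 - \check{\alpha}(\cN_{p,\eta})$ should hold.

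There is essentially no hard step here — the main obstacle is just carefully carrying out the Pauli-transfer-matrix computation for the GAD channel without sign or normalization errors, in particular tracking that the off-diagonal coherences scale by $\sqrt{\eta}$ while the population-difference direction ($\sigma_Z$) scales by $\eta$ and picks up the non-unital shift $t$. Once $T$ is in hand, everything follows immediately from the two lemmas already proved for general qubit channels, so the proof is short.
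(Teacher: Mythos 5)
Your proof is correct and follows essentially the same route as the paper: identify the Stokes transfer matrix $T=\operatorname{diag}(\sqrt{\eta},\sqrt{\eta},\eta)$ for the GAD channel and then invoke \cref{Lem:cc-tr-qubit} and \cref{prop:expansion_coeff_qubit} to read off $\|T\|_\infty=\sqrt{\eta}$ and $\sigma_{\min}(T)=\eta$. The explicit Pauli-transfer computation (including the shift $t=(0,0,(2p-1)(1-\eta))^{T}$) and the consistency check against \cref{lem:Doeblin_GAD} are correct but are just added detail on top of the paper's argument.
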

\begin{proof}
    Note that for the GAD channel with parameters $p$ and $\eta$, $T=\operatorname{diag}(\sqrt{\eta}, \sqrt{\eta},\eta)$ (recall the qubit channel mapping $w_0 I +w\cdot\vec{\sigma} \rightarrow w_0 I+(t+Tw)\cdot\vec{\sigma}$.). The, the proof follows by directly applying \cref{Lem:cc-tr-qubit} and~\cref{prop:expansion_coeff_qubit} since $\|T\|_\infty= \sqrt{\eta}$ and $\sigma_{\min}(T)= \eta$.
\end{proof}

\section{Applications of Quantum Doeblin Coefficients} \label{Sec:Applications}

In this section, we discuss how quantum Doeblin coefficients and their properties can be utilized in a diverse range of applications, including limitations on noise-induced barren plateaus, error mitigation protocols, noisy hypothesis testing, fairness of quantum learning models, and mixing times of time-inhomogeneous quantum Markov chains. Our analysis in terms of quantum Doeblin
coefficients provides insights over previous works, and we obtain efficiently computable limitations in all of these applications when the  systems involved are noisy.  

\subsection{Noise-Induced Barren Plateaus}

\label{sec:noise-induced-barren-plateaus}

Variational quantum algorithms and quantum neural networks have been considered in the context of near-term quantum computing~\cite{Cerezo2021vqa}, with the possibility of using quantum and classical resources in tandem for learning and inference tasks. There, the quantum computational resources are (mostly) used to estimate a cost function and then classical computational resources are used to optimize the parameters of the variational quantum circuits by minimizing the cost function estimated. 
The barren plateau phenomenon occurs when the optimization landscape becomes extremely flat and thus the parameters of a quantum circuit  become untrainable when using gradient-based optimizers~\cite{mcclean2018barren,larocca2024review_barren_plateaus}. 

In practice, near-term quantum devices are noisy. As such, it is a natural question to understand how well one can train parameterized quantum circuits in the presence of noisy gates. This was originally studied under a model in which  Pauli qubit noise or depolarizing noise acts on  each qubit before and after each unitary in a quantum circuit~\cite{wang2021noise}. The authors of~\cite{wang2021noise} showed that, when noise is present, it is  harder to train the parameters because the partial derivatives of the cost function with respect to the parameters decay exponentially with the depth of the circuit. This phenomenon was termed ``noise-induced barren plateaus''. Then, Ref.~\cite{schumann2024emergence} explored this phenomenon when the parameter shift rule is used as the gradient estimation technique and when the noisy channel $\cN$ has a non-zero  trace distance contraction coefficient 
 (i.e., $\eta_{\Tr}(\cN) <1$). 
 Moreover, Ref.~\cite{schumann2024emergence} showed that noise-induced barren plateaus occur when the parameterized unitary gates in
each layer form a (global) unitary two-design (i.e., when $\cU_i(\vec{\theta}_i)$ in \eqref{eq:noisy_QC} realizes a unitary two-design, for all $i$).
Ref.~\cite{mele2024noise} extended the study of noise-induced barren plateaus with non-unital noise, for random quantum circuits that form local unitary two-designs and such that qubit local noise channels are interleaved with unitary gates. Note that~\cite{mele2024noise}  focused on the average case and showed that non-unital noise may help in avoiding vanishing variance of a (local) cost function under the selected random circuit structure; they arrived at the same conclusions on the trainability of the parameters with the depth of the circuit for cost functions with global observables. 

However, previous works focused only on a subset of circuit architectures and noise models (e.g., random quantum circuits, circuits with Pauli qubit noise) 
and most of those bounds are hard to  compute efficiently (e.g., estimating $\eta_{\Tr}(\cN)$ for noise channels). In this work, we extend the study of the impact of noise on the trainability of quantum circuits to more general noise models and architectures, and we devise efficiently computable bounds using Doeblin coefficients.

Let us consider the following model for the quantum channel realized by a noisy parameterized quantum circuit:
\begin{equation}
\label{eq:noisy_QC} \cM_{(\vec{\theta}_1,\ldots, \vec{\theta}_D )}\coloneqq \id_R \otimes 
\left(\cN_D \circ \cU_D(\vec{\theta}_D) \circ \cdots \circ \cN_1 \circ \cU_1(\vec{\theta}_1) \right),
\end{equation}
where each $\cN_i$ is a noisy channel and each $\cU_i$ is a parameterized unitary channel for $i \in \{1,\ldots,D\}$. Also, suppose that each unitary channel $ \cU_i(\vec{\theta}_i)=  U_i(\vec{\theta}_i)(\cdot) U_i(\vec{\theta}_i)^\dag $ is comprised of the following unitary operators:
\begin{equation} \label{eq:unitary_operator}
    U_i(\vec{\theta}_i) = \prod_{j=1}^{J_i} \exp\!\left(-\frac{i \theta_i^j H_i^j} {2}\right),
\end{equation}
with each $H_i^j$ a Hermitian operator satisfying $\|H_{i}^{j}\|_\infty \leq 1$. Also, note that our analysis applies to the case in which these unitaries act on $n$ qudits (our results boil down to the qubit case simply by setting $d=2$).

Let the cost function to be evaluated be as follows:
\begin{equation}
\label{eq:cost_function}
    C(\vec{\theta}_1,\ldots, \vec{\theta}_D ) \coloneqq \Tr\!\left[ O \cM_{(\vec{\theta}_1,\ldots, \vec{\theta}_D )}(\rho_0)\right],
\end{equation}
for an observable $O$ and an initial state $\rho_0$ acting on the Hilbert space~$\cH_R \otimes \cH_A$ of the joint system of $R$ and $A$.

Next, we establish an upper bound on the partial derivatives of the cost function with respect to the circuit parameters (e.g., rotation angles) given a noisy quantum circuit taking the form of~\eqref{eq:noisy_QC}. In the proof of \cref{prop:General_noise_BP}, we make use of some techniques from~\cite{schumann2024emergence,mele2024noise} (trace distance contraction and Taylor expansions of partial derivatives around a point), together with the contraction bounds based on Doeblin coefficients, as studied in our paper.

\begin{proposition}[Barren Plateaus with General Noisy Circuits and Gradient Based Optimization] \label{prop:General_noise_BP}
Fix $D\in\mathbb{N}$, and let $i \in \{1,\ldots, D\}$ and $j \in \{1,\ldots, J_i\}$. Then, we have the following upper bound on the partial derivatives of the cost function in~\eqref{eq:cost_function}:
\begin{align}
 \left| \frac{\partial C(\vec{\theta}_1,\ldots, \vec{\theta}_D )}{\partial{\theta_i^j}} \right| 
 & \leq C_1 \left\| O \right\|_\infty \left( \prod_{k=i}^{D} \left( 1-\alpha(\cN_k) \right) \right)^{2/3},
 \label{eq:1st-ineq-barren-plateaus-general-noisy}
\end{align}  
where 
\begin{equation}
    C_1 \coloneqq   \left(\frac{8}{3} \right)^{1/3} + \frac{4}{3} \left(\frac{3}{8} \right)^{2/3}\leq 2.0801.
\end{equation}

For the setting in which $\cN_i= \cN$ for all $i\in \{1,\ldots, D\}$,
we have that
\begin{align}
 \left| \frac{\partial C(\vec{\theta}_1,\ldots, \vec{\theta}_D )}{\partial{\theta_i^j}} \right| 
 & \leq C_1 \left\| O \right\|_\infty \left( 1-\alpha(\cN) \right)^{2(D-i+1)/3}.
 \label{eq:2nd-ineq-barren-plateaus-general-noisy}
\end{align}  
\end{proposition}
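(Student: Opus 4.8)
## Proof Proposal

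The plan is to bound the partial derivative by tracking how the cost function changes when we perturb a single parameter $\theta_i^j$, and then exploit the fact that everything downstream of layer $i$ is subject to repeated noisy channels $\cN_i, \dots, \cN_D$, each of which contracts trace distance by a factor $1-\alpha(\cN_k)$ via \cref{prop:trace_distance_complete_cont_Doeblin_bound} (in the complete, reference-system-allowing form, which is essential since the observable acts on $\cH_R \otimes \cH_A$). First I would write the derivative $\partial C / \partial \theta_i^j$ using the generator $H_i^j$: differentiating $U_i(\vec\theta_i)$ produces a commutator $[H_i^j, \cdot]$ (up to factors of $i/2$) inserted at the appropriate point in the circuit, so that $\partial C/\partial\theta_i^j = \Tr[O \, \Lambda(\sigma)]$ where $\Lambda \coloneqq \cN_D \circ \cU_D \circ \cdots \circ \cN_i$ is the ``tail'' channel from layer $i$ onward (tensored with $\id_R$), and $\sigma$ is a traceless Hermitian operator of bounded trace norm arising from the commutator structure applied to the state produced by the first $i-1$ layers. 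Since $\|H_i^j\|_\infty \le 1$, one gets $\|\sigma\|_1 \le 2$ or a similar explicit constant.

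Next I would bound $|\Tr[O\,\Lambda(\sigma)]| \le \|O\|_\infty \|\Lambda(\sigma)\|_1$ by Hölder, and then contract $\|\Lambda(\sigma)\|_1$ using the tail channel. The key point is that $\Lambda$ contains the composition $\cN_D \circ \cdots \circ \cN_i$ interleaved with unitaries; unitaries are isometric and do not change the trace norm, and by \cref{prop:concatenation} (concatenation/submultiplicativity of $1-\alpha$) together with \cref{prop:trace_distance_complete_cont_Doeblin_bound}, the trace norm of a traceless Hermitian operator shrinks by at least $\prod_{k=i}^D (1-\alpha(\cN_k))$ when pushed through the tail. This already gives a bound with exponent $1$ on the product; the exponent $2/3$ in \eqref{eq:1st-ineq-barren-plateaus-general-noisy} must come from a more careful argument. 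The main obstacle — and where I would spend most of the effort — is recovering the exponent $2/3$ and the explicit constant $C_1 \le 2.0801$. I expect this follows the technique of \cite{schumann2024emergence, mele2024noise}: split the analysis by interpolating between a crude trace-norm bound (exponent $0$, but with a constant scaling as the dimension or a fixed number) and the contraction bound (exponent $1$), optimizing a free parameter. Concretely, one writes the perturbed cost function via a Taylor expansion of $C$ around the point where $\theta_i^j$ is shifted, uses periodicity/boundedness of $C$ in $\theta_i^j$ to bound one derivative combination, and balances the two estimates — the optimization $\min_{t>0}\{ a t^{-2} + b\, t \cdot x \}$ type calculation yields the $2/3$ power and the constant $(8/3)^{1/3} + \tfrac43 (3/8)^{2/3}$.

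Finally, the second inequality \eqref{eq:2nd-ineq-barren-plateaus-general-noisy} is immediate: setting $\cN_k = \cN$ for all $k$ makes $\prod_{k=i}^D (1-\alpha(\cN_k)) = (1-\alpha(\cN))^{D-i+1}$, and substituting into \eqref{eq:1st-ineq-barren-plateaus-general-noisy} gives the stated bound with exponent $2(D-i+1)/3$. I would present the proof in the order: (1) reduce the derivative to $\Tr[O\Lambda(\sigma)]$ with explicit control on $\|\sigma\|_1$; (2) establish the pure contraction bound $\|\Lambda(\sigma)\|_1 \le \prod_{k=i}^D(1-\alpha(\cN_k)) \|\sigma\|_1$ using \cref{prop:trace_distance_complete_cont_Doeblin_bound} and isometric invariance; (3) run the interpolation/Taylor argument to upgrade to the $2/3$ exponent and pin down $C_1$; (4) specialize to the homogeneous case. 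The subtle technical point to watch is ensuring all trace-distance contraction steps are applied to genuinely traceless (or equal-marginal) operators so that the complete contraction coefficient bound is valid, and that the reference system $R$ is carried through correctly — this is exactly why the complete contraction coefficient $\eta^c_{\Tr}$, rather than $\eta_{\Tr}$, is the right tool here.
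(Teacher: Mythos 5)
Your step (3) is, in essence, the paper's proof: the argument there is the symmetric finite-difference Taylor bound $\bigl|\partial_x f\bigr| \le \bigl|f(x+h)-f(x-h)\bigr|/(2h) + (h^2/6)\sup\bigl|\partial_x^3 f\bigr|$, with the difference term bounded by $2\Vert O\Vert_\infty\prod_{k=i}^D(1-\alpha(\cN_k))$ via H\"older plus repeated application of \cref{prop:trace_distance_complete_cont_Doeblin_bound} (the two shifted states share the same $R$-marginal, which is exactly why the \emph{complete} contraction coefficient is needed, as you flag), the third derivative bounded by $8\Vert O\Vert_\infty$ using $\Vert H_i^j\Vert_\infty\le 1$ (a commutator estimate, not periodicity), and the step size optimized at $h=\bigl(\tfrac{3}{8}\prod_{k=i}^D(1-\alpha(\cN_k))\bigr)^{1/3}$, which yields the $2/3$ power and the constant $C_1$. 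The homogeneous case is, as you say, immediate substitution.

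Where you genuinely diverge is in steps (1)--(2), and here you have the logic backwards. Carrying out the exact commutator computation gives $\partial C/\partial\theta_i^j = \Tr[O\,\Lambda(\sigma)]$ with $\sigma$ traceless Hermitian, $\Vert\sigma\Vert_1\le\Vert H_i^j\Vert_\infty\le 1$, and vanishing $R$-marginal (since $H_i^j$ acts only on $A$); after a Jordan decomposition of $\sigma$ into a scaled difference of two states with equal $R$-marginals, pushing it through the tail with \cref{prop:trace_distance_complete_cont_Doeblin_bound} and isometric invariance yields $\bigl|\partial C/\partial\theta_i^j\bigr|\le\Vert O\Vert_\infty\prod_{k=i}^D(1-\alpha(\cN_k))$ directly. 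That is an exponent-$1$ bound, which is \emph{stronger} than \eqref{eq:1st-ineq-barren-plateaus-general-noisy}, since $x\le x^{2/3}$ for $x\in[0,1]$ and $C_1\ge 1$; the proposition then follows a fortiori and your step (3) becomes unnecessary. So the Taylor/interpolation argument is not an ``upgrade'' from exponent $1$ to $2/3$ --- it is a weakening, which the paper accepts because it phrases the argument through finite differences (reserving the exponent-$1$ statement for its remark on the parameter-shift rule). Either route proves the claimed bound; the only point to make rigorous in your direct route is the reduction from contracting a traceless Hermitian operator to contracting a pair of states with equal marginals on $R$, which you correctly identify as the place where care is needed.
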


\begin{proof}
Recall from~\cite[Eq.~(C32)]{mele2024noise}
that 
\begin{equation}\label{eq:partial_upper_bound}
    \left|\frac{\partial f(x)}{\partial x} \right| \leq  \left| \frac{f(x+h)-f(x-h)}{2h}\right| +  \frac{h^2}{6} \sup_{y \in (x-h,x+h)} \left|\frac{\partial^3 f(y)}{\partial^3 x} \right|. 
\end{equation}
Fix $i \in\{1,\ldots, D\}$ and $j \in \{1, \ldots, J_i\}$. Then also pick two values for $\theta_i^j$, given by $\theta_i^j=\theta+ h$ and $\theta_i^j=\theta- h$, where $\theta\in\mathbb{R}$ and $h > 0$. With that, choose $\vec{\theta_i^{+}}$ and $\vec{\theta_i^{-}}$ such that all other entries are the same except for $\theta_i^j$, which takes on the fixed two values.
Define the shorthand 
\begin{align}
    \cP_{[D,i]}^{j,\theta_i^j } & \coloneqq  \id_R \otimes \left(\cN_D \circ \cU_D(\vec{\theta}_D) \circ \cdots \circ \cN_i \circ \cU_i(\vec{\theta}_i) \right), \\ 
     \cP_{[k,1]}  & \coloneqq  \id_R \otimes \left(\cN_k \circ \cU_k(\vec{\theta}_k) \circ \cdots \circ \cN_1 \circ \cU_1(\vec{\theta}_1) \right),
\end{align}
and 
\begin{equation}
    \rho_{i-1} \coloneqq \cP_{[i-1,1]} (\rho_0)
\end{equation}
for $i>1$, with $\rho_0$ being the initial input state, which lies in $\mathcal{D}(\cH_R \otimes \cH_A)$. Note that we have removed the dependence on the circuit parameters for brevity.

Now consider that
\begin{align}
    & \Tr\!\left[O \left(  \cP_{[D,i]}^{j,\theta+h }(\rho_i)- \cP_{[D,i]}^{j,\theta -h}(\rho_i) \right) \right] \notag \\
    &\leq \left\| O \right\|_\infty \left\|\cP_{[D,i]}^{j,\theta+h }(\rho_i)- \cP_{[D,i]}^{j,\theta -h}(\rho_i) \right \|_1 \\
    & \leq\left\|O\right\|_\infty \prod_{k={i}}^{D} \left(1- \alpha(\cN_k) \right)\left\|\left(\id_R \otimes \cU_i\!\left(\vec{\theta_i^{+}}\right)\right)(\rho_{i-1}) - \left( \id_R \otimes \cU_i\!\left(\vec{\theta_i^{-}}\right)\right)(\rho_{i-1}) \right\|_1 \label{eq:start_unital}
     \\ &\leq 2\left\|O\right\|_\infty \prod_{k={i}}^{D} \left(1- \alpha(\cN_k) \right), \label{eq:first_order_bound}
\end{align}
where the first inequality follows from H\"older's inequality; the second from the repetitive application of the complete contraction of trace distance under each noise layer until the $i$-th layer in~\eqref{eq:complete-trace-distance-CC-to-doeblin}, together with the unitary invariance property of trace distance (recall that the marginal state at the subsystem $R$ is $ \Tr_A[\rho_0]$ for all states considered); and the last inequality follows because $\| \rho -\sigma\|_1 \leq 2$ for all states $\rho$ and $\sigma$.

Also, from~\cite[Lemma~45]{mele2024noise},
we have that  
\begin{equation}
    \left| \frac{\partial^3 C(\vec{\theta}_1,\ldots, \vec{\theta}_D )}{\partial^3{\theta_i^j}} \right| \leq 8 \left\|O\right\|_\infty \left\|I_R \otimes H_{\theta_i^j} \right\|_\infty^3 =8 \left\|O\right\|_\infty \left\| H_{\theta_i^j} \right\|_\infty^3.
\end{equation}
With that and together with the assumption that $\left\|H_{\theta_i^j} \right\|_\infty \leq 1$, we arrive at
\begin{equation} \label{eq:3rd_order_bound}
     \left| \frac{\partial^3 C(\vec{\theta}_1,\ldots, \vec{\theta}_D )}{\partial^3{\theta_i^j}} \right| \leq 8 \left\|O\right\|_\infty. 
\end{equation}

Recalling~\eqref{eq:partial_upper_bound}, consider that
\begin{align}
     & \left| \frac{\partial C(\vec{\theta}_1,\ldots, \vec{\theta}_D )}{\partial{\theta_i^j}} \right| \notag \\
     &\leq  \frac{1}{2h} \left|  \Tr\!\left[O \left(  \cP_{[D,i]}^{j,\theta+h }(\rho_i)- \cP_{[D,i]}^{j,\theta -h}(\rho_i) \right) \right] \right| + \frac{h^2}{6} \sup_{\theta_i^j} \left| \frac{\partial^3 C(\vec{\theta}_1,\ldots, \vec{\theta}_D )}{\partial^3{\theta_i^j}} \right|  \label{eq:start_general}\\
     & \leq \frac{1}{2h}2\left\|O\right\|_\infty \prod_{k={i}}^{D} \left(1- \alpha(\cN_k) \right)+ \frac{8h^2}{6} \left\|O\right\|_\infty
     \\
     &= \left\| O \right\|_\infty  \left(\frac{8}{3} \right)^{1/3} \left(\prod_{k={i}}^{D} \left(1- \alpha(\cN_k) \right)\right)^{2/3} + \frac{4}{3} \left(\frac{3}{8}\prod_{k={i}}^{D} \left(1- \alpha(\cN_k) \right)\right)^{2/3} \left\|O\right\|_\infty \\
     & \leq  C_1 \left\| O \right\|_\infty  \left( \prod_{k={i}}^{D} \left(1- \alpha(\cN_k) \right) \right)^{2/3},
\end{align}
where the second inequality follows by~\eqref{eq:first_order_bound}
and~\eqref{eq:3rd_order_bound}; the equality follows by substituting 
\begin{equation}
    h= \left(\frac{3}{8}\prod_{k={i}}^{D} \left(1- \alpha(\cN_k) \right)\right)^{1/3},
\end{equation}
which minimizes the right-hand side.
This concludes the proof of~\eqref{eq:1st-ineq-barren-plateaus-general-noisy}.

The inequality in~\eqref{eq:2nd-ineq-barren-plateaus-general-noisy} follows by substituting $\cN_k=\cN$ in \eqref{eq:1st-ineq-barren-plateaus-general-noisy}, concluding the proof.
\end{proof} 

\begin{corollary}[Barren Plateaus with Product Noisy Channels]\label{Cor:BP_product_channels}
Let $i \in \{1,\ldots, D\}$ and $j \in \{1,\ldots, J_i\}$. Then, we have the following upper bound on the partial derivatives of the cost function given in~\eqref{eq:cost_function}:
\begin{align}
 \left| \frac{\partial C(\vec{\theta}_1,\ldots, \vec{\theta}_D )}{\partial{\theta_i^j}} \right| 
 & \leq C_1 \left\| O \right\|_\infty \left( \prod_{k=i}^{D} \left( 1-\left(\alpha_{\wang}(\cP_k) \right)^\ell \right) \right)^{2/3},
\end{align}  
where  $\mathcal{N}_i = \mathcal{P}_i^{\otimes \ell}$ and
\begin{equation}
    C_1 \coloneqq   \left(\frac{8}{3} \right)^{1/3} + \frac{4}{3} \left(\frac{3}{8} \right)^{2/3}\leq 2.0801.
\end{equation}

For the setting in which $\cN_i= \cN= \cP^{\otimes \ell}$, we have that 
\begin{align}
 \left| \frac{\partial C(\vec{\theta}_1,\ldots, \vec{\theta}_D )}{\partial{\theta_i^j}} \right| 
 & \leq C_1 \left\| O \right\|_\infty  \left( 1-\left(\alpha_{\wang}(\cP)\right)^\ell \right)^{2(D-i+1)/3}.
\end{align}  
\end{corollary}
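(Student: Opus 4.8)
\textbf{Proof proposal for Corollary~\ref{Cor:BP_product_channels}.}
The plan is to follow the proof of \cref{prop:General_noise_BP} essentially verbatim, replacing every appearance of the complete contraction bound $1-\alpha(\cN_k)$ by the corresponding bound obtained when the noise layer is a tensor power $\cN_k = \cP_k^{\otimes \ell}$. The only new ingredient needed is an efficiently usable lower bound on $\alpha(\cP_k^{\otimes \ell})$ in terms of $\alpha_{\wang}(\cP_k)$, which we already have: by~\eqref{eq:a-b-doeblin-ineq} we have $\alpha(\cP_k) \geq \alpha_{\wang}(\cP_k)$, and by multiplicativity of $\alpha_{\wang}$ (\cref{prop:multiplicativity-b-N}) together with submultiplicativity of $\alpha$ only in the wrong direction, we instead use $\alpha(\cP_k^{\otimes \ell}) \geq \alpha_{\wang}(\cP_k^{\otimes \ell}) = \alpha_{\wang}(\cP_k)^{\ell}$, where the first inequality is again~\eqref{eq:a-b-doeblin-ineq} applied to the channel $\cP_k^{\otimes \ell}$ and the equality is \cref{prop:multiplicativity-b-N} iterated $\ell-1$ times. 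Hence $1 - \alpha(\cN_k) \leq 1 - \alpha_{\wang}(\cP_k)^{\ell}$.

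Concretely, the first step is to reproduce the chain~\eqref{eq:start_unital}--\eqref{eq:first_order_bound}: applying \cref{prop:trace_distance_complete_cont_Doeblin_bound} (the complete trace-distance contraction bound $\eta^c_{\Tr}(\cN_k) \leq 1-\alpha(\cN_k)$) once per noise layer from layer $D$ down to layer $i$, together with unitary invariance of the trace distance and the marginal-preservation of $R$, yields
\begin{align}
\left\|\cP_{[D,i]}^{j,\theta+h}(\rho_i) - \cP_{[D,i]}^{j,\theta-h}(\rho_i)\right\|_1
& \leq \prod_{k=i}^{D}\left(1-\alpha(\cN_k)\right)\left\|\left(\id_R\otimes\cU_i(\vec{\theta_i^{+}})\right)(\rho_{i-1}) - \left(\id_R\otimes\cU_i(\vec{\theta_i^{-}})\right)(\rho_{i-1})\right\|_1 \notag \\
& \leq 2\prod_{k=i}^{D}\left(1-\alpha(\cN_k)\right) \leq 2\prod_{k=i}^{D}\left(1-\alpha_{\wang}(\cP_k)^{\ell}\right),
\end{align}
where the last step inserts the lower bound on $\alpha(\cP_k^{\otimes\ell})$ derived above. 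The second step is to reuse the third-derivative bound~\eqref{eq:3rd_order_bound}, namely $\left|\partial^3 C/\partial^3\theta_i^j\right| \leq 8\|O\|_\infty$, which is completely independent of the noise structure and so carries over unchanged from~\cite[Lemma~45]{mele2024noise}.

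The third and final step is to substitute both bounds into the finite-difference inequality~\eqref{eq:partial_upper_bound} and optimize over the step size $h$, exactly as in \cref{prop:General_noise_BP}: taking $h = \left(\tfrac{3}{8}\prod_{k=i}^{D}(1-\alpha_{\wang}(\cP_k)^{\ell})\right)^{1/3}$ balances the first-order and third-order terms and produces the stated bound with the same constant $C_1 = (8/3)^{1/3} + \tfrac{4}{3}(3/8)^{2/3} \leq 2.0801$. The specialization to $\cN_i = \cN = \cP^{\otimes\ell}$ for all $i$ then follows by setting $\cP_k = \cP$ and collapsing the product to $\left(1-\alpha_{\wang}(\cP)^{\ell}\right)^{2(D-i+1)/3}$. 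I do not anticipate a genuine obstacle here; the one point requiring a little care is making sure the direction of the inequality $\alpha(\cP_k^{\otimes\ell}) \geq \alpha_{\wang}(\cP_k)^{\ell}$ is the one we need (it is, since we want an \emph{upper} bound on $1-\alpha(\cN_k)$, hence a \emph{lower} bound on $\alpha(\cN_k)$), and that \cref{prop:trace_distance_complete_cont_Doeblin_bound} is genuinely the complete contraction coefficient so that it may be applied with the reference system $R$ present at each layer.
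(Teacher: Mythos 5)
Your proposal is correct and follows essentially the same route as the paper: apply \cref{prop:General_noise_BP} and then lower-bound $\alpha(\cN_k) = \alpha(\cP_k^{\otimes \ell}) \geq \alpha_{\wang}(\cP_k^{\otimes\ell}) = \alpha_{\wang}(\cP_k)^{\ell}$ via~\eqref{eq:a-b-doeblin-ineq} and the multiplicativity of $\alpha_{\wang}$ (\cref{prop:multiplicativity-b-N}). The only cosmetic difference is that the paper simply invokes the finished statement of \cref{prop:General_noise_BP} and substitutes the bound, whereas you re-run its proof with the substitution made at an intermediate step; both are valid and give the same constant.
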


\begin{proof}
    The proof follows by adapting together~\cref{prop:General_noise_BP} with 
    $\alpha(\cN)\geq \alpha_{\wang}(\cN)$ and $\alpha_{\wang}(\cP^{\otimes \ell})= \left(\alpha_{\wang}(\cP)\right)^\ell$ by the multiplicativity of $\alpha_{\wang}(\cN)$ (\cref{prop:multiplicativity-b-N}).
\end{proof}

\begin{remark}[Gradient Computation Using Parameter Shift Rule]
Similar to the approach used in~\cite{schumann2024emergence} for the case when the parameter shift rule \cite{Li2017,Mitarai2018,Schuld2019,crooks2019gradients} can be used as the gradient estimation technique, one can arrive at a stronger statement in terms of the depth of the circuit compared to the results obtained in~\cref{prop:General_noise_BP}. 
In particular, we find that 
\begin{equation}
    \left| \frac{\partial C(\vec{\theta}_1,\ldots, \vec{\theta}_D )}{\partial{\theta_i^j}} \right| 
  \leq C' \left\| O \right\|_\infty  \prod_{k=i}^{D} \left( 1-\alpha(\cN_k) \right),
\end{equation}
where $C'$ is a constant depending on the parameter shift rule used.
\end{remark}

Next, we elaborate on the impact of the unital noise channels (i.e., a channel $\cN$ such that $\cN(I)=I$) on the partial derivatives of the cost function. 
\begin{proposition}[Barren Plateaus with Unital Noise]
\label{prop:Unital_noise_BP}
Fix $D\in\mathbb{N}$, and let $i \in \{1,\ldots, D\}$ and $j \in \{1,\ldots, J_i\}$. Then, we have the following upper bound on the partial derivatives of the cost function given in~\eqref{eq:cost_function}:
\begin{align} \label{eq:unital_noise_gradient_bound_k}
 \left| \frac{\partial C(\vec{\theta}_1,\ldots, \vec{\theta}_D )}{\partial{\theta_i^j}} \right| 
 & \leq 3 \left(\frac{4}{3} \right)^{1/3} \left\| O \right\|_\infty \left( \prod_{k=1}^{D} \left( 1-\alpha(\cN_k) \right) \right)^{2/3}.
\end{align}  
Furthermore,
\begin{align} \label{eq:cost_concentration}
\left|C(\vec{\theta}_1,\ldots, \vec{\theta}_D )  - \frac{1}{|R|d^n}\Tr[O] \right| 
    &\leq 2\left\|O\right\|_\infty \left( \prod_{k=1}^{D} \left( 1-\alpha(\cN_k) \right) \right),
\end{align}
where $|R|$ is the size of the system $R$.

For the setting for which $\cN_i= \cN$ for all $i\in \{1,\ldots,D\}$,
we have that 
\begin{align}\label{eq:unital_noise_gradient_bound_all}
 \left| \frac{\partial C(\vec{\theta}_1,\ldots, \vec{\theta}_D )}{\partial{\theta_i^j}} \right| 
 & \leq  3 \left(\frac{4}{3} \right)^{1/3}  \left\| O \right\|_\infty \left( \left( 1-\alpha(\cN) \right)\right)^{2D/3}.
\end{align}  
\end{proposition}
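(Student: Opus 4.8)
The plan is to exploit the special structure of unital noise channels, namely that the maximally mixed state is a fixed point of each $\cN_k$, so that the entire circuit contracts toward the maximally mixed state rather than merely toward \emph{some} fixed point. First I would establish~\eqref{eq:cost_concentration}: let $\pi$ denote the maximally mixed state on $\cH_R \otimes \cH_A$, so $\pi = \frac{1}{|R| d^n} I$. Since $\id_R \otimes \cU_k$ is a unital channel (it preserves the identity) and each $\cN_k$ is unital by hypothesis, the composite channel $\cM_{(\vec\theta_1,\ldots,\vec\theta_D)}$ is unital, hence $\cM_{(\vec\theta_1,\ldots,\vec\theta_D)}(\pi) = \pi$. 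Then writing $C(\vec\theta_1,\ldots,\vec\theta_D) - \frac{1}{|R|d^n}\Tr[O] = \Tr[O(\cM_{(\vec\theta_1,\ldots,\vec\theta_D)}(\rho_0) - \cM_{(\vec\theta_1,\ldots,\vec\theta_D)}(\pi))]$ and applying H\"older's inequality followed by the repeated complete-contraction bound~\eqref{eq:complete-trace-distance-CC-to-doeblin} through all $D$ noise layers (using that the $R$-marginal is held fixed across the two inputs, and that unitary layers do not change the trace distance), we obtain $|C - \frac{1}{|R|d^n}\Tr[O]| \leq \|O\|_\infty \prod_{k=1}^D (1-\alpha(\cN_k)) \cdot \|\rho_0 - \pi\|_1 \leq 2\|O\|_\infty \prod_{k=1}^D(1-\alpha(\cN_k))$, which is~\eqref{eq:cost_concentration}.

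Next I would prove the gradient bound~\eqref{eq:unital_noise_gradient_bound_k}. The key observation is that, unlike in~\cref{prop:General_noise_BP}, the contraction now runs over \emph{all} $D$ layers rather than only layers $i$ through $D$, because for unital circuits the finite-difference quantity $\cP_{[D,i]}^{j,\theta+h}(\rho_i) - \cP_{[D,i]}^{j,\theta-h}(\rho_i)$ can itself be traced back through the first $i-1$ layers: the difference of the two full-circuit outputs equals $\cP_{[D,1]}^{j,\theta+h}(\rho_0) - \cP_{[D,1]}^{j,\theta-h}(\rho_0)$, and one can insert and subtract $\pi$ to get a difference of states each contracted through all $D$ layers. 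Concretely, following the Taylor-expansion device~\eqref{eq:partial_upper_bound} with step size $h>0$, bound the first-order term by $\frac{1}{2h}\cdot 2\|O\|_\infty \prod_{k=1}^D(1-\alpha(\cN_k))$ — here the product runs to $k=1$ because, after using unitary invariance of trace distance and the complete contraction bound layer by layer from $D$ down to $1$, the residual is $\|\rho_0 - \rho_0\|_1 = 0$ once the perturbed gate is pushed to the left; more carefully one writes the perturbed difference as acting on a fixed upstream state $\rho_{i-1}$ and then contracts the \emph{downstream} factor $\prod_{k=i}^{D}(1-\alpha(\cN_k))$, but the tighter all-$D$ product is obtained by the alternative argument of contracting $\cP_{[D,1]}$ toward $\pi$. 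I would present whichever of these two routes actually yields the stated constant. Bound the third-derivative term by $\frac{h^2}{6}\cdot 8\|O\|_\infty$ exactly as in~\cref{prop:General_noise_BP} (via \cite[Lemma~45]{mele2024noise} and $\|H^j_i\|_\infty \leq 1$). Optimizing over $h$ — setting $h = \big(\tfrac{3}{8}\prod_{k=1}^D(1-\alpha(\cN_k))\big)^{1/3}$ times the appropriate constant — and collecting terms gives the prefactor $3(4/3)^{1/3}$, establishing~\eqref{eq:unital_noise_gradient_bound_k}. Finally,~\eqref{eq:unital_noise_gradient_bound_all} follows by substituting $\cN_k = \cN$ so that $\prod_{k=1}^D(1-\alpha(\cN_k)) = (1-\alpha(\cN))^D$.

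The main obstacle I anticipate is pinning down precisely why the contraction product can be extended to all $D$ layers in the unital case (improving on the $\prod_{k=i}^D$ of~\cref{prop:General_noise_BP}) while still correctly handling the fact that the perturbed gate $\cU_i$ sits in the \emph{middle} of the circuit, not at the input. The resolution is that the difference of two circuit outputs differing only in the parameter $\theta_i^j$ equals $(\id_R \otimes (\cL \circ \Delta))(\rho_{i-1})$ where $\cL = \cN_D\circ\cU_D\circ\cdots\circ\cN_i$ is the downstream channel and $\Delta = \cU_i(\vec\theta_i^+) - \cU_i(\vec\theta_i^-)$; since $\Delta$ is traceless and $\cL$ is trace preserving, the downstream contraction gives $\prod_{k=i}^D(1-\alpha(\cN_k))$ — but to recover the full product one must instead note that $\Delta(\rho_{i-1})$ is itself a traceless operator of trace-norm at most $2$ \emph{regardless} of how $\rho_{i-1}$ was produced, and that the upstream layers $\cN_1,\ldots,\cN_{i-1}$ and $\cU_1,\ldots,\cU_{i-1}$ act before the perturbation, so they cannot be used to contract this particular difference. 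Hence the honest bound has only the downstream product, and the improvement in the \emph{constant} (from $C_1 \leq 2.0801$ to $3(4/3)^{1/3} \approx 3.30$, which is actually larger!) must come from a different source — most likely a cleaner optimization or a slightly different choice in the Taylor step — so I would need to re-examine whether the intended statement~\eqref{eq:unital_noise_gradient_bound_k} is really using the all-$D$ product via the cost-concentration route~\eqref{eq:cost_concentration} directly (bounding $|\partial_{\theta_i^j} C|$ by relating it to fluctuations of $C$ around its mean), which is the cleanest way to get a $\prod_{k=1}^D$ scaling for unital noise.
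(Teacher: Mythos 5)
Your derivation of the concentration bound~\eqref{eq:cost_concentration} is correct and matches the paper's: since the maximally mixed state $\pi = I/(|R|d^n)$ is a fixed point of the whole unital circuit, one writes $C - \frac{1}{|R|d^n}\Tr[O] = \Tr[O(\cM(\rho_0)-\cM(\pi))]$, applies H\"older, and contracts through all $D$ noise layers.

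For the gradient bound, however, your proposal contains a genuine gap: you correctly identify the central difficulty (the perturbed gate $\cU_i$ sits in the middle of the circuit, so the upstream layers $1,\ldots,i-1$ act \emph{before} the perturbation and cannot naively be used to contract the difference $\cU_i(\vec\theta_i^+)(\rho_{i-1})-\cU_i(\vec\theta_i^-)(\rho_{i-1})$), but you then conclude that ``the honest bound has only the downstream product'' and leave the all-$D$ product unproven. The resolution, which the paper uses, is a triangle-inequality split \emph{after} the perturbed unitary: bound
\begin{equation*}
\bigl\| \cU_i^{+}(\rho_{i-1}) - \cU_i^{-}(\rho_{i-1}) \bigr\|_1
\;\leq\;
\bigl\| \cU_i^{+}(\rho_{i-1}) - \pi \bigr\|_1 + \bigl\| \cU_i^{-}(\rho_{i-1}) - \pi \bigr\|_1 .
\end{equation*}
Because $\pi$ is a fixed point of the perturbed unitary and of the upstream unital circuit $\cP_{[i-1,1]}$, each term on the right is a difference of the \emph{same} channel $\cU_i^{\pm}\circ\cP_{[i-1,1]}$ applied to $\rho_0$ and to $\pi$, and therefore contracts by $\prod_{k=1}^{i-1}(1-\alpha(\cN_k))$ before being bounded by $2$. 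Combined with the downstream contraction $\prod_{k=i}^{D}(1-\alpha(\cN_k))$, this yields a first-order term of $4\|O\|_\infty\prod_{k=1}^{D}(1-\alpha(\cN_k))$ — the extra factor of $2$ from the triangle inequality is precisely why the optimized constant degrades from $C_1\approx 2.08$ to $3(4/3)^{1/3}\approx 3.30$, which answers the puzzle you raised about the constants. Optimizing the Taylor step at $h=\bigl(\tfrac{3}{4}\prod_{k=1}^{D}(1-\alpha(\cN_k))\bigr)^{1/3}$ then gives~\eqref{eq:unital_noise_gradient_bound_k}. Your alternative suggestion of deriving the gradient bound directly from the concentration bound~\eqref{eq:cost_concentration} does not work as stated, since a bound on $|C - c_0|$ for a constant $c_0$ does not by itself control $|\partial_{\theta_i^j} C|$ without the finite-difference-plus-third-derivative machinery, which is exactly what the paper's route supplies.
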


\begin{proof}
For unital noise channels (i.e., satisfying $\cN_i(I)=I$), consider the following, starting from~\eqref{eq:start_unital}:
\begin{align}
     & \Tr\!\left[O \left(  \cP_{[D-i+1,i]}^{j,\theta+h }(\rho_i)- \cP_{[D-i+1,i]}^{j,\theta -h}(\rho_i) \right) \right] \nonumber \\ 
    & \leq\left\|O\right\|_\infty \prod_{k={i}}^{D} \left(1- \alpha(\cN_k) \right)\left\|\left(\id_R \otimes \cU_i\!\left(\vec{\theta_i^{+}}\right)\right)(\rho_{i-1}) - \left( \id_R \otimes \cU_i\!\left(\vec{\theta_i^{-}}\right)\right)(\rho_{i-1}) \right\|_1  \\
    & \leq \left\|O\right\|_\infty  \prod_{k={i}}^{D} \left(1- \alpha(\cN_k) \right) \Bigg(\left\|\left(\id_R \otimes \cU_i\!\left(\vec{\theta_i^{+}}\right)\right)(\rho_{i-1}) - \frac{I}{|R| d^n} \right\|_1 + \notag
    \\
   & \quad \quad  \quad \left\|\left(\id_R \otimes \cU_i\!\left(\vec{\theta_i^{-}}\right)\right)(\rho_{i-1})- \frac{I}{|R| d^n} \right\|_1 \Bigg) \\ 
    & \leq \left\|O\right\|_\infty  \prod_{k={i}}^{D} \left(1- \alpha(\cN_k) \right) \Bigg(\left\|\left(\id_R \otimes \cU_i\!\left(\vec{\theta_i^{+}}\right)\right)(\rho_{i-1}) - \frac{I}{|R| d^n} \right\|_1 + \notag
    \\
   & \quad \quad  \quad \left\|\left(\id_R \otimes \cU_i\!\left(\vec{\theta_i^{+}}\right)\right)(\rho_{i-1}) - \left( \id_R \otimes \cU_i\!\left(\vec{\theta_i^{-}}\right)\right)\left(\frac{I}{|R| d^n}  \right) \right\|_1 \Bigg) \\ 
    &\leq  \left\|O\right\|_\infty \prod_{k={i}}^{D} \left(1- \alpha(\cN_k) \right) \prod_{k={i-1}}^{1} \left(1- \alpha(\cN_k) \right)\left(\left\|\rho_0 -\frac{I}{|R| d^n}
    \right\|_1 +\left\| \rho_0 -\frac{I}{|R| d^n }\right\|_1 \right) \\
    &\leq 4 \left\|O\right\|_\infty \prod_{k={1}}^{D} \left(1- \alpha(\cN_k) \right), \label{eq:end_unital}
\end{align}
where the second inequality follows from the triangular inequality of the trace norm; the first equality follows because the noise channels $\cN$ are unital by assumption and all other channels are unitary; the third inequality follows by considering the complete contraction of trace distance under the rest of the noise channels after the $i$-th layer in~\eqref{eq:complete-trace-distance-CC-to-doeblin}, together with the unitary invariance of the trace norm and recalling $ \rho_{i-1} \coloneqq \cP_{[i-1,1]} (\rho_0)$; and finally the last inequality follows because $\|\rho-\sigma\|_1 \leq 2$ for all states $\rho$ and $\sigma$.

We conclude the inequality~\eqref{eq:unital_noise_gradient_bound_k} on the partial derivative for the case of unital channels by combining~\eqref{eq:end_unital} together with the steps followed after~\eqref{eq:start_general}, with the choice 
\begin{equation}
    h= \left(\frac{3}{4}\prod_{k={i}}^{D} \left(1- \alpha(\cN_k) \right)\right)^{1/3}.
\end{equation}

The second inequality~\eqref{eq:cost_concentration}
follows similarly to the proof of the previous claim in~\eqref{eq:end_unital}  by the application of H\"older's inequality, complete contraction of the trace distance together with unitary invariance of the trace norm. In particular, it proceeds as above by considering that 
\begin{align}
    \left|C(\vec{\theta}_1,\ldots, \vec{\theta}_D )  - \frac{1}{|R|d^n}\Tr[O] \right| &= \left|  \Tr\!\left[O \left(  \cP_{[D,1]}(\rho_0)- \cP_{[D,i]}\left(\frac{I}{|R| d^n} \right) \right) \right]   \right|
\end{align}

The last inequality~\eqref{eq:unital_noise_gradient_bound_all} follows by substituting $\cN_i= \cN$ for all $i \in\{1,\ldots, D\}$, concluding the proof.
\end{proof}

\paragraph{Discussion}
Our results presented in~\cref{prop:General_noise_BP} demonstrate that the learnability of the circuit parameters decreases with the depth of the corresponding parameterized unitary from the end of the circuit. In fact, the first few layers are not trainable whenever $\alpha(\cN_i) >0$ for all $i \in\{1,\ldots,D\}$. To the best of our knowledge, our finding here is the first to demonstrate this phenomenon even with \textit{noiseless subsystems} available within the circuit.  In previous works~\cite{wang2021noise,schumann2024emergence,mele2024noise}, all qubits are considered to be impacted by noise channels. Thus, our work shows that it is not necessary for all the qubits to be noisy in order to observe this trainability issue in terms of the decaying of partial derivatives of the cost function. These noiseless subsystems can even be understood as corresponding to the existence of a noiseless feedback system. Furthermore, our results hold regardless of the size of the noiseless subsystem $R$. To this end, considering the trivial subsystem $R$, our results recover previously known results from~\cite{wang2021noise,schumann2024emergence}. Also, note that our results can be extended to the setting where there exists a fixed unitary in the noiseless subsystem $R$.

The noise models used in our analysis are quite general, satisfying only $\alpha(\cN) >0$ or $\alpha_\wang(\cN) >0$, without any assumption on the specific structure of noise or the gradient-optimization technique used. For instance, Ref.~\cite{wang2021noise} analysed noise-induced barren plateaus under a special class of i.i.d.~qubit Pauli noise channels, while Ref.~\cite{mele2024noise} worked with qubit noise channels (both unital and non-unital). For general noise channels (without noiseless subsystems), Ref.~\cite{schumann2024emergence} provided analytical upper bounds on the partial derivatives for noise channels satisfying $\eta_{\Tr}(\cN) <1$, when the parameter shift rule is used as the gradient estimation technique (when the generators $H_i^j$ of the parameterized unitary operators in~\eqref{eq:unitary_operator} have only two unique eigenvalues or unitary operators can be decomposed into product of operators satisfying the said property~\cite{crooks2019gradients}), while Ref.~\cite{singkanipa2025beyond} has considered noise channels where the Hilbert--Schmidt norm is contractive (one norm in $\eta_{\Tr}(\cN)$ replaced by the two nrom). A  possible downside of those approaches is that the estimation of $\eta_{\Tr}(\cN)$ or the two-norm contraction are computationally difficult. However, our bounds obtained in~\cref{prop:General_noise_BP} and~\cref{Cor:BP_product_channels} are efficiently computable by SDPs. With that, it is possible to compute the bounds in order to obtain insights about how exactly the depth of the circuit and the noise channels impact the decaying of the derivative magnitudes and the trainability of the circuit parameters.

Moreover, our results hold for all parameterized quantum circuits, regardless of how the initialization of parameters is carried out. Some previous works~\cite{mele2024noise,schumann2024emergence} on noise-induced barren plateaus have focused on random quantum circuits (e.g., global and two-qubit unitary two-designs) and analysed the impact on the variance of the cost function to arrive at conclusions similar to ours. Also, in~\cite{mele2024noise}, it was shown that, with non-unital noise, a lower bound on the variance can be obtained, which in turn may avoid barren plateaus for such circuits for local cost functions. Note that these results hold on average. However, in practice there are use-cases where the parameters are not initialized randomly and the circuit structures are decided based on the prior knowledge and domain expertise. In those settings, results obtained under the assumption of random circuit initializations are not applicable. Also, it is suspected that smart initializations through warm starts and pre-training of parameters may serve as a strategy to overcome barren plateaus~\cite{rudolph2023synergistic}. However, our results show a negative result here \textit{when the circuits are noisy}. In fact, we show that, even if one uses clever initialization procedures to select parameters and unitary circuit structures, with noisy circuits we cannot harness the suspected advantage. This further highlights that, regardless of the expressivity of the circuit (e.g., unitary two-designs have higher expressivity), noise within the parameterized quantum circuits leads to barren plateaus, hindering trainability.

From~\cref{prop:General_noise_BP} and~\cref{Cor:BP_product_channels} for general noise channels, we deduce that even though the last few layers can be potentially trained, the first few layers are very hard to train whenever $\alpha(\cN_i) > 0$ or $\alpha_\wang(\cP_i) >0$ for $i \in \{1,\ldots,D\}$. This also suggests that one can only (potentially) train the last few layers and fix the circuit parameters of the first few layers without optimizing them. In contrast, if the noise channels are unital, \cref{prop:Unital_noise_BP} demonstrates that the trainability of the circuit parameters decays exponentially with the overall depth of the circuit regardless of the location of the unitary in the circuit. Our upper bounds on the partial derivatives from~\cref{prop:Unital_noise_BP} provide insights for selecting the depth of the circuit for learning and inference tasks that require optimization of parameters, when the noise is unital.

Furthermore, we observe that some noisy circuits may be classically simulable. With reference to~\eqref{eq:cost_concentration} in~\cref{prop:Unital_noise_BP}, we see that, with increasing depth $D$ of the circuit and $\alpha(\cN_i) >0$ for all $i \in\{1,\ldots,D\}$, the cost function value $C(\vec{\theta}_1,\ldots, \vec{\theta}_D )$ concentrates around $ \frac{1}{|R|d^n}\Tr[O] $. To this end, whenever $\Tr[O]=0$, one can estimate the cost function value to be zero. In that case, it is possible to classically simulate the cost function, which is an expectation value of the observable $O$ with a state dependent on the circuit parameters. Moreover, it is possible to obtain a confidence interval for the cost function value as follows: consider the scenario in which $\cN_i=\cN$ for all $i \in \{1,\ldots,D\}$; then by rewriting~\eqref{eq:cost_concentration} we arrive at
\begin{equation}
  \frac{1}{|R|d^n}\Tr[O]-2\left\|O\right\|_\infty \left(1-\alpha(\cN)\right)^{D}  \leq C(\vec{\theta}_1,\ldots, \vec{\theta}_D )  \leq  \frac{1}{|R|d^n}\Tr[O] +2\left\|O\right\|_\infty \left(1-\alpha(\cN)\right)^{D}.
\end{equation}
Note that~\eqref{eq:cost_concentration} is valid for unital noise channels. For qubit noise channels, Ref.~\cite[Proposition~35]{mele2024noise} developed a procedure to classically simulate the cost function when $O$ is replaced by a Pauli operator, up to an absolute error $\varepsilon>0$ with a high probability over random quantum circuits (on average).
We leave further exploration on the classical simulability of the cost functions under general noise channels and general circuit architectures for future work.

In summary, our study of noise-induced barren plateaus extends to new noisy circuits with the possibility of having noiseless subsystems and circuit architectures independent of the parameter initialization and the gradient estimation technique. Also, we provide efficiently computable limits on the scaling of partial derivatives of the cost function and the overall concentration of the cost function under noisy parametrized quantum circuits.

\subsection{Error Mitigation}

\label{sec:error-mitigation}

The goal of error mitigation is to learn expectation values of observables while mitigating the impact of errors caused by noise \cite{cai2023}. Here, we specifically discuss the task of estimating the expectation value $\Tr[A\rho]$ of an observable $A$ up to some error and with high probability, by mitigating the impact of the errors caused by noise in the quantum circuits. 

\begin{figure}
    \centering
    \includegraphics[width=0.7\linewidth]{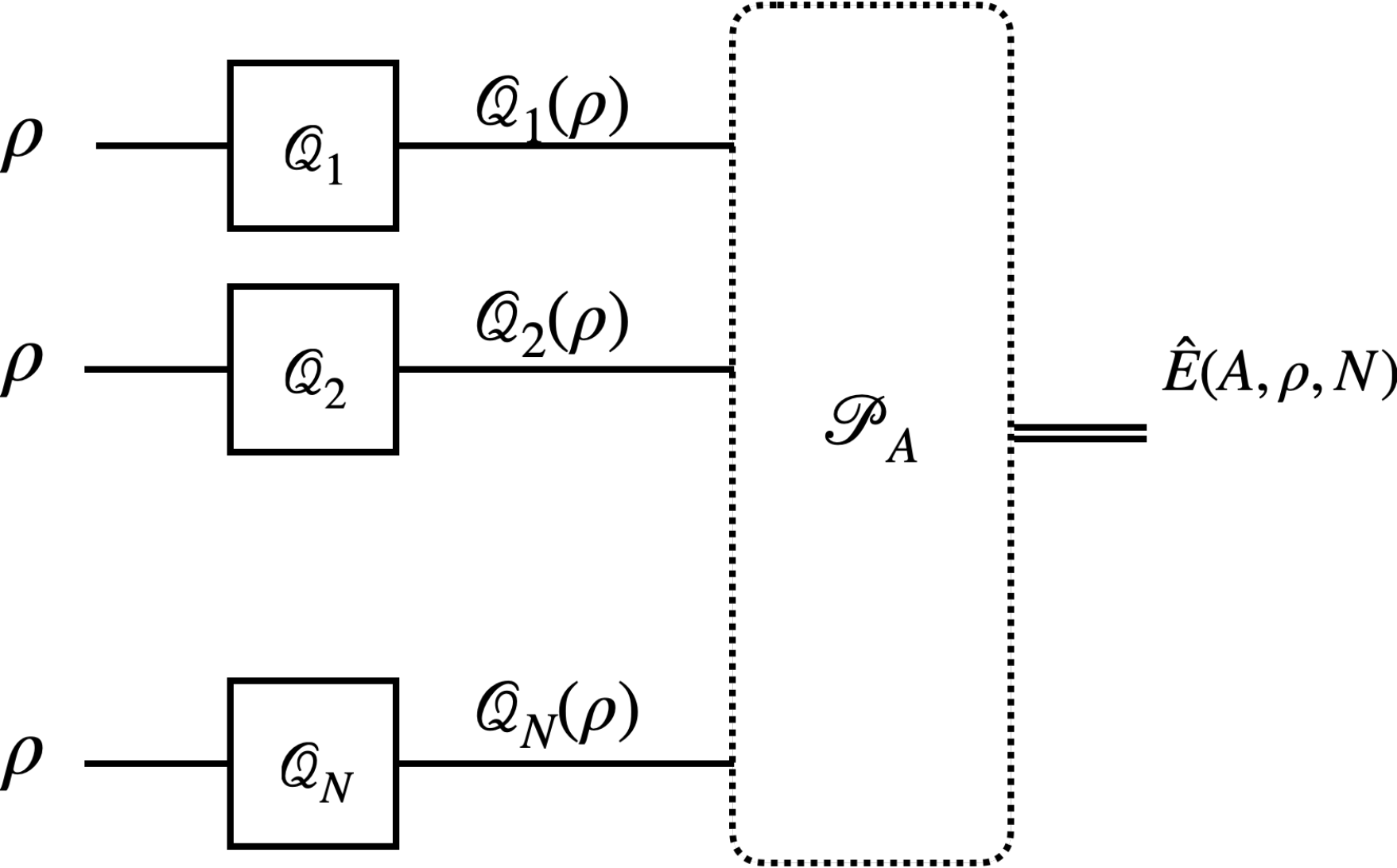}
    \caption{Error mitigation protocol $\cP_A$: The goal of the protocol is to provide an estimate for $\Tr[A\rho]$ for an observable $A$ and all states $\rho$ in a certain class $\cS$. In particular, $\cP_A$ takes as input the $N$ distorted input states $\bigotimes_{i=1}^N \cQ_i(\rho)$, where each $\cQ_i$ is a quantum channel, and outputs an estimate for $\Tr[A\rho]$ denoted as $\hat{E}(A,\rho,N)$ such that $  \left|  \Tr[ A \rho] - \hat{E}(\rho, A, N)\right| \leq \varepsilon$ with probability not less than $1-\delta$ for $\varepsilon>0$ and $\delta \in (0,1)$.}
    \label{fig:error_mitigation}
\end{figure}

\begin{definition}[Error Mitigation Protocol] \label{def:error_mitigation}
Fix $\varepsilon >0$ and $\delta \in (0,1)$. Fix $\cS$ to be a set of states and $\cO$ to be a set of observables.
Let $A \in \cO$ be a Hermitian operator. 
An error mitigation protocol $\cP_A$ (as depicted in~\cref{fig:error_mitigation}) takes as input $N$ noisy quantum states, $\bigotimes_{i=1}^N \cQ_i(\rho)$, where each $\cQ_i$ is a noisy quantum channel and $\rho \in \cS$ is the original state of interest. Then, the channel $\cP_A$ produces an estimate of $\Tr[A\rho]$, which we denote by $\hat{E}(\rho, A, N)$, as an output. We say that the error mitigation protocol is successful if 
\begin{equation} \label{eq:error_mitig_def}
    \Pr \! \left(\left|  \Tr[ A \rho] - \hat{E}(\rho, A, N)\right| \leq \varepsilon\right) \geq 1- \delta,
\end{equation}
for all $\rho \in \cS$ and $A \in \cO$.
   
\end{definition}

Note that $\cP_A$ in~\cref{fig:error_mitigation} includes many error mitigation protocols known and also protocols for which classical post-processing is allowed (see~\cite{takagi2023universal} and references therein).

Also define, 
\begin{equation}
    D_{\cO}(\rho,\sigma) \coloneqq \sup_{A \in \cO} \left|\Tr[ A (\rho -\sigma)]\right|.
\end{equation}

 Using the proof of Theorem~1 and Theorem~3 in~\cite{takagi2023universal} along with our bounds on (complete) contraction of trace distance, we next analyse the error mitigation capabilities and limitations of a broader class of noisy quantum channels where the \textit{noisy channels are not necessarily unital}. Also note that our bounds represent fundamental limitations that are independent of which particular error mitigation protocol is used to obtain the estimate $\hat{E}(\rho, A, N)$.

Let the noise channels $\cQ_i$ in~\cref{fig:error_mitigation} for all $i\in\{1,\ldots,N\} $ be characterized as follows:
\begin{equation}\label{eq:Q_i}
    \cQ_i=\id_R \otimes \left( \cN_D \circ \cU_D \circ \cdots \circ \cN_j \circ \cU_j \circ \cdots \circ \cN_1 \circ \cU_1 \right),
\end{equation}
where each $\cN_j$ is a noisy channel and each $\cU_j$ is a unitary channel, for $j \in \{1,\ldots,D\}$.

 \begin{theorem} [Sample Complexity of Error Mitigation]\label{thm:Error_miti_global}
     Suppose that an error-mitigation strategy in~\cref{fig:error_mitigation} satisfies~\eqref{eq:error_mitig_def} for some $\varepsilon > 0$ and $0\leq \delta \leq 1/2$. Also fix $\cQ_i=\cQ_k$ for all $i,k \in \{1,\ldots,N\}$, and suppose that $\cQ_i$ is given by~\eqref{eq:Q_i}. If there exist $\rho_{RA},\sigma_{RA} \in \cS$ such that $D_\cO(\rho_{RA},\sigma_{RA}) \geq 2 \varepsilon$ and $\rho_R=\sigma_R$, then an error mitigation protocol requires $N$ noisy states such that the following inequality should hold
      \begin{equation}
         N \geq \frac{(1-2\delta)} {\prod_{i=1}^D \left(1- \alpha(\cN_i)\right)}
     \end{equation}
     in order to satisfy~\eqref{eq:error_mitig_def} for all $\rho_{RA} \in \cS$ and $A_{RA} \in \cO$.
 \end{theorem}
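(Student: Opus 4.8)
The plan is to follow the template of the lower bounds in \cite{takagi2023universal} (their Theorems~1 and~3), but to feed in our complete contraction bound for the trace distance (\cref{prop:trace_distance_complete_cont_Doeblin_bound}) rather than the unital-channel contraction estimates used there. The essential observation is that an error mitigation protocol that succeeds on both $\rho_{RA}$ and $\sigma_{RA}$ can be used to build a distinguishing test between the noisy ensembles $\bigotimes_{i=1}^{N}\cQ_i(\rho_{RA})$ and $\bigotimes_{i=1}^{N}\cQ_i(\sigma_{RA})$; hence the success probability is controlled by the trace distance between these $N$-fold noisy states, and that trace distance contracts by a factor governed by $\prod_{i=1}^{D}(1-\alpha(\cN_i))$.

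First, I would set up the distinguishing reduction. Suppose $\rho_{RA},\sigma_{RA}\in\cS$ satisfy $D_\cO(\rho_{RA},\sigma_{RA})\geq 2\varepsilon$, so there is $A\in\cO$ with $|\Tr[A(\rho_{RA}-\sigma_{RA})]|\geq 2\varepsilon$. Given the output estimate $\hat E$, decide ``$\rho$'' if $\hat E$ is within $\varepsilon$ of $\Tr[A\rho_{RA}]$ and ``$\sigma$'' otherwise (breaking the tie so that the $\varepsilon$-balls around $\Tr[A\rho_{RA}]$ and $\Tr[A\sigma_{RA}]$ are disjoint, which holds since their centers are at least $2\varepsilon$ apart). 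By~\eqref{eq:error_mitig_def}, this test errs with probability at most $\delta$ under each hypothesis, so it distinguishes the two input states $\bigotimes_i\cQ_i(\rho_{RA})$ and $\bigotimes_i\cQ_i(\sigma_{RA})$ with error probability at most $\delta$ under a uniform prior. By the Holevo--Helstrom bound, this forces
\begin{equation}
    \frac{1}{2}\left\Vert \bigotimes_{i=1}^{N}\cQ_i(\rho_{RA}) - \bigotimes_{i=1}^{N}\cQ_i(\sigma_{RA})\right\Vert_1 \;\geq\; 1-2\delta .
\end{equation}

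Next, I would bound the left-hand side from above using contraction. Each $\cQ_i=\cQ$ has the layered form~\eqref{eq:Q_i} with $D$ noise layers $\cN_1,\ldots,\cN_D$ interleaved with unitaries. Using \cref{prop:trace_distance_complete_cont_Doeblin_bound} (the complete contraction coefficient of the trace distance is at most $1-\alpha(\cN_j)$, which applies in the presence of the reference system $R$ on which $\rho_R=\sigma_R$), together with unitary invariance of the trace norm and the fact that tensoring $N$ copies of a channel multiplies its trace-distance contraction factor at worst by itself $N$ times — more precisely, $\Vert \cQ^{\otimes N}(\rho)-\cQ^{\otimes N}(\sigma)\Vert_1 \le N\,\eta_{\Tr}^c(\cQ)\,\Vert\rho-\sigma\Vert_1$ when the two states share a marginal, which is the standard telescoping argument over the $N$ tensor factors — I get
\begin{equation}
    \left\Vert \cQ^{\otimes N}(\rho_{RA}) - \cQ^{\otimes N}(\sigma_{RA})\right\Vert_1 \;\le\; N\left(\prod_{i=1}^{D}\bigl(1-\alpha(\cN_i)\bigr)\right)\left\Vert \rho_{RA}-\sigma_{RA}\right\Vert_1 \;\le\; 2N\prod_{i=1}^{D}\bigl(1-\alpha(\cN_i)\bigr) .
\end{equation}
Combining the two displays gives $N\prod_{i=1}^{D}(1-\alpha(\cN_i))\geq 1-2\delta$, which rearranges to the claimed bound.

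\textbf{Main obstacle.} The delicate point is the $N$-copy step: going from the single-copy complete contraction factor $1-\alpha(\cN_i)$ to a bound on $\Vert\cQ^{\otimes N}(\rho)-\cQ^{\otimes N}(\sigma)\Vert_1$ that is only \emph{linear} in $N$ (rather than losing a dimension-dependent factor). This requires telescoping $\cQ^{\otimes N} = \cQ\otimes\cQ^{\otimes(N-1)}$ and, at each step, applying the \emph{complete} contraction coefficient with the remaining $N-1$ factors playing the role of an (enlarged) reference system — which is legitimate precisely because $\rho$ and $\sigma$ have the same marginal on $R$ and hence on every subsystem that is not acted upon. One must check that the reference marginal is preserved through the intermediate channels; since the unitaries act only on $RA$ and the replacer-plus-CPTP decomposition of each $\cN_i$ (from \cref{prop:reverse-robustness}) does not touch $R$, this goes through, but it is the step that needs the most care. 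The rest is the standard Helstrom-bound reduction, which is routine.
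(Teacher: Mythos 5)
Your proposal follows essentially the same route as the paper: the Helstrom/distinguishability reduction (which the paper imports as Eq.~(A5) of Takagi et al.\ rather than re-deriving), a linear-in-$N$ bound over the copies, and then a layer-by-layer application of \cref{prop:trace_distance_complete_cont_Doeblin_bound} together with unitary invariance to pick up the factor $\prod_{i=1}^{D}(1-\alpha(\cN_i))$. The final chain of inequalities is correct and matches the paper's.

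The one place where your reasoning goes wrong is precisely the step you flag as the ``main obstacle.'' You propose to telescope $\cQ^{\otimes N}$ and, at each stage, invoke the \emph{complete} contraction coefficient with the remaining $N-1$ copies as an enlarged reference, justified by ``$\rho$ and $\sigma$ have the same marginal on every subsystem that is not acted upon.'' That justification fails: the reference for the first factor includes the $A$-systems of the other $N-1$ copies, and the marginals of $\rho_{RA}^{\otimes N}$ and $\sigma_{RA}^{\otimes N}$ on those systems are $\rho_{RA}^{\otimes(N-1)}$ versus $\sigma_{RA}^{\otimes(N-1)}$, which are not equal unless $\rho_{RA}=\sigma_{RA}$. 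Only the $R$-marginals agree. The correct (and simpler) route, which is what the paper uses, is to first apply plain subadditivity of the trace distance for tensor products of states, $\bigl\Vert \omega^{\otimes N}-\tau^{\otimes N}\bigr\Vert_1\le N\,\Vert\omega-\tau\Vert_1$ with $\omega=\cQ(\rho_{RA})$ and $\tau=\cQ(\sigma_{RA})$ (telescoping the \emph{outputs}, each summand being a tensor product with a single $\omega-\tau$ factor, so no marginal condition is needed), and only then apply the single-copy complete contraction bound $T(\cQ(\rho_{RA}),\cQ(\sigma_{RA}))\le\prod_{i=1}^{D}(1-\alpha(\cN_i))\,T(\rho_{RA},\sigma_{RA})\le\prod_{i=1}^{D}(1-\alpha(\cN_i))$, where the hypothesis $\rho_R=\sigma_R$ is genuinely used. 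With that substitution your argument is complete.
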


\begin{proof}
    The proof follows by starting in a similar fashion to the proof of Theorem~1 and Theorem~3 in~\cite{takagi2023universal}, with the assumption that there exist $\rho_{RA},\sigma_{RA} \in \cS$ such that $D_\cO(\rho_{RA},\sigma_{RA}) \geq 2 \varepsilon$.

    Consider Eq.~(A5) of~\cite{takagi2023universal}, where $(\mathcal{Q}_i)_{i=1}^N$ denotes the tuple of noisy channels, each of which is applied on one of the $N$ copies of the state. Let us suppose that the same noisy channel (with $D$ noisy layers) is applied on each copy of the given state. Then
    \begin{align}
        1-2\delta &\leq T\!\left( \bigotimes_{i=1}^N \cQ_i(\rho_{RA}) , \bigotimes_{i=1}^N
        \cQ_i(\sigma_{RA})  \right) \\
        & \leq N \ T\!\left( \cQ(\rho_{RA}) , \cQ(\sigma_{RA})  \right),
    \end{align}
where the last inequality holds by the subadditivity of trace distance and choosing $\cQ_i= \cQ$.

Note that $\cQ = \id\otimes \left( \cN_D \circ \cU_D \circ \cdots \circ \cN_j \circ \cU_j \circ \cdots \circ \cN_1 \circ \cU_1 \right)$, 
where each $\cU_i$, for all $i\in\{1, \ldots, D\}$, is a unitary circuit applied in between each noisy channel. 
Also, let us define the following shorthand for $i\in\{1, \ldots, D\}$:
\begin{equation}
    \cP_i \coloneqq \id_R \otimes (\cN_i \circ \cU_i \circ \cdots \circ \cN_1 \circ \cU_1).
\end{equation}

Then, we have that 
\begin{align}
   & T\!\left( \cQ(\rho_{RA}) , \cQ(\sigma_{RA})  \right) \notag \\
   & \leq  \left(1- \alpha(\cN_D)\right) T\!\left( (\id_R \otimes \cU_D) \circ \cP_{D-1}(\rho_{RA}) , (\id_R \otimes \cU_D) \circ \cP_{D-1}(\sigma_{RA})  \right) \\
  & = \left(1- \alpha(\cN_D)\right) T\!\left( \cP_{D-1}(\rho_{RA}) ,  \cP_{D-1}(\sigma_{RA})  \right),  
\end{align}
where the first inequality follows from applying \cref{prop:trace_distance_complete_cont_Doeblin_bound} for the channel $\cN_D$, along with the equality $\Tr_A[ \cP_{D-1}(\rho_{RA})]=\Tr_A[ \cP_{D-1}(\sigma_{RA})]$ due to $\rho_R =\sigma_R$; and the
last equality follows by unitary invariance of the trace distance. 

Next, by the repetitive application of the Doeblin coefficient bound in order to bound the complete contraction of trace distance, due to $D-1$ noisy channels and by using unitary invariance of trace distance, we have that
\begin{align}
  T\!\left( \cQ(\rho_{RA}) , \cQ(\sigma_{RA})  \right) &\leq  {\prod_{i=1}^D \left(1- \alpha(\cN_i)\right)} \ T(\rho_{RA},\sigma_{RA}) \\
  & \leq {\prod_{i=1}^D \left(1- \alpha(\cN_i)\right)},
\end{align}
where the second inequality follows because $T(\rho_{RA},\sigma_{RA}) \leq 1$.
Finally, putting everything together, we arrive at the desired result.
\end{proof}

The lower bound in~\cref{thm:Error_miti_global} is SDP-computable. Thus, when we are given the description of the set of noisy channels, it can be evaluated efficiently when the dimension of the systems are small. However, in general, the computational complexity increases exponentially with an increasing number of qubits. Instead, for a channel $\cN= \cP^{\otimes n}$, one can use $\alpha_{\wang}(\cP)$ to provide an efficiently computable bound by employing multiplicativity of $\alpha_\wang$ (\cref{prop:multiplicativity-b-N}), as presented next.

\begin{remark}[Error Mitigation Limits with Local Noisy Channels]\label{rem:Error_miti_local}
     By considering local noisy qubit channels instead of global noisy channels  as in~\cref{thm:Error_miti_global}, we need 
\begin{align}
   N &\geq \frac{(1-2\delta)} {\prod_{i=1}^D \left(1- \alpha(\cP_i^{\otimes n})\right)}  \label{eq:local_miti_alpha}\\
   &\geq \frac{(1-2\delta)} {\prod_{i=1}^D \left(1- \alpha_{\wang}(\cP_i)^n \right)}, \label{eq:local_miti_b}
\end{align}
where $\alpha_{\wang}(\cP_i)$ is defined in~\eqref{eq:primal-b-relaxed-doeblin}.
\end{remark}

\paragraph{Discussion} Error mitigation under deep noisy quantum circuits has been studied in~\cite{takagi2023universal,quek2024exponentially}. Those works showed that there is an exponential dependence of the depth of the circuit on the number of noisy data samples required for error mitigation. This suggests that, without access to an exponential number of noisy data samples, it is not possible to mitigate errors when estimating the expectation values of observables (see \cref{def:error_mitigation} for the success criterion of an error-mitigation protocol). 

In particular,~\cite[Theorem~3]{takagi2023universal} showed this exponential scaling of the required number of noisy input samples when the noisy quantum channels are depolarizing noise channels (i.e., consider $\cQ_i$ in~\cref{fig:error_mitigation} as given by~\eqref{eq:noisy_QC} with $R$ system being the trivial system and each $\cN_i$ being a depolarizing channel with flip parameter $p$). In~\cite{quek2024exponentially}, the study was extended to general unital and non-unital noise channels with the assumption of random unitaries in the circuit (i.e., consider $\cQ_i$ in~\cref{fig:error_mitigation} as given by~\eqref{eq:noisy_QC} with $R$ system being the trivial system and each $\cU_i$ being (independently) randomly sampled from the set of unitaries that form a unitary two-design). In this setting with random unitaries and qubit noise channels, Ref.~\cite{quek2024exponentially} showed that there exists an exponential overhead with respect to both the depth and the width (number of qubits) of the circuit.

However, in practice, the unitaries are not always random quantum circuits (i.e., they can be fixed or chosen from a different sampling or optimization approach), and the noise channels that can arise in quantum devices are far beyond depolarizing noise or even qubit noise channels. In our work, using Doeblin coefficients, we extend the analysis of the resource requirements for successful error mitigation to a large class of noisy channels and quantum circuit architectures. \cref{thm:Error_miti_global} demonstrates that this exponential dependence holds for a large class of noise channels whenever $\alpha(\cN_i) >0$. 

Let us consider an example setting in which each $\cN_i=\cN_{p,\eta}$ for $p \in [0,1]$ and $\eta \in (0,1)$. Then, using the previously known bound from~\cite[Eq.~(I22)]{takagi2023universal} (by choosing $k=1$ therein), we have the following lower bound
\begin{equation} \label{eq:SC_GAD_previous}
    N \geq \frac{(1-2\delta)^2 \min\{p,1-p\} \lambda_{\min}(\Gamma^{N_{p,\eta}})}{ \eta^D 8 \ln (2)},
\end{equation}
where 
\begin{equation}
    \lambda_{\min}(\Gamma^{\cN_{p, \eta}}) = \frac{1+\eta}{2}- \sqrt{\eta +(1-\eta)^2 (1-2p)^2}.
\end{equation}
 This then raises the question whether the exponential dependence of the depth $D$ of the circuit on the sample complexity of error mitigation techniques also vanishes in the case of $p\in \{0,1\}$. However, our result in~\cref{thm:Error_miti_global} shows that there is still an exponential dependence on $D$, as long as $\eta \in (0,1)$ for all $p \in [0,1]$, because $\alpha(\cN_{p,\eta})= (1-\sqrt{\eta})^2$. Also, note that $p=1$ is a practically relevant noise model,  known as the amplitude damping channel. This highlights a practically relevant scenario where our results provide useful insights.

Our results in~\cref{thm:Error_miti_global} hold even if there is a noiseless subsystem (denoted by $R$ in~\eqref{eq:noisy_QC}), showcasing that there exists an exponential scaling of the required resources,  characterized by $\alpha(\cN_i)$. Thus, our result is valid for a large class of noise models. One example towards this is the gate-based noise models, where the noise of the circuit is significant when only non-identity unitary circuits are applied. In this scenario, we can assume that in the noiseless system $R$, there are no non-identity unitaries in operation. Most, importantly this conveys that having a noiseless subsystem is not sufficient to circumvent the exponential resource requirement for error mitigation.

Another advantage of our results in~\cref{thm:Error_miti_global} is that the lower bound given there is SDP computable. This allows one to get a sense of how the required number of noisy data samples scales for an error mitigation protocol to achieve a desired accuracy depending on the depth of the circuit and the noise channels related to specific applications of interest. When there are local noise channels in the system (i.e., $\cN_i =\cP_i^{\otimes n}$), we see that the exponential scaling depends on $\alpha(\cP_i^{\otimes n})$ as given in~\cref{rem:Error_miti_local}. By~\cref{rem:effi_com_alpha}, the lower bound in~\eqref{eq:local_miti_alpha} can be computed with the time complexity scaling as $O\!\left(\operatorname{poly} (n)\right)$ by utilizing  permutation symmetry of $\cP_i^{\otimes n}$. Furthermore, a somewhat weaker bound given in~\eqref{eq:local_miti_b} can be efficiently computed by using $\alpha_{\wang}(\cN)$, without any computational complexity scaling with $n$.

\medskip
In summary, our results provide efficiently computable limits on the sample complexity of a broad range of error-mitigation protocols under various noise models satisfying $\alpha(\cN) >0 $ or $\alpha_{\wang}(\cN) >0$ and circuit architectures with general quantum circuits (not necessarily random circuits) and circuits with noiseless subsystems.

\subsection{Noisy Quantum Hypothesis Testing}

\label{sec:noisy-hypothesis-testing}

In this section, we study the impact of noise on the hypothesis testing of two quantum states. In particular, this refers to the scenario where we need to guess whether a given state is $\rho$ or $\sigma$, when we have access to only multiple samples of the states $\cN(\rho)$ or $\cN(\sigma)$ (finite), where $\cN$ is a noisy quantum channel. To this end, first let us recall the noiseless scenario, for which $\cN$ is the identity channel, and then utilize those results together with contraction coefficients and Doeblin coefficients to quantify the impact of having access only to noisy samples of quantum data.

\paragraph{Sample Complexity of Noiseless Quantum Hypothesis Testing} 

\textit{Problem setup}: 
Suppose that there are two states $\rho$ and $\sigma$,   $\rho^{\otimes n}$ is selected with probability $\beta\in (0,1)$, and $\sigma^{\otimes n}$ is selected with probability $1-\beta$. The sample complexity is equal to the minimum value of $n$ needed to reach a constant error probability in deciding which state was selected. To define this quantity formally, let us recall that the
Helstrom--Holevo theorem~\cite{helstrom1967detection,holevo1973statistical} states that the optimal error probability $p_{e}(\rho
,\sigma,\beta)$ of symmetric quantum hypothesis testing is as follows:
\begin{align}
p_{e}(\rho,\sigma,\beta)  &  \coloneqq \min_{M_{1},M_{2}\geq0 } \left\{ \beta \operatorname{Tr}
[M_{2}\rho]+ (1-\beta) \operatorname{Tr}[M_{1}\sigma]:M_1 + M_2 = I \right\}
\\
&  =\frac{1}{2}\left(  1-\left\Vert \beta \rho-(1-\beta)\sigma\right\Vert _{1}\right) \label{eq:probability_error} .
\end{align}

With this in mind, we are assuming in this paradigm that there is a constant $ \varepsilon \in [0,1]$, and our goal is to determine the minimum value of $n$ such
that
\begin{equation}
p_{e}\!\left(\rho^{\otimes n},\sigma^{\otimes n}, \beta \right) \coloneqq \frac{1}{2}\left(  1-\left\Vert \beta \rho^{\otimes n}-(1-\beta) \sigma^{\otimes
n}\right\Vert _{1}\right) \leq \varepsilon.\label{eq:eps-n-relation}
\end{equation}
To this end, let us define 
\begin{equation}\label{eq:non_private_SC}
    \mathrm{SC}_{(\rho,\sigma)}(\varepsilon,\beta) \coloneqq \min\! \left\{ n\in \mathbb{N} : p_{e}\!\left(\rho^{\otimes n},\sigma^{\otimes n},\beta \right) \leq  \varepsilon \right\},
\end{equation}
and recall Theorem~7 and Corollary~8 of~\cite{cheng2024sample}.

\begin{theorem}[Sample Complexity of Symmetric Hypothesis Testing (Theorem~7 and Corollary~8 of~\cite{cheng2024sample}), {\cite[Theorem~2]{Nuradha_2025QueryComplexity}}] \label{thm:sample_C_no_private}
Let $\rho$ and $\sigma$ be states. Fix $\varepsilon \in [0,1]$ and $\beta \in (0,1)$. The following statements hold:
    \begin{enumerate}
        \item  If 
    $\rho \perp \sigma$ (i.e., $\rho \sigma= 0$),
    $\varepsilon\in [1/2,1]$, or $\exists\, s\in[0,1]$ such that $\varepsilon \geq \beta^s (1-\beta)^{1-s}$, then 
	\begin{align}
\label{eq:binary_symmetric1}
		\mathrm{SC}_{\left(\rho,\sigma\right)}(\varepsilon,\beta) = 1.
	\end{align}
	\item If $\rho = \sigma$ and $\min\{\beta, 1-\beta\} > \varepsilon \in [0,1/2)$, then
	\begin{align}
		\label{eq:binary_symmetric2}
		\mathrm{SC}_{\left(\rho,\sigma\right)}(\varepsilon,\beta)  = +\infty.
	\end{align}
    \item If the conditions mentioned above are excluded, then
\begin{equation}
  \max\!\left\{ \frac{\ln\!\left( \frac{\beta (1-\beta)}{\varepsilon (1-\varepsilon)} \right) }{ -\ln F(\rho,\sigma) } ,\frac{1-\frac{\varepsilon(1-\varepsilon)}{\beta (1-\beta)}}{ \left[d_{\mathrm{B}}(\rho,\sigma)\right]^2  } \right\} \leq  \mathrm{SC}_{(\rho,\sigma)}(\varepsilon,\beta)  \leq \left \lceil \frac{ 2\ln \!\left( \frac{\sqrt{ \beta (1-\beta)} }{ \varepsilon } \right) }{-\ln  F(\rho,\sigma)} \right\rceil,
\end{equation}
where $F(\rho,\sigma) \coloneqq \|\sqrt{\rho} \sqrt{\sigma}\|_1^2$ and 
$\left[d_B(\rho,\sigma)\right]^2 \coloneqq 2 \left(1- \sqrt{F(\rho,\sigma)}\right)$.
    \end{enumerate}
\end{theorem}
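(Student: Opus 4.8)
The plan is to follow the route of~\cite{cheng2024sample}. Write $p_e(n) \coloneqq p_e(\rho^{\otimes n},\sigma^{\otimes n},\beta)$ and observe first that $p_e(n)$ is non-increasing in $n$, since discarding one of $n+1$ copies is a channel and hence any test on $n$ copies lifts to a test on $n+1$ copies; thus $\mathrm{SC}_{(\rho,\sigma)}(\varepsilon,\beta)$ is exactly the first time the sequence $(p_e(n))_{n\geq 1}$ drops to or below $\varepsilon$. The two analytic inputs I would use are: (i) multiplicativity of fidelity, $F(\rho^{\otimes n},\sigma^{\otimes n}) = F(\rho,\sigma)^n$, which also gives subadditivity of the squared Bures distance, $[d_B(\rho^{\otimes n},\sigma^{\otimes n})]^2 = 2(1-(\sqrt{F(\rho,\sigma)})^{n}) \leq n\,[d_B(\rho,\sigma)]^2$ via $1-x^n\leq n(1-x)$; and (ii) weighted Fuchs--van de Graaf inequalities sandwiching $p_e$ between expressions in the root fidelity $\sqrt{F(\rho,\sigma)}=\|\sqrt{\rho}\sqrt{\sigma}\|_1$, together with the weighted quantum Chernoff bound $p_e(n)\leq \beta^{s}(1-\beta)^{1-s}\bigl(\operatorname{Tr}[\rho^{s}\sigma^{1-s}]\bigr)^{n}$ valid for all $s\in[0,1]$.

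Next I would dispose of the degenerate regimes. The state-independent test that always guesses the a~priori more likely hypothesis achieves error exactly $\min\{\beta,1-\beta\} = \min_{s\in[0,1]}\beta^{s}(1-\beta)^{1-s}$; hence whenever $\varepsilon\geq 1/2$ or whenever $\varepsilon\geq\beta^{s}(1-\beta)^{1-s}$ for some $s$, we already have $p_e(1)\leq\varepsilon$. If instead $\rho\perp\sigma$, then $\|\beta\rho-(1-\beta)\sigma\|_1=1$, so $p_e(1)=0$. In all these cases $\mathrm{SC}=1$, which is Item~1. For Item~2, when $\rho=\sigma$ one computes $p_e(n)=\tfrac12(1-|2\beta-1|)=\min\{\beta,1-\beta\}$ for every $n$, and this is $>\varepsilon$ by hypothesis, so no $n$ ever succeeds and $\mathrm{SC}=+\infty$.

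For Item~3 we are in the complementary regime, where in particular $F(\rho,\sigma)<1$ (else $\rho=\sigma$) and $\varepsilon<\sqrt{\beta(1-\beta)}$ (else Item~1 applies), so $-\ln F(\rho,\sigma)>0$ and $\ln(\sqrt{\beta(1-\beta)}/\varepsilon)>0$. The upper bound follows from the upper Fuchs--van de Graaf bound $p_e(\rho',\sigma',\beta)\leq\sqrt{\beta(1-\beta)}\sqrt{F(\rho',\sigma')}$: applied to $n$ copies it gives $p_e(n)\leq\sqrt{\beta(1-\beta)}\,F(\rho,\sigma)^{n/2}$, which is $\leq\varepsilon$ as soon as $n\geq 2\ln(\sqrt{\beta(1-\beta)}/\varepsilon)/(-\ln F(\rho,\sigma))$, so the ceiling of that quantity is a valid sample count. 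For the lower bounds I would use a lower Fuchs--van de Graaf bound of the form $p_e(\rho',\sigma',\beta)\geq\tfrac12\bigl(1-\sqrt{1-4\beta(1-\beta)F(\rho',\sigma')}\bigr)$, which (after squaring) shows $p_e(n)\geq\varepsilon$ whenever $F(\rho,\sigma)^{n}\geq\varepsilon(1-\varepsilon)/(\beta(1-\beta))$; rearranging this and bounding crudely produces the $\ln(\beta(1-\beta)/\varepsilon)/(-\ln F(\rho,\sigma))$ lower bound, while re-expressing the same threshold through $F=(1-\tfrac12[d_B]^2)^2$, using $2(1-\sqrt{x})\geq 1-x$, and invoking the subadditivity $[d_B(\rho^{\otimes n},\sigma^{\otimes n})]^2\leq n[d_B(\rho,\sigma)]^2$ produces the $(1-\varepsilon(1-\varepsilon)/(\beta(1-\beta)))/[d_B(\rho,\sigma)]^2$ lower bound. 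Taking the maximum of the two gives Item~3.

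The main obstacle is not conceptual but bookkeeping: pinning down the exact weighted Fuchs--van de Graaf inequalities with the correct $\beta(1-\beta)$ prefactors, checking that the hypotheses isolating Item~3 are precisely the complement of those in Items~1--2 (so that the logarithms and Bures-distance denominators appearing are genuinely positive), and carrying out the monotone rearrangements that turn the sharp-but-unwieldy bounds into the closed forms stated. Since the statement is quoted from~\cite{cheng2024sample}, one may of course simply cite it; the above indicates the shape of the argument behind it.
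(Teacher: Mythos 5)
The paper does not prove this statement at all: it is imported verbatim as Theorem~7 and Corollary~8 of the cited reference \cite{cheng2024sample}, and the text merely ``recalls'' it. So there is no in-paper proof to compare against; the only question is whether your reconstruction of the cited argument is sound, and it is. Your handling of the degenerate cases is correct (in particular, $\min_{s\in[0,1]}\beta^{s}(1-\beta)^{1-s}=\min\{\beta,1-\beta\}$ is achieved by the trivial guess, and $p_e(n)=\min\{\beta,1-\beta\}$ for all $n$ when $\rho=\sigma$), the exclusion of Items~1--2 does force $F(\rho,\sigma)<1$ and $\varepsilon<\min\{1/2,\sqrt{\beta(1-\beta)}\}$ so all denominators and logarithms in Item~3 are positive, and the chain $p_e(n)\le\varepsilon \Rightarrow F^{n}\le \varepsilon(1-\varepsilon)/(\beta(1-\beta))$ obtained by squaring the weighted lower Fuchs--van de Graaf bound, combined with $\ln(\beta(1-\beta)/(\varepsilon(1-\varepsilon)))\ge\ln(\beta(1-\beta)/\varepsilon)$, the identity $2(1-\sqrt{x})\ge 1-x$, and $1-x^{n}\le n(1-x)$, yields exactly the two stated lower bounds; the upper bound via $p_e(n)\le\sqrt{\beta(1-\beta)}\,F(\rho,\sigma)^{n/2}$ and multiplicativity of fidelity is likewise correct. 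This matches the route taken in \cite{cheng2024sample}, so your sketch is a faithful stand-in for the omitted proof; for the purposes of this paper a citation suffices, as you note.
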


 In~\cite{cheng2024sample}, the sample complexity of hypothesis testing of noiseless quantum data has been studied. Following that, the sample complexity of hypothesis testing with access to only privatized (noisy) quantum states was studied in~\cite{nuradha2024contraction,Christoph2024sample}. In this case, privacy constraints are imposed by quantum local differential privacy~\cite{hirche2022quantum}. See~\cite{QDP_computation17,nuradha2023quantum} for further details on ensuring privacy for quantum data. The contraction coefficients of quantum divergences under private (noisy) channels have been used as one of the key technical tools in the analysis of private quantum hypothesis testing. 

In this work, we extend the study of the sample complexity of noisy quantum hypothesis testing when we are given noisy channel outputs of the original input states to discriminate. We utilize quantum Doeblin coefficients and the tools developed in this work to do so.

Let $\rho$ and $\sigma$ be states. Define, from~\cite{hirche2024quantumDivergences},
\begin{equation} \label{eq:Helinger_1_2}
    H_{1/2}\!\left(\rho \Vert \sigma \right) \coloneqq \frac{1}{2} \int_{1}^\infty \left[ E_\gamma(\rho \Vert \sigma) + E_\gamma(\sigma \Vert \rho) \right] \gamma^{-3/2}\,  \mathsf{d} \gamma.
\end{equation}

\begin{proposition}[Sample Complexity of Noisy Quantum Hypothesis Testing] \label{Prop:SC_noisy}
    Let $\rho$ and $\sigma$ be states, and let $\cN$ be a quantum channel. Fix $\varepsilon \in [0,1]$ and $\beta \in (0,1)$. The following statements hold:
    \begin{enumerate}
        \item \label{one_cond} If 
    $\cN(\rho) \perp \cN(\sigma)$ (i.e., $\cN(\rho) \cN(\sigma) = 0$),
    $\varepsilon\in [1/2,1]$, or $\exists\, s\in[0,1]$ such that $\varepsilon \geq \beta^s (1-\beta)^{1-s}$, then 
	\begin{align}
		\mathrm{SC}_{\left(\cN(\rho),\cN(\sigma)\right)}(\varepsilon,\beta) = 1.
	\end{align}
	\item \label{second_cond} If $\cN(\rho) = \cN(\sigma)$ and $\min\{\beta, 1-\beta\} > \varepsilon \in [0,1/2)$, then
	\begin{align}
		\mathrm{SC}_{\left(\cN(\rho),\cN(\sigma)\right)}(\varepsilon,\beta)  = +\infty.
	\end{align}
    \item If the first two conditions are excluded, then
    \begin{multline}
      \frac{1-\frac{\varepsilon(1-\varepsilon)}{\beta (1-\beta)}}{\left(1-\alpha(\cN)\right) H_{1/2}(\rho \Vert \sigma)}\leq   \mathrm{SC}_{\left(\cN(\rho),\cN(\sigma)\right)}(\varepsilon,\beta) \\
      \leq \left \lceil 2 \ln\!\left( \frac{\sqrt{\beta (1-\beta)}}{\varepsilon} \right) \!\left( \frac{1}{\left(1-\check{\alpha}(\cN) \right) T(\rho, \sigma)} \right)^2  \right\rceil, 
    \end{multline}
    where $\alpha(\cN)$ and $\check{\alpha}(\cN)$ are defined in~\eqref{eq:alpha_cN_def} and~\eqref{eq:reverse_Doeblin}, respectively \footnote{Note that $1-\alpha(\cN)$ can be replaced by $\eta_{\Tr}(\cN)$ for a tighter lower bound and $1-\check{\alpha}(\cN)$ by $\check{\eta}_{\Tr}(\cN)$ for a tighter upper bound, whenever they are analytically computable. For an example, consider the GAD channel (\cref{lem:GADC_contr_coef}).}. 
    
    \item If the first two conditions are not satisfied and
    \begin{equation}
    \lambda_{\rho,\sigma,\cN} \coloneqq \max\left\{ \lambda_{\min}(\cN(\rho)),\lambda_{\min}(\cN(\sigma)) \right\} > 0,    
    \end{equation}
    then
    \begin{equation}
  \mathrm{SC}_{\left(\cN(\rho),\cN(\sigma)\right)}(\varepsilon,\beta)  \geq \frac{ \lambda_{\rho,\sigma, \cA}}{4} \ln\!\left( \frac{\beta (1-\beta)}{\varepsilon (1-\varepsilon)} \right) \left(\frac{1}{ \left(1-\alpha(\cN)\right) T(\rho, \sigma)} \right)^2.
\end{equation}
    \end{enumerate}
\end{proposition}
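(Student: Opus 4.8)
The plan is to reduce the four parts of \cref{Prop:SC_noisy} to the noiseless sample-complexity bounds of \cref{thm:sample_C_no_private}, applied to the pair $(\cN(\rho),\cN(\sigma))$, and then replace the various distinguishability measures appearing there by quantities controlled through Doeblin and reverse-Doeblin coefficients. Parts~1 and~2 are immediate: they are exactly Items~1 and~2 of \cref{thm:sample_C_no_private} with $\rho,\sigma$ replaced by $\cN(\rho),\cN(\sigma)$, since $\mathrm{SC}_{(\cN(\rho),\cN(\sigma))}(\varepsilon,\beta)$ is literally the noiseless sample complexity of the output states; no contraction estimate is needed there. So the work is entirely in Items~3 and~4.

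For the lower bound in Item~3, I would start from the lower bound $\mathrm{SC}_{(\cN(\rho),\cN(\sigma))}(\varepsilon,\beta) \geq \bigl(1-\tfrac{\varepsilon(1-\varepsilon)}{\beta(1-\beta)}\bigr)/[d_B(\cN(\rho),\cN(\sigma))]^2$ from Item~3 of \cref{thm:sample_C_no_private}. Then I need $[d_B(\cN(\rho),\cN(\sigma))]^2 \leq (1-\alpha(\cN))\,H_{1/2}(\rho\|\sigma)$. The natural route is: $[d_B(\cdot,\cdot)]^2$ is (up to constants) the Hellinger-$1/2$ divergence $H_{1/2}$, which is an $f$-divergence of the integral form in~\eqref{eq:f_divergence} with $f(x)$ chosen so that $f''(\gamma)$ and $\gamma^{-3}f''(\gamma^{-1})$ both equal $\tfrac12\gamma^{-3/2}$ — this matches~\eqref{eq:Helinger_1_2}. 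Hence by \cref{Cor:CC_f_div} (or more directly by the single-divergence contraction \cref{prop:CC_HS} integrated against $\tfrac12\gamma^{-3/2}$, which only requires $\alpha_+$, but we may freely use $1-\alpha(\cN)\geq 1-\alpha_+(\cN)$ as an upper bound, so the weaker statement with $\alpha(\cN)$ follows a fortiori) we get $H_{1/2}(\cN(\rho)\|\cN(\sigma)) \leq (1-\alpha(\cN))\,H_{1/2}(\rho\|\sigma)$, and I would check the normalization constant relating $[d_B]^2$ to $H_{1/2}$ so the two-factor product comes out as stated. For the upper bound in Item~3, I start from the upper bound $\mathrm{SC}_{(\cN(\rho),\cN(\sigma))}(\varepsilon,\beta)\leq \lceil 2\ln(\sqrt{\beta(1-\beta)}/\varepsilon)/(-\ln F(\cN(\rho),\cN(\sigma)))\rceil$ of Item~3, and then I need a lower bound on $-\ln F(\cN(\rho),\cN(\sigma))$ in terms of $T(\rho,\sigma)$ and $\check\alpha(\cN)$. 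Here the expansion-coefficient estimate \eqref{eq:expansion_trace} gives $T(\cN(\rho),\cN(\sigma)) \geq (1-\check\alpha(\cN))\,T(\rho,\sigma)$, and combining with the standard Fuchs--van de Graaf inequality $-\ln F(\omega,\tau) \geq \dots$ — more precisely $1-\sqrt{F}\geq \tfrac12(1-\sqrt F)$ type manipulations, using $-\ln F \geq 1-F \geq (1-\sqrt F)^2 \geq T^2$ up to the relevant constant — yields $-\ln F(\cN(\rho),\cN(\sigma)) \geq [T(\cN(\rho),\cN(\sigma))]^2 \geq (1-\check\alpha(\cN))^2 [T(\rho,\sigma)]^2$, which is exactly what is needed inside the ceiling.

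For Item~4, I would use the other lower bound term of Item~3 of \cref{thm:sample_C_no_private}, namely $\mathrm{SC}\geq \ln(\beta(1-\beta)/\varepsilon)/(-\ln F(\cN(\rho),\cN(\sigma)))$, together with an upper bound on $-\ln F$ valid when one of the output states is full rank. When $\lambda_{\min}(\cN(\rho))>0$ or $\lambda_{\min}(\cN(\sigma))>0$, one has a reverse relation of the form $-\ln F(\omega,\tau) \leq C\,[T(\omega,\tau)]^2/\lambda_{\min}$ for an appropriate constant (this is the "pinching"/full-rank reverse Pinsker-type bound; the factor $1/4$ in the statement and the role of $\lambda_{\rho,\sigma,\cN}$ suggest exactly this). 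Applying $T(\cN(\rho),\cN(\sigma)) \leq (1-\alpha(\cN))\,T(\rho,\sigma)$ from \eqref{eq:contraction-coeff-to-complete}+\cref{prop:trace_distance_complete_cont_Doeblin_bound}, i.e.\ the contraction bound on trace distance, then gives $-\ln F(\cN(\rho),\cN(\sigma)) \leq C (1-\alpha(\cN))^2 [T(\rho,\sigma)]^2/\lambda_{\rho,\sigma,\cN}$, and dividing into $\ln(\beta(1-\beta)/\varepsilon)$ produces the claimed bound after fixing $C$.

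The main obstacle I anticipate is getting the constants and the precise chain of elementary inequalities between $F$, $d_B^2$, $H_{1/2}$, and $T$ exactly right — in particular verifying that $H_{1/2}$ as defined in \eqref{eq:Helinger_1_2} is indeed the $f$-divergence to which \cref{Cor:CC_f_div} applies (checking $f(1)=0$, convexity, the $f''$ identification), and pinning down the reverse bound $-\ln F \leq \tfrac14 [T]^2/\lambda_{\min}$ with the stated $1/4$, which requires a careful argument using that the full-rank operator lower-bounds the geometry. The contraction/expansion inputs themselves (Propositions~\ref{prop:trace_distance_complete_cont_Doeblin_bound}, \ref{prop:CC_HS}, \ref{Cor:CC_f_div}, and \eqref{eq:expansion_trace}) are all available and plug in cleanly; the analytic footwork relating the divergences is where care is needed.
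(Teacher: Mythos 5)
Your overall strategy — reduce to \cref{thm:sample_C_no_private} applied to $(\cN(\rho),\cN(\sigma))$ and then control the output-state divergences via contraction and expansion coefficients — is the same as the paper's, and Items~1, 2, the upper bound of Item~3, and Item~4 are in essence the right arguments (the paper's Item~4 uses $-\ln F \le D(\cdot\Vert\cdot)$ followed by the Audenaert-type reverse Pinsker bound $D(\cN(\rho)\Vert\cN(\sigma)) \le 4\,[T(\cN(\rho),\cN(\sigma))]^2/\lambda_{\min}(\cN(\sigma))$, which is exactly the ``full-rank reverse Pinsker'' step you anticipated).

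There is, however, a genuine gap in your route to the lower bound of Item~3. Both \cref{Cor:CC_f_div} and \cref{prop:CC_HS} give contraction with the factor $1-\alpha_{+}(\cN)$, and since $\alpha_{+}(\cN) \le \alpha(\cN)$ we have $1-\alpha_{+}(\cN) \ge 1-\alpha(\cN)$: the $\alpha_{+}$ bound is the \emph{weaker} one, so the inequality $H_{1/2}(\cN(\rho)\Vert\cN(\sigma)) \le (1-\alpha_{+}(\cN))\,H_{1/2}(\rho\Vert\sigma)$ does not imply the claimed $H_{1/2}(\cN(\rho)\Vert\cN(\sigma)) \le (1-\alpha(\cN))\,H_{1/2}(\rho\Vert\sigma)$; your ``a fortiori'' step has the inequality between $\alpha$ and $\alpha_{+}$ backwards. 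The paper avoids this by not using the complete/reference-system contraction results at all: since no reference system is needed here, it uses that the (ordinary) contraction coefficient of the $f$-divergence $H_{1/2}$ is upper bounded by $\eta_{\Tr}(\cN)$, and then $\eta_{\Tr}(\cN) \le 1-\alpha(\cN)$, together with the cited inequality $[d_B(\cN(\rho),\cN(\sigma))]^2 = 2\bigl(1-\sqrt{F}(\cN(\rho),\cN(\sigma))\bigr) \le H_{1/2}(\cN(\rho)\Vert\cN(\sigma))$ (an inequality, not an equality up to normalization). Separately, in your upper-bound chain the intermediate step $(1-\sqrt{F})^2 \ge T^2$ is false in general (Fuchs--van de Graaf gives $1-\sqrt{F} \le T$); the needed conclusion $-\ln F \ge T^2$ still holds, but via $-\ln F \ge 1-F \ge T^2$ (or, as in the paper, $-\ln\sqrt{F} \ge \tfrac12[d_B]^2 \ge \tfrac12 T^2$), so that chain needs to be repaired.
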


\begin{proof}
    Items (1) and (2) follow directly by~\cref{thm:sample_C_no_private}.

    For the lower bound in Item (3), consider the following:
    \begin{align}
       \left[d_B\!\left( \cN(\rho), \cN(\sigma) \right)\right]^2 
  & =   2\left(1-\sqrt{F}\!\left(\cN(\rho),\cN(\sigma) \right)\right)  \\
  &\leq H_{1/2}\!\left( \cN(\rho) \Vert \cN(\sigma) \right) \\ 
  & \leq \eta_{\Tr}(\cN) H_{1/2}\!\left( \rho \Vert \sigma \right) \\
  &\leq \left(1-\alpha(\cN) \right)H_{1/2}\!\left( \rho \Vert \sigma \right),
    \end{align}
where the first inequality follows by~\cite[Eq.~(5.50)]{hirche2024quantumDivergences}, the second inequality by identifying that $H_{1/2}$ is an $f$-divergence and  its contraction coefficient is upper bounded by the contraction coefficient of the trace distance; and the last inequality because $\eta_{\Tr}(\cN) \leq 1-\alpha(\cN)$.
Substituting the above in the lower bound with the denominator $\left[d_{\mathrm{B}}(\rho,\sigma)\right]^2 $ in Item 3 of~\cref{thm:sample_C_no_private} concludes the lower bound.

For the upper bound in Item (3), consider that 
\begin{equation} \label{eq:item3_1}
    -\ln \sqrt{F}\!\left(\cN(\rho),\cN(\sigma) \right) \geq \frac{1}{2} \left[T\!\left(\cN(\rho),\cN(\sigma) \right)\right]^2,
\end{equation}
which follows because 
\begin{align}
    {-\ln \sqrt{F}(\rho,\sigma)} & \geq \frac{1}{2}{\left[d_{\mathrm{B}}(\rho,\sigma)\right]^2 } 
    \geq \frac{1}{2}{\left[T(\rho,\sigma)\right]^2 }.
\end{align}
The above inequalities follow because $-\ln(x) \geq 1- x$ for $x >0$, and the last one due to the well known Fuchs-van-de-Graaf inequalities.
Next, by employing the bound in \eqref{eq:expansion_trace} involving the reverse Doeblin coefficient, we have that 
\begin{equation}\label{eq:item3_2}
T\!\left(\cN(\rho),\cN(\sigma) \right) \geq \left( 1- \check{\alpha}(\cN) \right) T(\rho,\sigma).
\end{equation}
Combining~\eqref{eq:item3_1} and~\eqref{eq:item3_2}, together with the upper bound in Item~(3) of \cref{thm:sample_C_no_private}, completes the proof of the upper bound. 

Now, let us prove the claim in Item (4). First recall that 
\begin{equation}
    -\ln F(\rho,\sigma) \leq D(\rho \Vert \sigma).
\end{equation}
Also recall the assumption $\lambda_{\rho,\sigma,\cN} \coloneqq \max\left\{ \lambda_{\min}(\cN(\rho)),\lambda_{\min}(\cN(\sigma)) \right\} > 0$. Without loss of generality, suppose that $\lambda_{\rho,\sigma,\cN} = \lambda_{\min}(\cN(\sigma))$ (when it is otherwise, interchange $\rho$ and $\sigma)$. With that, we have that
\begin{align}
     -\ln F\!\left(\cN(\rho),\cN(\sigma) \right) & \leq D\!\left(\cN(\rho) \Vert \cN(\sigma) \right) \\
     & \leq \frac{4 \left[T\!\left( \cN(\rho), \cN(\sigma) \right) \right]^2}{\lambda_{\min} \! \left( \cN(\sigma) \right) } \\
     & \leq \frac{4 \left(1-\alpha(\cN) \right)^2} {\lambda_{\min} \! \left( \cN(\sigma) \right)} \left[T(\rho,\sigma)\right]^2,
\end{align}
where the second inequality follows by~\cite[Theorem~2]{audenaert2005continuity} and the monotonicity of norms such that $\left\|A\right\|_2 \leq \left\|A\right\|_1$; and the last inequality by $\eta_{\Tr} (\mathcal{N}) \leq 1-\alpha(\cN)$.
Plugging the above inequality in the lower bound with the denominator $-\ln F(\rho,\sigma)$ in Item (3) of~\cref{thm:sample_C_no_private} completes the proof.
\end{proof}

\begin{remark}[Noisy Hypothesis Testing of Quantum States with Reference Systems]
    Let $\rho_{RA}$ and $\sigma_{RA}$ be quantum states such that $\rho_R=\sigma_R$. Let noise channel $\cN$ act on system $A$.
    Then, using the same proof techniques in~\cref{Prop:SC_noisy}, we can obtain the following upper and lower bounds as follows:
  \begin{multline}
      \frac{1-\frac{\varepsilon(1-\varepsilon)}{\beta (1-\beta)}}{\left(1-\alpha_{+}(\cN)\right) H_{1/2}(\rho_{RA} \Vert \sigma_{RA})}\leq   \mathrm{SC}_{\left(\id_R \otimes \cN(\rho_{RA}),\id_R \otimes \cN(\sigma_{RA})\right)}(\varepsilon,\beta) \\
      \leq \left \lceil 2 \ln\!\left( \frac{\sqrt{\beta (1-\beta)}}{\varepsilon} \right) \!\left( \frac{1}{\left(1-\check{\alpha}(\cN) \right) T(\rho_{RA}, \sigma_{RA})} \right)^2  \right\rceil, 
    \end{multline}
    where $\alpha_{+}$ is defined in~\cref{def:alpha-plus-doeblin}, and $\check{\alpha}$ in~\eqref{eq:reverse_Doeblin}.
    The main technical tools employed in this claim are:
    \begin{equation}
      H_{1/2}\!\left( \id_R \otimes \cN (\rho_{RA}) \Vert \id_R \otimes \cN (\sigma_{RA}) \right)   \leq \left(1- \alpha_{+} (\cN)\right) H_{1/2}(\rho_{RA} \Vert \sigma_{RA}),
    \end{equation}
  which follows from~\cref{Cor:CC_f_div} by the choice $D_f= H_{1/2}$ in~\eqref{eq:Helinger_1_2},  and 
    \begin{equation}
       T\!\left( \id_R \otimes \cN (\rho_{RA}) \Vert \id_R \otimes \cN (\sigma_{RA}) \right)   \geq \left(1- \check{\alpha}(\cN) \right)T(\rho_{RA} \Vert \sigma_{RA}),
    \end{equation}
    which follows due to~\cref{prop:trace_distance_complete_exp_Doeblin_bound}.
\end{remark}

\begin{corollary}[Generalized Amplitude Damping Channels]
\label{cor:GADC-sample-complexity-bounds}
Let $\rho$ and $\sigma$ be qubit states, and let $\cN \equiv \cN_{p,\eta} $ be a generalized damping channel with parameters $\eta \in (0,1)$ and $p\in[0,1]$. Fix $\varepsilon \in (0,1)$ and $\beta \in (0,1)$. For the cases where the first two conditions of~\cref{Prop:SC_noisy} do not hold, we have the following upper and lower bound on the sample complexity:
\begin{equation}
 \frac{1-\frac{\varepsilon(1-\varepsilon)}{\beta (1-\beta)}}{ \sqrt{\eta} H_{1/2}(\rho \Vert \sigma)}\leq  \\ 
\mathrm{SC}_{\left(\cN(\rho),\cN(\sigma)\right)}(\varepsilon,\beta) \leq \left \lceil 2 \ln\!\left( \frac{\sqrt{\beta (1-\beta)}}{\varepsilon} \right) \!\left( \frac{1}{\eta T(\rho, \sigma)} \right)^2  \right\rceil.
\end{equation}   
\end{corollary}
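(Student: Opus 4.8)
\textbf{Proof proposal for \cref{cor:GADC-sample-complexity-bounds}.}

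The plan is to invoke \cref{Prop:SC_noisy}, Item~(3), and then substitute the exact values of the relevant Doeblin and reverse Doeblin coefficients for the generalized amplitude damping channel that were computed earlier in the paper. Concretely, the general bound reads
\begin{equation*}
\frac{1-\frac{\varepsilon(1-\varepsilon)}{\beta(1-\beta)}}{\left(1-\alpha(\cN)\right)H_{1/2}(\rho\Vert\sigma)} \leq \mathrm{SC}_{\left(\cN(\rho),\cN(\sigma)\right)}(\varepsilon,\beta) \leq \left\lceil 2\ln\!\left(\frac{\sqrt{\beta(1-\beta)}}{\varepsilon}\right)\!\left(\frac{1}{\left(1-\check{\alpha}(\cN)\right)T(\rho,\sigma)}\right)^{2}\right\rceil,
\end{equation*}
so all that remains is to plug in $\cN = \cN_{p,\eta}$ and simplify the two prefactors $1-\alpha(\cN_{p,\eta})$ and $1-\check{\alpha}(\cN_{p,\eta})$.

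The first step is to recall from \cref{lem:Doeblin_GAD} that $\alpha(\cN_{p,\eta}) = (1-\sqrt{\eta})^{2}$, hence $1-\alpha(\cN_{p,\eta}) = 1-(1-\sqrt{\eta})^{2} = 2\sqrt{\eta}-\eta$. However, the footnote attached to Item~(3) of \cref{Prop:SC_noisy} points out that $1-\alpha(\cN)$ may be replaced by $\eta_{\Tr}(\cN)$ for a tighter lower bound whenever the contraction coefficient is analytically available; and from \cref{lem:GADC_contr_coef} we have $\eta_{\Tr}(\cN_{p,\eta}) = \sqrt{\eta}$. Since $\sqrt{\eta} \leq 2\sqrt{\eta}-\eta$ for $\eta\in(0,1)$, using $\eta_{\Tr}(\cN_{p,\eta})=\sqrt{\eta}$ gives the sharper denominator $\sqrt{\eta}\,H_{1/2}(\rho\Vert\sigma)$ claimed in the statement. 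The second step is the analogous substitution for the upper bound: the same footnote allows replacing $1-\check{\alpha}(\cN)$ by $\check{\eta}_{\Tr}(\cN)$, and \cref{lem:GADC_contr_coef} gives $\check{\eta}_{\Tr}(\cN_{p,\eta}) = \eta$, which is at least $1-\check{\alpha}(\cN_{p,\eta})$ (by \eqref{eq:expansion_trace}) and hence yields the tighter factor $\left(\eta\,T(\rho,\sigma)\right)^{-2}$ inside the ceiling. Assembling these two substitutions into the general two-sided bound produces exactly the displayed inequality.

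The only genuinely substantive inputs are \cref{lem:Doeblin_GAD}/\cref{lem:GADC_contr_coef} (already proved) and \cref{Prop:SC_noisy} (already proved), so there is no real obstacle; the mild point to be careful about is which quantity is being used — one should justify that $\eta_{\Tr}(\cN_{p,\eta}) = \sqrt{\eta}$ and $\check{\eta}_{\Tr}(\cN_{p,\eta}) = \eta$ can legitimately be substituted in place of $1-\alpha$ and $1-\check{\alpha}$ respectively, which follows from the chain $\eta_{\Tr}(\cN)\leq 1-\alpha(\cN)$ (\cref{prop:trace_distance_complete_cont_Doeblin_bound} together with \eqref{eq:contraction-coeff-to-complete}) and $\check{\eta}_{\Tr}(\cN)\geq 1-\check{\alpha}(\cN)$ (\eqref{eq:expansion_trace}), both of which only strengthen the bounds in the correct direction. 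Finally one notes that $\eta\in(0,1)$ guarantees the denominators are strictly positive, so the bounds are well-defined, completing the proof.
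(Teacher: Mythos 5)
Your proposal is correct and follows essentially the same route as the paper: invoke Item~(3) of \cref{Prop:SC_noisy}, use the footnote there to replace $1-\alpha(\cN)$ by $\eta_{\Tr}(\cN_{p,\eta})=\sqrt{\eta}$ in the lower bound, and put the factor $\eta$ into the upper bound. The only cosmetic difference is the source of that $\eta$: the paper substitutes $1-\check{\alpha}(\cN_{p,\eta})=\eta$ directly (citing an external computation of the reverse Doeblin coefficient), whereas you use $\check{\eta}_{\Tr}(\cN_{p,\eta})=\eta$ from \cref{lem:GADC_contr_coef} together with $\check{\eta}_{\Tr}(\cN)\geq 1-\check{\alpha}(\cN)$ — the two coincide numerically here and both substitutions are legitimate.
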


\begin{proof}
    The proof follows by substituting $1-\check{\alpha}(\cN_{p,\eta}) =\eta$ for the generalized damping channels~(\cite[Proposition~C.1]{hirche2024quantum}) and $\eta_{\Tr}(\cN_{p,\eta}) =\sqrt{\eta}$ (\cref{lem:GADC_contr_coef}). For the lower bound, recall that the lower bounds in~\cref{Prop:SC_noisy} are tighter with the use of $\eta_{\Tr}(\cN)$ instead of $1-\alpha(\cN)$ since $\eta_{\Tr}(\cN) \leq 1-\alpha(\cN)$.
\end{proof}

\cref{cor:GADC-sample-complexity-bounds} provides an example of a noisy channel for which the sample complexity of noisy hypothesis testing  of the states $\rho$ and $\sigma$ can be finite with $\eta <1$, whenever $\cN_{p,\eta}(\rho) \neq \cN_{p,\eta}(\sigma)$.

\begin{remark}[Another Lower Bound for the special setting of GAD channel]
It is possible to obtain a lower bound that inversely depends on $\left[T(\rho,\sigma)\right]^2$ as in Item 4 of~\cref{Prop:SC_noisy} considering the minimum eigenvalues of the Choi operator of GAD channel:
\begin{equation}
    \lambda_{\min}\!\left(\Gamma^{\cN_{p, \eta}}\right) = \frac{1+\eta}{2}- \sqrt{\eta +(1-\eta)^2 (1-2p)^2}.
\end{equation}
This will lead to $\lambda_{\min} (\cN(\rho)) \geq \lambda_{\min}\!\left(\Gamma^{\cN_{p, \eta}}\right)$. This will be strictly greater than 1 when $p \notin \{0,1\}$. In particular for $p \in (0,1)$, we have that 
\begin{equation}
\mathrm{SC}_{\left(\cN_{p,\eta}(\rho),\cN_{p,\eta}(\sigma)\right)}(\varepsilon,\beta)  \geq \frac{ \lambda_{\min}\!\left(\Gamma^{\cN_{p, \eta}}\right)}{4} \ln\!\left( \frac{\beta (1-\beta)}{\varepsilon (1-\varepsilon)} \right) \left(\frac{1}{ \sqrt{\eta} T(\rho, \sigma)} \right)^2.
\end{equation}
\end{remark}

\subsection{Fairness of Noisy Quantum Models}

\label{sec:stability-fairness-q-learners}

Decision-making models are prone to unfair treatments to individuals based on different characteristics including their gender, race, etc. With the development of quantum learning models, there is also a risk of implicit biases and unfair treatments to quantum data generated by various sources. Thus, it is important to quantify fairness in quantum learning models. Most current quantum devices are noisy, and so we focus here on how fair are noisy quantum channels when used in a learning model.

Quantum fairness refers to treating all input states equally, meaning that all pairs of input states that are close in some distance metric (e.g., close in normalized trace distance) should yield similar outcomes when processed by a quantum channel~\cite{fairnessQ_verifying22}.
Define $\cA \coloneqq \cM \circ \cE$, which is a quantum-to-classical channel in which a quantum channel~$\cE$ is followed by a measurement channel comprised of a POVM $(M_i)_{i \in \cO}$. With that, quantum fairness is defined in~\cite{fairnessQ_verifying22} as follows.

\begin{definition}[$(\gamma,\beta)$-Fairness~\cite{fairnessQ_verifying22}]
 Let $\cA = \cM \circ \cE$, and let $\hat{D}(\cdot \| \cdot)$ and $d(\cdot \| \cdot)$ be distance metrics on $\cD(\cH)$ and $\cD(\cO)$, respectively. Fix $0 < \gamma,\beta \leq 1$. Then the decision model $\cA$ is $(\gamma,\beta)$ fair if for all $\rho,\sigma \in \cD(\cH)$ such  that $\hat{D}(\rho \| \sigma) \leq \gamma$,
\begin{equation}
d\!\left( \cA(\rho)\| \cA(\sigma) \right) \leq \beta.
\end{equation}
\end{definition}

We next show that noisy quantum channels provide fairness to quantum data when used as a quantum learning model.

\begin{proposition}[Fairness Guarantee from Noisy Channels]
\label{prop:Privacy-implying-fairness}
Suppose that $\hat{D}(\rho \| \sigma)= \frac{1}{2} \left \| \rho- \sigma \right\|_1 $ and $d\!\left( \cA(\rho)\| \cA(\sigma) \right)= \frac{1}{2} \sum_i \!\left| \Tr\!\left[M_i \cE(\rho-\sigma) \right]\right |$.
   We have that  $\cA=\{\cE, \{M_i\}_{i \in \cO} \}$ is $(\gamma, \beta_\gamma)$-fair for all $\rho,\sigma \in \cD(\cH)$ such  that $\hat{D}(\rho \| \sigma) \leq \gamma$, where 
    \begin{equation}
       \beta_\gamma \coloneqq  \gamma \left(1-\alpha(\cA) \right),
    \end{equation}
with $\alpha(\cN)$ defined in~\eqref{eq:alpha_cN_def}.
\end{proposition}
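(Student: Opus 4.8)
The plan is to unwind the definitions and reduce the fairness bound to the contraction inequality already established. First I would recall that $\cA = \cM \circ \cE$ is itself a quantum channel (a composition of a quantum channel and a measurement channel), so $\alpha(\cA)$ and the trace-distance contraction coefficient $\eta_{\Tr}(\cA)$ make sense. Then, under the stated choice of metrics, the quantity $d(\cA(\rho)\|\cA(\sigma))$ is exactly the normalized trace distance of the classical output distributions, i.e.\ $d(\cA(\rho)\|\cA(\sigma)) = \frac{1}{2}\sum_i |\Tr[M_i\cE(\rho-\sigma)]| = \frac{1}{2}\|\cM(\cE(\rho))-\cM(\cE(\sigma))\|_1 = T(\cA(\rho),\cA(\sigma))$, using the fact that for classical (diagonal) states the trace norm is the $\ell_1$ norm of the probability vector.

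Next I would invoke the chain $\eta_{\Tr}(\cA) \leq \eta^c_{\Tr}(\cA) \leq 1-\alpha(\cA)$, which holds by \cref{prop:complete-contraction-coeff-properties} (item~1) and \cref{prop:trace_distance_complete_cont_Doeblin_bound}. Applying the definition of $\eta_{\Tr}(\cA)$ in \eqref{eq:def-gen-contra-coef} to the pair $(\rho,\sigma)$ gives
\begin{equation}
T(\cA(\rho),\cA(\sigma)) \leq \eta_{\Tr}(\cA)\, T(\rho,\sigma) \leq (1-\alpha(\cA))\, T(\rho,\sigma).
\end{equation}
Combining this with the hypothesis $\hat{D}(\rho\|\sigma) = T(\rho,\sigma) \leq \gamma$ yields
\begin{equation}
d(\cA(\rho)\|\cA(\sigma)) = T(\cA(\rho),\cA(\sigma)) \leq (1-\alpha(\cA))\,\gamma = \beta_\gamma,
\end{equation}
which is precisely the $(\gamma,\beta_\gamma)$-fairness claim.

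There is essentially no hard obstacle here; the statement is a direct corollary of the Doeblin-coefficient bound on the trace-distance contraction coefficient. The only points requiring a line of care are (i) verifying that the specified metric $d$ on output distributions really coincides with the normalized trace distance $T(\cA(\rho),\cA(\sigma))$ — which follows because $\cM$ outputs diagonal density operators and the trace norm of a diagonal operator is the $\ell_1$ norm of its diagonal — and (ii) noting that the edge case $T(\rho,\sigma)=0$ is trivially covered since then both sides vanish. One could optionally remark that the bound can be tightened by replacing $1-\alpha(\cA)$ with $\eta_{\Tr}(\cA)$ whenever the latter is computable, mirroring the footnote in \cref{Prop:SC_noisy}, but that is not needed for the proof.
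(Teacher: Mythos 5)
Your proof is correct and is essentially the argument the paper intends: the paper's own proof is a one-line reference to an analogous result, bounding the trace-distance contraction of $\cA$ by $1-\alpha(\cA)$ exactly as you do via $\eta_{\Tr}(\cA)\leq\eta^c_{\Tr}(\cA)\leq 1-\alpha(\cA)$. Your explicit identification of $d$ with the normalized trace distance of the measurement outputs just fills in the details the paper leaves to the cited reference.
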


\begin{proof}
    The proof follows analogously to the proof of~\cite[Proposition~9]{nuradha2024contraction} by bounding the contraction of trace distance under the noisy channels in terms of the Doeblin coefficient of the channel~$\cA$ (in particular by $1-\alpha(\cA)$).
\end{proof}

\subsection{Mixing, Indistinguishability, and Decoupling Times of Quantum Processes}\label{subS:Mixing_time_of_Quantum_Processes}

By definition, contraction coefficients capture how a measure of dissimilarity that satisfies the data-processing inequality, i.e., a generalized divergence $\mathbb{D}$, is altered after the action of a channel. It is for this reason that they place fundamental limits on tasks that require distinguishing the outputs of a channel. As quantum channels are fundamentally descriptions of dynamical systems, contraction coefficients also capture at what speed, if at all, dynamical systems converge to a fixed point or become independent of the input. In this section, we show that the quantum Doeblin coefficient can be used to bound the mixing time and acts as a quantum generalization of a `coefficient of ergodicity.' Moreover, as the induced quantum Doeblin coefficient also captures these quantities, we introduce the `decoupling time,' a quantum generalization of mixing time that is specifically captured by complete contraction coefficients.

For clarity, we begin by setting certain terminology for discrete-time quantum Markov chains. A quantum \textit{homogeneous} Markov chain is a process where, at each time step, the same quantum channel $\cN_{A \to A}$ occurs; i.e., for all $n \in \mbb{N}$, the total process is as follows:
\begin{equation}
\cN^{n} \coloneq \bigcirc_{i \in [n]} \cN = \underbrace{\cN \circ \cdots \circ \cN}_{n \text{ times}} \ .
\end{equation}
We thus will talk of a `homogeneous Markov chain $\cN_{A \to A}$' as this completely specifies the Markov chain. A quantum \textit{inhomogeneous} Markov chain $(\cG_{n})_{n \in \mbb{N}}$ is identified by a sequence of quantum channels $(\cN^{i}_{A_{i} \to A_{i+1}})_{i \in \mbb{N}}$ where, for all $n \in \mbb{N}$, $\cG_{n} \coloneq \bigcirc_{i \in [n]} \cN^{i}$. Note that a homogeneous Markov chain is clearly a special case of an inhomogeneous one. These definitions directly generalize the notion of homogeneous and non-homogeneous Markov chains in the classical literature (see, e.g.,~\cite{seneta2006book, levin2017markov}). Finally, given an inhomogeneous Markov chain $(\cG_{n})_{n \in \mbb{N}}$ defined via $(\cN^{i}_{A_{i} \to A_{i+1}})_{i \in \mbb{N}}$, for all $k \leq m \in \mbb{N}$, we let $\cG_{k:m} \coloneq \bigcirc_{i = k}^{m} \cN^{i}$, which allows us to select sub-processes of the inhomogeneous Markov chain.

\subsubsection{Mixing Processes and Mixing Times}
We begin by defining notions of convergence, or `mixing,' for quantum Markov chains. We then show how mixing times can be bounded by contraction coefficients.
\begin{definition}(\cite{burgarth2013ergodic, George-2024ergodic}) \label{def:mixing-Markov-chain} We say a homogeneous Markov chain $\cN_{A \to A}$ is \textit{mixing} if it has a unique fixed point state $\omega \in \Density(A)$ and
\begin{align}
    \lim_{n \to \infty} \left\Vert \cN^{n}(\rho) - \omega \right\Vert_{1} = 0 \quad \forall \rho \in \Density(A) \ . 
\end{align}
    We say it is \textit{strongly mixing} if $\omega$ is furthermore full rank, which is also known as being primitive \cite{wolf2012quantum}. 
\end{definition}
We remark that value of strongly mixing is that it often allows to establish particularly powerful claims about convergence to the fixed point; see, e.g., \cite[Theorem~6.7]{wolf2012quantum}, \cite[Theorem~50 \& Theorem~72]{George-2024ergodic}, \cite[Theorem 8]{george2025unifiedapproachquantumcontraction}. We can generalize these definitions straightforwardly to inhomogeneous Markov chains.
\begin{definition} 
For an inhomogeneous Markov chain $(\cG_{n})_{n \in \mbb{N}}$, we say it is mixing if there exists $\omega \in \Density(A)$ such that
    \begin{align}
    \lim_{n \to \infty} \left\Vert \cG_{n}(\rho) - \omega \right\Vert_{1} = 0 \quad \forall \rho \in \Density(A) \ .
    \end{align}
We say it is strongly mixing if $\omega$ is furthermore full rank.
\end{definition}

Note that a mixing Markov chain destroys information about the initial state, but also accomplishes something stronger---namely, converges to a fixed state. We define the following weaker notions of mixing by considering only the former convergence property of destroying information about the initial state.
\begin{definition}\label{def:weakly-mixing}
    An inhomogeneous Markov chain $(\cG_{n})_{n \in \mbb{N}}$ is weakly mixing if 
    \begin{align}
        \lim_{n \to \infty} \left\Vert \cG_{n}(\rho) - \cG_{n}(\sigma) \right\Vert_{1} = 0 \quad \forall \rho,\sigma \in \Density(A) \ .
    \end{align}
    Moreover, we say an inhomogeneous Markov chain $(\cG_{n})_{n \in \mbb{N}}$ is always weakly mixing if for all $k \geq 1$, 
    \begin{align}
        \lim_{n \to \infty} \left\Vert \cG_{k:n}(\rho) - \cG_{k:n}(\sigma) \right\Vert_{1} = 0 \quad \forall \rho,\sigma \in \Density(A) \ ,
    \end{align}
    where $\cG_{k:n} \coloneq \bigcirc_{i = k}^{n} \cN^{i}$ is a sub-process of the inhomogeneous Markov chain.
\end{definition}
\noindent We remark that being `weakly mixing' matches the original \textit{operational} notion of being `weakly ergodic' \cite{hajnal1956ergodic}, and being `always weakly mixing' operationally aligns with the stricter notion of being `weakly ergodic' introduced subsequently \cite{hajnal1958weak}. However, neither formally generalizes the notion of weakly ergodic, which has to do with the convergence of the entries of the transition matrix describing the Markov chain (see \cite{seneta2006book} for an in-depth study of weak ergodicity). In effect, the value of being `always weakly mixing' is that it guarantees the process does not cease to be weakly mixing at some finite time.

\begin{figure}
    \centering
    \begin{minipage}{0.48\textwidth}
        \centering
        \includegraphics[width=0.95\textwidth]{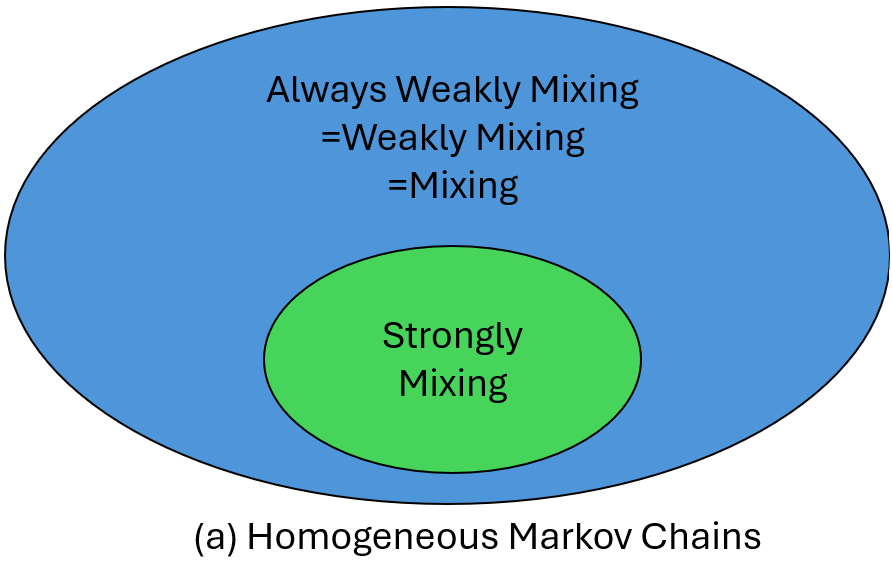}
    \end{minipage}\hfill
    \begin{minipage}{0.48\textwidth}
        \centering
        \includegraphics[width=0.95\textwidth]{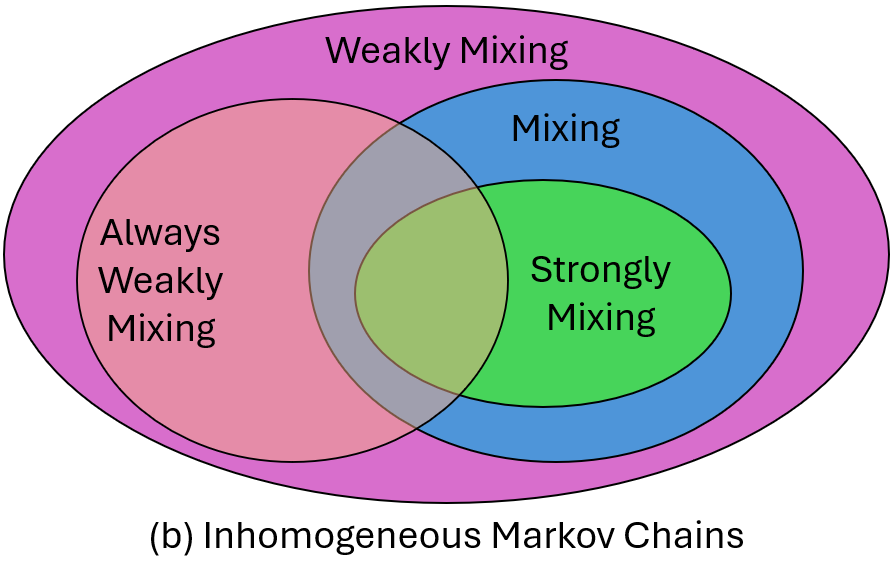}
    \end{minipage}
    \caption{The diagram of containments of notions of mixing for (a) homogeneous and (b) inhomogeneous Markov chains as established in \cref{prop:mixing-class-containment}.}
    \label{fig:mixing-time-containments}
\end{figure}

We also remark that the definitions of mixing measure ``strength" in terms of the convergence guarantees. We provide a complete characterization of the containments between the different notions of mixing (see \cref{fig:mixing-time-containments} for a depiction).
\begin{proposition}\label{prop:mixing-class-containment}
    For homogeneous Markov chains, we have the following set containments for the types of mixing:
    \begin{align}
        \operatorname{Strongly \; Mixing} \subsetneq \operatorname{Mixing}  = \operatorname{Weakly\;Mixing} = \operatorname{Always\;Weakly\;Mixing}   \ .
    \end{align}
    For inhomogeneous Markov chains,
    \begin{equation}
        \begin{matrix}
            \operatorname{Strongly \; Mixing} \subsetneq \operatorname{Mixing} \subsetneq \operatorname{Weakly\;Mixing} \\
            
            \operatorname{Always\;Weakly\;Mixing} \subsetneq \operatorname{Weakly\;Mixing} \\
            
            \operatorname{Mixing} \not \subseteq \operatorname{Always\; Weakly\;Mixing} \not \subseteq  \operatorname{Mixing} \\
            
            \operatorname{Strongly\;Mixing} \not \subseteq \operatorname{Always\; Weakly\;Mixing} \not \subseteq  \operatorname{Strongly\;Mixing} \\
            \operatorname{Always\;Weakly\;Mixing}\cap\operatorname{Mixing}\neq\emptyset \\
            \operatorname{Always\;Weakly\;Mixing}\cap\operatorname{Strongly\;Mixing}\neq\emptyset
        \end{matrix} \ .
    \end{equation}
\end{proposition}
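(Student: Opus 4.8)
## Proof Plan for Proposition \ref{prop:mixing-class-containment}

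The plan is to establish each containment and separation by a combination of elementary arguments and explicit counterexamples, treating the homogeneous and inhomogeneous cases in turn. For the \textbf{homogeneous case}, the key simplification is that every quantum channel has a fixed point (as used already in the proof of \cref{prop:n-noise-qubit}), so weak mixing forces the existence of a fixed state that the process converges to. First I would observe that $\operatorname{Strongly\;Mixing}\subseteq\operatorname{Mixing}\subseteq\operatorname{Weakly\;Mixing}\subseteq\operatorname{Always\;Weakly\;Mixing}$ hold essentially by definition (for the last inclusion in the homogeneous case, note $\cG_{k:n}=\cN^{n-k+1}$, so always-weakly-mixing reduces to weakly-mixing). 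Then I would prove the reverse direction $\operatorname{Always\;Weakly\;Mixing}\subseteq\operatorname{Mixing}$: if $\cN$ is weakly mixing, pick any fixed point state $\omega$ of $\cN$ (guaranteed by \cite[Theorem~6.11]{wolf2012quantum}); then $\Vert\cN^{n}(\rho)-\omega\Vert_1=\Vert\cN^{n}(\rho)-\cN^{n}(\omega)\Vert_1\to0$, and uniqueness of the fixed-point state follows because two distinct fixed states would contradict weak mixing. The strictness $\operatorname{Strongly\;Mixing}\subsetneq\operatorname{Mixing}$ is witnessed by any channel whose unique fixed point is a rank-deficient state, e.g.\ a replacer channel $\mathcal{R}^{\tau}$ with $\tau$ pure, which is mixing but not strongly mixing.

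For the \textbf{inhomogeneous case}, the four ``definitional'' chains ($\operatorname{Strongly\;Mixing}\subseteq\operatorname{Mixing}\subseteq\operatorname{Weakly\;Mixing}$ and $\operatorname{Always\;Weakly\;Mixing}\subseteq\operatorname{Weakly\;Mixing}$) are immediate. The strictness $\operatorname{Mixing}\subsetneq\operatorname{Weakly\;Mixing}$ and the non-containment $\operatorname{Mixing}\not\subseteq\operatorname{Always\;Weakly\;Mixing}$ can be obtained from the same construction: take a sequence of replacer channels $\cN^{i}=\mathcal{R}^{\tau_i}$ where $\tau_i\to\omega$ but the $\tau_i$ oscillate enough that, for a fixed starting time $k$, $\cG_{k:n}=\mathcal{R}^{\tau_n}$ converges (so the process is mixing and weakly mixing) while being chosen so that convergence is to a state $\omega$ — actually here one must be careful: a sequence of replacer channels is always "always weakly mixing" since $\cG_{k:n}(\rho)-\cG_{k:n}(\sigma)=0$ for $n\ge k$. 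So instead, for $\operatorname{Mixing}\not\subseteq\operatorname{Always\;Weakly\;Mixing}$ I would interleave: let $\cN^{i}=\operatorname{id}$ for $i$ in sparse blocks and $\cN^{i}=\mathcal{R}^{\omega}$ otherwise, arranged so that $\cG_n$ eventually always includes a replacer (hence $\cG_n\to\mathcal{R}^{\omega}$, mixing) but for infinitely many start times $k$ the sub-process $\cG_{k:n}$ stays equal to the identity for arbitrarily long initial segments — more precisely, choose the replacer positions so that after any time $k$ there is a sub-process $\cG_{k:m}=\operatorname{id}$ for all $m$ up to some point; then always-weak-mixing fails because one can pick $k$ beyond the last replacer-free window... the cleanest route is: make the "gaps" of identities grow, so that $\cG_{k:n}$ for $k$ just after a replacer is identity on a window of length $\to\infty$, defeating the uniform-in-$k$ requirement. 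For $\operatorname{Always\;Weakly\;Mixing}\not\subseteq\operatorname{Mixing}$, take $\cN^{i}=\mathcal{R}^{\omega_i}$ with $\omega_i$ a non-convergent sequence of states; then $\cG_{k:n}=\mathcal{R}^{\omega_n}$ gives zero distinguishability for $n>k$ (always weakly mixing) but $\cG_n(\rho)=\omega_n$ does not converge, so it is not mixing.

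The remaining separations and intersections: $\operatorname{Strongly\;Mixing}\not\subseteq\operatorname{Always\;Weakly\;Mixing}$ follows from the same $\mathcal{R}^{\omega_i}$ example with all $\omega_i$ full rank but $(\omega_i)$ non-convergent — this is strongly mixing in the sense... no, it is not mixing at all, so it is not strongly mixing. I need a genuinely strongly-mixing-but-not-always-weakly-mixing example: combine the growing-identity-gap construction with a full-rank fixed point, i.e.\ replacers to a full-rank $\omega$ interleaved with ever-longer identity blocks; then $\cG_n\to\omega$ full rank (strongly mixing) but always-weak-mixing fails. Conversely $\operatorname{Always\;Weakly\;Mixing}\not\subseteq\operatorname{Strongly\;Mixing}$ follows from $\mathcal{R}^{\omega_i}$ with non-convergent $\omega_i$ (always weakly mixing, not even mixing). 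The two nonempty-intersection claims are witnessed by constant sequences: $\cN^{i}=\mathcal{R}^{\omega}$ with $\omega$ pure lies in $\operatorname{Always\;Weakly\;Mixing}\cap\operatorname{Mixing}$, and with $\omega$ full rank it lies in $\operatorname{Always\;Weakly\;Mixing}\cap\operatorname{Strongly\;Mixing}$. The \textbf{main obstacle} I anticipate is getting the growing-gap interleaved constructions exactly right: one must verify simultaneously that $\cG_n$ (starting from time $1$) does converge — which requires that every sufficiently late window contains a replacer — while for every fixed start time $k$ the sub-processes $\cG_{k:n}$ fail to converge uniformly, which requires replacer-free windows of unbounded length appearing after \emph{every} time $k$. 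Balancing these two requirements (convergence from a fixed early start, non-convergence uniformly over all starts) is the delicate combinatorial point; once the index pattern is fixed, the norm estimates themselves are trivial because all channels involved are either the identity or replacers.
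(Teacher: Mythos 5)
Your treatment of the homogeneous case and of the separations $\operatorname{Always\;Weakly\;Mixing}\not\subseteq\operatorname{Mixing}$, $\operatorname{Mixing}\subsetneq\operatorname{Weakly\;Mixing}$, and the nonempty intersections is sound and essentially matches the paper's proof: the fixed-point argument for $\operatorname{Weakly\;Mixing}\subseteq\operatorname{Mixing}$ in the homogeneous case is exactly the paper's, and where the paper uses the alternating replacers $\cR^{\dyad{0}},\cR^{\dyad{1}}$ you use a general non-convergent sequence $\cR^{\omega_i}$, which is the same idea.

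However, there is a genuine error in your constructions for $\operatorname{Mixing}\not\subseteq\operatorname{Always\;Weakly\;Mixing}$ and $\operatorname{Strongly\;Mixing}\not\subseteq\operatorname{Always\;Weakly\;Mixing}$. You have read the definition of always weakly mixing as imposing some uniformity over the starting time $k$ (you speak of ``defeating the uniform-in-$k$ requirement''), but \cref{def:weakly-mixing} only requires that, \emph{for each fixed} $k$, $\lim_{n\to\infty}\Vert\cG_{k:n}(\rho)-\cG_{k:n}(\sigma)\Vert_1=0$; there is no uniformity to defeat. Consequently your ``growing identity gaps'' construction fails: if every tail of the sequence $(\cN^i)_i$ eventually contains a replacer channel, then for each fixed $k$ the sub-process $\cG_{k:n}$ collapses to a replacer once $n$ passes the first replacer after $k$, so the limit is zero for every $k$ and the process \emph{is} always weakly mixing, no matter how long the identity blocks are---the gap lengths only slow the (unconstrained) rate of convergence. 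The correct and much simpler witness, which is what the paper uses, is to place a single replacer at time $1$ and identities thereafter: then $\cG_n=\cR^{\tau}$ for all $n$, so the chain is mixing (and strongly mixing if $\tau$ is chosen full rank), while $\cG_{2:n}=\id$ for all $n\geq 2$, so for $k=2$ the quantity $\Vert\cG_{2:n}(\rho)-\cG_{2:n}(\sigma)\Vert_1=\Vert\rho-\sigma\Vert_1$ does not tend to zero and always weak mixing fails.
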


\begin{proof}
   As the proof is long but straightforward, we relegate it to~\cref{app:Mixing_class_proof}.
\end{proof}

The above notions of mixing are all asymptotic. In practical applications, namely, sampling algorithms, one is generally interested in how many iterations of a mixing channel~$\cN$ with fixed point $\omega$ are necessary for every input to become $\delta$-close to $\omega$ with respect to $\mbb{D}$, which is known as a mixing time.
\begin{definition}\label{def:mixing-time}
    Given a divergence $\mbb{D}$, a channel $\cN$ with a unique fixed point $\omega$, and $\delta \geq 0$, the $\mbb{D}$-mixing time is defined as
    \begin{align}
        t_{\operatorname{mix}}^{\mbb{D}}(\cN,\delta) \coloneq \inf\!\left \{n \in \mbb{N}: \mbb{D}\!\left(\cN^{n}(\rho), \omega \right) \leq \delta \quad \forall \rho \in \Density(A) \right \} \ , 
    \end{align}
    which is equal to $+\infty$ when no such $n \in \mbb{N}$ exists.
\end{definition}
It is known that the mixing time is generally controlled by the input-dependent contraction coefficient (\cref{def:input-dep-contraction-coeffs}) \cite{Raginsky2016,George-2024ergodic,george2025unifiedapproachquantumcontraction}. However, this can be further relaxed to the contraction coefficient and the complete contraction coefficient via the same proof method.
\begin{proposition}\label{prop:mixing-time-bound-in-terms-of-contraction-coeff}
    Let $\cN_{A \to A}$ be a mixing channel with fixed point $\omega$. Then for a divergence~$\mbb{D}$ and $\delta \in (0,C_{\mbb{D}}(\omega)]$,
    \begin{align}
        t_{\operatorname{mix}}^{\mbb{D}}(\cN,\delta) \leq \left \lceil \frac{\log\!\left(C_{\mbb{D}}(\omega)/\delta\right)}{\log\!\left(1/\eta_{\mbb{D}}(\cN)\right)} \right \rceil, 
    \end{align}
    where $C_{\mbb{D}}(\omega) \coloneq \sup_{\rho \in \Density(A)} \mbb{D}(\rho \Vert \omega)$.
\end{proposition}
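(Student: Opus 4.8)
## Proof Plan for Proposition (Mixing Time Bound via Contraction Coefficient)

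The plan is to iterate the contraction coefficient inequality starting from the worst-case initial divergence, and then solve for the number of iterations needed to shrink below $\delta$. First I would fix an arbitrary input state $\rho \in \Density(A)$ and apply the data-processing-style bound coming from the definition of $\eta_{\mbb{D}}(\cN)$ in \eqref{eq:def-gen-contra-coef}: since $\omega$ is a fixed point of $\cN$, we have $\cN^{n}(\omega) = \omega$ for all $n$, so
\begin{align}
    \mbb{D}\!\left(\cN^{n}(\rho), \omega\right) = \mbb{D}\!\left(\cN(\cN^{n-1}(\rho)), \cN(\omega)\right) \leq \eta_{\mbb{D}}(\cN)\, \mbb{D}\!\left(\cN^{n-1}(\rho), \omega\right) \ .
\end{align}
Iterating this $n$ times gives $\mbb{D}\!\left(\cN^{n}(\rho), \omega\right) \leq \eta_{\mbb{D}}(\cN)^{n}\, \mbb{D}(\rho, \omega) \leq \eta_{\mbb{D}}(\cN)^{n}\, C_{\mbb{D}}(\omega)$, where in the last step I use the definition $C_{\mbb{D}}(\omega) = \sup_{\rho \in \Density(A)} \mbb{D}(\rho \Vert \omega)$.

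Next I would solve the inequality $\eta_{\mbb{D}}(\cN)^{n}\, C_{\mbb{D}}(\omega) \leq \delta$ for $n$. Taking logarithms (and using that $\eta_{\mbb{D}}(\cN) < 1$, which I should justify — see below), this rearranges to $n \geq \log(C_{\mbb{D}}(\omega)/\delta)/\log(1/\eta_{\mbb{D}}(\cN))$, so that the smallest integer $n$ satisfying the bound is at most $\lceil \log(C_{\mbb{D}}(\omega)/\delta)/\log(1/\eta_{\mbb{D}}(\cN)) \rceil$. Since this holds uniformly over all $\rho \in \Density(A)$, this value of $n$ is a valid upper bound on $t_{\operatorname{mix}}^{\mbb{D}}(\cN, \delta)$ by \cref{def:mixing-time}. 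The condition $\delta \in (0, C_{\mbb{D}}(\omega)]$ ensures the numerator $\log(C_{\mbb{D}}(\omega)/\delta)$ is non-negative so the ceiling is of a non-negative quantity.

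The main obstacle — and the point that needs care — is justifying that $\eta_{\mbb{D}}(\cN) < 1$, so that $\log(1/\eta_{\mbb{D}}(\cN)) > 0$ and the bound is finite and meaningful. This is where the mixing hypothesis enters: if $\cN$ is mixing with unique fixed point $\omega$, one expects $\eta_{\mbb{D}}(\cN) < 1$ for reasonable divergences, but in full generality for an arbitrary generalized divergence this may require an additional argument or mild assumption (indeed the cited references \cite{Raginsky2016, George-2024ergodic, george2025unifiedapproachquantumcontraction} treat this carefully). I would either invoke the relevant result from those references that a mixing channel has contraction coefficient strictly less than one for the divergence in question, or note that if $\eta_{\mbb{D}}(\cN) = 1$ the stated bound is vacuously $+\infty$ and hence trivially true. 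I would also remark that the same argument applies verbatim with $\eta_{\mbb{D}}(\cN)$ replaced by the complete contraction coefficient $\eta^{c}_{\mbb{D}}(\cN)$ or, using only states of the form $\cN^{n-1}(\rho)$ which all have the same image structure, by the input-dependent coefficient $\eta_{\mbb{D}}(\cN, \omega)$, recovering the known bounds; the only inequality used is monotonicity of $\mbb{D}$ under $\cN$ relative to the fixed point, which all three coefficients control.
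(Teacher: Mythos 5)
Your proof is correct and follows essentially the same route as the paper's: use the fixed-point property to rewrite $\mbb{D}(\cN^{n}(\rho)\Vert\omega)$ as $\mbb{D}(\cN^{n}(\rho)\Vert\cN^{n}(\omega))$, iterate the contraction coefficient, bound the initial divergence by $C_{\mbb{D}}(\omega)$, and solve for $n$. Your added care about the degenerate case $\eta_{\mbb{D}}(\cN)=1$ (where the bound is vacuous) is a reasonable observation that the paper leaves implicit, but it does not change the argument.
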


\begin{proof}
    Consider that
    \begin{equation}\label{eq:mixing-time-bound-proof}
    \begin{aligned}
        \mbb{D}\!\left(\cN^{n}(\rho) \Vert \omega \right) = \mbb{D}\!\left(\cN^{n}(\rho) \Vert \cN^{n}(\omega)\right) &\leq \eta_{\mbb{D}}(\cN) \, \mbb{D}(\cN^{n-1}(\rho) \Vert \cN^{n-1}(\omega)) \\
        &\leq [\eta_{\mbb{D}}(\cN)]^{n}\, \mbb{D}(\rho \Vert \omega),  
    \end{aligned}
    \end{equation}
    where the equality is due to the assumption that $\omega$ is a fixed point, the first and second inequalities use the definition of contraction coefficient from \eqref{eq:def-gen-contra-coef}, and the third inequality follows from \Cref{prop:complete-contraction-coeff-properties}. As it has to hold for all states, we apply the bound $\mbb{D}(\rho \Vert \omega) \leq C_{\mbb{D}}(\omega)$ in order to obtain an input-independent bound. Thus, it suffices to solve $[\eta_{\mbb{D}}(\cN)]^{n} C_{\mbb{D}}(\omega) \leq \delta$ for $n$. By algebraic manipulations, we find that $n \geq \log\!\left(C_{\mathbb{D}}(\omega)/\delta \right)/\log\!\left((\eta_{\mbb{D}}(\cN))^{-1}\right)$. Taking the ceiling function such that $n$ is a natural number completes the proof. 
\end{proof}

We remark that the above is somewhat limited, as often $C_{\mbb{D}}(\omega) = +\infty$, although in many cases this can be handled if $\omega$ is full rank so that a reverse Pinsker inequality can be applied \cite{George-2024ergodic,george2025unifiedapproachquantumcontraction}.\footnote{Alternatively, if one is willing to make the stronger assumption that their sampling procedure is guaranteed to be initialized with a specific $\rho$, they could simply compute $\mbb{D}(\rho \Vert \omega)$ to avoid the issue altogether.} However, for the case of the trace distance, we obtain the following corollary, which needs no rank constraint.
\begin{corollary}\label{cor:mixing-time-bound-via-q-Doeblin}
    Let $\cN_{A \to A}$ be a mixing channel with fixed point $\omega$. Then for all $\delta \in (0,1)$, 
    \begin{align}
        t_{\operatorname{mix}}^{T}(\cN,\delta) \leq \left \lceil \frac{\log(\delta^{-1})}{\log([1-\alpha_{I}(\cN)]^{-1})} \right \rceil  \leq \left \lceil \frac{\log(\delta^{-1})}{\log([1-\alpha(\cN)]^{-1})} \right \rceil \ . 
        \label{eq:mix-time-bnd-TD}
    \end{align}
\end{corollary}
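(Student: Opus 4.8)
The plan is to specialize \cref{prop:mixing-time-bound-in-terms-of-contraction-coeff} to the case of the trace distance, where $\mathbb{D} = E_1 = T$, and observe that the troublesome quantity $C_{\mathbb{D}}(\omega) = \sup_{\rho} T(\rho,\omega)$ is now automatically bounded: since the trace distance between any two states is at most $1$, we have $C_{T}(\omega) \leq 1$. Thus the rank constraint on $\omega$ alluded to in the remark after that proposition is unnecessary here, and we may take $\delta \in (0,1)$ freely.

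First I would invoke \cref{prop:mixing-time-bound-in-terms-of-contraction-coeff} with $\mathbb{D} = T$ and $C_{T}(\omega) \leq 1$, which yields
\begin{equation}
    t_{\operatorname{mix}}^{T}(\cN,\delta) \leq \left \lceil \frac{\log(\delta^{-1})}{\log\!\left([\eta_{T}(\cN)]^{-1}\right)} \right \rceil \ .
\end{equation}
Here I am using that $\eta_T = \eta_{\operatorname{Tr}}$ in the paper's notation. Next I would bring in the bounds relating the trace-distance contraction coefficient to the two quantum Doeblin coefficients: from \eqref{eq:trace-distance-CC-to-induced-doeblin} we have $\eta_{\operatorname{Tr}}(\cN) \leq 1 - \alpha_{I}(\cN)$, and from \cref{prop:induced-doeblin-larger-than-doeblin} we have $\alpha_{I}(\cN) \geq \alpha(\cN)$, hence also $\eta_{\operatorname{Tr}}(\cN) \leq 1 - \alpha(\cN)$. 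Since $\log(\delta^{-1}) \geq 0$ for $\delta \in (0,1)$ and $t \mapsto \lceil \log(\delta^{-1})/\log(t^{-1}) \rceil$ is monotonically non-decreasing in $t$ on $(0,1)$ (because $\log(t^{-1})$ is decreasing in $t$, so its reciprocal is increasing), replacing $\eta_{\operatorname{Tr}}(\cN)$ by the larger quantities $1 - \alpha_{I}(\cN)$ and then $1 - \alpha(\cN)$ can only increase the right-hand side, giving the chain of inequalities in \eqref{eq:mix-time-bnd-TD}.

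I do not anticipate a genuine obstacle here, as this is essentially a corollary assembled from results already in hand; the only point requiring minor care is the monotonicity argument for why one may substitute an upper bound on $\eta_{\operatorname{Tr}}(\cN)$ into the denominator and still preserve the inequality — this hinges on $\eta_{\operatorname{Tr}}(\cN) \in (0,1)$ (so that $\log([\eta_{\operatorname{Tr}}(\cN)]^{-1})$ is a positive real) and on all the intermediate quantities $1 - \alpha_I(\cN)$, $1 - \alpha(\cN)$ also lying in $(0,1)$. The latter is guaranteed when $\cN$ is mixing and not a replacer channel: mixing forces $\eta_{\operatorname{Tr}}(\cN) < 1$ hence $\alpha(\cN), \alpha_I(\cN) > 0$, while $\alpha(\cN), \alpha_I(\cN) < 1$ holds since $\cN$ has a unique fixed point and so is not a replacer channel (by \cref{prop:q-doeblin-normalization}). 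One edge case worth a remark: if $1 - \alpha(\cN) = 0$ the bound reads as $\lceil \log(\delta^{-1})/\log(\infty)\rceil = 0$, consistent with $\cN$ being a replacer channel that reaches its fixed point in one step (indeed zero steps suffice for the supremum to collapse trivially only after one application, so the convention here is harmless and can be noted parenthetically).
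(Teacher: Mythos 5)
Your proof is correct and follows essentially the same route as the paper's: apply \cref{prop:mixing-time-bound-in-terms-of-contraction-coeff} with $\mathbb{D}=T$ and the bound $C_{T}(\omega)\leq 1$, then substitute $\eta_{\operatorname{Tr}}(\cN)\leq 1-\alpha_{I}(\cN)$ from \eqref{eq:trace-distance-CC-to-induced-doeblin} and $\alpha_{I}(\cN)\geq\alpha(\cN)$ from \cref{prop:induced-doeblin-larger-than-doeblin}. The extra care you take with the monotonicity of the substitution in the denominator and with the degenerate replacer-channel case is sound but goes beyond what the paper records.
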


\begin{proof}
    This follows from applying \Cref{prop:mixing-time-bound-in-terms-of-contraction-coeff}, as $C_{T}(\omega) \leq 1$ for every state $\omega$. Then one obtains the first inequality from \eqref{eq:trace-distance-CC-to-induced-doeblin} and the second then follows from \Cref{prop:induced-doeblin-larger-than-doeblin}.
\end{proof}

We remark that, while the final upper bound in \eqref{eq:mix-time-bnd-TD} is likely generally loose, it can still be useful. First, \cref{thm:qubit-faithfulness} implies that for time-homogeneous Markov chains defined via a qubit channel, either the system does not contract or \cref{cor:mixing-time-bound-via-q-Doeblin} obtains a non-trivial bound, so there exist clear use cases. Second, while concurrent work \cite{george2025unifiedapproachquantumcontraction} finds an efficient method for obtaining bounds with input-dependent contraction coefficients, it requires the fixed point to be full rank. While under those conditions it is possible that there are cases where that work's method obtains tighter mixing times due to the hierarchy of quantum contraction coefficients,
\cref{cor:mixing-time-bound-via-q-Doeblin} applies for channels with non-full rank fixed points and thus includes cases that cannot be addressed by previous works. As a concrete example, any generalized amplitude damping channel with $p=1$, $\cN_{1,\eta}$, has the fixed point $\dyad{0}$, which has unit rank, but $\alpha(\cN_{1,\eta}) > 0$ unless $\eta = 0$ by \cref{lem:Doeblin_GAD}.

\subsubsection{Indistinguishability Times}
As noted previously, it is known that mixing times are naturally addressed by input-dependent contraction coefficients. It is thus natural to ask what the contraction coefficient more directly captures. In the classical literature, contraction coefficients are identified as `coefficients of ergodicity' \cite{seneta2006book} as they capture the finite-time notion of being weakly mixing (\Cref{def:weakly-mixing}).\footnote{In fact, it is claimed in \cite{seneta1973historical} that Doeblin announced his coefficient was relevant to a variant of weak ergodicity, but he never published the results. Ref.~\cite{seneta1973historical} conjectures this could be because Doeblin died prematurely during World War II.} We briefly and formally show this to be the case in the quantum setting, as it will help motivate decoupling times, which are introduced in~\cref{Sec:Decoupling_times}.
\begin{definition}\label{def:distinguishability-time}
    Let $\mbb{D}$ be a divergence and $\cN$ a quantum channel. Then the indistinguishability time is defined for $\delta \geq 0$ as 
    \begin{align}
        t_{\operatorname{ind}}^{\mbb{D}}(\cN,\delta) \coloneqq \inf \! \left \{ n \in \mbb{N} : \mbb{D}\!\left(\cN^{n}(\rho) \Vert \cN^{n}(\sigma)\right) \leq \delta \, \quad \forall \rho,\sigma \in \Density(A) \right \} \ , 
    \end{align}
    which is equal to $+\infty$ when no such $n \in \mbb{N}$ exists.
\end{definition}
\noindent Note this definition is more demanding than the mixing time (\cref{def:mixing-time}) as it does not require the reference state to be fixed.

\begin{proposition}\label{prop:indist-time-bounds}
    \sloppy For an arbitrary inhomogeneous Markov chain $(\cG_{n})_{n \in \mbb{N}}$ defined by channels $(\cN^{i}_{A_{i} \to A_{i+1}})_{i \in \mbb{N}}$ and for all $\rho,\sigma \in \Density(A_{1})$,
    \begin{align}
        \mbb{D}(\cG_{n}(\rho) \Vert \cG_{n}(\sigma)) \leq \mbb{D}(\rho \Vert \sigma) \prod_{i \in [n]} \eta_{\mbb{D}}(\cN^{i}) \leq \mbb{D}(\rho \Vert \sigma) \prod_{i \in [n]} \eta^{c}_{\mbb{D}}(\cN^{i}) \ . 
    \end{align}
    
    In particular, for a homogeneous Markov chain $\cN_{A \to A}$ and $\delta \in (0,C_{\mbb{D}}]$,
    \begin{align}
        t_{\operatorname{ind}}^{\mbb{D}}(\cN,\delta) \leq \left \lceil \frac{\log(C_{\mbb{D}}/\delta)}{\log(1/\eta_{\mbb{D}}(\cN))} \right \rceil  \leq \left \lceil \frac{\log(C_{\mbb{D}}/\delta)}{\log(1/\eta^{c}_{\mbb{D}}(\cN))} \right \rceil \ ,  
    \end{align}
    where $C_{\mbb{D}} \coloneq \sup_{\rho,\sigma \in \Density(A)} \mbb{D}(\rho \Vert \sigma)$.
\end{proposition}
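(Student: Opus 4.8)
The plan is to establish the chain of divergence inequalities first, and then deduce the indistinguishability time bounds as an immediate corollary by setting up the relevant inequality and solving for $n$. The proof will be essentially a telescoping argument combined with the properties of contraction coefficients already established in \cref{prop:complete-contraction-coeff-properties}.

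First I would prove the divergence bound. Fix states $\rho, \sigma \in \Density(A_{1})$. Writing $\cG_{n} = \cN^{n} \circ \cG_{n-1}$ (with $\cG_{0} = \id$), I would apply the definition of the contraction coefficient $\eta_{\mbb{D}}(\cN^{n})$ from \eqref{eq:def-gen-contra-coef} to the pair of states $\cG_{n-1}(\rho), \cG_{n-1}(\sigma)$, yielding $\mbb{D}(\cG_{n}(\rho) \Vert \cG_{n}(\sigma)) \leq \eta_{\mbb{D}}(\cN^{n}) \, \mbb{D}(\cG_{n-1}(\rho) \Vert \cG_{n-1}(\sigma))$. Iterating this down to $n = 0$ gives the product $\mbb{D}(\rho \Vert \sigma) \prod_{i \in [n]} \eta_{\mbb{D}}(\cN^{i})$. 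The second inequality of the displayed chain then follows directly from $\eta_{\mbb{D}}(\cN^{i}) \leq \eta^{c}_{\mbb{D}}(\cN^{i})$, which is part of \cref{prop:complete-contraction-coeff-properties} (item 1). One technical caveat: if at some step $\mbb{D}(\cG_{k-1}(\rho) \Vert \cG_{k-1}(\sigma)) = 0$, the ratio in the contraction coefficient definition is not literally applicable, but in that case data processing forces $\mbb{D}(\cG_{n}(\rho) \Vert \cG_{n}(\sigma)) = 0$ as well and the bound holds trivially; I would note this briefly.

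Second, for the homogeneous case, I would specialize $\cN^{i} = \cN$ for all $i$, so the product becomes $[\eta_{\mbb{D}}(\cN)]^{n}$, giving $\mbb{D}(\cN^{n}(\rho) \Vert \cN^{n}(\sigma)) \leq [\eta_{\mbb{D}}(\cN)]^{n} \, \mbb{D}(\rho \Vert \sigma) \leq [\eta_{\mbb{D}}(\cN)]^{n} \, C_{\mbb{D}}$, using the definition $C_{\mbb{D}} = \sup_{\rho,\sigma} \mbb{D}(\rho \Vert \sigma)$ to obtain an input-independent bound. To ensure $\mbb{D}(\cN^{n}(\rho) \Vert \cN^{n}(\sigma)) \leq \delta$ it then suffices that $[\eta_{\mbb{D}}(\cN)]^{n} C_{\mbb{D}} \leq \delta$; solving for $n$ (taking logarithms, noting $\eta_{\mbb{D}}(\cN) < 1$ is implicit for this to be meaningful, and that $\delta \leq C_{\mbb{D}}$ makes the right side nonnegative) yields $n \geq \log(C_{\mbb{D}}/\delta)/\log(1/\eta_{\mbb{D}}(\cN))$, and taking the ceiling gives the stated bound on $t_{\operatorname{ind}}^{\mbb{D}}(\cN,\delta)$. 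The second inequality again follows from $\eta_{\mbb{D}}(\cN) \leq \eta^{c}_{\mbb{D}}(\cN)$, since a larger denominator base shrinks the logarithm in the denominator only if the base is below one — more precisely $\log(1/\eta^{c}_{\mbb{D}}) \leq \log(1/\eta_{\mbb{D}})$, so the bound with $\eta^{c}_{\mbb{D}}$ is weaker (larger), matching the direction stated.

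I do not anticipate any genuine obstacle here; the argument is a direct parallel of the proof of \cref{prop:mixing-time-bound-in-terms-of-contraction-coeff}, with the only difference being that the reference state is not required to be a fixed point. The mildly delicate point is the degenerate case $\mbb{D}(\rho \Vert \sigma) = 0$ at an intermediate step, which must be handled separately via data processing rather than via the ratio form of the contraction coefficient; this is the kind of detail I would flag explicitly but which requires no real work.
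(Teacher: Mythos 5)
Your proof is correct and follows essentially the same route as the paper's: iterate the contraction coefficient definition step by step to get the product bound, relax via $\eta_{\mbb{D}}\leq\eta^{c}_{\mbb{D}}$, then bound $\mbb{D}(\rho\Vert\sigma)$ by $C_{\mbb{D}}$ and solve $[\eta_{\mbb{D}}(\cN)]^{n}C_{\mbb{D}}\leq\delta$ for $n$ exactly as in the proof of \cref{prop:mixing-time-bound-in-terms-of-contraction-coeff}. Your explicit handling of the degenerate case $\mbb{D}(\cG_{k-1}(\rho)\Vert\cG_{k-1}(\sigma))=0$ via data processing is a small point the paper leaves implicit, but it does not change the argument.
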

\begin{proof}
    By the same argument as in \eqref{eq:mixing-time-bound-proof},
    \begin{align}
        \mbb{D}(\cG_{n}(\rho) \Vert \cG_{n}(\sigma)) \leq \mbb{D}(\rho \Vert \sigma) \prod_{i \in [n]} \eta_{\mbb{D}}(\cN^{i}) \leq \mbb{D}(\rho \Vert \sigma) \prod_{i \in [n]} \eta^{c}_{\mbb{D}}(\cN^{i}) \ . 
    \end{align}
    In the case the Markov chain is homogeneous, the product simplifies to $\left[\eta_{\mbb{D}}(\cN)\right]^{n}$ as before. Using $\mbb{D}(\rho \Vert \sigma) \leq C_{\mbb{D}}$ and following the same argument as in the proof of \Cref{prop:mixing-time-bound-in-terms-of-contraction-coeff} completes the proof.
\end{proof}

While the issue is again that $C_{\mbb{D}}$ can often be $+\infty$, in the case of trace distance we obtain the following tractable corollary involving the induced Doeblin coefficient $\alpha_{I}$ from \cref{prop:equality-of-induced-doeblins}.
\begin{corollary}\label{cor:q-Doeblin-coeff-of-ergodicity}
    \sloppy For an arbitrary inhomogeneous Markov chain $(\cG_{n})_{n \in \mbb{N}}$ defined by channels $\left(\cN^{i}_{A_i \to A_{i+1}}\right)_{i \in \mbb{N}}$ and for all $\rho,\sigma \in \Density(A_{1})$,
    \begin{align}
        T(\cG_{n}(\rho) \Vert \cG_{n}(\sigma)) \leq \prod_{i \in [n]} (1-\alpha_{I}(\cN^{i})) \leq \prod_{i \in [n]} (1-\alpha(\cN^{i})) \ . 
    \end{align}
    
    Let $\cN_{A \to A}$ be  a time-homogeneous Markov chain. Then for all $\delta \in (0,1)$, 
    \begin{align}
        t_{\operatorname{ind}}^{T}(\cN,\delta) \leq \left \lceil \frac{\log(\delta^{-1})}{\log([1-\alpha_{I}(\cN)]^{-1})} \right \rceil  \leq \left \lceil \frac{\log(\delta^{-1})}{\log([1-\alpha(\cN)]^{-1})} \right \rceil \ . 
    \end{align}
\end{corollary}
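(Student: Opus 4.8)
The plan is to obtain \cref{cor:q-Doeblin-coeff-of-ergodicity} as an essentially immediate consequence of \cref{prop:indist-time-bounds}, combined with the already-established chain $\eta_{\Tr}(\cN) \leq 1-\alpha_{I}(\cN) \leq 1-\alpha(\cN)$. First I would specialize \cref{prop:indist-time-bounds} to the generalized divergence $\mbb{D} = E_{1} = T$, the normalized trace distance (recall $E_{1}(\tau\Vert\sigma) = T(\tau,\sigma)$ from~\eqref{eq:norm-TD-def}), which directly gives, for the inhomogeneous Markov chain $(\cG_{n})_{n}$ defined by $(\cN^{i})_{i}$ and all $\rho,\sigma \in \Density(A_{1})$,
\begin{equation}
    T(\cG_{n}(\rho),\cG_{n}(\sigma)) \leq T(\rho,\sigma) \prod_{i \in [n]} \eta_{\Tr}(\cN^{i}) .
\end{equation}
Then I would apply $T(\rho,\sigma) \leq 1$ for all states, followed by the bound $\eta_{\Tr}(\cN^{i}) \leq 1-\alpha_{I}(\cN^{i})$ from~\eqref{eq:trace-distance-CC-to-induced-doeblin} and the inequality $\alpha_{I}(\cN^{i}) \geq \alpha(\cN^{i})$ from \cref{prop:induced-doeblin-larger-than-doeblin} (which yields $1-\alpha_{I}(\cN^{i}) \leq 1-\alpha(\cN^{i})$). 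Concatenating these three facts produces the claimed two-sided chain of product bounds.

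For the homogeneous statement I would set $\cN^{i} = \cN$ for every $i$, so that the product collapses to $[\eta_{\Tr}(\cN)]^{n}$, and invoke the second half of \cref{prop:indist-time-bounds} with $C_{\mbb{D}} = C_{\Tr} = \sup_{\rho,\sigma \in \Density(A)} T(\rho,\sigma) = 1$ (the supremum being attained on orthogonal states). This gives $t_{\operatorname{ind}}^{T}(\cN,\delta) \leq \big\lceil \log(\delta^{-1})/\log([\eta_{\Tr}(\cN)]^{-1}) \big\rceil$ for $\delta \in (0,1]$, which covers the stated range $\delta \in (0,1)$. To reach the two displayed upper bounds I would use monotonicity: since $0 \leq \eta_{\Tr}(\cN) \leq 1-\alpha_{I}(\cN) \leq 1-\alpha(\cN) \leq 1$, we have $[\eta_{\Tr}(\cN)]^{-1} \geq [1-\alpha_{I}(\cN)]^{-1} \geq [1-\alpha(\cN)]^{-1}$, hence $\log([\eta_{\Tr}(\cN)]^{-1}) \geq \log([1-\alpha_{I}(\cN)]^{-1}) \geq \log([1-\alpha(\cN)]^{-1})$; since $\log(\delta^{-1}) > 0$, dividing by a smaller positive denominator and taking ceilings preserves the inequality.

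There is no genuine obstacle here; the only points needing care are the degenerate regimes. If $\eta_{\Tr}(\cN) = 1$—equivalently $\alpha(\cN) = 0$ in the qubit case by \cref{thm:qubit-faithfulness}—then all the relevant logarithms vanish and every bound reads $+\infty$, so the claim holds vacuously; the inhomogeneous product bound likewise degenerates harmlessly when some $\alpha(\cN^{i}) = 0$, since then the corresponding factor is $1$. I would also remark that no input beyond \cref{prop:indist-time-bounds}, \eqref{eq:trace-distance-CC-to-induced-doeblin}, and \cref{prop:induced-doeblin-larger-than-doeblin} is required, so the proof amounts to bookkeeping.
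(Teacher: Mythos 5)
Your proposal is correct and follows essentially the same route as the paper: specialize \cref{prop:indist-time-bounds} to the trace distance with $C_{T}\leq 1$, then apply \eqref{eq:trace-distance-CC-to-induced-doeblin} and \cref{prop:induced-doeblin-larger-than-doeblin} to pass from $\eta_{\Tr}$ to $1-\alpha_{I}$ and then to $1-\alpha$. The extra attention to the degenerate case $\eta_{\Tr}(\cN)=1$ is a harmless addition beyond what the paper records.
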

\begin{proof}
    The proof follows similarly to the proof of \Cref{cor:mixing-time-bound-via-q-Doeblin}.
\end{proof}

We also observe the following.
\begin{proposition} \label{prop:inhomo_weakly}
    An inhomogeneous Markov chain $(\cG_{n})_{n \in \mbb{N}}$ defined by channels $\left(\cN^{i}_{A_i \to A_{i+1}}\right)_{i \in \mbb{N}}$ is always weakly mixing (\Cref{def:weakly-mixing}) if for all $k \in \mbb{N}$, $\prod_{i =k}^{\infty} (1-\alpha_{I}(\cN^{i})) = 0$.
\end{proposition}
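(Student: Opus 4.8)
The plan is to show that the stated condition on the products of $(1-\alpha_I(\cN^i))$ forces the sub-process trace distances to vanish asymptotically, which is precisely the definition of being always weakly mixing. The key input is \cref{cor:q-Doeblin-coeff-of-ergodicity}, which already provides a product-form bound on the trace distance of an inhomogeneous Markov chain in terms of the induced Doeblin coefficients. The subtlety is that \cref{cor:q-Doeblin-coeff-of-ergodicity} as stated bounds $T(\cG_n(\rho)\Vert\cG_n(\sigma))$, whereas the definition of always weakly mixing requires control of the sub-process $\cG_{k:n}$ for every $k \geq 1$; so the first step is to observe that \cref{cor:q-Doeblin-coeff-of-ergodicity} applies verbatim to the sub-process $(\cG_{k:n})_{n \geq k}$, since that sub-process is itself an inhomogeneous Markov chain defined by the channels $(\cN^i)_{i \geq k}$.

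Concretely, I would proceed as follows. Fix $k \in \mbb{N}$ arbitrary. Applying \cref{cor:q-Doeblin-coeff-of-ergodicity} to the inhomogeneous Markov chain starting at time $k$, we get, for all $\rho,\sigma \in \Density(A_k)$ and all $n \geq k$,
\begin{equation}
    T(\cG_{k:n}(\rho) \Vert \cG_{k:n}(\sigma)) \leq \prod_{i=k}^{n} (1-\alpha_I(\cN^i)) \ .
\end{equation}
Since each factor $1-\alpha_I(\cN^i) \in [0,1]$ (using $0 \leq \alpha_I(\cN^i) \leq 1$, which holds by the normalization proposition for $\alpha_I$), the partial products $\prod_{i=k}^{n} (1-\alpha_I(\cN^i))$ are non-increasing in $n$ and bounded below by $0$, hence convergent; by hypothesis their limit is $\prod_{i=k}^{\infty}(1-\alpha_I(\cN^i)) = 0$. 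Therefore $\lim_{n\to\infty} T(\cG_{k:n}(\rho)\Vert\cG_{k:n}(\sigma)) = 0$ for all $\rho,\sigma$, i.e.\ $\lim_{n\to\infty}\Vert \cG_{k:n}(\rho) - \cG_{k:n}(\sigma)\Vert_1 = 0$ for all $\rho,\sigma \in \Density(A)$. As $k$ was arbitrary, this is exactly the statement that $(\cG_n)_{n\in\mbb{N}}$ is always weakly mixing per \cref{def:weakly-mixing}.

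There is no real obstacle here — the statement is essentially an immediate corollary of \cref{cor:q-Doeblin-coeff-of-ergodicity} combined with the elementary fact that a non-increasing sequence in $[0,1]$ converges to its infimum. The only point requiring a moment's care is the indexing shift needed to invoke \cref{cor:q-Doeblin-coeff-of-ergodicity} for sub-processes rather than the full process, and making sure the hypothesis ``$\prod_{i=k}^{\infty}(1-\alpha_I(\cN^i)) = 0$ for all $k$'' is used for each fixed $k$ separately (one could also note that if it holds for $k=1$ it holds for all larger $k$, since tails of the product are smaller, but the proposition states it for all $k$ so no such argument is strictly needed). I would write the proof in three or four lines accordingly.
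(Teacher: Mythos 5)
Your proof is correct and follows essentially the same route as the paper's: both apply \cref{cor:q-Doeblin-coeff-of-ergodicity} to the sub-process $\cG_{k:n}$ for each fixed $k$ and pass to the limit using the hypothesis that the infinite product vanishes. The paper's version is just a one-line display of the same argument (with a factor of $2$ from using the unnormalized trace norm instead of $T$), so no further comparison is needed.
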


\begin{proof}
    If for all $k \in \mbb{N}$, $\prod_{i =k}^{\infty} (1-\alpha_{I}(\cN^{i})) = 0$, then for all $\rho,\sigma \in \Density(A)$,
    \begin{align}
        \lim_{n \to \infty} \left\Vert \cG_{k:n}(\rho) - \cG_{k:n}(\sigma) \right\Vert_{1} \leq 2 \cdot \lim_{n \to \infty} \prod_{i =k}^{n} (1-\alpha_{I}(\cN^{i})) = 0 \ , 
    \end{align}
    which is the definition of always weakly mixing.
\end{proof}
Thus, by~\cref{prop:inhomo_weakly}, we conclude that (input-independent) contraction coefficients capture the notion of being weakly mixing.

\subsubsection{Decoupling Times} 

\label{Sec:Decoupling_times}

So far, we have shown how to use the quantum Doeblin coefficient $\alpha(\cN)$ to bound the speed of mixing and indistinguishability. However, we have also shown that these are captured by the induced quantum Doeblin coefficient $\alpha_{I}(\cN)$. This would suggest that the quantum Doeblin coefficient should be relevant for an even stronger demand, in an analogous manner to how indistinguishability time is a stronger demand than mixing time and operationally separates input-dependent and input-independent contraction coefficients. Here we show this is indeed the case as the complete contraction coefficient, and thus quantum Doeblin coefficient, capture the notion of being `weakly decoupling' and can bound the `decoupling time.'

We begin with natural definitions that strengthen each version of a channel from being mixing to being decoupling.
\begin{definition}\label{def:decoupling-Markov-chain}
    A homogeneous Markov chain $\cN_{A \to A}$ that is mixing with fixed point $\omega$ is decoupling if for every finite Hilbert space $R$,
    \begin{align}
    \lim_{n \to \infty} \left\Vert (\id_{R} \otimes \cN^{n})(\rho) - \rho_{R} \otimes \omega_{A} \right\Vert_{1} = 0 \quad \forall \rho \in \Density(R \otimes A) \ . 
\end{align}
    Moreover, we say it is strongly decoupling or completely primitive if $\omega$ is full rank.
\end{definition}
It is easy to see that the above definition captures decoupling as it makes the $R$ and $A$ systems independent, which is the key idea in decoupling \cite{Schumacher2002,hayden2008decoupling,dupuis2010decouplingapproachquantuminformation} (see also \cite{anshu2017quantum,colomer2023decoupling,cheng2023quantumbroadcastchannelsimulation}).
However, like being mixing, it includes the stronger assumption that it converges to a fixed state. Thus, we can define the weaker notion of weakly decoupling.
\begin{definition}\label{def:weakly-decoupling}
    An inhomogeneous Markov chain $(\cG_{n})_{n \in \mbb{N}}$ is weakly decoupling if for every finite Hilbert space $R$ and all $\rho_{RA},\sigma_{RA} \in \Density(R \otimes A)$ such that $\rho_{R} = \sigma_{R}$,
    \begin{align}\label{eq:weakly-decoupling}
        \lim_{n \to \infty} \left\Vert (\id_{R} \otimes \cG_{n})(\rho) - (\id_{R} \otimes \cG_{n})(\sigma) \right\Vert_{1} = 0 \ .
    \end{align}
    Moreover, we say an inhomogeneous Markov chain $(\cG_{n})_{n \in \mbb{N}}$ is always weakly decoupling if for all $k \geq 1$, for every finite Hilbert space $R$, and for all $\rho_{RA},\sigma_{RA} \in \Density(R \otimes A)$ such that $\rho_{R} = \sigma_{R}$,
    \begin{align}
        \lim_{n \to \infty} \left\Vert (\id_{R} \otimes \cG_{k:n})(\rho) - (\id_{R} \otimes \cG_{k:n})(\sigma) \right\Vert_{1} = 0 \quad \forall \rho,\sigma \in \Density(A) \ ,
    \end{align}
    where $\cG_{k:n} \coloneq \bigcirc_{i = k}^{n} \cN^{i}$ is a sub-process of the inhomogeneous Markov chain.
\end{definition}
\noindent As in the mixing case, we remark that being `always weakly decoupling' precludes that the process is decoupling only for some finite time.

We remark that the given definition for weakly decoupling does not make the decoupling explicit. Here, we show that it is equivalent to a definition that makes the decoupling explicit.
\begin{proposition}
     An inhomogeneous Markov chain $(\cG_{n})_{n \in \mbb{N}}$ being weakly decoupling is equivalent to the following condition, holding for every finite Hilbert space $R$ and all $\rho_{RA},\sigma_{RA} \in \Density(R \otimes A)$ such that $\rho_{R} = \sigma_{R}$,
    \begin{align}\label{eq:weakly-decoupling-explicit}
        \lim_{n \to \infty} \left\Vert (\id_{R} \otimes \cG_{n})(\rho_{RA}) - \sigma_{R} \otimes \cG_{n}(\sigma_{A}) \right\Vert_{1} = 0 \ ,
    \end{align}
    and similarly for always weakly decoupling.
\end{proposition}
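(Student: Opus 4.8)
The plan is to prove the two directions of the equivalence between~\eqref{eq:weakly-decoupling} and~\eqref{eq:weakly-decoupling-explicit} directly, using only the triangle inequality for the trace norm and the data-processing inequality. The key observation is that the operator $\sigma_R \otimes \cG_n(\sigma_A)$ appearing in~\eqref{eq:weakly-decoupling-explicit} is itself of the form $(\id_R \otimes \cG_n)(\tau_{RA})$ for the product state $\tau_{RA} \coloneqq \sigma_R \otimes \sigma_A$, which has the same marginal on $R$ as both $\rho_{RA}$ and $\sigma_{RA}$ (since $\rho_R = \sigma_R$). So the two conditions differ only by whether the second argument is $(\id_R \otimes \cG_n)(\sigma_{RA})$ or $(\id_R \otimes \cG_n)(\sigma_R \otimes \sigma_A)$, and bridging between them is exactly an instance of the very condition being characterized.

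Concretely, for the direction~\eqref{eq:weakly-decoupling} $\Rightarrow$~\eqref{eq:weakly-decoupling-explicit}, I would write, for arbitrary $\rho_{RA}, \sigma_{RA}$ with $\rho_R = \sigma_R$,
\begin{align}
    & \left\Vert (\id_R \otimes \cG_n)(\rho_{RA}) - \sigma_R \otimes \cG_n(\sigma_A) \right\Vert_1 \notag \\
    & \leq \left\Vert (\id_R \otimes \cG_n)(\rho_{RA}) - (\id_R \otimes \cG_n)(\sigma_{RA}) \right\Vert_1 + \left\Vert (\id_R \otimes \cG_n)(\sigma_{RA}) - (\id_R \otimes \cG_n)(\sigma_R \otimes \sigma_A) \right\Vert_1 .
\end{align}
The first term vanishes as $n \to \infty$ by~\eqref{eq:weakly-decoupling}. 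For the second term, note $\sigma_{RA}$ and $\sigma_R \otimes \sigma_A$ both have marginal $\sigma_R$ on $R$, so applying~\eqref{eq:weakly-decoupling} once more (with this pair of states) shows the second term vanishes as well. The reverse direction~\eqref{eq:weakly-decoupling-explicit} $\Rightarrow$~\eqref{eq:weakly-decoupling} is symmetric: bound $\left\Vert (\id_R \otimes \cG_n)(\rho_{RA}) - (\id_R \otimes \cG_n)(\sigma_{RA}) \right\Vert_1$ by the sum of $\left\Vert (\id_R \otimes \cG_n)(\rho_{RA}) - \sigma_R \otimes \cG_n(\sigma_A) \right\Vert_1$ and $\left\Vert \sigma_R \otimes \cG_n(\sigma_A) - (\id_R \otimes \cG_n)(\sigma_{RA}) \right\Vert_1 = \left\Vert (\id_R \otimes \cG_n)(\sigma_R \otimes \sigma_A) - (\id_R \otimes \cG_n)(\sigma_{RA}) \right\Vert_1$, and both terms vanish by~\eqref{eq:weakly-decoupling-explicit} applied to appropriate pairs (using $\sigma_R = \sigma_R$ trivially). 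For the ``always weakly decoupling'' case, I would simply note that the identical argument applies verbatim with $\cG_n$ replaced by the sub-process $\cG_{k:n}$ for every fixed $k \geq 1$, since the marginal condition $\rho_R = \sigma_R$ is unaffected and the triangle-inequality manipulation never uses anything specific about where the process starts.

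I do not anticipate a genuine obstacle here — this is a short bookkeeping argument — but the one point requiring a little care is making sure that in each application of the defining condition the two states fed in genuinely share their $R$-marginal, so that the hypothesis is actually applicable; this is where the assumption $\rho_R = \sigma_R$ in the statement is used, together with the fact that a product state $\sigma_R \otimes \sigma_A$ has $R$-marginal exactly $\sigma_R$. It is also worth stating explicitly that $\cG_n(\sigma_A) = \Tr_R[(\id_R \otimes \cG_n)(\sigma_R \otimes \sigma_A)]$ and that $\sigma_R \otimes \cG_n(\sigma_A) = (\id_R \otimes \cG_n)(\sigma_R \otimes \sigma_A)$, which is the identity that lets the ``explicit'' form be recognized as an output of the channel and hence handled by the triangle-inequality step.
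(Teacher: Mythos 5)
Your proof is correct and follows essentially the same route as the paper's: both rest on the identity $\sigma_R \otimes \cG_n(\sigma_A) = (\id_R \otimes \cG_n)(\sigma_R \otimes \sigma_A)$ together with the triangle inequality, checking the $R$-marginal condition at each application. The only cosmetic difference is that in the forward direction the paper obtains~\eqref{eq:weakly-decoupling-explicit} in a single application of the defining condition (by choosing the second state to be $\rho_R \otimes \sigma_A$), whereas you split it into two vanishing terms; both are valid.
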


\begin{proof}
    \cref{eq:weakly-decoupling} implies \eqref{eq:weakly-decoupling-explicit} by letting $\rho_{RA}$ be arbitrary and choosing $\sigma_{RA} = \sigma_{R} \otimes \sigma_{A} = \rho_{R} \otimes \sigma_{A}$. To prove the reverse implication, suppose that \eqref{eq:weakly-decoupling-explicit} holds for every finite Hilbert space $R$, and all $\hat{\rho}_{RA},\hat{\sigma}_{RA} \in \Density(R \otimes A)$ such that $\hat{\rho}_{R} = \hat{\sigma}_{R}$. Let $\rho_{RA},\sigma_{RA} \in \Density(R \otimes A)$ be such that $\rho_{R} = \sigma_{R}$. Then
    \begin{align}
        & \left\Vert (\id_{R} \otimes \cG_{n})(\rho_{RA}) - (\id_{R} \otimes \cG_{n})(\sigma_{RA}) \right\Vert_{1} \notag \\
        &= \left\Vert (\id_{R} \otimes \cG_{n})(\rho_{RA}) - \rho_{R} \otimes \cG_{n}(\rho_{A}) + \rho_{R} \otimes \cG_{n}(\rho_{A}) - (\id_{R} \otimes \cG_{n})(\sigma_{RA}) \right\Vert_{1} \\
        &\leq \left \Vert (\id_{R} \otimes \cG_{n})(\rho_{RA}) - \rho_{R} \otimes \cG_{n}(\rho_{A}) \right\Vert_{1} +  \left\Vert  (\id_{R} \otimes \cG_{n})(\sigma_{RA}) - \rho_{R} \otimes \cG_{n}(\rho_{A}) \right \Vert_{1} \ ,
    \end{align}
    where we have used the triangle inequality. Then,
    \begin{align}
         & \lim_{n \to \infty} \left\Vert (\id_{R} \otimes \cG_{n})(\rho_{RA}) - (\id_{R} \otimes \cG_{n})(\sigma_{RA}) \right\Vert_{1} \notag \\
         &\leq \lim_{n \to \infty} \left[ \left\Vert (\id_{R} \otimes \cG_{n})(\rho_{RA}) - \rho_{R} \otimes \cG_{n}(\rho_{A}) \right\Vert_{1} +  \left\Vert (\id_{R} \otimes \cG_{n})(\sigma_{RA}) -  \rho_{R} \otimes \cG_{n}(\rho_{A}) \right\Vert_{1} \right] \\
         &= 0 + 0 \ , 
    \end{align}
    where we have used the assumption that the limit exists for each term individually and is equal to zero (in the first case by letting $\hat{\rho} = \rho_{RA}$ and $\hat{\sigma} = \rho_{R} \otimes \rho_{A}$ and in the second case by letting $\hat{\rho} = \sigma_{RA}$ and $\hat{\sigma}_{RA} = \rho_{R} \otimes \rho_{A}$). The same argument works for always weakly decoupling. This completes the proof.
\end{proof}

\begin{figure}
    \centering
    \begin{minipage}{0.48\textwidth}
        \centering
        \includegraphics[width=0.95\textwidth]{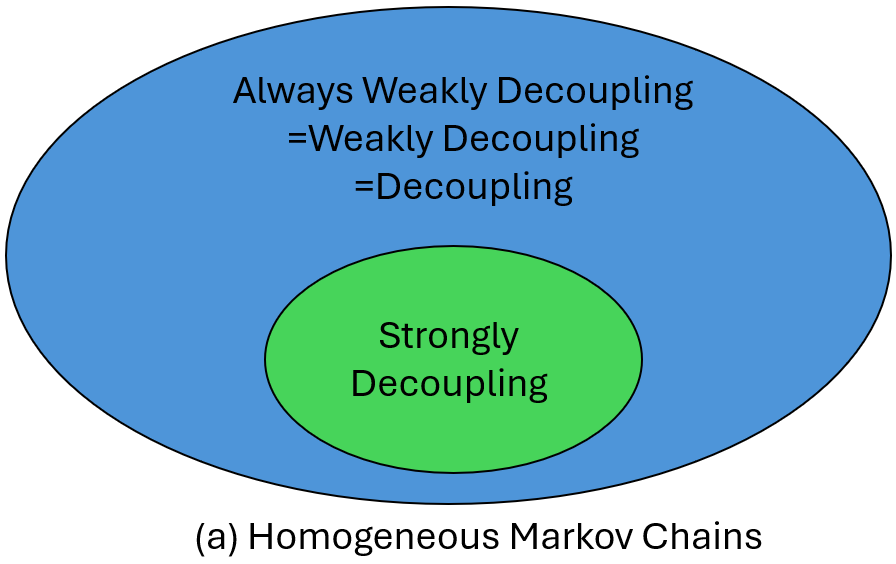}
    \end{minipage}\hfill
    \begin{minipage}{0.48\textwidth}
        \centering
        \includegraphics[width=0.95\textwidth]{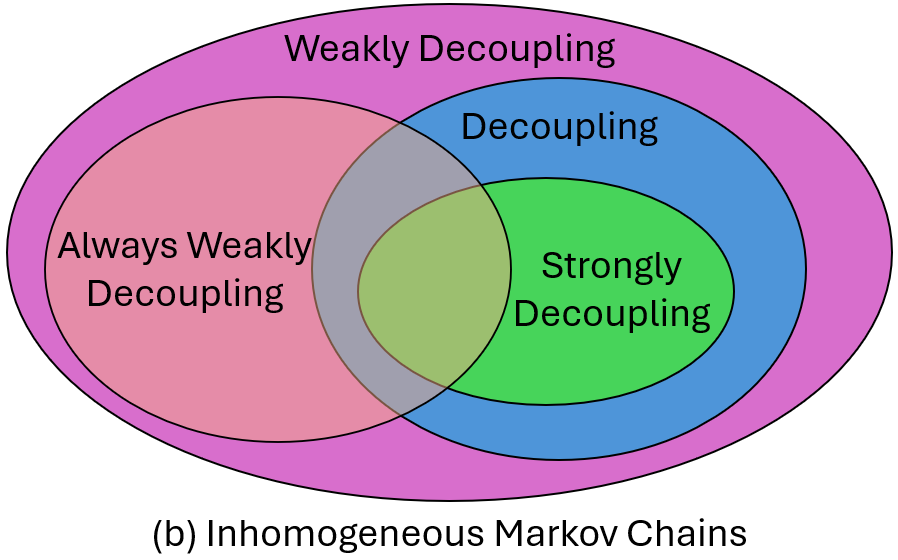}
    \end{minipage}
    \caption{The diagram of containments of notions of decoupling for (a) homogeneous and (b) inhomogeneous Markov chains as established in \cref{prop:decoupling-class-containment}. As may be seen by comparing to Fig.~\ref{fig:mixing-time-containments}, mixing and decoupling have analogous containments.}
    \label{fig:decoupling-time-containments}
\end{figure}
We end our discussion on the notions of decoupling by establishing the equivalent of \cref{prop:mixing-class-containment} for various definitions of decoupling, as shown in \cref{fig:decoupling-time-containments}.

\begin{proposition}\label{prop:decoupling-class-containment}
    For homogeneous Markov chains, we have the following set containments of the types of decoupling:
    \begin{align*}
        \operatorname{Strongly \; Decoupling} \subsetneq \operatorname{Decoupling}  = \operatorname{Weakly\;Decoupling} = \operatorname{Always\;Weakly\;Decoupling}   \ .
    \end{align*}
    For inhomogeneous Markov chains,
    \begin{equation}
        \begin{matrix}
            \operatorname{Strongly \; Decoupling} \subsetneq \operatorname{Decoupling} = \operatorname{Weakly\;Decoupling} \\
            
            \operatorname{Always\;Weakly\;Decoupling} \subsetneq \operatorname{Weakly\;Decoupling} \\
            
            \operatorname{Decoupling} \not \subseteq \operatorname{Always\; Weakly\;Decoupling} \not \subseteq  \operatorname{Decoupling} \\
            
            \operatorname{Strongly\;Decoupling} \not \subseteq \operatorname{Always\; Weakly\;Decoupling} \not \subseteq  \operatorname{Strongly\;Decoupling} \\
            \operatorname{Always\;Weakly\;Decoupling}\cap\operatorname{Decoupling}\neq\emptyset \\
            \operatorname{Always\;Weakly\;Decoupling}\cap\operatorname{Strongly\;Decoupling}\neq\emptyset
        \end{matrix} \ .
    \end{equation}
\end{proposition}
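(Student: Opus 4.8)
The plan is to prove \cref{prop:decoupling-class-containment} by leveraging the structural parallel with \cref{prop:mixing-class-containment}, which is depicted in \cref{fig:mixing-time-containments} and \cref{fig:decoupling-time-containments}. The key observation is that decoupling is "mixing with a reference system," so each containment statement should follow by an argument that mirrors the corresponding mixing statement but with an extra system $R$ carried along. Concretely, I would first record the trivial inclusions $\operatorname{Strongly\;Decoupling} \subseteq \operatorname{Decoupling} \subseteq \operatorname{Weakly\;Decoupling}$ and $\operatorname{Always\;Weakly\;Decoupling} \subseteq \operatorname{Weakly\;Decoupling}$, all of which are immediate from the definitions (\cref{def:decoupling-Markov-chain} and \cref{def:weakly-decoupling}) by specializing $R$ to be trivial or by choosing $\sigma_{RA}$ to be a product state. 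The strictness of $\operatorname{Strongly\;Decoupling} \subsetneq \operatorname{Decoupling}$ follows by the same examples used in \cref{prop:mixing-class-containment} (a channel whose unique fixed point is rank-deficient, e.g.\ a generalized amplitude damping channel with $p=1$), since adding a reference system does not change the rank of the fixed point.

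Next I would address the equivalences in the homogeneous case: $\operatorname{Decoupling} = \operatorname{Weakly\;Decoupling} = \operatorname{Always\;Weakly\;Decoupling}$. The nontrivial direction is that weakly decoupling implies decoupling (and always weakly decoupling) for a homogeneous chain $\cN_{A\to A}$. Here I would first argue that a weakly decoupling homogeneous chain is in particular weakly mixing, hence mixing by \cref{prop:mixing-class-containment}, so it has a unique fixed-point state $\omega$. Then, for any $\rho_{RA}$, writing $\cN^{n}(\rho_{RA})$ and comparing to $\rho_{R}\otimes\omega$, I would use the triangle inequality to split into $\left\Vert (\id_{R}\otimes\cN^{n})(\rho_{RA}) - (\id_R \otimes \cN^n)(\rho_R \otimes \rho_A)\right\Vert_{1}$, which vanishes by the weakly decoupling hypothesis (since the two states share the marginal $\rho_R$), plus $\left\Vert \rho_R \otimes \cN^n(\rho_A) - \rho_R \otimes \omega\right\Vert_{1} = \left\Vert \cN^n(\rho_A) - \omega \right\Vert_1$, which vanishes by mixing. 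This gives decoupling; the "always" version is automatic in the homogeneous case since $\cG_{k:n}$ is just $\cN^{n-k+1}$, a shift in the index. For the inhomogeneous case, the equivalence $\operatorname{Decoupling} = \operatorname{Weakly\;Decoupling}$ is stated (note this differs from the mixing table, where the inclusion is strict); I would prove this by the same triangle-inequality decomposition, observing that for an inhomogeneous chain $(\cG_n)$, if it is weakly decoupling then it is weakly mixing, and an inhomogeneous weakly mixing chain need not be mixing in general—so I should double check the claimed equality in the table and, if it holds, it must be because the decoupling condition's quantification over all $\rho_R=\sigma_R$ is already strong enough to force convergence to a product with a limiting state; I would derive this carefully, likely showing the sequence $\cG_n(\sigma_A)$ is Cauchy using the weakly decoupling property applied to suitably chosen purifications or extensions.

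Finally, for the non-containments and non-emptiness of intersections in the inhomogeneous case, I would construct explicit examples, reusing the constructions from the proof of \cref{prop:mixing-class-containment} in \cref{app:Mixing_class_proof} but trivially tensoring with an identity on $R$ (which does not affect any of the relevant limits): a chain that is decoupling but fails to be always weakly decoupling (e.g.\ one that "freezes" after a finite time at a decoupled state but whose sub-processes $\cG_{k:n}$ for large $k$ are identity-like), a chain that is always weakly decoupling but not decoupling (no limiting state but uniform contraction of differences), a strongly decoupling chain that is not always weakly decoupling, and chains witnessing the non-empty intersections. The main obstacle I anticipate is verifying the inhomogeneous equality $\operatorname{Decoupling} = \operatorname{Weakly\;Decoupling}$, since the analogous mixing statement is a \emph{strict} inclusion; I will need to pin down exactly why the reference-system quantification upgrades weak decoupling to full decoupling in the inhomogeneous setting, and confirm this is not a typo in the table—if it is genuinely an equality, the argument will hinge on extracting a limiting output state from the decoupling condition itself rather than assuming one exists a priori. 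Everything else should be a routine transcription of the mixing arguments with an inert spectator system $R$, which is why, as noted in the caption of \cref{fig:decoupling-time-containments}, "mixing and decoupling have analogous containments."
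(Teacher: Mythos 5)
Your proposal follows essentially the same route as the paper's proof: every containment and separation is obtained by transcribing the corresponding mixing argument with an inert reference system $R$, and the examples (replacer channels, the alternating $\cR^{\dyad{0}}/\cR^{\dyad{1}}$ chain, the ``freeze after one step'' chain) are exactly those the paper reuses from \cref{app:Mixing_class_proof}. The one step where you diverge slightly is the homogeneous implication ``weakly decoupling $\Rightarrow$ decoupling'': you route through weak mixing and \cref{prop:mixing-class-containment} to get a fixed point and then split via $(\id_R\otimes\cN^n)(\rho_R\otimes\rho_A)$, whereas the paper argues more directly by invoking the existence of a fixed point $\omega$ for any channel and instantiating the weak-decoupling hypothesis at $\sigma_{RA}=\rho_R\otimes\omega_A$, so that $(\id_R\otimes\cN^n)(\sigma_{RA})=\rho_R\otimes\omega_A$ with no further limit needed; both arguments are correct, the paper's being marginally shorter. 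On the obstacle you flagged: your suspicion is right. The displayed ``$\operatorname{Decoupling}=\operatorname{Weakly\;Decoupling}$'' in the inhomogeneous block is inconsistent with the paper's own proof, which explicitly establishes $\operatorname{Decoupling}\subsetneq\operatorname{Weakly\;Decoupling}$ for inhomogeneous chains via the alternating replacer example (weakly, indeed always weakly, decoupling but with no limiting state), matching the mixing table and the figure caption. So do not attempt to prove the equality by extracting a Cauchy sequence from the decoupling condition; the relation is a strict inclusion and the ``$=$'' is a typo in the statement.
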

\begin{proof}
    See~\cref{app:decopuling_time_proof}.
\end{proof}

With the taxonomy of decoupling channels addressed, we define the notion of decoupling time and decoupled indistinguishability time for a divergence $\mbb{D}$. 
\begin{definition}\label{def:decoupling-time}
    Let $\mbb{D}$ be a divergence, and let $\cN_{A \to A}$ be a quantum channel with unique fixed point $\omega \in \Density(A)$. Then for all $\delta \geq 0$, the decoupling time of the homogeneous Markov chain $\cN$ is defined as
    \begin{align} \label{eq:decoupling_time}
        t_{\operatorname{dec}}^{\mbb{D}}(\cN,\delta) \coloneqq \inf \left\{ n \in \mbb{N} : \begin{matrix}   \mbb{D}((\id_{R} \otimes \cN^{n})(\rho_{RA}) \Vert \rho_{R} \otimes \omega_{A}) \leq \delta \, , \\
        \forall R \, , \, \forall \rho_{RA} \in  \Density(R \otimes A) \end{matrix} \right\} \ , 
    \end{align}
    which is equal to $+\infty$ when no such $n \in \mbb{N}$ exists. \\

    Similarly, the decoupled indistinguishability time is defined as
    \begin{align} \label{def:d_i_time}
        t_{\operatorname{d-i}}^{\mbb{D}}(\cN,\delta)
        &\coloneqq  \inf \left\{ n \in \mbb{N} : \begin{matrix}   \mbb{D}((\id_{R} \otimes \cN^{n})(\rho_{RA}) \Vert (\id_{R} \otimes \cN^{n})(\sigma_{RA})) \leq \delta \, , \\ 
          \forall R \, , \, \forall \rho_{RA},\sigma_{RA} \in \Density(R \otimes A) \ \textnormal{s.t.}  \ \rho_{R} = \sigma_{R}
          \end{matrix} \right\} \ , 
    \end{align}
    which is equal to $+\infty$ when no such $n \in \mbb{N}$ exists.
\end{definition}
\noindent Note these conditions are more demanding than mixing time or indistinguishability time, respectively.

We then obtain the following bounds in terms of complete contraction coefficients. We begin with the decoupling time.
\begin{proposition}\label{prop:decoupling-time-bounds}
     Let $\mbb{D}$ be a divergence, and let $\cN_{A \to A}$ be a quantum channel with unique fixed state $\omega \in \Density(A)$. Then, for all $\delta \in \! \left(0,\hat{C}_{\mbb{D}}(\omega)\right]$,
     \begin{align}
         t_{\operatorname{dec}}^{\mbb{D}}(\cN,\delta) \leq \left \lceil \frac{\log\!\left(\hat{C}_{\mbb{D}}(\omega)/\delta \right)}{\log\!\left(1/\eta^{c}_{\mbb{D}}(\cN,\omega)\right)} \right \rceil  \leq \left \lceil \frac{\log\!\left(\hat{C}_{\mbb{D}}(\omega)/\delta \right)}{\log\!\left(1/\eta^{c}_{\mbb{D}}(\cN) \right)} \right \rceil  ,
     \end{align}
     where $\hat{C}_{\mbb{D}}(\omega) \coloneq \sup_{\rho \in \Density(RA)} \mbb{D}(\rho_{RA} \Vert \rho_{R} \otimes \omega_{A})$. 
\end{proposition}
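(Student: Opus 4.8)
The plan is to mirror the proof of \cref{prop:mixing-time-bound-in-terms-of-contraction-coeff}, with the fixed state $\omega$ replaced by the product operator $\rho_R \otimes \omega_A$ and the contraction coefficient replaced by the input-dependent complete contraction coefficient $\eta^c_{\mathbb{D}}(\cN,\omega)$ from \cref{def:input-dep-contraction-coeffs}. First I would fix an arbitrary finite reference system $R$ and an arbitrary state $\rho_{RA} \in \Density(R \otimes A)$, and note that $\rho_R \otimes \omega_A$ is a fixed point of $\id_R \otimes \cN$: indeed $(\id_R \otimes \cN)(\rho_R \otimes \omega_A) = \rho_R \otimes \cN(\omega_A) = \rho_R \otimes \omega_A$ because $\omega$ is a fixed point of $\cN$. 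Consequently it is a fixed point of $\id_R \otimes \cN^n$ for every $n \in \mbb{N}$, so that
\[ \mbb{D}\big((\id_R \otimes \cN^n)(\rho_{RA}) \,\big\Vert\, \rho_R \otimes \omega_A\big) = \mbb{D}\big((\id_R \otimes \cN^n)(\rho_{RA}) \,\big\Vert\, (\id_R \otimes \cN^n)(\rho_R \otimes \omega_A)\big) \ . \]

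The key step is to check that the pair $(\rho_{RA}, \rho_R \otimes \omega_A)$, and all of its images under iterated applications of $\id_R \otimes \cN$, are admissible in the supremum defining $\eta_{\mbb{D}}(\cN,\omega,|R|)$: the second state has $A$-marginal equal to $\omega_A$ (and this is preserved at every iteration since $\omega$ is fixed), while both states share the $R$-marginal $\rho_R$ (the $R$-marginal of $(\id_R \otimes \cN^k)(\rho_{RA})$ equals $\rho_R$ for all $k$ by trace preservation of $\cN$). Hence the contraction inequality from \cref{def:input-dep-contraction-coeffs} applies at each step, and iterating $n$ times gives
\[ \mbb{D}\big((\id_R \otimes \cN^n)(\rho_{RA}) \,\big\Vert\, \rho_R \otimes \omega_A\big) \leq \big[\eta_{\mbb{D}}(\cN,\omega,|R|)\big]^n \, \mbb{D}(\rho_{RA} \Vert \rho_R \otimes \omega_A) \leq \big[\eta^c_{\mbb{D}}(\cN,\omega)\big]^n \, \hat{C}_{\mbb{D}}(\omega) \ , \]
where the second inequality uses $\eta_{\mbb{D}}(\cN,\omega,|R|) \leq \eta^c_{\mbb{D}}(\cN,\omega)$ from \eqref{eq:input_d_Com_CC} together with $\mbb{D}(\rho_{RA} \Vert \rho_R \otimes \omega_A) \leq \hat{C}_{\mbb{D}}(\omega)$, both valid because the supremum defining $\hat{C}_{\mbb{D}}(\omega)$ ranges over all reference systems and all $\rho_{RA}$. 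Solving $[\eta^c_{\mbb{D}}(\cN,\omega)]^n \hat{C}_{\mbb{D}}(\omega) \leq \delta$ for the least integer $n$ yields the first bound, and the second follows from $\eta^c_{\mbb{D}}(\cN,\omega) \leq \eta^c_{\mbb{D}}(\cN)$ — which holds since the optimization defining the input-dependent coefficient imposes the extra constraint $\sigma_A = \omega_A$ — and monotonicity of $\log$.

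The only genuinely delicate point, and the main thing to be careful about, is the bookkeeping of marginals across iterations and the observation that $R$ must be allowed to be arbitrary, which is precisely what forces the appearance of the \emph{complete} contraction coefficient rather than $\eta_{\mbb{D}}(\cN,\omega)$ and is why $\hat{C}_{\mbb{D}}(\omega)$ is defined with a supremum over all reference systems. Degenerate cases — $\eta^c_{\mbb{D}}(\cN,\omega) \in \{0,1\}$, or $\mbb{D}(\rho_{RA} \Vert \rho_R \otimes \omega_A) = 0$ — are handled exactly as in \cref{prop:mixing-time-bound-in-terms-of-contraction-coeff}, and the range restriction $\delta \in (0,\hat{C}_{\mbb{D}}(\omega)]$ ensures the logarithms are well defined.
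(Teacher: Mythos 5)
Your proposal is correct and follows essentially the same route as the paper's own proof: identify $\rho_R\otimes\omega_A$ as a fixed point of $\id_R\otimes\cN$, verify that the marginal conditions of \cref{def:input-dep-contraction-coeffs} are preserved under iteration so the input-dependent complete contraction coefficient applies at each step, bound the initial divergence by $\hat{C}_{\mbb{D}}(\omega)$, solve for $n$, and relax to the input-independent coefficient. Your explicit bookkeeping of the marginals across iterations is slightly more careful than the paper's, but the argument is the same.
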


\begin{proof}
    The proof is effectively the same as those for mixing and indistinguishability times, but we provide it here, as the $R$ system is now relevant. Let $R$ be an arbitrary finite-dimensional Hilbert space, and let $\rho_{RA} \in \Density(R \otimes A)$. Then,
    \begin{align}
        \mbb{D}((\id_{R} \otimes \cN^{n})(\rho) \Vert \rho_{R} \otimes \omega) &= \mbb{D}((\id_{R} \otimes \cN^{n})(\rho) \Vert (\id_{R} \otimes \cN^{n})(\rho_{R} \otimes \omega)) \\
        &\leq \eta^{c}_{\mbb{D}}(\cN,\omega)\, \mbb{D}((\id_{R} \otimes \cN^{n-1})(\rho) \Vert (\id_{R} \otimes \cN^{n-1})(\rho_{R} \otimes \omega_{A})) \\ 
        &\leq \left[\eta_{\mbb{D}}^{c}(\cN,\omega)\right]^{n} \mbb{D}(\rho_{RA} \Vert \rho_{R} \otimes \omega_{A}) \ ,
    \end{align}
    where the equality follows from the assumption that $\omega$ is a fixed point of $\cN$, the first inequality follows because $\rho_{RA}$ and $\rho_{R} \otimes \omega_{A}$ have the same marginal on the $R$ system so one can use the definition of input-dependent complete contraction coefficient (\cref{def:input-dep-contraction-coeffs}) and the final inequality follows from applying the same argument $n-1$ times. Then, we use the upper bound  $\mbb{D}(\rho_{RA} \Vert \rho_{R} \otimes \omega_{A}) \leq \hat{C}_{\mbb{D}}(\omega)$. As this holds for an arbitrary $\rho_{RA}$, to obtain a bound on decoupling time, it suffices to solve the inequality $\left[\eta_{\mbb{D}}^{c}(\cN,\omega)\right]^{n}\hat{C}_{\mbb{D}}(\omega) \leq \delta$ for $n$. Using properties of the logarithm to do so, similarly as done in the proof of \cref{prop:mixing-time-bound-in-terms-of-contraction-coeff}, completes the proof for the input-dependent complete contraction coefficient, and then one may relax it to the input-independent variant.
\end{proof}

For the case of the trace distance, we obtain the following corollary:
\begin{corollary} \label{Cor:decopling_Doeblin}
     Let $\cN_{A \to A}$ be a quantum channel with unique fixed point $\omega \in \Density(A)$. Then
     \begin{align}
         t_{\operatorname{dec}}^{T}(\cN,\delta) \leq \left \lceil \frac{\log(\delta^{-1})}{\log([1-\alpha(\cN)]^{-1})} \right \rceil \ .
     \end{align}
\end{corollary}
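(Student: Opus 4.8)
The plan is to specialize \cref{prop:decoupling-time-bounds} to the trace distance and then feed in the complete-contraction bound from \cref{prop:trace_distance_complete_cont_Doeblin_bound}; this mirrors exactly the argument already used for the mixing-time bound in \cref{cor:mixing-time-bound-via-q-Doeblin} and for the indistinguishability-time bound in \cref{cor:q-Doeblin-coeff-of-ergodicity}.

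First I would take $\mbb{D} = E_{1} = T$, the normalized trace distance, which is a generalized divergence satisfying the data-processing inequality and hence an admissible input to \cref{prop:decoupling-time-bounds}. That proposition then gives
\begin{equation}
    t^{T}_{\operatorname{dec}}(\cN,\delta) \le \left\lceil \frac{\log\!\left(\hat{C}_{T}(\omega)/\delta\right)}{\log\!\left(1/\eta^{c}_{T}(\cN)\right)}\right\rceil ,
\end{equation}
where $\hat{C}_{T}(\omega) = \sup_{\rho_{RA}} T(\rho_{RA},\rho_{R}\otimes\omega_{A})$. Next I would observe that $\hat{C}_{T}(\omega) \le 1$, since the normalized trace distance between any two states is at most one; as the denominator $\log(1/\eta^{c}_{T}(\cN))$ is nonnegative, replacing $\hat{C}_{T}(\omega)$ by $1$ only weakens the bound, so the numerator may be taken to be $\log(\delta^{-1})$.

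To finish, I would invoke \cref{prop:trace_distance_complete_cont_Doeblin_bound}, namely $\eta^{c}_{T}(\cN) = \eta^{c}_{\Tr}(\cN) \le 1 - \alpha(\cN)$. Since $x \mapsto \log(1/x)$ is decreasing on $(0,1]$, this gives $\log(1/\eta^{c}_{T}(\cN)) \ge \log([1-\alpha(\cN)]^{-1})$, so the fraction—and hence its ceiling, by monotonicity of the ceiling function—can only decrease when the denominator is replaced by $\log([1-\alpha(\cN)]^{-1})$, which yields the claimed inequality.

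There is no substantive obstacle here: the corollary is a direct composition of two results established earlier in the paper. The only items needing minor care are confirming that $T$ satisfies the hypotheses of \cref{prop:decoupling-time-bounds} (in particular that $\cN$ has a unique fixed state $\omega$, as is assumed in \cref{def:decoupling-time}) and treating the degenerate case $\alpha(\cN) = 0$ (equivalently $\eta^{c}_{T}(\cN) = 1$), in which the right-hand side is $+\infty$ and the bound holds vacuously.
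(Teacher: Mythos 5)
Your proposal is correct and follows exactly the paper's own route: specialize \cref{prop:decoupling-time-bounds} to the trace distance, use $\hat{C}_{T}(\omega)\leq 1$, and then apply \cref{prop:trace_distance_complete_cont_Doeblin_bound} to replace $\eta^{c}_{\Tr}(\cN)$ with $1-\alpha(\cN)$. The additional remarks on monotonicity and the degenerate case $\alpha(\cN)=0$ are fine but not needed beyond what the paper already records.
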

\begin{proof}
    This follows from \cref{prop:decoupling-time-bounds}, that $\hat{C}_{T}(\omega) \leq 1$, and \cref{prop:trace_distance_complete_cont_Doeblin_bound}.
\end{proof}
\noindent Note that this means not only  have we  introduced the decoupling time, but we have also established a method for being able to efficiently compute an upper bound on it.

We now consider the decoupled indistinguishability time.
\begin{proposition}
    For an inhomogeneous Markov chain $(\cG_{n})_{n \in \mbb{N}}$ defined by channels \\ $(\cN^{i}_{A_{i} \to A_{i+1}})_{i \in \mbb{N}}$, for every finite dimensional Hilbert space $R$ and $\rho,\sigma \in \Density(R \otimes A_{1})$ such that $\rho_{R} = \sigma_{R}$, the following inequality holds:
    \begin{align}
        \mbb{D}((\id_{R} \otimes \cG_{n})(\rho) \Vert (\id_{R} \otimes \cG_{n})(\sigma)) \leq \mbb{D}(\rho \Vert \sigma)\prod_{i \in [n]} \eta^{c}_{\mbb{D}}(\cN^{i}) \ . 
    \end{align}
    
    In particular, for a homogeneous Markov chain $\cN_{A \to A}$ and $\delta \in (0,\hat{C}_{\mbb{D}}]$,
    \begin{align}
        t_{\operatorname{d-i}}^{\mbb{D}}(\cN,\delta) \leq \left \lceil \frac{\log\!\left(\hat{C}_{\mbb{D}}/\delta \right)}{\log\!\left(1/\eta^{c}_{\mbb{D}}(\cN)\right)} \right \rceil \ ,  
    \end{align}
    where $\hat{C}_{\mbb{D}} \coloneq \sup_{\rho,\sigma \in \Density(R \otimes A):
    \rho_{R} = \sigma_{R}} \mbb{D}(\rho \Vert \sigma)$. 
\end{proposition}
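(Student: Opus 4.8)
The plan is to mirror the proofs of \cref{prop:indist-time-bounds} and \cref{prop:decoupling-time-bounds}, combining the telescoping argument used there with the observation that the relevant marginal on the reference system is preserved under each step of the Markov chain. First I would fix an arbitrary finite-dimensional Hilbert space $R$ and states $\rho_{RA_1},\sigma_{RA_1} \in \Density(R \otimes A_1)$ with $\rho_R = \sigma_R$. The key point is that, writing $\cG_n = \cN^{n} \circ \cG_{n-1}$ (or $\cN^{n} \circ \cdots \circ \cN^{1}$), the states $(\id_R \otimes \cG_{n-1})(\rho)$ and $(\id_R \otimes \cG_{n-1})(\sigma)$ still have identical marginals on $R$, because each $\cN^{i}$ acts only on the channel system and trace preservation ensures the $R$-marginal is unchanged. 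This is precisely the hypothesis needed to invoke the input-independent \emph{complete} contraction coefficient $\eta^c_{\mbb{D}}(\cN^{i})$ from \eqref{eq:Comp_CC} at each step.

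The main step is then a telescoping chain:
\begin{align}
    \mbb{D}((\id_R \otimes \cG_n)(\rho) \Vert (\id_R \otimes \cG_n)(\sigma))
    &\leq \eta^c_{\mbb{D}}(\cN^{n})\, \mbb{D}((\id_R \otimes \cG_{n-1})(\rho) \Vert (\id_R \otimes \cG_{n-1})(\sigma)) \notag \\
    &\leq \cdots \leq \mbb{D}(\rho \Vert \sigma) \prod_{i \in [n]} \eta^c_{\mbb{D}}(\cN^{i}),
\end{align}
where each inequality applies the definition of $\eta^c_{\mbb{D}}$ to the channel $\cN^{i}$ with reference-system-consistent arguments, exactly as in \eqref{eq:mixing-time-bound-proof}. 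For the homogeneous case, the product collapses to $[\eta^c_{\mbb{D}}(\cN)]^{n}$. To extract the bound on $t_{\operatorname{d-i}}^{\mbb{D}}(\cN,\delta)$, I would then upper bound $\mbb{D}(\rho \Vert \sigma) \leq \hat{C}_{\mbb{D}}$ (using that $\rho,\sigma$ have the same $R$-marginal, matching the supremum in the definition of $\hat{C}_{\mbb{D}}$), so it suffices to solve $[\eta^c_{\mbb{D}}(\cN)]^{n} \hat{C}_{\mbb{D}} \leq \delta$ for $n$; taking logarithms and then the ceiling, as in the proof of \cref{prop:mixing-time-bound-in-terms-of-contraction-coeff}, yields the claimed expression. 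Since this holds for every $R$ and every admissible pair $\rho,\sigma$, it bounds $t_{\operatorname{d-i}}^{\mbb{D}}(\cN,\delta)$ as defined in \eqref{def:d_i_time}.

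I do not anticipate a serious obstacle here — the argument is a routine adaptation of the earlier proofs. The only subtlety worth being careful about is that the complete contraction coefficient must be the input-\emph{independent} one $\eta^c_{\mbb{D}}(\cN^{i})$ (not the input-dependent $\eta^c_{\mbb{D}}(\cN^{i},\omega)$), since in the indistinguishability setting we have no fixed reference state; this is why the bound here naturally uses $\eta^c_{\mbb{D}}(\cN)$ directly rather than passing through an input-dependent coefficient as in \cref{prop:decoupling-time-bounds}. One should also check that $\delta \in (0,\hat{C}_{\mbb{D}}]$ is exactly the range in which the logarithm manipulation is valid (so that $\log(\hat{C}_{\mbb{D}}/\delta) \geq 0$ and the right-hand side is a nonnegative integer after taking the ceiling). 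A brief remark could note, in analogy with the corollaries following \cref{prop:mixing-time-bound-in-terms-of-contraction-coeff} and \cref{prop:indist-time-bounds}, that $\hat{C}_{\mbb{D}}$ is often $+\infty$ but is bounded by $1$ for the trace distance, in which case one recovers an efficiently computable bound via $\eta^c_{\Tr}(\cN) \leq 1 - \alpha(\cN)$ from \cref{prop:trace_distance_complete_cont_Doeblin_bound}.
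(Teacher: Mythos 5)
Your proposal is correct and follows essentially the same route as the paper, which simply notes that the proof is the telescoping argument of the indistinguishability-time bound with the $R$ system handled as in the decoupling-time bound, forcing the use of the input-independent complete contraction coefficient. Your additional remarks on the preservation of the $R$-marginal and the validity range of $\delta$ are accurate but just make explicit what the paper leaves implicit.
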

\begin{proof}
    The proof is the same as that for \cref{prop:indist-time-bounds}, except one needs to handle the $R$ system in the same manner as in \cref{prop:decoupling-time-bounds}, which results in using the complete contraction coefficient.
\end{proof}
Again, we obtain various special cases for the trace distance. We state the ones that we believe should be of most interest. The proofs follow by the same arguments that we used for other cases and are thus omitted.
\begin{corollary} \label{Cor:Distinguishability_time_Doeblin}
     Let $\cN_{A \to A}$ be define a time-homogeneous Markov chain. Then for all $\delta \in (0,1)$, 
    \begin{align}
        t_{\operatorname{d-i}}^{T}(\cN,\delta) \leq \left \lceil \frac{\log(\delta^{-1})}{\log([1-\alpha(\cN)]^{-1})} \right \rceil \ . 
    \end{align}
\end{corollary}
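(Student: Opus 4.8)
The plan is to mirror the proof strategy already used for \cref{cor:mixing-time-bound-via-q-Doeblin}, \cref{cor:q-Doeblin-coeff-of-ergodicity}, and \cref{Cor:decopling_Doeblin}, since \cref{Cor:Distinguishability_time_Doeblin} is simply the trace-distance specialization of the decoupled indistinguishability time bound in the immediately preceding proposition. First I would invoke that preceding proposition, applied to the homogeneous Markov chain $\cN_{A \to A}$, which gives
\begin{equation}
    t_{\operatorname{d-i}}^{T}(\cN,\delta) \leq \left \lceil \frac{\log\!\left(\hat{C}_{T}/\delta \right)}{\log\!\left(1/\eta^{c}_{T}(\cN)\right)} \right \rceil,
\end{equation}
where $\hat{C}_{T} = \sup_{\rho,\sigma \in \Density(R \otimes A):\, \rho_{R} = \sigma_{R}} T(\rho,\sigma)$.

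The next step is to bound the two ingredients $\hat{C}_{T}$ and $\eta^{c}_{T}(\cN)$. For $\hat{C}_{T}$, the trace distance between any two density operators is at most $1$ (since $T(\rho,\sigma) = \frac12\|\rho-\sigma\|_1 \leq 1$), so $\hat{C}_{T} \leq 1$; consequently $\log(\hat{C}_{T}/\delta) \leq \log(\delta^{-1})$ for $\delta \in (0,1)$. For the denominator, I would apply \cref{prop:trace_distance_complete_cont_Doeblin_bound}, which states $\eta^{c}_{\Tr}(\cN) \leq 1-\alpha(\cN)$. Since $t \mapsto \log(1/t)$ is decreasing, this yields $\log(1/\eta^{c}_{T}(\cN)) \geq \log(1/(1-\alpha(\cN))) = \log([1-\alpha(\cN)]^{-1})$, so the overall fraction only increases when we replace $\eta^{c}_{T}(\cN)$ by $1-\alpha(\cN)$. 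Combining these observations, and using that the ceiling function is monotone, gives the claimed bound.

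There is essentially no substantive obstacle here; the only points requiring a little care are the sign/monotonicity directions when passing through $\log(1/\cdot)$ and the requirement that $\eta^{c}_{T}(\cN) < 1$ so that the denominator is strictly positive and the bound is finite (when $\eta^{c}_{T}(\cN) = 1$ or $\alpha(\cN) = 0$ the right-hand side is $+\infty$ and the inequality is trivially true). As with the analogous corollaries, the statement is essentially a routine corollary of the preceding proposition plus \cref{prop:trace_distance_complete_cont_Doeblin_bound} and the elementary bound $T \leq 1$, so the proof can be stated compactly: "The proof follows by the same argument used for \cref{Cor:decopling_Doeblin}, using that $\hat{C}_{T} \leq 1$ and \cref{prop:trace_distance_complete_cont_Doeblin_bound}." If a fuller treatment is wanted, I would spell out the three steps above in a short displayed chain of inequalities.
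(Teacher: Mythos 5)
Your proof is correct and follows exactly the route the paper intends: it invokes the preceding proposition on the decoupled indistinguishability time, bounds $\hat{C}_{T} \leq 1$, and applies \cref{prop:trace_distance_complete_cont_Doeblin_bound} to replace $\eta^{c}_{T}(\cN)$ by $1-\alpha(\cN)$, which is precisely the argument the paper omits as being "the same as for the other cases." No issues.
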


\begin{proposition}
    An inhomogeneous Markov chain $(\cG_{n})_{n \in \mbb{N}}$ defined by channels $(\cN^{i}_{A\to A_{i+1}})_{i \in \mbb{N}}$ is always weakly decoupling (\Cref{def:weakly-decoupling}) if for all $k \in \mbb{N}$, $\prod_{i =k}^{\infty} (1-\alpha_{I}(\cN^{i})) = 0$.
\end{proposition}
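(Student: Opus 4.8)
The plan is to mirror the proof of \cref{prop:inhomo_weakly}, but now tracking a reference system and using the complete contraction coefficient in place of the ordinary one. Concretely, fix an arbitrary finite-dimensional Hilbert space $R$ and states $\rho_{RA},\sigma_{RA} \in \Density(R \otimes A)$ with $\rho_{R} = \sigma_{R}$. For any $k \in \mbb{N}$, I would invoke the tensored version of the reverse-robustness bound on the complete contraction coefficient established in \cref{prop:trace_distance_complete_cont_Doeblin_bound}, namely $\eta^{c}_{\Tr}(\cN^{i}) \leq 1 - \alpha(\cN^{i})$, together with the submultiplicativity under concatenation from \cref{prop:complete-contraction-coeff-properties} (item~2), to conclude that
\begin{align}
    \left\Vert (\id_{R} \otimes \cG_{k:n})(\rho_{RA}) - (\id_{R} \otimes \cG_{k:n})(\sigma_{RA}) \right\Vert_{1} &\leq 2 \, T\!\left((\id_{R} \otimes \cG_{k:n})(\rho_{RA}), (\id_{R} \otimes \cG_{k:n})(\sigma_{RA})\right) \notag \\
    &\leq 2 \, T(\rho_{RA},\sigma_{RA}) \prod_{i=k}^{n} \eta^{c}_{\Tr}(\cN^{i}) \\
    &\leq 2 \prod_{i=k}^{n} (1-\alpha(\cN^{i})) \ ,
\end{align}
using $T \leq 1$ in the last step. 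Since $\alpha(\cN^{i}) \leq \alpha_{I}(\cN^{i})$ by \cref{prop:induced-doeblin-larger-than-doeblin}, we have $\prod_{i=k}^{n}(1-\alpha(\cN^{i})) \leq \prod_{i=k}^{n}(1-\alpha_{I}(\cN^{i}))$, and the hypothesis that $\prod_{i=k}^{\infty}(1-\alpha_{I}(\cN^{i})) = 0$ forces the right-hand side to vanish as $n \to \infty$. This holds for every $k$, every $R$, and every admissible pair $(\rho_{RA},\sigma_{RA})$, which is exactly \cref{def:weakly-decoupling}.

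Alternatively, and perhaps more cleanly, I would just reduce to \cref{cor:q-Doeblin-coeff-of-ergodicity}: that corollary, applied to the inhomogeneous Markov chain with the reference system appended (i.e., to the channels $\id_{R} \otimes \cN^{i}$, whose induced Doeblin coefficients are at least $\alpha_{I}(\cN^{i})$ — one should check this, but it follows since restricting the input to product states $\rho_{R} \otimes \rho_{A}$ in the exclusion game over $\id_R \otimes \cN^i$ recovers the game for $\cN^i$, and data processing for $\alpha$ handles the pre/post processing), gives $T((\id_{R} \otimes \cG_{k:n})(\rho),(\id_{R} \otimes \cG_{k:n})(\sigma)) \leq \prod_{i=k}^{n}(1-\alpha_{I}(\cN^{i}))$ for $\rho_{R} = \sigma_{R}$, and the conclusion is immediate. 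Either route is short.

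The only genuine subtlety — and the step I would be most careful about — is justifying that the complete contraction coefficient (not merely the ordinary one) controls the decoupled quantity. This is why the proof must route through \cref{prop:trace_distance_complete_cont_Doeblin_bound} / item~2 of \cref{prop:complete-contraction-coeff-properties} rather than through \cref{prop:DPI-and-LU-invariance} or the plain $\eta_{\Tr}$ bound; the reference system $R$ is carried along at every time step, and the constraint $\rho_{R} = \sigma_{R}$ must be preserved under the action of $\id_{R} \otimes \cN^{i}$, which it is since each $\cN^{i}$ acts only on the $A$-register. Everything else is a routine telescoping estimate identical in structure to \cref{prop:inhomo_weakly}, so I would keep the written proof to essentially the displayed chain of inequalities above followed by the limit argument, and simply cite \cref{prop:decoupling-time-bounds} or \cref{cor:q-Doeblin-coeff-of-ergodicity} for the underlying complete-contraction bound.

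\begin{proof}
    Fix an arbitrary finite-dimensional Hilbert space $R$ and states $\rho_{RA},\sigma_{RA} \in \Density(R \otimes A)$ with $\rho_{R} = \sigma_{R}$, and fix $k \in \mbb{N}$. Since each $\cN^{i}$ acts only on the $A$-register, the marginals on $R$ remain equal under $\id_{R} \otimes \cN^{i}$, so by the complete contraction bound $\eta^{c}_{\Tr}(\cN^{i}) \leq 1 - \alpha(\cN^{i})$ (\cref{prop:trace_distance_complete_cont_Doeblin_bound}) together with submultiplicativity under concatenation (\cref{prop:complete-contraction-coeff-properties}),
    \begin{align}
        \left\Vert (\id_{R} \otimes \cG_{k:n})(\rho_{RA}) - (\id_{R} \otimes \cG_{k:n})(\sigma_{RA}) \right\Vert_{1} &\leq 2 \, T(\rho_{RA},\sigma_{RA}) \prod_{i=k}^{n} \eta^{c}_{\Tr}(\cN^{i}) \\
        &\leq 2 \prod_{i=k}^{n} (1-\alpha(\cN^{i})) \\
        &\leq 2 \prod_{i=k}^{n} (1-\alpha_{I}(\cN^{i})) \ ,
    \end{align}
    where the second inequality uses $T \leq 1$ and the third uses $\alpha(\cN^{i}) \leq \alpha_{I}(\cN^{i})$ from \cref{prop:induced-doeblin-larger-than-doeblin}. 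Taking $n \to \infty$ and using the hypothesis $\prod_{i=k}^{\infty}(1-\alpha_{I}(\cN^{i})) = 0$ gives
    \begin{align}
        \lim_{n \to \infty} \left\Vert (\id_{R} \otimes \cG_{k:n})(\rho_{RA}) - (\id_{R} \otimes \cG_{k:n})(\sigma_{RA}) \right\Vert_{1} = 0 \ .
    \end{align}
    As $R$, $k$, and the admissible pair $(\rho_{RA},\sigma_{RA})$ were arbitrary, this is precisely the definition of always weakly decoupling (\cref{def:weakly-decoupling}).
\end{proof}
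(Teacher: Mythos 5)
Your argument is correct, and matches the paper's intended route, up to the bound $\left\Vert (\id_{R} \otimes \cG_{k:n})(\rho_{RA}) - (\id_{R} \otimes \cG_{k:n})(\sigma_{RA}) \right\Vert_{1} \leq 2\prod_{i=k}^{n}\bigl(1-\alpha(\cN^{i})\bigr)$ (the paper omits its proof, saying it follows from the same argument as the preceding decoupled-indistinguishability bounds, i.e., \cref{prop:trace_distance_complete_cont_Doeblin_bound} together with item~2 of \cref{prop:complete-contraction-coeff-properties}). The fatal step is your last inequality, $\prod_{i=k}^{n}(1-\alpha(\cN^{i})) \leq \prod_{i=k}^{n}(1-\alpha_{I}(\cN^{i}))$. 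Since $\alpha(\cN^{i}) \leq \alpha_{I}(\cN^{i})$ by \cref{prop:induced-doeblin-larger-than-doeblin}, we have $1-\alpha(\cN^{i}) \geq 1-\alpha_{I}(\cN^{i})$, and hence $\prod_{i}(1-\alpha(\cN^{i})) \geq \prod_{i}(1-\alpha_{I}(\cN^{i}))$ --- the inequality runs the other way. The hypothesis $\prod_{i=k}^{\infty}(1-\alpha_{I}(\cN^{i}))=0$ therefore gives no control over $\prod_{i=k}^{\infty}(1-\alpha(\cN^{i}))$, and your chain does not close. What you have actually proved is the (true) statement with $\alpha$ in place of $\alpha_{I}$ in the hypothesis, which is a strictly stronger assumption on the channels.

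Your alternative route fails for the same directional reason. Applying \cref{cor:q-Doeblin-coeff-of-ergodicity} to the channels $\id_{R}\otimes\cN^{i}$ would require $\alpha_{I}(\id_{R}\otimes\cN^{i}) \geq \alpha_{I}(\cN^{i})$, but $\alpha_{I}$ is an infimum over input ensembles and POVMs, so enlarging the system can only decrease it: $\alpha_{I}(\id_{R}\otimes\cN^{i}) \leq \alpha_{I}(\cN^{i})$. In fact $\alpha_{I}(\id_{R}\otimes\cN)=0$ whenever $\vert R\vert \geq 2$ (prepare two orthogonal states on $R$ and exclude them perfectly with a measurement on $R$ alone), so that route yields only the trivial bound. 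The genuine obstruction is that the paper establishes $\eta^{c}_{\Tr}(\cN)\leq 1-\alpha(\cN)$ but not $\eta^{c}_{\Tr}(\cN)\leq 1-\alpha_{I}(\cN)$; the induced coefficient controls the reference-free contraction via \eqref{eq:trace-distance-CC-to-induced-doeblin}, not the complete one. Given that the immediately preceding \cref{Cor:Distinguishability_time_Doeblin} is stated in terms of $\alpha$, the appearance of $\alpha_{I}$ in this proposition is most plausibly a slip in the paper, and your derivation is a correct proof of the $\alpha$-version; but as a proof of the statement as written, it rests on a false inequality and does not go through.
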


In summary, what we have seen in this section is that for every divergence $\mbb{D}$, channel $\cN$, and fixed $\delta \geq 0$, we have the ordering on operational tasks and identification of which contraction coefficient controls them (see \cref{fig:contraction-relations}). Moreover, we have shown that in the case $\mbb{D} = T$, we can control the finite-time convergence of each task using the quantum Doeblin coefficient, which is efficiently computable, and this is nontrivial as long as $\alpha(\cN) < 1$.
\begin{figure}
    \begin{center}
    \begin{tikzpicture}
        \draw
            (-3,0) node[minimum width = 2.5cm,circle,fill=blue!20] (mix) {$
            \begin{matrix}
                \eta_{\mbb{D}}(\cN,\omega) \\
            \downarrow \\
            t_{\text{mix}}^{\mbb{D}}(\cN,\delta)
            \end{matrix}$}
            ++(3,2) node[minimum width = 2.5cm,circle,fill=blue!20] (decouple) {$
            \begin{matrix}
                \eta^{c}_{\mbb{D}}(\cN,\omega) \\
            \downarrow \\
            t_{\text{dec}}^{\mbb{D}}(\cN,\delta)
            \end{matrix}$}
            ++(0,-4) node[minimum width = 2.5cm,circle,fill=blue!20] (indist) {$
            \begin{matrix}
                \eta_{\mbb{D}}(\cN) \\
            \downarrow \\
            t_{\text{ind}}^{\mbb{D}}(\cN,\delta)
            \end{matrix}$}
            ++(3,2) node[minimum width = 2.5cm, circle,fill=blue!20] (dec-indist) {$
            \begin{matrix}
                \eta^{c}_{\mbb{D}}(\cN) \\
            \downarrow \\
            t_{\text{d-i}}^{\mbb{D}}(\cN,\delta)
            \end{matrix}$}
            ++ (-1.45,1) node[node font = \huge, rotate=-20] (leq-1) {$\leq$}
            ++ (0,-2) node[node font = \huge, rotate=20] (leq-2) {$\leq$}
            ++ (-3,0) node[node font = \huge, rotate=-20] (leq-3) {$\leq$}
            ++ (0,2) node[node font = \huge, rotate=20] (leq-4) {$\leq$}
            ;
    \end{tikzpicture}
    \end{center}
    \caption{Depiction of relations between different tasks related to the convergence of quantum dynamical systems and the contraction coefficient, as established in this section. The $\leq$ sign denotes the inequality between the operational quantities as well as the inequality between the corresponding contraction coefficients. The symbol $\downarrow$ denotes that the given contraction coefficient controls the finite-time convergence of the corresponding task. Here, $\eta_{\mathbb{D}}(\cN, \omega), \eta_{\mathbb{D}}(\cN), \eta_{\mathbb{D}}^c(\cN, \omega)$, and $\eta_{\mathbb{D}}^c(\cN)$ are defined in~\eqref{eq:input_d_CC},~\eqref{eq:def-gen-contra-coef},~\eqref{eq:input_d_Com_CC}, and~\eqref{eq:Comp_CC}, respectively. Also, $t_{\text{mix}}^{\mbb{D}}(\cN,\delta)$, $t_{\text{ind}}^{\mbb{D}}(\cN,\delta)$, $t_{\text{dec}}^{\mbb{D}}(\cN,\delta)$, and $t_{\text{d-i}}^{\mbb{D}}(\cN,\delta)$ are defined in~\cref{def:mixing-time}, \cref{def:distinguishability-time}, \eqref{def:decoupling-time}, and \eqref{def:d_i_time}, respectively. }
    \label{fig:contraction-relations}
\end{figure}

\section{Conclusion}

\label{sec:conclusion}

\subsection{Summary}

In this paper, we comprehensively studied quantum Doeblin coefficients, two of which had already been defined and considered \cite{wolf2012quantum,hirche2024quantum} and others which we defined here. We showed that these coefficients satisfy several desirable properties including efficient computability using SDPs, concatenation, and, for one of them, multiplicativity. We also developed various interpretations of the two main quantum Doeblin coefficients (i.e., $\alpha(\cN)$ and $\alpha_I(\cN)$), including as minimal singlet fractions, exclusion values, reverse max-mutual and oveloH informations, reverse robustnesses, and  hypothesis testing reverse mutual and oveloH informations. Furthermore, we studied the complete contraction coefficient of quantum divergences and showed how Doeblin coefficients provide efficient computable bounds on them. We also analytically characterized the Doeblin coefficients of various practically relevant channels and extensively studied them for qubit channels, while utilizing the structure of qubit channels in terms of the Stokes parameterization. 
Finally, we showcased how the technical tools developed in this work find use in applications ranging from identifying the limitations on the trainability of quantum learning algorithms using parameterized quantum circuits when the circuits are noisy, the overhead of error mitigation protocols to learn expected values of observables with noisy data samples, the impact of noise in the hypothesis testing of quantum states, the fairness offered by noisy learning algorithms, and mixing, indistinguishability, and decoupling times of time-varying channels. In all of these applications, we provided efficiently computable bounds while expanding what was previously known, in terms of the circuit architectures, noise models, and so on.

\subsection{Open Questions}

There are various open questions that arise from our work. Here we list them without any particular ordering. 

\begin{enumerate}
    \item In~\cref{prop:CC_HS}, we showed that $\eta^c_\gamma(\cN) \leq 1- \alpha_+(\cN)$. It is an interesting open question to determine whether $\eta^c_\gamma(\cN) \leq 1- \alpha(\cN)$. Due to the Hermitian operator $X$ present in the optimization of~$\alpha(\cN)$, our proof technique in establishing the former cannot be used as is in the aforementioned problem.

    \item In this work, we showed that some Doeblin coefficients satisfy multiplicativity while others do not. It would be interesting to explore other possible Doeblin coefficients that might satisfy these desirable properties.

    \item We analysed Doeblin coefficients of various classes of channels and special channels that are of practical relevance. Studying how the symmetry of a given channel helps in obtaining analytical expressions for their Doeblin coefficients and their tensor product is of relevance (i.e., $\cN$ and $\cN^{\otimes n}$ for $n \in \mathbb{N}$).

    \item All applications of the quantum Doeblin coefficients in this work are related to the fact that they provide upper bounds on  trace-distance contraction coefficients. It would be interesting to find more direct applications of the quantum Doeblin coefficients, in the sense that they do not rely upon this kind of relation to trace-distance contraction coefficients. A recent such example was put forward in \cite{ji2025retrocausal}, in which the Doeblin information in \cref{prop:reverse-max-mutual-info} plays a role in characterizing the retrocausal capacity of a quantum channel.
\end{enumerate}

\section*{Acknowledgements}

We acknowledge helpful discussions with Nana Liu on noise-induced barren plateaus and the parameter shift rule. We also thank Hemant Mishra for helpful discussions.
Part of this work was completed during the conference ``Beyond IID in Information Theory,'' held at the University of Illinois Urbana-Champaign from July 29 to August 2, 2024, and supported by NSF Grant No.~2409823. Additional parts were completed during a visit of CH to the School of Electrical and Computer Engineering at Cornell University, whose admin staff he thanks for hospitality, and at the 2025 Quantum Resource Workshop in Jeju, Korea. We thank the organizers of both events, Beyond IID and Quantum Resources, for organizing these events and including time for research discussions during them. 

TN and MMW acknowledge support from NSF Grant No.~2329662. IG thanks Eric Chitambar for providing a copy of \cite{Chitambar-2021a}. IG is supported by the Ministry of Education, Singapore, through grant T2EP20124-0005. CH received funding by the Deutsche Forschungsgemeinschaft (DFG, German
Research Foundation) – 550206990. TN acknowledges support from the Dieter Schwarz Exchange Programme on Quantum Communication and Security at the Centre for Quantum Technologies, National University of Singapore, Singapore, during her research visit, when the second version of the arXiv post was finalized.

\bibliographystyle{quantum}
\bibliography{lib}

\appendix 

\section{Proof of \cref{thm:game-interpretation}}

\label{app:proof-game-interpretation-exclusion-value}

The proof is largely identical to that of~\cite[Theorem 10]{george2024cone}, which itself is a rather direct generalization of the proof of~\cite[Theorem 6]{Chitambar-2021a}. For completeness, we present it in full. Let $\cK(B\otimes B')$ be an arbitrary convex cone such that $\I \in \operatorname{relint}(\cK)$ and it is closed under local completely-positive (CP) maps on the $B'$ (i.e., $\id_{B} \otimes \psi_{B' \to A'}(X) \in \cK(A' \otimes B')$ if $X \in \cK(B \otimes B')$ and $\Phi,\psi$ are completely-positive. We note the $B'$ space will ultimately correspond to the $A$ space of the cone.

Let $\ket{\varphi}_{A'B'}$ be an arbitrary pure state.  Without loss of generality we can take $\ket{\varphi}$ to be maximally entangled as Nielsen's Theorem guarantees the existence of an LOCC transformation such that $\ket{\Phi_{d}}_{A'\wt{A'}}\to\ket{\varphi}_{A'B'}$, where we recall $\ket{\Phi_{d}}_{A'\wt{A'}}$ is the maximally entangled state where $A' \cong \wt{A}'$ and here we take $d \coloneq \vert A' \vert$. Specifically, Nielsen's theorem~\cite{Nielsen-1999} uses a measurement on Bob's side with Kraus operators $\{M^k\}_k$ and correcting unitaries $\{U^k\}_k$ on Alice's side such that $U_{A'}^k\otimes M_{\wt{A'}\to B'}^k\ket{\Phi_{d}}_{A'\wt{A}'}=\sqrt{p(k)}\ket{\varphi^{k}}_{A'B'}$. Consider the channel state exclusion game winning probability for cone $\cK(B \otimes B')$. That is,
\begin{equation}
\label{eq:EA-induced-doeblin-step-1}
\begin{aligned}
\sum_{x} P(x \vert x) 
\coloneq \sum_{x} \sum_{k} \Tr\Big[\Lambda^x_{BB'}\left(\cN_{A\to B}\circ\cE^x_{A'\to A}\otimes\id^{B'}\right) \left(p(k)\dyad{\varphi^{k}}\right)\Big] \ ,
\end{aligned}
\end{equation}
where by construction
$$ p(k)\dyad{\varphi^{k}} = [(U^k\otimes M^k)\Phi_{d}(U^k\otimes M^k)^\dagger]  \ , $$
and by assumption $\Lambda^{y}_{BB'} \in \cK(B \otimes B')$ for all $y$. Note $\{(I\otimes M^k)^\dagger \Lambda^x(I\otimes M^k)\}_{k,y}$ forms a POVM on $B\wt{A'}$. This follows from the fact that $\{M^k\}_k$ are Kraus operators for a CPTP map, and so the adjoint map, $\sum_k {M^{k}}^\dagger(\cdot)M^k$ is a unital map. Similarly, note that $\cU^k(\cdot)\coloneqq U^k(\cdot) {U^{k}}^{\dagger}$ denotes a unitary channel. Therefore, the collection $\{\cE^x \circ \cU^k\}_{x,k}$ is a family of encoders.  Therefore, we can re-express~\eqref{eq:EA-induced-doeblin-step-1} as
\begin{equation}
\label{eq:EA-induced-doeblin-step-2}
\begin{aligned}
\sum_{z} P(z|z)
= \sum_{z} \Tr\!\left[\hat{\Lambda}^z_{B\wt{A}'}\!\left(\cN_{A\to B}\circ\hat{\cE}^z_{A'\to A}\otimes\id_{\wt{A'}}[{\Phi_{d}}_{A'\wt{A}'}]\right)\right],
\end{aligned}
\end{equation}
where the $\hat{\cE}^z$ and $\hat{\Lambda}^z$ are the concatenated encoders and decoder, i.e.~we define $\cZ = \cX \times \cK$. Note that as $\cK(B \otimes \wt{A}')$ is closed under local CP maps, $\hat{\Lambda}^{z}_{B\wt{A}'} \in \cK(B \otimes \wt{A}')$ for all $z$. This shows that we can restrict our attention just to a shared maximally entangled state as we built a new strategy achieving the same value. Furthermore, without loss of generality, we can assume that $d_{A'}\geq d_{A}$, because the transformation $\ket{\Phi_{d}}_{A''\wt{A}''}\to\ket{\varphi}_{A'B'}$ is always possible for any $d_{A''}\geq d_{A'}$ because we could have just as well used the same argument with system $A''$ and arrived at ${\Phi_{d}}_{A''\wt{A}''}$ in~\eqref{eq:EA-induced-doeblin-step-2}.

We next take Kraus-operator decompositions $\hat{\cE}^z(\cdot)=\sum_{k}N^{z,k}(\cdot){N^{z,k}}^{\dagger}$ with each $N^{z,k}:\mbb{C}^{d_{A'}}\to\mbb{C}^{d_A}$.  Since $d_{A'}\geq d_A$, we can use the transpose trick/ricochet property $N^{z,k}\otimes I\ket{\Phi_{d}}_{A'\wt{A}'}=I\otimes {N^{z,k}}^{T}\ket{\Phi_{d}}_{A\wt{A}}$ to obtain
\begin{align}
\label{eq:EA-induced-doeblin-step-3}
\sum_{x} P(x \vert x)  = \frac{1}{d_{A'}}\sum_{z} \sum_{k} \Tr\!\left[\wt{P}^{z,k}\left(\cN_{A\to B}\otimes\id_{\wt{A}}\right)(\Phi_{d_{A}})\right]
=  \Tr[\Omega^{AB} \Gamma^\cN_{AB}],
\end{align}
where $\wt{P}^{z,k} \coloneqq  (I_{B}\otimes {\ol{N}^{z,k}})\hat{\Lambda}^z_{B\wt{A'}}(I_{B}\otimes {N^{z,k}}^{T})$ and we have swapped the ordering of the systems to match earlier notation, and
\begin{align}
\Omega_{AB}&=\frac{1}{d_{A'}}\sum_{z}\sum_{k}(\ol{N}^{z,k}\otimes I_B )\hat{\Lambda}^z_{A'B}({N^{z,k}}^{T}\otimes I_B)\\
&\coloneq \frac{1}{d_{A'}}\sum_z \widetilde{\cE}_{z \, A' \to A}\otimes\id_B\left(\hat{\Lambda}_z^{A'B}\right),
\end{align}
in which $\widetilde{\cE}_{z}(\cdot)\coloneqq \sum_{k}\ol{N}_{z,k}(\cdot)N_{z,k}^T$. Note, $\widetilde{\cE}_{z \, A' \to A} \otimes \id_{B}(\hat{\Lambda}^{z}_{A'B})\in \cK(A \otimes B)$ as it is again a local CP map $\wt{A}' \to A$ and the cone structure was invariant under CP maps on $\wt{A}'$ by assumption.\footnote{Note we re-ordered the spaces so we before had $\cK(B \otimes \wt{A}')$, which was then $\cK(\wt{A}' \otimes B)$, which is still invariant under the same space, the ordering merely changed.} As $\cK$ is a cone, sums of its elements are contained, so it follows that $\Omega_{AB} \in \cK(A \otimes B)$. Moreover, since each $\widetilde{\cE}_z$ is trace preserving, we have
\begin{align}
\Tr_A\Omega_{AB}&=\frac{1}{d_{A'}}\sum_z\Tr_A\!\left[\widetilde{\cE}_{z \, A'\to A}\otimes\id^B\left(\hat{\Lambda}^z_{A'B}\right)\right]\\
&=\frac{1}{d_{A'}}\sum_z\Tr_{A'}\left(\hat{\Lambda}^z_{A'B}\right)\\
&=\frac{1}{d_{A'}}\Tr_{A'}\left( I_{A'}\otimes I_B\right)= I_B.
\end{align}
Hence $\Tr_A\Omega_{AB}= I_B$ and $\Omega \in \cK$ is a necessary condition on the operator $\Omega_{AB}$ such that $\sum_{x} P(x|x) =\Tr[\Omega^{AB}\Gamma^{\cN}_{AB}]$. That is, any solution induces a feasible solution to~\eqref{eqn:inducedDoeblinDual_cone}, so it is a necessary condition. Now we make sure it is also sufficient.

\sloppy Consider an arbitrary positive semidefinite operator $\Omega_{AB}$ such that $\Tr_A\Omega_{AB}= I_B$ and $\Omega_{AB} \in \cK(A \otimes B)$. Consider the discrete Weyl operators on $A$, explicitly given by $U_{m,n}=\sum_{k=0}^{d_A-1}e^{i mk 2\pi/d_A}\ket{m\oplus n}\bra{m}$, 
where $m,n=0,\dots,d_A-1$ and addition is taken modulo $d_A$. It's well known that twirling the discrete Weyl operators results in the completely depolarizing map, e.g.,
\begin{equation}
\Delta(X)\coloneqq \frac{1}{d_A}\sum_{m,n}U_{m,n}(X)U_{m,n}^\dagger = \Tr[X]\mbb{I} \ .
\end{equation} 
Hence, 
\begin{equation}
    \Delta_A\otimes\id_B[\Omega_{AB}]=\mbb{I}_A\otimes \Tr_A\Omega_{AB}=\mbb{I}_A\otimes\mbb{I}_B.
\end{equation}
This implies that the set $\{ \mu(m,n) \coloneqq  \cU^A_{m,n}\otimes\id^B(\Omega^{AB})\}_{m,n}$ forms a valid POVM on $AB$.  Therefore, we can construct an entanglement-assisted protocol as follows.  Let Alice and Bob share a maximally entangled state $\ket{\tau_{d_A}}_{\wt{A}A}$.  Alice then applies the $\cU_{m,n}^T(\cdot)\coloneqq U_{m,n}^T(\cdot) U_{m,n}^*$ to the $A$ system and sends it through the channel $\cN$. Bob then performs the POVM $\{\mu(m,n)\}_{m,n}$ just described on systems $\wt{A}B$. The obtained score is
\begin{align}
 \sum_{m,n}P(m,n|m,n)
& = \frac{1}{d_A}\sum_{m,n}\Tr\!\left[\left(\cU^{\wt{A}}_{m,n}\otimes\id^B\left[\Omega_{\wt{A}B}\right]\right)(\id_{\wt{A}}\otimes\cN)  \left(\id_{\wt{A}}\otimes\cU^{T}_{m,n}\left[\tau_{d_A}\right]\right)\right]\\
& = \frac{1}{(d_A)^2}\sum_{m,n}\Tr\!\left[\Omega^{\wt{A}B}\left(\id^{\wt{A}}\otimes\cN\left[\tau_{d_A}\right]\right)\right]\\
& = \Tr[\Omega \Gamma^{\cN}].
\end{align}
The key idea in these equalities is that the unitary encoding $U_{m,n}$ performed on Alice's side is canceled by exactly one POVM element on Bob's side.

We therefore have necessary and sufficient conditions for $\cG^{\cK}_{\operatorname{QSE}}(\cN)$ to be given by $\max\left\{ \Tr[\Omega_{AB}\Gamma^{\cN}] :\, \Tr_{A}[\Omega] = I_{B} \,,\, \Omega_{AB} \in \cK \right\}$, which is the conic program given in~\eqref{eqn:inducedDoeblinDual_cone}. Noting that we assumed $I \in \operatorname{relint}(\cK)$ means that strong duality holds, this completes the proof.

\section{Proof of~\cref{thm:alpha_N_equality_Hypo_Test}} 

\label{Proof:thm:alpha_N_equality_Hypo_Test}

The proof given here is similar to the proof of~\cite[Proposition~7]{ji2024barycentric}. 
Consider that the desired equality in~\eqref{eq:reverse-hypothesis-testing-MI} is equivalent to
\begin{equation}
\frac{1}{d^{2}}\alpha(\mathcal{N})=\exp\!\left(  -\inf_{\tau\in
\operatorname{aff}(\mathcal{D})}D_{H}^{1-\frac{1}{d^{2}}}\!\left(\pi_{d}\otimes
\tau\Vert\Phi^{\mathcal{N}} \right)\right)  .
\end{equation}
Now observe that
\begin{align}
& \exp\!\left(  -\inf_{\tau\in\operatorname{aff}(\mathcal{D})}D_{H}^{1-\frac
{1}{d^{2}}}\!\left(\pi_{d}\otimes\tau\Vert\Phi^{\mathcal{N}}\right)\right)  \nonumber\\
& =\sup_{\tau\in\operatorname{aff}(\mathcal{D})}\exp\!\left(  -D_{H}^{1-\frac
{1}{d^{2}}}\!\left(\pi_{d}\otimes\tau\Vert\Phi^{\mathcal{N}}\right)\right)  \\
& =\sup_{\tau\in\operatorname{aff}(\mathcal{D})}\exp\!\left(  -\left[  -\ln
\inf_{\Lambda\geq0}\left\{  \operatorname{Tr}[\Lambda\Phi^{\mathcal{N}
}]:\operatorname{Tr}[\Lambda(\pi_{d}\otimes\tau)]\geq\frac{1}{d^{2}
},\ \Lambda\leq I\right\}  \right]  \right)  \\
& =\sup_{\tau\in\operatorname{aff}(\mathcal{D})}\inf_{\Lambda\geq0}\left\{
\operatorname{Tr}[\Lambda\Phi^{\mathcal{N}}]:\operatorname{Tr}[\Lambda(\pi
_{d}\otimes\tau)]\geq\frac{1}{d^{2}},\ \Lambda\leq I\right\}  .
\end{align}

We first prove the inequality
\begin{equation}
\exp\!\left(  -\inf_{\tau\in\operatorname{aff}(\mathcal{D})}D_{H}^{1-\frac
{1}{d^{2}}}\!\left(\pi_{d}\otimes\tau\Vert\Phi^{\mathcal{N}} \right)\right)  \leq\frac
{1}{d^{2}}\alpha(\mathcal{N}).
\end{equation}
Recall from \cref{prop:dual-SDP-doeblin-main} that the dual of $\alpha(\mathcal{N})$ is given by
\begin{equation}
\alpha(\mathcal{N})=\inf_{Y_{AB}\geq0}\left\{  \operatorname{Tr}[Y_{AB}
\Gamma_{AB}^{\mathcal{N}}]:\operatorname{Tr}_{A}[Y_{AB}]=I_{B}\right\}  ,
\end{equation}
which implies that
\begin{equation}
\frac{1}{d^{2}}\alpha(\mathcal{N})=\inf_{Y_{AB}\geq0}\left\{
\operatorname{Tr}\!\left[  \frac{Y_{AB}}{d}\Phi_{AB}^{\mathcal{N}}\right]
:\operatorname{Tr}_{A}[Y_{AB}]=I_{B}\right\}  .
\end{equation}
Let $Y_{AB}$ be such that $\operatorname{Tr}_{A}[Y_{AB}]=I_{B}$, and let
$\tau\in\operatorname{aff}(\mathcal{D})$; then
\begin{align}
\operatorname{Tr}\!\left[  \frac{Y_{AB}}{d}(\pi_{d}\otimes\tau)\right]    &
=\frac{1}{d^{2}}\operatorname{Tr}[\operatorname{Tr}_{A}[Y_{AB}]\tau]\\
& =\frac{1}{d^{2}}\operatorname{Tr}[I_{B}\tau]\\
& =\frac{1}{d^{2}}.
\end{align}
Also, consider that
\begin{equation}
\frac{Y_{AB}}{d}\leq I_{AB}
\end{equation}
because
\begin{align}
\frac{1}{d^{2}}\sum_{i=1}^{d^{2}}U_{A}^{i}\frac{Y_{AB}}{d}\left(  U_{A}
^{i}\right)  ^{\dag}  & =\pi_{A}\otimes\operatorname{Tr}_{A}\!\left[
\frac{Y_{AB}}{d}\right]  \\
& =\frac{1}{d}\pi_{A}\otimes I_{B},
\end{align}
where $\left\{  U_{A}^{i}\right\}  _{i=1}^{d^{2}}$ is the set of
Heisenberg--Weyl operators. This implies that
\begin{align}
\frac{Y_{AB}}{d}  & \leq\sum_{i=1}^{d^{2}}U_{A}^{i}\frac{Y_{AB}}{d}\left(
U_{A}^{i}\right)  ^{\dag}\\
& =d^{2}\frac{\pi_{A}}{d}\otimes I_{B}\\
& =I_{AB}.
\end{align}
Thus, $\frac{Y_{AB}}{d}$ is a legitimate measurement operator, and it
satisfies the constraint
\begin{equation}
\operatorname{Tr}\!\left[  \frac{Y_{AB}}{d}(\pi_{d}\otimes\tau)\right]
\geq\frac{1}{d^{2}}.
\end{equation}
Then we find that
\begin{equation}
\inf_{\Lambda\geq0}\left\{  \operatorname{Tr}[\Lambda\Phi^{\mathcal{N}
}]:\operatorname{Tr}[\Lambda(\pi_{d}\otimes\tau)]\geq\frac{1}{d^{2}}
,\ \Lambda\leq I\right\}  \leq\operatorname{Tr}\!\left[  \frac{Y_{AB}}{d}
\Phi_{AB}^{\mathcal{N}}\right]  .
\end{equation}
Taking an infimum over all $Y_{AB}\geq0$ satisfying the constraint
$\operatorname{Tr}_{A}[Y_{AB}]=I_{B}$ then implies that
\begin{equation}
\inf_{\Lambda\geq0}\left\{  \operatorname{Tr}[\Lambda\Phi^{\mathcal{N}
}]:\operatorname{Tr}[\Lambda(\pi_{d}\otimes\tau)]\geq\frac{1}{d^{2}}
,\ \Lambda\leq I\right\}  \leq\frac{1}{d^{2}}\alpha(\mathcal{N}).
\end{equation}
Since the inequality holds for all $\tau\in\operatorname{aff}(\mathcal{D})$,
we can take the supremum over all such $\tau$ to conclude that
\begin{equation}
\exp\!\left(  -\inf_{\tau\in\operatorname{aff}(\mathcal{D})}D_{H}^{1-\frac
{1}{d^{2}}}(\pi_{d}\otimes\tau\Vert\Phi^{\mathcal{N}})\right)  \leq\frac
{1}{d^{2}}\alpha(\mathcal{N}).
\end{equation}

To show the opposite inequality, consider that
\begin{align}
\frac{1}{d^{2}}\alpha(\mathcal{N})  & =\frac{1}{d^{2}}\sup_{X\in
\operatorname{Herm}}\left\{  \operatorname{Tr}[X]:I\otimes X\leq
\Gamma^{\mathcal{N}}\right\}  \\
& =\frac{1}{d^{2}}\sup_{X\in\operatorname{Herm}}\left\{  \operatorname{Tr}
[X]:\pi_{d}\otimes X\leq\Phi^{\mathcal{N}}\right\}  \\
& =\frac{1}{d^{2}}\sup_{\lambda\geq0,\tau\in\operatorname{aff}(\mathcal{D}
)}\left\{  \lambda:\lambda\left(  \pi_{d}\otimes\tau\right)  \leq
\Phi^{\mathcal{N}}\right\}  \\
& =\sup_{\lambda\geq0,\tau\in\operatorname{aff}(\mathcal{D})}\left\{
\frac{\lambda}{d^{2}}:\lambda\left(  \pi_{d}\otimes\tau\right)  \leq
\Phi^{\mathcal{N}}\right\}  \\
& =\sup_{\lambda\geq0,\tau\in\operatorname{aff}(\mathcal{D})}\inf_{\Lambda
\geq0}\left\{  \frac{\lambda}{d^{2}}+\operatorname{Tr}[\Lambda\left(
\Phi^{\mathcal{N}}-\lambda\left(  \pi_{d}\otimes\tau\right)  \right)
]\right\}  \\
& =\sup_{\lambda\geq0,\tau\in\operatorname{aff}(\mathcal{D})}\inf_{\Lambda
\geq0}\left\{  \operatorname{Tr}[\Lambda\Phi^{\mathcal{N}}]+\lambda\left(
\frac{1}{d^{2}}-\operatorname{Tr}[\Lambda\left(  \pi_{d}\otimes\tau\right)
]\right)  \right\}  \\
& \leq\sup_{\tau\in\operatorname{aff}(\mathcal{D})}\inf_{\Lambda\geq0}
\sup_{\lambda\geq0}\left\{  \operatorname{Tr}[\Lambda\Phi^{\mathcal{N}
}]+\lambda\left(  \frac{1}{d^{2}}-\operatorname{Tr}[\Lambda\left(  \pi
_{d}\otimes\tau\right)  ]\right)  \right\}  \\
& =\sup_{\tau\in\operatorname{aff}(\mathcal{D})}\inf_{\Lambda\geq0}\left\{
\operatorname{Tr}[\Lambda\Phi^{\mathcal{N}}]:\frac{1}{d^{2}}\leq
\operatorname{Tr}[\Lambda\left(  \pi_{d}\otimes\tau\right)  ]\right\}  \\
& \leq\sup_{\tau\in\operatorname{aff}(\mathcal{D})}\inf_{\Lambda\geq0}\left\{
\operatorname{Tr}[\Lambda\Phi^{\mathcal{N}}]:\frac{1}{d^{2}}\leq
\operatorname{Tr}[\Lambda\left(  \pi_{d}\otimes\tau\right)  ],\ \ \Lambda\leq
I\right\}  \\
& =\sup_{\tau\in\operatorname{aff}(\mathcal{D})}\exp\!\left(  -D_{H}^{1-\frac
{1}{d^{2}}}\!\left(\pi_{d}\otimes\tau\Vert\Phi^{\mathcal{N}}\right)\right)  \\
& =\exp\!\left(  -\inf_{\tau\in\operatorname{aff}(\mathcal{D})}D_{H}^{1-\frac
{1}{d^{2}}}\!\left(\pi_{d}\otimes\tau\Vert\Phi^{\mathcal{N}}\right)\right)  .
\end{align}
This concludes the proof.

\section{Proof of Proposition \ref{prop:concat-of-alpha-wang}}

\label{app:proof-of-alpha-wang-concat}

As stated in the main text, the proof effectively follows the same steps as used to establish \cref{prop:concatenation}.

\begin{proof}[Proof of \cref{lem:alpha-wang-CP-ordering-form}]
    Starting from the definition,
\begin{align}
    \alpha_{\wang}(\cN) &=  \max_{X_B \in \operatorname{Herm}} \left\{ \operatorname{Tr}[X_B] : - \Gamma^{\mathcal{N}}_{AB} \leq I_A \otimes X_B \leq \Gamma^{\mathcal{N}}_{AB}\right\} \\ 
    &= \max_{\substack{\lambda \in \mbb{R}, \\ \tau \in \aff(\cD)}} \left\{ \lambda : -\cN \leq \lambda\cR_{\tau} \leq \cN \right\}
\end{align}
where the second equality follows from the same argument that established \cref{eq:key-robustness-step}. Using that the complete positive ordering on channels means $\cM \leq \cN \iff \cM = \cN + \cG$ for some completely positive map $\cG$, we re-express the previous optimization as
\begin{align}
    \alpha_{\wang}(\cN) =\max_{\substack{\lambda \in \mbb{R}, \\ \tau \in \aff(\cD), \\ \cM',\cG' \in \operatorname{CP}}} \left\{ \lambda :\lambda\cR_{\tau} + \cM' = \cN \, , \, \cN = \cG' - \lambda \cR_{\tau} \right\} \ . 
\end{align}
Noting that the first constraint is the same as in \cref{prop:reverse-robustness}, by the same argument as around \cref{eq:reverse-robustness-normalization-argument}, we conclude that $\lambda \in [0,1]$ and $\lambda \cR_{\tau} + \cM' = \cN$ can be re-expressed as $\lambda \cR_{\tau}+\left(1-\lambda\right)\cM$ for $\cM$ being a quantum channel:
\begin{align}
    \alpha_{\wang}(\cN) =\max_{\substack{\lambda \in [0,1], \\ \tau \in \aff(\cD), \\ \cM \in \operatorname{CPTP}, \\ \cG' \in \operatorname{CP}}} \left\{ \lambda :\lambda\cR_{\tau} + (1-\lambda)\cM = \cN \, , \, \cN = \cG' - \lambda \cR_{\tau} \right\} \ . 
\end{align}
Using the second constraint, for every quantum state $\rho$,
\begin{align}
    1 = \Tr[\cN(\rho)] = \Tr[\cG'(\rho)] - \lambda \Rightarrow \Tr[\cG'(\rho)] = (1 + \lambda) \in [1,2] \ . 
\end{align}
Thus, $\frac{1}{1+\lambda}\cG'$ is always a CPTP map and we conclude
\begin{align}
    \alpha_{\wang}(\cN) =\max_{\substack{\lambda \in \left[0,1\right], \\ \tau \in \aff(\cD), \\ \cM,\cG \in \operatorname{CPTP}}} \left\{ \lambda :\lambda\cR_{\tau} + (1-\lambda)\cM = \cN \, , \, \cN = (1+\lambda)\cG - \lambda \cR_{\tau} \right\} \ . 
\end{align}
This completes the proof of the lemma.
\end{proof}

\begin{proof}[Proof of \cref{prop:concat-of-alpha-wang}]
First, by re-parameterizing Lemma \ref{lem:alpha-wang-CP-ordering-form}, we have
\begin{align}\label{eq:one-minus-alpha-wang-opt}
    1-\alpha_{\wang}(\cN) =\min_{\substack{\lambda \in [0,1], \\ \tau \in \aff(\cD), \\ \cM,\cG \in \operatorname{CPTP}}} \left\{ \lambda :(1-\lambda)\cR_{\tau} + \lambda\cM = \cN \, , \, \cN = (2-\lambda)\cG - (1-\lambda) \cR_{\tau} \right\} \ ,
\end{align}
which is what we use in this proof. Now consider quantum channels $\cN_{1},\cN_{2}$. Using \eqref{eq:one-minus-alpha-wang-opt}, let $(\lambda_{1},\tau_{1},\cM_{1},\cG_{1})$ and $(\lambda_{2},\tau_{2},\cM_{2},\cG_{2})$ be optimizers of $1-\alpha(\cN_{1})$ and $1-\alpha(\cN_{2})$ respectively. We will use these to construct a feasible point for $\left(1-\alpha_{\wang}(\cN_{2}\circ \cN_{1}) \right)$. Define
\begin{align}
    \lambda' \coloneq \lambda_{2}\lambda_{1} \quad \tau^{\prime}\coloneqq \frac{1-\lambda_{2}}{1-\lambda_{1}\lambda_{2}}\tau
_{2}+\frac{\lambda_{2}\left(  1-\lambda_{1}\right)  }{1-\lambda_{1}\lambda
_{2}}\mathcal{M}_{2}(\tau_{1}) \quad \cM' \coloneq \cM_{2} \circ \cM_{1} \ .
\end{align}
Using that the first constraint in \eqref{eq:one-minus-alpha-wang-opt} is the same as the sole constraint for $1-\alpha(\cN)$ given in \cref{eq:reverse-robustness-opposite}, by the same argument as in the proof of \cref{prop:concatenation}, we conclude that these choices are feasible for the first constraint of $1-\alpha_{\wang}(\cN_{2} \circ \cN_{1})$. Moreover, define
\begin{align}\label{eq:cGprime-def}
    \cG' \coloneq \frac{1}{2-\lambda'}[\cN_{2} \circ \cN_{1} + (1-\lambda')\cR_{\tau'}] \ , 
\end{align}
which is completely positive as it is the positive sum of completely positive maps and is trace preserving as $\cG'(\rho) = \frac{1}{2-\lambda'}[1 + (1-\lambda)] = 1$ for every quantum state $\rho$. Re-ordering \cref{eq:cGprime-def}, one finds $\cN_{2}\circ \cN_{1} = (2-\lambda')\cG' - (1-\lambda)\cR_{\tau'}$; i.e., the choice of $\cG'$ in \eqref{eq:cGprime-def} satisfies the second constraint in \cref{eq:one-minus-alpha-wang-opt}. Thus, $(\lambda',\tau',\cM',\cG')$ is feasible for $1-\alpha_{\wang}(\cN_{2} \circ \cN_{1})$. As it is a minimization, we have $1-\alpha_{\wang}(\cN_{2}\circ \cN_{1}) \leq \lambda' = \lambda_{2}\lambda_{1} = (1-\alpha_{\wang}(\cN_{2}))(1-\alpha_{\wang}(\cN_{1}))$. This completes the proof of the proposition.
\end{proof}

\section{Qubit Channel Parameterizations}

\label{sec:qubit-channel-parameterizations}

Here we recall or establish all the properties of qubit channel parameterizations that we will need.
\paragraph{Stokes Parameterization} First we recall the Stokes parameterization of a qubit linear map. As $\{I,\sigma_X , \sigma_Y, \sigma_Z\}$ form a basis of $\cL(\mbb{C}^{2})$, any $M \in \cL(\mbb{C}^{2})$ may be written as $M = w_{0}I + \vec{w} \cdot \vec{\sigma}$ where $\vec{w}$ is the Bloch vector. We may identify any qubit-to-qubit linear map, $\Phi \in \cT(\mbb{C}^{2},\mbb{C}^{2})$, with a transformation map $\mbb{T}_{\Phi} \in \mbb{R}^{4 \times 4}$. Then $\Phi(M)$ can be determined from the matrix multiplication $\begin{bmatrix} w_{0}' \\ \vec{w}' \end{bmatrix} = \mbb{T}_{\Phi}\begin{bmatrix} w_{0} \\ \vec{w} \end{bmatrix}$. The most important facts we will need about the transformation map are the following, which are all established or reported in~\cite{King-2000a}:
\begin{enumerate}
    \item If $\Phi$ is trace-preserving, then $\mbb{T}_{\Phi} = \begin{bmatrix} 1 & \vec{0}^{T} \\ \vec{t} & T \end{bmatrix}$ where $T \in \mbb{R}^{3 \times 3}$ is the `transfer matrix' and $\vec{t}$ is a three-dimensional column vector.
    \item If $\Phi$ is trace-preserving and unital, then $\vec{t} = 0$.
    \item If $\Phi$ is a trace-preserving, so it has the form given in the first item, then its complete positivity conditions are 
    \begin{align}\label{eq:positivity-conditions}
        \vert T_{11} \pm T_{22} \vert \leq \vert 1 + T_{33} \vert \ .
    \end{align} 
    \item If $\Phi$ is a trace-preserving and unital, then up to local pre- and post- processing $\mbb{T} = \operatorname{diag}(1,\lambda_{1},\lambda_{2},\lambda_{3})$ and it is CP if and only if $\vert \lambda_{1} \pm \lambda_{2} \vert \leq \vert 1 + \lambda_{3} \vert$. In particular, this means $(\lambda_{1},\lambda_{2},\lambda_{3})$ need to be in the tetrahedron defined by the extreme points 
    \begin{align}\label{eq:extreme-points}
        \{(1,1,1),(1,-1,-1),(-1,1,-1),(-1,-1,1)\} \ . 
    \end{align}
\end{enumerate}

\begin{example}\label{example:Bloch-matrix} ~ 
    \begin{enumerate}
        \item The depolarizing channel has $\lambda_{1} = \lambda_{2} = \lambda_{3} = 1 - q$ where $q \in [0,4/3]$.
        \item The extreme points are 
        \begin{equation}
        \begin{aligned}
            (1,1,1) \mapsto \operatorname{id}(\cdot) &\quad (1,-1,-1) \mapsto \sigma_{X} \cdot \sigma_{X}^{\dag} \\
            (-1,1,-1) \mapsto \sigma_{Y} \cdot \sigma_{Y}^{\dag} &\quad (-1,-1,1) \mapsto \sigma_{Z} \cdot \sigma_{Z}^{\dag} \ ,
        \end{aligned}
        \end{equation}
        which may be verified using $\sigma_{X}\sigma_{Y}\sigma_{X} = -\sigma_{Y}$ and similar identities.
    \end{enumerate}
\end{example}

\paragraph{Choi Operator Pauli Parameterization}

The other representation that will be useful will be that of the Choi operator of a linear map \cref{eq:choi_operator}. We recall that to be trace-preserving, we should have that $\Tr_{B}[\Gamma^{\Phi}] = I_{A}$, and similarly, to be unital, we should have that $\Tr_{A}[\Gamma^{\Phi}] = I_{B}$. From these criteria, it follows that for qubit-to-qubit linear maps,
\begin{align}
    \Phi \text{ is TP}  \quad \Leftrightarrow \quad  & \Gamma^{\Phi} = \frac{1}{2}[I_{A} \otimes I_{B} + I_{A} \otimes \vec{r} \cdot \vec{\sigma} + \sum_{i,j} a_{i,j} \sigma_{i} \otimes \sigma_{j}] \label{eq:Choi-TP-parameterize} \\
    \Phi \text{ is Unital} \quad \Leftrightarrow \quad & \Gamma^{\Phi} = \frac{1}{2}[I_{A} \otimes I_{B} + \vec{s} \cdot \vec{\sigma} \otimes I_{B} + \sum_{i,j} b_{i,j} \sigma_{i} \otimes \sigma_{j}] \ , \label{eq:Choi-Unital-parameterize}
\end{align}
where $\vec{s},\vec{r} \in \mbb{R}^{3}$ and $a_{i,j}, b_{i,j} \in \mbb{R}$ for all $i,j \in \{1,2,3\}$ and we define the matrix forms $A = [a_{i,j}]$, $B = [b_{i,j}]$ are again the `transfer matrices.' To see that the above hold, note that as $\{I,\sigma_X,\sigma_Y,\sigma_Z\}^{\times 2}$ form a basis of $\cL(\mbb{C}^{4})$ and the only terms we have dropped in each case would have to be zero to satisfy the partial trace conditions. 

The above shows a TP map is defined by $(\vec{r},A)$ and a unital map is defined by $(\vec{s},B)$. Note that~\eqref{eq:Choi-TP-parameterize} and~\eqref{eq:Choi-Unital-parameterize} define a bijection on these parameterizations. Namely we may define the bijection $(\vec{r},A) \leftrightarrow (\vec{s} = \vec{r}, B = A^{T})$. Note this corresponds to applying the `swap map' to convert a TP map $\cN$ to a unital one $\cM$, i.e., $\mbb{F}\Gamma^{\cN}\mbb{F}^{\dag} = \Gamma^{\cM}$ where $\mbb{F}:A \otimes B \to B \otimes A$ is the swap operator that may be defined via $\mbb{F}(\ket{a_{1}}\bra{a_{2}} \otimes \ket{b_{1}}\bra{b_{2}}) =  \ket{b_{1}}\bra{b_{2}} \otimes \ket{a_{1}}\bra{a_{2}}$ for all $\ket{a_{1}},\ket{a_{2}} \in A$, $\ket{b_{1}},\ket{b_{2}} \in B$. Note that this map is positivity-preserving, i.e., $\Gamma^{\cN} \geq 0 \Leftrightarrow \mbb{F}\Gamma^{\cN}\mbb{F}^{\dag} \geq 0$. This means we may immediately lift the positivity conditions of~\eqref{eq:positivity-conditions} to unital maps as the diagonal entries are left invariant under the transpose.
\begin{proposition}\label{prop:unital-positivity-constraints}
    Let $\Phi$ be a qubit-to-qubit unital map with the parameterization in~\eqref{eq:Choi-Unital-parameterize}. It is unital if and only if $\vert B_{11} \pm B_{22} \vert \leq \vert 1 + B_{33} \vert$.
\end{proposition}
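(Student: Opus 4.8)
The statement to prove (\cref{prop:unital-positivity-constraints}) is that a qubit-to-qubit map $\Phi$ with Choi operator $\Gamma^{\Phi} = \frac{1}{2}[I_A \otimes I_B + \vec{s} \cdot \vec{\sigma} \otimes I_B + \sum_{i,j} b_{i,j}\sigma_i \otimes \sigma_j]$ (the form forced by unitality) is \emph{completely positive} if and only if $\vert B_{11} \pm B_{22}\vert \leq \vert 1 + B_{33}\vert$. The plan is to reduce this to the already-known positivity conditions for trace-preserving maps in \eqref{eq:positivity-conditions} by exploiting the swap symmetry discussed in the paragraph preceding the proposition.

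First I would recall that complete positivity of $\Phi$ is equivalent to $\Gamma^{\Phi} \geq 0$ (Choi's theorem), which is the property I will track throughout. Next I would invoke the bijection between parameterizations described just above the proposition: given the unital map $\Phi$ with data $(\vec{s}, B)$, define the \emph{trace-preserving} map $\Psi$ with data $(\vec{r} = \vec{s}, A = B^{T})$, so that the Choi operators are related by the swap operator, $\Gamma^{\Psi}_{AB} = \mbb{F}\,\Gamma^{\Phi}_{AB}\,\mbb{F}^{\dagger}$ where $\mbb{F}: A \otimes B \to B \otimes A$ is the swap. Since $\mbb{F}$ is a unitary (up to relabeling the tensor factors), conjugation by it preserves positive semidefiniteness, hence $\Gamma^{\Phi} \geq 0 \iff \Gamma^{\Psi} \geq 0$; that is, $\Phi$ is CP iff $\Psi$ is CP. By the known trace-preserving conditions in \eqref{eq:positivity-conditions}, $\Psi$ is CP iff $\vert A_{11} \pm A_{22}\vert \leq \vert 1 + A_{33}\vert$.

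The final step is the observation that the diagonal entries of the transfer matrix are invariant under the transpose: $A = B^{T}$ gives $A_{ii} = B_{ii}$ for all $i$, so the condition $\vert A_{11} \pm A_{22}\vert \leq \vert 1 + A_{33}\vert$ is literally the condition $\vert B_{11} \pm B_{22}\vert \leq \vert 1 + B_{33}\vert$. Chaining these equivalences completes the proof. I should also state explicitly at the outset that ``unital'' in the proposition statement is a typo for ``CP'' (or ``completely positive''), since the form of $\Gamma^{\Phi}$ assumed already encodes unitality, and the inequality being characterized is exactly the complete-positivity constraint — this matches the usage in \cref{lem:qubit-stokes-nec-and-suff}, where these same inequalities are invoked as ``the positivity constraints.''

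The only mild obstacle is bookkeeping: one must be careful that the swap map $\mbb{F}$ really does implement the parameterization bijection $(\vec{s},B) \leftrightarrow (\vec{r},A) = (\vec{s}, B^T)$, i.e., that swapping the $A$ and $B$ tensor factors in $\sum_{i,j} b_{i,j}\sigma_i \otimes \sigma_j$ produces $\sum_{i,j} b_{i,j}\sigma_i^{(B\text{-slot})} \otimes \sigma_j^{(A\text{-slot})} = \sum_{i,j} b_{j,i}\sigma_i \otimes \sigma_j$ in the new ordering, which is precisely the transpose $A = B^T$; this is stated in the excerpt but I would verify the index transposition direction matches against the trace-preserving conditions. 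No genuinely hard step is involved — the work was already done in establishing the trace-preserving case via \cite{King-2000a}.
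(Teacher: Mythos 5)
Your proposal is correct and follows essentially the same route as the paper: the paragraph preceding the proposition is the paper's entire argument, namely that the swap operator $\mbb{F}$ implements the bijection $(\vec{r},A)\leftrightarrow(\vec{s},B=A^{T})$, preserves positive semidefiniteness of the Choi operator, and leaves the diagonal transfer-matrix entries invariant, so the TP conditions of \eqref{eq:positivity-conditions} lift directly. Your observation that ``unital'' in the statement should read ``completely positive'' is also correct and consistent with how the proposition is invoked in \cref{lem:qubit-stokes-nec-and-suff}.
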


We also state the following fact established in~\cite{Chitambar-2021a}.
\begin{proposition}[\cite{Chitambar-2021a}]\label{prop:simplified-choi-under-processing}
     For every trace-preserving qubit-to-qubit 
     map, there exist qubit unitaries $U$ and $V$ such that 
    \begin{align}\label{eq:post-p}
        \Gamma^{\cV \circ \Phi \circ \cU} = U \otimes V \Gamma^{\Phi} (U \otimes V)^{\dag} = \frac{1}{2}[I_{A} \otimes I_{B} + I_{A} \otimes \vecp{r} \cdot \vec{\sigma} + \sum_{k} t_{k}' \sigma_{k} \otimes \sigma_{k}] \ ,
    \end{align}
    where $\cU,\cV$ denote using $U,V$ as post and pre-processing respectively.
\end{proposition}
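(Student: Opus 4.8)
The plan is to work entirely through the two equivalent parameterizations of a qubit channel recalled in \cref{sec:qubit-channel-parameterizations}, exploiting the fact that conjugation by a qubit unitary acts on Bloch vectors as a proper rotation. First I would recall the two-to-one covering homomorphism $SU(2)\to SO(3)$ together with its surjectivity: for every qubit unitary $W$ there is a rotation $R_{W}\in SO(3)$ with $W(\vec{w}\cdot\vec{\sigma})W^{\dagger} = (R_{W}\vec{w})\cdot\vec{\sigma}$, and conversely every $R\in SO(3)$ is realized by some qubit unitary. Translating this into the Stokes parameterization, where a trace-preserving $\Phi$ is described by a transfer matrix $T\in\mbb{R}^{3\times 3}$ and a translation $\vec{t}$, pre-processing by $\cU$ and post-processing by $\cV$ sends the input Bloch vector $\vec{w}$ through $\vec{w}\mapsto R_{V}(\vec{t}+T R_{U}\vec{w})$. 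Hence the processed map $\cV\circ\Phi\circ\cU$ has transfer matrix $R_{V}\,T\,R_{U}$ and translation $R_{V}\vec{t}$; whether a given processing corresponds to $R_{U}$ or $R_{U}^{T}$ is immaterial, since surjectivity of $SU(2)\to SO(3)$ lets us realize any rotation on either side.

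The core of the argument is a real singular value decomposition. Writing $T = O_{1}\,D\,O_{2}^{T}$ with $O_{1},O_{2}\in O(3)$ and $D\geq 0$ diagonal, I would set $R_{V}=O_{1}^{T}$ and $R_{U}=O_{2}$, so that $R_{V}\,T\,R_{U} = O_{1}^{T}T O_{2} = D$ is diagonal with entries $t_{1}',t_{2}',t_{3}'$. The delicate point is that qubit unitaries implement only proper rotations, so I must guarantee $R_{U},R_{V}\in SO(3)$ rather than merely $O(3)$: whenever $\det O_{i}=-1$, I would flip the sign of one column of $O_{i}$ and absorb the compensating factor $\operatorname{diag}(1,1,-1)$ into $D$, which costs nothing except allowing the resulting diagonal entries $t_{k}'$ to be negative. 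After at most two such sign adjustments, both orthogonal matrices lie in $SO(3)$, and by surjectivity there exist qubit unitaries $U,V$ implementing $R_{U},R_{V}$.

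Finally I would transport the diagonalized Stokes data back to the Choi picture via the Pauli parameterization \eqref{eq:Choi-TP-parameterize}. The two-body correlation block $[a_{i,j}]$ of the Choi operator is in fixed linear bijection with the transfer matrix $T$, and the $I_{A}\otimes\vec{r}\cdot\vec{\sigma}$ block corresponds to the translation $\vec{t}$. Since conjugation acts on the Choi operator exactly as $\Gamma^{\cV\circ\Phi\circ\cU} = (U\otimes V)\,\Gamma^{\Phi}\,(U\otimes V)^{\dagger}$, the relation $R_{V}\,T\,R_{U}=\operatorname{diag}(t_{1}',t_{2}',t_{3}')$ forces all off-diagonal Pauli correlations to vanish, leaving precisely $\sum_{k} t_{k}'\,\sigma_{k}\otimes\sigma_{k}$, while the translation block is carried to some $\vecp{r}$. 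This produces the claimed canonical form.

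The main obstacle will be the $O(3)$-versus-$SO(3)$ bookkeeping: the singular value decomposition naturally produces orthogonal factors of either determinant, whereas only proper rotations are physically realizable by qubit unitaries, so the essential technical care is the sign-absorption step that keeps $R_{U},R_{V}$ in $SO(3)$ at the price of signed diagonal entries $t_{k}'$. A secondary and purely clerical point is pinning down the precise convention relating the Choi correlation matrix to the Stokes transfer matrix $T$, so that diagonalizing one genuinely diagonalizes the other.
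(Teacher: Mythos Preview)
Your proposal is correct and follows essentially the same approach as the paper's proof: both perform a real singular value decomposition of the transfer/correlation matrix, handle the $O(3)$-versus-$SO(3)$ determinant obstruction by absorbing signs into the diagonal entries $t_k'$, and then invoke the surjection $SU(2)\to SO(3)$ to realize the required rotations by qubit unitaries. The only cosmetic difference is that the paper applies the SVD directly to the Pauli correlation block $A=[a_{i,j}]$ of the Choi operator, whereas you first work in the Stokes picture with $T$ and then transport back; since $A$ and $T$ are in fixed bijection (as you note), this is the same argument.
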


We provide a proof for completeness as it does not seem to follow from the Stokes parameterization in a straightforward manner. ~ \\
\begin{proof}
    Let $\Phi$ have the form in~\eqref{eq:Choi-TP-parameterize}. Then there exists a singular value decomposition of the transfer matrix $A = \sum_{k} t_{k}' \vec{a}_{k}\vec{b}_{k}^{T}$ where $\{\vec{a}_{k}\}_{k}, \{\vec{b}_{k}\}_{k}$ are orthonormal basis of $\mbb{R}^{3}$. Then there exist rotations $O_{1},O_{2} \in SO(3)$ such that $O_{1}:\{\vec{a}_{k}\}_{k} \mapsto \{e_{0},e_{1},\pm e_{2}\}_{k}$, $O_{2}:\{\vec{b_{k}}\} \mapsto \{e_{0},e_{1}, \pm e_{2}\}$, where the $\pm$ denotes that the rotation must respect the determinant of the initial set of vectors. Given the isomorphism $SO(3) \cong SU(2) \setminus \mbb{Z}_{2}$, this means there exist unitaries $U,V \in SU(2)$ such that 
    \begin{align}
        U(\vec{a}_{k} \cdot \vec{\sigma})U^{\dag} \otimes V(\vec{b}_{k} \cdot \vec{\sigma})V^{\dag} = \begin{cases}
            \sigma_{k} \otimes \sigma_{k} & k \in \{1,2\} \\
            \pm \sigma_{3} \otimes \sigma_{3} & k = 3
        \end{cases} \ .
    \end{align}
    Thus,
    \begin{align}
        U \otimes V \Gamma^{\Phi} (U \otimes V)^{\dag} = \frac{1}{2}[I_{A} \otimes I_{B} + I_{A} \otimes V(\vecp{r} \cdot \vec{\sigma}) + \sum_{k} t_{k'}' \sigma_{k} \otimes \sigma_{k}] \ ,
    \end{align}
    where $t_{k'}'$ captures the signedness and $\vecp{r}$ captures that the values of $\vec{r}$ have been shifted. The pre- and post-processing relation is a well-known relation on Choi matrices and may be verified via direct calculation.
\end{proof}

\section{Another Alternative Notion of Quantum Doeblin Coefficient}

\label{app:relate-to-extended-sandwiched}
\begin{definition}
Define 
\begin{equation}
\label{eq:c_N_def}
\alpha_{\fang}(\mathcal{N})\coloneqq \exp\!\left(-\inf_{\beta>1}\left\{  \inf_{\tau\in\operatorname{aff}
(\mathcal{D})}\widetilde{D}_{\beta}(\pi_{d}\otimes\tau\Vert\Phi^{\mathcal{N}
})+\frac{2}{\beta-1}\ln d\right\}\right)  ,
\end{equation}
where the extended sandwiched R\'enyi relative entropy is defined for $\tau \in \operatorname{aff}(\mathcal{D})$ and $\sigma \geq 0$ as follows~\cite[Eq.~(13)]{Wang_2020}:
\begin{equation}
    \widetilde{D}_{\beta}(\tau \Vert \sigma) \coloneqq \frac{\beta}{\beta -1} \ln \left \| \sigma^{\frac{1-\beta}{2\beta}} \tau  \sigma^{\frac{1-\beta}{2\beta}} \right \|_\beta
\end{equation}
with $\|A\|_p \coloneqq \left(\Tr\!\left[ \left(\sqrt{A A^\dag} \right)^p \right] \right)^{1/p}$.
\end{definition}

Several properties of the extended sandwiched R\'enyi relative entropy have been investigated in~\cite{Wang_2020} and~\cite[Section~III]{ji2024barycentric}.

\begin{proposition}
\label{prop:tighter_contraction_n_tensor}
Let $\cN$ be a quantum channel.
The following inequality holds for all $n\in \mathbb{N}$:
\begin{equation}
\label{eq:tighter_contraction_n_tensor}
\eta_{\operatorname{Tr}}(\mathcal{N}^{\otimes n}) \leq 1-\alpha(\mathcal{N}^{\otimes n})\leq1- \alpha_{\fang}(\mathcal{N})^n \leq n(1-\alpha_{\fang}(\mathcal{N})),
\end{equation}
where $c(\cN)$ is defined in~\eqref{eq:c_N_def}.
\end{proposition}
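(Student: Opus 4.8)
The inequality chain has four links, and the plan is to establish them from left to right. The first inequality, $\eta_{\operatorname{Tr}}(\mathcal{N}^{\otimes n}) \leq 1-\alpha(\mathcal{N}^{\otimes n})$, is just \cref{prop:trace_distance_complete_cont_Doeblin_bound} (specialized to $\eta_{\operatorname{Tr}} \leq \eta^c_{\operatorname{Tr}}$) applied to the channel $\mathcal{N}^{\otimes n}$, so nothing new is needed there. The final inequality $1 - \alpha_{\fang}(\mathcal{N})^n \leq n(1-\alpha_{\fang}(\mathcal{N}))$ follows from the elementary bound $1 - x^n \leq n(1-x)$ valid for all $n \in \mathbb{N}$ and $x \in [0,1]$, provided we know $\alpha_{\fang}(\mathcal{N}) \in [0,1]$; this normalization should follow from the defining expression \eqref{eq:c_N_def} together with properties of $\widetilde{D}_\beta$ established in \cite{Wang_2020} and \cite[Section~III]{ji2024barycentric}. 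So the substance of the proposition is the middle inequality $\alpha(\mathcal{N}^{\otimes n}) \geq \alpha_{\fang}(\mathcal{N})^n$.

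For the middle inequality, I would first prove the single-copy comparison $\alpha(\mathcal{N}) \geq \alpha_{\fang}(\mathcal{N})$, and more precisely a supermultiplicativity-type statement $\alpha(\mathcal{N}^{\otimes n}) \geq \alpha_{\fang}(\mathcal{N}^{\otimes n}) \geq \alpha_{\fang}(\mathcal{N})^n$. The key tool is \cref{thm:alpha_N_equality_Hypo_Test}, which gives
\begin{equation}
    -\ln\alpha(\mathcal{N}) = -2\ln d + \inf_{\tau \in \operatorname{aff}(\mathcal{D})} D_H^{1-\frac{1}{d^2}}(\pi_d \otimes \tau \Vert \Phi^{\mathcal{N}}),
\end{equation}
combined with the standard bound relating the hypothesis testing relative entropy to the sandwiched R\'enyi relative entropy: for $\varepsilon \in (0,1)$ and $\beta > 1$,
\begin{equation}
    D_H^{\varepsilon}(\tau \Vert \sigma) \leq \widetilde{D}_{\beta}(\tau \Vert \sigma) + \frac{\beta}{\beta-1}\ln\frac{1}{1-\varepsilon}.
\end{equation}
This inequality was originally proven for states and extended to quasi-states $\tau$ in \cite[Section~III]{ji2024barycentric}; applying it with $\varepsilon = 1 - d^{-2}$ gives $\frac{\beta}{\beta-1}\ln\frac{1}{1-\varepsilon} = \frac{\beta}{\beta-1}\cdot 2\ln d$. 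Substituting into the formula for $-\ln\alpha(\mathcal{N})$ and then using $\frac{\beta}{\beta-1} = 1 + \frac{1}{\beta-1}$ yields, for every $\beta>1$ and every $\tau$,
\begin{equation}
    -\ln\alpha(\mathcal{N}) \leq -2\ln d + \widetilde{D}_{\beta}(\pi_d \otimes \tau \Vert \Phi^{\mathcal{N}}) + 2\ln d + \frac{2}{\beta-1}\ln d = \widetilde{D}_{\beta}(\pi_d \otimes \tau \Vert \Phi^{\mathcal{N}}) + \frac{2}{\beta-1}\ln d.
\end{equation}
Infimizing the right-hand side over $\tau \in \operatorname{aff}(\mathcal{D})$ and $\beta > 1$ gives exactly $-\ln\alpha(\mathcal{N}) \leq -\ln\alpha_{\fang}(\mathcal{N})$, i.e., $\alpha(\mathcal{N}) \geq \alpha_{\fang}(\mathcal{N})$. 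The same argument applied to $\mathcal{N}^{\otimes n}$ (whose input dimension is $d^n$ and whose Choi state is $\Phi^{\mathcal{N}^{\otimes n}} = (\Phi^{\mathcal{N}})^{\otimes n}$ up to the appropriate tensor-factor reordering) gives $\alpha(\mathcal{N}^{\otimes n}) \geq \alpha_{\fang}(\mathcal{N}^{\otimes n})$.

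It then remains to show $\alpha_{\fang}(\mathcal{N}^{\otimes n}) \geq \alpha_{\fang}(\mathcal{N})^n$, i.e., superadditivity of $-\ln\alpha_{\fang}$. Here the plan is: given near-optimal $\tau_i \in \operatorname{aff}(\mathcal{D})$ and a single $\beta > 1$ for each copy, take the product $\tau = \tau_1 \otimes \cdots \otimes \tau_n$, which lies in $\operatorname{aff}(\mathcal{D})$ for the tensored system, note that $\pi_{d^n} \otimes \tau = \bigotimes_i (\pi_d \otimes \tau_i)$ after reordering, and use additivity of the sandwiched R\'enyi relative entropy under tensor products, $\widetilde{D}_{\beta}(\rho_1 \otimes \rho_2 \Vert \sigma_1 \otimes \sigma_2) = \widetilde{D}_{\beta}(\rho_1\Vert\sigma_1) + \widetilde{D}_{\beta}(\rho_2\Vert\sigma_2)$, which holds in the extended (quasi-state) setting as well. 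Since the ``dimension penalty'' term in \eqref{eq:c_N_def} is $\frac{2}{\beta-1}\ln d^n = n \cdot \frac{2}{\beta-1}\ln d$, choosing the same $\beta$ across all copies makes everything add up to $n$ times the single-copy quantity, giving $-\ln\alpha_{\fang}(\mathcal{N}^{\otimes n}) \geq n(-\ln\alpha_{\fang}(\mathcal{N}))$ after taking infima. The main obstacle I anticipate is carefully verifying that the bound $D_H^\varepsilon \leq \widetilde{D}_\beta + \frac{\beta}{\beta-1}\ln\frac{1}{1-\varepsilon}$ and the tensor-additivity of $\widetilde{D}_\beta$ genuinely extend to quasi-state first arguments with the needed uniformity in $\beta$; this is where one must lean on the results of \cite{Wang_2020} and \cite[Section~III]{ji2024barycentric} rather than the state-only versions. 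One should also double-check that restricting to a \emph{single} common $\beta$ (rather than separate $\beta_i$) across the $n$ copies does not lose the bound — it does not, because we only need a lower bound on the infimum over $\beta$.
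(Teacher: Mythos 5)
Your proposal is correct and follows essentially the same route as the paper: the first and last links are handled identically, and the middle inequality is obtained by combining \cref{thm:alpha_N_equality_Hypo_Test} with the bound $D_H^{1-d^{-2}} \leq \widetilde{D}_{\beta} + \frac{\beta}{\beta-1}\ln d^2$ (the paper cites this as Lemma~6 of \cite{ji2024barycentric}), yielding $\alpha(\mathcal{N}^{\otimes n}) \geq \alpha_{\fang}(\mathcal{N}^{\otimes n})$, after which the product-state ansatz and tensor-additivity of $\widetilde{D}_{\beta}$ with a common $\beta$ give $\alpha_{\fang}(\mathcal{N}^{\otimes n}) \geq \alpha_{\fang}(\mathcal{N})^n$, exactly as in the paper's chain of displays.
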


\begin{proof}
Applying~\cref{thm:alpha_N_equality_Hypo_Test} and~\cite[Lemma~6]{ji2024barycentric} for $\beta>1$,
we conclude that
\begin{align}
-\ln\!\left(  \frac{1}{d^{2}}\alpha(\mathcal{N})\right)   &  =\inf_{\tau
\in\operatorname{aff}(\mathcal{D})}D_{H}^{1-\frac{1}{d^{2}}}(\pi_{d}
\otimes\tau\Vert\Phi^{\mathcal{N}})\\
&  \leq\inf_{\tau\in\operatorname{aff}(\mathcal{D})}\widetilde{D}_{\beta}
(\pi_{d}\otimes\tau\Vert\Phi^{\mathcal{N}})+\frac{\beta}{\beta-1}\ln d^{2} \label{eq:D_H_to_Renyi}\\
&  =\inf_{\tau\in\operatorname{aff}(\mathcal{D})}\widetilde{D}_{\beta}(\pi
_{d}\otimes\tau\Vert\Phi^{\mathcal{N}})+\frac{2\beta}{\beta-1}\ln d.
\end{align}
We can rewrite this as
\begin{align}
-\ln\alpha(\mathcal{N})  &  \leq\inf_{\tau\in\operatorname{aff}(\mathcal{D}
)}\widetilde{D}_{\beta}(\pi_{d}\otimes\tau\Vert\Phi^{\mathcal{N}}
)+\frac{2\beta}{\beta-1}\ln d-2\ln d\\
&  =\inf_{\tau\in\operatorname{aff}(\mathcal{D})}\widetilde{D}_{\beta}(\pi
_{d}\otimes\tau\Vert\Phi^{\mathcal{N}})+\frac{2}{\beta-1}\ln d.
\end{align}
Since this inequality holds for all $\beta>1$, we conclude that
\begin{equation}
-\ln\alpha(\mathcal{N})\leq -\ln  \alpha_{\fang}(\mathcal{N}).
\label{eq:alpha-to-c-ineq-single-copy}
\end{equation}

Now applying this to a tensor-power channel $\mathcal{N}^{\otimes n}$, we find
that
\begin{align}
-\ln\alpha(\mathcal{N}^{\otimes n})  &  \leq
-\ln  \alpha_{\fang}(\mathcal{N}^{\otimes n}) \label{eq:alpha-to-c-ineq-1st}\\
& = \inf_{\beta>1} \left\{\inf_{\tau^{(n)}\in
\operatorname{aff}(\mathcal{D}^{(n)})}\widetilde{D}_{\beta}(\pi_{d}^{\otimes
n}\otimes\tau^{(n)}\Vert(  \Phi^{\mathcal{N}})  ^{\otimes n}
)+\frac{2}{\beta-1}\ln d^{n}\right\}\\
&  \leq\inf_{\beta>1} \left\{\inf_{\tau\in\operatorname{aff}(\mathcal{D})}\widetilde{D}_{\beta}
(\pi_{d}^{\otimes n}\otimes\tau^{\otimes n}\Vert(  \Phi^{\mathcal{N}
})  ^{\otimes n})+\frac{2}{\beta-1}\ln d^{n}\right\}\\
&  =n\cdot \inf_{\beta>1} \left\{  \inf_{\tau\in\operatorname{aff}(\mathcal{D})}\widetilde{D}
_{\beta}(\pi_{d}\otimes\tau\Vert\Phi^{\mathcal{N}})+\frac{2}{\beta-1}\ln
d\right\}  \\
& = n\left(-\ln  \alpha_{\fang}(\mathcal{N})\right).
\label{eq:alpha-to-c-ineq-last}
\end{align}
The established inequality is equivalent to the inequality $1-\alpha(\mathcal{N}^{\otimes n})\leq 1- \alpha_{\fang}(\mathcal{N})^n$, thus establishing the second inequality in~\eqref{eq:tighter_contraction_n_tensor}.

The first inequality in~\eqref{eq:tighter_contraction_n_tensor} follows because $
\eta_{\operatorname{Tr}}(\mathcal{N})\leq1-\alpha(\cN)$.
 The last inequality in~\eqref{eq:tighter_contraction_n_tensor} follows because $1-x^n \leq n(1-x)$ for all $n\in \mathbb{N}$ and $x\in[0,1]$.
\end{proof}

\begin{corollary}
    The quantity $\alpha_{\fang}(\mathcal{N})$ is weakly supermultiplicative; i.e., for all $n\in\mathbb{N}$, the following inequality holds:
    \begin{equation}
        \alpha_{\fang}(\mathcal{N})^n \leq \alpha_{\fang}(\mathcal{N}^{\otimes n}).
    \end{equation}
\end{corollary}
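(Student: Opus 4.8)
The plan is to read the claimed inequality $\alpha_{\fang}(\mathcal{N})^n \leq \alpha_{\fang}(\mathcal{N}^{\otimes n})$ directly off the chain of inequalities already established in the proof of \cref{prop:tighter_contraction_n_tensor}. Concretely, the penultimate displayed computation in that proof, running from~\eqref{eq:alpha-to-c-ineq-1st} through~\eqref{eq:alpha-to-c-ineq-last}, shows that
\begin{equation}
-\ln \alpha_{\fang}(\mathcal{N}^{\otimes n}) \leq n\left(-\ln \alpha_{\fang}(\mathcal{N})\right),
\end{equation}
where I am isolating the portion of the chain that does not invoke $\alpha(\mathcal{N}^{\otimes n})$ at all. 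Thus the real content is already present; the corollary is just a restatement after exponentiating.

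First I would recall the definition~\eqref{eq:c_N_def}, apply it with $\mathcal{N}$ replaced by $\mathcal{N}^{\otimes n}$, and note that the input system of $\mathcal{N}^{\otimes n}$ has dimension $d^n$, so the additive term becomes $\frac{2}{\beta-1}\ln d^n = \frac{n\cdot 2}{\beta-1}\ln d$. Next I would upper bound the infimum over $\tau^{(n)} \in \operatorname{aff}(\mathcal{D}^{(n)})$ by restricting to product quasi-states $\tau^{\otimes n}$, exactly as in~\eqref{eq:alpha-to-c-ineq-1st}; this is legitimate because any $\tau \in \operatorname{aff}(\mathcal{D})$ gives $\tau^{\otimes n} \in \operatorname{aff}(\mathcal{D}^{(n)})$. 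Then I would invoke the additivity of the extended sandwiched Rényi relative entropy under tensor products --- i.e., $\widetilde{D}_{\beta}(\pi_d^{\otimes n}\otimes\tau^{\otimes n}\Vert (\Phi^{\mathcal{N}})^{\otimes n}) = n\,\widetilde{D}_{\beta}(\pi_d\otimes\tau\Vert\Phi^{\mathcal{N}})$, which follows from the multiplicativity of Schatten norms under tensor products and is the property used implicitly in~\eqref{eq:alpha-to-c-ineq-last}. Pulling the factor $n$ out of the braces, taking the infimum over $\beta>1$, and exponentiating yields $\alpha_{\fang}(\mathcal{N})^n \leq \alpha_{\fang}(\mathcal{N}^{\otimes n})$.

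There is no real obstacle here: every step is already carried out verbatim in the proof of \cref{prop:tighter_contraction_n_tensor}, and the only task is to extract the relevant sub-chain and present it as a standalone argument. The one point deserving a sentence of care is the tensor-product additivity of $\widetilde{D}_{\beta}$, which I would either cite from~\cite{Wang_2020} or~\cite[Section~III]{ji2024barycentric} or justify in one line via $\|A\otimes B\|_\beta = \|A\|_\beta \|B\|_\beta$. I expect the full write-up to be no more than a short displayed computation followed by exponentiation.
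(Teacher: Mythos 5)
Your proposal is correct and is essentially identical to the paper's own proof, which simply observes that the sub-chain \eqref{eq:alpha-to-c-ineq-1st}--\eqref{eq:alpha-to-c-ineq-last} already establishes $-\ln \alpha_{\fang}(\mathcal{N}^{\otimes n}) \leq n\left(-\ln \alpha_{\fang}(\mathcal{N})\right)$ and then exponentiates. The supporting details you supply (the $\ln d^n$ term, restriction to product quasi-states, and additivity of $\widetilde{D}_{\beta}$ under tensor products) are exactly the ingredients used there.
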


\begin{proof}
    This claim follows by rearranging terms in~\eqref{eq:alpha-to-c-ineq-1st}--\eqref{eq:alpha-to-c-ineq-last}. 
\end{proof}

\begin{proposition}\label{prop:c-N-bounded-between-doeblin-and-doeblin-wang}
For a quantum channel $\mathcal{N}$, the following inequalities hold:
\begin{equation}
    \alpha(\mathcal{N})\geq \alpha_{\fang}(\mathcal{N}) \geq \alpha_{\wang}(\mathcal{N}).
\end{equation}
\end{proposition}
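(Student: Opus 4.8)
The plan is to prove the two inequalities $\alpha(\mathcal{N}) \geq \alpha_{\fang}(\mathcal{N})$ and $\alpha_{\fang}(\mathcal{N}) \geq \alpha_{\wang}(\mathcal{N})$ separately, both by reducing everything to statements about the extended hypothesis-testing and extended sandwiched R\'enyi relative entropies and their known orderings.

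For the first inequality, I would simply invoke the single-copy version of the inequality already established in the proof of \cref{prop:tighter_contraction_n_tensor}, namely $-\ln \alpha(\mathcal{N}) \leq -\ln \alpha_{\fang}(\mathcal{N})$ (see \eqref{eq:alpha-to-c-ineq-single-copy}), which follows from the hypothesis-testing representation of $\alpha(\mathcal{N})$ in \cref{thm:alpha_N_equality_Hypo_Test} together with the bound $D_H^{\varepsilon}(\rho\|\sigma) \leq \widetilde{D}_{\beta}(\rho\|\sigma) + \tfrac{\beta}{\beta-1}\ln\tfrac{1}{1-\varepsilon}$ from \cite[Lemma~6]{ji2024barycentric} applied with $\varepsilon = 1 - 1/d^2$. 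Exponentiating and taking the infimum over $\beta > 1$ gives $\alpha(\mathcal{N}) \geq \alpha_{\fang}(\mathcal{N})$. This step is essentially already done in the excerpt; I only need to quote it.

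For the second inequality, $\alpha_{\fang}(\mathcal{N}) \geq \alpha_{\wang}(\mathcal{N})$, I would use the closed-form expression for $\alpha_{\wang}(\mathcal{N})$ given just after \cref{def:b-doeblin-coef}, namely
\begin{equation}
    \alpha_{\wang}(\mathcal{N}) = \left[ \inf_{\tau \in \operatorname{aff}(\mathcal{D})} \left\| (\Phi^{\mathcal{N}})^{-1/2}(\pi_d \otimes \tau)(\Phi^{\mathcal{N}})^{-1/2}\right\|_\infty \right]^{-1}.
\end{equation}
The right-hand side involves the quantity $\exp\!\left(\widetilde{D}_\infty(\pi_d \otimes \tau \| \Phi^{\mathcal{N}})\right)$, since $\widetilde{D}_\infty(\tau\|\sigma) = \ln\|\sigma^{-1/2}\tau\sigma^{-1/2}\|_\infty$. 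On the other hand, $\alpha_{\fang}(\mathcal{N})^{-1} = \inf_{\beta>1}\inf_{\tau}\exp\!\left(\widetilde{D}_\beta(\pi_d\otimes\tau\|\Phi^{\mathcal{N}}) + \tfrac{2}{\beta-1}\ln d\right)$. The key facts I need are: (i) the extended sandwiched R\'enyi relative entropy is monotone non-decreasing in $\beta$, with $\lim_{\beta\to\infty}\widetilde{D}_\beta(\tau\|\sigma) = \widetilde{D}_\infty(\tau\|\sigma)$ (this is established in \cite{Wang_2020} and \cite[Section~III]{ji2024barycentric}); and (ii) as $\beta\to\infty$, the penalty term $\tfrac{2}{\beta-1}\ln d \to 0$. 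Thus, taking $\beta\to\infty$ inside the infimum and using monotone convergence of $\widetilde{D}_\beta$ upward to $\widetilde{D}_\infty$, I get $\alpha_{\fang}(\mathcal{N})^{-1} \leq \inf_\tau \exp(\widetilde{D}_\infty(\pi_d\otimes\tau\|\Phi^{\mathcal{N}})) = \alpha_{\wang}(\mathcal{N})^{-1}$, which rearranges to $\alpha_{\fang}(\mathcal{N}) \geq \alpha_{\wang}(\mathcal{N})$.

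The main obstacle I anticipate is making the limit argument in step two fully rigorous: one must justify interchanging the limit $\beta\to\infty$ with the infimum over $\tau\in\operatorname{aff}(\mathcal{D})$, and must be careful about whether $\Phi^{\mathcal{N}}$ is full-rank (so that $\widetilde{D}_\infty$ is finite and the inverses are well-defined), or otherwise handle the support conditions by restricting to $\tau$ with appropriate support. Since $\widetilde{D}_\beta(\pi_d\otimes\tau\|\Phi^{\mathcal{N}})$ is non-decreasing in $\beta$, the infimum over $\tau$ of the limit is at least the limit of the infima only in one direction; however, for the inequality direction I need ($\alpha_{\fang}^{-1} \leq \alpha_{\wang}^{-1}$), it suffices to upper-bound $\alpha_{\fang}(\mathcal{N})^{-1}$ by evaluating the outer $\inf_{\beta>1}$ at arbitrarily large $\beta$ for each fixed $\tau$, then taking $\inf_\tau$ — so in fact no interchange of infimum and limit is needed: for each $\tau$ and each $\beta$, $\alpha_{\fang}(\mathcal{N})^{-1} \leq \exp(\widetilde{D}_\beta(\pi_d\otimes\tau\|\Phi^{\mathcal{N}}) + \tfrac{2}{\beta-1}\ln d)$, and letting $\beta\to\infty$ gives $\alpha_{\fang}(\mathcal{N})^{-1} \leq \exp(\widetilde{D}_\infty(\pi_d\otimes\tau\|\Phi^{\mathcal{N}}))$ for every $\tau$, and then $\inf_\tau$ finishes it. So the argument is cleaner than it first appears; the only genuine care needed is the finiteness/support bookkeeping, which I would handle by noting that if $\alpha_{\wang}(\mathcal{N}) = 0$ the inequality is trivial (by normalization), and otherwise the relevant $\tau$ have support inside that of $\Phi^{\mathcal{N}}$.
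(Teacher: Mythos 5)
Your proposal is correct and follows essentially the same route as the paper: the first inequality is quoted from \eqref{eq:alpha-to-c-ineq-single-copy}, and the second is obtained by sending $\beta\to\infty$ in the objective defining $\alpha_{\fang}(\mathcal{N})$ and recognizing the limit as the $\widetilde{D}_{\infty}$ expression for $\alpha_{\wang}(\mathcal{N})$. Your observation that no interchange of the infimum over $\tau$ with the limit is required --- since it suffices to upper-bound $\alpha_{\fang}(\mathcal{N})^{-1}$ pointwise in $(\beta,\tau)$ before letting $\beta\to\infty$ and then infimizing over $\tau$ --- is a more careful rendering of the paper's one-line justification of that step.
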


\begin{proof}
    The first inequality was established in~\eqref{eq:alpha-to-c-ineq-single-copy}. The second inequality follows by taking the $\beta \to \infty$ limit in the objective function of $c(\mathcal{N}$ and noting that the limit gives $\alpha_{\wang}(\mathcal{N})$; i.e.,
    \begin{equation}
        \lim_{\beta \to \infty} \exp\!\left(-  \inf_{\tau\in\operatorname{aff}
(\mathcal{D})}\widetilde{D}_{\beta}(\pi_{d}\otimes\tau\Vert\Phi^{\mathcal{N}
})+\frac{2}{\beta-1}\ln d\right) = \alpha_{\wang}(\mathcal{N}).
    \end{equation}
    This concludes the proof.
\end{proof}

It is an open question to determine whether $\alpha_{\wang}(\mathcal{N})$ is equal to $\alpha_{\fang}(\mathcal{N})$.

\begin{proposition} \label{prop:c_N_with_existing}
Let $\cN$ be a quantum channel.
The following inequality holds:
\begin{equation}
-\ln \alpha_{\fang}(\mathcal{N})\leq\inf_{\sigma\in\mathcal{D}}D_{\max}(\pi_{d}\otimes
\sigma\Vert\Phi^{\mathcal{N}}).
\end{equation}
where $\alpha_{\fang}(\mathcal{N})$ is defined in~\eqref{eq:c_N_def}, $\pi_d$ is the maximally mixed state of dimension~$d$, and $\Phi^\cN$ is the Choi state of the channel $\cN$.
\end{proposition}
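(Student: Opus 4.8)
The goal is to show $-\ln\alpha_{\fang}(\mathcal{N}) \le \inf_{\sigma\in\mathcal{D}} D_{\max}(\pi_d\otimes\sigma\Vert\Phi^{\mathcal{N}})$. The natural route is to bound the infimum over $\beta>1$ in the definition \eqref{eq:c_N_def} by considering the limit $\beta\to\infty$, or more precisely by showing that for each fixed state $\sigma\in\mathcal{D}$ the quantity $\widetilde{D}_\beta(\pi_d\otimes\sigma\Vert\Phi^{\mathcal{N}}) + \frac{2}{\beta-1}\ln d$ can be made arbitrarily close to $D_{\max}(\pi_d\otimes\sigma\Vert\Phi^{\mathcal{N}})$ as $\beta\to\infty$. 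First I would recall the standard fact that the sandwiched Rényi relative entropy is monotone non-decreasing in $\beta$ and converges to the max-relative entropy in the limit $\beta\to\infty$; the extended version (with a quasi-state first argument) enjoys the same monotonicity and limit, as established in~\cite{Wang_2020} and~\cite{ji2024barycentric}. Here, crucially, $\pi_d\otimes\sigma$ is an honest state (not merely a quasi-state) when $\sigma\in\mathcal{D}$, so one can even invoke the non-extended results. Since $\frac{2}{\beta-1}\ln d \to 0$ as $\beta\to\infty$, the sum converges to $D_{\max}(\pi_d\otimes\sigma\Vert\Phi^{\mathcal{N}})$.

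Concretely, the chain of inequalities I would write is: for every $\sigma\in\mathcal{D}$ and every $\beta>1$,
\begin{align}
-\ln\alpha_{\fang}(\mathcal{N}) & = \inf_{\beta>1}\left\{\inf_{\tau\in\operatorname{aff}(\mathcal{D})}\widetilde{D}_\beta(\pi_d\otimes\tau\Vert\Phi^{\mathcal{N}}) + \frac{2}{\beta-1}\ln d\right\} \\
& \le \inf_{\beta>1}\left\{\widetilde{D}_\beta(\pi_d\otimes\sigma\Vert\Phi^{\mathcal{N}}) + \frac{2}{\beta-1}\ln d\right\} \\
& = \lim_{\beta\to\infty}\left\{\widetilde{D}_\beta(\pi_d\otimes\sigma\Vert\Phi^{\mathcal{N}}) + \frac{2}{\beta-1}\ln d\right\} \\
& = D_{\max}(\pi_d\otimes\sigma\Vert\Phi^{\mathcal{N}}),
\end{align}
where the second line restricts the inner infimum to the particular choice $\tau=\sigma\in\mathcal{D}\subseteq\operatorname{aff}(\mathcal{D})$, the third line uses that the bracketed expression is non-increasing in $\beta$ (monotonicity of $\widetilde{D}_\beta$ in $\beta$ together with the vanishing correction term), and the fourth line uses the limiting value $\lim_{\beta\to\infty}\widetilde{D}_\beta = \widetilde{D}_\infty = D_{\max}$ for a state first argument, combined with $\lim_{\beta\to\infty}\frac{2}{\beta-1}\ln d = 0$. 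Taking the infimum over $\sigma\in\mathcal{D}$ on the right-hand side then yields the claim.

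\textbf{Main obstacle.} The only subtlety I anticipate is justifying the third equality, i.e., that $\inf_{\beta>1}$ of the bracketed expression equals its $\beta\to\infty$ limit. This requires that $\widetilde{D}_\beta(\pi_d\otimes\sigma\Vert\Phi^{\mathcal{N}}) + \frac{2}{\beta-1}\ln d$ be monotonically non-increasing in $\beta$. Monotonicity of $\widetilde{D}_\beta$ in $\beta$ is non-decreasing, which pushes the wrong way, but the subtracted-off term $\frac{2}{\beta-1}\ln d$ is itself decreasing in $\beta$; one must check that the net effect is still decreasing, or—more robustly—simply argue that since the infimum over $\beta$ is at most the value at any particular $\beta$, and we may take $\beta$ arbitrarily large, we get $-\ln\alpha_{\fang}(\mathcal{N}) \le \liminf_{\beta\to\infty}\{\widetilde{D}_\beta(\pi_d\otimes\sigma\Vert\Phi^{\mathcal{N}}) + \frac{2}{\beta-1}\ln d\} = D_{\max}(\pi_d\otimes\sigma\Vert\Phi^{\mathcal{N}})$. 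This avoids needing monotonicity at all: it suffices that the limit exists (which follows from $\widetilde{D}_\beta\to D_{\max}$, a standard result for the sandwiched Rényi family, applicable here since $\pi_d\otimes\sigma$ is a genuine density operator) and that the infimum over $\beta$ is bounded above by the limiting value. I would phrase the argument this latter way to keep it clean, citing~\cite{Wang_2020} (or the relevant limiting statement in~\cite{ji2024barycentric}) for the convergence $\lim_{\beta\to\infty}\widetilde{D}_\beta(\rho\Vert\sigma) = D_{\max}(\rho\Vert\sigma)$.
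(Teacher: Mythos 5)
Your proposal is correct and follows essentially the same route as the paper: restrict the inner infimum to $\tau=\sigma\in\mathcal{D}$, bound the infimum over $\beta>1$ by the $\beta\to\infty$ limit, and use $\lim_{\beta\to\infty}\widetilde{D}_\beta = D_{\max}$ together with the vanishing of $\frac{2}{\beta-1}\ln d$, before taking the infimum over $\sigma$. The paper likewise writes the key step as an inequality against the limit rather than invoking monotonicity, so your more careful handling of that point matches its logic exactly.
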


\begin{proof}
Let $\sigma\in\mathcal{D}$. Then
\begin{align}
-\ln \alpha_{\fang}(\mathcal{N})  & =\inf_{\beta>1}\left\{  \inf_{\tau\in\operatorname{aff}
(\mathcal{D})}\widetilde{D}_{\beta}(\pi_{d}\otimes\tau\Vert\Phi^{\mathcal{N}
})+\frac{2}{\beta-1}\ln d\right\}  \\
& \leq\inf_{\beta>1}\left\{  \widetilde{D}_{\beta}(\pi_{d}\otimes\sigma
\Vert\Phi^{\mathcal{N}})+\frac{2}{\beta-1}\ln d\right\}  \\
& \leq\lim_{\beta\rightarrow\infty}\left\{  \widetilde{D}_{\beta}(\pi
_{d}\otimes\sigma\Vert\Phi^{\mathcal{N}})+\frac{2}{\beta-1}\ln d\right\}  \\
& =D_{\max}(\pi_{d}\otimes\sigma\Vert\Phi^{\mathcal{N}}).
\end{align}
Since the inequality holds for all $\sigma\in\mathcal{D}$, the desired
inequality follows.
\end{proof}

\begin{remark}[Comparison to Existing Bounds]
     By~\cref{prop:c_N_with_existing}, we conclude that $\alpha_{\fang}(\mathcal{N})$ in~\eqref{eq:tighter_contraction_n_tensor} gives an improved bound compared to~\cite[Eq.~(III.61)]{hirche2024quantum}. 
\end{remark}

\section{Extra Proofs on Mixing and Decoupling Times of Quantum Processes}\label{app:Convergence-of-Quantum-Processes}

\subsection{Proof of \cref{prop:mixing-class-containment}} \label{app:Mixing_class_proof}
    (Strongly Mixing $\subsetneq$ Mixing): For both homogeneous and inhomogeneous Markov chains, by definition, if a Markov chain is strongly mixing, it is mixing. On the other hand, any replacement channel which prepares a pure state is mixing but not strongly mixing. This establishes the strict containment of strongly mixing in mixing. \\

    (Mixing $\subseteq$ Weakly Mixing): We prove the inhomogeneous case as it is more general. Let $(\cG_{n})_{n \in \mbb{N}}$ be mixing with state $\omega \in \Density(A)$. Then, for any $\rho,\sigma \in \Density(A)$,
    \begin{align}
        \Vert \cG_{n}(\rho) - \cG_{n}(\sigma) \Vert_{1} &= \Vert \cG_{n}(\rho) - \omega + \omega - \cG_{n}(\sigma) \Vert_{1} \\
        &\leq \Vert \cG_{n}(\rho) - \omega \Vert_{1} + \Vert \cG_{n}(\sigma) - \omega \Vert_{1} \ ,
    \end{align}
    where we have used the triangle inequality. It follows,
    \begin{align}
        \lim_{n \to \infty} \Vert \cG_{n}(\rho) - \cG_{n}(\sigma) \Vert_{1} \leq \lim_{n \to \infty} \left[ \Vert \cG_{n}(\rho) - \omega \Vert_{1} + \Vert \cG_{n}(\sigma) - \omega \Vert_{1} \right] = 0 \ ,
    \end{align}
    where we used each limit on the RHS exists and is zero by assumption. As $\rho,\sigma$ were arbitrary and we only assumed $(\cG_{n})_{n \in \mbb{N}}$ is mixing, this proves mixing implies weakly mixing, which proves the containment. \\ 
    
    (Always Weakly $\subseteq$ Weakly): Weakly mixing is the $k=1$ case of being always weakly mixing (\cref{def:weakly-mixing}), so being always weakly mixing implies weakly mixing. \\
    
    (Weakly $\supseteq$ Always Weakly for Homogeneous Markov Chains): For a homogeneous Markov chain $\cN_{A \to A}$, for any $k \leq n \in \mbb{N}$, $\cG_{k:n} = \bigcirc_{i = 1}^{n-k} \cN$, so weakly mixing implies always weakly mixing by definition.\\

    (Mixing $\supseteq$ Weakly Mixing for Homogeneous Markov Chains): Let $\cN_{A \to A}$ be a homogeneous Markov chain that is weakly mixing. By definition, for all $\rho,\sigma \in \Density(A)$, 
    \begin{align}
        \lim_{n \to \infty} \Vert \cN^{n}(\rho) - \cN^{n}(\sigma) \Vert_{1} = 0 \ .
    \end{align}
    It is known that any quantum channel has at least one fixed point state (see e.g. \cite[Theorem 4.24]{Watrous-Book}). Let $\omega \in \Density(A)$ denote this existing fixed point for $\cN$. Then by letting $\sigma$ in the above be $\omega$, we conclude for all $\rho \in \Density(A)$
    \begin{align}
         0 = \lim_{n \to \infty} \Vert \cN^{n}(\rho) - \cN^{n}(\omega) \Vert_{1} = \lim_{n \to \infty} \Vert \cN^{n}(\rho) - \omega \Vert_{1} \ ,
    \end{align}
    where the first equality is by the definition of weakly mixing and the second is that $\omega$ is a fixed point. By \cref{def:mixing-Markov-chain}, this means the homogeneous Markov chain is mixing. Thus, for a homogeneous Markov chain, being weakly mixing implies mixing, which establishes the containment. \\
    
    ( (Strongly) Mixing but not Always Weakly Mixing for Inhomogeneous Markov Chains): Consider $(\cG_{n})_{n \in \mbb{N}}$ where $\cN^{1}$ is any replacer channel and $\cN^{i} = \id_{A_{2} \to A_{2}}$ for all $i > 1$. Then $(\cG_{n})_{n \in \mbb{N}}$ is mixing by direct calculation, but for any $n \geq 2$, $\cG_{2:n}$ is not mixing by direct calculation, and thus not always weakly mixing. To separate from strongly mixing, let the replacer channel prepare a full rank state in $\Density(A_{2})$. \\

    (Always Weakly $\subsetneq$ Weakly for Inhomogeneous Markov Chains): The same example as the previous case is weakly mixing but not always weakly mixing. \\
    
    (Always Weakly Mixing but not (Strongly) Mixing for Inhomogeneous Markov Chains): Consider $(\cG_{n})_{n \in \mbb{N}}$ where 
    \begin{equation}
       \cN^{i} = \begin{cases} \cR^{\dyad{0}} & i \text{ is odd} \\ \cR^{\dyad{1}} & i \text{ is even} \end{cases}. 
    \end{equation}
     Recall that $\cR^\tau(\rho) = \Tr[\rho] \tau$ for all $\rho$. Then $(\cG_{n})_{n \in \mbb{N}}$ is not mixing as it never converges to a fixed point. As it is not mixing, it is not strongly mixing. However, it is always weakly mixing by direct calculation. \\

    ($\operatorname{Always\;Weakly\;Mixing}\cap\operatorname{Mixing}\neq\emptyset$ and $\operatorname{Always\;Weakly\;Mixing}\cap\operatorname{Strongly\;Mixing}\neq\emptyset$): These follow from the containments for homogeneous Markov Chains. \\

    (Mixing $\subsetneq$ Weakly Mixing for Inhomogeneous Markov Chains): The construction that separates Always Weakly Mixing and Mixing also works for this case.

\subsection{Proof of \cref{prop:decoupling-class-containment}} \label{app:decopuling_time_proof}
    In effect, this is the same proof as for Proposition \ref{prop:mixing-class-containment} as all the examples considered there use replacer channels, which are decoupling. However, for completeness, we include the proof.

    (Strongly Decoupling $\subsetneq$ Decoupling): For both homogeneous and inhomogeneous Markov chains, by definition, if a Markov chain is strongly decoupling, it is decoupling. On the other hand, any replacement channel which prepares a pure state is decoupling but not strongly decoupling. This establishes the strict containment of strongly mixing in mixing. \\

    (Decoupling $\subseteq$ Weakly Decoupling): We prove the inhomogeneous case as it is more general. Let $(\cG_{n})_{n \in \mbb{N}}$ be decoupling with state $\omega \in \Density(A)$. Then, for any $\rho,\sigma \in \Density(R \otimes A)$ such that $\rho_{R} = \sigma_{R}$,
    \begin{align}
        &\Vert (\id_{R} \otimes \cG_{n})(\rho) - (\id_{R} \otimes \cG_{n})(\sigma) \Vert_{1} \\
        &= \Vert (\id_{R} \otimes \cG_{n})(\rho) - \rho_{R} \otimes \omega + \rho_{R} \otimes \omega - (\id_{R} \otimes \cG_{n})(\sigma) \Vert_{1} \\
        &\leq \Vert (\id_{R} \otimes \cG_{n})(\rho) - \rho_{R} \otimes \omega \Vert_{1} + \Vert (\id_{R} \otimes \cG_{n})(\sigma) - \sigma_{R} \otimes \omega \Vert_{1} \ ,
    \end{align}
    where we have used the triangle inequality and that $\rho_{R} = \sigma_{R}$. It follows,
    \begin{align}
        &\lim_{n \to \infty} \Vert (\id_{R} \otimes \cG_{n})(\rho) - (\id_{R} \otimes \cG_{n})(\sigma) \Vert_{1} \\
        &\leq \lim_{n \to \infty} \left[ \Vert (\id_{R} \otimes \cG_{n})(\rho) - \rho_{R} \otimes \omega \Vert_{1} + \Vert (\id_{R} \otimes \cG_{n})(\sigma) - \sigma_{R} \otimes \omega \Vert_{1} \right] = 0 \ ,
    \end{align}
    where we used each limit on the RHS exists and is zero by assumption. As $\rho,\sigma$ were arbitrary other than $\rho_{R} = \sigma_{R}$ and we only assumed $(\cG_{n})_{n \in \mbb{N}}$ is decoupling, this proves decoupling implies weakly decoupling, which proves the containment. \\ 
    
    (Always Weakly $\subseteq$ Weakly): Weakly decoupling is the $k=1$ case of being always weakly decoupling (\cref{def:weakly-decoupling}), so being always weakly decoupling implies weakly decoupling. \\
    
    (Weakly $\supseteq$ Always Weakly for Homogeneous Markov Chains): For a homogeneous Markov chain $\cN_{A \to A}$, for any $k \leq n \in \mbb{N}$, $\cG_{k:n} = \bigcirc_{i = 1}^{n-k} \cN$, so weakly decoupling implies always weakly decoupling by definition.\\

    (Decoupling $\supseteq$ Weakly Decoupling for Homogeneous Markov Chains): Let $\cN_{A \to A}$ be a homogeneous Markov chain that is weakly decoupling. By definition, for all $\rho,\sigma \in \Density(R \otimes A)$ such that $\rho_{R} = \sigma_{R}$, 
    \begin{align}
        \lim_{n \to \infty} \Vert (\id_{R} \otimes \cN^{n})(\rho) - (\id_{R} \otimes \cN^{n})(\sigma) \Vert_{1} = 0 \ .
    \end{align}
    It is known that any quantum channel has at least one fixed point state (see e.g. \cite[Theorem 4.24]{Watrous-Book}). Let $\omega \in \Density(A)$ denote this existing fixed point for $\cN$. Then by letting $\rho \in \Density(R \otimes A)$ be arbitrary and $\sigma_{RA} = \rho_{R} \otimes \omega_{A}$ in the above, we conclude
    \begin{align}
         0 &= \lim_{n \to \infty} \Vert (\id_{R} \otimes \cN^{n})(\rho_{RA}) - (\id_{R} \otimes \cN^{n})(\rho_{R} \otimes \omega_{A}) \Vert_{1} \\
         &= \lim_{n \to \infty} \Vert (\id_{R} \otimes \cN^{n})(\rho) - \rho_{R} \otimes \omega_{A} \Vert_{1} \ ,
    \end{align}
    where the first equality is by the definition of weakly decoupling and the second is that $\omega$ is a fixed point. As this held for all $\rho \in \Density(R \otimes A)$, by \cref{def:decoupling-Markov-chain}, this means the homogeneous Markov chain is decoupling. Thus, for a homogeneous Markov chain, being weakly decoupling implies decoupling, which establishes the containment. \\
    
    ( (Strongly) Decoupling but not Always Weakly Decoupling for Inhomogeneous Markov Chains): Consider $(\cG_{n})_{n \in \mbb{N}}$ where $\cN^{1}$ is any replacer channel and $\cN^{i} = \id_{A_{2} \to A_{2}}$ for all $i > 1$. Then, as a replacer channel is decoupling, $(\cG_{n})_{n \in \mbb{N}}$ is decoupling by direct calculation, but for any $n \geq 2$, $\cG_{2:n}$ is not decoupling by direct calculation, and thus not always weakly decoupling. To separate from strongly decoupling, let the replacer channel prepare a full rank state in $\Density(A_{2})$. \\

    (Always Weakly $\subsetneq$ Weakly for Inhomogeneous Markov Chains): The same example as the previous case is weakly decoupling but not always weakly decoupling. \\
    
    (Always Weakly Decoupling but not (Strongly) Decoupling for Inhomogeneous Markov Chains): Consider $(\cG_{n})_{n \in \mbb{N}}$ where 
    \begin{equation}
       \cN^{i} = \begin{cases} \cR^{\dyad{0}} & i \text{ is odd} \\ \cR^{\dyad{1}} & i \text{ is even} \end{cases} , 
    \end{equation}
     where we recall that $\cR^\tau(\rho) = \Tr[\rho] \tau$ for all $\rho$. Then $(\cG_{n})_{n \in \mbb{N}}$ is not decoupling as it never converges to a fixed point. As it is not decoupling, it is not strongly decoupling. However, it is always weakly decoupling by direct calculation. \\

    ($\operatorname{Always\;Weakly\;Decoupling}\cap\operatorname{(Strong)\;Decoupling}\neq\emptyset$): These follow from the containments for homogeneous Markov Chains. \\

    (Decoupling $\subsetneq$ Weakly Decoupling for Inhomogeneous Markov Chains): The construction that separates Always Weakly Decoupling and Decoupling also works for this case.

\end{document}